\newcommand{\oGamma}{\overline{\Gamma}}
\newcommand{\sL}{\scalebox{0.6}{\text{L}}}
\newcommand{\sR}{\scalebox{0.6}{\text{R}}}
\begin{document}

\begin{titlepage}
    \vspace*{10pt}
	\begin{center}
		\LARGE Doctoral Dissertation\vspace{2mm}\\
		\begin{CJK}{UTF8}{ipxm}
		\LARGE	博士論文
		\end{CJK}
	\end{center}
	
	\bigskip
	\vskip 1cm
	
	\begin{center}
	{\LARGE {
		Gauge Origami and BPS/CFT correspondence}\\
		\vspace*{1pt}
		\begin{CJK}{UTF8}{ipxm}
			\LARGE
			{(ゲージ折紙とBPS/CFT対応)}
		\end{CJK}
	}\\
		\vspace*{50pt}

	\LARGE A Dissertation Submitted for the Degree of Doctor of Philosophy\\ 
    \LARGE December 2024\vspace{2mm}\\
	\begin{CJK}{UTF8}{ipxm}
	\LARGE	令和 6 年 12 月 博士 (理学) 申請
	\end{CJK}
	
	\vspace*{60pt}
	
	\LARGE Department of Physics, Graduate School of Science,\\
    The University of Tokyo\vspace{2mm}\\
	\begin{CJK}{UTF8}{ipxm}
		東京大学大学院理学系研究科 \\
        物理学専攻
	\end{CJK}
	
	\bigskip
	\vspace*{60pt}
	\LARGE
	Go Noshita\vspace{2mm}\\
	\begin{CJK}{UTF8}{ipxm}
	 \LARGE  野下　剛
	\end{CJK}
	\end{center}
	
\end{titlepage}

\cleardoublepage
\thispagestyle{empty}

\vspace*{40pt}
\begin{center}
	\textbf{Abstract}
\end{center}
Gauge origami is a generalized supersymmetric gauge theory defined on several intersecting space-time components. It provides a systematic way to consider generalizations of instantons. In this thesis, we explore the gauge origami system in $\mathbb{R}^{1,1}\times \mathbb{C}^{4}$ and its BPS/CFT correspondence. String theoretically, instantons of the gauge origami system arise from D0-branes bound to D$(2p)$-branes wrapping cycles in $\mathbb{C}^{4}$. The low energy theory of the D0-branes is understood as an $\mathcal{N}=2$ supersymmetric quiver quantum mechanical system and the Witten index of it produces the instanton partition function. We define a $q$-deformed quiver Cartan matrix associated to this quiver structure and introduce vertex operators associated with the D-branes and show that the contour integral formula for the Witten index has a nice free field realization. Such free field realization leads to the concept of BPS $qq$-characters or BPS quiver W-algebras, which are generalizations of the conventional deformed W-algebras. The $qq$-characters of D2 and D4-branes correspond to screening charges and generators of the affine quiver W-algebra, respectively. On the other hand, the $qq$-characters of D6 and D8-branes represent novel types of $qq$-characters, where monomial terms are characterized by plane partitions and solid partitions. The composition of these $qq$-characters yields the instanton partition functions of the gauge origami system, eventually establishing the BPS/CFT correspondence. Additionally, we demonstrate that the fusion of $qq$-characters of D-branes in lower dimensions results in higher-dimensional D-brane $qq$-characters. We also investigate quadratic relations among these $qq$-characters. Furthermore, we explore the relationship with the representations of the quantum toroidal~$\mathfrak{gl}_{1}$.

\setcounter{tocdepth}{2}
\setcounter{secnumdepth}{3}
\tableofcontents
\thispagestyle{empty}
\addtocontents{toc}{\protect\thispagestyle{empty}}


\chapter{Introduction}\label{chap:intro}

\paragraph{Rise of BPS/CFT correspondence}
Towards a non-perturbative understanding of string theory and quantum field theory, exact computations of physical observables have played significant roles. Among these observables, the instanton partition functions of supersymmetric gauge theories have emerged as significant quantities that can be precisely computed through localization techniques~\cite{Nekrasov:2002qd,Nekrasov:2003rj,Nakajima:2003pg,Nakajima:2003uh,Nakajima:2005fg,Pestun:2007rz,Pestun:2016zxk}. These partition functions exhibit remarkable characteristics, particularly in the realm of rich geometric and algebraic properties. Our specific interest lies in the algebraic structures of these partition functions, which have, in turn, led to the revelation of a novel duality known as the BPS/CFT correspondence~\cite{Nakajima:1999,Nekrasov:2013xda,Nekrasov:2015wsu,Nekrasov:2016qym,Nekrasov:2016ydq,Nekrasov:2017rqy,Nekrasov:2017gzb,Nekrasov:2017cih,Nekrasov:2018xsb}. The BPS/CFT correspondence claims that correlation functions of BPS observables in supersymmetric gauge theories are dual to correlation functions of infinite-dimensional algebras. A well-known example of this duality is the Alday--Gaiotto--Tachikawa (AGT) duality~\cite{Gaiotto:2009we,Alday:2009aq,Wyllard:2009hg} (see also~\cite{LeFloch:2020uop} for a review), which established a connection between instanton partition functions of 4d $\mathcal{N}=2$ theories and conformal blocks of 2d conformal field theories (CFT), where both theories originate from a 6d $\mathcal{N}=(2,0)$ theory compactified on a Riemann surface. A 5d lift-up of the AGT duality called the 5d AGT duality~\cite{Awata:2009ur,Awata:2010yy,Taki:2014fva}, was subsequently discovered and the partition functions are dual to correlation functions of quantum algebras. Over the last decade, efforts have been dedicated to generalizing both the gauge theory side and the infinite-dimensional algebra side to achieve a more comprehensive understanding of the BPS/CFT correspondence~\cite{miki2007q,FFJMM1,Litvinov:2016mgi,Gaiotto:2017euk,Prochazka:2017qum,Prochazka:2018tlo,Rapcak:2018nsl,Rapcak:2020ueh,Li:2020rij,Galakhov:2021vbo,Noshita:2021ldl}.   

\paragraph{Generalized gauge theory}
The usual instantons appear as topologically nontrivial field configurations minimizing the action of the Yang--Mills theory on $\mathbb{R}^{4}$ and the moduli space comes from the ADHM construction \cite{Atiyah:1978ri,Atiyah:1984px}. After introducing a nontrivial background flux called $\Omega$-background, Nekrasov showed that the instanton partition function takes the form of $\mathcal{Z}=\sum_{\lambda}\mathfrak{q}^{|\lambda|}\mathcal{Z}[\lambda]$ where the terms are a summation of two-dimensional partitions \cite{Lossev:1997bz,Moore:1997dj,Nekrasov:2002qd,Nekrasov:2003rj}. Over the last few years, researches on generalizations of the Yang--Mills instantons have been conducted and we now know that there are higher dimensional generalizations of instantons having ADHM-like constructions \cite{Acharya:1997gp,Baulieu:1997jx,Nekrasov:2009JJM,Szabo:2022zyn,Nekrasov:2023xzm}. The instanton partition functions are given as a statistical sum of random partitions as $\mathcal{Z}=\sum_{\Lambda}\mathfrak{q}^{|\Lambda|}\mathcal{Z}[\Lambda]$, where $\Lambda$ is a random partition. Considering 6d and 8d theories, the random partitions $\Lambda$ will be plane partitions \cite{Jafferis:2007sg,Cirafici:2008sn,Nekrasov:2009JJM,Awata:2009dd} and solid partitions \cite{Nekrasov:2017cih,Nekrasov:2018xsb}, respectively (see also \cite{Kanno:2020ybd}). In type IIB string theory on $\mathbb{R}^{10}$, such kind of setup appears as the low energy limit of the worldvolume theory of the D1-branes probing the $\D(2p+1)$-branes $(p=2,3,4)$~\cite{Witten:1994tz,Witten:1995gx,Douglas:1995bn,Douglas:1996uz}: 2d ($p=2$), 3d ($p=3$), 4d $(p=4)$ partitions. Mathematically, the $p=3$ case gives the equivariant Donaldson--Thomas invariants of $\mathbb{C}^{3}$, while the $p=4$ case is called the magnificent four model and gives the equivariant Donaldson--Thomas invariants of $\mathbb{C}^{4}$.

Another direction of generalizations of the Yang--Mills instantons is the \emph{generalized gauge theory}~\cite{Nekrasov:2015wsu,Nekrasov:2016qym,Nekrasov:2016ydq}, which is a theory defined on several, generally, intersecting components as $\mathcal{S}=\bigcup_{i}\mathcal{S}_{i}$ where on each space-time component $\mathcal{S}_{i}$ there is an original field theory. On each space-time component $\mathcal{S}_{i}$, we have a gauge group $G_{i}$ and at the intersection $\mathcal{S}_{i}\cap\mathcal{S}_{j}$ we have bifundamental fields transforming under $G_{i}\times G_{j}$, and thus it could be understood as a generalized quiver gauge theory. From the viewpoint of each field theory on $\mathcal{S}_{i}$, the intersection of other components plays the role of defects. The first example of such general gauge theory is the so-called spiked instanton system, which was introduced in~\cite{Nekrasov:2015wsu,Nekrasov:2016qym}. The spiked instantons arise from the low energy limit of D1-branes probing \emph{intersecting} D5 (and anti-D5)-branes. Later, it was generalized to D1-branes probing intersecting D7-branes in~\cite{Pomoni:2021hkn} (see also~\cite{Fasola:2023ypx,Cao:2023lon}), and the arising instantons are called the tetrahedron instantons. Note that these instantons are generalizations of the higher dimensional instantons introduced in the previous paragraph. 

\paragraph{Gauge origami}
The setups discussed previously are collectively called the \emph{gauge origami} and the arising partition function is called the \emph{gauge origami partition function} \cite{Nekrasov:2016ydq}. Consider a type IIB theory on $Z\times \mathcal{C}$ where $Z$ is a toric Calabi--Yau four-fold and $\mathcal{C}=\mathbb{C},\,\mathbb{C}^{\times},\,\mathbb{T}^{2}$ and the low energy limit of the D1-branes probing $\D(2p+1)$-branes. The D1-branes wrap $\mathcal{C}$ while the $\D(2p+1)$-branes wrap the product of $\mathcal{C}$ and non-compact toric submanifolds of $Z$ in a way preserving a suitable number of supersymmetries. Depending on $\mathcal{C}$, the arising partition function becomes rational, trigonometric, and elliptic, respectively. The gauge origami partition function generally takes the form as\footnote{Origami is ``折紙" in Japanese.}
\bea
\mathcal{Z}_{\text{折紙}}=\sum_{\{\Lambda_{i}^{(\alpha)}\}}\underline{\mathfrak{q}}^{|\underline{\vec{\Lambda}}|}\prod_{(i,\alpha)}\mathcal{Z}[\Lambda_{i}^{(\alpha)}]\prod_{(i,\alpha)\neq (j,\beta)}\mathcal{Z}(\Lambda_{i}^{(\alpha)}\,|\,\Lambda_{j}^{(\beta)})
\eea
where $i$ labels the possible types of toric submanifolds and $\alpha$ labels the number of D-branes wrapping them. The partition function is a summation of random \emph{BPS crystals}, which are generalizations of the partitions, and they are denoted $\Lambda^{(\alpha)}_{i}$. These crystals are expected to be, generally, truncations of four-dimensional BPS crystals which are generalizations of the three-dimensional BPS crystals \cite{Szendroi:2007nu,Mozgovoy2008OnTN,Ooguri:2009ijd,Nagao:2010kx} (see \cite{ Bao:2024ygr, Franco:2023tly} for recent discussions). The $\mathcal{Z}[\Lambda_{i}^{(\alpha)}]$ part comes from the contribution of each $\D(2p+1)$-branes while the $\mathcal{Z}(\Lambda_{i}^{(\alpha)}\,|\,\Lambda_{j}^{(\beta)})$ part comes from the bifundamental contributions at the junctions. In this paper, we mainly focus on the case when $Z=\mathbb{C}^{4}$ and $\mathcal{C}=\mathbb{C}^{\times }\simeq \mathbb{R}\times \mathbb{S}^{1}$ which gives the K-theoretic magnificent four \cite{Nekrasov:2017cih,Nekrasov:2018xsb}, tetrahedron instanton \cite{Pomoni:2021hkn}, and spiked instanton \cite{Nekrasov:2016gud} setups. From the string theory viewpoint, we take the T-duality of the D1--$\D(2p+1)$ system and consider a $\D0$--$\D(2p)$ system, where each $\D(2p)$-brane gives a $(2p+1)$-dimensional gauge theory and the $\D0$-branes play the roles of instantons. 

\paragraph{Quantum algebra of gauge origami}
An interesting property of the gauge origami partition function is the existence of an infinite set of non-perturbative Dyson--Schwinger equations related to the symmetries of adding and removing instantons~\cite{Nekrasov:2015wsu}. The $qq$-characters are physical observables characterizing them, and interestingly, there is an operator formalism of them called the quiver W-algebra \cite{Kimura:2015rgi,Kimura:2016dys,Kimura:2017hez,Kimura:2019xzj} (see \cite{Kimura:2020jxl} for a review). From the gauge theoretic viewpoint, such algebras are associated with the Dynkin diagram corresponding to the quiver structure of the gauge theory. In the gauge origami formalism, the $qq$-characters and quiver W-algebras appear from the gauge origami system in $Z=\mathbb{C}^{2}\times \mathbb{C}^{2}/\Gamma$, where $\Gamma$ denotes the finite subgroup of $\mathrm{SU}(2)$ associated with the quiver structure through the McKay correspondence. Placing $\D4$-branes wrapping the $\mathbb{C}^{2}$ part gives the 5d $\mathcal{N}=1$ (affine) quiver gauge theories and the $\D4$-branes wrapping the $\mathbb{C}^{2}/\Gamma$ give the $qq$-characters or the quiver W-algebras of the theory. Physically, they are codimension four defects of the quiver gauge theory \cite{Kim:2016qqs}.

One of the reasons why such algebras are considered important is because they give the BPS/CFT correspondence for the present case. One can construct screening currents from the quiver structure and the vacuum expectation value of them gives the instanton partition function of the quiver gauge theory. Moreover, after defining the highest weight, the commutativity with the screening charges determines the generator of the quiver W-algebra uniquely and they are the operator version of the $qq$-characters. These quiver W-algebras include the well-known $q$-Virasoro \cite{Shiraishi:1995rp}, $q$-W$_{N}$ \cite{Awata:1995zk,Awata:1996dx}, and Frenkel--Reshetkhin's deformed W-algebras \cite{Frenkel:1997CMP,Frenkel:1998ojj}.

Although the quiver W-algebras indeed gave a way to discuss the BPS/CFT correspondence from the quantum algebraic viewpoint, the applicable theory is still limited and needs to be extended. For example, while the $qq$-characters associated with $Z=\mathbb{C}^{2}/\Upsilon\times \mathbb{C}^{2}/\Gamma$ were studied in terms of partition functions in \cite{Nekrasov:2012xe,Nekrasov:2013xda,Nekrasov:2015wsu,Jeong:2018qpc,Jeong:2021rll}, it seems that the complete operator formalism of such cases is still missing in the literature. Moreover, the quiver W-algebra is only applicable to discuss two stacks of D-branes in transverse directions while we have multiple intersecting D-branes in the gauge origami system. Based on recent studies such as the tetrahedron instanton system we also have D-branes wrapping not only complex two-dimensional surfaces but also complex three-dimensional manifolds. Most importantly, we need to generalize the operator formalism to describe gauge origami systems associated with general toric Calabi--Yau four-folds~$Z$. See \cite{Cao:2019tvv,Kimura:2022zsm,Cao:2023gvn,Szabo:2023ixw,Nekrasov:2023nai} for discussions along this direction. 

The goal of this thesis is to fill in this gap by generalizing the concept of quiver W-algebras and showing the BPS/CFT correspondence of the gauge origami system associated with $\mathbb{C}^{4}$. Some generalizations to toric Calabi--Yau four-folds were already proposed in \cite{Kimura:2023bxy} (see also \cite{Bao:2024ygr, Franco:2023tly}) but in this thesis we focus only on $\mathbb{C}^{4}$.


\section*{Organization and summary of this thesis}

The rest of this thesis consists of six chapters and five appendices. Chap.~\ref{chap:ADHM-localization}, \ref{chap:gauge-origami}, \ref{chap:gaugeorigamipartitionfunction} are preliminaries while Chap.~\ref{chap:freefield-vertexop} and \ref{chap:quantum-algebra-BPSqq} are the original part which are based on the author's papers \cite{Kimura:2023bxy, Kimura:2024osv}. Appendix~\ref{app:spinor_susy}, \ref{app-chap:openstring}, \ref{app:specialfunct}, \ref{app-chap:Partition-formulas}, and \ref{app-chap:proofs} are supplementary materials.

In Chap.~\ref{chap:ADHM-localization}, we review basic aspects of instanton counting and supersymmetric localization using the pure super Yang--Mills theory as a toy model. The ADHM data, contour integral formulas, JK residues and Nekrasov factors are reviewed there. In Chap.~\ref{chap:gauge-origami}, we move on to the gauge origami system and discuss generalizations of the ADHM instantons. Such instantons appear in string theory as low energy limits of D1-branes probing the D($2p+1$)-branes. We discuss the number of preserved supersymmetries, open string spectrum, and show that the low energy limit of the D1-branes can be described by a 2d $\mathcal{N}=(0,2)$ quiver gauge theory. Moreover, we show that the vacuum moduli space of such quiver gauge theories give the instanton moduli space. We then take the T-duality and consider the 1d $\mathcal{N}=2$ supersymmetric quantum mechanics and evaluate the Witten index in Chap.~\ref{chap:gaugeorigamipartitionfunction}. Similar to the pure SYM case, after evaluation of the contour integral formulas, the poles are classified by multi-dimensional partitions, which are generalizations of Young diagrams. We also study the properties of the gauge origami partition functions. In Chap.~\ref{chap:freefield-vertexop}, we propose a systematical way to define vertex operators reproducing the contour integral formulas of a 1d $\mathcal{N}=2$ supersymmetric quiver quantum mechanics. The data of these vertex operators are all encoded in a matrix which we call \textit{qquiver Cartan matrix} (see section~\ref{sec:vertexop-quiver} for the definitions). This matrix includes the information of the quiver diagram and the associated flavor charges of the matter components in the low energy field theory. We then apply this formalism to the gauge origami system and derive the free field realizations of the contour integral formulas. Such free field realization implies the existence of a quantum algebraic structure. In Chap.~\ref{chap:quantum-algebra-BPSqq}, we introduce the BPS $qq$-characters and show the BPS/CFT correspondence of the gauge origami system. Furthermore, we study the quantum algebraic properties of these BPS $qq$-characters. Finally, in Chap.~\ref{chap:conclusion}, we give the conclusion.

\subsection*{Special notes}
This thesis is based on the following papers.
\begin{center}
\begin{tabular}{llllll}
 \cite{Kimura:2023bxy} &\,\,T.\,Kimura and G.\,Noshita, ``Gauge origami and quiver W-algebras",\,\, \\
    &\href{https://arxiv.org/abs/2310.08545}{\magenta{arXiv:2310.08545 [hep-th]}}, \,\,\href{https://link.springer.com/article/10.1007/JHEP05(2024)208}{{JHEP05(2024)208}}\\
    &\\
    \cite{Kimura:2024osv}& T.\,Kimura and G.\,Noshita, ``Gauge origami and quiver W-algebras III: Donaldson--Thomas $qq$-characters",\\ &\href{https://arxiv.org/abs/2411.01987}{\magenta{arXiv:2411.01987 [hep-th]}}
\end{tabular}
\end{center}

\subsection*{Summary of main results}

\paragraph{QQuiver Cartan matrix}We give the definition of the $q$-deformed quiver Cartan matrix and propose a general way to derive vertex operators given a 2 SUSY quiver.

{\begin{defmod}[Def.~\ref{def:totalqquiverCartan}]
    We define the total qquiver Cartan matrix $c=(c_{ab})_{a,b\in\overline{Q}_{0}}$ as
    \bea
    c_{ab}&=c_{ab}^{+}+c_{ab}^{-}=2\delta_{ab}-\left(\sum_{b\rightarrow a}q(\Phi_{b\rightarrow a})+\sum_{a\rightarrow b}q(\Phi_{a\rightarrow b})^{-1} \right)\\
    &\qquad +\left(\sum_{b\textcolor{red}{\rightarrow} a}q(\Lambda_{b\textcolor{red}{\rightarrow} a})+\sum_{a\textcolor{red}{\rightarrow} b}q(\Lambda_{a\textcolor{red}{\rightarrow} b})^{-1}\right).
    \eea
\end{defmod}

\paragraph{Definition of D-brane vertex operators}
We introduce vertex operators $\mathsf{A}(x),\mathsf{S}_{a}(x),\mathsf{X}_{A}(x),\mathsf{W}_{\bar{a}}(x),\mathsf{Z}(x)$ corresponding to D0, D2, D4, D6, and D8-branes wrapping the possible subspaces $\text{pt}$, $\mathbb{C}_{a}$, $\mathbb{C}_{A}^{2}$, $\mathbb{C}_{\bar{a}}^{3}$, $\mathbb{C}^{4}$ ($a\in\four,A\in\six$), respectively.
The main result of section~\ref{sec:freefieldintegral} is as follows.\footnote{In this thesis, we only focus on the D4, D6, and D8 cases. Discussion on the D2-branes are left for future work.}
\begin{theorem}[Thm.~\ref{thm:freefieldconclusion}]
For each D-brane (D0, D2, D4, D6, D8), we define the corresponding vertex operators as
\begin{align}
    \renewcommand\arraystretch{1.2}{
        \begin{tabular}{|c|c|c|}\hline
            D-brane & space-time & vertex operator \\
           \hline\hline  D0-brane  & $\text{pt}\times\mathbb{S}^{1}$ & $\mathsf{A}(x)$ \\
           \hline D2-brane   &  $\mathbb{C}_{a}\times \mathbb{S}^{1}$ ($a\in\four$)  & $\mathsf{S}_{a}(x)$\\
           \hline D4-brane &  $\mathbb{C}^{2}_{A}\times \mathbb{S}^{1}$ ($A\in\six$) & $\mathsf{X}_{A}(x)$\\
           \hline D6-brane &  $\mathbb{C}^{3}_{\bar{a}}\times \mathbb{S}^{1}$ ($a\in\four$) &  $\mathsf{W}_{\bar{a}}(x)$\\
           \hline D8-brane & $\mathbb{C}^{4}\times \mathbb{S}^{1}$  & $\mathsf{Z}(x)$ \\\hline
         \end{tabular}}
\end{align}
We have multiple copies of vertex operators if there are multiple ways that the D-branes can wrap. Then, the contour integral formula of the $k$-instanton partition function takes the form as
\bea
    \mathcal{Z}_k =\frac{\mathcal{G}^{k}}{k!} \oint \prod_{I=1}^{k}\frac{dx_{I}}{2\pi i x_{I}} \bra{0} \prod_{I=1}^{k}\mathsf{A}(x_{I})^{-1} :\prod_{i}\mathsf{V}_{i}(v_{i}): \ket{0},
\eea
where $\mathsf{V}_{i}(x)$ is an operator written from $\{\mathsf{X}_{A}(x),\mathsf{W}_{\bar{a}}(x),\mathsf{Z}(x)\}$.
\end{theorem}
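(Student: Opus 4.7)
The plan is to construct the vertex operators $\mathsf{A}$, $\mathsf{S}_{a}$, $\mathsf{X}_{A}$, $\mathsf{W}_{\bar{a}}$, $\mathsf{Z}$ from the qquiver Cartan matrix of Def.~\ref{def:totalqquiverCartan}, and then verify by a direct Wick computation that the resulting vacuum expectation value reproduces the contour integral formula for the $k$-instanton Witten index obtained in Chap.~\ref{chap:gaugeorigamipartitionfunction}. The intermediate device is to set up a single Heisenberg algebra whose two-point functions are tuned once and for all so that every pairwise OPE between our vertex operators produces exactly the Nekrasov-type rational factor demanded by the ADHM-like data of the corresponding D-brane pair.

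Concretely, I would introduce a bosonic mode algebra and define $\mathsf{A}(x)$ together with each of $\mathsf{S}_{a}$, $\mathsf{X}_{A}$, $\mathsf{W}_{\bar{a}}$, $\mathsf{Z}$ as a normal-ordered exponential of these modes, with charge assignments and mode commutators read off from $c_{ab}=c_{ab}^{+}+c_{ab}^{-}$. The point of the qquiver Cartan matrix is that $c^{+}$ encodes the chiral content (from the $\Phi$-type arrows in Def.~\ref{def:totalqquiverCartan}) and $c^{-}$ the Fermi content (from the $\Lambda$-type arrows) of the 2d $\mathcal{N}=(0,2)$ quiver derived in Chap.~\ref{chap:gauge-origami}, so that the full rational structure of each OPE, numerators as well as denominators, is determined by $c_{ab}$ with no further input.

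Applying Wick's theorem, the VEV factorizes into three families of pairwise contractions: the $\mathsf{A}^{-1}$--$\mathsf{A}^{-1}$ contractions produce the vector-multiplet factor of the integrand, the $\mathsf{A}^{-1}$--$\mathsf{V}_{i}$ contractions produce the matter coupling between instantons and each external D-brane, and the $\mathsf{V}_{i}$--$\mathsf{V}_{j}$ contractions reproduce the bifundamental origami factors $\mathcal{Z}(\Lambda_{i}^{(\alpha)}\,|\,\Lambda_{j}^{(\beta)})$ listed earlier. Each of these matches, by the construction of the previous paragraph, the corresponding building block of the integrand, and the overall prefactor $\mathcal{G}^{k}/k!$ is recovered from the zero modes of the Heisenberg algebra together with the $S_{k}$-symmetrization over the $k$ D0 positions.

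The technically hardest step will be the D6 and D8 sector. The operators $\mathsf{W}_{\bar{a}}$ and $\mathsf{Z}$ carry many nontrivial Fermi contributions via $c^{-}_{ab}$, which are responsible for the sign cancellations underlying the plane-partition and solid-partition pole structure of the Nekrasov factors in those dimensions. Verifying that the OPEs $\mathsf{A}\mathsf{W}_{\bar{a}}$, $\mathsf{A}\mathsf{Z}$, $\mathsf{W}_{\bar{a}}\mathsf{W}_{\bar{b}}$, $\mathsf{Z}\mathsf{Z}$, and the mixed ones involving $\mathsf{X}_{A}$ reproduce the correct signs and multiplicities in every wrapping configuration $A\in\six$, $a\in\four$ is essentially bookkeeping driven by $c_{ab}$; the actual content of the theorem is the assertion that a single Heisenberg two-point structure trivializes all of these cases simultaneously, which is exactly what the qquiver Cartan matrix was engineered to guarantee.
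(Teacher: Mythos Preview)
Your overall strategy---build the vertex operators from the qquiver Cartan matrix and then check by Wick contraction that the VEV reproduces the contour integrand---is the same as the paper's. But several of the specific claims you rely on are wrong in ways that would break the argument.

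First, you misread the decomposition $c_{ab}=c^{+}_{ab}+c^{-}_{ab}$. In Def.~\ref{def:halfqquiverCartan}, the positive half $c^{+}_{ab}=\delta_{ab}-\sum_{b\to a}q(\Phi_{b\to a})+\sum_{b\textcolor{red}{\to}a}q(\Lambda_{b\textcolor{red}{\to}a})$ already contains \emph{both} the chiral (with a minus) and the Fermi (with a plus) contributions; $c^{-}$ is just its Hermitian conjugate in the sense of Def.~\ref{def:qquiver-conjugate}, not a separate ``Fermi half.'' So the bookkeeping you sketch for the D6/D8 sector is keyed to the wrong split.

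Second, the theorem wraps the flavor operators in a normal ordering $:\prod_{i}\mathsf{V}_{i}(v_{i}):$, so there are \emph{no} $\mathsf{V}_{i}$--$\mathsf{V}_{j}$ contractions in the VEV. The paper is explicit that the qquiver Cartan matrix between two flavor nodes is not even defined in this framework, and that the $\mathsf{V}$--$\mathsf{V}$ OPEs, when computed separately, produce the one-loop perturbative factors---not the bifundamental origami factors $\mathcal{Z}(\Lambda_{i}^{(\alpha)}\,|\,\Lambda_{j}^{(\beta)})$, which live in the instanton partition function only after the contour integral is evaluated. Your ``third family'' of contractions is therefore spurious here.

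Third, the prefactor $\mathcal{G}^{k}$ is \emph{not} recovered from zero modes. It arises from the self-contractions of adjoint chiral and Fermi loops at coincident arguments, which the vertex-operator OPEs do not reproduce; the paper states that such factors ``are not reproduced from the operator product of the vertex operators and we need to add them by hand.'' (The $1/k!$ is just the Weyl-group factor from the $\U(k)$ measure, not a symmetrization over D0 positions.)

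With these three points corrected---the correct reading of $c^{\pm}$, the normal ordering of the flavor block, and the manual insertion of $\mathcal{G}^{k}$---your plan collapses onto the paper's actual proof: only the $\mathsf{A}^{-1}$--$\mathsf{A}^{-1}$ and $\mathsf{A}^{-1}$--$\mathsf{V}_{i}$ contractions are needed, and these match the $\mathcal{A}_{\mathbb{C}^{4}}$, $\mathscr{V}_{a}$, $\mathscr{S}_{\bar{A}}$ and $P/\overline{P}$ pieces of the integrand case by case (Propositions in \S\ref{sec:M4LMNS}--\ref{sec:spikedLMNS}).
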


\paragraph{Operator formalism of $qq$-characters}
We introduce the operator formalism of the $qq$-characters of the gauge origami system and show the BPS/CFT correspondence. For each D$(2p)$-brane, we can associate a $qq$-character. The D2 $qq$-characters $\mathscr{Q}_{a}(x)\,(a\in\four)$ are four copies of the screening charge of the $\widehat{A}_{0}$ quiver W-algebra \cite{Kimura:2015rgi} and the D4 $qq$-characters $\mathsf{T}_{A}(x)\,(A\in\six)$ are six copies of the generator of the $\widehat{A}_{0}$ quiver W-algebra. The D6 $qq$-characters $\mathsf{T}_{\bar{a}}(x)\,(a\in\four)$ are the new $qq$-characters where the monomial terms are labeled by plane partitions. These $qq$-characters represent the quantum algebras associated with complex $1,2,3$-dimensional submanifolds. We will see that their compositions indeed give the gauge origami partition function which shows the BPS/CFT correspondence. 
\begin{theorem}[Thm.~\ref{thm:spiked-qq-BPSCFT}, \ref{thm:tetra-origamiBPSCFT}]\label{thm:BPS/CFTintro}
The gauge origami partition function is in general given as a correlation function of the $qq$-characters,
\bea
    \mathcal{Z}_{\text{折紙}}=\bra{0}\prod_{(i,\alpha)}\mathsf{T}_{i}(x_{i,\alpha})\ket{0}
\eea
where $i$ takes values in $i\in\four^{\vee},\six $ and $i=1234$.
\end{theorem}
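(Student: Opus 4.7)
The plan is to reduce both sides of the claimed identity to their respective sums over BPS crystals and match them term-by-term, using the vertex operator construction of Thm.~\ref{thm:freefieldconclusion}. First I would unpack the definition of each D$(2p)$-brane $qq$-character: by construction, $\mathsf{T}_{i}(x)$ is a formal sum over a random crystal $\Lambda_{i}$ --- Young diagrams for $\mathsf{T}_{A}$, plane partitions for $\mathsf{T}_{\bar a}$, solid partitions in the D8 case --- of normally-ordered monomials of the form $:\mathsf{V}_{i}(x)\prod_{\Box \in \Lambda_{i}}\mathsf{A}(\chi_{\Box})^{-1}:$, weighted by $\mathfrak{q}^{|\Lambda_{i}|}$, where $\mathsf{V}_{i}$ is one of $\mathsf{X}_{A}$, $\mathsf{W}_{\bar a}$, $\mathsf{Z}$. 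Substituting these expansions into $\bra{0}\prod \mathsf{T}_{i}(x_{i,\alpha})\ket{0}$ and applying Wick's theorem turns the correlator into a sum over tuples $\{\Lambda_{i}^{(\alpha)}\}$ of products of scalar OPE factors, after which it remains to show that these scalars are exactly $\prod \mathcal{Z}[\Lambda_{i}^{(\alpha)}]\prod \mathcal{Z}(\Lambda_{i}^{(\alpha)}\,|\,\Lambda_{j}^{(\beta)})$.

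The second step is to identify the two kinds of contractions with the two building blocks of $\mathcal{Z}_{\text{折紙}}$. For self-contractions inside a single $\mathsf{T}_{i}(x_{i,\alpha})$, the product of OPE factors between $\mathsf{V}_{i}$ and the $\mathsf{A}^{-1}$'s at boxes of $\Lambda_{i}^{(\alpha)}$ (and among the $\mathsf{A}^{-1}$'s themselves) must reproduce the single-brane measure $\mathcal{Z}[\Lambda_{i}^{(\alpha)}]$. This is essentially the free field realization of Thm.~\ref{thm:freefieldconclusion} specialized to a single D-brane, evaluated at the JK pole configuration corresponding to $\Lambda_{i}^{(\alpha)}$ from Chap.~\ref{chap:gaugeorigamipartitionfunction}. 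For cross-contractions between $\mathsf{T}_{i}(x_{i,\alpha})$ and $\mathsf{T}_{j}(x_{j,\beta})$ with $(i,\alpha)\neq(j,\beta)$, the relevant scalar factors are exponentials in the off-diagonal entries of the qquiver Cartan matrix of Def.~\ref{def:totalqquiverCartan}, which was designed precisely to encode the bifundamental data of the underlying quiver quantum mechanics; by direct computation these cross-contractions reproduce the junction measure $\mathcal{Z}(\Lambda_{i}^{(\alpha)}\,|\,\Lambda_{j}^{(\beta)})$.

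The main obstacle will be the mixed-dimension junctions --- for instance between a D4 $\mathsf{X}_{A}$ and a D6 $\mathsf{W}_{\bar b}$, or between a D6 $\mathsf{W}_{\bar a}$ and the D8 $\mathsf{Z}$ --- because the two random crystals then live in different ambient complex dimensions and their intersection carries nontrivial orientation/sign data, which has historically been the delicate point in instanton calculations on $\mathbb{C}^{4}$. To handle these uniformly I would invoke the fusion property advertised in the abstract, namely that higher-dimensional $qq$-characters arise as fusions of lower-dimensional ones: this reduces every mixed contraction inductively to iterated D4--D4 contractions, whose OPEs are already fixed by the $\widehat{A}_{0}$ quiver W-algebra structure, and lets the delicate signs fall out of the antisymmetric part of the qquiver Cartan matrix. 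A small separate check is that the correlator does not depend on the order of the factors $\mathsf{T}_{i}(x_{i,\alpha})$; this should follow from the quadratic $qq$-character relations established elsewhere in Chap.~\ref{chap:quantum-algebra-BPSqq}. Assembling these pieces yields exactly the crystal sum defining $\mathcal{Z}_{\text{折紙}}$ and completes the proof.
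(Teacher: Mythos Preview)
Your overall strategy is right, but you overcomplicate it in two places and one of them is a genuine gap. First, the theorem as stated (via Thm.~\ref{thm:spiked-qq-BPSCFT} and~\ref{thm:tetra-origamiBPSCFT}, plus~\eqref{eq:D8qqBPSCFT} for the D8 case) treats the spiked instanton, tetrahedron instanton, and magnificent four systems \emph{separately}: the product $\prod_{(i,\alpha)}\mathsf{T}_{i}(x_{i,\alpha})$ runs over all $i\in\six$, or all $i\in\four^{\vee}$, or $i=\four$, but never a mixture. So your entire third paragraph about mixed D4--D6 or D6--D8 junctions, the delicate signs, and the proposed reduction via fusion is addressing a problem that does not arise in the statement at hand.

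Second, and more seriously, your description of the $qq$-character expansion as normally-ordered monomials weighted only by $\mathfrak{q}^{|\Lambda_{i}|}$ is incomplete, and your plan to recover the single-brane measure $\mathcal{Z}[\Lambda_{i}^{(\alpha)}]$ from ``self-contractions inside a single $\mathsf{T}_{i}$'' cannot work: the monomials $\Lambda_{A,\lambda}(x)$, $\Lambda_{\bar{a},\pi}(x)$, $\Lambda^{K}_{\four,\rho}(x)$ are \emph{already normally ordered} by definition, so there are no contractions left to take inside a single term. In the paper the $\U(1)$ coefficients $\widetilde{\mathcal{Z}}^{\D4}_{A}[\lambda]$, $\widetilde{\mathcal{Z}}^{\D6}_{\bar a}[\pi]$, $(-1)^{\sigma_{a}(\rho)}\mathcal{Z}^{\D8}_{\four;a}[\rho,K]$ are part of the \emph{definition} of the $qq$-characters, fixed once and for all by commutativity with the screening charges (Thm.~\ref{thm:D4qq-commute},~\ref{thm:D6qq-commute}) or by fusion (Thm.~\ref{thm:D6toD8}). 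The proof of the BPS/CFT correspondence then collapses to a single pairwise OPE: Lemma~\ref{lemm:D4ope} (resp.~Lemma~\ref{lem:D6operatorproduct}) shows that $\Lambda_{B,\mu}(x')\Lambda_{A,\lambda}(x)$ equals the one-loop factor times the bifundamental factor $\mathcal{Z}^{\D4\tbar\D4}_{A|B}(x,\lambda\,|\,x',\mu)$ times the normally-ordered product. Expanding each $\mathsf{T}_{i}$, multiplying out, and taking $\bra{0}\cdots\ket{0}$ immediately gives the partition function. No Wick gymnastics beyond this pairwise lemma, no fusion reductions, and no separate order-independence check are required.
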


In the context of quiver W-algebra and also the quantum integrable system, the commutation relation between the vertex operators plays a fundamental role.
We obtain an interesting commutativity among $qq$-characters: The D2, D4, D6 $qq$-characters associated with subspaces of $\mathbb{C}^{4}$ that are transverse with each other commutes.  We also give a conjecture claiming that the D6 and D8 $qq$-characters all commute with each other.
\begin{theorem}[Thm.~\ref{thm:tetrascreening}]
The $qq$-characters associated with the elements $i,j\in\four\oplus\six\oplus\four^{\vee}$ commute with each other up to trivial zero modes (see \eqref{eq:weakcommute}) when $i$ and $j$ are transverse with each other:
\beq
    \mathsf{T}_{i}(x)\mathsf{T}_{j}(x')-f_{ij}(x,x')\mathsf{T}_{j}(x')\mathsf{T}_{i}(x)=0 \quad \Longleftrightarrow \quad i \cap j =\emptyset,
\eeq
where $f_{ij}(x,x')$ are zero mode factors.
\end{theorem}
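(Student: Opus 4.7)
The plan is to use the free-field realization of the theorem above to reduce the commutativity claim to an OPE computation on normal-ordered products of the elementary vertex operators $\mathsf{A}$, $\mathsf{X}_A$, $\mathsf{W}_{\bar a}$, and $\mathsf{S}_a$. Each $qq$-character
\[
\mathsf{T}_i(x) \;=\; \sum_{\lambda}\,\mathsf{V}_i^{(\lambda)}(x)
\]
is by construction a formal sum of normal-ordered monomials indexed by partitions $\lambda$ (ordinary, plane, or trivial, depending on $|i|$), and the OPE of any two such monomials produces a scalar factor times a new normal-ordered product. Commutativity up to zero modes is then equivalent to the statement that the pairwise exchange scalar between $\mathsf{V}_i^{(\lambda)}(x)$ and $\mathsf{V}_j^{(\mu)}(x')$ equals a universal zero-mode quantity $f_{ij}(x,x')$, independent of $\lambda$ and $\mu$.

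The pairwise scattering functions are encoded in the total qquiver Cartan matrix $c=c^++c^-$, so I would read them off directly and then exploit the hypothesis $i\cap j=\emptyset$ as follows. The charge vector of $\mathsf{V}_i$ carries the character of the normal bundle to the wrapped cycle $\mathbb{C}^{|i|}_i\subset\mathbb{C}^4$, and transversality means that the normal bundles of the two cycles decompose into \emph{complementary} collections of weights $q_a=e^{-\beta\epsilon_a}$. Consequently, the pairing in $c$ between any $\mathsf{A}(y_\Box)$ appearing in $\mathsf{V}_i^{(\lambda)}$ and any $\mathsf{A}(y'_{\Box'})$ appearing in $\mathsf{V}_j^{(\mu)}$ collapses into contributions that cancel exactly against the $\mathsf{V}_i$--$\mathsf{A}$ and $\mathsf{V}_j$--$\mathsf{A}$ crossings taken in the opposite order; this is the operator avatar of the Calabi--Yau relation $\sum_{a=1}^{4}\epsilon_a=0$. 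What remains is the head-to-head scattering $\mathsf{V}_i(x)\mathsf{V}_j(x')$, which is a pure zero mode because the charges of $\mathsf{V}_i$ and $\mathsf{V}_j$ share no common index direction. A short check identifies this boundary factor with the $f_{ij}(x,x')$ of \eqref{eq:weakcommute}.

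The main obstacle will be proving the monomial-independence of the exchange ratio in the D4--D4 dual case ($A,B\in\six$ with $A\sqcup B=\four$) and the transverse D2--D4 case ($a\notin A$), since both $qq$-characters involved are genuinely infinite sums over 2D partitions weighted by Nekrasov factors whose dependence on $\lambda$ and $\mu$ is nontrivial. I would handle this by organizing the expansion along the box-adding/box-removing moves $\lambda\leftrightarrow\lambda\cup\{\Box\}$ and invoking a telescoping identity for ratios of Nekrasov factors under the complementary $\Omega$-background $\epsilon_A+\epsilon_{A^c}=0$, the $\mathbb{C}^4$-lift of the classical identity that establishes commutativity of screening currents with W-generators in the deformed $\widehat{A}_0$ setting. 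The D6-involving cases, which by the codimension count can only be transverse against a D2 with the matching label ($i=\{a\}$, $j=\bar a$), reduce to a screening-type manipulation of the plane-partition generating function and should require nothing beyond the standard 3D Young-diagram OPE techniques already developed in Chapter~\ref{chap:freefield-vertexop}.
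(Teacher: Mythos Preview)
Your reduction contains a genuine conceptual gap. You claim that commutativity is equivalent to the pairwise exchange scalar between monomials $\mathsf{V}_i^{(\lambda)}(x)$ and $\mathsf{V}_j^{(\mu)}(x')$ being a universal zero-mode quantity independent of $\lambda,\mu$. As a rational-function identity this is indeed forced by the reflection properties \eqref{eq:reflec_structfunc} together with $q_1q_2q_3q_4=1$, but that is \emph{not} enough: the operator products are formal series, and once the rational OPE factor has poles the two orderings differ by delta functions supported at those poles. Your assertion that the head-to-head scattering is a ``pure zero mode'' is false already for the highest weights: for instance $\mathsf{X}_A(x)\mathsf{S}_c(x')$ with $c\in\bar{A}$ carries the factor $\mathscr{V}_d(q_A x'/x)^{-1}$, which has a pole, and $\mathsf{W}_{\bar a}(x)\mathsf{S}_a(x')$ carries $(1-q_a^{-1}x'/x)^{-1}$. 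So the monomial-level commutators are nonzero sums of delta functions, and proving ``monomial-independence of the exchange ratio'' is neither the goal nor sufficient.

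The mechanism the paper actually uses is the one you only gesture at in your final paragraph, but it is not a property of individual monomials: one computes $[\Lambda_{i,\lambda}(x),\mathsf{S}_a(x')]$ and finds delta functions sitting at the addable boxes of $\lambda$ (from poles of $\mathscr{Y}$ or $\mathscr{W}$) and at the removable boxes (from zeros). The addable-box term for $\lambda$ and the removable-box term for $\lambda+\Bbox$ land on the \emph{same} normal-ordered operator because ${:\mathsf{A}^{-1}(\chi(\Bbox))\mathsf{S}_a(\chi(\Bbox)):}={:\mathsf{S}_a(q_a\chi(\Bbox)):}$, and they cancel precisely when the coefficients satisfy the recursion $\widetilde{\mathcal Z}[\lambda+\Bbox]/\widetilde{\mathcal Z}[\lambda]=-(\text{residue ratio})$; this recursion is what \emph{fixes} the coefficients to be the instanton partition functions. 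What survives is a total $q_a$-difference, which vanishes only after the final sum over $k\in\mathbb Z$ in the screening charge $\mathscr{Q}_a$ (this is also why the D2-D2 case is already nontrivial and not, as your outline suggests, automatic). The D4--D4 transverse case proceeds by the same logic, shifting simultaneously $\lambda\to\lambda+\Bbox$ and $\mu\to\mu-\BboxF$. Your proposal needs to be rewritten around this residue-matching and total-difference structure rather than around a putative $(\lambda,\mu)$-independence.
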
    

\begin{conjecture}[Conj.~\ref{conj:D8commutativity}]
The quadratic relations of the D8 $qq$-characters are
    \bea
\mathsf{f}^{K_{2}|K_{1}}_{\four\four}\left(x_{2}/x_{1}\right)\mathsf{T}_{\four}^{K_{1}}(x_{1})\mathsf{T}^{K_{2}}_{\four}(x_{2})-\mathsf{f}^{K_{1}|K_{2}}_{\four\four}\left(x_{1}/x_{2}\right)\mathsf{T}_{\four}^{K_{2}}(x_{2})\mathsf{T}^{K_{1}}_{\four}(x_{1})=0,
    \eea
    where $K_{1},K_{2}$ are arbitrary. Moreover, this commutativity uniquely determines the sign factor $(-1)^{\sigma_{4}(\rho)}$ up to a global $\mathbb{Z}_{2}$ symmetry. 
\end{conjecture}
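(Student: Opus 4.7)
The strategy is to reduce the commutativity of the two D8 $qq$-characters to a combinatorial identity on pairs of solid partitions, using the free field realization established in Chap.~\ref{chap:freefield-vertexop} together with the D0--D8 vertex operator algebra. First, I would write the D8 $qq$-character $\mathsf{T}_{\four}^{K}(x)$ as a sum $\sum_{\rho}(-1)^{\sigma_{4}(\rho)}\,\mathsf{Z}_{\rho}^{K}(x)$ over solid partitions $\rho$, where each $\mathsf{Z}_{\rho}^{K}(x)$ is a normal-ordered product of the D8 vertex operator $\mathsf{Z}(x)$ together with D0 vertex operators $\mathsf{A}(x)^{-1}$ placed at the $\Omega$-background-rotated box positions of $\rho$. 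All the nontrivial zero-mode data is encoded in the qquiver Cartan entries $c_{\mathsf{Z}\mathsf{Z}},c_{\mathsf{A}\mathsf{Z}},c_{\mathsf{A}\mathsf{A}}$, from which one computes the bare two-point function $\langle \mathsf{Z}(x_{1})\mathsf{Z}(x_{2})\rangle$, fixing the candidate factor $\mathsf{f}^{K_{2}|K_{1}}_{\four\four}(x_{2}/x_{1})$ as the ratio of vacuum contractions required for the zeroth-order (empty-partition) terms to cancel.

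Second, I would proceed by double induction on $(|\rho_{1}|,|\rho_{2}|)$. After stripping the zero-mode factor, the coefficient of each normal-ordered monomial $:\mathsf{Z}_{\rho_{1}}^{K_{1}}(x_{1})\mathsf{Z}_{\rho_{2}}^{K_{2}}(x_{2}):$ on the two sides differs by the product of contraction factors that arise when the D0 operators in one partition are moved past the D8 and D0 operators of the other. Using the Nekrasov-factor identities recorded in Appendix~\ref{app-chap:Partition-formulas}, these contraction factors reorganize into a sum over pairs of solid partitions that share a common ``interface.'' The desired identity then reads schematically
\begin{equation*}
\sum_{\rho_{1},\rho_{2}}(-1)^{\sigma_{4}(\rho_{1})+\sigma_{4}(\rho_{2})}\,\mathcal{N}(\rho_{1},\rho_{2}\,;x_{1},x_{2})=\sum_{\rho_{1},\rho_{2}}(-1)^{\sigma_{4}(\rho_{1})+\sigma_{4}(\rho_{2})}\,\mathcal{N}(\rho_{2},\rho_{1}\,;x_{2},x_{1}),
\end{equation*}
and should follow, term by term, from a box-by-box pole-cancellation argument analogous to the one used for the D6 $qq$-characters, where each would-be pole at $x_{2}/x_{1}=q^{\text{box position}}$ is matched between a configuration that adds a box to $\rho_{1}$ and one that removes the corresponding box from $\rho_{2}$.

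Third, for the uniqueness part, I would regard $\sigma_{4}:\{\text{solid partitions}\}\to\mathbb{Z}/2$ as an unknown and impose, level by level in $|\rho_{1}|+|\rho_{2}|$, that pole residues cancel in the above identity. Each pole cancellation equates $(-1)^{\sigma_{4}(\rho)+\sigma_{4}(\rho\cup\square)}$ with an explicit sign coming from the contraction factors, giving a recursion that determines $\sigma_{4}(\rho\cup\square)$ from $\sigma_{4}(\rho)$. Consistency of this recursion under different orders of box-addition is the content of the statement that the sign is uniquely fixed up to the overall $\mathbb{Z}_{2}$ coming from the free choice of $\sigma_{4}(\emptyset)$.

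The main obstacle is the last step: unlike three-dimensional partitions, solid partitions have no closed-form generating function and no globally compatible ``melting'' order, so one cannot simply invoke a MacMahon-type identity. The consistency of the sign recursion under all possible ways of growing $\rho$ by a single box is a genuinely four-dimensional statement and, as suggested by the conjectural status in the thesis, seems to require either a direct combinatorial verification on every simple cycle in the Hasse diagram of solid partitions, or a more conceptual input such as an interpretation of $\sigma_{4}$ via an orientation of the magnificent-four moduli space obstruction theory. Establishing either is where I would expect to spend the bulk of the effort, and why only the first levels have been checked in practice.
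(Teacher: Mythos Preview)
The paper itself does \emph{not} prove this statement: it is recorded as a conjecture, supported only by explicit low-instanton computations ($k=0,1$ are written out in detail) and a computer check up to $k\le 5$. So your proposal is not being compared against an existing proof but against an open problem, and your final paragraph is honest about this---you correctly identify that the consistency of the sign recursion under all ways of growing a solid partition is the crux, and that this is precisely why the statement remains conjectural.

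There is, however, a concrete technical obstruction that your outline does not address and that the paper flags explicitly. For $k\ge 4$ the bifundamental factor $\mathcal{Z}^{\D8\tbar\D8}_{K_{1}|K_{2}}(x_{1},\rho^{(1)}\,|\,x_{2},\rho^{(2)})$ develops \emph{higher-order} poles (conjecturally only at $x_{1}=x_{2}$). Your ``box-by-box pole-cancellation argument analogous to the one used for the D6 $qq$-characters'' implicitly assumes simple poles, so that each residue can be matched between a configuration $(\rho_{1}+\hcube,\rho_{2})$ and $(\rho_{1},\rho_{2}+\hcube)$. In the D4 and D6 cases this works because the contraction factors are controlled by $\mathscr{S}_{A}$ and $g_{\bar a}$, which produce only simple poles at the relevant loci. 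For D8 the analogue is $\mathcal{A}_{\mathbb{C}^{4}}$, whose double zero/pole structure at $x=1$ is what generates the higher-order singularities once enough boxes accumulate. Handling these would require introducing derivative terms of the vertex operators into the $qq$-character---a phenomenon familiar from $D$- and $E$-type quiver W-algebras and from collision limits of spectral parameters---and your induction scheme would need to be upgraded accordingly. Without this, the argument cannot close even at the level of a formal inductive step once $|\rho_{1}|+|\rho_{2}|\ge 4$.

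For the uniqueness part, your recursion idea is in the right spirit, but note that the paper fixes the residual $\mathbb{Z}_{2}$ by the sign of the one-instanton term rather than by $\sigma_{4}(\emptyset)$; since $\sigma_{4}(\emptyset)=0$ tautologically, the free choice actually enters at the first nontrivial level.
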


\begin{corollary}[Cor.~\ref{cor:D6commutativity}]
    The D6 $qq$-characters all commute with each other:
    \bea
\mathsf{f}^{q_{b}|q_{a}}_{\four\four}\left(x_{2}/x_{1}\right)\mathsf{T}_{\bar{a}}(x_{1})\mathsf{T}_{\bar{b}}(x_{2})-\mathsf{f}^{q_{a}|q_{b}}_{\four\four}\left(x_{1}/x_{2}\right)\mathsf{T}_{\bar{b}}(x_{2})\mathsf{T}_{\bar{a}}(x_{1})=0,\quad a,b\in\four.
    \eea
    Moreover, they also commute with the D8 $qq$-characters:
    \bea
    \mathsf{f}^{q_{a}|K}_{\four\four}\left(x_{2}/x_{1}\right)\mathsf{T}_{\four}^{K}(x_{1})\mathsf{T}_{\bar{a}}(x_{2})-\mathsf{f}^{K|q_{a}}_{\four\four}\left(x_{1}/x_{2}\right)\mathsf{T}_{\bar{a}}(x_{2})\mathsf{T}^{K}_{\four}(x_{1})=0,\quad a\in\four.
    \eea
\end{corollary}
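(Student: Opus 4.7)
The plan is to derive this corollary directly from Conjecture~\ref{conj:D8commutativity} by realizing the D6 $qq$-characters as specializations of the D8 $qq$-characters. Since plane partitions are degenerations of solid partitions whose growth is frozen along one of the four axes of $\mathbb{C}^{4}$, one expects $\mathsf{T}_{\bar{a}}(x)$ to coincide, up to a trivial zero-mode prefactor, with $\mathsf{T}_{\four}^{K}(x)$ evaluated at $K=q_{a}$. First I would verify this identification on the monomial side: setting $K=q_{a}$ in the sum over solid partitions $\rho$ forces $\rho$ to lie in the three-dimensional slice transverse to the $a$-th axis, which is exactly the index set of $\mathsf{T}_{\bar{a}}(x)$. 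The free-field realization established in Chap.~\ref{chap:freefield-vertexop} promotes this combinatorial identity to an operator identity, since both sides are built from the same $\mathsf{A}(x)$-type vertex operators with compatible qquiver Cartan data inherited from Def.~\ref{def:totalqquiverCartan}.

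Given the identification, the first commutation relation follows by choosing $K_{1}=q_{a}$, $K_{2}=q_{b}$ in Conj.~\ref{conj:D8commutativity}, and the second by keeping $K_{1}=K$ arbitrary while setting $K_{2}=q_{a}$. In each case, the prefactors $\mathsf{f}^{K_{2}|K_{1}}_{\four\four}$ reduce to the advertised $\mathsf{f}^{q_{b}|q_{a}}_{\four\four}$ and $\mathsf{f}^{q_{a}|K}_{\four\four}$ respectively, since the zero-mode family $\mathsf{f}$ depends on its superscripts only through their $q$-content. The sign factor $(-1)^{\sigma_{4}(\rho)}$, pinned down modulo the global $\mathbb{Z}_{2}$ in Conj.~\ref{conj:D8commutativity}, restricts consistently to the sign convention used in defining $\mathsf{T}_{\bar{a}}(x)$ via plane partitions, so no new ambiguity is introduced.

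The main technical hurdle is the zero-mode bookkeeping under the specialization $K=q_{a}$. A priori some Nekrasov factors entering $\mathsf{f}^{K_{2}|K_{1}}_{\four\four}$ could degenerate, producing spurious zeros or poles when one or both arguments are set to $q_{a}$. I would isolate any such factors explicitly, check that they appear symmetrically on both sides of the commutation relation so that they cancel, and confirm that the surviving zero-mode function is precisely the one stated in the corollary. Once this regularity is verified, both commutation relations follow by direct substitution into the D8-D8 relation; the remainder of the argument is essentially a bookkeeping exercise in tracking the degeneration of solid partitions to plane partitions and the attendant collapse of the contraction factors.
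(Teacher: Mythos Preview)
Your proposal is correct and follows the same route as the paper: one first establishes the specialization $\mathsf{T}_{\four}^{q_{a}}(x)=\mathsf{T}_{\bar{a}}(x)$ (the paper records this as a Proposition, using $\mathcal{Z}^{\D8}_{\four;4}[\rho;q_{a}]=0$ for $\rho\notin\mathcal{PP}_{a}$, the operator identity $\Lambda^{q_{a}}_{\four,\pi}(x)=\Lambda_{\bar{a},\pi}(x)$, and the sign/coefficient matching of Cor.~\ref{cor:D8-D6reduce-sign}), and then substitutes $K_{1}=q_{a}$, $K_{2}=q_{b}$ or $K_{2}=q_{a}$ into Conj.~\ref{conj:D8commutativity}. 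One minor sharpening: with the zero-mode conventions of section~\ref{sec:zeromodes} the identity $\mathsf{Z}(q_{a},x)=\mathsf{W}_{\bar{a}}(x)$ is exact (see \eqref{eq:oprelation1}), so the identification holds on the nose rather than only up to a zero-mode prefactor, and no separate regularity check on $\mathsf{f}^{K_{2}|K_{1}}_{\four\four}$ is needed.
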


\chapter{ADHM theory and supersymmetric localization}\label{chap:ADHM-localization}

Yang--Mills instantons are topologically nontrivial field configurations minimizing the Yang--Mills action. ADHM constructions give a systematical way to construct instantons and the instanton moduli space is encoded in the ADHM data. With suitable number of supersymmetries, we can explicitly compute the instanton partition functions. In this chapter, we review basic aspects of instanton counting and supersymmetric localization using the pure super Yang--Mills (SYM) as a toy model. For a reviews, see for example \cite{Pestun:2016zxk} or \cite{Shadchin:2005mx} (see also \cite{Nakajima:2003uh}). 

We first review basic facts for instantons and supersymmetric localization in section~\ref{sec:YMtheory-instanton}. We then give the ADHM data and discuss that the moduli space is constructed from the ADHM equations in section~\ref{sec:ADHMdata-LMNSformula}. Studying the symmetries of the moduli space, we provide the so-called LMNS formula which is a contour integral formula giving the $k$-instanton contribution. In section~\ref{sec:pureSYM_JK}, we discuss on how to evaluate the contour integral using the JK-residue formalism. We will see that the poles are classified by Young diagrams. Finally, in section~\ref{sec:character-index}, we provide residue formulas and explicitly evaluate the residues.  

\section{Yang--Mills theory and instantons}\label{sec:YMtheory-instanton}
Let us briefly review what a gauge theory is. Let $G$ be a Lie group, where in this thesis it is taken to be a unitary group $G=\U(n)$. We denote $\mathfrak{g}=\operatorname{Lie} (G)$ as the Lie algebra. We take the space-time to be Euclidean. The gauge field $A_{\mu}$ is a $\mathfrak{g}$-valued one-form and the field strength is defined as $F=dA+A\wedge A$. The gauge transformation of them are given as $A\mapsto gAg^{-1}+gdg^{-1}$ and $F\mapsto gFg^{-1}$ for $g\in G$. The $G$-invariant action of the Yang--Mills (YM) theory is defined as
\bea\label{eq:YMlagrangian}
S_{\text{YM}}=-\frac{1}{g^{2}_{\text{YM}}}\int d^{4}x \,\Tr\left(F\wedge \star F\right),
\eea
where $\star$ is the Hodge dual. Given this Lagrangian, we are interested in field configurations minimizing the YM action as
\bea
\frac{\delta S_{\text{YM}}}{\delta A}=0.
\eea
Since the Hodge star acts as $\star^{2}=1$ on two-forms in four dimensions, we can decompose the field strength into self-dual (SD) and anti-self-dual (ASD) parts, with respect to the eigenvalue of the Hodge star:
\bea
F_{\pm}=\frac{1}{2}(F\pm \star F).
\eea
The vanishing conditions of the SD (ASD) part is the ASD (SD) YM equation: $\star F=\mp F$. Fields obeying the ASD (SD) equations are called the instantons (anti-instantons).

The action \eqref{eq:YMlagrangian} is bounded by the SD, ASD YM equations:
\bea
S_{\text{YM}}=-\frac{1}{2g^{2}_{\text{YM}}}\int d^{4}x\, \Tr\left(F\pm \star F\right)\wedge \star(F\pm \star F)\pm \frac{1}{g^{2}_{\text{YM}}}\int \Tr F\wedge F \geq \frac{8\pi^{2}|k|}{g^{2}_{\text{YM}}},
\eea
where the instanton number $k\in\mathbb{Z}$ is defined as
\bea
k=\frac{1}{8\pi^{2}}\int \Tr F\wedge F
\eea
and the equality is at $F=\mp \star F$. For the ASD (SD) YM equations, we have $k\geq 0\,(k\leq 0)$. We may also include the $\theta$-term as
\bea
S_{\theta}=-\frac{i\theta}{2\pi^{2}}\int d^{4}x\, \Tr F\wedge F
\eea
and the total action is given as $S_{\text{tot.}}=S_{\text{YM}}+S_{\theta}$. The total action around the $k$-instanton contribution is schematically expanded as
\bea
S_{\text{tot.}}[A]=\frac{8\pi^{2}}{g_{\text{YM}}^{2}}|k|-i\theta k+S_{\text{fluc.}}[\delta A]=S_{\text{top.}}[A]+S_{\text{fluc.}}[\delta A]
\eea
with $A=A^{(k)}_{\text{inst.}}+\delta A$. In this thesis, we are interested in the partition function defined as
\bea
\mathcal{Z}=\int [\mathcal{D}A]e^{-S_{\text{tot.}}}.
\eea

Explicitly computing this quantity is difficult but with the power of supersymmetry, the fluctuation part will be suppressed and we can compute it explicitly using supersymmetric localization. Suppose that $S$ is an action of a given theory and that we have a non-anomalous fermionic symmetry charge $\mathcal{Q}$. We collectively denote $\Phi$ as the field variables. The charge $\mathcal{Q}$ obeys $\mathcal{Q}S=0$ and $\mathcal{Q}^{2}=B$ where $B$ is some bosonic symmetry. Let $\mathcal{O}$ be a $\mathcal{Q}$-closed operator and the vacuum expectation value of it is defined as
\bea
\langle \mathcal{O}\rangle =\int [\mathcal{D}\Phi]e^{-S[\Phi]}\mathcal{O}[\Phi].
\eea
Choosing a functional $V[\Phi]$ such that $BV=0$, we can add the localizing action $S_{\text{loc.}}=\mathcal{Q}V[\Phi]$ and deform the VEV as
\bea
\langle \mathcal{O}\rangle(t)=\int [\mathcal{D}\Phi]e^{-S[\Phi]-t S_{\text{loc.}}[\Phi]}\mathcal{O}[\Phi].
\eea
This VEV actually does not depend on $t$:
\bea
\frac{\partial \langle \mathcal{O}\rangle (t)}{\partial t}=-\int [\mathcal{D}X]\mathcal{Q}V[\Phi]e^{-S[\Phi]-tS_{\text{loc.}}[\Phi]}\mathcal{O}[\Phi]=-\int [\mathcal{D}X]\mathcal{Q}\left(V[\Phi]e^{-S[\Phi]-tS_{\text{loc.}}[\Phi]}\mathcal{O}[\Phi]\right)=0.
\eea
We then have
\bea
\langle \mathcal{O}\rangle=\lim_{t\rightarrow \infty}\langle\mathcal{O}\rangle(t).
\eea
Note that the VEV of the operator does not change even if we add a $\mathcal{Q}$-exact term. Moreover, the coupling constants of such $\mathcal{Q}$-exact terms do not depend. Using this property, taking the limit $t\rightarrow \infty$ fields obeying $S_{\text{loc.}}[\Phi_{0}]=0$ only remain and we can expand the fields around $\Phi_{0}$ as
\bea
\Phi=\Phi_{0}+t^{-1/2}\tilde{\Phi}
\eea
and we have 
\bea
S[\Phi]+t\mathcal{Q}V[\Phi]=S[\Phi_{0}]+\int \left(\tilde{\Phi}\left.\frac{\delta^{2}S_{\text{loc.}}[\Phi]}{\delta \Phi^{2}}\right|_{\Phi=\Phi_{0}}\tilde{\Phi}\right)+\mathcal{O}(t^{-1/2}).
\eea
Thanks to this property, only quadratic terms in $\tilde{\Phi}$ will appear in the VEV. Denoting the space of the fields obeying the condition $S_{\text{loc.}}[\Phi_{0}]=0$ as $\mathfrak{M}$, the VEV is then given as
\bea\label{eq:susylocal-opVEV}
\langle \mathcal{O} \rangle= \lim_{t\rightarrow \infty} \langle\mathcal{O}\rangle(t)=\int_{\mathfrak{M}}[\mathcal{D}\Phi_{0}]e^{-S[\Phi_{0}]}\left[\operatorname{sdet}\left(\left.\frac{\delta^{2}S_{\text{loc.}}[\Phi]}{\delta\Phi^{2}}\right|_{\Phi=\Phi_{0}}\right)\right]^{-1}\mathcal{O}[\Phi_{0}].
\eea
The $\operatorname{sdet}$ is the one-loop functional determinants over the fermions and bosons.

In particular, considering the 4d $\mathcal{N}=2$ pure super Yang--Mills theory with gauge group $G=\U(n)$, the action schematically takes the form as
\bea
S=S_{\text{top.}}+\mathcal{Q} V.
\eea
For the detailed expression, see for example \cite{Shadchin:2005mx}. Expanding around the $k$-instanton configuration, the partition function is then given by 
\bea
\mathcal{Z}=\langle 1\rangle=\int [\mathcal{D}\Phi]e^{-S_{\text{top.}}-\mathcal{Q}V
}=\mathcal{Z}_{\text{pert.}}\sum_{k=0}^{\infty}\mathfrak{q}^{k}\int_{\mathfrak{M}_{n,k}} [\mathcal{D}A^{(k)}_{\text{inst.}}] \, 1
\eea
where $\mathfrak{q}=e^{2\pi i \tau}$ and 
\bea
\tau=\frac{4\pi i }{g_{\text{YM}}^{2}}+\frac{\theta}{2\pi}.
\eea
The space $\mathfrak{M}_{n,k}$ is the space of fields obeying the ASD YM equation with the topological number $k$ and it is called the \textit{instanton moduli space}. The perturbative part $\mathcal{Z}_{\text{pert.}}$ comes from the path integral of the quadratic terms coming from the fluctuations around the saddle point. The instanton partition function is then defined by omitting the perturbative part
\bea
\mathcal{Z}_{\text{inst.}}=\sum_{k=0}^{\infty}\mathfrak{q}^{k}\mathcal{Z}_{k},\quad \mathcal{Z}_{k}=\int_{\mathfrak{M}_{n,k}}1.
\eea

\section{ADHM data and LMNS formula}\label{sec:ADHMdata-LMNSformula}
\paragraph{ADHM data}
To evaluate the instanton partition function, we need the information of the instanton moduli space $\mathfrak{M}_{n,k}$. A systematic construction for instantons is known as the ADHM construction \cite{Atiyah:1978ri}. To construct the $k$-instanton solution in $G=\U(n)$ Yang--Mills theory on $\mathbb{R}^{4}=\mathbb{C}^{2}$, we introduce two complex vector spaces
\bea
\bfN=\mathbb{C}^{n},\quad \bfK=\mathbb{C}^{k}
\eea
and define
\bea
X=\Hom(\bfK,\bfK)\oplus \Hom(\bfK,\bfK)\oplus \Hom(\bfN,\bfK)\oplus \Hom(\bfK,\bfN).
\eea
The ADHM variables $\mathsf{B}_{1,2},\mathsf{I,J}$ are linear maps relating the vector spaces $\bfN,\bfK$ and are elements of $X$:
\bea
\mathsf{B}_{1,2}:\bfK\rightarrow \bfK,\quad \mathsf{I}:\bfN\rightarrow \bfK,\quad \mathsf{J}:\bfK\rightarrow \bfN.
\eea
The space $X$ has a natural action coming from $G=\U(n)$ and $G^{\vee}=\U(k)$. The actions on the ADHM variables are given as
\bea\label{eq:ADHMflavorsymmetry}
G^{\vee}:&\quad (g)\cdot (\mathsf{B}_{1,2},\mathsf{I},\mathsf{J})=(g\,\mathsf{B}_{1,2}g^{-1},g\,\mathsf{I},\mathsf{J}g^{-1}),\quad g\in\U(k),\\
G:&\quad (\nu)\cdot (\mathsf{B}_{1,2},\mathsf{I},\mathsf{J})=(\mathsf{B}_{1,2},\mathsf{I}\,\nu^{-1},\nu \mathsf{J}),\quad \nu\in\U(n).
\eea
They play the role of the flavor symmetry of the ADHM variables. The latter symmetry corresponds to the gauge transformation of the framing of the $\U(n)$ theory.\footnote{Since we are interested in instantons with finite action, the curvature needs to vanish at the infinity $F\rightarrow 0$ implying a pure gauge $A\rightarrow gdg^{-1}$. We still have a residual global symmetry here and it is called the framing of the gauge connection.
}

Atiyah--Drinfeld--Hitchin--Manin identified that the $k$-instanton moduli space of the $\U(n)$ connection is identified as the hyper-K\"{a}hler quotient 
\bea\label{eq:ADHMmodulispace}
\mathfrak{M}_{n,k}=\{(\mathsf{B}_{1},\mathsf{B}_{2},\mathsf{I},\mathsf{J})\mid (\mu_{\mathbb{R}},\mu_{\mathbb{C}})=(0,0)\}/\U(k)
\eea
where $\mu_{\mathbb{R}},\mu_{\mathbb{C}}$ are the moment maps defined as
\bea\label{eq:ADHMeq}
\mu_{\mathbb{R}}&=[\mathsf{B}_{1},\mathsf{B}_{1}^{\dagger}]+[\mathsf{B}_{2},\mathsf{B}_{2}^{\dagger}]+\mathsf{I}\mathsf{I}^{\dagger}-\mathsf{J}^{\dagger}\mathsf{J},\\
\mu_{\mathbb{C}}&=[\mathsf{B}_{1},\mathsf{B}_{2}]+\mathsf{I}\mathsf{J}.
\eea
These equations $(\mu_{\mathbb{R}},\mu_{\mathbb{C}})=(0,0)$ are called the ADHM equations.

In addition to \eqref{eq:ADHMflavorsymmetry}, we have another global symmetry preserving the ADHM equations \eqref{eq:ADHMeq} corresponding to the spacetime rotation of the gauge theory on $\mathbb{C}^{2}$:
\bea\label{eq:ADHMomegasymmetry}
(q_{1},q_{2})\cdot (\mathsf{B}_{1},\mathsf{B}_{2},\mathsf{I},\mathsf{J})=(q_{1}\mathsf{B}_{1},q_{2}\mathsf{B}_{2},\mathsf{I},q_{12}\mathsf{J})
\eea
where $q_{1,2}=e^{\epsilon_{1,2}}\in\U(1)^{2}$, $q_{12}=q_{1}q_{2}=e^{\epsilon_{12}}$, and $\epsilon_{12}=\epsilon_{1}+\epsilon_{2}$. Under this symmetry, the ADHM equations transform as
\bea
\mu_{\mathbb{R}}\rightarrow \mu_{\mathbb{R}},\quad \mu_{\mathbb{C}}\rightarrow q_{12}\mu_{\mathbb{C}}.
\eea

The moduli space \eqref{eq:ADHMmodulispace} itself is actually non-compact and singular. Naively, the volume of it will diverge and to resolve this we can modify the ADHM equations as
\bea\label{eq:ADHMresolvedmodulispace}
\mathfrak{M}_{n,k}^{\zeta}=\mu^{-1}(\zeta)/\U(k)=s^{-1}(0)/\U(k)
\eea
where $\mu=(\mu_{\mathbb{R}},\Re \mu_{\mathbb{C}},\Im \mu_{\mathbb{C}})$, $\zeta=(\zeta 1_{k},0,0)$, and $\vec{s}=\vec{\mu}-\vec{\zeta}$ $(=s)$. 


\paragraph{LMNS formula}
The integration over $\mathfrak{M}_{n,k}^{\zeta}$ can be written in a path integral over the variables which eventually leads to a finite-dimensional integral. Let $(\psi_{\mathsf{B}_{1,2}},\psi_{\mathsf{I}},\psi_{\mathsf{J}})$ be fermionic variables and $(\vec{\chi},\vec{H}),(\lambda,\eta)$ be the anti-ghost multiplets. We define the BRST transformation as
\bea
\mathcal{Q}\mathsf{B}_{1,2}=\psi_{1,2},&\qquad \mathcal{Q}\psi_{1,2}=[\phi,\mathsf{B}_{1,2}]-\epsilon_{1,2}\mathsf{B}_{1,2},\\
\mathcal{Q}\mathsf{I}=\psi_{\mathsf{I}},&\qquad \mathcal{Q}\psi_{\mathsf{I}}=\phi\,\mathsf{I}-\mathsf{I}\mathfrak{a},\\
\mathcal{Q}_{\mathsf{J}}=\psi_{\mathsf{J}},&\qquad \mathcal{Q}\psi_{\mathsf{J}}=-\mathsf{J}\phi+\mathfrak{a}\mathsf{J}-\epsilon_{12}\mathsf{J},\\
\mathcal{Q}\chi_{\mathbb{R}}=H_{\mathbb{R}},&\qquad  \mathcal{Q}H_{\mathbb{R}}=[\phi,\chi_{\mathbb{R}}],\\
\mathcal{Q}\chi_{\mathbb{C}}=H_{\mathbb{C}},&\qquad \mathcal{Q}H_{\mathbb{C}}=[\phi,\chi_{\mathbb{C}}]-\epsilon_{12}\chi_{\mathbb{C}},\\
\mathcal{Q}\lambda=\eta,&\qquad \mathcal{Q}\eta=[\phi,\lambda] 
\eea
where $(\phi,\mathfrak{a},\epsilon_{1,2})\in \mathfrak{u}(k),\mathfrak{u}(n),\mathfrak{u}(1)^{2}$. The equivariant integral over the ADHM variables is given as
\bea
\mathcal{Z}_{k}(\mathfrak{a},\epsilon_{1,2})\coloneqq\int_{\mathfrak{M}_{n,k}}1=\int_{\mathfrak{g}^{\vee}}\frac{d\phi}{\operatorname{vol} G^{\vee}}\int_{X}e^{-S}
\eea
with the action defined as
\bea
S=i\mathcal{Q}\Tr_{K}\left[\vec{\chi}\cdot \vec{s}+t\vec{\chi}\cdot \vec{H}+\psi\cdot V(\lambda)+\frac{1}{g_{\eta}}\eta[\phi,\lambda]\right]
\eea
where
\bea
\vec{\chi}\cdot\vec{s}&=\chi_{\mathbb{R}}s_{\mathbb{R}}+\frac{1}{2}\left(\chi_{\mathbb{C}}^{\dagger}s_{\mathbb{C}}+\chi_{\mathbb{C}}s_{\mathbb{C}}^{\dagger}\right),\\
\vec{\chi}\cdot \vec{H}&=\chi_{\mathbb{R}}H_{\mathbb{R}}+\frac{1}{2}\left(\chi_{\mathbb{C}}^{\dagger}H_{\mathbb{C}}+\chi_{\mathbb{C}}H_{\mathbb{C}}^{\dagger}\right),\\
\psi\cdot V(\lambda)&=\psi_{1}[\lambda,\mathsf{B}_{1}]+\psi_{2}[\lambda,\mathsf{B}_{2}]+\psi_{\mathsf{I}}\lambda\mathsf{I}-\mathsf{J}\lambda\psi_{\mathsf{J}}\\
&-\bar{\psi}_{1}[\lambda,\mathsf{B}_{1}^{\dagger}]-\bar{\psi}_{2}[\lambda,\mathsf{B}_{2}^{\dagger}]-\mathsf{I}^{\dagger}\lambda\bar{\psi}_{\mathsf{I}}+\bar{\psi}_{\mathsf{J}}\lambda\mathsf{J}^{\dagger}.
\eea
Note that since this action is $\mathcal{Q}$-exact, the path integral is independent of the coupling constants. To cure zero-modes coming from $\chi_{\mathbb{R}}$, we may add another $\mathcal{Q}$-exact term $\mathcal{Q}\Tr_{K}\chi_{\mathbb{R}}\lambda$. We also can further introduce kinetic terms of $(\mathsf{B}_{1,2},\mathsf{I},\mathsf{J})$ as
\bea
\frac{g_{\text{kin}}}{2}\mathcal{Q}\Tr_{K}\left[\mathsf{B}_{1,2}^{\dagger}\psi_{1,2}-\bar{\psi}_{1,2}\mathsf{B}_{1,2}+\mathsf{I}^{\dagger}\psi_{\mathsf{I}}-\bar{\psi}_{\mathsf{I}}\mathsf{I}+\mathsf{J}^{\dagger}\psi_{\mathsf{J}}-\bar{\psi}_{\mathsf{J}}\mathsf{J}\right].
\eea

Using the property that the path integral does not depend on the coupling constants, at the end the resulting terms only come from $\mathsf{B}_{1,2},\mathsf{I},\mathsf{J},\chi_{\mathbb{C}}$ (see for example \cite{Shadchin:2005mx} for details of this computation). To see this, we first diagonalize $\phi$ as $\phi=\operatorname{diag}(\phi_{1},\ldots, \phi_{k})$ and then the measure is given as
\bea
\frac{d\phi}{\operatorname{vol}G^{\vee}}=\frac{1}{k!}\prod_{I=1}^{k}\frac{d\phi_{I}}{2\pi i } \prod_{I\neq J}^{k}(\phi_{I}-\phi_{J}).
\eea
Under this basis, the bilinear terms of $\chi_{\mathbb{C}},\mathsf{B}_{1,2},\mathsf{I},\mathsf{J}$ are given as
\bea
\sum_{I,J=1}^{k}(\phi_{IJ}-\epsilon_{12})|\chi_{\mathbb{C},IJ}|^{2},&\quad \sum_{I,J=1}^{k}(\phi_{IJ}-\epsilon_{a})|\mathsf{B}_{a,IJ}|^{2},\\
\sum_{\alpha=1}^{n}\sum_{I=1}^{k}(\phi_{I}-\mathfrak{a}_{\alpha})|\mathsf{I}_{I\alpha}|^{2},&\quad \sum_{I=1}^{k}\sum_{\alpha=1}^{n}(-\phi_{I}+\mathfrak{a}_{\alpha}-\epsilon_{12})|\mathsf{J}_{\alpha I}|^{2}.
\eea
Performing the Gaussian integral gives
\bea
\mathcal{Z}_{k}=\frac{1}{k!}\frac{(-\epsilon_{12})^{k}}{\epsilon_{1}^{k}\epsilon_{2}^{k}}\oint \prod_{I=1}^{k}\frac{d\phi_{I}}{2\pi i}\frac{1}{P_{\text{4d}}(\phi_{I})\widetilde{P}_{\text{4d}}(\phi_{I}+\epsilon_{12})}\prod_{I\neq J}^{k}\mathscr{S}(\phi_{IJ})^{-1}
\eea
where
\bea
P_{\text{4d}}(\phi)=\prod_{\alpha=1}^{n}(\phi-\mathfrak{a}_{\alpha}),\quad \widetilde{P}_{\text{4d}}(\phi)=\prod_{\alpha=1}^{n}(-\phi+\mathfrak{a}_{\alpha}),\quad \mathscr{S}(\phi)=\frac{(\phi-\epsilon)(\phi-\epsilon)}{\phi(\phi-\epsilon_{12})}.
\eea
This is the famous Losev--Moore--Nekrasov--Shatashvili (LMNS) formula \cite{Losev:1997tp,Moore:1997dj,Lossev:1997bz}. 


\paragraph{Five-dimensional lift up}
Although we considered the 4d pure SYM defined on $\mathbb{C}^{2}$, we can also study the 5d lift up version of it defined on $\mathbb{C}^{2}\times \mathbb{S}^{1}$. There is a simple prescription to do this and it is to replace variables such as $(\phi_{I}-\mathfrak{a})$ to $\sh(\phi_{I}-\mathfrak{a})$ (see section~\ref{sec:susyqm-index} for motivations), where
\bea
\sh(x)\coloneqq e^{x/2}-e^{-x/2}.
\eea
Physically, we are considering the supersymmetric quantum mechanics instead of the matrix model. The instanton partition function for the 5d $\mathcal{N}=1$ $\U(n)$ pure super Yang--Mills theory is then given by 
\bea
\mathcal{Z}&=\sum_{k=0}^{\infty}\mathfrak{q}^{k}\mathcal{Z}_{k},\quad \mathcal{Z}_{k}=\frac{1}{k!}\oint_{\text{JK}} \prod_{I=1}^{k}\frac{d\phi_{I}}{2\pi i }\,\,\mu_{k}(\mathfrak{a}_{\alpha},\phi_{I},\epsilon_{1,2})
\eea
where
\bea\label{eq:pureSYMcontour}
\mu_{k}&=\left(\frac{\sh(\epsilon_{12})}{\sh(-\epsilon_{1,2})}\right)^{k}\prod_{I=1}^{k}\prod_{\alpha=1}^{n}\frac{1}{\sh(\phi_{I}-\mathfrak{a}_{\alpha})\sh(\mathfrak{a}_{\alpha}-\phi_{I}-\epsilon_{12})}\prod_{I\neq J}^{k}\frac{\sh(\phi_{I}-\phi_{J})\sh(\phi_{I}-\phi_{J}+\epsilon_{12})}{\sh(\phi_{I}-\phi_{J}-\epsilon_{1})\sh(\phi_{I}-\phi_{J}-\epsilon_{2})}.
\eea
In this thesis, we will be interested in these type of partition functions where trigonometric functions appear.


\section{Jefferey--Kirwan residue and Young diagrams}\label{sec:pureSYM_JK}

The formula in \eqref{eq:pureSYMcontour} is a contour integral in multiple variables and we need to specify how to take the contour integral to evaluate it explicitly \cite{Nekrasov:2002qd,Nakajima:1999,Nakajima:2003uh}. In this thesis, we use the Jefferey--Kirwan (JK) residue prescription \cite{Jeffrey1993LocalizationFN} (see \cite{Benini:2013nda,Benini:2013xpa} for a physical derivation of it). Since it is out of the scope of this thesis, we will not give any reasoning why we should use this prescription, but we will simply use the formalism to evaluate all the contour integral formulas appearing in this thesis. 

Let $\omega$ be a meromorphic $(k,0)$-form in variables $\phi=(\phi_{1},\ldots ,\phi_{k})$
\bea
\omega=\frac{f(\phi)}{\prod_{i}(\bfQ_{i}\cdot \phi+b_{i})}d\phi_{1}\wedge d\phi_{2}\wedge\cdots \wedge d\phi_{k},
\eea
where $\bfQ_{i}$ is written in the standard basis $\{\bfe_{i}\}_{i=1,\ldots,k}$ as
\bea
\bfQ_{i}=\sum_{j=1}^{k}\bfQ_{i,j}\bfe_{j},\quad \bfe_{j}=(0,\ldots, 0,\overset{j}{1},0,\ldots,0).
\eea
For simplicity, we omit the $2\pi i$ factor in the denominator of $d\phi$. Denoting the singular hyperplane $H_{i}$ as
\bea
H_{i}=\{\phi\in\mathbb{C}^{k}\mid \bfQ_{i}\cdot\phi+b_{i}=0\},
\eea
then $\omega$ is singular at $\mathcal{M}_{\text{sing}}=\bigcup_{i} H_{i}$ and holomorphic at the complement of it. We define $\mathcal{M}^{\ast}_{\text{sing}}$ to be the set of isolated points where $n\geq k$ linearly independent singular hyperplanes meet. When $n=r$ hyperplanes meet the point $\phi_{\ast}\in\mathcal{M}^{\ast}_{\text{sing}}$, the intersection is called \textit{non-degenerate}, while when $n>r$ hyperplanes meet it, the intersection is called \textit{degenerate}. For $\phi_{\ast}\in\mathcal{M}^{\ast}_{\text{sing}}$, we denote $\bfQ(\phi_{\ast})$ as the set of charge vectors of singular hyperplanes meeting at $\phi_{\ast}$:
\bea
\bfQ(\phi_{\ast})=\{\bfQ_{i}\mid \phi_{\ast}\in H_{i},\,i=1,\ldots, n\}.
\eea
The hyperplane arrangements are called projective when the $n$ charge vectors $\bfQ(\phi_{\ast})$ are contained in the half space of $\mathbb{R}^{k}$. We assume that the hyperplane arrangements are projective, then the residue of $\omega$ at $\phi_{\ast}$ is given as follows. Near the pole $\phi_{\ast}$, we can expand $\omega$ in negative powers of $\bfQ_{i}\cdot (\phi-\phi_{\ast})$ giving linear combinations of 
\bea
\frac{1}{\bfQ_{1}\cdot (\phi-\phi_{\ast})\cdots \bfQ_{k}\cdot (\phi-\phi_{\ast})}
\eea
for some $k$ charge vectors $\bfQ_{1},\ldots,\bfQ_{k}$ in $\bfQ(\phi_{\ast})$. We denote the cone spanned by $\bfQ_{1},\ldots,\bfQ_{k}$ as
\bea\label{eq:JKconedef}
\operatorname{Cone}(\bfQ_{1},\ldots,\bfQ_{k})=\left\{\sum_{i=1}^{k}\lambda_{i}\bfQ_{i}=\bm{\eta}\mid \lambda_{i}>0\right\},
\eea
where $\bm{\eta}$ is some generic vector which we call the \textit{reference vector}. In this thesis, we always keep it to be $\bm{\eta}=(1,\ldots,1)=\sum_{i=1}^{k}\bfe_{i}$. The JK-residue formula claims that the contour integral is given as
\bea
\int \omega =\sum_{\phi_{\ast}\in\mathcal{M}_{\text{sing}}^{\ast}}\underset{\phi=\phi_{\ast}}{\operatorname{JK-Res}}(\bfQ(\phi_{\ast}),\bm{\eta})\omega
\eea
where the JK-residue operator is defined as
\bea
\underset{\phi=\phi_{\ast}}{\operatorname{JK-Res}}(\bfQ(\phi_{\ast}),\bm{\eta})\frac{d\phi_{1}\wedge \cdots\wedge d\phi_{k}}{\prod_{i=1}^{k}(\bfQ_{i}\cdot (\phi-\phi_{\ast}))}=\begin{dcases}
    |\det (\bfQ_{1},\ldots,\bfQ_{k})|^{-1},\quad \bm{\eta}\in \operatorname{Cone}(\bfQ_{1},\ldots,\bfQ_{k})\\
    0,\quad\quad \quad \quad  \text{otherwise}.
\end{dcases}
\eea

Although the above description gives an explicit way to evaluate the residues of the contour integral, finding a way to evaluate the integral recursively is useful. Actually, the JK residue can be equivalently given by a sum of iterated residues \cite{Szenes2003ToricRA}. For each $\phi_{\ast}$ with $(\bfQ_{1},\ldots,\bfQ_{n})$, we consider a flag
\bea
\mathcal{F}=\left[\mathcal{F}_{0}=\{0\}\subset \mathcal{F}_{1}\subset \mathcal{F}_{2}\cdots \subset \mathcal{F}_{k}=\mathbb{R}^{k}\right],\quad \operatorname{dim}\mathcal{F}_{j}=j
\eea
where $\mathcal{F}_{j}$ is a vector space spanned by $\{\bfQ_{1},\ldots,\bfQ_{j}\}$. Denote the set of flags as $\mathcal{FL}(\bfQ(\phi_{\ast}))$ and define the subset 
\bea
\mathcal{FL}^{+}(\bfQ(\phi_{\ast}))=\left\{\mathcal{F}\in \mathcal{FL}(\bfQ(\phi_{\ast}))\mid \eta\in \operatorname{Cone}(\kappa_{1}^{\mathcal{F}},\ldots,\kappa_{r}^{\mathcal{F}})\right\},
\eea
where 
\bea
\kappa_{j}^{\mathcal{F}}=\sum_{\bfQ_{i}\in\bfQ(\phi_{\ast})\cap \mathcal{F}_{j}}\bfQ_{i}.
\eea
We also define the sign of $\mathcal{F}$, denoted as $\operatorname{sgn}\mathcal{F}$, as the sign of the determinant of the matrix 
\bea
\kappa(\mathcal{F},\bfQ(\phi_{\ast}))=(\kappa_{1}^{\mathcal{F}},\ldots,\kappa_{r}^{\mathcal{F}}).
\eea
The JK residue is then obtain by an iterated residue :
\bea\label{eq:JK-iteratedmethod}
\underset{\phi=\phi_{\ast}}{\operatorname{JK-Res}}(\bfQ(\phi_{\ast}),\bm{\eta})=\sum_{\mathcal{F}}\operatorname{sgn}(\det \kappa(\mathcal{F},\bfQ(\phi_{\ast}))) \underset{\mathcal{F}}{\operatorname{Res}},
\eea
where the iterated residue is simply the residue computed in the basis given by $\{\bfQ_{i}\}_{i=1}^{k}$. Namely, given a $k$-form $\omega=\omega_{1,\ldots,k}d\phi_{1}\wedge d\phi_{2}\wedge \cdots \wedge d\phi_{k}$, we define new coordinates $\tilde{\phi}_{j}=\bfQ_{j}\cdot \phi$ such that $\omega=\tilde{\omega}_{1,\ldots,k}d\tilde{\phi}_{1}\wedge \cdots d\tilde{\phi}_{k}$, then the residue is given by 
\bea\label{eq:JK-iteratedmethod2}
\underset{\mathcal{F}}{\operatorname{Res}}=\underset{\tilde{\phi}_{r}=\tilde{\phi}_{r}^{\ast}}{\Res}\cdots \underset{\tilde{\phi}_{1}=\tilde{\phi}_{1}^{\ast}}{\Res}=J\left(\frac{\partial \widetilde{\phi}_{i}}{\partial \phi_{j}}\right)\underset{\phi_{r}=\phi_{r}^{\ast}}{\Res}\cdots \underset{\phi_{1}=\phi_{1}^{\ast}}{\Res}
\eea
where $J$ is the Jacobian. We note that when taking the iterative residue, at each step the other variables are kept constant and generic. Namely, when evaluating the pole at $\phi_{i}=\phi_{i}^{\ast}$, residues for $\phi_{j}=\phi_{j}^{\ast}$ ($j<i$) are already taken while variables $\phi_{j}\,(j>i)$ are kept generic.


The JK-residue formalism discussed above is quite abstract so for practical use let us study the application to the pure super Yang--Mills case \eqref{eq:pureSYMcontour}. We follow the discussion in \cite{Hwang:2014uwa}. Looking at the denominator of \eqref{eq:pureSYMcontour}, the poles come from the hyperplanes
\bea
H_{I,\alpha}=\{\phi_{I}-\mathfrak{a}_{\alpha}=0\},\quad \widetilde{H}_{I,\alpha}=\{\mathfrak{a}_{\alpha}-\phi_{I}-\epsilon_{12}=0\},\quad H_{IJ,a=1,2}=\{\phi_{I}-\phi_{J}-\epsilon_{a}=0\}.
\eea
The corresponding charge vectors are $\{\pm\mathbf{e}_{I}\}$ and $\{\mathbf{e}_{I}-\mathbf{e}_{J}\}$, respectively and thus $\{\bfQ_{I}\}$ are chosen from them. Not all of the poles are chosen but the JK residue procedure restricts which poles to evaluate. Again, we first choose $\bm{\eta}=(1,1,\ldots,1)=\sum_{i=1}^{k}\mathbf{e}_{i}$. The poles picked up are determined by the possible $k$ charges $\{\mathbf{Q}_{1},\ldots,\mathbf{Q}_{k}\}$ such that the reference vector obeys the condition \eqref{eq:JKconedef}, which is 
\bea\label{eq:JKconecond}
\bm{\eta}=\sum_{I=1}^{k}\bfe_{I}=\sum_{I=1}^{k}\lambda_{I}\bfQ_{I},\quad \lambda_{I}>0.
\eea

\paragraph{Two-instanton case}
Before moving on to higher instantons cases, let us see what will happen for the $k=2$ case. The poles come from $\{\pm \bfe_{1,2}\}$ and $\{\bfe_{1}-\bfe_{2},\bfe_{2}-\bfe_{1}\}$. The possible choices obeying $\eta\in\operatorname{Cone}(\bfQ_{1},\bfQ_{2})$ are illustrated as
\bea
\begin{tikzpicture}[scale=1.5]
\fill[lightgray!30!white] (0,0)--(1,0)--(1,1)--(0,1)--(0,0);
\draw[gray!80!black,thick] (-1,0)--(1,0);
\draw[gray!80!black,thick] (0,1)--(0,-1);
\draw[red,thick,-Triangle] (0,0)--(0.7,0.7);
\node[above right,red] at (0.6,0.6){$\eta$};
\draw[thick,-Triangle,blue](0,0)--(0.7,0);
\draw[thick,-Triangle](0,0)--(-0.7,0);
\draw[thick,-Triangle,blue](0,0)--(0,0.7);
\draw[thick,-Triangle](0,0)--(0,-0.7);
\draw[thick, -Triangle] (0,0)--(0.65,-0.65);
\draw[thick, -Triangle] (0,0)--(-0.65,0.65);
\node[below right,blue] at (0.6,0){$\mathbf{e}_{1}$};
\node[above left,blue] at (0,0.6){$\mathbf{e}_{2}$};
\end{tikzpicture}\qquad
\begin{tikzpicture}[scale=1.5]
\fill[lightgray!30!white] (0,0)--(1,0)--(1,1)--(-1,1)--(0,0);
\draw[gray!80!black,thick] (-1,0)--(1,0);
\draw[gray!80!black,thick] (0,1)--(0,-1);
\draw[red,thick,-Triangle] (0,0)--(0.7,0.7);
\node[above right,red] at (0.6,0.6){$\eta$};
\draw[thick,-Triangle,blue](0,0)--(0.7,0);
\draw[thick,-Triangle](0,0)--(-0.7,0);
\draw[thick,-Triangle](0,0)--(0,0.7);
\draw[thick,-Triangle](0,0)--(0,-0.7);
\draw[thick, -Triangle] (0,0)--(0.65,-0.65);
\draw[thick, -Triangle,blue] (0,0)--(-0.65,0.65);
\node[below right,blue] at (0.6,0){$\mathbf{e}_{1}$};
\node[above,blue] at (-0.65,0.65){$-\mathbf{e}_{1}+\mathbf{e}_{2}$};
\end{tikzpicture}
\qquad 
\begin{tikzpicture}[scale=1.5]
\fill[lightgray!30!white] (0,0)--(1,-1)--(1,1)--(0,1)--(0,0);
\draw[gray!80!black,thick] (-1,0)--(1,0);
\draw[gray!80!black,thick] (0,1)--(0,-1);
\draw[red,thick,-Triangle] (0,0)--(0.7,0.7);
\node[above right,red] at (0.6,0.6){$\eta$};
\draw[thick,-Triangle](0,0)--(0.7,0);
\draw[thick,-Triangle](0,0)--(-0.7,0);
\draw[thick,-Triangle,blue](0,0)--(0,0.7);
\draw[thick,-Triangle](0,0)--(0,-0.7);
\draw[thick, -Triangle,blue] (0,0)--(0.65,-0.65);
\draw[thick, -Triangle] (0,0)--(-0.65,0.65);
\node[below,blue] at (0.65,-0.65){$\mathbf{e}_{1}-\mathbf{e}_{2}$};
\node[above left,blue] at (0,0.6){$\mathbf{e}_{2}$};
\end{tikzpicture}
\eea
which give
\bea
\{\bfe_{1},\bfe_{2}\},\quad \{\bfe_{1},-\bfe_{1}+\bfe_{2}\},\quad \{\bfe_{2},\bfe_{1}-\bfe_{2}\}.
\eea

\paragraph{General case} Using the condition $\bm{\eta}\in\operatorname{Cone}(\bfQ_{1},\ldots,\bfQ_{k})$, we can further restrict the choice of the poles.
\begin{lemma}\label{lem:JKantifund}
    The charge vectors $\{\bfQ_{I}\}$ do not contain elements of $\{-\mathbf{e}_{I}\}$.
\end{lemma}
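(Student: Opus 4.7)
The plan is to argue by contradiction. Assume some charge $-\bfe_{I_0}$ belongs to the chosen tuple $\{\bfQ_1,\ldots,\bfQ_k\}$ (which is linearly independent by the non-degeneracy of the intersection at $\phi_\ast$), and extract from the cone condition $\bm{\eta}=\sum_I \lambda_I \bfQ_I$ with $\lambda_I>0$ an unbounded chain of pairwise distinct indices in $\{1,\ldots,k\}$. The basic observation is that every allowed charge $\bfQ\in\{\pm\bfe_J,\,\bfe_I-\bfe_J\}$ has coordinates in $\{-1,0,+1\}$, and the only charges contributing $+1$ to the $j$-th coordinate are $\bfe_j$ or $\bfe_j-\bfe_l$ for some $l\neq j$.

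First I would initialize the chain. Looking at the $I_0$-th coordinate $\bm{\eta}_{I_0}=1$: since $-\bfe_{I_0}$ contributes negatively, some other charge must contribute positively. The option $\bfe_{I_0}$ is excluded because $\bfe_{I_0}$ and $-\bfe_{I_0}$ are linearly dependent, contradicting the linear independence of the chosen tuple. Hence some $\bfe_{I_0}-\bfe_{j_1}$ lies in the set, giving the first link of the chain. Inductively, assume that $-\bfe_{I_0},\,\bfe_{I_0}-\bfe_{j_1},\,\bfe_{j_1}-\bfe_{j_2},\,\ldots,\,\bfe_{j_{n-1}}-\bfe_{j_n}$ all lie in the set with $I_0,j_1,\ldots,j_n$ pairwise distinct. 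These $n+1$ vectors telescope to $-\bfe_{j_n}$. Examining $\bm{\eta}_{j_n}=1$, I again need a positive contributor, so either $\bfe_{j_n}$ or $\bfe_{j_n}-\bfe_{j_{n+1}}$ must be in the set. The first is ruled out because adding $\bfe_{j_n}$ to the telescoped chain produces a vanishing combination, violating linear independence. So $\bfe_{j_n}-\bfe_{j_{n+1}}$ is in the set.

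Next I would check that $j_{n+1}$ is genuinely new: if $j_{n+1}=I_0$, the cycle $(\bfe_{I_0}-\bfe_{j_1})+\cdots+(\bfe_{j_n}-\bfe_{I_0})=0$ violates linear independence, and if $j_{n+1}=j_p$ for some $1\le p\le n$, then $(\bfe_{j_p}-\bfe_{j_{p+1}})+\cdots+(\bfe_{j_n}-\bfe_{j_p})=0$ does the same. Thus the chain extends indefinitely by pairwise distinct indices in $\{1,\ldots,k\}$, which is impossible, and the assumption $-\bfe_{I_0}\in\{\bfQ_I\}$ fails.

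The main subtlety will be making the positive-contribution step airtight. At the node $j_n$ there may be several other negative contributors, such as $-\bfe_{j_n}$ itself or additional $\bfe_l-\bfe_{j_n}$; however, these only increase the required amount of positive contribution rather than remove the need for it, so they do not affect the argument. The only real work is the linear-independence bookkeeping that simultaneously rules out both escape routes ($\bfe_{j_n}$ in the set, and $j_{n+1}$ repeating), and in each case the obstruction is the same telescoping-to-zero identity.
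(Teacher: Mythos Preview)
Your proof is correct and follows essentially the same chain-building approach as the paper: both assume some $-\bfe_{I_0}$ is present and force a telescoping sequence $\bfe_{I_0}-\bfe_{j_1},\,\bfe_{j_1}-\bfe_{j_2},\ldots$ of pairwise distinct charges by repeatedly examining the coordinate where the partial chain contributes negatively. Your version is slightly more explicit about the linear-independence bookkeeping (ruling out $\bfe_{j_n}$ and excluding loops), while the paper uses Weyl relabeling to set $j_m=m+1$ and reads off the contradiction from the final coordinate $-\lambda_k<0$ once the chain has length $k$; these are the same argument with a different presentation of the terminal step.
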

\begin{proof}
    Due to the Weyl invariance of the $\{\mathbf{e}_{I}\}$, it is enough to show that we can not have $\bfQ_{1}=-\mathbf{e}_{1}$. Assume that we have this element and we have
    \bea\label{eq:JKantifundprf1}
    \bm{\eta}=\sum_{I=1}^{k}\mathbf{e}_{I}=\sum_{I=1}^{k}\lambda_{k}\bfQ_{I}=-\lambda_{1}\mathbf{e}_{1}+\sum_{I=2}^{k}\lambda_{I}\bfQ_{I},\quad \lambda_{i}>0.
    \eea
    For this condition to be obeyed, $\{\bfQ_{I}\}_{I=2}^{k}$ needs to contain a vector taking the form as $+\mathbf{e}_{1}$. From the linear independence, and the Weyl invariance, we can set $\bfQ_{2}=\mathbf{e}_{1}-\mathbf{e}_{2}$, which gives
    \bea
    \sum_{I=1}^{k}\bfe_{I}=(\lambda_{2}-\lambda_{1})\bfe_{1}-\lambda_{2}\bfe_{2}+\sum_{I=3}^{k}\lambda_{I}\bfQ_{I}.
    \eea
    Similarly $\{\bfQ_{I}\}_{I=3}^{k}$ needs to contain a vector taking the form as $+\bfe_{2}$. Doing this recursively, we can set
    \bea
(\bfQ_{1},\bfQ_{2},\ldots ,\bfQ_{k-1},\bfQ_{k})=(-\bfe_{1},\bfe_{1}-\bfe_{2},\ldots \bfe_{k-2}-\bfe_{k-1},\bfe_{k-1}-\bfe_{k})
    \eea
    which gives
    \bea
    \sum_{I=1}^{k}\bfe_{I}=\sum_{I=1}^{k-1}(\lambda_{I+1}-\lambda_{I})\bfe_{I}-\lambda_{k}\bfe_{k}.
    \eea
    The first $k-1$ terms can be tuned properly but the last term contradicts due to $\lambda_{k}>0$. Therefore, we can not have $\{-\bfe_{I}\}$ as the charge vectors.
\end{proof}

Since $\{\bfe_{I}-\bfe_{J}\}$ generates only the $\mathbb{R}^{k-1}$ subspace, there should be one or more charges coming from $\{\bfe_{I}\}$. Using the Weyl symmetry, let us choose $\bfe_{1},\ldots,\bfe_{p}$ to be the charges chose from $\{\bfe_{I}\}$. The other $k-p$ elements will be divided into $p$ groups such that only one of $\bfe_{I}\,\,(1\leq I\leq p)$ is included. For example, assume that we first choose $\bfe_{1}$. Elements such as $\bfe_{1}-\bfe_{J}$ are not included because if it is included, the condition~\eqref{eq:JKconecond} is translated to
\bea
\sum_{I=1}^{k}\bfe_{I}=(\lambda_{1}+\lambda_{2})\bfe_{1}-\lambda_{2}\bfe_{2}+\sum_{I=3}^{k}\lambda_{I}\bfQ_{I}
\eea
where we set $J=2$ for simplicity. From the linear independence, $\bfQ_{3\leq I}$ can not include $\{\bfe_{1},\bfe_{2}\}$ and after setting $\lambda_{1}+\lambda_{2}=1$, the above equation reduces to the computation in \eqref{eq:JKantifundprf1} and thus $\bfe_{1}-\bfe_{J}$ is forbidden. Charges such as $\bfe_{J}-\bfe_{1}$ will be only allowed and due to the linear independence, the charge $\bfe_{J}-\bfe_{1}$ needs to obey $J\notin\{ 2,\ldots, p\}$. Starting from the two elements $\bfe_{1},\bfe_{J}-\bfe_{1}$, if the remaining elements contain $\bfe_{J}$, then such an element must be $\bfe_{K}-\bfe_{J}$ for $K\notin\{1,\ldots,p\}\cup \{J\}$. This is because $\bfe_{J}$ breaks the linear independence and $\bfe_{J}-\bfe_{K}$ will not satisfy the condition \eqref{eq:JKconecond} as explained before. Doing this procedure recursively, the charge vectors obey a tree structure.
\begin{lemma}\label{lem:JKtree-structure}
    The charge vectors $\{\bfQ_{I}\}$ form an oriented tree structure. 
    \begin{itemize}[topsep=1ex,itemsep=-0.5ex,partopsep=1ex,parsep=1ex]
    \item The vertices are labeled with $\{\bfe_{I}\}_{I=1}^{k}$ and there are arrows connecting the vertices.
    \item The root vertex labeled with $\bfe_{J}$ correspond with the charge vector $+\bfe_{J}$. 
    \item The tree grows by adding vertices $\bfe_{K}$ to $\bfe_{J}$ with an arrow $J\rightarrow K$ and this correspond with the charge vector $\bfe_{K}-\bfe_{J}$. We call these arrows branches.
    \item There are only one vertex for each $\{\bfe_{I}\}_{I=1}^{k}$.
    \end{itemize}
\end{lemma}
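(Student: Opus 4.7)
The plan is to upgrade the recursive sketch that precedes the statement into a clean inductive proof. First, by Lemma~\ref{lem:JKantifund} every charge in $\{\bfQ_I\}$ must lie in $\{+\bfe_J\}\cup\{\bfe_K-\bfe_J\mid J\neq K\}$. Since the vectors $\bfe_K-\bfe_J$ all sit inside the hyperplane $\sum_i x_i=0$, and the $k$ charges $\bfQ_I$ must be linearly independent, at least one must be of the pure form $+\bfe_J$. Using the Weyl symmetry on indices, I would relabel so that the pure charges are $\bfe_1,\ldots,\bfe_p$ for some $1\le p\le k$ and designate $1,\ldots,p$ as the roots of the graph under construction.

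Next I would construct the directed graph $\mathcal{T}$ on the vertex set $\{1,\ldots,k\}$, declaring $1,\ldots,p$ to be roots (equivalently, attaching them to a virtual super-root) and, for every charge $\bfe_K-\bfe_J$, drawing the arrow $J\to K$. By construction, $\mathcal{T}$ has $k$ vertices and $k$ arcs (counting the roots). The two structural claims that remain are: (i) each non-root vertex has exactly one incoming arrow, and (ii) $\mathcal{T}$ is acyclic; together with the vertex/arc count this gives exactly the oriented tree in the statement.

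To establish (i) and (ii) I would argue by contradiction, recycling the mechanism of Lemma~\ref{lem:JKantifund}. For (i), suppose $K$ receives two arrows, so both $\bfe_K-\bfe_{J_1}$ and $\bfe_K-\bfe_{J_2}$ appear among the $\bfQ_I$. Writing out the cone condition $\bm{\eta}=\sum_I\lambda_I\bfQ_I$ with $\lambda_I>0$ and eliminating the coefficient of $\bfe_K$ shows that the remaining sum must reproduce $\bm{\eta}-\bfe_K$ while still using the tree-like charges available; chasing the dependencies exactly as in the proof of Lemma~\ref{lem:JKantifund} forces some $\lambda_L<0$, contradicting the cone condition. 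For (ii), a directed cycle $J_1\to J_2\to\cdots\to J_m\to J_1$ would give $\sum_{i}(\bfe_{J_{i+1}}-\bfe_{J_i})=0$, contradicting the linear independence of $\{\bfQ_I\}$. Finally, connectedness of each component to a root follows because any non-root vertex has an in-neighbor by (i), and iterating this procedure must terminate at a root by (ii).

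The main obstacle I anticipate is the careful bookkeeping in the contradiction for (i): one has to isolate the coefficient of a specific basis vector $\bfe_L$ appearing downstream of $K$ and show that the positivity constraints $\lambda_I>0$ together with the reference vector $\bm{\eta}=\sum_I\bfe_I$ cannot be simultaneously satisfied, regardless of which two parent indices $J_1,J_2$ are chosen. Once that reduction is cast in the same normal form used in Lemma~\ref{lem:JKantifund}, the rest of the argument is essentially combinatorial and the oriented-tree structure follows.
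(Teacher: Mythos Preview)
Your graph-theoretic framing is a legitimate alternative to the paper's approach. The paper argues recursively: starting from each root $\bfe_J$, it shows at every step (by reducing to the mechanism of Lemma~\ref{lem:JKantifund}) that the next charge attached must be an arrow $J\to K$ to a fresh vertex $K$, and then iterates from $K$. Your approach instead verifies global properties of the directed graph, which is more economical once (i) and (ii) are established. Your argument for (ii) is correct---a directed cycle among the $\bfe_K-\bfe_J$'s telescopes to zero, violating linear independence---and together with (i) this gives the forest.

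The gap is in your argument for (i). Chasing dependencies from the doubly-targeted vertex $K$ in the style of Lemma~\ref{lem:JKantifund} does not work: the $\bfe_K$-component of the cone condition can be satisfied without difficulty, so no chain is initiated at $K$, and the vertex where the contradiction lives is not ``downstream of $K$'' in any useful sense. The correct mechanism is a direct counting argument. Each of the $k$ charges has exactly one positive basis vector (its ``head''), so the $k$ heads are distributed over the $k$ vertices; if $K$ receives two, by pigeonhole some vertex $L$ receives none. Then $\bfe_L$ occurs in the charge list only with negative sign (or not at all), so the $\bfe_L$-component of $\sum_I\lambda_I\bfQ_I$ is $\le 0$, contradicting the fact that the $\bfe_L$-component of $\bm{\eta}$ equals $1$. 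No recursion is required, and this argument also handles the case your statement of (i) omits: a root vertex cannot receive an additional tree-arrow, since that too would force some other vertex to be headless.
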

For example, for $k=3$, we have
\bea
\adjustbox{valign=c}{\includegraphics[width=3cm]{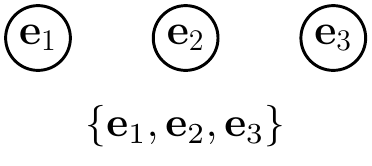}}\quad 
\adjustbox{valign=c}{\includegraphics[width=3cm]{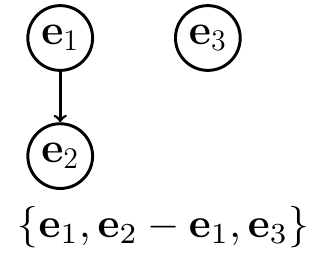}}\quad
\adjustbox{valign=c}{\includegraphics[width=4cm]{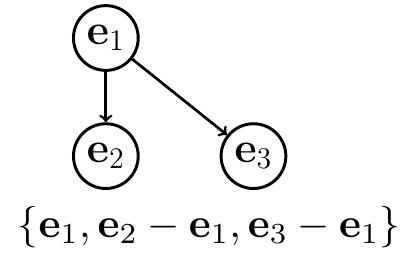}}\quad 
\adjustbox{valign=c}{\includegraphics[width=4cm]{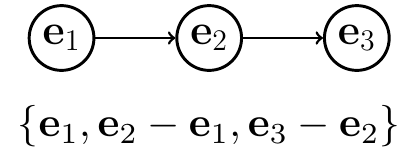}}
\eea
and other possible cases are obtained by the Weyl reflection.

\paragraph{Classification of poles and Young diagrams}
The above tree structure determines the poles of the contour integral \eqref{eq:pureSYMcontour}. However, not all of the poles are non-zero and some of them still give zero residues. The poles that will give non-zero residues have a nice combinatorial expression and they are actually labeled by colored Young diagrams \cite{Nakajima:1999,Nekrasov:2002qd,Nekrasov:2003rj}. For the $\U(n)$ pure super Yang--Mills case, they are $n$-tuples of Young diagrams.

Let us briefly summarize the notations we use in this thesis. A Young diagram (two-dimensional partition) is a sequence of non-negative integers obeying
\bea
\lambda=(\lambda_{1},\lambda_{2},\ldots,\lambda_{i},\ldots),\quad \lambda_{i}\geq \lambda_{i+1}\geq 0,\quad
    |\lambda|=\lambda_{1}+\cdots+\lambda_{\ell(\lambda)},\label{eq:Youngcond}
\eea
where $|\lambda|$ is the size, and $\ell(\lambda)$ is the length of the partition. The set of all possible 2d partitions/Young diagrams is denoted as $\mathcal{P}$ and the transpose is denoted as $\lambda^{\rmT}$ in the usual sense. A box $\Bbox$ positioned in the $i'$th row counted from the bottom 
and $j'$th column counted from the left in the Young diagram is assigned a coordinate $(i,j)$, $1\leq i\leq \ell(\lambda),\,1\leq j\leq \lambda_{i}$. For a box $\Bbox=(i,j)\in\lambda$, we assign the coordinate as\footnote{We call both $(i,j)$ and $c_{12}(\Bbox)$ coordinates.}
    \begin{equation}\label{eq:2dpartition-coordinate}
        \adjustbox{valign=c}{\begin{tikzpicture}
         \fill[red!20!white] (0.9,1.4)--(1.6,1.4)--(1.6,2.1)--(0.9,2.1)--(0.9,1.4);
        \draw[->] (-1,0)--(4,0);
        \node[above] at (-0.5,4){$\epsilon_{2}$};
        \node [right] at (4,0){$\epsilon_{1}$};
        \node[below] at (-0.15,0) {$1$};
        \node [below] at (0.55,0){$\cdots$};
        \node [below] at (1.25,0){$i$};
         \draw[->]   (-0.5,-0.5)--(-0.5,4);
         \draw (0.2,3.5)--(0.2,0.7);
         \draw (0.9,2.8)--(0.9,0.7);
         \draw (1.6,2.1)--(1.6,0.7);
         \draw (2.3,1.4)--(2.3,0.7);
         \draw (2.3,0.7)--(-0.5,0.7);
         \draw (2.3,1.4)--(-0.5,1.4);
         \draw (1.6,2.1)--(-0.5,2.1);
         \draw (0.9,2.8)--(-0.5,2.8);
         \draw (0.2,3.5)--(-0.5,3.5);
        \draw (-0.5,0.7)--(3,0.7);
        \draw (3,0)--(3,0.7);
        \draw (0.2,0)--(0.2,0.7);
        \draw (0.9,0)--(0.9,0.7);
         \draw (1.6,0)--(1.6,0.7);
          \draw (2.3,0)--(2.3,0.7);
          \draw (-0.15,0.35)--++(-0.7,-1);
          \node[left] at (-0.85,-0.65){$\mathfrak{a}$};
          \node [left] at (-0.5,0.35) {$1$};
          \node [left] at (-0.5,1.05){$\vdots$};
          \node [left] at (-0.5,1.75){$j$};
          \draw  (1.25,1.75)--++(0.9,0.9);
          \node[right] at (2.2,2.65) {$c_{12}(\Bbox)=c_{12}(\mathfrak{a},(i,j))=\mathfrak{a}+(i-1)\epsilon_{1}+(j-1)\epsilon_{2}$};
        \end{tikzpicture}
        }
    \end{equation}
The arm and leg length are defined as
\bea\label{eq:armleglength}
a_{\lambda}(\Bbox)=\lambda_{i}-j,\quad \ell_{\lambda}(\Bbox)=\lambda_{j}^{\rmT}-i,\quad \Bbox=(i,j).
\eea
For later use, we also define the multiplicative coordinates ($q$-coordinates) as
\bea
\chi_{12,v}(\Bbox)=e^{c_{12}(\Bbox)}=vq_{1}^{i-1}q_{2}^{j-1},\quad v=e^{\mathfrak{a}},\quad q_{1,2}=e^{\epsilon_{1,2}}.
\eea

\begin{theorem}\label{thm:pureSYMJKpoles}
    The poles of \eqref{eq:pureSYMcontour} are classified by $n$-tuples of Young diagrams $\vec{\lambda}=\{\lambda^{(\alpha)}\}_{\alpha=1}^{n}$:
    \bea
    \{\phi_{I}\}_{I=1}^{k}\rightarrow  \{\mathfrak{a}_{\alpha}+(i-1)\epsilon_{1}+(j-1)\epsilon_{2}\mid (i,j)\in\lambda^{(\alpha)}\}_{\alpha=1}^{n}
    \eea
    where $k$ corresponds to the total number of boxes in the Young diagrams $k=\sum_{\alpha=1}^{n}|\lambda^{(\alpha)}|$. Namely, each pole corresponds to the coordinate of the box in the Young diagram.
\end{theorem}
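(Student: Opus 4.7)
The plan is to combine Lemmas~\ref{lem:JKantifund} and~\ref{lem:JKtree-structure} with an explicit pole-zero count in the integrand $\mu_k$. First, Lemma~\ref{lem:JKantifund} excludes the antifundamental-type charges $-\bfe_I$ coming from $\sh(\mathfrak{a}_\alpha - \phi_I - \epsilon_{12})^{-1}$, so every root charge $+\bfe_I$ in the JK tree of Lemma~\ref{lem:JKtree-structure} must originate from $\sh(\phi_I - \mathfrak{a}_\alpha)^{-1}$, pinning $\phi_I = \mathfrak{a}_\alpha$ for some color $\alpha$. Each branch charge $\bfe_J - \bfe_I$ is likewise forced to come from a denominator factor $\sh(\phi_J - \phi_I - \epsilon_a)^{-1}$ with $a \in \{1,2\}$, so that at the JK pole $\phi_J = \phi_I + \epsilon_a$. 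Translating into box language, every admissible JK configuration assigns to the $k$ integration variables a rooted forest of lattice points of the form $\mathfrak{a}_\alpha + (i-1)\epsilon_1 + (j-1)\epsilon_2$ with edges in the $\epsilon_1$- or $\epsilon_2$-direction.

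Next I would rule out collisions. Whenever two variables $\phi_I, \phi_J$ are pinned to the same lattice point, the numerator factor $\sh(\phi_I - \phi_J)$ in \eqref{eq:pureSYMcontour} vanishes and the residue is killed. This forbids two trees with the same root color and also forbids intra-tree overlaps. The data therefore reduces to an $n$-tuple of \emph{disjoint} subsets $S^{(\alpha)} \subset \mathbb{Z}_{\geq 1}^2$, one for each color, whose box coordinates supply the $\phi$-values and whose total size is $k$.

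The crux is then to show that each $S^{(\alpha)}$ is a Young diagram. At a configuration corresponding to $S^{(\alpha)}$, the $I \neq J$ product in \eqref{eq:pureSYMcontour} contributes one simple denominator pole for every ordered pair of boxes adjacent in the $\epsilon_1$- or $\epsilon_2$-direction, and one simple numerator zero for every ordered pair whose relative position equals $-\epsilon_{12}$. For the JK residue to be non-zero, the effective pole order at $\phi = \phi_\ast$ (after these cancellations) must equal $k$. Using the iterated-residue form~\eqref{eq:JK-iteratedmethod2}, I would argue that if $S^{(\alpha)}$ contains a box $(i,j)$ with either $(i-1,j)\notin S^{(\alpha)}$ or $(i,j-1)\notin S^{(\alpha)}$ (violating~\eqref{eq:Youngcond}), then the hook-type zero $\sh(\phi_I - \phi_J + \epsilon_{12})$ coming from a diagonal pair across the missing inner corner is not compensated by a matching denominator pole from the absent neighbor, and the iterated residue drops out at an intermediate step. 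Conversely, for a partition shape the Young condition guarantees that each diagonal numerator zero is accompanied by both hook denominator poles, and a direct count shows (number of $\epsilon_1$-adjacencies) $+$ (number of $\epsilon_2$-adjacencies) $+\,1\,-$ (number of diagonal pairs) $= |\lambda^{(\alpha)}|$, giving exactly $k$ effective poles across all colors.

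The main obstacle is the combinatorial bookkeeping of the last step: one must verify that every non-partition lattice-connected shape obtainable from the tree growth necessarily contains a diagonal pair whose numerator zero is uncompensated, and that every partition shape balances perfectly. Once this is established, the bijection between non-vanishing JK residues and $n$-tuples $\vec{\lambda} = \{\lambda^{(\alpha)}\}_{\alpha = 1}^n$ with $\sum_\alpha |\lambda^{(\alpha)}| = k$ follows, and the pole coordinates are $\mathfrak{a}_\alpha + (i-1)\epsilon_1 + (j-1)\epsilon_2$ with $(i,j)\in\lambda^{(\alpha)}$ as claimed; the explicit non-vanishing value of each residue is then computed in section~\ref{sec:character-index}.
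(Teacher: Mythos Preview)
Your overall setup is correct and your route is genuinely different from the paper's. The paper proves the theorem by \emph{induction on $k$}: assuming the poles at level $k-1$ are Young diagrams, it takes the last branch $\phi_k=\phi_J+\epsilon_a$ of the tree and checks, case by case in the position of the new box, that the iterated residue survives precisely when $\lambda+\Bbox$ is again a Young diagram. Your proposal instead attempts a \emph{global} pole--zero count on the full configuration $S^{(\alpha)}$, balancing $\epsilon_1$-adjacencies, $\epsilon_2$-adjacencies, and diagonal numerator zeros against the number of variables. The counting identity you quote for Young shapes, $\#(\epsilon_1\text{-adj})+\#(\epsilon_2\text{-adj})+1-\#(\text{diag pairs})=|\lambda^{(\alpha)}|$, is correct and easy to prove box by box. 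What your approach buys is a clean combinatorial statement; what the paper's induction buys is that the verification reduces to a single new box at each step, so the ``main obstacle'' you flag never has to be confronted globally.

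There is, however, a genuine gap in your sketch for the non-Young direction. You locate the obstruction at a box $(i,j)$ with, say, $(i,j-1)\notin S^{(\alpha)}$ and point to the diagonal zero $\sh(\phi_{(i,j)}-\phi_{(i-1,j-1)}+\epsilon_{12})$. But this zero exists only when $(i-1,j-1)\in S^{(\alpha)}$, which need not hold: for instance in $S=\{(1,1),(2,1),(3,1),(4,1),(4,2),(4,3)\}$ the box $(4,3)$ has $(3,3),(3,2)\notin S$, so no diagonal zero sits there; the deficit actually appears at $(4,2)$, where $(3,1)\in S$ gives a diagonal zero while $(3,2)\notin S$ withholds the compensating $\epsilon_1$-pole. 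To make your argument rigorous you must show that \emph{some} box along the tree path to the first Young-violating box always carries such an uncompensated zero. The paper's inductive step is exactly this localization done one box at a time (see the case analysis after equation~\eqref{eq:pureSYMcontour} in its proof), and that is the missing ingredient in your global count.
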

\begin{proof}
We show this by induction. For $k=1$, the pole come from $\phi_{1}-\mathfrak{a}_{\alpha}=0$ for some $\alpha$ which corresponds to a configuration where there is only one nonempty Young diagram with one box and others are empty. 

Assume that for rank $k-1$, the Young diagram rule is true and the $k$ independent hyperplane equations are $\{\bfQ_{1},\ldots,\bfQ_{k}\}$. When the trees have no branches, the poles will be $\{\phi_{I}-\mathfrak{a}_{\alpha}=0\}$, which is the situation when the nonempty Young diagrams contain only one box. Let us consider the case when $\bfQ_{k}=\bfe_{I}-\bfe_{J}$ for some $I,J$ is a branch at the end of a tree. This condition determines the pole for $\phi_{I}$. Using the Weyl invariance, we can assume that $\{\phi_{1},\ldots,\phi_{k-1}\}$ are determined by $\{\bfQ_{1},\ldots,\bfQ_{k-1}\}$ and their corresponding hyperplanes, and $\phi_{k}$ is the last pole determined by $\bfQ_{k}=\bfe_{k}-\bfe_{J}$ for some $J$ and its corresponding hyperplane $H_{k}$. By the induction hypothesis, the poles $\{\phi_{1},\ldots,\phi_{k-1}\}$ are classified by Young diagrams with $k-1$ boxes denoted as $\vec{\lambda}$. We thus need to show that the possible choices of $\bfQ_{k}$ have a one to one correspondence with the possible way to add boxes to $\vec{\lambda}$ giving $\vec{\lambda}'=\vec{\lambda}+\Bbox$. 

We can assume that $\bfe_{k}$ belongs to a tree whose root vertex corresponds to the pole at $\mathfrak{a}_{\ast}$. We denote the corresponding Young diagrams as $\lambda_{\ast},\lambda'_{\ast}=\lambda_{\ast}+\Bbox$. The factors related when evaluating $\phi_{k}$ are
\bea
\,&\prod_{\alpha=1}^{n}\frac{1}{\sh(\phi_{k}-\mathfrak{a}_{\alpha})\sh(\mathfrak{a}_{\alpha}-\phi_{k}-\epsilon_{12})}\prod_{\Abox\in\lambda_{\ast}}\frac{\sh(\phi_{k}-c_{12}(\Bbox))\sh(\phi_{k}-c_{12}(\Bbox)+\epsilon_{12})}{\sh(\phi_{k}-c_{12}(\Bbox)-\epsilon_{1})\sh(\phi_{k}-c_{12}(\Bbox)-\epsilon_{2})}\\
& \qquad \times\prod_{\Abox\in\lambda_{\ast}}\frac{\sh(c_{12}(\Bbox)-\phi_{k})\sh(c_{12}(\Bbox)-\phi_{k}+\epsilon_{12})}{\sh(c_{12}(\Bbox)-\phi_{k}-\epsilon_{1})\sh(c_{12}(\Bbox)-\phi_{k}-\epsilon_{2})}.
\eea
Other factors in \eqref{eq:pureSYMcontour} will be regular in $\phi_{k}$. The pole $\phi_{k}$ is then determined by $\phi_{k}-\phi_{J}=\epsilon_{1,2}$. Due to the symmetry between $\epsilon_{1}\leftrightarrow \epsilon_{2}$ we can focus on the pole coming from $\phi_{k}-\phi_{J}=\epsilon_{1}$. We denote $\phi_{J}=\mathfrak{a}_{\ast}+(x-1)\epsilon_{1}+(y-1)\epsilon_{2}$ and $\phi_{k}=\mathfrak{a}_{\ast}+x\epsilon_{1}+(y-1)\epsilon_{2}$. Under this situation, we have the following properties.
\begin{itemize}[topsep=1ex,itemsep=-0.5ex,partopsep=1ex,parsep=1ex]
    \item $\phi_{J}$ needs to be in the boundary which means that $\phi_{k}$ can not be included in the Young diagram. If $(x+1,y)\in\lambda_{\ast}$, the zero coming from $\prod_{I\neq J}\sh(\phi_{I}-\phi_{J})$ cancels the pole
    \bea
    \frac{\sh(\phi_{k}-c_{12}(\mathfrak{a}_{\ast},x+1,y))\sh(c_{12}(\mathfrak{a}_{\ast},x+1,y)-\phi_{k})}{\sh(\phi_{k}-c_{12}(\mathfrak{a}_{\ast},x,y)-\epsilon_{1})}\rightarrow 0.
    \eea
    \item When $(x,y)\in\lambda_{\ast}$ belongs to the boundary and $x>1,y=1$, no pole cancellation occurs and we have a simple pole. Namely, we can always add a box to this configuration without breaking the Young diagram condition.
    \item When $(x,y)$ belongs to the boundary and $x>1,y>1$, the induction hypothesis says that $\lambda_{\ast}$ obeys the Young diagram condition, i.e. $(x,y-1)\in\lambda_{\ast}$. We then have two possibilities depending on the position of $(x+1,y)$.
    \begin{itemize}[topsep=1ex,itemsep=-0.5ex,partopsep=1ex,parsep=1ex]
        \item If $(x+1,y-1)\notin\lambda_{\ast}$, i.e. $\lambda'_{\ast}$ does not obey the Young diagram condition, then the simple pole vanishes and the JK residue is zero. 
        \item If $(x+1,y-1)\in\lambda_{\ast}$, i.e. $\lambda'_{\ast}$ obeys the Young diagram condition, then there is a simple pole and the JK residue is nonzero.
    \end{itemize}
    This comes from
    \bea
    &\frac{\sh(c_{12}(\mathfrak{a}_{\ast},x,y-1)-\phi_{k}+\epsilon_{12})}{\sh(\phi_{k}-c_{12}(\mathfrak{a}_{\ast},x,y)-\epsilon_{1})}\times\begin{dcases}
        1\\
        \frac{1}{\sh(\phi_{k}-c_{12}(\mathfrak{a}_{\ast},x+1,y-1)-\epsilon_{2})}
    \end{dcases}\\
    \propto &\begin{dcases}
        1,\quad  (x+1,y-1)\notin\lambda_{\ast}\\
        \frac{1}{\sh(\phi_{k}-c_{12}(\mathfrak{a}_{\ast},x,y)-\epsilon_{1})},\quad (x+1,y-1)\in\lambda_{\ast}
    \end{dcases}
    \eea
\end{itemize}
This concludes the inductive proof and the non-zero JK residues are indeed classified by $n$-tuples of Young diagrams.

\end{proof}

The above discussion of the classification of the poles already gives the iterative process to take the JK-residue, which was given in an abstract way in \eqref{eq:JK-iteratedmethod}. Focusing on the $n=1$ case, the poles are classified by the boxes included in a Young diagram. Given a Young diagram $\lambda$ with $k$-boxes, the non-zero JK-residue is computed by taking the iterated integral in the order of the stacking of the boxes. Let $\{\phi_{1\ast},\phi_{2\ast},\cdots \phi_{k\ast}\}$ be the sequence of the coordinates of the $k$-boxes in the Young diagram. For each step, the collection of boxes $\{\phi_{1\ast},\cdots,\phi_{i\ast}\}$ is a Young diagram with $i$-boxes obeying the condition \eqref{eq:Youngcond}. Using the Weyl invariance, we can say that the contour integral $\oint d\phi_{I}$ picks up the pole at $\phi_{I\ast}$. Then, the multi-contour integral is evaluated in the order $\oint_{\phi_{k}=\phi_{k\ast}} d\phi_{k}\cdots \oint_{\phi_{1}=\phi_{1\ast}} d\phi_{1}$. The resulting residue actually does not depend on the ordering of $\{\phi_{1\ast},\cdots \phi_{k\ast} \}$ as long as the ordering is compatible with the condition that for each step the Young diagram condition is obeyed. Thus, it is useful to fix the ordering in the boxes of the Young diagram and take the iterative residue in such ordering. We define an ordering of the boxes in the Young diagram as
\begin{equation}
        \adjustbox{valign=c}{\begin{tikzpicture}
         \draw[->] (-0.5,0)--(3,0);
        \draw[->] (0,-0.5)--(0,3);
        \node[above] at (0,3){$\epsilon_{2}$};
        \node [right] at (3,0){$\epsilon_{1}$};
        \draw (0.7,0)--(0.7,2.1);
        \draw (1.4,0)--(1.4,1.4);
        \draw (2.1,0)--(2.1,0.7);
        \draw (0,2.1)--(0.7,2.1);
        \draw (0,1.4)--(1.4,1.4);
        \draw (0,0.7)--(2.1,0.7);
        \node at (0.35,0.35){$1$};
        \node at (1.05,0.35){$2$};
        \node at (1.75,0.35){$3$};
         \node at (0.35,1.05){$4$};
        \node at (1.05,1.05){$5$};
        \node at (0.35,1.75){$6$};
        \end{tikzpicture}
        }\qquad (i,j)<(i',j')\Leftrightarrow  (j<j')\vee (j=j',i<i').
\end{equation}
This ordering induces an ordering in the coordinates $c_{12}(\Bbox),\chi_{12,v}(\Bbox)$
\bea\label{eq:pureSYM-qcoordinates-ordering}
c_{12}(\Bbox) <c_{12}(\BboxF),\quad \chi_{12,v}(\Bbox)<\chi_{12,v}(\BboxF)
\eea
for $\Bbox <\BboxF$.

The $\U(1)$ partition function is then written as
\bea
\mathcal{Z}_{\text{SYM}}=\sum_{\lambda}\mathfrak{q}^{|\lambda|}\mathcal{Z}_{\text{SYM}}[\lambda],\quad \mathcal{Z}_{\text{SYM}}[\lambda]= \underset{\phi=\phi_{\lambda}}{\Res}\mu_{k}(\mathfrak{a}_{\alpha},\phi_{I},\epsilon_{1,2})
\eea
where we denoted the iterated residue as
\bea\label{eq:pureSYM-iterativeresidue}
\underset{\phi=\phi_{\lambda}}{\Res}\mu_{k}(\mathfrak{a}_{\alpha},\phi_{I},\epsilon_{1,2})=\underset{\phi_{k}=\phi_{k}^{\ast}}{\Res}\cdots \underset{\phi_{1}=\phi_{1}^{\ast}}{\Res}\mu_{k}(\mathfrak{a}_{\alpha},\phi_{I},\epsilon_{1,2})
\eea
and $k=|\lambda|$. The ordering in the sequence $\{\phi_{I}^{\ast}\}_{I=1}^{k}$ is given by the ordering above. 

For higher rank cases, we first decompose the $k$-variables $\{\phi_{I}\}_{I=1}^{k}$ into $n$ groups $\{\phi_{\alpha,1},\ldots \phi_{\alpha,k_{\alpha}}\}_{\alpha=1}^{n}$ with $\sum_{\alpha}k_{\alpha}=k$. For each $k_{\alpha}$ variables $\{\phi_{\alpha,I}\}_{I=1}^{k_{\alpha}}$, we assign the Young diagram $(\mathfrak{a}_{\alpha},\lambda^{(\alpha)})$ and the coordinates of the boxes $\{\phi_{\alpha,I}^{\ast}\}$. The iterated residue can then be taken as
\bea\label{eq:pureSYM-higherrank-iterativeresidue}
\underset{\phi_{n,k_{n}}=\phi_{n,k_{n}}^{\ast}}{\Res}\cdots \underset{\phi_{n,1}=\phi_{n,1}^{\ast}}{\Res}\cdots \underset{\phi_{2,k_{2}}=\phi_{2,k_{2}}^{\ast}}{\Res}\cdots \underset{\phi_{2,1}=\phi_{2,1}^{\ast}}{\Res}\,\,\underset{\phi_{1,k_{1}}=\phi_{1,k_{1}}^{\ast}}{\Res}\cdots \underset{\phi_{1,1}=\phi_{1,1}^{\ast}}{\Res}
\mu_{k}(\mathfrak{a}_{\alpha},\phi_{I},\epsilon_{1,2}).
\eea

\section{Characters and index}\label{sec:character-index}
To get the instanton partition function, we need to evaluate the iterative residue \eqref{eq:pureSYM-iterativeresidue} explicitly. Moreover, for practical use, it is convenient to find a compact way to write down the partition function for each Young diagram $\lambda$. In section~\ref{sec:residue-formula}, we discuss basic residue formulas and the relation with \textit{characters}. We then rewrite the partition function of pure SYM in terms of characters at section~\ref{sec:pureSYM-indexformalism}.

\subsection{Residue formulas}\label{sec:residue-formula}
\paragraph{Index functor}
For a character defined as
\bea
    \bfX=\sum_{i}n_{i}e^{x_{i}},
\eea
where $n_{i}\in\mathbb{Z}$, the index functor to convert the additive components to multiplicative components is defined as\footnote{In this thesis, we will use two notations of the trigonometric index. One is the index here $\llbracket x\rrbracket=1-e^{-x}$ and the other is the symmetric index $\sh(x)=e^{x/2}-e^{-x/2}$ used in the JK-residue. For quantum algebraic use, we will use the former notation so that we do not need to deal with the square root factors. At the level of partition function, the difference will only appear as topological terms and Chern--Simons terms.} 
\bea
    \mathbb{I}\left[\bfX\right]=\prod_{i}\llbracket x_{i}\rrbracket ^{n_{i}},\qquad \llbracket x\rrbracket=\begin{dcases}
        x,\quad &(\text{4d})\\
        1-e^{-x},\quad &(\text{5d})\\
        \theta(e^{-x};p),\quad &(\text{6d})
    \end{dcases}\label{eq:rat_trig_ell}
\eea
where the theta function is $\theta(x;p)=(x;p)_{\infty}(px^{-1};p)_{\infty}$ (see Appendix~\ref{app:specialfunct} for details of the notation).

The hierarchical structure between rational, trigonometric, and elliptic functions appears here by taking the limit as 
\bea
    \theta(e^{-x};p)\xrightarrow{p\rightarrow0} 1-e^{-x}=x+\mathcal{O}(x^{2}).
\eea

Most of the computations in this paper will be done explicitly using the trigonometric notation, so when not mentioned we are using the following convention:
\bea
    \mathbb{I}[x]=(1-x^{-1})=\exp\left(-\sum_{n=1}^{\infty}\frac{1}{n}x^{-n}\right)
\eea
but using the formula in \eqref{eq:rat_trig_ell} one can convert the results to rational and elliptic ones. 

The dual of the character is defined as
\bea\label{eq:dualcharacter}
   \bfX^{\vee}=\sum_{i}n_{i}e^{-x_{i}}
\eea
and we have the reflection property
\bea
    \mathbb{I}\left[\bfX^{\vee}\right]=(-1)^{\text{rk}\bfX}\det\bfX\,\,\mathbb{I}[\bfX]\label{eq:index_reflecprop}
\eea
where $\text{rk}\bfX=\sum_{i}n_{i}$ and $\det\bfX=\prod_{i}e^{n_{i}x_{i}}$. 

For a character $\bfX$, we also define an operation
\beq\label{eq:pdegree-op}
   \bfX^{[p]}=\sum_{i}n_{i}e^{px_{i}}
\eeq
which is called the $p$-th Adams operation.

\begin{definition}\label{app-def:movable}
    Let $\mathbf{A}$ be a Laurent polynomial
    \bea
    \mathbf{A}=\sum_{(n_{1},\ldots ,n_{p})}A_{\vec{n}}x_{1}^{n_{1}}\cdots x_{p}^{n_{p}},
    \eea
    where the sum $(n_{1},\ldots,n_{p})$ is taken over some subset in $\mathbb{Z}^{p}$. If there is no constant term $A_{\vec{n}}=0$ for $n_{1}=n_{2}=\cdots n_{p}=0$, then $\mathbf{A}$ is \textbf{movable}. The constant term is called the \textbf{unmovable} part. We denote the movable part and unmovable part as
    \bea
\left[\mathbf{A}\right]^{(\neq0)},\quad \left[\mathbf{A}\right]^{(0)},
    \eea
    respectively. Note that this means
    \bea
    \mathbf{A}=\left[\mathbf{A}\right]^{(\neq0)}+\left[\mathbf{A}\right]^{(0)}.
    \eea
\end{definition}


\paragraph{Residue formulas}In the following sections, we will need to take residues of rational functions. We summarize the properties of the residue formulas.
\begin{definition}\label{def:residuedef-normal}
    Let $f(x)$ be a rational function whose zeros and poles are $\{p^{+}_{i}\}$ and $\{p^{-}_{j}\}$ respectively:
    \bea
f(x)=\frac{\prod_{j}(x-p_{j}^{+})}{\prod_{i}(x-p_{i}^{-})}.
    \eea
    The residue at $p_{a}^{-}$ is defined as
    \bea
    \underset{x=p_{a}^{-}} {\Res} f(x)=\lim_{x\rightarrow p_{a}^{-}}(x-p_{a}^{-})f(x)=\frac{\prod_{j}(p_{a}^{-}-p_{j}^{+})}{\prod_{i\neq a}(p_{a}^{-}-p_{i}^{-})}.
    \eea
\end{definition}
This is the usual definition of the residue of a rational function. In the equivariant index formalism, it is natural to deal with functions $\tilde{f}(x)$ written in products $(1-a/x)^{-1}$. In such cases, the residue is defined as $\underset{x=a}{\Res}x^{-1}\tilde{f}(x)$.
\begin{definition}\label{def:residuedef}
Given a rational function $\tilde{f}(x)$ written as
\bea
\tilde{f}(x)=\mathbb{I}[\mathbf{X}^{\vee}x]=\frac{\prod_{j}(1-p_{j}^{+}/x)}{\prod_{i}(1-p_{i}^{-}/x)},\quad \mathbf{X}=\sum_{j}p_{j}^{+}-\sum_{i}p_{i}^{-},
\eea
the residue is defined as
\bea
\underset{x=p_{a}^{-}}{\Res}x^{-1}\tilde{f}(x)\coloneqq\lim_{x\rightarrow p_{a}^{-}}\left(1-\frac{p_{a}^{-}}{x}\right)\tilde{f}(x)=\frac{\prod_{j}(1-p_{j}^{+}/p_{a}^{-})}{\prod_{i\neq a}(1-p_{i}^{-}/p_{a}^{-})}.
\eea
\end{definition}
Note that the right hand side can be written also as
\bea
\mathbb{I}[\mathbf{X}^{\vee}p_{a}^{-}+1]=\frac{\prod_{j}(1-p_{j}^{+}/p_{a}^{-})}{\prod_{i\neq a}(1-p_{i}^{-}/p_{a}^{-})}
\eea
and thus we have
\bea\label{eq:residueindexformula}
\underset{x=p_{a}^{-}}{\Res}x^{-1}\tilde{f}(x)=\lim_{x\rightarrow p_{a}^{-}}\mathbb{I}\left[\mathbf{X}^{\vee}x+1\right].
\eea

We may also study the residue formula for the dual rational function $\tilde{f}^{\vee}(x)$, which is defined as
\bea
\tilde{f}^{\,\vee}(x)=\mathbb{I}[\mathbf{X}x^{\vee}]=\frac{\prod_{j}(1-x/p_{j}^{+})}{\prod_{i}(1-x/p_{i}^{-})}.
\eea
\begin{proposition}\label{prop:residuedual}
    The residue formula is given as
    \bea
    \underset{x=p_{a}^{-}}{\Res} x^{-1}\tilde{f}^{\,\vee}(x)=\lim_{x\rightarrow p_{a}^{-}}\left(1-\frac{p_{a}^{-}}{x}\right)\tilde{f}^{\,\vee}(x)=-\frac{\prod_{j}(1-p_{a}^{-}/p_{j}^{+})}{\prod_{i\neq a}(1-p_{a}^{-}/p_{i}^{-})}.
    \eea
    Moreover the right hand side can be written also as
    \bea\label{eq:dualresidueindexformula}
 \underset{x=p_{a}^{-}}{\Res} x^{-1}\tilde{f}^{\,\vee}(x)=- \mathbb{I}\,[\mathbf{X}p_{a}^{-\vee}+1],
    \eea
    where an extra \textit{sign} appears.
\end{proposition}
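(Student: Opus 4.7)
The plan is to carry out the calculation directly and then identify the result with the index-functor expression. First I would write
$$\tilde{f}^{\,\vee}(x) = \frac{1}{1 - x/p_a^-} \cdot g(x), \qquad g(x) = \frac{\prod_j (1 - x/p_j^+)}{\prod_{i \neq a}(1 - x/p_i^-)},$$
where $g(x)$ is regular and nonzero at $x = p_a^-$. The key algebraic identity is
$$\frac{1 - p_a^-/x}{1 - x/p_a^-} = \frac{(x - p_a^-)/x}{-(x - p_a^-)/p_a^-} = -\frac{p_a^-}{x},$$
which tracks the origin of the extra sign: the residue normalization $(1 - p_a^-/x)$ of Definition~\ref{def:residuedef} differs from the denominator factor $(1 - x/p_a^-)$ of $\tilde{f}^{\,\vee}$ by exactly this factor. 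Substituting and letting $x \to p_a^-$ replaces the prefactor by $-1$ and evaluates $g$ at $p_a^-$, producing the explicit ratio in the statement.

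For the second equation, I would interpret the \textquotedblleft$+1$\textquotedblright{} shift as the addition of the trivial one-dimensional character $e^0 = 1$, following the convention established in \eqref{eq:residueindexformula}. Expanding
$$\mathbf{X}\, p_a^{-\vee} = \sum_j \frac{p_j^+}{p_a^-} - \sum_i \frac{p_i^-}{p_a^-},$$
the $i = a$ summand contributes a unit \textquotedblleft$1$\textquotedblright{} to the negative part, which \textquotedblleft$+1$\textquotedblright{} precisely cancels; the remaining character is $\sum_j p_j^+/p_a^- - \sum_{i \neq a} p_i^-/p_a^-$. Applying $\mathbb{I}$ reproduces $\prod_j (1 - p_a^-/p_j^+)/\prod_{i \neq a}(1 - p_a^-/p_i^-)$, and combining with the sign from the first part yields \eqref{eq:dualresidueindexformula}.

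The computation is entirely elementary, so there is no substantive obstacle. The only point worth emphasizing is the origin of the minus sign, which is the whole content of the proposition: it comes from the convention of normalizing residues multiplicatively using $(1 - p_a^-/x)$, and is precisely the feature that distinguishes the residue of $\tilde{f}^{\,\vee}$ from that of $\tilde{f}$ in Definition~\ref{def:residuedef}.
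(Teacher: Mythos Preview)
Your proof is correct. The paper states this proposition without proof, treating it as an elementary consequence of the definitions; your argument is precisely the direct calculation one would supply, and your identification of the ``$+1$'' as cancelling the $i=a$ summand of $\mathbf{X}p_a^{-\vee}$ matches the convention established in \eqref{eq:residueindexformula}.
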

The sign appearing above will be important in later sections and one needs to be careful with these signs otherwise the partition function obtained from the index formalism will not give the correct partition function.

Generally, we will have the following relation between the residue formula and the index computations.
\begin{proposition}\label{prop:residuegeneral}
    Let $g(x)$ be a rational function defined as
    \bea
    g(x)=\mathbb{I}[\mathbf{X}^{\vee}x+\mathbf{Y}x^{-1}],\quad \mathbf{X}=\sum_{j}p_{j}^{+}-\sum_{i}p_{i}^{-},\quad \mathbf{Y}=\sum_{j}q_{j}^{+}-\sum_{i}q_{i}^{-}.
    \eea
    Explicitly, this is written as
    \bea
    g(x)=\frac{\prod_{j}(1-p_{j}^{+}/x)}{\prod_{i}(1-p_{i}^{-}/x)}\frac{\prod_{j}(1-x/q_{j}^{+})}{\prod_{i}(1-x/q_{i}^{-})},
    \eea
    The residue formula has the following relation
    \bea
        \underset{x=y}{\Res}x^{-1}g(x)=\begin{dcases}
            +\mathbb{I}\,[\mathbf{X}^{\vee}y+\mathbf{Y}y^{-1}+1],\quad y\in\{p_{i}^{-}\}\\
            -\mathbb{I}\,[\mathbf{X}^{\vee}y+\mathbf{Y}y^{-1}+1],\quad y\in\{q_{i}^{-}\}
        \end{dcases}.
    \eea
\end{proposition}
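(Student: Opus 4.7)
The plan is to reduce to the two single-factor residue formulas already established, namely Definition~\ref{def:residuedef} and Proposition~\ref{prop:residuedual}, by exploiting the multiplicativity of the index functor $\mathbb{I}$. Since the index converts the additive decomposition of characters into a product of factors, we may write
\bea
g(x)=\mathbb{I}[\mathbf{X}^{\vee}x+\mathbf{Y}x^{-1}]=\mathbb{I}[\mathbf{X}^{\vee}x]\cdot\mathbb{I}[\mathbf{Y}x^{\vee}]=\tilde{f}(x)\,\tilde{f}^{\,\vee}_{\mathbf{Y}}(x),
\eea
where $\tilde{f}(x)$ is the rational function of Definition~\ref{def:residuedef} built from $\mathbf{X}$, and $\tilde{f}^{\,\vee}_{\mathbf{Y}}(x)$ is the dual-type rational function of Proposition~\ref{prop:residuedual} built from $\mathbf{Y}$. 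The poles in $x$ are therefore disjointly split between the two factors: those at $\{p_{i}^{-}\}$ come from $\tilde{f}$, and those at $\{q_{i}^{-}\}$ come from $\tilde{f}^{\,\vee}_{\mathbf{Y}}$.

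First I would treat the case $y=p_{a}^{-}$. Here $\tilde{f}^{\,\vee}_{\mathbf{Y}}(x)$ is regular at $y$ and simply evaluates to $\mathbb{I}[\mathbf{Y}y^{-1}]$, while Definition~\ref{def:residuedef} and its reformulation \eqref{eq:residueindexformula} give
\bea
\underset{x=y}{\Res}\,x^{-1}\tilde{f}(x)=\mathbb{I}\!\left[\mathbf{X}^{\vee}y+1\right].
\eea
Multiplying the two contributions and using the multiplicativity of $\mathbb{I}$ yields $+\mathbb{I}[\mathbf{X}^{\vee}y+\mathbf{Y}y^{-1}+1]$, matching the first branch of the statement.

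Next I would treat the case $y=q_{a}^{-}$, where the pole lies in $\tilde{f}^{\,\vee}_{\mathbf{Y}}(x)$ while $\tilde{f}(x)$ is regular and evaluates to $\mathbb{I}[\mathbf{X}^{\vee}y]$. Applying Proposition~\ref{prop:residuedual} and \eqref{eq:dualresidueindexformula} gives
\bea
\underset{x=y}{\Res}\,x^{-1}\tilde{f}^{\,\vee}_{\mathbf{Y}}(x)=-\,\mathbb{I}\!\left[\mathbf{Y}y^{\vee}+1\right],
\eea
and once again multiplying the factors and invoking multiplicativity produces $-\mathbb{I}[\mathbf{X}^{\vee}y+\mathbf{Y}y^{-1}+1]$.

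The only subtle point is the origin of the relative minus sign, but this is already handled in Proposition~\ref{prop:residuedual}: the sign comes from the reflection property \eqref{eq:index_reflecprop} applied when converting $(1-x/p)$ near $x=p$ into the form $(1-p/x)$ that appears under $\mathbb{I}$. Since the two residue formulas of Definition~\ref{def:residuedef} and Proposition~\ref{prop:residuedual} are being invoked as lemmas, there is no further computation to carry out, and no genuine obstacle; the proof is essentially a bookkeeping exercise in tracking which factor of $g(x)$ carries the pole and reading off the corresponding sign.
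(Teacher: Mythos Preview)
Your proposal is correct. The paper does not give an explicit proof of this proposition, but its placement immediately after Definition~\ref{def:residuedef} and Proposition~\ref{prop:residuedual} makes clear that exactly your argument is intended: factor $g(x)=\mathbb{I}[\mathbf{X}^{\vee}x]\cdot\mathbb{I}[\mathbf{Y}x^{-1}]$ by multiplicativity of $\mathbb{I}$, then apply \eqref{eq:residueindexformula} or \eqref{eq:dualresidueindexformula} according to which factor carries the pole.
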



\subsection{Pure SYM}\label{sec:pureSYM-indexformalism}
For later use, we change the notations used in the previous section. Instead of using the $\sh(\phi)$ notation, we use the $\mathbb{I}[x]$ notation. 

\begin{proposition}\label{prop:LMNSformula-mod}
The LMNS formula is rewritten as
\bea
\mathcal{Z}_{k}&=\frac{1}{k!}\oint_{\text{JK}} \prod_{I=1}^{k}\frac{dx_{I}}{2\pi ix_{I}}z_{k}(v_{\alpha},x_{I},q_{1,2}),\\
z_{k}(v_{\alpha},x_{I},q_{1,2})&=\left(\frac{(1-q_{12})}{(1-q_{1})(1-q_{2})}\right)^{k}\prod_{I=1}^{k}\frac{1}{P(x_{I})\widetilde{P}(q_{12}x_{I})}\prod_{I\neq J}\mathscr{S}_{12}\left(\frac{x_{I}}{x_{J}}\right)^{-1}
\eea
where
\bea
P(x)=\prod_{\alpha=1}^{n}\left(1-\frac{v_{\alpha}}{x}\right),\quad \widetilde{P}(x)=\prod_{\alpha=1}^{n}\left(1-\frac{x}{v_{\alpha}}\right),\quad \mathscr{S}_{12}(x)=\frac{(1-q_{1}x)(1-q_{2}x)}{(1-x)(1-q_{12}x)},\quad q_{12}=q_{1}q_{2}
\eea
and $q_{1,2}=e^{\epsilon_{1,2}},\,v_{\alpha}=e^{\mathfrak{a}_{\alpha}}, x_{I}=e^{\phi_{I}}$. 
\end{proposition}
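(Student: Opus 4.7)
The proof is a direct algebraic substitution organised by the identity
\[
\sh(x)\;=\;e^{x/2}-e^{-x/2}\;=\;y^{1/2}\,\bigl(1-y^{-1}\bigr),\qquad y=e^{x},
\]
which expresses every $\sh$-factor as a square-root prefactor times an index-type factor $(1-y^{\pm 1})$. The plan is to apply this to each $\sh$ appearing in $\mu_{k}$, split off the square-root prefactors, and regroup the remaining $(1-\cdot)$ pieces into the blocks $P$, $\widetilde P$, and $\mathscr{S}_{12}$, using the rule $(1-b/a)=-(b/a)(1-a/b)$ whenever the orientation of a factor has to be flipped.

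The computation splits naturally into three blocks. First, for the one-loop prefactor, $\sh(\epsilon_{12})=q_{12}^{1/2}(1-q_{12}^{-1})$ and $\sh(-\epsilon_{a})=-q_{a}^{1/2}(1-q_{a}^{-1})$, so the $q_{12}^{1/2}$'s in numerator and denominator cancel, and $(1-q^{-1})=-q^{-1}(1-q)$ converts the index pieces into $(1-q_{12})/[(1-q_{1})(1-q_{2})]$ up to an overall sign. Second, the matter block factorises as
\[
\sh(\phi_I-\mathfrak a_\alpha)\sh(\mathfrak a_\alpha-\phi_I-\epsilon_{12})
\;=\;q_{12}^{-1/2}\,(1-v_\alpha/x_I)\,(1-q_{12}x_I/v_\alpha),
\]
so the product over $\alpha$ of the index pieces is exactly $P(x_I)\widetilde P(q_{12}x_I)$, leaving a prefactor $q_{12}^{1/2}$ per $(I,\alpha)$. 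Third, for the $I\neq J$ adjoint block the four $\sh$-factors contribute an exponential prefactor whose numerator minus denominator equals $\tfrac{\epsilon_{12}+\epsilon_{1}+\epsilon_{2}}{2}=\epsilon_{12}$, i.e.\ $q_{12}$ per ordered pair; the index pieces, initially in $x_J/x_I$-form, are converted to $x_I/x_J$-form via the sign-flip rule above, and the two flips in the numerator cancel the two in the denominator to produce $\mathscr{S}_{12}(x_I/x_J)^{-1}$ with no residual sign.

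Assembling the three blocks yields the stated formula multiplied by a global monomial prefactor of the schematic form $(-1)^{k}q_{12}^{k(n/2+k-1)}$. This factor is precisely the topological / Chern--Simons-type contribution flagged in the remark following \eqref{eq:rat_trig_ell}; it is absorbed into the definition of the instanton counting parameter $\mathfrak q$ (and of the perturbative prefactor), after which Proposition~\ref{prop:LMNSformula-mod} follows. The only real obstacle is the bookkeeping of square-root prefactors and the signs introduced by swapping $(1-b/a)\leftrightarrow(1-a/b)$; once one verifies that these signs pair up within the adjoint block and that the power of $q_{12}$ is exactly the one that can be reabsorbed, the identity is established.
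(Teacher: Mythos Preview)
Your approach is correct and matches the paper's treatment: the paper does not give a detailed proof but simply notes after the proposition that $d\phi_I=x_I^{-1}dx_I$ and that ``the notation here is the same with \eqref{eq:pureSYMcontour} up to redefinition of the topological terms and inclusion of Chern--Simons terms.'' Your computation is precisely the explicit verification of this remark, tracking the square-root prefactors from $\sh(x)=e^{x/2}(1-e^{-x})$ and confirming that the leftover monomial is of topological/Chern--Simons type, which is exactly what the paper absorbs by convention (cf.\ the footnote accompanying \eqref{eq:rat_trig_ell}).
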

Note that $d\phi_{I}=x_{I}^{-1}dx_{I}$. The notation here is the same with \eqref{eq:pureSYMcontour} up to redefinition of the topological terms and inclusion of Chern--Simons terms.

Let us understand the rational functions inside the contour integral by using the characters. Define
\bea
\bfN=\sum_{\alpha=1}^{n}e^{\mathfrak{a}_{\alpha}}=\sum_{\alpha=1}^{n}v_{\alpha},\qquad \bfK=\sum_{I=1}^{k}e^{\phi_{I}}=\sum_{I=1}^{k}x_{I},\quad \bfP_{1,2}=1-q_{1,2},\quad \bfP_{12}=\bfP_{1}\bfP_{2}
\eea
and one can see that 
\bea\label{eq:pureSYMcharacter}
   \mathcal{Z}_{k}=\frac{1}{k!}\oint\prod_{I=1}^{k}\frac{dx_{I}}{2\pi ix_{I}} \mathbb{I}[\mathbf{v}_{\text{inst.}}-k],\quad \mathbf{v}_{\text{inst.}}=-\bfN^{\vee}\bfK-q_{12}^{\vee}\bfK^{\vee}\bfN+\bfP_{12}^{\vee}\bfK^{\vee}\bfK,
\eea
where we used
\bea
P(x)=\mathbb{I}[\bfN^{\vee} x],\quad \widetilde{P}(x)=\mathbb{I}[x^{-1}\bfN],\quad \mathscr{S}_{12}(x)=\mathbb{I}[-\bfP_{12}^{\vee}x^{\vee}].
\eea

Thm.~\ref{thm:pureSYMJKpoles} tells us that the poles are classified by $n$-tuples of Young diagrams as
\begin{equation}
\{x_{I}\}_{I=1,\ldots,k}\longrightarrow \{\chi_{12,v_{\alpha}}(\Bbox)=v_{\alpha}q_{1}^{i-1}q_{2}^{j-1}\}_{\alpha=1,\ldots,n,\,\Abox=(i,j)\in\lambda^{(\alpha)}}
\end{equation}
and we obtain
\bea
\mathcal{Z}_{\text{SYM}}=\sum_{\vec{\lambda}}\mathfrak{q}^{|\vec{\lambda}|}\mathcal{Z}_{\text{SYM}}[\vec{v},\vec{\lambda}],\quad \mathcal{Z}_{\text{SYM}}[\vec{v},\vec{\lambda}]=\underset{x=x_{\vec{\lambda}}}{\Res}x^{-1} z_{k}(v_{\alpha},x_{I},q_{1,2})
\eea
where $\vec{v}=(v_{\alpha})_{\alpha=1,\ldots,n},\,\,  \vec{\lambda}=(\lambda^{(\alpha)})_{\alpha=1,\ldots,n},\,\,|\vec{\lambda}|=\sum_{\alpha=1}^{n}|\lambda^{(\alpha)}|=k$ and the iterative residue \eqref{eq:pureSYM-iterativeresidue} is defined as
\bea\label{eq:pureSYM-mult-residue}
\underset{x=x_{\vec{\lambda}}}{\Res}x^{-1} z_{k}(v_{\alpha},x_{I},q_{1,2})&\coloneqq \underset{x_{k}=x_{k}^{\ast}}{\Res}x_{k}^{-1}\cdots \underset{x_{1}=x_{1}^{\ast}}{\Res}x_{1}^{-1}\,z_{k}(v_{\alpha},x_{I},q_{1,2}).
\eea
The $x_{1}^{\ast},\ldots ,x_{k}^{\ast}$ are the exponent of the coordinates $c_{12}(\Bbox)$ of the $k$-boxes in the $n$-tuples of Young diagrams $\vec{\lambda}$ in the ordering \eqref{eq:pureSYM-higherrank-iterativeresidue}. Let us evaluate this residue explicitly and rewrite the partition function using characters. We first consider the case $n=1$ giving
\bea
z_{k}(v,x_{I},q_{1,2})&=\left(\frac{(1-q_{12})}{(1-q_{1})(1-q_{2})}\right)^{k}\prod_{I=1}^{k}\frac{1}{(1-v/x)(1-q_{12}x/v)}\prod_{I>J}\mathscr{S}_{12}\left(\frac{x_{I}}{x_{J}}\right)^{-1}\mathscr{S}_{12}\left(\frac{x_{J}}{x_{I}}\right)^{-1}.
\eea
When taking the residue at $x_{I}$, residues at $x_{J}$ $(J<I)$ are already taken and they will be some variables taking in value of $\{\chi_{12,v}(\Bbox)\mid  \Bbox\in\lambda\}$. Moreover, since we are assuming that $x_{I}^{\ast}>x_{J}^{\ast}$ obeys, the poles are always picked up from the $\prod_{I>J}\mathscr{S}_{12}(x_{J}/x_{I})^{-1}$. This property is also true for generic $n$ and after using the residue formula \eqref{eq:residueindexformula}, we then have
\bea
\underset{x=x_{\vec{\lambda}}}{\Res}x^{-1} z_{k}(v_{\alpha},x_{I},q_{1,2})=\mathbb{I}[\,\mathbf{v}_{\text{inst.}}|_{\vec{\lambda}}]
\eea
where $\mathbf{v}_{\text{inst.}}|_{\vec{\lambda}}$ means we insert the information of the poles to the character $\bfK$ as
\bea
\bfK|_{\vec{\lambda}}=\sum_{\alpha=1}^{n}\sum_{\Abox\in\lambda^{(\alpha)}}\chi_{12,v_{\alpha}}(\Bbox).
\eea
Therefore, the instanton partition function is given as
\bea
\mathcal{Z}_{\text{SYM}}[\vec{v},\vec{\lambda}]=\mathbb{I}\left[\left.\mathbf{v}_{\text{inst.}}\right|_{\vec{\lambda}}\right].
\eea
We can also decompose this partition function using fundamental factors called the \textit{Nekrasov factor} as
\bea
\mathcal{Z}_{\text{SYM}}[\vec{v},\vec{\lambda}]=\prod_{\alpha,\beta=1}^{n}\frac{1}{\mathsf{N}_{12}(v_{\alpha},\lambda^{(\alpha)}\mid v_{\beta},\lambda^{(\beta)})}.
\eea
where
\bea
    \mathsf{N}_{12}(v_{1},\lambda^{(1)}\,|\,v_{2},\lambda^{(2)})&=\prod_{\Abox\in\lambda^{(1)}}\left(1-\frac{q_{12}\chi_{12,v_{1}}(\Bbox)}{v_{2}}\right)\prod_{\Abox\in\lambda^{(2)}}\left(1-\frac{v_{1}}{\chi_{12,v_{2}}(\Bbox)}\right)\prod_{\substack{\Abox\in\lambda^{(1)}\\\AboxF\in\lambda^{(2)}}}\mathscr{S}_{12}\left(\frac{\chi_{12,v_{1}}(\Bbox)}{\chi_{12,v_{2}}(\BboxF)}\right).
\eea

\chapter{Gauge Origami}\label{chap:gauge-origami}
In this chapter, we discuss a physical setup called \textit{gauge origami}, originally introduced by Nekrasov \cite{Nekrasov:2015wsu,Nekrasov:2016qym,Nekrasov:2016ydq}. The gauge origami setup is a physical setup where generalizations of the instantons in Chap.~\ref{chap:ADHM-localization} appear. Two generalizations of instantons will be discussed. One is instantons of higher dimensional gauge theories, and the other is instantons of \textit{intersecting} gauge theories. String theoretically, we will show that they appear as the low energy worldvolume theories of higher dimensional and generally intersecting D-branes. The goal of this chapter is to derive the ADHM-like equations of these generalized instantons from the string theoretic viewpoint. 

In particular, in section~\ref{sec:general-instanton}, we briefly summarize the ADHM data of the instantons we are interested in. We then explain the physical setup of the gauge origami system in section~\ref{sec:gaugeorigamiphysicalsetup}. Similar to the pure SYM case, supersymmetries will play important roles and they are reviewed in section~\ref{sec:susyBfield}. The NS $B$-field is inevitable to consider such systems and it is also reviewed. At the end, we will see that two supersymmetries are preserved in all the setups.  We then briefly study the open string spectrum in section~\ref{sec:opstringspectrum} and construct 2d $\mathcal{N}=(0,2)$ quiver gauge theories appearing at the low energy limit. The vacuum moduli space at the end coincides with the ADHM equations.

\section{Generalized instantons}\label{sec:general-instanton}
In this section, we discuss generalizations of instantons in higher dimensional and intersecting gauge theories. We give the information of the instanton moduli space in a top-down way. The physical interpretations of them are explained in the following sections. See \cite{Szabo:2022zyn,Kanno:2020ybd} for reviews and references.

\paragraph{Higher dimensional instantons}Let $M_{d}$ be a connected oriented Riemannian manifold of even dimension $2d\geq 4$ and $\star$ the corresponding Hodge dual. Generalized instantons are defined as localized finite-action solutions of the gauge theory on $M_{n}$ obeying
\bea
\Sigma\wedge F=\star F
\eea
where $A$ is the 1-form gauge connection, $F=dA+A\wedge A$ is the curvature, and $\Sigma$ is a differential form of degree $2d-4$. By the famous Derrick's Theorem \cite{Derrick:1964ww}, localized solutions with finite YM action is forbidden for $n>2$. For a localized solution $A=A_{\mu}dx^{\mu}$, we can rescale the coordinates $x\rightarrow\lambda x$ with $\lambda\in\mathbb{R}_{>0}$. The action changes as $S_{\text{YM}}(\lambda)=\lambda^{4-2d}S_{\text{YM}}$. Since $S'_{\text{YM}}(1)<0$ for $d>2$, we do not have a localized solution. In this sense, naively such kind of instantons do not exist. We can cure the problem by considering \textit{noncommutative} instantons because Derrick's Theorem does not hold for such nontrivial background. The instantons we are considering will be noncommutative instantons.

Let us quote the ADHM type equations for instantons in higher dimensional theories. We introduce $\bfN=\mathbb{C}^{n}, \bfK=\mathbb{C}^{k}$ similar to Chap.~\ref{chap:ADHM-localization}.
\begin{itemize}
    \item $d=3$, $M_{3}=\mathbb{C}^{3}$ \cite{Nekrasov:2009JJM,Jafferis:2007sg,Cirafici:2008sn,Kanno:2020ybd,Benini:2018hjy}: 
    \bea\label{eq:D6-ADHM}
    \mu_{\mathbb{C}}&=[\mathsf{B}_{i},\mathsf{B}_{j}]+\frac{1}{2}\varepsilon_{ijk}[\mathsf{B}_{k}^{\dagger},\mathsf{Y}],\quad 
    \mu_{\mathbb{R}}=\sum_{i=1}^{3}[\mathsf{B}_{i},\mathsf{B}_{i}^{\dagger}]+[\mathsf{Y},\mathsf{Y}^{\dagger}]+\mathsf{I}\mathsf{I}^{\dagger},\quad \sigma=\mathsf{Y} \mathsf{I}=0,
    \eea
    where $\mathsf{B}_{1,2,3},\mathsf{Y}\in\operatorname{Hom}(\bfK,\bfK)$ and $\mathsf{I}\in \operatorname{Hom}(\bfN,\bfK)$. The moduli space is given by
    \bea
    \mathfrak{M}_{n,k}=\left\{(\mathsf{B}_{1,2,3},\mathsf{Y})\,|\, \mu_{\mathbb{C}}=\mu_{\mathbb{R}}-\zeta\cdot 
    1_{k},\sigma=0\right\}/\U(k).
    \eea
    \item $d=4$, $M_{4}=\mathbb{C}^{4}$ \cite{Nekrasov:2017cih,Nekrasov:2018xsb}:
    \bea
\mu_{\mathbb{R}}&=\sum_{a\in\four}[\mathsf{B}_{a},\mathsf{B}_{a}^{\dagger}]+\mathsf{I}\mathsf{I}^{\dagger},\quad s_{ab}=[\mathsf{B}_{a},\mathsf{B}_{b}]+\frac{1}{2}\varepsilon_{abcd}[\mathsf{B}_{c}^{\dagger},\mathsf{B}_{d}^{\dagger}],
\eea
where $\mathsf{B}_{a}\in \operatorname{Hom}(\bfK,\bfK)\,(a=1,2,3,4)$ and $\mathsf{I}\in\Hom(\bfN,\bfK)$. The $\varepsilon_{abcd}$ is the total antisymmetric tensor with $\varepsilon_{1234}$. The instanton moduli space is defined as 
\bea\label{eq:M4ADHM1}
\mathfrak{M}_{n,k}=\left.\left\{(\vec{\mathsf{B}},\mathsf{I})\mid \mu_{\mathbb{R}}-\zeta\cdot 1_{k}=s_{A}=0\right\}\right/\U(k).
\eea
This setup is called the \textbf{magnificent four} system. Note that to obtain the $d=3$ result, we can set $\mathsf{Y}=\mathsf{B}_{4}$ and impose $\mathsf{Y}\mathsf{I}=0$. 

Using the identity 
\bea\label{eq:D6Ftermconverse}
\sum_{a<b}\Tr[\mathsf{B}_{a},\mathsf{B}_{b}][\mathsf{B}_{a},\mathsf{B}_{b}]^{\dagger}=\frac{1}{2}\sum_{a<b}\Tr s_{ab}s_{ab}^{\dagger},
\eea
we can replace $s_{A}=0$ with $\mu_{A}=0$ where
\bea\label{eq:M4ADHM2}
\mu_{A}=[\mathsf{B}_{a},\mathsf{B}_{b}].
\eea


\end{itemize}

\paragraph{Instantons of intersecting gauge theories} Let us discuss the generalization of instantons that arise from intersecting gauge theories. There are two setups discussed in this thesis.
\begin{itemize}
    \item \textbf{Spiked instanton} \cite{Nekrasov:2016gud,Nekrasov:2016qym,Nekrasov:2015wsu}: Roughly speaking, it is a setup of six copies of \eqref{eq:ADHMeq}. We introduce six vector spaces $\bfN_{A}=\mathbb{C}^{n_{A}}\,(A\in \six)$ (see section~\ref{sec:gaugeorigamiphysicalsetup} for the notations) and $\bfK=\mathbb{C}^{k}$. The ADHM variables are
    \bea
\mathsf{I}_{A}:\bfN_{A}\rightarrow \bfK,\quad \mathsf{J}_{A}: \bfK\rightarrow \bfN_{A},\quad \mathsf{B}_{a}:\bfK\rightarrow \bfK.
\eea
The moment maps are defined as
\bea\label{eq:spikedmomentdef}
&\mu_{A}=[\mathsf{B}_{a},\mathsf{B}_{b}]+\mathsf{I}_{A}\mathsf{J}_{A},\quad A=ab,\,\,(a<b),\\
&s_{A}=\mu_{A}+\varepsilon_{A\bar{A}}\mu_{\bar{A}}^{\dagger},\quad A\in\six,\\
&\mu_{\mathbb{R}}=\sum_{a\in\four}[\mathsf{B}_{a},\mathsf{B}_{a}^{\dagger}]+\sum_{A\in\six}(\mathsf{I}_{A}\mathsf{I}_{A}^{\dagger}-\mathsf{J}_{A}^{\dagger}\mathsf{J}_{A})
\eea
where $\varepsilon_{A\bar{A}}=\varepsilon_{abcd}\,\,(A=ab,\,\bar{A}=cd)$ for $a<b,\,c<d$ is the total antisymmetric tensor with $\varepsilon_{1234}=1$. We also have the property $s_{A}^{\dagger}=\varepsilon_{A\bar{A}}s_{\bar{A}}$ which implies $s_{A}$ is a real map giving six real conditions. Note also that $\mu_{A},s_{A},\mu_{\mathbb{R}}\in\operatorname{Hom}\,(\bfK,\bfK)$. The moment map equations are then given as 
\bea\label{eq:spikedKK}
\{{s}_{A}=0\}/\U(k),\quad (A\in\six),\qquad 
\{\mu_{\mathbb{R}}=\zeta\cdot1_{k}\}/\U(k).
\eea
We additionally have 
\bea\label{eq:spikedNK}
\{\sigma_{aA}=0\}/\U(k),\quad \sigma_{aA}=B_{a}I_{A}+\varepsilon_{abA}B_{b}^{\dagger}J_{A}^{\dagger}
\eea
where $a,b\in\bar{A},\,(a\neq b)$ and $\sigma_{aA}\in\operatorname{Hom}\,(\bfN_{A},\bfK)$ and 
\beq\label{eq:spikedNN}
\{\Upsilon_{A}=0\}/\U(k),\quad \Upsilon_{A}=\mathsf{J}_{\bar{A}}\mathsf{I}_{A}-\mathsf{I}^{\dagger}_{\bar{A}}\mathsf{J}_{A}^{\dagger}:\bfN_{A}\rightarrow \bfN_{\bar{A}}
\eeq
obeying $\Upsilon_{A}=-\Upsilon_{\bar{A}}$.

The instanton moduli space is given as
\bea\label{eq:spikedinstantonmoduli}
\mathfrak{M}_{\vec{n},k}=\left.\{(\vec{\mathsf{B}},\vec{\mathsf{I}},\vec{\mathsf{J}}\,)\mid s_{A}=0,\,\,\mu_{\mathbb{R}}=\zeta\cdot 1_{k},\,\, \sigma_{aA}=0,\,\,\Upsilon_{A}=0\}\right/\U(k).
\eea
Using the following identity \cite[eq.~(54)]{Nekrasov:2016qym}:
\bea
&\sum_{A\in\six}\text{Tr}\,s_{A}s_{A}^{\dagger}+\sum_{A\in\six,\,a\in\bar{A}}\Tr\sigma_{aA}\sigma_{aA}^{\dagger}+\sum_{A\in\six}\Tr\Upsilon_{A}\Upsilon_{A}^{\dagger} \\
&\quad =2\sum_{A\in\six}(||\mu_{A}||^{2}+||\mathsf{J}_{\bar{A}}\mathsf{I}_{A}||^{2})+\sum_{A\in\six,\,a\in\bar{A}}(||\mathsf{B}_{a}\mathsf{I}_{A}||^{2}+||\mathsf{J}_{A}\mathsf{B}_{a}||^{2})
\eea
the conditions \eqref{eq:spikedKK}, \eqref{eq:spikedNK}, \eqref{eq:spikedNN} are transformed to
\bea\label{eq:spikedJEterm}
s_{A}=0&\longrightarrow \mu_{A}=0\,\,(A\in\six),\\
\sigma_{aA}=0 &\longrightarrow  \mathsf{B}_{a}\mathsf{I}_{A}=0,\quad \mathsf{J}_{A}\mathsf{B}_{a}=0\,\,(A\in\six,\,a\in\bar{A}),\\
\Upsilon_{A}=0&\longrightarrow \mathsf{J}_{\bar{A}}\mathsf{I}_{A}=0\,\,(A\in\six).
\eea

\item \textbf{Tetrahedron instanton} \cite{Pomoni:2021hkn,Fasola:2023ypx}: Roughly speaking, this is a setup with four copies of \eqref{eq:D6-ADHM}. We introduce four vector spaces $\bfN_{\bar{a}}=\mathbb{C}^{n_{\bar{a}}}\,(a\in\four)$ and $\bfK=\mathbb{C}^{k}$. The ADHM variables are $\mathsf{B}_{a}\in\operatorname{Hom}\,(\bfK,\bfK)$ and $\mathsf{I}_{\bar{a}}\in\operatorname{Hom}\,(\bfN_{\bar{a}},\bfK)$. We then introduce the following moment maps:
\bea
\mu_{A}&=[\mathsf{B}_{a},\mathsf{B}_{b}],\quad A=ab\,(a<b)\\
s_{A}&=\mu_{A}+\varepsilon_{A\bar{A}}\mu_{\bar{A}}^{\dagger}=[\mathsf{B}_{a},\mathsf{B}_{b}]+\frac{1}{2}\varepsilon_{abcd}[\mathsf{B}_{c}^{\dagger},\mathsf{B}_{d}^{\dagger}],\quad A=ab\\
\mu_{\mathbb{R}}&=\sum_{a\in\four}[\mathsf{B}_{a},\mathsf{B}_{a}^{\dagger}]+\sum_{a\in\four}\mathsf{I}_{\bar{a}}\mathsf{I}_{\bar{a}}^{\dagger}
\eea
where all of them belong to $\operatorname{Hom}\,(\bfK,\bfK)$. The moment map equations are given 
\bea\label{eq:D6KKcondition}
\{\mu_{\mathbb{R}}=\zeta\cdot 1_{k}\}/\U(k)\,\, (\zeta>0),\quad \{s_{A}=0\}/\U(k). 
\eea
We also have the contributions:
\bea\label{eq:D6NKcondition}
\{\sigma_{\bar{a}}=0\}_{a\in\four}/\U(k),\quad \sigma_{\bar{a}}=\mathsf{B}_{a}\mathsf{I}_{\bar{a}}\in\operatorname{Hom}\,(\bfN_{\bar{a}},\bfK).
\eea
The instanton moduli space of the tetrahedron instanton system is defined as 
\bea\label{eq:tetraADHM}
\mathfrak{M}_{\vec{n},k}=\left.\left\{(\vec{\mathsf{B}},\vec{\mathsf{I}}\,)\mid \mu_{\mathbb{R}}-\zeta\cdot 1_{k}=s_{A}=\sigma_{\bar{a}}=0\right\}\right/\U(k).
\eea
Note that similar to the spiked instanton case, using \eqref{eq:D6Ftermconverse}, the condition $s_{A}=0$ is replaced with a stronger condition $\mu_{A}=0$.

\end{itemize}

\paragraph{Symmetries}
Let us summarize the symmetries of the ADHM variables and moment maps. The derivation of these will be discussed in section~\ref{sec:2susy2d-flavorsymmetry}.
\begin{itemize}

\item Magnificent four:
\bea\label{eq:D8ADHMvariablesymmetry}
(\mathsf{B}_{a},\mathsf{I})_{a\in\four}&\longmapsto(g^{-1}\mathsf{B}_{a}g,g^{-1}\mathsf{I}),\quad g\in\U(k),\\
(\mathsf{B}_{a},\mathsf{I})&\longmapsto (\mathsf{B}_{a},\mathsf{I}h),\quad h\in\U(n),\\
(\mathsf{B}_{a},\mathsf{I})&\longmapsto (q_{a}\mathsf{B}_{a},\mathsf{I}),\quad q_{a}\in\U(1)^{3}
\eea
and
\bea\label{eq:D8ADHMconstrsymmetry}
s_{A}\longmapsto q_{A}g^{-1}s_{A}g,\quad g\in\U(k).
\eea

\item Tetrahedron instanton:
\bea\label{eq:D6gaugesymmetry}
(\mathsf{B}_{a},\mathsf{I}_{\bar{b}})_{a,b\in\four}&\longmapsto(g^{-1}\mathsf{B}_{a}g,g^{-1}\mathsf{I}_{\bar{b}}),\quad g\in\U(k),\\
(\mathsf{B}_{a},\mathsf{I}_{\bar{b}})&\longmapsto (\mathsf{B}_{a},\mathsf{I}_{\bar{b}}h_{\bar{b}}),\quad \underline{h}=(h_{\bar{a}})_{a\in\four}\in\prod_{a\in\four}\U(n_{\bar{a}}),\\
(\mathsf{B}_{a},\mathsf{I}_{\bar{b}})&\longmapsto (q_{a}\mathsf{B}_{a},\mathsf{I}_{\bar{b}}).
\eea
We also have
\bea\label{eq:D6ADHMconstrsymmetry}
s_{A}\longmapsto q_{A}g^{-1}s_{A}g,\quad \sigma_{\bar{a}}\longmapsto q_{a}g^{-1}\sigma_{\bar{a}}h_{\bar{a}}
\eea
where $g\in\U(k),\,h_{\bar{a}}\in\U(n_{\bar{a}})$.

\item Spiked instanton: We have
\bea\label{eq:instantongaugesymmetry}
(\mathsf{B}_{a},\mathsf{I}_{A},\mathsf{J}_{A})_{a\in\four,A\in\six}&\longmapsto (g^{-1}\mathsf{B}_{a}g,g^{-1}\mathsf{I}_{A},\mathsf{J}_{A}g),\quad g\in\U(k),\\
(\mathsf{B}_{a},\mathsf{I}_{A},\mathsf{J}_{A})&\longmapsto (\mathsf{B}_{a},\mathsf{I}_{A}h_{A},h_{A}^{-1}\mathsf{J}_{A}),\quad \underline{h}=(h_{A})_{A\in\six}\in \prod_{A\in\six}\U(n_{A}),\\
(\mathsf{B}_{a},\mathsf{I}_{A},\mathsf{J}_{A})&\longmapsto (q_{a}\mathsf{B}_{a}, \mathsf{I}_{A},q_{A}\mathsf{J}_{A})
\eea
and
\bea\label{eq:D4ADHMconstraintsymmetry}
s_{A}&\longmapsto q_{A}g^{-1}s_{A}g,\quad \sigma_{aA}\longmapsto q_{a}g^{-1}\sigma_{aA}h_{A},\quad \Upsilon_{A}\longmapsto q_{A}^{-1}h_{\bar{A}}^{-1}\Upsilon_{A}h_{A},
\eea
\end{itemize}

\section{Physical setup of gauge origami}\label{sec:gaugeorigamiphysicalsetup}
In this section, we will consider string theoretic realizations of instantons. We will generalize the concept of instantons in two directions. One is instantons of higher dimensional gauge theories and the other will be instantons of \textit{generalized gauge theories}. Both of these generalizations are naturally incorporated in the framework introduced by Nekrasov, which is called \textit{gauge origami}.

Roughly speaking, the gauge origami system is a gauge theory whose space-time $\mathcal{S}$ contains several (generally) intersecting components as
\bea
\mathcal{S}=\bigcup_{i}\mathcal{S}_{i}.
\eea
For each component $\mathcal{S}_{i}$, there is a gauge group $G_{i}$. Matter fields appear at the intersection $\mathcal{S}_{i}\cap \mathcal{S}_{j}$ and they transform under the gauge group $G_{i}\times G_{j}$ and thus they are bifundamental multiplets. In this sense, we can also understand them as generalized quiver gauge theories.

Let the ten-dimensional spacetime be $\mathbb{R}^{1,9}=\mathbb{R}^{1,1}\times \mathbb{R}^{8}$ and we write it as $\mathbb{R}^{1,1}\times \mathbb{C}^{4}$ by choosing a complex structure on $\mathbb{R}^{8}$. Generally, we can consider the spacetime to be $\mathbb{R}^{1,1}\times Z$, where $Z$ is a toric Calabi--Yau four-fold but we only focus on the case when $Z=\mathbb{C}^{4}$ in this thesis. The coordinates are assigned as 
\bea\label{eq:string-background}
\renewcommand{\arraystretch}{1.3}
\begin{tabular}{|c|c|c|c|c|c|c|c|c|c|}
\hline
 \multicolumn{2}{|c|}{$\mathbb{C}_{1}$} & \multicolumn{2}{c|}{$\mathbb{C}_{2}$} & \multicolumn{2}{c|}{$\mathbb{C}_{3}$} & \multicolumn{2}{c|}{$\mathbb{C}_{4}$} & \multicolumn{2}{c|}{$\mathbb{R}^{1,1}$} \\
\cline{1-10}   1 & 2 & 3 & 4& 5 & 6 & 7 & 8 & 9& 0\\
\cline{1-10} $x^{1}$ & $x^{2}$ & $x^{3}$ & $x^{4}$& $x^{5}$ & $x^{6}$& $x^{7}$ &$x^{8}$ & $x^{9}$& $x^{0}$\\
\hline \multicolumn{2}{|c|}{$z_{1}$} & \multicolumn{2}{c|}{$z_{2}$} & \multicolumn{2}{c|}{$z_{3}$} & \multicolumn{2}{c|}{$z_{4}$} & \multicolumn{2}{c|}{$x,t$} \\
\hline
\end{tabular}
\eea
where we denoted the real coordinates as $\{x^{i}\}_{i=0}^{9}$ and the four complex coordinates as $\{z_{a}\}_{a=1}^{4}$ with $z_{a}=x^{2a-1}+ix^{2a}$. 

There are four complex one-planes and three-planes which we denote as $\mathbb{C}_{a},\,\, \mathbb{C}_{\bar{a}}^{3}$ for $a\in\four,\bar{a}\in\four^{\vee}$, where
\bea
 \four=\{1,2,3,4\},\quad \four^{\vee}=\{123,124,134,234\}.
\eea
In later sections, we will frequently use the symmetry between the four variables $1\leftrightarrow 2 \leftrightarrow 3 \leftrightarrow 4$. We call this symmetry the \textit{quadrality}\footnote{Duality, triality, quadrality, pentality,...}.

The complement $\bar{A}$ of $A\in\six$ is defined for example as $A=12,\,\, \bar{A}=34$. Note that we have $\four\simeq \four^{\vee}$ under the map $a\in\four\leftrightarrow\bar{a}\in\four^{\vee} $. We introduce the set $\three$ as the quotient $\six/\sim$ where $A\sim\bar{A}$ is
\bea
    (12)\sim (34),\quad (13)\sim(24),\quad (23)\sim(14)
\eea
and choose the representative as $A=a4,\,\, a\in[3]$. We also use 
\bea
    A=(ab),\quad \text{sup}(A)=b,\quad \text{inf}(A)=a
\eea
for $a<b$ and introduce a lexicographic order as $12<13<14<23<24<34$. 
\begin{figure}[t]
\begin{minipage}[b]{0.45\linewidth}
    \centering
    \includegraphics[width=5cm]{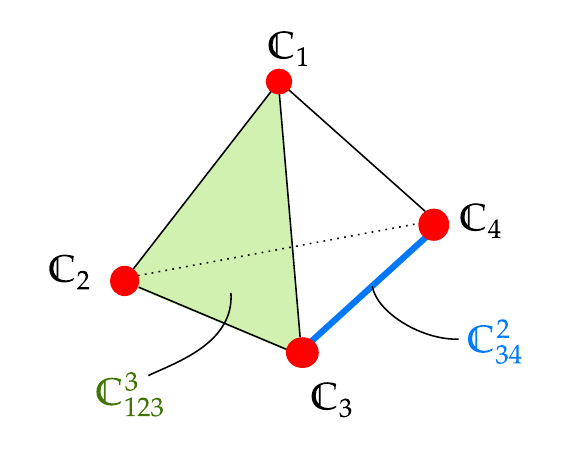}
\end{minipage}\hspace{1cm}{
\begin{minipage}[b]{0.45 \linewidth}
\centering
    \includegraphics[width=4cm]{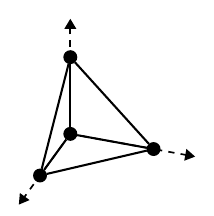}
    \vspace{1cm}
\end{minipage}}
    \caption{Left: The four vertices of the tetrahedron correspond to the $\mathbb{C}_{a}\,\,(a\in\four)$, the six edges connecting two vertices of the tetrahedron correspond to the $\mathbb{C}^{2}_{A}\,\,(A\in\six)$, the four faces surrounded by three vertices and the three edges connecting them correspond to the complex three-planes $\mathbb{C}^{3}_{\bar{a}}\,\,(a\in\four)$, and the whole tetrahedron correspond to the $\mathbb{C}^{4}_{\four}$. Right: The toric diagram of $\mathbb{C}^{4}$. The vertices correspond to the co-dimension one subvariety $\mathbb{C}^{3}$, while the edges and faces correspond to co-dimension two and three subvarieties $\mathbb{C}^{2},\mathbb{C}$, respectively. Both of the description are obtained by taking the dual of each polytopes.}
    \label{fig:complex}
\end{figure}
The six possible complex two-planes are denoted as $\mathbb{C}^{2}_{A}$ for $A\in\six$, where $\six=\{12,13,14,23,24,34\}$ corresponds to the set of the two element subsets of $\four$. All of this data can be summarized in a tetrahedron (see Figure \ref{fig:complex}).

The gauge origami system appears as the low energy limit of the world volume theory on multiple intersecting D-branes in type II string theory. Depending on which type of complex planes the D-branes wrap, different gauge origami systems appear. Generally, one may consider arbitrary D-branes to appear in the setup. However, generally the number of supersymmetries will not be enough and it will be difficult to compute physical observables explicitly. Thus, we want a nice setup that preserves suitable amount of supersymmetries. The way how to count supersymmetries will be explicitly discussed in section~\ref{sec:susyBfield}. 

The gauge origami system we will focus on this thesis is a setup where D-branes wrap the complex planes $\mathbb{C}^{2},\mathbb{C}^{3},\mathbb{C}^{4}$. Moreover, in particular, we are interested in instantons appearing in the theory. The instantons will appear as point like objects in the $\mathbb{C}^{4}$. Let us briefly summarize the setups that will be mainly discussed in this thesis.

\begin{itemize}[topsep=0pt, partopsep=0pt, itemsep=0pt]
\item Spiked instanton \cite{Nekrasov:2015wsu,Nekrasov:2016qym,Nekrasov:2016ydq,Nekrasov:2016gud}: This is a setup where the D5-branes wrap $\mathbb{C}^{2}_{A}\times \mathbb{R}^{1,1}\,(A\in\six)$ while the D1-branes wrap $\mathbb{R}^{1,1}$. The brane configuration is written as
\bea  \label{eq:2Bspikedinstanton}
\renewcommand{\arraystretch}{1.05}
\begin{tabular}{|c|c|c|c|c|c|c|c|c|c|c|}
\hline
& \multicolumn{2}{c|}{$\mathbb{C}_{1}$} & \multicolumn{2}{c|}{$\mathbb{C}_{2}$} & \multicolumn{2}{c|}{$\mathbb{C}_{3}$} & \multicolumn{2}{c|}{$\mathbb{C}_{4}$} & \multicolumn{2}{c|}{$\mathbb{R}^{1,1}$} \\
\cline{2-11} \raisebox{-0mm}{}  & 1 & 2 & 3 & 4& 5 & 6 & 7 & 8 & 9& 0\\[-2pt]
\hline D1& $\bullet$ & $\bullet$  & $\bullet$  & $\bullet$  & $\bullet$  & $\bullet$   & $\bullet$  & $\bullet$  & $-$   & $-$\\
\hline
$\D5_{12} $& $-$ & $-$ & $-$ & $-$ & $\bullet$ & $\bullet$ & $\bullet$ & $\bullet$ & $-$ & $-$ \\
\hline
\raisebox{-0mm}{$\D5_{13}$} & $-$ & $-$& $\bullet$ & $\bullet$  & $-$ & $-$ & $\bullet$ & $\bullet$ & $-$ & $-$ \\
\hline
$\D5_{14} $& $-$ & $-$  & $\bullet$ & $\bullet$ & $\bullet$ & $\bullet$& $-$ & $-$ & $-$ & $-$ \\
\hline $\D4_{23}$ & $\bullet$ & $\bullet$ & $-$ & $-$ & $-$ & $-$ & $\bullet$ & $\bullet$ & $-$ & $-$ \\
\hline
    \raisebox{0mm}{$\D5_{24}$} & $\bullet$ & $\bullet$ & $-$ & $-$ & $\bullet$ & $\bullet$ & $-$ & $-$ & $-$ & $-$ \\
\hline
$\D5_{34} $& $\bullet$ & $\bullet$ & $\bullet$ & $\bullet$& $-$ & $-$ & $-$ & $-$  & $-$ & $-$ \\
\hline
\end{tabular}
\eea
where $-$ means the brane extends in the corresponding direction and $\bullet$ means it is point-like. We may also take the T-duality in the $\mathbb{R}^{1,1}$ part and get intersecting D4-branes and D3-branes in type IIA and IIB string theory, respectively. The D1-branes will be D0 and $\D(-1)$-branes respectively under this transformation.

Depending on which D5-branes exist in the theory, we usually call them with different names (see Figure~\ref{fig:spiked-inst}). For the case when we only have a stack of D5$_{A}$ ($A\in\six$) branes, the D1-branes are the ordinary instantons. For the setup with D5-branes transverse with each other such as D5$_{12}$ and D5$_{34}$, we usually call them the \textit{crossed instantons}. For D5-brane configurations sharing a common $\mathbb{C}$-plane such as D5$_{12}$ and D5$_{23}$, we call them the \textit{folded instantons}. 
\begin{figure}
    \centering
    \includegraphics[width=13cm]{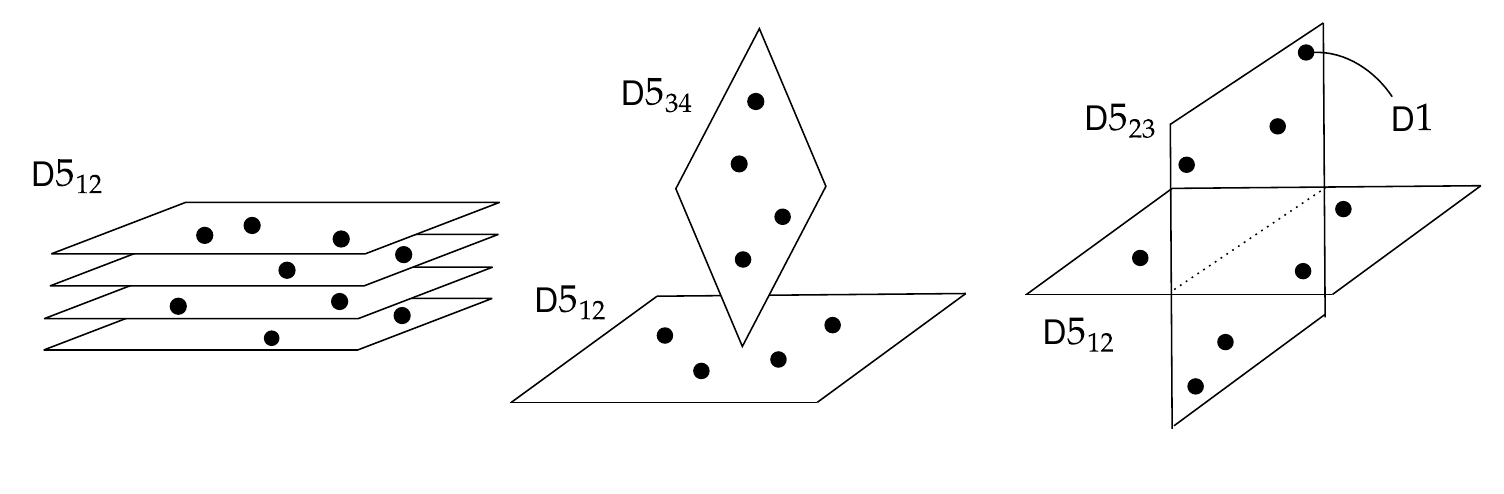}
    \caption{Spiked Instantons: The left gives the ordinary instantons, the middle gives the crossed instantons, and the right gives the folded instantons.}
    \label{fig:spiked-inst}
\end{figure}

\item Tetrahedron instanton \cite{Pomoni:2021hkn,Pomoni:2023nlf,Fasola:2023ypx}: This is a setup where the D7-branes wrap $\mathbb{C}^{3}_{\bar{a}}\times \mathbb{R}^{1,1}$ for $a\in\four$. The brane configuration is summarized as
\bea\label{eq:2Btetrahedroninstanton}
\renewcommand{\arraystretch}{1.05}
\begin{tabular}{|c|c|c|c|c|c|c|c|c|c|c|}
\hline
& \multicolumn{2}{c|}{$\mathbb{C}_{1}$} & \multicolumn{2}{c|}{$\mathbb{C}_{2}$} & \multicolumn{2}{c|}{$\mathbb{C}_{3}$} & \multicolumn{2}{c|}{$\mathbb{C}_{4}$} & \multicolumn{2}{c|}{$\mathbb{R}^{1,1}$} \\
\cline{2-11}  & 1 & 2 & 3 & 4& 5 & 6 & 7 & 8 & 9& 0\\
\hline D1& $\bullet$ & $\bullet$  & $\bullet$  & $\bullet$  & $\bullet$  & $\bullet$   & $\bullet$  & $\bullet$  & $-$   & $-$\\
\hline
$\D7_{123} $& $-$ & $-$ & $-$ & $-$ & $-$ & $-$ & $\bullet$ & $\bullet$ & $-$ & $-$ \\
\hline
$\D7_{124} $& $-$ & $-$& $-$ & $-$  & $\bullet$ & $\bullet$ & $-$ & $-$ & $-$ & $-$ \\
\hline
$\D7_{134} $& $-$ & $-$  & $\bullet$ & $\bullet$ & $-$ & $-$& $-$ & $-$ & $-$ & $-$ \\
\hline $\D7_{234}$ & $\bullet$ & $\bullet$ & $-$ & $-$ & $-$ & $-$ & $-$ & $-$ & $-$ & $-$ \\
\hline
\end{tabular}
\eea
and generally we have four stacks of D7-branes wrapping different complex three-planes in $\mathbb{C}^{4}$. Again, we can take T-duality in the $\mathbb{R}^{1,1}$ direction and obtain D0--D6 and $\D(-1)$--D5 setups.

\item Magnificent four \cite{Nekrasov:2017cih,Nekrasov:2018xsb,Nekrasov:2023nai}: The D9-branes wrap the entire $\mathbb{C}^{4}_{\four}\times \mathbb{R}^{1,1}$ and the D1-branes wrap $\mathbb{R}^{1,1}$:
\bea\label{eq:2Bmagnificentfour}
\renewcommand{\arraystretch}{1.05}
\begin{tabular}{|c|c|c|c|c|c|c|c|c|c|c|}
\hline
& \multicolumn{2}{c|}{$\mathbb{C}_{1}$} & \multicolumn{2}{c|}{$\mathbb{C}_{2}$} & \multicolumn{2}{c|}{$\mathbb{C}_{3}$} & \multicolumn{2}{c|}{$\mathbb{C}_{4}$} & \multicolumn{2}{c|}{$\mathbb{R}^{1,1}$} \\
\cline{2-11}  & 1 & 2 & 3 & 4& 5 & 6 & 7 & 8 & 9& 0\\
\hline D1& $\bullet$ & $\bullet$  & $\bullet$  & $\bullet$  & $\bullet$  & $\bullet$   & $\bullet$  & $\bullet$  & $-$   & $-$\\
\hline
$\D9 $& $-$ & $-$ & $-$ & $-$ & $-$ & $-$ & $-$ & $-$ & $-$ & $-$ \\
\hline
\end{tabular}
\eea
Similarly, T-duality gives the D0--D8 and $\D(-1)$--D7 setups.
\end{itemize}


\section{Intersecting D-branes and supersymmetries}\label{sec:susyBfield}
Naively, the setups \eqref{eq:2Bspikedinstanton}, \eqref{eq:2Btetrahedroninstanton}, \eqref{eq:2Bmagnificentfour} seem to not preserve supersymmetry. However, it seems that after introducing a suitable constant Neveu--Schwarz $B$-field to the system, all of the systems will preserve at least two supersymmetries. Moreover, under such situation, supersymmmetric localization will be applicable (see Chap.~\ref{chap:gaugeorigamipartitionfunction}). In this section, let us derive the number of preserved supersymmetries for each gauge origami setup. We focus on the type IIB string theory description with all the D-branes wrapping the two-dimensional $\mathbb{R}^{1,1}$ space. Dimensional reduction by taking the T-duality can be done similarly and the result will be the same.

Let us summarize the notations of the spinor representations we use in this section. For details, see Appendix~\ref{app:spinor_susy}. The Dirac matrix obeys
\bea
\{\Gamma^{\mu},\Gamma^{\nu}\}=2\eta^{\mu\nu},\quad \mu,\nu=0,1,\ldots,9
\eea
for $\eta^{\mu\nu}=\operatorname{diag}(-1,+1,\ldots,+1)$. We choose the matrix expression
\bea
\Gamma^{1}&=\sigma_{1}\otimes 1\otimes 1\otimes 1\otimes 1,\quad\qquad  \Gamma^{2}=\sigma_{2}\otimes 1\otimes 1\otimes 1\otimes 1 \\
\Gamma^{3}&=\sigma_{3}\otimes \sigma_{1}\otimes 1\otimes 1\otimes 1,\qquad \,\,\,\Gamma^{4}=\sigma_{3}\otimes \sigma_{2}\otimes 1\otimes 1\otimes 1,\\
\Gamma^{5}&=\sigma_{3}\otimes \sigma_{3}\otimes \sigma_{1}\otimes 1\otimes 1,\qquad \Gamma^{6}=\sigma_{3}\otimes \sigma_{3}\otimes \sigma_{2}\otimes 1\otimes 1,\\
\Gamma^{7}&=\sigma_{3}\otimes \sigma_{3}\otimes \sigma_{3}\otimes \sigma_{1}\otimes 1,\quad\,\,\, \Gamma^{8}=\sigma_{3}\otimes \sigma_{3}\otimes \sigma_{3}\otimes \sigma_{2}\otimes 1\\
\Gamma^{9}&=\sigma_{3}\otimes \sigma_{3}\otimes \sigma_{3}\otimes \sigma_{3}\otimes \sigma_{1},\quad\, \Gamma^{0}=\sigma_{3}\otimes \sigma_{3}\otimes \sigma_{3}\otimes \sigma_{3}\otimes (i\sigma_{2}).
\eea
Under this expression, the chirality matrix $\oGamma$ is defined as
\bea
\oGamma=\Gamma^{0}\Gamma^{1}\cdots \Gamma^{9}=\sigma_{3}\otimes \sigma_{3}\otimes \sigma_{3}\otimes \sigma_{3}\otimes \sigma_{3}.
\eea
Since the D-branes appearing in the gauge origami system wrap $\mathbb{R}^{1,1}$ and the complex planes $\mathbb{C}_{a}\,(a\in\four)$, it is convenient to use the bases diagonalized by 
\bea\label{eq:spin-operator-def}
(S_{0},S_{1},S_{2},S_{3},S_{4})\coloneqq(\Gamma^{0}\Gamma^{9},-i\Gamma^{1}\Gamma^{2},-i\Gamma^{3}\Gamma^{4},-i\Gamma^{5}\Gamma^{6},-\Gamma^{7}\Gamma^{8}).
\eea 
The eigenstates are denoted as $\ket{\vec{s}\,}=\ket{s_{0},s_{1},s_{2},s_{3},s_{4}}$ with 
\bea
S_{a}\ket{\vec{s}\,}=s_{a}\ket{\vec{s}\,},\quad s_{a}\in\{\pm 1\}.
\eea
For later use, we also shortly write the product of Gamma matrices as $\Gamma^{A}=\prod_{a\in A}\Gamma^{a}$ for a subset $A\subseteq\{0,1,\ldots,9\}$. We will also sometimes omit the factor $1$ and just write the signs when we discuss on the eigenvalues $\{s_{a}\}_{a=0,1,2,3,4}$.

In type IIB string theory, the supercharges originating from left and right-moving world sheet degrees of freedom transform as spinors with the same chirality: $\sum_{\alpha}\xi_{\sL,\alpha}Q^{\alpha}+\sum_{\beta}\xi_{\sR,\beta}\widetilde{Q}^{\beta}$. We then have
\bea
\oGamma\xi_{\sL}=\xi_{\sL},\quad \oGamma\xi_{\sR}=\xi_{\sR}.
\eea
The conditions for supersymmetries preserved in D-branes are summarized in Appendix~\ref{app:spinor_susy}. Let us study the situation when we only have a D-brane wrapping one of the complex planes in $\mathbb{C}^{4}$ with a D1-brane bound to them.

\paragraph{D1--D5 system}Consider the situation with D5$_{12}$ and D1-branes. The supersymmetric condition is
\bea
\D1(09):&\quad \Gamma^{09}\xi_{\sR}=\xi_{\sL},\\
\D5_{12}(091234):&\quad \Gamma^{09}\Gamma^{1234}\xi_{\sR}=\xi_{\sL}
\eea
where each condition gives $16$ supercharges. Imposing both conditions give $\Gamma^{1234}\xi_{\sL}=\xi_{\sL}$. Since $(\Gamma^{1234})^{2}=1$, the spinor belongs to the eigenspace with eigenvalue $+1$ and the possible $\xi_{\sL}$ is 8-dimensional. The other $\xi_{\sR}$ is automatically determined by the above condition. Therefore, this setup preserves 8 supercharges and it is supersymmetric.

\paragraph{D1--D7 system} Consider the situation with $\D7_{123}$ and D1-branes. The supersymmetric condition is
\bea
\D1(09):&\quad \Gamma^{09}\xi_{\sR}=\xi_{\sL},\\
\D7_{123}(09123456):&\quad \Gamma^{09}\Gamma^{123456}\xi_{\sR}=\xi_{\sL}
\eea
giving $\Gamma^{123456}\xi_{\sL}=\xi_{\sL}$. However, since $(\Gamma^{123456})^{2}=-1$, we have no such kind of spinors and thus this setup is not supersymmetric.

\paragraph{D1--D9 system}The $\D9_{\four}$-brane and D1-brane gives
\bea
\D1(09):&\quad \Gamma^{09}\xi_{\sR}=\xi_{\sL},\\
\D9_{\four}(0912345678):&\quad \Gamma^{09}\Gamma^{12345678}\xi_{\sR}=\xi_{\sL}
\eea
giving $\Gamma^{12345678}\xi_{\sL}=\xi_{\sL}$. Similarly, we have $(\Gamma^{12345678})^{2}=+1$ and we have a spinor obeying the above conditions and thus it is supersymmetric.

\paragraph{Supersymmetry under NS $B$-fields}The D1--D7 system above is not supersymmetric. Witten showed that after turning on a suitable constant NS $B$-field, the D1--D7 system can gain supersymmetries~\cite{Witten:2000mf}. We turn on a constant $B$-field along $\mathbb{C}^{4}$ taking the form
\bea\label{eq:B-field}
B=\sum_{a\in\four}b_{a}dx^{2a-1}\wedge dx^{2a},\quad b_{a}\in\mathbb{R}
\eea
and define
\bea
e^{2\pi i v_{a}}=\frac{1+ib_{a}}{1-ib_{a}},\quad -\frac{1}{2}<v_{a}<\frac{1}{2}.
\eea
Turning on this $B$-field effectively rotates the D-brane by the angles $2\pi v_{a}$ and the conditions on the spinors will be modified. For example, a D-brane extending in the $(2a-1,2a)$-plane and rotated with the angle $\omega$ gives
\bea\label{eq:Brotsusy}
\Gamma^{A}\Gamma^{2a-1,2a}\exp\left(\frac{\omega}{2}\Gamma^{2a-1,2a}\right)\xi_{\sR}=\xi_{\sL},
\eea
where $A$ is the directions except $(2a-1,2a)$ where the D-brane is extending. Using
\bea
\exp\left(\frac{i\omega}{2}\left(-i\Gamma^{2a-1,2a}\right)\right)=\cos(\omega/2)+\sin(\omega/2)\Gamma^{2a-1,2a}=\cos(\omega/2)+i\sin(\omega/2)S_{a},
\eea
we have $\exp\left(\frac{\pi}{2}\Gamma^{2a-1,2a}\right)=\Gamma^{2a-1,2a}$. The condition \eqref{eq:Brotsusy} is then translated to
\bea
\Gamma^{A}\exp\left(\frac{\omega+\pi}{2}\Gamma^{2a-1,2a}\right)\xi_{\sR}=\xi_{\sL}.
\eea
Since the rotation angle is $2\pi v_{a}$ now, after defining
\bea
R_{a}(\theta)=\exp\left(\theta_{a} \Gamma^{2a-1}\Gamma^{2a}\right),\quad \theta_{a}=\pi\left(v_{a}+\frac{1}{2}\right)\in(0,\pi)
\eea
for $a\in\four$, the condition for a D-brane extending in a complex plane $A\in\subseteq\{1,2,3,4\}$ comes from
\bea
R_{A}(\theta)=\prod_{a\in A}R_{a}(\theta)=\exp\left(\sum_{a\in A}\theta_{a}\Gamma^{2a-1}\Gamma^{2a}\right).
\eea

Using this, let us revisit the D1--D7 system. The supersymmetric condition is now modified to 
\bea
R_{123}(\theta)\xi_{\sL}=\xi_{\sL}.
\eea
Using the bases $\ket{\vec{s}\,}$, this condition is 
\bea\label{eq:D1-D7Bfield_onestack}
\exp\left(is_{1}\theta_{1}+is_{2}\theta_{2}+is_{3}\theta_{3}\right)=+1
\eea
giving $\sum_{a\in\bar{4}}s_{a}\theta_{a}\in 2\pi\mathbb{Z}$. Note also that since the chirality of $\xi_{\sL}$ is positive, the other condition is 
\bea\label{eq:chirality_cond2B}
s_{0}s_{1}s_{2}s_{3}s_{4}=+1.
\eea
 When $\{\theta_{a}\}_{a\in\four}$ are generic we have four supersymmetries. Given a $(s_{1},s_{2},s_{3})$ obeying~\eqref{eq:D1-D7Bfield_onestack}, $(-s_{1},-s_{2},-s_{3})$ is also a solution. The other condition \eqref{eq:chirality_cond2B} determines the product $s_{0}s_{4}$ uniquely and we have two choices of $(s_{0},s_{4})$. Therefore, generically, we will have four supersymmetries. When special conditions are imposed to the $B$-field parameters $\{\theta_{a}\}_{a\in\four}$, enhancement of supersymmetries may occur but we do not discuss such cases.

Explicitly, for example, we have the following solutions.
\begin{itemize}[topsep=0pt, partopsep=0pt, itemsep=0pt]
\item When $\theta_{1}+\theta_{2}+\theta_{3}=2\pi$, we have for example $(s_{1},s_{2},s_{3})=(+,+,+),(-,-,-)$. Using \eqref{eq:chirality_cond2B}, the other signs are determined as
\bea
(s_{0},s_{1},s_{2},s_{3},s_{4})=(+,+,+,+,+),\,\,(-,+,+,+,-),\,\,(+,-,-,-,-),\,\,(-,-,-,-,+)
\eea
which gives 4 supersymmetries.
\item When $\theta_{1}-\theta_{2}-\theta_{3}=0$, the solutions are
\bea
(s_{0},s_{1},s_{2},s_{3},s_{4})=(+,+,-,-,+),\,\,(-,+,-,-,-),\,\,(+,-,+,+,-),\,\,(-,-,+,+,+)
\eea
which also gives 4 supersymmetries. Other cases such as $-\theta_{1}+\theta_{2}-\theta_{3}=0$ are obtained by using the triality.
\end{itemize}
In terms of $\{v_{a}\}_{a\in\four}$, the condition is written as $\pm v_{1}\pm v_{2}\pm v_{3}=\pm1/2$. Note that when the $B$-field is absent, i.e. $v_{a}=0$ and $\theta_{a}=\pi/2$, none of the above conditions is satisfied and it is compatible with the previous discussion that the system is not supersymmetric.

\subsection{Magnificent four}\label{eq:M4-susycount}
Let us consider the magnificent four system with constant $B$-field. The supersymmetric condition is modified to
\bea
\D1(09):&\quad \Gamma^{09}\xi_{\sR}=\xi_{\sL},\\
    \D9_{\four}(0912345678):&\quad \Gamma^{09}R_{1234}(\theta)\xi_{\sR}=\xi_{\sL}
\eea
which gives $R_{1234}(\theta)\xi_{\sL}=\xi_{\sL}$. Using the basis $\ket{\vec{s}\,}$, the condition is translated to
\bea\label{eq:M4susycond}
\exp\left(i\sum_{a\in\four}s_{a}\theta_{a}\right)=+1,\quad \sum_{a\in\four}s_{a}\theta_{a}\in 2\pi\mathbb{Z}.
\eea
Since $\theta_{a}<\pi$, we have the following cases:
\bea
\sum_{a\in\four}s_{a}\theta_{a}=2\pi,0,-2\pi.
\eea
For generic $\theta_{a}$, if $(s_{1},s_{2},s_{3},s_{4})$ obeys the condition \eqref{eq:M4susycond}, then $(-s_{1},-s_{2},-s_{3},-s_{4})$ also obeys the condition \eqref{eq:M4susycond}. For each cases, the remaining $s_{0}$ is automatically determined by \eqref{eq:D1-D7Bfield_onestack} and thus we have two supersymmetries. Note that when there is no $B$-field with $\theta_{a}=\pi/2\,\,(a\in\four)$, there are solutions obeying the conditions and thus it is supersymmetric as mentioned before.

\subsection{Tetrahedron instanton}\label{sec:tetra-susycount}
As mentioned before, the D1--D7$_{123}$ system is supersymmetric with $B$-field. Let us place other D7-branes in the setup as \eqref{eq:2Btetrahedroninstanton} and analyze the condition for unbroken supersymmetry following \cite{Pomoni:2021hkn}. The conditions are
\bea
\D1:&\quad \Gamma^{09}\xi_{\sR}=\xi_{\sL},\\
\D7_{\bar{a}}:&\quad \Gamma^{09}R_{\bar{a}}(\theta)\xi_{\sR}=\xi_{\sL},\quad \forall a\in\four.
\eea
The conditions from D1--D7 gives
\bea
R_{\bar{a}}(\theta)\xi_{\sL}=\xi_{\sL},\quad a\in\four
\eea
and the conditions from $\D7_{\bar{a}}$--$\D7_{\bar{b}}$ for $a\neq b$ gives
\bea
R_{\bar{a}}(\theta)\xi_{\sL}=R_{\bar{a}}(\theta)\xi_{\sL}.
\eea
Using the bases $\ket{\vec{s}\,}$, we have
\bea
\exp\left(i\sum_{\alpha\in\bar{a}}s_{\alpha}\theta_{\alpha}\right)&=+1\,\, (\forall a\in\four),\quad \exp\left(i\sum_{\alpha\in\bar{a}}s_{\alpha}\theta_{\alpha}\right)=\exp\left(i\sum_{\beta\in\bar{b}}s_{\beta}\theta_{\beta}\right)\,\,(\forall a\neq b \in\four).
\eea
Since any two stacks of D7-branes share a complex two-plane, the second condition is translated to 
\bea
\exp\left(is_{a}\theta_{a}-is_{b}\theta_{b}\right)=+1,\quad \forall a\neq b.
\eea
This condition gives $\theta_{a}=\theta_{b}$ because $\theta_{a}\in(0,\pi)$ and $s_{a}\theta_{a}-s_{b}\theta_{b}=0$. Moreover, we need $s_{a},s_{b}$ to be the same signs. Thus, the possible solution is only
\bea
\theta_{1}=\theta_{2}=\theta_{3}=\theta_{4}=\frac{2\pi}{3}
\eea
and 
\bea
s_{1}=s_{2}=s_{3}=s_{4}=\pm 1,\quad s_{0}=1
\eea
where $s_{0}$ is automatically determined by \eqref{eq:chirality_cond2B}. Therefore, we have two supersymmetries in this setup. 

\subsection{Spiked instanton}\label{sec:spiked-susycount}
Let us discuss the spiked instanton setup. Since the D1-D5 setup preserves supersymmetry in the absence of $B$-field, let us first consider if we can have a supersymmetric system without the $B$-field. The analysis of this setup was done in \cite{Nekrasov:2016gud} and this part is based on it.

Including the six stacks of D5-branes give the following conditions
\bea
\D1:&\quad  \Gamma^{09}\xi_{\sR}=\xi_{\sL},\\
\D5_{A}\,\,(A\in\six):&\quad \Gamma^{09}\prod_{a\in A}\Gamma^{2a-1}\Gamma^{2a}\xi_{\sR}=\xi_{\sL}. 
\eea
Explicitly, we have
\bea
\Gamma^{1234}\xi_{\sL}=\xi_{\sL},\quad \Gamma^{1256}\xi_{\sL}=\xi_{\sL},\quad \Gamma^{1278}\xi_{\sL}=\xi_{\sL},\quad \Gamma^{3456}\xi_{\sL}=\xi_{\sL},\quad \Gamma^{3478}\xi_{\sL}=\xi_{\sL},\quad \Gamma^{5678}\xi_{\sL}=\xi_{\sL}.
\eea
All of these conditions can not be satisfied because for example the first and second conditions give $\Gamma^{3456}\xi_{\sL}=-\xi_{\sL}$, which breaks the fourth condition.

\paragraph{Four D5-branes and two anti D5-branes}
One way to resolve this situation is to include anti D5-branes to the setup us
\bea\label{eq:2Bspikedinstanton-mod}
\renewcommand{\arraystretch}{1.05}
\begin{tabular}{|c|c|c|c|c|c|c|c|c|c|c|}
\hline
& \multicolumn{2}{c|}{$\mathbb{C}_{1}$} & \multicolumn{2}{c|}{$\mathbb{C}_{2}$} & \multicolumn{2}{c|}{$\mathbb{C}_{3}$} & \multicolumn{2}{c|}{$\mathbb{C}_{4}$} & \multicolumn{2}{c|}{$\mathbb{R}^{1,1}$} \\
\cline{2-11}  & 1 & 2 & 3 & 4& 5 & 6 & 7 & 8 & 9& 0\\
\hline D1& $\bullet$ & $\bullet$  & $\bullet$  & $\bullet$  & $\bullet$  & $\bullet$   & $\bullet$  & $\bullet$  & $-$   & $-$\\
\hline
$\D5_{12} $& $-$ & $-$ & $-$ & $-$ & $\bullet$ & $\bullet$ & $\bullet$ & $\bullet$ & $-$ & $-$ \\
\hline
\raisebox{-0.6mm}{$\overline{\D5}_{13}$} & $-$ & $-$& $\bullet$ & $\bullet$  & $-$ & $-$ & $\bullet$ & $\bullet$ & $-$ & $-$ \\
\hline
$\D5_{14} $& $-$ & $-$  & $\bullet$ & $\bullet$ & $\bullet$ & $\bullet$& $-$ & $-$ & $-$ & $-$ \\
\hline $\D5_{23}$ & $\bullet$ & $\bullet$ & $-$ & $-$ & $-$ & $-$ & $\bullet$ & $\bullet$ & $-$ & $-$ \\
\hline
    \raisebox{-0.6mm}{$\overline{\D5}_{24}$} & $\bullet$ & $\bullet$ & $-$ & $-$ & $\bullet$ & $\bullet$ & $-$ & $-$ & $-$ & $-$ \\
\hline
$\D5_{34} $& $\bullet$ & $\bullet$ & $\bullet$ & $\bullet$& $-$ & $-$ & $-$ & $-$  & $-$ & $-$ \\
\hline
\end{tabular}
\eea
This system was introduced and studied in \cite{Nekrasov:2016gud}. Since the supersymmetry condition for the anti D-branes reverse the signs as $\Gamma^{01\cdots p}\xi_{\sR}=-\xi_{\sL}$, the conditions are written as
\bea
\Gamma^{1234}\xi_{\sL}=\xi_{\sL},\quad \Gamma^{1256}\xi_{\sL}=-\xi_{\sL},\quad \Gamma^{1278}\xi_{\sL}=\xi_{\sL},\quad \Gamma^{3456}\xi_{\sL}=\xi_{\sL},\quad \Gamma^{3478}\xi_{\sL}=-\xi_{\sL},\quad \Gamma^{5678}\xi_{\sL}=\xi_{\sL}.
\eea
This can be also written as
\bea
\prod_{a\in A}\Gamma^{2a-1,2a}\xi_{\sL}=\varepsilon_{A\bar{A}}\xi_{\sL}
\eea
where $\varepsilon_{1234}=+1$ is the anti-symmetric tensor and for example $\varepsilon_{(12)(34)}=+1$. The first and second conditions now give $\Gamma^{3456}\xi_{\sL}=\xi_{\sL}$ because of the extra sign coming from the anti-brane contribution. Similarly,the fifth and sixth condition comes from the first and third, the second and third conditions, respectively and thus we only have three independent conditions.

Taking the first three conditions to be the three independent conditions, using the operators \eqref{eq:spin-operator-def}, the conditions are
\bea
S_{1}S_{2}\,\xi_{\sL}=-\xi_{\sL},\quad S_{1}S_{3}\,\xi_{\sL}=+\xi_{\sL},\quad S_{1}S_{4}\,\xi_{\sL}=-\xi_{\sL},
\eea
where note that the extra $i$ factor gives a flip of the sign. Using the bases $\xi_{\sL}=\ket{\vec{s}\,}$, the conditions are
\bea
s_{1}s_{2}=-1,\quad s_{1}s_{3}=+1,\quad s_{1}s_{4}=-1,
\eea
which gives
\bea
(s_{1},s_{2},s_{3},s_{4})=(+,-,+,-),\,\,(-,+,-,+).
\eea
Moreover, using \eqref{eq:chirality_cond2B}, $s_{0}$ is uniquely determined and we have
\bea
(s_{0},s_{1},s_{2},s_{3},s_{4})=(+,+,-,+,-),\,(+,-,+,-,+).
\eea
In conclusion, this setup preserves two supersymmetries. 

\paragraph{4 SUSY system}
As an exercise, let us also discuss on branes systems with 4 supersymmetries. Depending on the brane systems, the preserved 4 supersymmetries will change but in any cases, we have a common two supersymmetry taking the form $(s_{0},s_{1},s_{2},s_{3},s_{4})=(+,+,-,+,-),(+,-,+,-,+)$.
\begin{itemize}[topsep=0pt, partopsep=0pt, itemsep=0pt]
\item  D5-branes wrapping complex two-planes inside $\mathbb{C}^{3}_{123}$: Namely, we have $\D5_{12},\overline{\D5}_{13},\D5_{23}$-branes and the conditions are
\bea
\Gamma^{1234}\xi_{\sL}=\xi_{\sL},\quad \Gamma^{1256}\xi_{\sL}=-\xi{\sL},\quad \Gamma^{3456}\xi_{\sL}=\xi_{\sL},
\eea
and using \eqref{eq:spin-operator-def} gives
\bea
S_{1}S_{2}\,\xi_{\sL}=-\xi_{\sL},\quad S_{1}S_{3}\,\xi_{\sL}=+\xi_{\sL},\quad S_{2}S_{3},\xi_{\sL}=-\xi_{\sL}.
\eea
Since the third condition is automatically satisfied when the first and second conditions are satisfied, we have $s_{1}s_{2}=-1,\,\,s_{1}s_{3}=+1$ giving $(s_{1},s_{2},s_{3})=(+,-,+),\,(-,+,-)$. The condition \eqref{eq:chirality_cond2B} only determines the product $s_{0}s_{4}$ and we have two possibilities for each case:
\bea
(s_{0},s_{1},s_{2},s_{3},s_{4})=(+,+,-,+,-),\,\,(-,+,-,+,+),\,\,(+,-,+,-,+),\,\,(-,-,+,-,-).
\eea
Therefore, this system preserves four supersymmetries in total. From the 2d theory on $\mathbb{R}^{1,1}$ viewpoint, this gives a 2d $\mathcal{N}=(2,2)$ theory, which can be obtained from the dimensional reduction of a 4d $\mathcal{N}=1$ theory.

\item Transverse D5-branes: A different situation when we have four supersymmetries is when we have two stacks of D-branes wrapping the complex two-planes $\mathbb{C}^{2}_{12},\,\,\mathbb{C}^{2}_{34}$:
\bea
\Gamma^{1234}\xi_{\sL}=\xi_{\sL},\quad \Gamma^{5678}\xi_{\sL}=\xi_{\sL},
\eea
which is equivalent to
\bea
S_{1}S_{2}\,\xi_{\sL}=-\xi_{\sL},\quad S_{3}S_{4}\xi_{\sL}=-\xi_{\sL}.
\eea
These give $s_{1}s_{2}=-1$ and $s_{3}s_{4}=-1$ which automatically give $s_{0}=+1$. The possible choices are then
\bea
(s_{0},s_{1},s_{2},s_{3},s_{4})=(+,+,-,+,-),\,(+,+,-,-,+),\,(+,-,+,+,-),\,(+,-,+,-,+),
\eea
which gives four supersymmetries. From the viewpoint of the 2d theory, all of them have positive chiralities and then it is a 2d $(0,4)$ supersymmetric field theory.

\item Transverse anti D5-branes: The other situation when we have anti D5-branes wrapping transverse complex two-planes $\mathbb{C}^{2}_{13}$ and $\mathbb{C}^{2}_{24}$ give
\bea
\Gamma^{1256}\xi_{\sL}=-\xi_{\sL},\quad \Gamma^{3478}\xi_{\sL}=-\xi_{\sL}
\eea
which is
\bea
S_{1}S_{3}\,\xi_{\sL}=\xi_{\sL},\quad S_{2}S_{4}\,\xi_{\sL}=\xi_{\sL}
\eea
and $s_{1}s_{3}=+1,\,s_{2}s_{4}=+1$. Therefore, we have four supersymmetries
\bea
(s_{0},s_{1},s_{2},s_{3},s_{4})=(+,+,+,+,+),\,(+,-,+,-,+),\,(+,+,-,+,-),\,(+,-,-,-,-).
\eea
Again, from the 2d theory viewpoint, this theory also gives a 2d $\mathcal{N}=(0,4)$ theory.
\end{itemize}

\paragraph{Inclusion of $B$-field}
Following the magnificent four and tetrahedron instanton system, let us include a constant $B$-field and see what will happen. We first consider the case when we have only one $\D5_{12}$-brane. The condition will be 
\bea
\D1(09):&\quad \Gamma^{09}\xi_{\sR}=\xi_{\sL},\\
\D5_{12}(091234):&\quad \Gamma^{09}R_{12}(\theta)\xi_{\sR}=\xi_{\sL}
\eea
which gives $R_{12}(\theta)\xi_{\sL}=\xi_{\sL}$:
\bea
\exp\left(is_{1}\theta_{1}+is_{2}\theta_{2}\right)=+1.
\eea
Since $\theta_{1,2}\in(0,\pi)$, the possible solution is only $s_{1}\theta_{1}+s_{2}\theta_{2}=0$, which gives $(s_{1},s_{2})=(+,-),(-,+)$ and $\theta_{1}=\theta_{2}$. In terms of $\{v_{a}\}$, this is $v_{1}-v_{2}=0$.

Note that if we use the anti D5-brane, the condition is $R_{12}(\theta)\xi_{\sL}=-\xi_{\sL}$:
\bea
\exp\left(is_{1}\theta_{1}+is_{2}\theta_{2}\right)=-1
\eea
whose solution is $s_{1}\theta_{1}+s_{2}\theta_{2}=\pm \pi$. The solution of this is $\theta_{1}+\theta_{2}=\pi$ with $(s_{1},s_{2})=(+,+),(-,-)$. Moreover, in terms of $\{v_{a}\}$, this condition is $v_{1}+v_{2}=0$, which translates to $b_{1}+b_{2}=0$. This means that the $B$-field is anti-self-dual.

Let us consider next the situation when we have the six D5-branes with no anti-branes. The supersymmetric condition is modified as
\bea
\D1:&\quad \Gamma^{09}\xi_{\sR}=\xi_{\sL},\\
\D5_{A}:&\quad \Gamma^{09}R_{A}(\theta)\xi_{\sR}=\xi_{\sL},\quad \forall A\in\six.
\eea
The D1--D5$_{A}$ gives $R_{A}(\theta)\xi_{\sL}=\xi_{\sL}$ and the $\D5_{A}$--$\D5_{B}$ ($A\neq B$) gives $R_{A}(\theta)\xi_{\sL}=R_{B}(\theta)\xi_{\sL}$. Using the bases $\ket{\vec{s}\,}$, the solution is
\bea
    \exp\left(is_{a}\theta_{a}+is_{b}\theta_{b}\right)=+1,\quad \forall (ab)\in\six.
\eea
This has no solution unless $\theta_{a}=0,\pi$ for all $a\in\four$ and thus turning on a finite $B$-field does not make the brane configuration supersymmetric.

Let us then include the anti D5-branes to the setup. The condition is then rewritten as
\bea
R_{A}(\theta)\xi_{\sL}=\varepsilon_{A\bar{A}}\xi_{\sL}.
\eea
This gives
\bea
\exp\left(is_{a}\theta_{a}+is_{b}\theta_{b}\right)=\varepsilon_{ab\overline{ab}},\quad (ab)\in\six.
\eea
Explicitly, we have
\bea
\exp(is_{1}\theta_{1}+is_{2}\theta_{2})=+1,\quad \exp(is_{1}\theta_{1}+is_{3}\theta_{4})=-1,\\
\exp(is_{1}\theta_{1}+is_{4}\theta_{4})=+1,\quad \exp(is_{2}\theta_{2}+is_{3}\theta_{3})=+1,\\
\exp(is_{2}\theta_{2}+is_{4}\theta_{4})=-1,\quad \exp(is_{3}\theta_{3}+is_{4}\theta_{4})=+1,
\eea
which gives
\bea
s_{1}\theta_{1}+s_{2}\theta_{2}=0,\quad s_{1}\theta_{1}+s_{3}\theta_{3}=\pm\pi,\\
s_{1}\theta_{1}+s_{4}\theta_{4}=0,\quad s_{2}\theta_{2}+s_{3}\theta_{3}=0,\\
s_{2}\theta_{2}+s_{4}\theta_{4}=\pm \pi,\quad s_{3}\theta_{3}+s_{4}\theta_{4}=0.
\eea
The only solution is $\theta_{a}=\pi/2$ which reduce to the situation with no $B$-field. At the end we have two supersymmetries.

\begin{remark}
Actually there is a different setup that preserves supersymmetry which somehow we lack discussions in the context of gauge origami. It is a setup when we have six stacks of anti D5-branes:
\bea
\Gamma^{09}\prod_{a\in A}\Gamma^{2a-1}\Gamma^{2a}\xi_{\sR}=-\xi_{\sL}.
\eea
Including the D1-branes gives
\bea
\Gamma^{1234}\xi_{\sL}=-\xi_{\sL},\quad \Gamma^{1256}\xi_{\sL}=-\xi_{\sL},\quad \Gamma^{1278}\xi_{\sL}=-\xi_{\sL},\quad \Gamma^{3456}\xi_{\sL}=-\xi_{\sL},\quad \Gamma^{3478}\xi_{\sL}=-\xi_{\sL},\quad \Gamma^{5678}\xi_{\sL}=-\xi_{\sL}.
\eea
These conditions are compatible because for example the first and second conditions give
\bea
\Gamma^{1234}\Gamma^{1256}\xi_{\sL}&=-\Gamma^{1234}\xi_{\sL}=\xi_{\sL},\\
-\Gamma^{3456}\xi_{\sL}&=\xi_{\sL}
\eea
which is compatible with the fourth condition. 

Including the $B$-field gives 
\bea
R_{A}(\theta)\xi_{\sL}=-\xi_{\sL},\quad A\in\six
\eea
which is translated to
\bea
\exp\left(is_{a}\theta_{a}+is_{b}\theta_{b}\right)=-1,\quad (ab)\in\six.
\eea
Using the condition $\theta_{a}\in(0,\pi)$, we have $s_{a}\theta_{a}+s_{b}\theta_{b}=\pm \pi$. The solution is then 
\bea
\theta_{1}=\theta_{2}=\theta_{3}=\theta_{4}=\pi/2,\quad (s_{0},s_{1},s_{2},s_{3},s_{4})=(+,+,+,+,+),(+,-,-,-,-)
\eea
and we have two supersymmetries. Note that $v_{1}=v_{2}=v_{3}=v_{4}=0$ and this is the situation with no $B$-field.

\end{remark}

\section{Open string spectrum}\label{sec:opstringspectrum}
In this section, we review the open string spectrum with a nontrivial $B$-field \eqref{eq:B-field} and derive the matter fields arising in the gauge origami system. For reference, see \cite{Abouelsaood:1986gd} or \cite[App.~A]{Nekrasov:2016gud}.

The worldsheet action of the open string on the background $\mathbb{R}^{1,1}\times\mathbb{C}^{4}$ in \eqref{eq:string-background} is given as
\bea
S&=\frac{1}{2\pi}\int d\tau\int_{0}^{\pi}d\sigma \,G_{\mu\nu}\left(\partial_{+}X^{\mu}\partial_{-}X^{\nu}+i\psi_{-}^{\mu}\partial_{+}\psi^{\nu}_{-}+i\psi^{\mu}_{+}\partial_{-}\psi^{\nu}_{+}\right)\\
&+\frac{1}{2\pi}\int d\tau B_{\mu\nu}\left[2(\partial_{\tau}X^{\mu})X^{\nu}+i\psi^{\mu}_{-}+i\psi^{\mu}_{+}\psi^{\nu}_{+}\right]_{\sigma=0}^{\pi}
\eea
where $G_{\mu\nu}=\eta_{\mu\nu}$, $B_{\mu\nu}$ is \eqref{eq:B-field}, and $\sigma^{\pm}=\tau\pm\sigma$, $\partial_{\pm}=(\partial_{\tau}\pm\partial_{\sigma})/2$ are the light-cone coordinates. See Appendix~\ref{app-chap:openstring} for the analysis when there is no $B$-field. The equations of motion are
\bea
\partial_{+}\partial_{-}X^{\mu}=0,\quad \partial_{+}\psi^{\mu}_{-}=\partial_{-}\psi^{\mu}_{+}=0
\eea
with the boundary conditions
\bea
&\left[(G_{\mu\nu}\partial_{\sigma}X^{\mu}+B_{\mu\nu}\partial_{\tau}X^{\mu})\delta X^{\nu}\right]_{\sigma=0}^{\sigma=\pi}=0,\\
&\left[\delta\psi^{\mu}_{-}(G_{\mu\nu}-B_{\mu\nu})\psi_{-}^{\nu}-\delta \psi^{\mu}_{+}(G_{\mu\nu}+B_{\mu\nu})\delta^{\nu}_{+}\right]_{\sigma=0}^{\sigma=\pi}=0.
\eea
Similar to the case when there is no $B$-field, we have two types of boundary conditions. One is the Dirichlet (D) boundary condition
\bea
\delta X^{\mu}|_{\sigma=0,\pi}=0\quad \Leftrightarrow \partial_{\tau}X^{\mu}|_{\sigma=0,\pi}=0
\eea
and the other is the twisted (T) boundary condition
\bea
(G_{\mu\nu}\partial_{\sigma}X^{\mu}+B_{\mu\nu}\partial_{\tau}X^{\mu})|_{\sigma=0,\pi}=0.
\eea
The twisted boundary condition is just the Neumann boundary condition when the $B$-field is absent.

In our setup, we have D1-branes wrapping $\mathbb{R}^{1,1}$, D5$_{A}$ $(A\in\six)$ branes wrapping $\mathbb{R}^{1,1}\times \mathbb{C}^{2}_{A}$, D7$_{\bar{a}}$ $(\bar{a}\in\four^{\vee})$ branes wrapping $\mathbb{R}^{1,1}\times \mathbb{C}^{3}_{\bar{a}}$, and D9-branes wrapping $\mathbb{R}^{1,1}\times \mathbb{C}^{4}_{\four}$ (see section~\ref{sec:gaugeorigamiphysicalsetup} for the notation). Thus, for the $\mathbb{R}^{1,1}$-part, the open strings will always obey the NN boundary condition
\bea
\partial_{\sigma}X^{0,9}|_{\sigma=0}=\partial_{\sigma}X^{0,9}|_{\sigma=\pi}=0.
\eea
For the remaining eight directions in $\mathbb{C}^{4}$, it is convenient to combine the real coordinates and introduce the complex coordinates
\bea
Z^{a}=X^{2a-1}+iX^{2a},\quad \bar{Z}^{a}=X^{2a-1}-iX^{2a},\quad a\in\four.
\eea
Under these complex coordinates, the twisted boundary condition where the $B$-field appears can be written as
\bea
(\partial_{+}-e^{-2\pi i\nu_{a}}\partial_{-})Z^{a}|_{\sigma=0}=(\partial_{+}-e^{-2\pi i \nu'_{a}}\partial_{-})Z^{a}|_{\sigma=\pi}=0,
\eea
where $\nu_{a},\nu_{a}'$ determines the boundary condition of the end of string $\sigma=0,\pi$, respectively. Let us focus on $\sigma=0$ for the moment. If the string ends on a D-brane wrapping $\mathbb{C}_{a}$, then $\nu_{a}=v_{a}$, otherwise it is $\nu=1/2$. We have the same relation for the $\nu'_{a}$.

The mode expansion that obeys the EOM and the boundary conditions is
\bea
\nu_{a}=\nu'_{a}:&\qquad Z^{a}=z_{a}+2p^{a}(\sigma^{+}+e^{2\pi i \nu_{a}}\sigma^{-})+i\sum_{n\in\mathbb{Z}\neq 0}\frac{\alpha_{n}^{a}}{n}(e^{-in\sigma^{+}}+e^{2\pi i \nu_{a}}e^{-in\sigma^{-}})\\
\nu'_{a}-\nu_{a}\eqqcolon\delta\neq 0:&\qquad Z^{a}=z_{a}+i\sum_{r\in\mathbb{Z}+\delta}\frac{\alpha^{a}_{r}}{r}(e^{-ir\sigma^{+}}+e^{2\pi i \nu_{a}}e^{-ir\sigma^{-}}).
\eea

Let us move on to the fermions:
\bea
\Psi^{a}_{\pm}=\psi^{2a-1}_{\pm}+i\psi^{2a}_{\pm},\quad \overline{\Psi}_{a}^{\pm}=\psi^{2a-1}_{\pm}-i\psi_{a}^{2a}
\eea
where we complexified them. The boundary conditions are transformed into
\bea
\left(\Psi^{a}_{+}-(-1)^{\xi}e^{-2\pi i \nu_{a}}\Psi^{a}_{-}\right)|_{\sigma=0}=\left(\Psi^{a}_{+}-e^{-2\pi i\nu'_{a}}\Psi^{a}_{-}\right)|_{\sigma=\pi}=0
\eea
where $\xi=0$ for the R sector and $\xi=1$ for the NS sector. Using the doubling trick (see Appendix~\ref{app-chap:openstring}), we can combine the $\Psi_{\pm}^{a}$ into a single field 
\bea
\Psi^{(a)}(\tau,\sigma)=\begin{dcases}
\Psi^{a}_{+}(\tau,\sigma)\qquad 0\leq \sigma \leq \pi\\
e^{-2\pi i\nu_{a}'}\Psi^{a}_{-}(\tau,2\pi-\sigma)\qquad \pi\leq \sigma\leq 2\pi
\end{dcases}.
\eea
We then have
\bea
\Psi^{(a)}(\tau,2\pi)&=e^{-2\pi i\nu'_{a}}\Psi^{a}_{-}(\tau,0)=(-1)^{\xi}e^{2\pi i (\nu_{a}-\nu'_{a})}\Psi^{a}_{+}(\tau,0)\\
&=\exp\left(-2\pi i \left(\delta-\frac{1}{2}\xi\right)\right)\Psi^{(a)}(\tau,2\pi).
\eea
Therefore, the mode expansions of the fermions are given as
\bea
\text{R sector $(\xi=0)$}:&\qquad \Psi^{a}(\tau,\sigma)=\sum_{r\in\mathbb{Z}+\delta}d^{a}_{r}e^{-ir\sigma^{+}},\\
\text{NS sector $(\xi=1)$}:&\qquad \Psi^{a}(\tau,\sigma)=\sum_{r\in\mathbb{Z}+\delta-1/2}b^{a}_{r}e^{-ir\sigma^{+}}.
\eea

Given the above mode expansions, we can define the zero-point energy:\footnote{The zero-mode energy is related with $\tilde{E}=-E_{0}$ compared with the Appendix~\ref{app-chap:openstring}. Moreover since we are considering the complexified bosons and fermions, the zero-point energy is twice different from the Appendix.}
\bea
\widetilde{E}_{Z}(\delta)&=+\sum_{n=0}^{\infty}(n+|\delta|)=\zeta(-1,|\delta|)=\frac{1}{24}-\frac{1}{2}\left(|\delta|-\frac{1}{2}\right)^{2}\\
\widetilde{E}_{\Psi}(\delta)&=\begin{dcases}
-\sum_{n=0}^{\infty}(n+|\delta|)=-\frac{1}{24}+\frac{1}{2}\left(|\delta|-\frac{1}{2}\right)^{2}\qquad \text{R sector}\\
-\sum_{n=0}^{\infty}(n+||\delta|-\frac{1}{2}|)=-\frac{1}{24}+\frac{1}{2}\left(||\delta|-\frac{1}{2}|-\frac{1}{2}\right)^{2}\qquad \text{NS sector}
\end{dcases}
\eea
where $\zeta(s,a)=\sum_{n=0}^{\infty}(a+n)^{-s}$. The zero-point energy coming from the open strings connected to the complex plane $\mathbb{C}_{a}$ is
\bea\label{eq:zero-point-energy-complex-plane}
\widetilde{E}(\delta)=\begin{dcases}
    0\qquad \qquad \text{R sector},\\
    \frac{1}{8}-\frac{1}{2}||\delta|-\frac{1}{2}|\qquad \text{NS sector},
\end{dcases}
\eea
where the zero-point energy of the R sector always vanishes.

Taking the light cone gauge (see Appendix~\ref{app:lightconegauge} for an example), the mass for the two-dimensional particle of $\mathbb{R}^{1,1}$ is proportional to 
\bea
m^{2}\propto (\widehat{N}+\widetilde{E}_{0})
\eea
where $\widehat{N}$ counts the mode fluctuations from the $\mathbb{C}^{4}$ part and $\widetilde{E}_{0}$ is the total zero-point energy.

\subsection{Magnificent four}\label{sec:M4-openstring}
Let us study the open string spectrum of the magnificent four setup where we have $k$ D1-branes wrapping $\mathbb{R}^{1,1}$ and $n$ D9-branes wrapping $\mathbb{R}^{1,1}\times \mathbb{C}^{4}$. Again, we turn on the $B$-field along the D9-branes as in \eqref{eq:B-field}. 

\paragraph{D1--D1 strings}Let us first study the the D1--D1 strings. It is well-known that the low energy field theory of $k$ parallel D1-branes is obtained by the dimensional reduction of 10d $\mathcal{N}=1$ $\U(k)$ SYM, which is the 2d $\mathcal{N}=(8,8)$ $\U(k)$ SYM. Let us review the derivation of this. The open strings obey the NN boundary conditions along $\mathbb{R}^{1,1}$ and DD boundary conditions along $\mathbb{C}^{4}$. The zero point energy of the NS sector is computed as $\widetilde{E}_{0}=-1/2$.

In the R sector, the zero-point energy vanishes, and the level zero $\widehat{N}=0$ states give the massless states. The ten zero-modes coming from the fermions give the 32 degenerate ground states $\ket{\vec{s}}_{\text{R}}$. After GSO projection, half of them remain giving 16 fermions.

In the NS sector, the ground state $\ket{0}_{\text{NS}}$ gives the $\widetilde{E}_{0}=-1/2$ which is tachyonic. GSO projection removes this state and the level one excitation coming from $\psi_{-1/2}^{\mu}\ket{0}_{\text{NS}}$ $(\mu=1,\ldots, 8)$ in light-cone gauge give the massless spectrum. From the 2d theory viewpoint, the eight real scalars correspond to the bosonic coordinates $x^{1},\ldots,x^{8}$. We can combine them into four complex scalars $\mathsf{B}_{a}$ $(a\in\four)$ and their complex conjugates. 

\paragraph{D1--D9 strings} The T-dualized setup of ours was analyzed in \cite{Witten:2000mf}. We follow the discussion there. In this case, the boundary conditions for the D1--D9 strings are NN along $\mathbb{R}^{1,1}$ and DT along $\mathbb{C}^{4}$. We then have $\delta=1/2-v_{a}$. Using \eqref{eq:zero-point-energy-complex-plane}, the zero-point energy of the NS sector is computed as
\bea
\widetilde{E}_{0}=\sum_{a\in\four}\widetilde{E}\left(\frac{1}{2}-v_{a}\right)=\frac{1}{2}\left(1-\sum_{a\in\four}|v_{a}|\right)
\eea
where we used $-1/2<v_{a}<1/2$. When $\sum_{a\in\four}|v_{a}|=1$, the ground state coming from $\widehat{N}=0$ is massless and unique since there are no worldsheet zero modes to be quantized. It will give a massless boson, which will be denoted $\mathsf{I}$ (see section \ref{sec:M4-2dtheory}). For the R sector, since the zero-point energy is zero, there is a unique fermionic ground state coming from $\widehat{N}=0$. Combining both of them give a 2d $\mathcal{N}=(0,2)$ chiral multiplet (see section~\ref{sec:2susygaugetheory} for the notations).

Let us consider the case when the $B$-field is absent. For this case, we can formally set $v_{a}=0\,(a\in\four)$ and the zero-point energy is $\widetilde{E}_{0}=1/2$. Since physical states need to obey $\widehat{N}+1/2=0$, there are no physical states in the NS sector. For the R-sector, the ground state survives. We thus obtain a Fermi multiplet in this case.

Note that when the $B$-field is absent or $\sum_{a\in\four}|v_{a}|<1$, since the zero-point energy is positive, we do not have any massless bosons. For the computations in later sections to be applicable, we \textit{need} the nontrivial $B$-field. On the other hand when $\sum_{a\in\four}|v_{a}|>1$, the theory is unstable and expected to decay to a supersymmetric ground state with two supersymmetries.


\subsection{Tetrahedron instanton}\label{sec:tetra-opstring}
Let us consider the tetrahedron instanton setup where we have $k$ D1-branes and $n_{\bar{a}}$ D7$_{\bar{a}}$-branes. We focus on the spectrum coming from the D1--D7 strings. Using the quadrality, we can focus on the D7$_{123}$-branes. The boundary conditions are NN in $\mathbb{R}^{1,1}$, DD in $\mathbb{C}_{4}$ and DT in $\mathbb{C}_{123}^{3}$.

In the R sector, the zero-point energy vanishes. In the light-cone gauge, we have two-zero modes coming from the $\mathbb{C}_{4}$ part give the fermionic massless states with spin $\pm1/2$ under rotations in $\mathbb{C}_{4}$. GSO projection kills one of them.

In the NS sector, the zero-point energy is computed as
\bea
\widetilde{E}_{0}=\sum_{a=1,2,3}\widetilde{E}\left(\frac{1}{2}-v_{a}\right)+\widetilde{E}(0)=\frac{1}{4}-\frac{1}{2}\sum_{a\in 123}|v_{a}|=\frac{1}{2}\left(\frac{1}{2}-\sum_{a\in 123}|v_{a}|\right).
\eea
When $\sum_{a=1,2,3}|v_{a}|=1/2$, the ground state energy vanishes. Since there are no worldsheet zero-modes, there is a unique massless ground state with vanishing spin. It survives the GSO projection and gives a real scalar field which will be denoted as $\mathsf{I}_{\bar{4}}$ (see section~\ref{sec:tetra-2dtheory}). Combining with the D7--D1 strings and the fermions coming from the Ramond sectors, we obtain a $\mathcal{N}=(2,2)$ chiral multiplet.


\subsection{Spiked instanton}\label{sec:spiked-opstring}
Let us discuss the massless spectrum for the spiked instanton setup where we have D5-branes wrapping $\mathbb{R}^{1,1}\times\mathbb{C}^{2}_{A}$ $(A\in\six)$. For a detailed analysis see \cite{Nekrasov:2016gud}. We focus on the D1--D5$_{A}$ strings. The boundary conditions are NN for $\mathbb{R}^{1,1}$, DD for $\mathbb{C}^{2}_{\bar{A}}$ and DT for $\mathbb{C}^{2}_{A}$. We have $\delta=v_{a}-1/2$ for bosons $Z^{a} (a\in A)$ and $\delta=0$ for $a\in\bar{A}$ giving $\mathbb{Z}+\delta$ modings. The R fermions similarly have the same modings with the bosons while the NS fermions have $\mathbb{Z}+v_{a}$ for $a\in A$ and $\mathbb{Z}+1/2$ for $a\in\bar{A}$.

The zero-point energy for the NS sector is given as
\bea
\widetilde{E}_{0}=\sum_{i\in A=(ab)}\widetilde{E}\left(\frac{1}{2}-v_{a}\right)+2\widetilde{E}(0)=-\frac{1}{2}(|v_{a}|+|v_{b}|).
\eea
To study the low energy spectrum, we need to study the action of fermionic operators with modes $\mathbb{Z}+v_{a}$ and deal with the parameters $v_{a}$ carefully. Since it is tedious to do so, in this thesis we simply set $v_{a}=v_{b}=0$ and consider the massless spectrum. The massless spectrum arising from this is the dimensional reduction of the 4d $\mathcal{N}=2$ hypermultiplet giving the 2d $\mathcal{N}=(4,4)$ bifundamental hypermultiplet.  



\section{2d \texorpdfstring{$\mathcal{N}=(0,2)$}{N=(0,2)} quiver gauge theories and vacuum moduli space}\label{sec:2susygaugetheory}
Given the open string spectrum, we now know that the low energy effective theory of the D1-branes is given by a 2d $\mathcal{N}=(0,2)$ theory. However, the matter contents given in the previous section are not enough to determine this theory and we need to consider interaction terms coming from the so-called $J$ and $E$-terms. To get such terms from string theory, one needs to study correlation functions of open string vertex operators. See \cite{Billo:2021xzh} for the magnificent four setup and \cite{Nekrasov:2016gud} for the spiked instanton setup. Instead of proceeding in this direction, we will rather use the symmetries such as the quadrality in the gauge origami setup and derive the corresponding $J$ and $E$-terms. 

We first review some general aspects of 2d $\mathcal{N}=(0,2)$ quiver gauge theories, and then discuss how to determine the $J$ and $E$-terms of the gauge origami system. Given such $J$ and $E$-terms, we will show that the vacuum moduli space coincides with the instanton moduli space discussed in section~\ref{sec:general-instanton}.

\subsection{General structure}\label{sec:2susyquivergaugetheory-general}
Let us briefly review basic aspects of the two-dimensional $\mathcal{N}=(0,2)$ gauge theory using the superfield formalism. The $\mathcal{N}=(0,2)$ superspace is parametrized by the space-time coordinates $x^{0},x^{1}$ and fermionic coordinates $\theta^{+},\bar{\theta}^{+}$. The $\mathcal{N}=(0,2)$ supersymmetry is generated by two chiral supercharges $\tQ_{+},\otQ_{+}$:
\bea
\text{Q}_{+}^{2}=\otQ_{+}^{2}=0,\quad \{\tQ_{+},\otQ_{+}\}=-i(\partial_{0}+\partial_{1}).
\eea
The operators have a differential operator representation
\bea
\tQ_{+}=\frac{\partial}{\partial\theta^{+}}+\frac{i}{2}\overline{\theta}^{+}\left(\partial_{0}+\partial_{1}\right),\quad \otQ_{+}=-\frac{\partial}{\partial\overline{\theta}^{+}}-\frac{i}{2}\theta^{+}\left(\partial_{0}+\partial_{1}\right)
\eea
and the supersymmetry transformation is defined as $\delta=-{\xi}_{-}\tQ_{+}+\bar{\xi}_{-}\otQ_{+}$, where $\xi_{-},\overbar{\xi}_{-}$ are Grassmann variables. These commute with another set of superderivatives 
\bea
&\tD_{+}=\frac{\partial}{\partial\theta^{+}}-\frac{i}{2}\overline{\theta}^{+}(\partial_{0}+\partial_{1} ),\quad \otD_{+}=-\frac{\partial}{\partial \overline{\theta}^{+}}+\frac{i}{2}\theta^{+} (\partial_{0}+\partial_{1}),
\eea
obeying $\tD_{+}^{2}=\otD_{+}^{2}=0,\,\,\{\tD_{+},\otD_{+}\}=i(\partial_{0}+\partial_{1})$. For simplicity, we also introduce $\partial_{\pm}=\left(\partial_{0}\pm \partial_{1}\right)/2$ and $x^{\pm}=x^{0}\pm x^{1}$.

There are three kinds of $\mathcal{N}=(0,2)$ superfields that we are interested in are the vector, chiral and Fermi superfields. 

\paragraph{Chiral superfield} An $\mathcal{N}=(0,2)$ chiral superfield $\Phi$ consists of a complex scalar $\phi$ and a fermion $\psi_{+}$ obeying
\bea
\otD_{+}\Phi=0,\quad \Phi=\phi+\theta^{+} \psi_{+}-i\theta^{+}\overline{\theta}^{+}\partial_{+}\phi
\eea
The supersymmetry transformation is given as
\bea
\delta \phi&=-\xi_{-}\psi_{+},\quad \delta \psi=2i\bar{\xi}_{-}\,\partial_{+}\phi.
\eea
The supersymmetric Lagrangian is
\bea\label{eq:2susychirallagrangian}
S_{\text{chiral}}=\int d^{2}x\int d\theta^{+} d\bar{\theta}^{+}\,i\overline{\Phi} (\partial_{0}-\partial_{1}) \Phi=\int d^{2}x\left( |\partial_{0}\phi|^{2}-|\partial_{1}\phi|^{2}+i\bar{\psi}_{+}(\partial_{0}-\partial_{1})\psi_{+}\right).
\eea

\paragraph{Fermi superfield}
A Fermi superfield $\Lambda=(\eta_{-},G)$ consists of a fermion $\eta_{-}$ and an auxiliary field $G$. It obeys
\bea
\otD_{+} \Lambda&=E(\Phi),
\eea
where $E(\Phi)$ is a holomorphic function of the chiral superfields $\Phi$. Note that this condition is compatible because
\bea
\otD_{+}\otD_{+} \Lambda=\otD_{+} E(\Phi)=0.
\eea 
In component fields, the Fermi superfield is expanded as
\bea
\Lambda&=\eta_{-}-\theta^{+} G-i\theta^{+}\bar{\theta}^{+}\partial_{+}\eta_{-}-\bar{\theta}^{+} E(\Phi).
\eea
and the supersymmetric transformations of the component fields are
\bea
\delta \eta_{-}&=\xi_{-} G+\bar{\xi}_{-} E(\phi),\quad \delta G=\bar{\xi}_{-}\left(-2i\partial_{+}\eta_{-}+\psi_{+}^{i}\partial_{i}E(\phi)\right).
\eea
The supersymmetric Lagrangian of the Fermi field is 
\bea\label{eq:2susyFermilagrangian}
S_{\text{Fermi}}=\int d^{2}x\int d\theta^{+} d\bar{\theta}^{+} \bar{\Lambda}\Lambda = \int d^{2}x\left(i\bar{\eta}_{-}(\partial_{0}+\partial_{1}){\eta_{-}}+|G|^{2}-|E(\phi)|^{2}-\bar{\eta}_{-}\partial_{i}E(\phi)\psi_{+}^{i}-\bar{\psi}_{+}^{\bar{i}}\partial_{\bar{i}}\overline{E(\phi)}\eta_{-}\right).
\eea
Including a nontrivial $E$-term induces interactions between the fermions of the chiral and Fermi superfield.

\paragraph{$J$-term}We can introduce another term called the $J$-term which will induce different interactions between the chiral and Fermi superfields. Given a set of Fermi and chiral superfields $\{\Lambda^{\alpha}, J_{\alpha}(\Phi)\}$, the superfield $\sum_{\alpha}\Lambda^{\alpha}J_{\alpha}(\Phi)$ is a Fermi superfield when
\bea\label{eq:JE-condition}
\sum_{\alpha}J_{\alpha}(\phi)E^{\alpha}(\phi)=0
\eea
is obeyed. Under this condition, the following term is also supersymmetric 
\bea\label{eq:2susyJterm}
S_{J\text{-term}}=-\int d^{2}x\int d\theta^{+}\,\left(\Lambda^{\alpha}J_{\alpha}(\Phi)\right)_{\bar{\theta}=0}+\text{c.c.}=\int d^{2}x\left(\eta_{-}^{\alpha}\partial_{i}J_{\alpha}(\phi)\psi_{+}^{i}+G^{\alpha}J_{\alpha}(\phi)\right)+\text{c.c.}.
\eea
Note that if \eqref{eq:JE-condition} is not obeyed, the Lagrangian is not $(0,2)$ supersymmetric.

\paragraph{$J$-$E$ duality}
After integrating the auxiliary field $G$, the Lagrangian becomes symmetric in $E\leftrightarrow J$ and we have a duality between the $J,E$-terms. In the superfield formalism, this can be seen as follows. Let $\Lambda$ be a Fermi superfield obeying $\otD_{+}\Lambda=E(\Phi)$. The Lagrangian is written as
\bea\label{eq:JEduality-Jdescription}
S[\Lambda]&=\int d^{2}x \left(\int d\theta^{+} d\bar{\theta}^{+}\bar{\Lambda}\Lambda-\int d\theta^{+} \Lambda J -\int d\bar{\theta}^{+}\bar{\Lambda}\bar{J}\right)\\
&=\int d^{2}x \left(-\tD_{+}\otD_{+}(\bar{\Lambda}\Lambda)-\tD_{+}(\Lambda J)+\otD_{+}(\bar{\Lambda}\bar{J})\right)
\eea
where we used $\tD_{+}\sim \partial/\partial_{\theta^{+}}$ and $\otD_{+}\sim -\partial/\partial_{\bar{\theta}^{+}}$ up to total derivatives in the bosonic coordinates. We can introduce a Grassmann superfield $\Psi$ to the Lagrangian as
\bea
S[\Psi,\Lambda]=\int d^{2}x \left(-\tD_{+}\otD_{+}(\bar{\Psi}\Psi-\Lambda\bar{\Psi}-\Psi\bar{\Lambda})-\tD_{+}(\Lambda J)+\otD_{+}(\bar{\Lambda}\bar{J})\right),
\eea
which reproduces the original Lagrangian after integrating out $\Psi$. Let us see what will happen if we integrate out $\Lambda$ first. We can push the $\D_{+},\otD_{+}$ inside the Lagrangian multiplier and obtain
\bea
S[\Psi,\Lambda]=\int d^{2}x \left(-\tD_{+}\otD_{+}(\bar{\Psi}\Psi)+\tD_{+}(E\bar{\Psi})+\tD_{+}(\Lambda\otD_{+}\bar{\Psi}-\Lambda J)+\text{c.c.}   \right).
\eea
Integrating the superfield $\Lambda$ gives
\bea
\otD_{+}\overline{\Psi}=J(\Phi)
\eea
and
\bea\label{eq:JEduality-Edescription}
S[\Psi]&=\int d^{2}x \left(\int d\theta^{+} d\bar{\theta}^{+}\overline{\Psi}\Psi+\int d\theta^{+} E(\Phi)\overline{\Psi}+\int d\bar{\theta}^{+}\Psi\overline{E(\Phi)}\right)\\
&=\int d^{2}x \left(\int d\theta^{+} d\bar{\theta}^{+}\overline{\Lambda'}\Lambda'+\int d\theta^{+} E(\Phi)\Lambda'+\int d\bar{\theta}^{+}\overline{\Lambda'}\overline{E(\Phi)}\right)
\eea
where we defined $\Lambda'=\overline{\Psi}$ at the second line. 

Note that the Fermi superfield $\Lambda'$ transforms in the conjugate representation of the symmetry groups compared with $\Lambda$. Moreover, because of
\bea
\Lambda&=\cdots -\bar{\theta}^{+}E(\Phi),\\
\Lambda'&= \cdots -\bar{\theta}^{+}J(\Phi),
\eea
the chiral superfield $E(\Phi)$ ($J(\Phi)$) has the same quantum numbers with $\Lambda$ ($\Lambda'$) and they are conjugate with each other. This can also be observed by looking at the interaction term $\Lambda J$. For it to be invariant under a symmetry transformation, $\Lambda$ and $J(\Phi)$ needs to transform in a conjugate way.

\paragraph{Vector superfield}Since we are interested in supersymmetric gauge theories, we need to include vector multiplets to the theory. A vector superfield $V=(v_{0},v_{1},\lambda_{-},\bar{\lambda}_{-},D)$ consists of a two-dimensional gauge field $v_{0,1}$, a fermion $\lambda_{-},\overbar{\lambda}_{-}$, and an auxiliary scalar field $D$ transforming in the adjoint representation of the gauge group. In this thesis, we will always keep the group structure to be unitary groups $\U(n)$.

The supersymmetry algebra will be modified to a gauge covariant one:
\bea
\mathcal{D}_{+}^{2}=\overbar{\mathcal{D}}_{+}^{2}=0,\quad \{\mathcal{D}_{+},\overbar{\mathcal{D}}_{+}\}=i(\mathcal{D}_{0}+\mathcal{D}_{1}),
\eea
where $\mathcal{D}_{0,1}$ are gauge covariant superderivatives associated with $\partial_{0},\partial_{1}$. Namely, when restricted on $x^{0,1}$, they will reduce to the gauge covariant differential operators $D_{0,1}=\partial_{0,1}+iv_{0,1}$. Using the parametrization 
\bea
\mathcal{D}_{+}=e^{-\Omega} \tD_{+} e^{\Omega},\quad \overbar{\mathcal{D}}_{+}=e^{-\Omega}\otD_{+}e^{\Omega}
\eea
and choosing the Wess--Zumino gauge
\bea
\Omega=-\frac{1}{2}\theta^{+}\bar{\theta}^{+}(v_{0}+v_{1}),
\eea
the operators are 
\bea
\mathcal{D}_{+}
&= \frac{\partial}{\partial \theta^{+}}-\frac{i}{2}\bar{\theta}^{+}(D_{0}+D_{1})    \quad 
\overbar{\mathcal{D}}_{+}
=-\frac{\partial}{\partial \bar{\theta}^{+}}+\frac{i}{2}\theta^{+}\left(D_{0}+D_{1}\right).
\eea
Under this gauge, we have $\mathcal{D}_{0}+\mathcal{D}_{1}=D_{0}+D_{1}$. The vector superfield is then defined as $\mathcal{D}_{0}-\mathcal{D}_{1}=\partial_{0}-\partial_{1}+i V$ where we can write it in component fields as
\bea
V=v_{0}-v_{1}-i\theta^{+}\bar{\lambda}_{-}-i\bar{\theta}^{+}\lambda_{-}+\theta^{+}\bar{\theta}^{+}D.
\eea
The supersymmetry transformation is given as
\bea
\delta v_{0}&=-\delta v_{1}=\frac{i}{2}\xi_{-}\bar{\lambda}_{-}+\frac{i}{2}\bar{\xi}_{-}\lambda_{-},\\
\delta \lambda_{-}&=\xi_{-}(F_{01}+iD),\\
\delta D&= \frac{1}{2}\xi_{-} (D_{0}+D_{1})\bar{\lambda}_{-}-\frac{1}{2}\bar{\xi}_{-}(D_{0}+D_{1})\lambda_{-}. 
\eea
The field strength is defined as
\bea
\Upsilon&=[\overbar{\mathcal{D}},\mathcal{D}_{0}-\mathcal{D}_{1}]=-\lambda_{-}+\theta^{+}(F_{01}+iD)+\frac{i}{2}\theta^{+}\bar{\theta}^{+}(D_{0}+D_{1})\lambda_{-}
\eea
where $F_{01}=(\partial_{0}v_{1}-\partial_{1}v_{0}+i[v_{0},v_{1}])=-i[D_{0},D_{1}]$. The gauge invariant supersymmetric Lagrangian is then given as
\bea\label{eq:2susyvectorLagrangian}
S_{\text{gauge}}&=\int d^{2}x\int d\theta^{+} d\bar{\theta}^{+} \frac{1}{2g^{2}}\Tr \overbar{\Upsilon}\Upsilon=\frac{1}{2g^{2}}\int d^{2}x\Tr\left[F_{01}^{2}+D^{2}+i\bar{\lambda}_{-}(D_{0}+D_{1})\lambda_{-}\right].
\eea
Using this vector superfield, we can also include the FI-term as
\bea\label{eq:2susyFIterm}
S_{\text{FI}}&=\frac{ir}{2}\int d^{2}x\int d\theta^{+} \Tr \Upsilon|_{\bar{\theta}^{+}=0}+\text{c.c.} =-r\int d^{2}x \Tr D
\eea
where we assumed $r$ to be real.

\paragraph{Gauge covariant chiral and Fermi superfields} To obtain gauge covariant chiral superfields and Fermi superfields, we need to modify the differential operators to gauge covariant differential operators:
\bea
\overline{\mathcal{D}}_{+}\Phi=0,\quad \overline{\mathcal{D}}_{+}\Lambda_{-} =E(\Phi).
\eea
In components fields, we just need to modify the operators as $\partial_{0,1}\rightarrow D_{0,1}$. Moreover, the component fields and superfields transform under a representation of the gauge group.

The Lagrangian of the supersymmetric gauge theory is then summarized as
\bea\label{eq:2susygaugeLagrangian}
\mathcal{L}_{\text{gauge}}&=\int d\theta^{+} d\bar{\theta}^{+} \frac{1}{2g^{2}}\Tr\overline{\Upsilon}\Upsilon= \frac{1}{2g^{2}}\Tr\left[(F_{01})^{2}+i\bar{\lambda}_{-}(D_{0}+D_{1})\lambda_{-}+D^{2}\right]\\
\mathcal{L}_{\text{chiral}}&=\int d\theta^{+} d\bar{\theta}^{+}\, i\overline{\Phi}(\mathcal{D}_{0}-\mathcal{D}_{1}){\Phi}\,\,=\overline{D_{0}\phi}D_{0}\phi-\overline{D_{1}\phi}D_{1}\phi+i\bar{\psi}_{+}(D_{0}-D_{1})\psi_{+}+\bar{\phi}D\phi-i\bar{\phi}\lambda_{-}\psi_{+}+i\overline{\psi_{+}\lambda_{-}}\phi \\
\mathcal{L}_{\text{Fermi}}&=\int d\theta^{+} d\bar{\theta}^{+}\,\, \bar{\Lambda}_{-}\Lambda_{-} = i\bar{\eta}_{-}(D_{0}+D_{1})\eta_{-}+|G|^{2}-|E(\phi)|^{2}-\bar{\eta}_{-}\partial_{i}E(\phi)\psi_{+}^{i}-\bar{\psi}^{\bar{i}}_{+}\partial_{\bar{i}}\overline{E(\phi)}\eta_{-}\\
\mathcal{L}_{J\text{-term}}&=-\int d\theta^{+}\,\left(\Lambda_{-}^{\alpha}J_{\alpha}(\Phi)\right)_{\bar{\theta}=0}+\text{c.c.}=\eta_{-}^{\alpha}\partial_{i}J_{\alpha}(\phi)\psi_{+}^{i}+G^{\alpha}J_{\alpha}(\phi)+\text{c.c.}\\
\mathcal{L}_{\text{FI}}&=\frac{ir}{2}\int d\theta^{+} \Tr \Upsilon|_{\bar{\theta}^{+}=0}+\text{c.c.} =-r \Tr D.
\eea
After integrating the auxiliary fields, the scalar potential is given as
\bea\label{eq:scalarpotential-general}
U_{\text{pot.}}=\frac{g^{2}}{2}\left(\phi\bar{\phi}-r\right)^{2}+\sum_{\alpha}\left(|E_{\alpha}(\phi)|^{2}+|J_{\alpha}(\phi)|^{2}\right).
\eea
The vacuum moduli space is then given by 
\bea
|\phi|^{2}=r,\quad E_{\alpha}(\phi)=J_{\alpha}(\phi)=0.
\eea
We call the former the $D$-term condition and the latter the $E,J$-term condition.

\paragraph{Quiver gauge theory}
Let us consider a special class of 2d $\mathcal{N}=(0,2)$ gauge theories where the data of the gauge theory is encoded in a quiver diagram. Given a quiver $\overbar{Q}=(\overbar{Q}_{0},\overbar{Q}_{1})$, where $\overbar{Q}_{0}$ is a set of nodes and $\overbar{Q}_{1}$ is a set of edges, the information of the gauge symmetry and matter fields can be obtained as follows.
\begin{itemize}[topsep=0pt, partopsep=0pt, itemsep=0pt]
    \item For each node $a$, a group $\U(N_{a})$ is assigned.
    \item Edges connect the nodes and there are two sets of oriented edges colored in black and red. We denote the set of black edges as $\overline{Q}_{1}^{(0)}$ while set of red edges are denoted as $\overline{Q}_{1}^{(1)}$.
    \item A black oriented edge from a source node $a$ to a terminal node $b$ represents a chiral superfield $\Phi_{ba}=\Phi_{b\leftarrow a}$ transforming in the bifundamental representation $(\overline{N}_{a},N_{b})$ of $\U(N_{a})\times \U(N_{b})$.
    \item A red oriented edge connecting two nodes $a,b$ represents a Fermi superfield $\Lambda_{ba}=\Lambda_{b\leftarrow a}$ transforming in the bifundamental representation $(\overline{N}_{a},N_{b})$ of $\U(N_{a})\times \U(N_{b})$.\footnote{In the context of $\mathcal{N}=(0,2)$ quiver gauge theories, the Fermi superfields are sometimes denoted using \textit{unoriented edges} reflecting the fact that for the Fermi superfields, we will always have the conjugate $\overline{\Lambda}_{ba}$ transforming in $(N_{a},\overline{N}_{b})$ of $\U(N_{a})\times \U(N_{b})$. This also can be seen from the fact that there is a duality between the $J,E$-terms as mentioned in \eqref{eq:JEduality-Edescription}, \eqref{eq:JEduality-Jdescription} and the quantum number of them are conjugate with each other. }
\end{itemize}
We note that as in usual quiver gauge theories, circle nodes give gauge symmetries while square nodes give the flavor symmetries. Moreover, self-loop nodes transform in the adjoint representation. An example of the quiver diagram is
\bea
\begin{tikzpicture}[decoration={markings,mark=at position \arrowHeadPosition with {\arrow{latex}}}]
 \tikzset{
        box/.style={draw, minimum width=0.6cm, minimum height=0.6cm, text centered,thick},
        ->-/.style={decoration={
        markings,mark=at position #1 with {\arrow[scale=1.5]{>}}},postaction={decorate},line width=0.5mm},
        -<-/.style={decoration={
        markings,
        mark=at position #1 with {\arrow[scale=1.5]{<}}},postaction={decorate},line width=0.5mm}    
    }
\begin{scope}[xshift=4cm]
    \draw[postaction={decorate}, black,thick,scale=1.3] (0.65,0) arc(0:-180:0.65 and 0.1) ;
    \draw[postaction={decorate}, black,thick,scale=1.3] (0.75,0) arc(0:-180:0.75 and 0.2) ;
    \draw[postaction={decorate}, black,thick,scale=1.3] (-0.65,0) arc(180:0:0.65 and 0.1) ;
    \draw[postaction={decorate}, black,thick,scale=1.3] (-0.75,0) arc(180:0:0.75 and 0.2) ;
    \draw[postaction={decorate}, black,thick,scale=1.3] (-0.8,0) arc(360:0:0.4 and 0.3) ;
    \draw[postaction={decorate}, black,thick,scale=1.3] (0.8,0) arc(-180:180:0.4 and 0.3) ;
    
    \draw[red,thick,postaction={decorate},scale=1.3] (0.65,0) arc(0:-180:0.65 and 0.3) ;
    \draw[red,thick,postaction={decorate},scale=1.3] (0.75,0) arc(0:-180:0.75 and 0.4) ;
    \draw[red,thick,postaction={decorate},scale=1.3] (-0.65,0) arc(180:0:0.65 and 0.3) ;
    \draw[red,thick,postaction={decorate},scale=1.3] (-0.75,0) arc(180:0:0.75 and 0.4) ;
    \draw[fill=black!20!white,thick](-0.95,0) circle(0.3cm);
    \draw[fill=black!20!white,thick](0.9,0) circle(0.3cm);
    \node[box,fill=black!20!white] at (-0.95,2.0){$ $};
    \draw[postaction={decorate},thick] (-0.95,1.7)--(-0.95,0.3);
\end{scope}
\end{tikzpicture}
\eea
The quiver diagram itself does not encode the $E,J$-terms and so we need to include them additionally.

\paragraph{$\mathcal{N}=(2,2)\rightarrow \mathcal{N}=(0,2)$ decomposition}The 2d $\mathcal{N}=(0,2)$ system discussed above actually can be understood as a reduction of the 2d $\mathcal{N}=(2,2)$ theory which is a dimensional reduction of the 4d $\mathcal{N}=1$ theory. This time, we will have four supercharges $\tQ_{\pm},\otQ_{\pm}$ obeying 
\bea
\{\tQ_{\pm},\otQ_{\pm}\}=-2i\partial_{\pm}.
\eea
Similarly, we can further introduce another set of superderivatives $\tD_{\pm},\otD_{\pm}$ obeying 
\bea
\{\tD_{\pm},\otD_{\pm}\}=2i\partial_{\pm}.
\eea
There will be three basic $\mathcal{N}=(2,2)$ superfields: chiral, twisted chiral\footnote{We will not consider this type of superfield in this thesis.}, and vector superfields. 

For simplicity, let us focus only on the chiral superfield. The chiral superfield is defined by the condition
\bea
\otD_{\pm}\Phi=0,\quad \Phi(x^{\pm},\theta^{\pm},\bar{\theta}^{\pm})=\phi(y^{\pm})+\theta^{\alpha}\psi_{\alpha}(y^{\pm})+\theta^{+}\theta^{-}F(y^{\pm})
\eea
where $y^{\pm}=x^{\pm}-i\theta^{\pm}\bar{\theta}^{\pm}$. Note that the anti-chiral superfield is given by $D_{\pm}\overline{\Phi}=0$. The kinetic term of a supersymmetric Lagrangian is given as 
\bea
S_{\text{chiral}}&=\int d^{2}x\int d^{4}\theta\,\, \overline{\Phi}\Phi=\int d^{2}x \left(|\partial_{0}\phi|^{2}-|\partial_{1}\phi|^{2}+i\bar{\psi}_{-}(\partial_{0}+\partial_{1})\psi_{-}+i \bar{\psi}_{+}(\partial _{0}-\partial_{1})\psi_{+}+|F|^{2}\right).
\eea
We can add interaction terms using the so-called \textit{superpotential}, which is a holomorphic function of the chiral superfields $\Phi$, as
\bea
S_{W}=\int d^{2}x\int d^{2}\theta W(\Phi)+\text{c.c}=\int d^{2}x \left(W'(\phi)F-W''(\phi)\psi_{+}\psi_{-}+\text{c.c.}\right).
\eea
Combining both terms and integrating out the auxiliary field $F$ gives
\bea
F=-\overline{W}'(\bar{\phi}).
\eea
The vacuum of the scalar potential then comes from
\bea
\frac{\partial W(\phi)}{\partial \phi}=0
\eea
which is usually called the $F$-term condition. 

By looking at the component fields and the supersymmetric Lagrangian and comparing with \eqref{eq:2susychirallagrangian}, \eqref{eq:2susyFermilagrangian}, one can observe that the $\mathcal{N}=(2,2)$ chiral multiplet $\Phi^{(2,2)}=(\phi,\psi_{\pm},F)$ can be decomposed into a $\mathcal{N}=(0,2)$ chiral multiplet $\Phi^{(0,2)}=(\phi, \psi_{+})$ and a Fermi multiplet $\Lambda_{-}=(\psi_{-},F)$ with the $E$-term $E(\Phi)=0$. Indeed we have
\bea
S^{(0,2)}_{ \text{chiral}} +S^{(0,2)}_{\text{Fermi}}= S^{(2,2)}_{\text{chiral}}.
\eea
Moreover, the $\mathcal{N}=(2,2)$ superpotential term is related with the $\mathcal{N}=(0,2)$ $J$-term as
\bea\label{eq:4susyJterm}
J(\phi)=\partial W(\phi)
\eea
because of 
\bea
S_{W}&=\int d^{2}x \left(W'(\phi)F-W''(\phi)\psi_{+}\psi_{-}+\text{c.c.}\right)\\
&=\int d^{2}x\,\left(\psi_{-}\partial_{i}J(\phi)\psi_{+}+FJ(\phi)+\text{c.c.}\right)=S_{J\text{-term}}.
\eea

Including the vector multiplet makes the discussion complicated so we will not discuss it explicitly, so see \cite{Witten:1993yc}. The $\mathcal{N}=(2,2)$ vector multiplet $V^{(2,2)}$ will at the end decompose into a $\mathcal{N}=(0,2)$ vector multiplet $V^{(0,2)}$ and a $\mathcal{N}=(0,2)$ chiral multiplet $\Sigma^{(0,2)}$ (of course in the adjoint representation). The gauge interactions will modify the decomposition of the $\mathcal{N}=(2,2)$ chiral multiplets $\Phi^{(2,2)}$ transforming under a some representation of the gauge group into a chiral multiplet $\Phi^{(0,2)}$ and a Fermi multiplet $\Lambda^{(0,2)}$ transforming under the same representation. This time the $E$-term will be modified to
\bea\label{eq:4susyEterm}
E=\Sigma^{(0,2)} \Phi^{(0,2)}.
\eea
The $J$-term is not modified and the relation is simply $J=\partial W$. The traceless condition \eqref{eq:JE-condition} simply means that the superpotential is gauge invariant and thus it is automatically satisfied for $\mathcal{N}=(2,2)$ theories. Let us summarize the above results in a table:
\bea\label{eq:4susy2susycorrespondence}
\renewcommand{\arraystretch}{1.05}
    \begin{tabular}{|c|c|}\hline
     $\mathcal{N}=(2,2)$    & $\mathcal{N}=(0,2)$  \\\hline
      vector $V^{(2,2)}$  &  vector $V^{(0,2)}$ \\
        & chiral $\Sigma^{(0,2)}$ \\ \hline
        chiral $\Phi^{(2,2)}$  &  chiral $\Phi^{(0,2)}$\\
         & Fermi $\Lambda^{(0,2)}$\\
          & E-term $E=\Sigma^{(0,2)}\cdot \Phi^{(0,2)}$\\ \hline
          superpotential $W$  & $J$-term  $J=\partial W$\\\hline
     \end{tabular}
\eea

We can further consider quiver gauge theories with four supersymmetries. Let $Q=(Q_{0},Q_{1})$ be a quiver, where $Q_{0}$ is a set of nodes and $Q_{1}$ is a set of edges. The matter fields are described as follows.
\begin{itemize}[topsep=0pt, partopsep=0pt, itemsep=0pt]
    \item For each node $a$, a group $\U(N_{a})$ is assigned.
    \item There is only one type of edges connecting the nodes. 
    \item An oriented edge from a source node $a$ to a terminal node $b$ represents a $\mathcal{N}=(0,2)$ chiral superfield $\Phi_{ba}=\Phi_{b\leftarrow a}$ transforming in the bifundamental representation $(\overline{N}_{a},N_{b})$ of $\U(N_{a})\times \U(N_{b})$.
\end{itemize}
Similar to the $(0,2)$ quiver gauge theory, circle nodes give gauge symmetries while square nodes give flavor symmetries.
The $(0,2)$ decomposition \eqref{eq:4susy2susycorrespondence} is then described as
\bea
\begin{tikzpicture}[decoration={markings,mark=at position \arrowHeadPosition with {\arrow{latex}}}]
 \tikzset{
        box/.style={draw, minimum width=0.6cm, minimum height=0.6cm, text centered,thick},
        ->-/.style={decoration={
        markings,mark=at position #1 with {\arrow[scale=1.5]{>}}},postaction={decorate},line width=0.5mm},
        -<-/.style={decoration={
        markings,
        mark=at position #1 with {\arrow[scale=1.5]{<}}},postaction={decorate},line width=0.5mm}    
    }
\begin{scope}{xshift=0cm}
    \draw[fill=black!20!white,thick](0,0) circle(0.3cm);
    \node[left]at (-0.3,0){$V^{(2,2)}$};
    \node[scale=2.0] at (1.4,0){$\rightsquigarrow$};
\end{scope}
\begin{scope}[xshift=3cm]
    \draw[fill=black!20!white,thick](0,0) circle(0.3cm);
    \node[right] at (0.3,0.0){$V^{(0,2)}$};
    \node[right] at (0.3,1.0){$\Sigma^{(0,2)}$};
    \chiralarc[postaction={decorate},thick](0,0.5)(-45:225:0.3:0.8)
\end{scope}

\begin{scope}{}
\draw[fill=black!20!white,thick](0.3,-2) circle(0.3cm);
\draw[fill=black!20!white,thick](-1.7,-2) circle(0.3cm);
\draw[postaction={decorate}, thick](-1.4,-2)--(0,-2);
\node[above] at (-0.6,-1.8){$\Phi^{(2,2)}$};
    \node[scale=2.0] at (1.4,-2){$\rightsquigarrow$};
\end{scope}

\begin{scope}{xshift=3cm}
\draw[fill=black!20!white,thick](4.7,-2) circle(0.3cm);
\draw[fill=black!20!white,thick](2.7,-2) circle(0.3cm);
\node[above] at (3.8,-1.8){$\Phi^{(0,2)}$};
\node[below] at (3.8,-2.2){$\Lambda^{(0,2)}$};
\draw[postaction={decorate}, thick](3,-1.9)--(4.4,-1.9);
\draw[postaction={decorate}, thick, red](3,-2.1)--(4.4,-2.1);
\end{scope}
\end{tikzpicture}
\eea

\subsection{Magnificent four}\label{sec:M4-2dtheory}
Let us consider the low energy field theory of $k$ D1-branes probing $n$ D9-branes wrapping the entire $\mathbb{C}^{4}$. Based on the discussion section~\ref{sec:M4-openstring}, we have a chiral superfield corresponding to the degrees of freedom coming from the open string connecting the D9-branes and D1-branes, which we denote $\Phi_{\four}=\mathsf{I}+\cdots$, where $\mathsf{I}$ is the first component field of the chiral superfield. As mentioned, we need to have the nontrivial $B$-field so that we have this chiral superfield.

The degrees of freedom coming from the 1-1 strings come from dimensional reduction of the 4d $\mathcal{N}=4$ SYM and we will have a vector superfield $\Upsilon$, four chiral superfields $\Phi_{a}=\mathsf{B}_{a}+\cdots$ for $a\in\four$, where $\mathsf{B}_{a}$ are the first component field and three Fermi superfields $\Lambda_{1,2,3}$. In the following discussions, we will also identify the chiral superfields with the first bosonic component fields. 

Since we have multiple D-branes now, the Chan-Paton factors will introduce a group structure and the D1-branes will give $\U(k)$ gauge symmetry. The D9-branes are heavy enough and thus the gauge symmetry is frozen and it will only induce a global flavor $\U(n)$ symmetry. The chiral superfield $\mathsf{I}$ then transforms under $(\overline{n},k)$, while the other chiral and Fermi superfields $\mathsf{B}_{a},\Lambda_{i}$ all transform in the adjoint representation of $\U(k)$. The matter components of this field theory are then summarized in the following quiver diagram:
\bea\label{eq:2SUSYquiver-magnificent}
\begin{tikzpicture}[decoration={markings,mark=at position \arrowHeadPosition with {\arrow{latex}}}]
 \tikzset{
        box/.style={draw, minimum width=0.7cm, minimum height=0.7cm, text centered,thick},
        ->-/.style={decoration={
        markings,mark=at position #1 with {\arrow[scale=1.5]{>}}},postaction={decorate},line width=0.5mm},
        -<-/.style={decoration={
        markings,
        mark=at position #1 with {\arrow[scale=1.5]{<}}},postaction={decorate},line width=0.5mm}    
    }
\begin{scope}[xshift=4cm]
    \draw[fill=black!10!white,thick](0,0) circle(0.4cm);
    \node at (0,0) {$k$};
    \node[box,fill=black!10!white] at (0,1.6) {$n$};
    \draw[postaction={decorate},thick] (0,1.25)--(0,0.4);
    \foreach \ang in {90,145,215,270} {
    \begin{scope}[rotate=\ang]
        \chiralarc[postaction={decorate},thick](0,0.5)(-45:225:0.22:0.65)
    \end{scope}
    }
    \foreach \ang in {90,145,270} {
    \begin{scope}[rotate=\ang]
    \fermiarc[postaction={decorate},thick](0,0.5)(-45:225:0.1:0.5)
    \end{scope}
    \node[right] at (0,0.8) {$\mathsf{I}$};
    \node[left] at (-1.5,0) {$\mathsf{B}_{2},\textcolor{red}{\Lambda_{2}}$};
    \node[right] at (1.6,0) {$\mathsf{B}_{1},\textcolor{red}{\Lambda_{1}}$};
    \node[below left] at (-0.9,-1){$\mathsf{B}_{3},\textcolor{red}{\Lambda_{3}}$};
    \node[below right] at (0.9,-1){$\mathsf{B}_{4}$};
    \draw[fill=black!10!white,thick](0,0) circle(0.4cm);
    \node at (0,0) {$k$};
    
    }
\end{scope}
\end{tikzpicture}
\eea
The $E, J$-terms are given by 
\bea\label{eq:2SUSYJEterm-magnificent}
    E_{i}=[\mathsf{B}_{4},\mathsf{B}_{i}]\quad J_{i}=\frac{1}{2}\varepsilon_{ijk4}[\mathsf{B}_{j},\mathsf{B}_{k}]
\eea
for $i=1,2,3$. The $E,J$-terms above are obtained by dimensional reduction of the world volume theory of the D1-branes with no D9-branes. The 2 SUSY condition \eqref{eq:JE-condition} can be checked as
\bea
\Tr\left(J_{1}E_{1}+J_{2}E_{2}+J_{3}E_{3}\right)&=\Tr\left([\mathsf{B}_{2},\mathsf{B}_{3}][\mathsf{B}_{4},\mathsf{B}_{1}]+[\mathsf{B}_{3},\mathsf{B}_{1}][\mathsf{B}_{4},\mathsf{B}_{2}]+[\mathsf{B}_{1},\mathsf{B}_{2}][\mathsf{B}_{4},\mathsf{B}_{3}]\right)\\
&=0.
\eea
We then can write down the Lagrangian of the low-energy worldvolume theory in terms of the 2d $\mathcal{N}=(0,2)$ superspace:
\bea
\mathcal{L}&=\int d\theta^{+} d\bar{\theta}^{+} \Tr\left(  \frac{1}{2g^{2}}\overline{\Upsilon}\Upsilon+i\sum_{a\in\four}\overline{\Phi}_{a}(\mathcal{D}_{0}-\mathcal{D}_{1}){\Phi}_{a}+\sum_{i=1}^{3}\overline{\Lambda}_{i}\Lambda_{i}  \right)\\
&+\left(\int d\theta^{+} \Tr \sum_{i=1}^{3}\Lambda_{i}J_{i}|_{\bar{\theta}^{+}=0}+\text{c.c} \right)+\left(\frac{ir}{2}\int d\theta^{+}\Upsilon|_{\bar{\theta}^{+}=0}+\text{c.c.}\right)\\
&+\int d\theta^{+} d\bar{\theta}^{+} i\overline{\Phi}_{\four}(\mathcal{D}_{0}-\mathcal{D}_{1}){\Phi}_{\four}
\eea
where $r$ is identified with the $B$-field as 
\bea
r=\sum_{a\in\four}|v_{a}|-1.
\eea
Integrating out the auxiliary fields, the scalar potential is given as
\bea
U_{\text{pot.}}=\frac{g^{2}}{2}\left(\sum_{a\in\four}[\mathsf{B}_{a},\mathsf{B}_{a}^{\dagger}]+\mathsf{I}\mathsf{I}^{\dagger}-r\right)^{2}+\sum_{(ab)\in\six}\Tr|[\mathsf{B}_{a},\mathsf{B}_{b}]|^{2},
\eea
which can be also read directly from \eqref{eq:scalarpotential-general}. Since $U_{\text{pot.}}\geq 0$, the ground state is always stable. When $r< 0$, the vacuum has a positive energy and will break supersymmetry. When $r=0$, it will preserve supersymmetry. When $r>0$, the original string theory is not supersymmetric but it will restore supersymmetry after tachyon condensation. The moduli space of this vacua of this theory is given as
\bea
\mathfrak{M}_{n,k}=\{(\vec{\mathsf{B}},\mathsf{I})\mid \mu_{D}-r=\mu_{A}=0\}/\U(k),\quad \vec{\mathsf{B}}=(\mathsf{B}_{a})_{a\in\four}
\eea
where
\bea
\mu_{D}=\sum_{a\in\four}[\mathsf{B}_{a},\mathsf{B}_{a}^{\dagger}]+\mathsf{I}\mathsf{I}^{\dagger},\quad \mu_{A}=[\mathsf{B}_{a},\mathsf{B}_{b}],\quad A=(ab)\in\six.
\eea
This is exactly the one introduced in \eqref{eq:M4ADHM1} and \eqref{eq:M4ADHM2} ($\mu_{D}\rightarrow \mu_{\mathbb{R}}, r\rightarrow \zeta$).

\subsection{Tetrahedron instanton}\label{sec:tetra-2dtheory}
Before moving on to the general case of tetrahedron instanton, let us focus on the case when we have $n_{\bar{4}}$ D7-branes wrapping the complex three-plane $\mathbb{C}^{3}_{123}$ and $k$ D0-branes probing them. This setup preserves 4 SUSY and the quiver and the superpotential of the 2d $\mathcal{N}=(2,2)$ theory are summarized as 
\bea\label{eq:4SUSYD7setup}
\adjustbox{valign=c}{
\begin{tikzpicture}[decoration={markings,mark=at position \arrowHeadPosition with {\arrow{latex}}}]
 \tikzset{
        box/.style={draw, minimum width=0.7cm, minimum height=0.7cm, text centered,thick},
        ->-/.style={decoration={
        markings,mark=at position #1 with {\arrow[scale=1.5]{>}}},postaction={decorate},line width=0.5mm},
        -<-/.style={decoration={
        markings,
        mark=at position #1 with {\arrow[scale=1.5]{<}}},postaction={decorate},line width=0.5mm}    
    }
\begin{scope}[xshift=4cm]
    \node[box,fill=black!10!white] at (0,1.6) {$n_{\bar{4}}$};
    \draw[postaction={decorate},thick] (0,1.25)--(0,0.4);
    \foreach \ang in {125,180,235} {
    \begin{scope}[rotate=\ang]
        \chiralarc[postaction={decorate},thick](0,0.5)(-45:225:0.22:0.65)
    \end{scope}
    }
    \node[right] at (0,0.8) {$\mathsf{I}$};
    \node[] at (-1.5,-0.9) {$\mathsf{B}_{1}$};
    \node[below] at (0,-1.6) {$\mathsf{B}_{2}$};
    \node[ ] at (1.5,-0.9){$\mathsf{B}_{3}$};
    \draw[fill=black!10!white,thick](0,0) circle(0.4cm);
    \node at (0,0) {$k$};
\end{scope}
\end{tikzpicture}}\qquad  
\mathsf{W}_{0}=\Tr\left(\mathsf{B}_{1}[\mathsf{B}_{2},\mathsf{B}_{3}]\right).
\eea
In the 4 SUSY notation ($\mathcal{N}=(2,2)$ superfields), the matter components coming from the 1-1 strings are a vector superfield, three adjoint chiral superfields $\Phi^{(2,2)}_{1,2,3}=\mathsf{B}_{1,2,3}+\cdots$. As discussed in section~\ref{sec:tetra-opstring}, the 1-7 strings give a chiral superfield $\Phi^{(2,2)}_{\bar{4}}=\mathsf{I}+\cdots$ in the bifundamental representation $(\overline{n}_{\bar{4}},k)$. Similarly, we identify the superfields with the first component field. The superpotential comes from the dimensional reduction of the 4d $\mathcal{N}=4$ SYM. The chiral superfield $\mathsf{I}$ gives no extra term because there is no other gauge invariant term.

Decomposing into the 2 SUSY convention gives the following quiver and $E,J$-terms:
\bea\label{eq:2susyD7onestack}
\adjustbox{valign=c}{
\begin{tikzpicture}[decoration={markings,mark=at position \arrowHeadPosition with {\arrow{latex}}}]
 \tikzset{
        box/.style={draw, minimum width=0.7cm, minimum height=0.7cm, text centered,thick},
        ->-/.style={decoration={
        markings,mark=at position #1 with {\arrow[scale=1.5]{>}}},postaction={decorate},line width=0.5mm},
        -<-/.style={decoration={
        markings,
        mark=at position #1 with {\arrow[scale=1.5]{<}}},postaction={decorate},line width=0.5mm}    
    }
\begin{scope}[xshift=4cm]
    \draw[fill=black!10!white,thick](0,0) circle(0.4cm);
    \node at (0,0) {$k$};
    \node[box,fill=black!10!white] at (0,1.6) {$n_{\bar{4}}$};
    \draw[postaction={decorate},thick] (-0.1,1.25)--(-0.1,0.4);
    \draw[postaction={decorate},red,thick] (0.1,1.25)--(0.1,0.4);
    \foreach \ang in {90,145,215,270} {
    \begin{scope}[rotate=\ang]
        \chiralarc[postaction={decorate},thick](0,0.5)(-45:225:0.22:0.65)
    \end{scope}
    }
    \foreach \ang in {90,145,270} {
    \begin{scope}[rotate=\ang]
    \fermiarc[postaction={decorate},thick](0,0.5)(-45:225:0.1:0.5)
    \end{scope}
    \node[left] at (-0.1,0.8) {$\mathsf{I}$};
    \node[right] at (0.1,0.8) {$\textcolor{red}{\Lambda_{\mathsf{I}}}$};
    \node[left] at (-1.5,0) {$\mathsf{B}_{2},\textcolor{red}{\Lambda_{2}}$};
    \node[right] at (1.6,0) {$\mathsf{B}_{1},\textcolor{red}{\Lambda_{1}}$};
    \node[below left] at (-0.9,-1){$\mathsf{B}_{3},\textcolor{red}{\Lambda_{3}}$};
    \node[below right] at (0.9,-1){$\mathsf{B}_{4}$};
    \draw[fill=black!10!white,thick](0,0) circle(0.4cm);
    \node at (0,0) {$k$};
    
    }
\end{scope}
\end{tikzpicture}}\qquad 
\begin{array}{l}
    E_{i}=[\mathsf{B}_{4},\mathsf{B}_{i}],\quad E_{\mathsf{I}}=\mathsf{B}_{4}\,\mathsf{I}   \\
    J_{i}=\partial \mathsf{W}_{0}/\partial \mathsf{B}_{i}=\frac{1}{2}\varepsilon_{ijk4}[\mathsf{B}_{j},\mathsf{B}_{k}]
\end{array}
\eea
As discussed in \eqref{eq:4susy2susycorrespondence}, the $\mathcal{N}=(2,2)$ chiral superfields decompose into $\mathcal{N}=(0,2)$ chiral and Fermi superfields. For this case, we have $\Phi^{(2,2)}_{i}\rightarrow \Phi_{i}^{(0,2)}=\mathsf{B}_{i}+\cdots,\Lambda_{i}\,(i=1,2,3)$ and $\Phi_{\bar{4}}^{(2,2)}\rightarrow \Phi_{\bar{4}}^{(0,2)}=\mathsf{I}+\cdots ,\,\,\Lambda_{\mathsf{I}}$. The $\mathcal{N}=(2,2)$ vector superfield decomposes into a $\mathcal{N}=(0,2)$ vector superfield and a chiral superfield $\Phi^{(0,2)}_{4}=\mathsf{B}_{4}+\cdots$. The $E$-terms are derived by using \eqref{eq:4susyEterm}, \eqref{eq:4susy2susycorrespondence}. The $J$-terms are obtained by using \eqref{eq:4susyJterm}, \eqref{eq:4susy2susycorrespondence}. Note that the traceless condition \eqref{eq:JE-condition} is similar to the magnificent four case and indeed we have $\Tr(\sum_{i=1}^{3}J_{i}E_{i})=0$.


\paragraph{Tetrahedron instanton case}Let us generalize the story to the tetrahedron instanton case. We have $n_{\bar{a}}$ D7$_{\bar{a}}$-branes and $k$ D1-branes. For each $\D7_{\bar{a}}\,\,(a\in\four)$, the 1-7 strings give a chiral superfield $\Phi_{\bar{a}}=\mathsf{I}_{\bar{a}}+\cdots$ and a Fermi superfield $\Lambda_{\bar{a}}$ transforming in the bifundamental representation of $\U(n_{\bar{a}})\times\U(k)$. The matter components coming from the 1-1 strings are one vector superfield $\Upsilon$, four adjoint chiral superfields $\Phi_{a}$, and three Fermi supefields $\Lambda_{i}$, where we used the same notation with the magnificent four case. The matter components and the $J,E$-terms are summarized as
\bea\label{eq:2SUSYquiver-tetrahedron}
\adjustbox{valign=c}{
\begin{tikzpicture}[decoration={markings,mark=at position \arrowHeadPosition with {\arrow{latex}}}]
    \tikzset{
        cir/.style={circle,fill=black!10!white, draw, minimum size=0.8cm, text centered, thick},
        point/.style={circle, fill, inner sep=1.5pt},
        box/.style={draw, fill=black!10!white,minimum width=0.7cm, minimum height=0.7cm, text centered,thick},
        ->-/.style={decoration={
        markings,mark=at position #1 with {\arrow[scale=1.5]{>}}},postaction={decorate},line width=0.5mm},
        -<-/.style={decoration={
        markings,
        mark=at position #1 with {\arrow[scale=1.5]{<}}},postaction={decorate},line width=0.5mm}    
    }

    \node[cir] (i1) at ($(0,0)$) {$k$};
    \begin{scope}[rotate=0]
    \node[box,rotate=0] (n13) at ($(0,-2)$) {\rotatebox{-0}{$n_{\bar{4}}$}};

    \draw[postaction={decorate}, thick] ($(0.06,-1.65)$) -- ($(0.06,-0.4)$);
    \draw[postaction={decorate},red, thick] ($(-0.06,-1.65)$) -- ($(-0.06,-0.4)$);
    \end{scope}

    \begin{scope}[rotate=90]
    \node[box,rotate=90] (n12) at  ($(0,-2)$) {\rotatebox{-90}{$n_{\bar{1}}$}};
    \draw[postaction={decorate}, thick] ($(0.06,-1.65)$) -- ($(0.06,-0.4)$);
    \draw[postaction={decorate},red, thick] ($(-0.06,-1.65)$) -- ($(-0.06,-0.4)$);
    \end{scope}
    
    \begin{scope}[rotate=180]
    \node[box,rotate=180] (n12) at  ($(0,-2)$) {\rotatebox{-180}{$n_{\bar{2}}$}};
    
    \draw[postaction={decorate}, thick] ($(0.06,-1.65)$) -- ($(0.06,-0.4)$);
    \draw[postaction={decorate},red, thick] ($(-0.06,-1.65)$) -- ($(-0.06,-0.4)$);
    \end{scope}
    
    \begin{scope}[rotate=270]
    \node[box,rotate=270] (n23) at  ($(0,-2)$) {\rotatebox{-270}{$n_{\bar{3}}$}};
    
    \draw[postaction={decorate}, thick] ($(0.06,-1.65)$) -- ($(0.06,-0.4)$);
    \draw[postaction={decorate},red, thick] ($(-0.06,-1.65)$) -- ($(-0.06,-0.4)$);
    \end{scope}

    \begin{scope}[rotate=45]
    \chiralarc[postaction={decorate},thick](0,0.55)(-45:225:0.22:0.65)
    \fermiarc[postaction={decorate},thick](0,0.5)(-45:225:0.1:0.5)
    \end{scope}
    \begin{scope}[rotate=135]
    \chiralarc[postaction={decorate},thick](0,0.55)(-45:225:0.22:0.65)
    \fermiarc[postaction={decorate},thick](0,0.5)(-45:225:0.1:0.5)
    \end{scope}
    \begin{scope}[rotate=225]
    \chiralarc[postaction={decorate},thick](0,0.55)(-45:225:0.22:0.65)
    \end{scope}
    \begin{scope}[rotate=315]
    \chiralarc[postaction={decorate},thick](0,0.55)(-45:225:0.22:0.65)
    \fermiarc[postaction={decorate},thick](0,0.5)(-45:225:0.1:0.5)
    \end{scope}    
\end{tikzpicture}} \qquad 
\begin{array}{l}
   E_{i}=[\mathsf{B}_{4},\mathsf{B}_{i}],\quad E_{\mathsf{I}_{\bar{a}}}=\mathsf{B}_{a}\,\mathsf{I}_{\bar{a}}\\
   J_{i}=\frac{1}{2}\varepsilon_{ijk4}[\mathsf{B}_{j},\mathsf{B}_{k}]
\end{array}
\eea
We chose a notation so that the $E,J$-terms in \eqref{eq:2susyD7onestack} when we have only one stack of D7$_{\bar{4}}$-branes setup naturally arise. Similarly, it is obvious that we have $\Tr(\sum_{i=1}^{3}J_{i}E_{i})=0$.

The low energy effective theory of the D1-branes is given as
\bea
\mathcal{L}&=\int d\theta^{+} d\bar{\theta}^{+} \Tr\left(  \frac{1}{2g^{2}}\overline{\Upsilon}\Upsilon+i\sum_{a\in\four}\overline{\Phi}_{a}(\mathcal{D}_{0}-\mathcal{D}_{1}){\Phi}_{a}+\sum_{i=1}^{3}\overline{\Lambda}_{i}\Lambda_{i}  \right)+\left(\int d\theta^{+} \Tr \sum_{i=1}^{3}\Lambda_{i}J_{i}|_{\bar{\theta}^{+}=0}+\text{c.c} \right)\\
&+\int d\theta^{+} d\bar{\theta}^{+} \left(i\sum_{a\in\four}\overline{\Phi}_{\bar{a}}(\mathcal{D}_{0}-\mathcal{D}_{1}){\Phi}_{\bar{a}}+\sum_{a\in\four}\overbar{\Lambda}_{\bar{a}}\Lambda_{\bar{a}} \right)+\left(\frac{ir}{2}\int d\theta^{+}\Upsilon|_{\bar{\theta}^{+}=0}+\text{c.c.}\right).
\eea
The FI parameter $r$ here is identified with the $B$-fields as
\bea
v_{1}=v_{2}=v_{4}=v_{4}=\frac{1}{6}+\frac{r}{3}.
\eea
We omit the reason why this should be in this way so see \cite{Pomoni:2021hkn}. Integrating out the auxiliary fields, the scalar potential is given as
\bea
U_{\text{pot.}}=\frac{g^{2}}{2}\left(\sum_{a\in\four}[\mathsf{B}_{a},\mathsf{B}_{a}^{\dagger}]+\sum_{a\in\four}\mathsf{I}_{\bar{a}}\mathsf{I}_{\bar{a}}^{\dagger}-r\right)^{2}+\sum_{(ab)\in\six}\Tr|[\mathsf{B}_{a},\mathsf{B}_{b}]|^{2}+\sum_{a\in\four}\Tr|\mathsf{B}_{a}\mathsf{I}_{\bar{a}}|^{2}.
\eea
The moduli space of this vacua is
\bea
\mathfrak{M}_{\vec{n},k}=\{(\vec{\mathsf{B}},\vec{\mathsf{I}}\,\,)\mid \mu_{D}-r=\mu_{A}=\mu_{a}=0\}/\U(k),\quad \vec{\mathsf{B}}=(\mathsf{B}_{a})_{a\in\four},\quad \vec{\mathsf{I}}=(\mathsf{I}_{\bar{a}})_{a\in\four}
\eea
where
\bea
\mu_{D}=\sum_{a\in\four}[\mathsf{B}_{a},\mathsf{B}_{a}^{\dagger}]+\sum_{a\in\four}\mathsf{I}_{\bar{a}}\mathsf{I}_{\bar{a}}^{\dagger},\quad \mu_{A}=[\mathsf{B}_{a},\mathsf{B}_{b}],\quad \mu_{a}&=\mathsf{B}_{a}\mathsf{I}_{\bar{a}}
\eea
for $A=(ab)\in\six$ and $a\in\four$. This coincides with \eqref{eq:tetraADHM}.

\subsection{Spiked instanton }
Let us consider the low energy effective field theory of the $k$ D1-branes probing $n_{12}$ D5-branes wrapping the complex two plane $\mathbb{C}^{2}_{12}$. As mentioned in secdtion~\ref{sec:spiked-susycount}, this setup preserves 8 SUSY. Using the 4 SUSY notation, the matter components are summarized as
\bea
\adjustbox{valign=c}{
\begin{tikzpicture}[decoration={markings,mark=at position \arrowHeadPosition with {\arrow{latex}}}]
 \tikzset{
        box/.style={draw, minimum width=0.7cm, minimum height=0.7cm, text centered,thick},
        ->-/.style={decoration={
        markings,mark=at position #1 with {\arrow[scale=1.5]{>}}},postaction={decorate},line width=0.5mm},
        -<-/.style={decoration={
        markings,
        mark=at position #1 with {\arrow[scale=1.5]{<}}},postaction={decorate},line width=0.5mm}    
    }
\begin{scope}[xshift=4cm]
    \node[box,fill=black!10!white] at (0,1.6) {$n_{12}$};
    \draw[postaction={decorate},thick] (-0.1,1.25)--(-0.1,0.4);
    \draw[postaction={decorate},thick] (0.1,0.4)--(0.1,1.25);
    \foreach \ang in {125,180,235} {
    \begin{scope}[rotate=\ang]
        \chiralarc[postaction={decorate},thick](0,0.5)(-45:225:0.22:0.65)
    \end{scope}
    }
    \node[left] at (-0.1,0.8) {$\mathsf{I}$};
    \node[right] at (0.1,0.8) {$\mathsf{J}$};
    \node[] at (-1.5,-0.9) {$\mathsf{B}_{1}$};
    \node[below] at (0,-1.6) {$\mathsf{B}_{2}$};
    \node[ ] at (1.5,-0.9){$\mathsf{B}_{3}$};
    \draw[fill=black!10!white,thick](0,0) circle(0.4cm);
    \node at (0,0) {$k$};
\end{scope}
\end{tikzpicture}}\qquad\qquad   
\begin{array}{l}
\mathsf{W}=\mathsf{W}_{0}+\mathsf{W}_{F}\\
\mathsf{W}_{F}=\Tr\left(\mathsf{J}\,\mathsf{B}_{3}\,\mathsf{I}\right)
\end{array}
\eea
The matter components coming from the 1-1 strings are the same as \eqref{eq:4SUSYD7setup}: a $\mathcal{N}=(2,2)$ vector superfield and three adjoint chiral superfields $\Phi^{(2,2)}_{1,2,3}=\mathsf{B}_{1,2,3}+\cdots$. In this case, the 1-5 strings give two chiral superfields $\Phi_{12}^{(2,2)}=\mathsf{I}+\cdots$ and $\widetilde{\Phi}_{12}^{(2,2)}=\mathsf{J}+\cdots$ transforming in the bifundamental representations $(\overbar{n}_{12},k)$ and $(n_{12},\bar{k})$, respectively. In this case, the superpotential is modified by an extra term $\mathsf{W}_{F}$ which is obviously a gauge invariant term and thus can be added to the superpotential.

In the 2 SUSY notation, we have
\bea
\adjustbox{valign=c}{
\begin{tikzpicture}[decoration={markings,mark=at position \arrowHeadPosition with {\arrow{latex}}}]
 \tikzset{
        box/.style={draw, minimum width=0.7cm, minimum height=0.7cm, text centered,thick},
        ->-/.style={decoration={
        markings,mark=at position #1 with {\arrow[scale=1.5]{>}}},postaction={decorate},line width=0.5mm},
        -<-/.style={decoration={
        markings,
        mark=at position #1 with {\arrow[scale=1.5]{<}}},postaction={decorate},line width=0.5mm}    
    }
\begin{scope}[xshift=4cm]
    \draw[fill=black!10!white,thick](0,0) circle(0.4cm);
    \node at (0,0) {$k$};
    \node[box,fill=black!10!white] at (0,1.6) {$n_{12}$};
    \draw[postaction={decorate},thick] (-0.2,1.25)--(-0.2,0.3);
    \draw[postaction={decorate},red,thick] (-0.07,1.25)--(-0.07,0.4);
    \draw[postaction={decorate},thick] (0.2,0.3)--(0.2,1.25);
    \draw[postaction={decorate},red,thick] (0.07,0.4)--(0.07,1.25);
    \foreach \ang in {90,145,215,270} {
    \begin{scope}[rotate=\ang]
        \chiralarc[postaction={decorate},thick](0,0.5)(-45:225:0.22:0.65)
    \end{scope}
    }
    \foreach \ang in {90,145,270} {
    \begin{scope}[rotate=\ang]
    \fermiarc[postaction={decorate},thick](0,0.5)(-45:225:0.1:0.5)
    \end{scope}}
    \node[left] at (-0.2,0.8) {$\mathsf{I},\textcolor{red}{\Lambda_{\mathsf{I}}}$};
    \node[right] at (0.2,0.8) {$\mathsf{J},\textcolor{red}{\Lambda_{\mathsf{J}}}$};
    \node[left] at (-1.5,0) {$\mathsf{B}_{2},\textcolor{red}{\Lambda_{2}}$};
    \node[right] at (1.6,0) {$\mathsf{B}_{1},\textcolor{red}{\Lambda_{1}}$};
    \node[below left] at (-0.9,-1){$\mathsf{B}_{3},\textcolor{red}{\Lambda_{3}}$};
    \node[below right] at (0.9,-1){$\mathsf{B}_{4}$};
    \draw[fill=black!10!white,thick](0,0) circle(0.4cm);
    \node at (0,0) {$k$};
\end{scope}
\end{tikzpicture}} \quad  \begin{array}{l}
    E_{i}=[\mathsf{B}_{4},\mathsf{B}_{i}],\quad E_{\mathsf{I}}=\mathsf{B}_{4}\mathsf{I},\quad E_{\mathsf{J}}=-\mathsf{J}\mathsf{B}_{4},\\
    J_{1}=[\mathsf{B}_{2},\mathsf{B}_{3}],\quad J_{2}=[\mathsf{B}_{3},\mathsf{B}_{1}],\\
     J_{3}=[\mathsf{B}_{1},\mathsf{B}_{2}]+\mathsf{I}\mathsf{J},\quad J_{\mathsf{I}}=\mathsf{J}\mathsf{B}_{3},\quad J_{\mathsf{J}}=\mathsf{B}_{3}\mathsf{I}
\end{array}
\eea
where the $\mathcal{N}=(2,2)$ chiral superfields are decomposed into $\mathcal{N}=(0,2)$ chiral superfields and Fermi superfields which are denoted as $\Phi_{12}=\mathsf{I}+\cdots,\widetilde{\Phi}_{12}=\mathsf{J}+\cdots$ and $\Lambda_{\mathsf{I}},\Lambda_{\mathsf{J}}$, respectively. Other components are the same with the tetrahedron instanton and the magnificent four cases. 

The $E,J$-terms are obtained by using the relation \eqref{eq:4susyJterm} and \eqref{eq:4susyEterm}. The existence of the extra superpotential term $\mathsf{W}_{F}$ modifies the $J$-term coupled with the Fermi superfield $\Lambda_{3}$ because of 
\bea
J_{3}=\frac{\partial \mathsf{W}}{\partial \mathsf{B}_{3}}=\frac{\partial \mathsf{W}_{0}}{\partial \mathsf{B}_{3}}+\frac{\partial \mathsf{W}_{F}}{\partial \mathsf{B}_{3}}=[\mathsf{B}_{1},\mathsf{B}_{2}]+\mathsf{I}\mathsf{J}.
\eea
Indeed, the traceless condition \eqref{eq:JE-condition} is confirmed as
\bea
\Tr\left(\sum_{i=1}^{3}J_{i}E_{i}\right)&=\Tr\left(\sum_{i=1}^{3}\varepsilon_{ijk4}[\mathsf{B}_{j},\mathsf{B}_{k}][\mathsf{B}_{4},\mathsf{B}_{i}]+\mathsf{I}\mathsf{J}[\mathsf{B}_{4},\mathsf{B}_{3}]\right)&=\Tr\left(\mathsf{I}\mathsf{J}[\mathsf{B}_{4},\mathsf{B}_{3}]\right),\\
\Tr\left(J_{\mathsf{I}}\,E_{\mathsf{I}}\right)&=\Tr\left(\mathsf{J}\mathsf{B}_{3}\mathsf{B}_{4}\mathsf{I}\right),\quad \Tr\left(J_{\mathsf{J}}E_{\mathsf{J}}\right)=\Tr\left(-\mathsf{B}_{3}\mathsf{I}\mathsf{J}\mathsf{B}_{4}\right)
\eea
which gives $\Tr\left(\sum_{i=1}^{3}J_{i}E_{i}+J_{\mathsf{I}}E_{\mathsf{I}}+J_{\mathsf{J}}E_{\mathsf{J}}\right)=0$.


\paragraph{4 SUSY setup} Let us consider a spiked instanton setup with D5-branes wrapping complex two-planes inside $\mathbb{C}^{3}_{123}$. We put $n_{A}$ D5-branes wrapping the complex two-plane $\mathbb{C}^{2}_{A}$ for $A\in\{12,13,23\}$ and $k$ D1-branes probing this D5-branes. This setup preserves four supersymmetries and the quiver diagram is modified to 
\bea
\begin{tikzpicture}[decoration={markings,mark=at position \arrowHeadPosition with {\arrow{latex}}}]
 \tikzset{
        box/.style={draw, minimum width=0.7cm, minimum height=0.7cm, text centered,thick},
        ->-/.style={decoration={
        markings,mark=at position #1 with {\arrow[scale=1.5]{>}}},postaction={decorate},line width=0.5mm},
        -<-/.style={decoration={
        markings,
        mark=at position #1 with {\arrow[scale=1.5]{<}}},postaction={decorate},line width=0.5mm}    
    }
\begin{scope}[xshift=4cm]
    \begin{scope}[]
    \node[box,fill=black!10!white] at (0,1.6) {$n_{12}$};
    \draw[postaction={decorate},thick] (-0.1,1.25)--(-0.1,0.4);
    \draw[postaction={decorate},thick] (0.1,0.4)--(0.1,1.25);
    \node[left] at (-0.1,0.8) {$\mathsf{I}_{12}$};
    \node[right] at (0.06,0.8) {$\mathsf{J}_{12}$};
    \end{scope}
    \begin{scope}[rotate=120]
    \draw[postaction={decorate},thick] (-0.1,1.25)--(-0.1,0.4);
    \draw[postaction={decorate},thick] (0.1,0.4)--(0.1,1.25);
    \node[rotate=120,left] at (-0.1,0.8) {\rotatebox{-120}{$\mathsf{I}_{13}$}};
    \node[rotate=120,right] at (0.06,0.8) {\rotatebox{-120}{$\mathsf{J}_{13}$}};
    \node[box,rotate=120,fill=black!10!white] at (0,1.6) {\rotatebox{-120}{$n_{13}$}};
    \end{scope}

    \begin{scope}[rotate=240]
    \draw[postaction={decorate},thick] (-0.1,1.25)--(-0.1,0.4);
    \draw[postaction={decorate},thick] (0.1,0.4)--(0.1,1.25);
    \node[rotate=240,left] at (-0.1,0.8) {\rotatebox{-240}{$\mathsf{I}_{23}$}};
    \node[rotate=240,right] at (0.06,0.8) {\rotatebox{-240}{$\mathsf{J}_{23}$}};
    \node[box,rotate=240,fill=black!10!white] at (0,1.6) {\rotatebox{-240}{$n_{23}$}};
    \end{scope}
    \foreach \ang in {60,180,300} {
    \begin{scope}[rotate=\ang]
        \chiralarc[postaction={decorate},thick](0,0.5)(-45:225:0.22:0.65)
    \end{scope}
    }
    \node[] at (-1.5,0.9) {$\mathsf{B}_{1}$};
    \node[below] at (0,-1.6) {$\mathsf{B}_{2}$};
    \node[ ] at (1.5,0.9){$\mathsf{B}_{3}$};
    \draw[fill=black!10!white,thick](0,0) circle(0.4cm);
    \node at (0,0) {$k$};
\end{scope}
\end{tikzpicture}
\eea
For each stack of D5$_{A}$-branes ($A\in\{12,13,23\}$), the 1-5 strings give $\mathcal{N}=(2,2)$ chiral superfields $\Phi_{A}^{(2,2)}=\mathsf{I}_{A}+\cdots$ and $\widetilde{\Phi}^{(2,2)}_{A}=\mathsf{J}_{A}+\cdots$. The matter components of the 1-1 strings are the same as the previous setup. Each stack of D5$_{A}$-branes for $A\in\{12,13,23\}$ introduces an extra term to the superpotential similar to the previous case and we have
\bea
\mathsf{W}_{F}=\Tr(\mathsf{J}_{23}\mathsf{B}_{1}\mathsf{I}_{23})+\Tr(\mathsf{J}_{13}\mathsf{B}_{2}\mathsf{I}_{13})+\Tr(\mathsf{J}_{12}\mathsf{B}_{3}\mathsf{I}_{12}).
\eea

In the 2 SUSY convention, the $E,J$-terms are determined as
\bea
E_{1,2,3}&=[\mathsf{B}_{4},\mathsf{B}_{1,2,3}],\quad E_{\mathsf{I}_{A}}=\mathsf{B}_{4}\mathsf{I}_{A},\quad E_{\mathsf{J}_{A}}=-\mathsf{J}_{A}\mathsf{B}_{4},\quad A\in\{12,23,13\}\\
J_{i}&=\frac{\partial{\mathsf{W}}}{\partial{\mathsf{B}_{i}}}=\varepsilon_{ijk4}\mathsf{B}_{j}\mathsf{B}_{k}+\mathsf{I}_{\overbar{i4}}\mathsf{J}_{\overbar{i4}},\quad J_{\mathsf{I}_{\overbar{i4}}}=\mathsf{J}_{\overbar{i4}}\mathsf{B}_{i},\quad
J_{\mathsf{J}_{\overbar{i4}}}=\mathsf{B}_{i}\mathsf{I}_{\overbar{i4}}
\eea
for $i=1,2,3$. The explicit 2 SUSY quiver diagram of this setup is obtained as a specialization of \eqref{eq:2SUSYquiver-spikedinstantonn} by setting $n_{14}=n_{24}=n_{34}=0$ there.

\paragraph{Spiked instanton case}Let us consider the most generic setup where we have six stacks of $n_{A}$ D5$_{A}$-branes wrapping the complex two-planes $\mathbb{C}^{2}_{A}$ for $A\in\six$. The 1-5$_{A}$ strings give $\mathcal{N}=(0,2)$ chiral superfields $\Phi_{A}=\mathsf{I}_{A}+\cdots $, $\widetilde{\Phi}_{A}=\mathsf{J}_{A}+\cdots $ and Fermi superfields $\Lambda_{A}=\Lambda_{\mathsf{I}_{A}},\widetilde{\Lambda}_{A}=\Lambda_{\mathsf{J}_{A}}$. The components coming from 1-1 strings are the same as the previous setups. They are summarized as the following quiver diagram:
\bea\label{eq:2SUSYquiver-spikedinstantonn}
\begin{tikzpicture}[decoration={markings,mark=at position \arrowHeadPosition with {\arrow{latex}}}]
 \tikzset{
        box/.style={draw, minimum width=0.8cm, minimum height=0.8cm, text centered,thick},
        ->-/.style={decoration={
        markings,mark=at position #1 with {\arrow[scale=1.5]{>}}},postaction={decorate},line width=0.5mm},
        -<-/.style={decoration={
        markings,
        mark=at position #1 with {\arrow[scale=1.5]{<}}},postaction={decorate},line width=0.5mm}    
    }
\begin{scope}[xshift=4cm]
    \draw[fill=black!10!white,thick](0,0) circle(0.4cm);
    \node at (0,0) {$k$};
    \begin{scope}[rotate=36]
    \node[rotate=36,box,fill=black!10!white] at (0,1.8) {\rotatebox{-36}{$n_{12}$}};
    \draw[postaction={decorate},thick] (-0.15,1.4)--(-0.15,0.3);
    \draw[postaction={decorate},red,thick] (-0.05,1.4)--(-0.05,0.4);
    \draw[postaction={decorate},thick] (0.15,0.3)--(0.15,1.4);
    \draw[postaction={decorate},red,thick] (0.05,0.4)--(0.05,1.4);
    \end{scope}
    \begin{scope}[rotate=72]
    \node[rotate=72,box,fill=black!10!white] at (0,1.8) {\rotatebox{-72}{$n_{13}$}};
    \draw[postaction={decorate},thick] (-0.15,1.4)--(-0.15,0.3);
    \draw[postaction={decorate},red,thick] (-0.05,1.4)--(-0.05,0.4);
    \draw[postaction={decorate},thick] (0.15,0.3)--(0.15,1.4);
    \draw[postaction={decorate},red,thick] (0.05,0.4)--(0.05,1.4);
    \end{scope}
    
    \begin{scope}[rotate=144]
    \node[rotate=144,box,fill=black!10!white] at (0,1.8) {\rotatebox{-144}{$n_{14}$}};
    \draw[postaction={decorate},thick] (-0.15,1.4)--(-0.15,0.3);
    \draw[postaction={decorate},red,thick] (-0.05,1.4)--(-0.05,0.4);
    \draw[postaction={decorate},thick] (0.15,0.3)--(0.15,1.4);
    \draw[postaction={decorate},red,thick] (0.05,0.4)--(0.05,1.4);
    \end{scope}
    \begin{scope}[rotate=216]
    \node[rotate=216,box,fill=black!10!white] at (0,1.8) {\rotatebox{-216}{$n_{23}$}};
    \draw[postaction={decorate},thick] (-0.15,1.4)--(-0.15,0.3);
    \draw[postaction={decorate},red,thick] (-0.05,1.4)--(-0.05,0.4);
    \draw[postaction={decorate},thick] (0.15,0.3)--(0.15,1.4);
    \draw[postaction={decorate},red,thick] (0.05,0.4)--(0.05,1.4);
    \end{scope}
    \begin{scope}[rotate=288]
    \node[rotate=288,box,fill=black!10!white] at (0,1.8) {\rotatebox{-288}{$n_{24}$}};
    \draw[postaction={decorate},thick] (-0.15,1.4)--(-0.15,0.3);
    \draw[postaction={decorate},red,thick] (-0.05,1.4)--(-0.05,0.4);
    \draw[postaction={decorate},thick] (0.15,0.3)--(0.15,1.4);
    \draw[postaction={decorate},red,thick] (0.05,0.4)--(0.05,1.4);
    \end{scope}
    \begin{scope}[rotate=-36]
    \node[rotate=-36,box,fill=black!10!white] at (0,1.8) {\rotatebox{36}{$n_{34}$}};
    \draw[postaction={decorate},thick] (-0.15,1.4)--(-0.15,0.3);
    \draw[postaction={decorate},red,thick] (-0.05,1.4)--(-0.05,0.4);
    \draw[postaction={decorate},thick] (0.15,0.3)--(0.15,1.4);
    \draw[postaction={decorate},red,thick] (0.05,0.4)--(0.05,1.4);
    \end{scope}
    \foreach \ang in {0,108,252,180} {
    \begin{scope}[rotate=\ang]
        \chiralarc[postaction={decorate},thick](0,0.7)(-45:225:0.25:1.0)
    \end{scope}
    }
    \foreach \ang in {108,180,252} {
    \begin{scope}[rotate=\ang]
    \fermiarc[postaction={decorate},thick](0,0.6)(-45:225:0.1:0.7)
    \end{scope}}
    \draw[fill=black!10!white,thick](0,0) circle(0.5cm);
    \node at (0,0) {$k$};
\end{scope}
\end{tikzpicture}
\eea
The black arrows $\U(n_{A})\rightarrow \U(k)$ and $\U(k)\rightarrow \U(n_{A})$ correspond to $\mathsf{I}_{A},\mathsf{J}_{A}$, respectively and the red arrows $\U(n_{A})\textcolor{red}{\rightarrow} \U(k)$ and $\U(k)\textcolor{red}{\rightarrow} \U(n_{A})$ correspond to $\Lambda_{\mathsf{I}_{A}},\Lambda_{\mathsf{J}_{A}}$, respectively. The $J$ and $E$-terms of this setup is given as 
\bea\label{eq:2SUSYJEterm-spiked}
E_{i}=[\mathsf{B}_{4},\mathsf{B}_{i}]+\mathsf{I}_{i4}\mathsf{J}_{i4},&\quad J_{i}=\varepsilon_{ijk4}\mathsf{B}_{j}\mathsf{B}_{k}+\mathsf{I}_{\overbar{i4}}\mathsf{J}_{\overbar{i4}},\quad E_{\mathsf{I}_{12}}=\mathsf{B}_{4}\mathsf{I}_{12},\quad J_{\mathsf{I}_{12}}=\mathsf{J}_{12}\mathsf{B}_{3},\\
E_{\mathsf{J}_{12}}=-\mathsf{J}_{12}\mathsf{B}_{4},&\quad J_{\mathsf{J}_{12}}=\mathsf{B}_{3}\mathsf{I}_{12},\quad
E_{\mathsf{I}_{13}}=\mathsf{B}_{4}\mathsf{I}_{13},\quad J_{\mathsf{I}_{13}}=\mathsf{J}_{13}\mathsf{B}_{2},\\ E_{\mathsf{J}_{13}}=-\mathsf{J}_{13}\mathsf{B}_{4},&\quad  J_{\mathsf{J}_{13}}=\mathsf{B}_{2}\mathsf{I}_{13},\quad 
E_{\mathsf{I}_{23}}=\mathsf{B}_{4}\mathsf{I}_{23},\quad J_{\mathsf{I}_{23}}=\mathsf{J}_{23}\mathsf{B}_{1},\\
E_{\mathsf{J}_{23}}=-\mathsf{J}_{23}\mathsf{B}_{4},&\quad J_{\mathsf{J}_{23}}=\mathsf{B}_{1}\mathsf{I}_{23},\quad 
E_{\mathsf{I}_{14}}=\mathsf{B}_{2}\mathsf{I}_{14},\quad J_{\mathsf{I}_{14}}=\mathsf{J}_{14}\mathsf{B}_{3},\\
E_{\mathsf{J}_{14}}=-\mathsf{J}_{14}\mathsf{B}_{2},&\quad J_{\mathsf{J}_{14}}=\mathsf{B}_{3}\mathsf{I}_{14},\quad 
E_{\mathsf{I}_{24}}=\mathsf{B}_{3}\mathsf{I}_{24},\quad J_{\mathsf{I}_{24}}=\mathsf{J}_{24}\mathsf{B}_{1},\\
E_{\mathsf{J}_{24}}=-\mathsf{J}_{24}\mathsf{B}_{3},&\quad J_{\mathsf{J}_{24}}=\mathsf{B}_{1}\mathsf{I}_{24},\quad 
E_{\mathsf{I}_{34}}=\mathsf{B}_{1}\mathsf{I}_{34},\quad J_{\mathsf{I}_{34}}=\mathsf{J}_{34}\mathsf{B}_{2},\\
E_{\mathsf{J}_{34}}=-\mathsf{J}_{34}\mathsf{B}_{1},&\quad J_{\mathsf{J}_{34}}=\mathsf{B}_{2}\mathsf{I}_{34}
\eea
where $i=1,2,3$. In fact, under these definitions, one can check that the traceless condition \eqref{eq:JE-condition} is indeed obeyed. Moreover, using the quadrality symmetry and the $J$-$E$ duality, one also confirm that the previous setups preserving four supersymmetries can be reproduced from this definition.

The low energy effective field theory of the D1-branes is given as
\bea
\mathcal{L}&=\int d\theta^{+} d\bar{\theta}^{+} \Tr\left(  \frac{1}{2g^{2}}\overline{\Upsilon}\Upsilon+i\sum_{a\in\four}\overline{\Phi}_{a}(\mathcal{D}_{0}-\mathcal{D}_{1}){\Phi}_{a}+\sum_{i=1}^{3}\overline{\Lambda}_{i}\Lambda_{i}  \right)+\left(\int d\theta^{+} \Tr \sum_{i=1}^{3}\Lambda_{i}J_{i}|_{\bar{\theta}^{+}=0}+\text{c.c} \right)\\
&+\int d\theta^{+} d\bar{\theta}^{+} \left(i\sum_{A\in\six}\overline{\Phi}_{A}(\mathcal{D}_{0}-\mathcal{D}_{1}){\Phi}_{A}+i\sum_{A\in\six}\widetilde{\Phi}_{A}(\mathcal{D}_{0}-\mathcal{D}_{1})\overline{\widetilde{{\Phi}}}_{A}+\sum_{A\in\six}\overbar{\Lambda}_{A}\Lambda_{A}+\sum_{A\in\six}\overbar{\widetilde{\Lambda}}_{A}\widetilde{\Lambda}_{A} \right)\\
&+\int d\theta^{+} \Tr \sum_{A\in\six}\left(\Lambda_{A}J_{\mathsf{I}_{A}}|_{\bar{\theta}^{+}=0}+\widetilde{\Lambda}_{A}J_{\mathsf{J}_{A}}|_{\bar{\theta}^{+}=0}+\text{c.c}\right) +\left(\frac{ir}{2}\int d\theta^{+}\Upsilon|_{\bar{\theta}^{+}=0}+\text{c.c.}\right)
\eea
So that the $B$-field can be understood as the FI-parameter, constraints on the $B$-field need to be imposed and it is actually
\bea
v_{1}=-v_{2}=v_{3}=-v_{4}.
\eea
We omit the derivation of this so see \cite{Nekrasov:2016gud}.

Integrating out the auxiliary fields, the scalar potential is given as
\bea
U_{\text{pot.}}&=\frac{g^{2}}{2}\left(\sum_{a\in\four}[\mathsf{B}_{a},\mathsf{B}_{a}^{\dagger}]+\sum_{A\in\six}(\mathsf{I}_{A}\mathsf{I}_{A}^{\dagger}-\mathsf{J}_{A}^{\dagger}\mathsf{J}_{A})-r \right)^{2}+\sum_{(ab)\in\six}\Tr|[\mathsf{B}_{a},\mathsf{B}_{b}]+\mathsf{I}_{A}\mathsf{J}_{A}|^{2}\\
&+\sum_{A\in\six}\sum_{a\in\bar{A}}\left(\Tr|\mathsf{B}_{a}\mathsf{I}_{A}|^{2}+\Tr|\mathsf{J}_{A}\mathsf{B}_{a}|^{2} \right).
\eea
The vacuum moduli space is then given as
\bea
\mathfrak{M}_{\vec{n},k}=\{(\vec{\mathsf{B}},\vec{\mathsf{I}},\vec{\mathsf{J}}\,)\mid \mu_{D}-r=\mu_{A}=\sigma_{a;A}=\widetilde{\sigma}_{a;A}=0\}/\U(k)
\eea
where
\bea
\mu_{D}&=\sum_{a\in\four}[\mathsf{B}_{a},\mathsf{B}_{a}^{\dagger}]+\sum_{A\in\six}(\mathsf{I}_{A}\mathsf{I}_{A}^{\dagger}-\mathsf{J}_{A}^{\dagger}\mathsf{J}_{A}),\\
\mu_{A}&=[\mathsf{B}_{a},\mathsf{B}_{b}]+\mathsf{I}_{A}\mathsf{J}_{A},\quad \sigma_{a;A}=\mathsf{B}_{a}\mathsf{I}_{A},\quad \tilde{\sigma}_{a;A}=\mathsf{J}_{A}\mathsf{B}_{a},\quad a\in\bar{A}.
\eea
This indeed coincides with \eqref{eq:spikedinstantonmoduli}.

\chapter{Gauge Origami partition functions}\label{chap:gaugeorigamipartitionfunction}

The goal of this chapter is to derive the gauge origami partition functions by performing as what we did for the pure SYM case in Chap.~\ref{chap:ADHM-localization}. As discussed in the previous chapter, the low energy field theory of the $k$ D1-branes is described by a $\mathcal{N}=(0,2)$ quiver gauge theory. Since these D1-branes play the roles of instantons in the D$(2p+1)$-branes wrapping holomorphic cycles in $\mathbb{C}^{4}$, the partition function is expected to be the partition function of this 2d theory.  We will derive contour integral formulas and explicitly evaluate them.

In the pure SYM case, the $\U(1)$ charges $q_{1,2}$ of the ADHM variables played an important role. This is also the case for the gauge origami system. After discussing general aspects of the flavor symmetries, we explicitly derive the flavor charges of each gauge origami system in section~\ref{sec:2susy2d-flavorsymmetry}. Instead of considering the 2d theory, we them take the T-duality and consider the supersymmetric quantum mechanics of this 2 SUSY theory in section~\ref{sec:susyqm-index}. The partition function is now the Witten index of this supersymmetric quantum mechanics. Explicit contour integral formulas for it are given in section~\ref{sec:susyqm-index}. We then use the JK residue prescription to evaluate these contour integral formulas in section~\ref{sec:gaugeorigami-multidim-partition}. The relations of the JK-residue and the index formalism are discussed in section~\ref{sec:equiv-index}. The existence of \textit{sign rules} of the magnificent four system is also discussed there.

\section{Flavor symmetries}\label{sec:2susy2d-flavorsymmetry}
As discussed in Chap.~\ref{chap:gauge-origami}, instantons in the gauge origami system are described by D1-branes wrapping $\mathbb{R}^{1,1}$ and probing D$(2p+1)$ $(p=2,3,4)$ branes wrapping $\mathbb{R}^{1,1}$ and non-compact complex planes in the $\mathbb{C}^{4}$. Moreover, the low energy world volume theory of the $k$ D1-branes is a 2d $\mathcal{N}=(0,2)$ quiver gauge theory and its vacuum moduli space coincides with the $k$-instanton moduli space of the gauge origami system (see section~\ref{sec:general-instanton}). The flavor symmetries of the 2d $\mathcal{N}=(0,2)$ quiver gauge theory is therefore directly related to the flavor symmetries of the instanton moduli space. 

Let $V_{\text{chiral}}$ and $V_{\text{Fermi}}$ be representations of the gauge group $G$. The chiral superfields $\Phi$ belong to $V_{\text{chiral}}$ and the Fermi superfields $\Lambda$ belong to $V_{\text{Fermi}}$. The $E$-term is a $V_{\text{Fermi}}$ valued holomorphic function of $\Phi$, i.e. $E:V_{\text{chiral}}\rightarrow V_{\text{Fermi}}$. The $J$-term is a $V^{\ast}_{\text{Fermi}}$ valued holomorphic function of $\Phi$, i.e. $J:V_{\text{chiral}}\rightarrow V_{\text{Fermi}}^{\ast}$. Assume that we have a set of unitary transformations $h_{\text{c}},h_{\text{f}}$ acting on $V_{\text{chiral}},V_{\text{Fermi}}$ that commute with the gauge group. We denote the flavor symmetry group as $G_{\text{F}}$ and it is defined as a symmetry obeying
\bea\label{eq:2susyflavorsymmetry}
E(h_{\text{c}}\Phi)&=h_{\text{f}}E(\Phi),\quad J(h_{\text{c}}\Phi)&=J(\Phi)h_{\text{f}}^{-1}.
\eea
The chiral and Fermi multiplets transform as $(\phi,\psi_{+})\rightarrow (h_{\text{c}}\phi,h_{\text{c}}\psi_{+})$ and $(\eta_{-},G)\rightarrow (h_{\text{F}}\eta_{-},h_{\text{F}}G)$. Note also that this is compatible with the fact that the $E,J$ terms have the conjugated quantum numbers.

For the gauge origami system, we have a $\U(1)^{4}$ global symmetry induced from the rotation $\prod_{a\in\four}\SO(2)_{a}$ rotating $\mathbb{C}^{4}$. Let us explicitly derive the flavor symmetries for each setup.

\paragraph{Magnificent four}Assuming the $\U(1)$ transformation
\bea\label{eq:gaugeorigami-U1symmetry}
\mathsf{B}_{a}\rightarrow q_{a}\mathsf{B}_{a},\quad a\in\four,
\eea
the transformations of the $E,J$-terms in \eqref{eq:2SUSYJEterm-magnificent} are summarized as
\bea
\begin{tabular}{|c|c|}
$E_{1}\rightarrow q_{41}E_{1}$ & $J_{1}\rightarrow q_{23}J_{1}$\\
$E_{2}\rightarrow q_{42}E_{2}$ & $J_{2}\rightarrow q_{13}J_{2}$\\
$E_{3}\rightarrow q_{43}E_{3}$ & $J_{3}\rightarrow q_{12}J_{3}$
\end{tabular}
\eea
where we shortly wrote $q_{ab}=q_{a}q_{b}$. Since the $J,E$-terms need to transform in a conjugate way, we have the condition
\bea\label{eq:CYcondition}
q_{1}q_{2}q_{3}q_{4}=1.
\eea
The $\U(1)^{3}$ charges of the chiral and Fermi superfields are then given as
\bea
\begin{tabular}{|c||c|}\hline
    Fields & $\U(1)^{3}$-charges  \\
\hline \hline $\mathsf{I}$     &  1\\
 \hline $\mathsf{B}_{a}$, $(a\in\four)$     &  $q_{a}$\\
\hline $\mathsf{\Lambda}_{1,2,3}$     &  $q_{4}q_{1,2,3}$ \\\hline
\end{tabular}
\eea
where we used the fact that the Fermi superfields have the same quantum numbers as the $E$-term. 

It is convenient to summarize the flavor charges in the quiver diagram as
\bea\label{eq:flavorquiver-magnificent}
\adjustbox{valign=c}{\begin{tikzpicture}[decoration={markings,mark=at position 0.7 with {\arrow{latex}}}]
 \tikzset{
        box/.style={draw, minimum width=0.7cm, minimum height=0.7cm, text centered,thick},
        ->-/.style={decoration={
        markings,mark=at position #1 with {\arrow[scale=1.5]{>}}},postaction={decorate},line width=0.5mm},
        -<-/.style={decoration={
        markings,
        mark=at position #1 with {\arrow[scale=1.5]{<}}},postaction={decorate},line width=0.5mm}    
    }
\begin{scope}[xshift=4cm]
    \draw[fill=black!10!white,thick](0,0) circle(0.4cm);
    \node at (0,0) {$k$};
    \node[box,fill=black!10!white] at (0,1.6) {$n$};
    \draw[postaction={decorate},thick] (0,1.25)--(0,0.4);
    \foreach \ang in {90,145,215,270} {
    \begin{scope}[rotate=\ang]
        \chiralarc[postaction={decorate},thick](0,0.5)(-45:225:0.22:0.65)
    \end{scope}
    }
    \foreach \ang in {90,145,270} {
    \begin{scope}[rotate=\ang]
    \fermiarc[postaction={decorate},thick](0,0.5)(-45:225:0.1:0.5)
    \end{scope}
    \node[right] at (0,0.8) {$\mathsf{I}$};
    \node[left] at (-1.5,0) {$\mathsf{B}_{2},\textcolor{red}{\Lambda_{2}}$};
    \node[right] at (1.6,0) {$\mathsf{B}_{1},\textcolor{red}{\Lambda_{1}}$};
    \node[below left] at (-0.9,-1){$\mathsf{B}_{3},\textcolor{red}{\Lambda_{3}}$};
    \node[below right] at (0.9,-1){$\mathsf{B}_{4}$};
    \draw[fill=black!10!white,thick](0,0) circle(0.4cm);
    \node at (0,0) {$k$};
    \node at (0,-2){$E_{i}=[\mathsf{B}_{4},\mathsf{B}_{i}]\quad J_{i}=\frac{1}{2}\varepsilon_{ijk4}[\mathsf{B}_{j},\mathsf{B}_{k}]$};
    }
\end{scope}
\end{tikzpicture}}\quad\quad  {\rightsquigarrow} \quad \quad 
\adjustbox{valign=c}{\begin{tikzpicture}[decoration={markings,mark=at position 0.7 with {\arrow{latex}}}]
 \tikzset{
        box/.style={draw, minimum width=0.7cm, minimum height=0.7cm, text centered,thick},
        ->-/.style={decoration={
        markings,mark=at position #1 with {\arrow[scale=1.5]{>}}},postaction={decorate},line width=0.5mm},
        -<-/.style={decoration={
        markings,
        mark=at position #1 with {\arrow[scale=1.5]{<}}},postaction={decorate},line width=0.5mm}    
    }
\begin{scope}[xshift=4cm]
    \draw[fill=black!10!white,thick](0,0) circle(0.4cm);
    \node at (0,0) {$k$};
    \node[box,fill=black!10!white] at (0,1.6) {$n$};
    \draw[postaction={decorate},thick] (0,1.25)--(0,0.4);
    \foreach \ang in {90,145,215,270} {
    \begin{scope}[rotate=\ang]
        \chiralarc[postaction={decorate},thick](0,0.5)(-45:225:0.22:0.65)
    \end{scope}
    }
    \foreach \ang in {90,145,270} {
    \begin{scope}[rotate=\ang]
    \fermiarc[postaction={decorate},thick](0,0.5)(-45:225:0.1:0.5)
    \end{scope}
    \node[right] at (0,0.8) {$1$};
    \node[left] at (-1.5,0) {$q_{2},\textcolor{red}{q_{2}q_{4}}$};
    \node[right] at (1.6,0) {$q_{1},\textcolor{red}{q_{1}q_{4}}$};
    \node[below left] at (-0.9,-1){$q_{3},\textcolor{red}{q_{3}q_{4}}$};
    \node[below right] at (0.9,-1){$q_{4}$};
    \draw[fill=black!10!white,thick](0,0) circle(0.4cm);
    \node at (0,0) {$k$};
    \node at (0,-2){ };
    }
\end{scope}
\end{tikzpicture}}
\eea
The information of the $E,J$-terms are implicitly incorporated in the flavor charges of the matter components of the quiver diagram.

\paragraph{Tetrahedron instanton} Similarly, assuming the $\U(1)$ transformation \eqref{eq:gaugeorigami-U1symmetry} and setting the $\U(1)$ action on $\mathsf{I}_{\bar{a}}$ as $\mathsf{I}_{\bar{a}}\rightarrow \mathsf{I}_{\bar{a}}$ for $a\in\four$, the $E,J$-terms in \eqref{eq:2SUSYquiver-tetrahedron} transform as
\bea
\begin{tabular}{|c|c|}
$E_{1}\rightarrow q_{41}E_{1}$ & $J_{1}\rightarrow q_{23}J_{1}$\\
$E_{2}\rightarrow q_{42}E_{2}$ & $J_{2}\rightarrow q_{13}J_{2}$\\
$E_{3}\rightarrow q_{43}E_{3}$ & $J_{3}\rightarrow q_{12}J_{3}$\\
$E_{\mathsf{I}_{\bar{a}}}\rightarrow q_{a}E_{\mathsf{I}_{\bar{a}}}$ &
\end{tabular}
\eea
The condition \eqref{eq:CYcondition} also comes from the fact that the $J,E$-terms transform in a conjugate way. The $\U(1)^{3}$ charges of the chiral and the Fermi superfields are then given as
\bea
\begin{tabular}{|c||c|}\hline
    Fields & $\U(1)^{3}$-charges  \\
\hline \hline $\mathsf{I}_{\bar{a}},\,(a\in\four)$     &  1\\
 \hline $\mathsf{B}_{a}$, $(a\in\four)$     &  $q_{a}$\\
\hline $\mathsf{\Lambda}_{1,2,3}$     &  $q_{4}q_{1,2,3}$ \\
\hline $\mathsf{\Lambda}_{\bar{a}}$ &  $q_{a}$  \\\hline
\end{tabular}
\eea

Similar to the magnificent four case, we can summarize the $\U(1)^{3}$ charges in the quiver diagram. For example, the quiver diagram for the tetrahedron instanton system with only one stack of D7$_{123}$-branes can be written as
\bea\label{eq:flavorquiver-tetrahedron}
\adjustbox{valign=c}{
\begin{tikzpicture}[decoration={markings,mark=at position \arrowHeadPosition with {\arrow{latex}}}]
 \tikzset{
        box/.style={draw, minimum width=0.7cm, minimum height=0.7cm, text centered,thick},
        ->-/.style={decoration={
        markings,mark=at position #1 with {\arrow[scale=1.5]{>}}},postaction={decorate},line width=0.5mm},
        -<-/.style={decoration={
        markings,
        mark=at position #1 with {\arrow[scale=1.5]{<}}},postaction={decorate},line width=0.5mm}    
    }
\begin{scope}[xshift=4cm]
    \draw[fill=black!10!white,thick](0,0) circle(0.4cm);
    \node at (0,0) {$k$};
    \node[box,fill=black!10!white] at (0,1.6) {$n_{\bar{4}}$};
    \draw[postaction={decorate},thick] (-0.1,1.25)--(-0.1,0.4);
    \draw[postaction={decorate},red,thick] (0.1,1.25)--(0.1,0.4);
    \foreach \ang in {90,145,215,270} {
    \begin{scope}[rotate=\ang]
        \chiralarc[postaction={decorate},thick](0,0.5)(-45:225:0.22:0.65)
    \end{scope}
    }
    \foreach \ang in {90,145,270} {
    \begin{scope}[rotate=\ang]
    \fermiarc[postaction={decorate},thick](0,0.5)(-45:225:0.1:0.5)
    \end{scope}
    \node[left] at (-0.1,0.8) {$\mathsf{I}$};
    \node[right] at (0.1,0.8) {$\textcolor{red}{\Lambda_{\mathsf{I}}}$};
    \node[left] at (-1.5,0) {$\mathsf{B}_{2},\textcolor{red}{\Lambda_{2}}$};
    \node[right] at (1.6,0) {$\mathsf{B}_{1},\textcolor{red}{\Lambda_{1}}$};
    \node[below left] at (-0.9,-1){$\mathsf{B}_{3},\textcolor{red}{\Lambda_{3}}$};
    \node[below right] at (0.9,-1){$\mathsf{B}_{4}$};
    \draw[fill=black!10!white,thick](0,0) circle(0.4cm);
    \node at (0,0) {$k$};
    \node at (0,-2.5){$\begin{array}{l}
    E_{i}=[\mathsf{B}_{4},\mathsf{B}_{i}],\quad E_{\mathsf{I}}=\mathsf{B}_{4}\,\mathsf{I}   \\
    J_{i}=\partial \mathsf{W}_{0}/\partial \mathsf{B}_{i}=\frac{1}{2}\varepsilon_{ijk4}[\mathsf{B}_{j},\mathsf{B}_{k}]\end{array}$};
    }
\end{scope}
\end{tikzpicture}}\quad \rightsquigarrow \quad 
\adjustbox{valign=c}{
\begin{tikzpicture}[decoration={markings,mark=at position \arrowHeadPosition with {\arrow{latex}}}]
 \tikzset{
        box/.style={draw, minimum width=0.7cm, minimum height=0.7cm, text centered,thick},
        ->-/.style={decoration={
        markings,mark=at position #1 with {\arrow[scale=1.5]{>}}},postaction={decorate},line width=0.5mm},
        -<-/.style={decoration={
        markings,
        mark=at position #1 with {\arrow[scale=1.5]{<}}},postaction={decorate},line width=0.5mm}    
    }
\begin{scope}[xshift=4cm]
    \draw[fill=black!10!white,thick](0,0) circle(0.4cm);
    \node at (0,0) {$k$};
    \node[box,fill=black!10!white] at (0,1.6) {$n_{\bar{4}}$};
    \draw[postaction={decorate},thick] (-0.1,1.25)--(-0.1,0.4);
    \draw[postaction={decorate},red,thick] (0.1,1.25)--(0.1,0.4);
    \foreach \ang in {90,145,215,270} {
    \begin{scope}[rotate=\ang]
        \chiralarc[postaction={decorate},thick](0,0.5)(-45:225:0.22:0.65)
    \end{scope}
    }
    \foreach \ang in {90,145,270} {
    \begin{scope}[rotate=\ang]
    \fermiarc[postaction={decorate},thick](0,0.5)(-45:225:0.1:0.5)
    \end{scope}}
    \node[left] at (-0.1,0.8) {$1$};
    \node[right] at (0.1,0.8) {$\textcolor{red}{q_{4}}$};
    \node[left] at (-1.5,0) {$q_{2},\textcolor{red}{q_{2}q_{4}}$};
    \node[right] at (1.6,0) {$q_{1},\textcolor{red}{q_{1}q_{4}}$};
    \node[below left] at (-0.9,-1){$q_{3},\textcolor{red}{q_{3}q_{4}}$};
    \node[below right] at (0.9,-1){$q_{4}$};
    \draw[fill=black!10!white,thick](0,0) circle(0.4cm);
    \node at (0,0) {$k$};
    \node at (0,-3){$ $};
\end{scope}
\end{tikzpicture}}
\eea

\vspace{-0.5cm}\paragraph{Spiked instanton}
Assuming the $\U(1)$ transformation \eqref{eq:gaugeorigami-U1symmetry} and setting the $\U(1)$ charge of $\mathsf{I}_{A}\,\,(A\in\six)$ to be trivial, the fields $\mathsf{J}_{A}$ need to transform as $\mathsf{J}_{A}\rightarrow q_{A}\mathsf{J}_{A}$ so that the $E,J$-terms transform in a homogeneous way. Under this transformation, the $E,J$-terms of \eqref{eq:2SUSYJEterm-spiked} transform as
\bea
\begin{tabular}{|c|c|}
$E_{1}\rightarrow q_{41}E_{1}$ & $J_{1}\rightarrow q_{23}J_{1}$\\
$E_{2}\rightarrow q_{42}E_{2}$ & $J_{2}\rightarrow q_{13}J_{2}$\\
$E_{3}\rightarrow q_{43}E_{3}$ & $J_{3}\rightarrow q_{12}J_{3}$\\
$E_{\mathsf{I}_{12}}\rightarrow q_{4}E_{\mathsf{I}_{12}}$ & $J_{\mathsf{I}_{12}}\rightarrow q_{123}J_{\mathsf{I}_{12}}$\\
$E_{\mathsf{J}_{12}}\rightarrow q_{124}E_{\mathsf{J}_{12}}$ & $J_{\mathsf{J}_{12}}\rightarrow q_{3}J_{\mathsf{J}_{12}}$\\
$E_{\mathsf{I}_{13}}\rightarrow q_{4}E_{\mathsf{I}_{13}}$ & $J_{\mathsf{I}_{13}}\rightarrow q_{123}J_{\mathsf{I}_{13}}$\\
$E_{\mathsf{J}_{13}}\rightarrow q_{134}E_{\mathsf{J}_{13}}$ & $J_{\mathsf{J}_{13}}\rightarrow q_{2}J_{\mathsf{J}_{13}}$\\
\end{tabular}\qquad \begin{tabular}{|c|c|}
$E_{\mathsf{I}_{14}}\rightarrow q_{2}E_{\mathsf{I}_{14}}$ & $J_{\mathsf{I}_{14}}\rightarrow q_{134}J_{\mathsf{I}_{14}}$\\
$E_{\mathsf{J}_{14}}\rightarrow q_{214}E_{\mathsf{J}_{14}}$ & $J_{\mathsf{J}_{14}}\rightarrow q_{3}J_{\mathsf{J}_{14}}$\\
$E_{\mathsf{I}_{23}}\rightarrow q_{4}E_{\mathsf{I}_{23}}$ & $J_{\mathsf{I}_{23}}\rightarrow q_{123}J_{\mathsf{I}_{23}}$\\
$E_{\mathsf{J}_{23}}\rightarrow q_{234}E_{\mathsf{J}_{23}}$ & $J_{\mathsf{J}_{23}}\rightarrow q_{1}J_{\mathsf{J}_{23}}$\\
$E_{\mathsf{I}_{24}}\rightarrow q_{3}E_{\mathsf{I}_{24}}$ & $J_{\mathsf{I}_{24}}\rightarrow q_{124}J_{\mathsf{I}_{24}}$\\
$E_{\mathsf{J}_{24}}\rightarrow q_{234}E_{\mathsf{J}_{24}}$ & $J_{\mathsf{J}_{24}}\rightarrow q_{1}J_{\mathsf{J}_{24}}$\\
$E_{\mathsf{I}_{34}}\rightarrow q_{1}E_{\mathsf{I}_{34}}$ & $J_{\mathsf{I}_{34}}\rightarrow q_{234}J_{\mathsf{I}_{34}}$\\
$E_{\mathsf{J}_{34}}\rightarrow q_{134}E_{\mathsf{J}_{34}}$ & $J_{\mathsf{J}_{34}}\rightarrow q_{2}J_{\mathsf{J}_{34}}$\\
\end{tabular}
\eea
Indeed each $E,J$ term transform in a conjugate way under the condition \eqref{eq:CYcondition}. The $\U(1)^{3}$ charges of the chiral and the Fermi superfieds are
\bea
\begin{tabular}{|c||c|}\hline
    Fields & $\U(1)^{3}$-charges  \\
\hline \hline $\mathsf{I}_{A},\,(A\in\six)$     &  1\\
\hline $\mathsf{J}_{A},\,(A\in\six)$ & $q_{A}$\\
 \hline $\mathsf{B}_{a}$, $(a\in\four)$     &  $q_{a}$\\
\hline $\mathsf{\Lambda}_{1,2,3}$     &  $q_{4}q_{1,2,3}$ \\
\hline $\mathsf{\Lambda}_{\mathsf{I}_{A}},\mathsf{\Lambda}_{\mathsf{J}_{A}}$ & \small{ $q_{a},q_{b}^{-1}$,\, $\{a,b\}=\bar{A}$ } \\
\hline
\end{tabular}
\eea
In our choice of the $E,J$-terms in \eqref{eq:2SUSYJEterm-spiked}, one of the Fermi superfields $\Lambda_{\mathsf{I}_{A}}$ will transform under $\U(1)^{3}$ as $q_{a}$ and the other $\Lambda_{\mathsf{J}_{A}}$ will transform as $q_{b}^{-1}$, where $\{a,b\}=\bar{A}$. 

The quiver diagram with flavor charges can be written similar to the previous cases, and for the setup when we have only one stack of D5$_{12}$-branes, we have
\bea\label{eq:flavorquiver-spiked}
\adjustbox{valign=c}{
\begin{tikzpicture}[decoration={markings,mark=at position \arrowHeadPosition with {\arrow{latex}}}]
 \tikzset{
        box/.style={draw, minimum width=0.7cm, minimum height=0.7cm, text centered,thick},
        ->-/.style={decoration={
        markings,mark=at position #1 with {\arrow[scale=1.5]{>}}},postaction={decorate},line width=0.5mm},
        -<-/.style={decoration={
        markings,
        mark=at position #1 with {\arrow[scale=1.5]{<}}},postaction={decorate},line width=0.5mm}    
    }
\begin{scope}[xshift=4cm]
    \draw[fill=black!10!white,thick](0,0) circle(0.4cm);
    \node at (0,0) {$k$};
    \node[box,fill=black!10!white] at (0,1.6) {$n_{12}$};
    \draw[postaction={decorate},thick] (-0.2,1.25)--(-0.2,0.3);
    \draw[postaction={decorate},red,thick] (-0.07,1.25)--(-0.07,0.4);
    \draw[postaction={decorate},thick] (0.2,0.3)--(0.2,1.25);
    \draw[postaction={decorate},red,thick] (0.07,0.4)--(0.07,1.25);
    \foreach \ang in {90,145,215,270} {
    \begin{scope}[rotate=\ang]
        \chiralarc[postaction={decorate},thick](0,0.5)(-45:225:0.22:0.65)
    \end{scope}
    }
    \foreach \ang in {90,145,270} {
    \begin{scope}[rotate=\ang]
    \fermiarc[postaction={decorate},thick](0,0.5)(-45:225:0.1:0.5)
    \end{scope}
    \node[scale=1.5] at (4,0) {$\rightsquigarrow$};
    \node[left] at (-0.2,0.8) {$\mathsf{I},\textcolor{red}{\Lambda_{\mathsf{I}}}$};
    \node[right] at (0.2,0.8) {$\mathsf{J},\textcolor{red}{\Lambda_{\mathsf{J}}}$};
    \node[left] at (-1.5,0) {$\mathsf{B}_{2},\textcolor{red}{\Lambda_{2}}$};
    \node[right] at (1.6,0) {$\mathsf{B}_{1},\textcolor{red}{\Lambda_{1}}$};
    \node[below left] at (-0.9,-1){$\mathsf{B}_{3},\textcolor{red}{\Lambda_{3}}$};
    \node[below right] at (0.9,-1){$\mathsf{B}_{4}$};
    \draw[fill=black!10!white,thick](0,0) circle(0.4cm);
    \node at (0,0) {$k$};
    \node at (0,-3){$\begin{array}{l}
    E_{i}=[\mathsf{B}_{4},\mathsf{B}_{i}],\quad E_{\mathsf{I}}=\mathsf{B}_{4}\mathsf{I},\quad E_{\mathsf{J}}=-\mathsf{J}\mathsf{B}_{4},\\
    J_{1}=[\mathsf{B}_{2},\mathsf{B}_{3}],\quad J_{2}=[\mathsf{B}_{3},\mathsf{B}_{1}],\\
     J_{3}=[\mathsf{B}_{1},\mathsf{B}_{2}]+\mathsf{I}\mathsf{J},\quad J_{\mathsf{I}}=\mathsf{J}\mathsf{B}_{3},\quad J_{\mathsf{J}}=\mathsf{B}_{3}\mathsf{I}
\end{array}$};
    
    }
\end{scope}
\end{tikzpicture}} \quad   \quad 
\adjustbox{valign=c}{
\begin{tikzpicture}[decoration={markings,mark=at position \arrowHeadPosition with {\arrow{latex}}}]
 \tikzset{
        box/.style={draw, minimum width=0.7cm, minimum height=0.7cm, text centered,thick},
        ->-/.style={decoration={
        markings,mark=at position #1 with {\arrow[scale=1.5]{>}}},postaction={decorate},line width=0.5mm},
        -<-/.style={decoration={
        markings,
        mark=at position #1 with {\arrow[scale=1.5]{<}}},postaction={decorate},line width=0.5mm}    
    }
\begin{scope}[xshift=4cm]
    \draw[fill=black!10!white,thick](0,0) circle(0.4cm);
    \node at (0,0) {$k$};
    \node[box,fill=black!10!white] at (0,1.6) {$n_{12}$};
    \draw[postaction={decorate},thick] (-0.2,1.25)--(-0.2,0.3);
    \draw[postaction={decorate},red,thick] (-0.07,1.25)--(-0.07,0.4);
    \draw[postaction={decorate},thick] (0.2,0.3)--(0.2,1.25);
    \draw[postaction={decorate},red,thick] (0.07,0.4)--(0.07,1.25);
    \foreach \ang in {90,145,215,270} {
    \begin{scope}[rotate=\ang]
        \chiralarc[postaction={decorate},thick](0,0.5)(-45:225:0.22:0.65)
    \end{scope}
    }
    \foreach \ang in {90,145,270} {
    \begin{scope}[rotate=\ang]
    \fermiarc[postaction={decorate},thick](0,0.5)(-45:225:0.1:0.5)
    \end{scope}
    \node[left] at (-0.2,0.8) {$1,\textcolor{red}{q_{4}}$};
    \node[right] at (0.2,0.8) {$q_{12},\textcolor{red}{q_{124}=q_{3}^{-1}}$};
    \node[left] at (-1.5,0) {$q_{2},\textcolor{red}{q_{24}}$};
    \node[right] at (1.6,0) {$q_{1},\textcolor{red}{q_{14}}$};
    \node[below left] at (-0.9,-1){$q_{3},\textcolor{red}{q_{34}}$};
    \node[below right] at (0.9,-1){$q_{4}$};
    \draw[fill=black!10!white,thick](0,0) circle(0.4cm);
    \node at (0,0) {$k$};
    \node at (0,-3){ };
    \node at (0,-4){ };
    
    }
\end{scope}
\end{tikzpicture}}
\eea

\section{Supersymmetric quantum mechanics and Witten index}\label{sec:susyqm-index}
To obtain the gauge origami instanton partition functions, one needs to compute the elliptic genus of the 2d $\mathcal{N}=(0,2)$ quiver gauge theories. Namely, we take the $\mathbb{R}^{1,1}$ part to be a torus $\mathbb{T}^{2}$ and study the partition function of the 2d theory on it. General discussions on how to compute the elliptic genus of a given 2d $\mathcal{N}=(0,2)$ quiver gauge theory were done in \cite{Benini:2013xpa,Benini:2013nda}. When computing the elliptic genus, we have a parameter $p$ which is the exponent of the complex structure $\tau$ of $\mathbb{T}^{2}$. The elliptic genus is then a function of this parameter $p$ and furthermore it will be written in theta functions.

Instead of considering the D1-branes in the type IIB string theory, from now on we will take the T-duality along the $9$-direction and consider the D0--$\D(2p)$ ($p=2,3,4$) setup in this section. Namely, the magnificent four, tetrahedron instanton, and spiked instanton setups will be the D0--D8, D0--D6$_{\bar{a}}$, D0--D4$_{A}$ brane setups. The low energy field theory of the D0-branes will be a dimensional reduction of the 2d $\mathcal{N}=(0,2)$ quiver gauge theory which is the 1d $\mathcal{N}=2$ supersymmetric quantum mechanics. The partition function of the theory is then described by the flavored Witten indices of these supersymmetric quantum mechanics.

Before moving on to the explicit evaluation of the Witten indices for the gauge origami system, let us briefly summarize the basic components to compute the Witten indices for general quiver gauge theories \cite{Hori:2014tda, Hwang:2014uwa,Ohta:2014ria,Cordova:2014oxa}. 

\paragraph{Supersymmetric quantum mechanics}
The supersymmetric quantum mechanics of a given 2d $\mathcal{N}=(0,2)$ quiver gauge theory is obtained by the dimensional reduction of section~\ref{sec:2susyquivergaugetheory-general}. To do this, one simply assumes that all the fields are independent on the $x^{1}$-coordinate and lose the differential operators depending on $\partial_{1}$. We basically follow the notations in \cite{Hori:2014tda}. We denote the two supercharges and the fermionic coordinates as $\tQ_{+}\rightarrow \tQ$, $\otQ_{+}\rightarrow \otQ$, and $\theta^{+}\rightarrow \theta,\bar{\theta}^{+}\rightarrow\bar{\theta}$. In the superfield description, the differential operators are
\bea
\tQ=\frac{\partial}{\partial\theta}+\frac{i}{2}\overline{\theta}\frac{\partial}{\partial t},\quad \otQ=-\frac{\partial}{\partial\overline{\theta}}-\frac{i}{2}\theta\frac{\partial}{\partial t},\quad 
\tD=\frac{\partial}{\partial\theta}-\frac{i}{2}\overline{\theta}\frac{\partial}{\partial t},\quad \otD=-\frac{\partial}{\partial \overline{\theta}}+\frac{i}{2}\theta \frac{\partial }{\partial t}.
\eea
The matter components we are interested in are the same as the 2d case: vector superfield, chiral superfield, and Fermi-superfield. 
\begin{itemize}[topsep=0pt, partopsep=0pt, itemsep=0pt]
\item Vector superfield and field strength $V=(v_{t},\sigma, \lambda,\bar{\lambda},D),\Upsilon$: 
\bea
V&=(v_{t}-\sigma) -i\theta\bar{\lambda}-i\bar{\theta}\lambda+\theta\bar{\theta}D,\\
\Upsilon&=-\lambda+\theta(D_{t}\sigma+iD)+\frac{i}{2}\theta\bar{\theta}D_{t}^{(+)}\lambda.
\eea 
where $D_{t}=\partial_{t}+iv_{t},\,\, D_{t}^{(\pm)}=\partial_{t}+i(v_{t}\pm \sigma)$ and $v_{0}\rightarrow v_{t},v_{1}\rightarrow \sigma$ and $\sigma$ will be a one-dimensional scalar field.
 \item Gauge covariant chiral superfield $\Phi=(\phi,\psi)$:
    \bea
    \overline{\mathcal{D}}\Phi=0,\quad \Phi=\phi+\theta \psi-\frac{i}{2}\theta\overline{\theta}D^{(+)}_{t}\phi
    \eea
    where the gauge covariant superderivatives are defined as 
    \bea
    \overline{\mathcal{D}}&=-\partial_{\bar{\theta}}+\frac{i}{2}\theta D_{t}^{(+)},\quad
\mathcal{D}=\partial_{\theta}-\frac{i}{2}\overline{\theta} D_{t}^{(+)}
    \eea
    under the WZ gauge.
    \item Gauge covariant Fermi superfield $\Lambda=(\eta,G)$:
    \bea
\overline{\mathcal{D}} \Lambda=E(\Phi),\quad \Lambda=\eta-\theta G-\frac{i}{2}\theta\bar{\theta}D_{t}^{(+)}\eta-\bar{\theta} E(\Phi).
\eea
\end{itemize}
We can also introduce the $J$-terms and the generic Lagrangian of the 1d $\mathcal{N}=2$ SQM is written as
\bea\label{eq:SQMLagrangian}
L_{\text{gauge}}&=\int d\theta d\bar{\theta} \frac{1}{2e^{2}}\Tr\overline{\Upsilon}\Upsilon= \frac{1}{2g^{2}}\Tr\left[(D_{t}\sigma)^{2}+i\bar{\lambda}D_{t}^{(+)}\lambda+D^{2}\right]\\
L_{\text{FI}}&=\frac{ir}{2}\int d\theta d \bar{\theta}\Tr \Upsilon |_{\bar{\theta}^{+}=0}+\text{c.c.}=-r \Tr D\\
L_{\text{chiral}}&=\int d\theta d\bar{\theta}\,i\overline{\Phi}\mathcal{D}_{t}^{(-)}{\Phi}\,\,=D_{t}\bar{\phi}D_{t}\phi+i\bar{\psi}D_{t}^{(-)}\psi+\bar{\phi}\{D-\sigma^{2}\}\phi-i\bar{\phi}\lambda\psi+i\overline{\psi\lambda}\phi \\
L_{\text{Fermi}}&=\int d\theta d\bar{\theta} \bar{\Lambda}\Lambda =i\bar{\eta}D_{t}^{(+)}\eta+|G|^{2}-|E(\phi)|^{2}-\bar{\eta}\partial_{i}E(\phi)\psi^{i}-\bar{\psi}^{\bar{i}}\partial_{\bar{i}}\overline{E(\phi)}\eta,\\
L_{\text{J-term}}&=-\int d\theta\,\left(\Lambda^{\alpha}J_{\alpha}(\Phi)\right)_{\bar{\theta}=0}+\text{c.c}=\eta^{\alpha}\partial_{i}J_{\alpha}(\phi)\psi^{i}+G^{\alpha}J_{\alpha}(\phi)+\text{c.c}.
\eea

All the properties discussed in section~\ref{sec:2susyquivergaugetheory-general} appear also in the SQM. Namely, we also have the $J,E$-duality, the vacuum moduli space is described as \eqref{eq:scalarpotential-general}, we can also consider quiver gauge theories using the same quiver diagram, and the flavor symmetries are all the same as section~\ref{sec:2susy2d-flavorsymmetry}.

\paragraph{Witten index}Given the supersymmetric quantum mechanics, the Witten index is defined as
\bea
\mathcal{Z}(y)=\Tr\left[(-1)^{F}e^{-\beta H}\prod_{a}y_{a}^{T_{a}}\right]
\eea
where $F$ is the fermion number, $\{y_{a}=e^{u_{a}}\}$ are the fugacities of the flavor symmetry, $\{T_{a}\}$ are the Cartan generators of the flavor symmetry group, and $\beta$ is the circumference of $\mathbb{S}^{1}$. Let us consider a one-dimensional system described by a vector multiplet $V$ with gauge group $G$, chiral multiplets $\Phi_{i}$ transforming in a representation $V_{\text{chiral}}$ of $G\times G_{f}$, and Fermi multiplets $\Lambda_{\alpha}$ transforming in the representation $V_{\text{Fermi}}$ of $G\times G_{f}$. The Witten index can be derived by considering the path integral of the Lagrangian\footnote{We take the Wick rotation to perform this path integral.} \eqref{eq:SQMLagrangian}, which eventually breaks down to
\bea
\mathcal{Z}(y)=\frac{1}{|W_{G}|}\oint _{\text{JK}}Z_{\text{V}}\prod_{i}Z_{\Phi_{i}}\prod_{\alpha}Z_{\Lambda_{\alpha}}
\eea
where\footnote{Recall that $\sh(x)=2\sinh\left(\frac{x}{2}\right)$.}
\bea\label{eq:2susylocalization}
Z_{\text{V}}&=\prod_{I}\frac{d\phi_{I}}{2\pi i}\prod_{\alpha\in G}\sh\left(\beta \alpha\cdot \phi\right),\quad Z_{\Phi_{i}}=\prod_{\rho\in V_{\text{chiral}}}\frac{1}{\sh(\beta\rho\cdot \mu)},\quad Z_{\Lambda_{\alpha}}=\prod_{\rho\in V_{\text{Fermi}}}\sh(\beta \rho\cdot \mu)
\eea
and $|W_{G}|$ is the order of the Weyl group of $G$. The parameter $\phi$ parametrizes the Cartan subalgebra of $G$ and $\mu$ contains $\phi$ and $u_{a}$. Each contribution is understood as follows. We first consider the index with constant gauge fields parameterized by $\phi_{I},u_{a}$. The supersymmetric Lagrangian given above surprisingly obeys the property $S=\mathcal{Q}V$ where $\mathcal{Q}=\tQ+\otQ$ for some $V$ and it is a $\mathcal{Q}$-exact term. We then can use supersymmetric localization as explained in section~\ref{sec:YMtheory-instanton} and compute the path integral of the action. The quadratic terms of the chiral and Fermi superfields only contributes to the superdeterminant as given in \eqref{eq:susylocal-opVEV} and they give $Z_{\Phi_{i}}$ and $Z_{\Lambda_{\alpha}}$. The variables inside the $\sh(\beta x)$-function comes from the constant gauge fields and $\sh( \beta x)$ appears because we are considering the quantum mechanics on $\mathbb{S}^{1}$ giving Fourier modes such as $e^{i2n\pi x/\beta}$. Since we are considering the index of a gauge theory whose gauge group is $G$, the index must be invariant under the gauge transformation. Thus, finally, we need to take the integral over the $\phi$ variables. Such contribution gives the vector superfield contribution $Z_{V}$. We then arrive at the contour integral formula \eqref{eq:2susylocalization}. Since the contour integral will be generally contour integrals on multi-variables, we need to specify how to take them. Now it is known that the resulting formalism to take the contour integral formula is the JK-residue formalism explained in section~\ref{sec:pureSYM_JK}. See the original papers \cite{Benini:2013xpa,Benini:2013nda, Hori:2014tda} for details. 

In particular, we are interested in the Witten index of 1d $\mathcal{N}=2$ quiver gauge theories with unitary groups. Let $\overline{Q}=(\overline{Q}_{0},\overline{Q}_{1})$ be the 2 SUSY quiver diagram and $\overline{Q}_{0},\overline{Q}_{1}$ denoting the set of nodes and edges. For use in the gauge origami setup, we assume that there is a flavor $\U(1)^{3}$ symmetry and turn on the fugacities $q_{i}=e^{\epsilon_{i}}$. We denote the $\U(1)^{3}$ charges of the chiral, Fermi superfields as $q(\Phi)=e^{\epsilon(\Phi)}$, $q(\Lambda)=e^{\epsilon(\Lambda)}$. 

The quiver diagram will now be written as discussed in section~\ref{sec:2susy2d-flavorsymmetry}. We further have flavor groups drawn in square nodes and we denote the fugacities of them collectively as $\{v_{i}=e^{\mathfrak{a}_{i}}\}$. On the other hand, the exponent of the root of the gauge group will be denoted as $x=e^{\phi}$ and at the end the integral over it will be taken. Under this notations, the basic factors appearing in the Witten index are determined as follows.
\begin{itemize}[topsep=0pt, partopsep=0pt, itemsep=0pt]
    \item For each gauge group (circle node) of the theory, we have 
    \bea
\adjustbox{valign=c}{\begin{tikzpicture}[decoration={markings,mark=at position \arrowHeadPosition with {\arrow{latex}}}]
 \tikzset{
        box/.style={draw, minimum width=0.6cm, minimum height=0.6cm, text centered,thick},
        ->-/.style={decoration={
        markings,mark=at position #1 with {\arrow[scale=1.5]{>}}},postaction={decorate},line width=0.5mm},
        -<-/.style={decoration={
        markings,
        mark=at position #1 with {\arrow[scale=1.5]{<}}},postaction={decorate},line width=0.5mm}    
    }
\begin{scope}{xshift=0cm}
    \draw[fill=black!10!white,thick](0,0) circle(0.4cm);
    \node at (0,0){$N_{a}$};
\end{scope}
\end{tikzpicture}}\quad {\rightsquigarrow} \quad \prod_{I=1}^{N_{a}}\frac{d\phi_{I}^{(a)}}{2\pi i } \prod_{I\neq J}\sh(\phi_{I}^{(a)}-\phi_{J}^{(a)})
    \eea
where we simply set $\beta=1$.
\item The chiral superfield corresponding to an arrow from $\U(N_{a})$ to $\U(N_{b})$ gives the contribution
    \bea
   \adjustbox{valign=c}{ \begin{tikzpicture}[decoration={markings,mark=at position \arrowHeadPosition with {\arrow{latex}}}]
 \tikzset{
        box/.style={draw, minimum width=0.6cm, minimum height=0.6cm, text centered,thick},
        ->-/.style={decoration={
        markings,mark=at position #1 with {\arrow[scale=1.5]{>}}},postaction={decorate},line width=0.5mm},
        -<-/.style={decoration={
        markings,
        mark=at position #1 with {\arrow[scale=1.5]{<}}},postaction={decorate},line width=0.5mm}    
    }

\begin{scope}{}
\draw[fill=black!10!white,thick](4.7,-2) circle(0.4cm);
\node at (4.7,-2){$N_{b}$};
\draw[fill=black!10!white,thick](2.1,-2) circle(0.4cm);
\node at (2.1,-2){$N_{a}$};
\draw[postaction={decorate}, thick](2.5,-2)--(4.3,-2);
\node[above] at (3.4,-2){$q(\Phi_{a\rightarrow b})$};
\end{scope}
\end{tikzpicture}}\quad \rightsquigarrow\quad \prod_{I=1}^{N_{a}}\prod_{J=1}^{N_{b}}\frac{1}{\sh(\phi_{J}^{(b)}-\phi_{I}^{(a)}-\epsilon(\Phi_{a\rightarrow b}))}.
    \eea
    For chiral superfields in the adjoint representation, we set $a=b$. For square nodes, we simply replace $\{\phi_{I}^{(a)}\}_{I=1}^{N_{a}}$ with $\{\mathfrak{a}_{\alpha}^{(a)}\}_{\alpha=1}^{N_{a}}$.
    \item The Fermi superfield corresponding to an arrow from $\U(N_{a})$ to $\U(N_{b})$ gives the contribution 
    \bea
   \adjustbox{valign=c}{ \begin{tikzpicture}[decoration={markings,mark=at position \arrowHeadPosition with {\arrow{latex}}}]
 \tikzset{
        box/.style={draw, minimum width=0.6cm, minimum height=0.6cm, text centered,thick},
        ->-/.style={decoration={
        markings,mark=at position #1 with {\arrow[scale=1.5]{>}}},postaction={decorate},line width=0.5mm},
        -<-/.style={decoration={
        markings,
        mark=at position #1 with {\arrow[scale=1.5]{<}}},postaction={decorate},line width=0.5mm}    
    }

\begin{scope}{}
\draw[fill=black!10!white,thick](4.7,-2) circle(0.4cm);
\node at (4.7,-2){$N_{b}$};
\draw[fill=black!10!white,thick](2.1,-2) circle(0.4cm);
\node at (2.1,-2){$N_{a}$};
\draw[red, postaction={decorate}, thick](2.5,-2)--(4.3,-2);
\node[above] at (3.4,-2){\textcolor{red}{$q(\Lambda_{a\rightarrow b})$}};
\end{scope}
\end{tikzpicture}}\quad \rightsquigarrow\quad \prod_{I=1}^{N_{a}}\prod_{J=1}^{N_{b}}\sh(\phi_{J}^{(b)}-\phi_{I}^{(a)}-\epsilon(\Lambda_{a\rightarrow b})).
    \eea
    Similarly, we set $a=b$ for Fermi superfields in the adjoint representation, and for square nodes we simply replace $\{\phi_{I}^{(a)}\}_{I=1}^{N_{a}}$ with $\{\mathfrak{a}_{\alpha}^{(a)}\}_{\alpha=1}^{N_{a}}$.
\end{itemize}

\paragraph{Magnificent four}Given the quiver diagram in \eqref{eq:flavorquiver-magnificent}, we can write down the contour integral formula 
\bea
\adjustbox{valign=c}{\begin{tikzpicture}[decoration={markings,mark=at position 0.7 with {\arrow{latex}}}]
 \tikzset{
        box/.style={draw, minimum width=0.7cm, minimum height=0.7cm, text centered,thick},
        ->-/.style={decoration={
        markings,mark=at position #1 with {\arrow[scale=1.5]{>}}},postaction={decorate},line width=0.5mm},
        -<-/.style={decoration={
        markings,
        mark=at position #1 with {\arrow[scale=1.5]{<}}},postaction={decorate},line width=0.5mm}    
    }
\begin{scope}[xshift=4cm]
    \draw[fill=black!10!white,thick](0,0) circle(0.4cm);
    \node at (0,0) {$k$};
    \node[box,fill=black!10!white] at (0,1.6) {$n$};
    \draw[postaction={decorate},thick] (0,1.25)--(0,0.4);
    \foreach \ang in {90,145,215,270} {
    \begin{scope}[rotate=\ang]
        \chiralarc[postaction={decorate},thick](0,0.5)(-45:225:0.22:0.65)
    \end{scope}
    }
    \foreach \ang in {90,145,270} {
    \begin{scope}[rotate=\ang]
    \fermiarc[postaction={decorate},thick](0,0.5)(-45:225:0.1:0.5)
    \end{scope}
    \node[right] at (0,0.8) {$1$};
    \node[left] at (-1.5,0) {$q_{2},\textcolor{red}{q_{2}q_{4}}$};
    \node[right] at (1.6,0) {$q_{1},\textcolor{red}{q_{1}q_{4}}$};
    \node[below left] at (-0.9,-1){$q_{3},\textcolor{red}{q_{3}q_{4}}$};
    \node[below right] at (0.9,-1){$q_{4}$};
    \draw[fill=black!10!white,thick](0,0) circle(0.4cm);
    \node at (0,0) {$k$};
    \node at (0,-2){ };
    }
\end{scope}
\end{tikzpicture}}\quad {\rightsquigarrow}\quad \begin{dcases}
V:\,\, \prod_{I=1}^{k}\frac{d\phi_{I}}{2\pi i }\prod_{I\neq J}\sh(\phi_{I}-\phi_{J}),\\
\Lambda_{i}\,(i=1,2,3):\, \prod_{I,J=1}^{k}\sh(\phi_{I}-\phi_{J}-\epsilon_{i}-\epsilon_{4}) \\
\mathsf{I}:\,\, \prod_{I=1}^{k}\prod_{\alpha=1}^{n}\frac{1}{\sh(\phi_{I}-\mathfrak{a}_{\alpha})},\quad \mathsf{B}_{a}\,(a\in\four):\,\prod_{I,J=1}^{k} \frac{1}{\sh(\phi_{I}-\phi_{J}-\epsilon_{a})}.
\end{dcases}
\eea
\begin{proposition}\label{prop:M4noantiDbrane}
The partition function of the magnificent four setup is 
\bea
\mathcal{Z}^{\D8}_{\text{inst.}}=\sum_{k=0}^{\infty}\mathfrak{q}^{k}\mathcal{Z}_{k}^{\D8},\quad \mathcal{Z}_{k}^{\D8}=\frac{1}{k!}\oint_{\text{JK}}\prod_{I=1}^{k}\frac{d\phi_{I}}{2\pi i }\,\,\mu_{k}^{\D8\tbar\D0}(\mathfrak{a}_{\alpha},\phi_{I})
\eea
where
\bea
\mu_{k}^{\D8\tbar\D0}(\mathfrak{a}_{\alpha},\phi_{I})&=\left(\frac{\sh(-\epsilon_{14,24,34})}{\sh(-\epsilon_{1,2,3,4})}\right)^{k}\prod_{I=1}^{k}\prod_{\alpha=1}^{n}\frac{1}{\sh(\phi_{I}-\mathfrak{a}_{\alpha})}\prod_{I\neq J}^{k}\frac{\sh(\phi_{I}-\phi_{J})\sh(\phi_{I}-\phi_{J}-\epsilon_{14,24,34})}{\sh(\phi_{I}-\phi_{J}-\epsilon_{1,2,3,4})}.
\eea

\end{proposition}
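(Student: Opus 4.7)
The plan is to apply the general rules for the Witten index of a 1d $\mathcal{N}=2$ supersymmetric quiver quantum mechanics, as summarized between \eqref{eq:SQMLagrangian} and \eqref{eq:2susylocalization}, to the specific quiver \eqref{eq:2SUSYquiver-magnificent} together with the flavor charge assignment \eqref{eq:flavorquiver-magnificent}. Since all of the ingredients have already been identified in Chapter~\ref{chap:gauge-origami}, the proof reduces to a bookkeeping exercise: read off each superfield from the quiver, assign the corresponding $\sh(\,\cdot\,)$ factor with the correct flavor shift, and collect terms.

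More precisely, I would proceed as follows. First, the $\U(k)$ vector superfield contributes the measure $\tfrac{1}{k!}\prod_{I=1}^{k}\tfrac{d\phi_{I}}{2\pi i}$ together with the Vandermonde factor $\prod_{I\neq J}\sh(\phi_{I}-\phi_{J})$. Next, the bifundamental chiral superfield $\mathsf{I}$ (charge $1$) from the flavor node $\U(n)$ to the gauge node $\U(k)$ gives $\prod_{I,\alpha}\sh(\phi_{I}-\mathfrak{a}_{\alpha})^{-1}$. Then, the four adjoint chiral superfields $\mathsf{B}_{a}$ with $\U(1)^{3}$-charges $q_{a}$ contribute $\prod_{a\in\four}\prod_{I,J}\sh(\phi_{I}-\phi_{J}-\epsilon_{a})^{-1}$, and the three adjoint Fermi superfields $\Lambda_{i}$ with charges $q_{i}q_{4}$ contribute $\prod_{i=1}^{3}\prod_{I,J}\sh(\phi_{I}-\phi_{J}-\epsilon_{i}-\epsilon_{4})$.

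The final step is to split the adjoint products into the diagonal ($I=J$) and off-diagonal ($I\neq J$) pieces. The diagonal contributions are independent of $\phi_{I}$ and combine into the overall prefactor
\[
\prod_{a\in\four}\sh(-\epsilon_{a})^{-k}\prod_{i=1}^{3}\sh(-\epsilon_{i}-\epsilon_{4})^{k}
=\left(\frac{\sh(-\epsilon_{14})\sh(-\epsilon_{24})\sh(-\epsilon_{34})}{\sh(-\epsilon_{1})\sh(-\epsilon_{2})\sh(-\epsilon_{3})\sh(-\epsilon_{4})}\right)^{k},
\]
which is the prefactor displayed in the statement. The off-diagonal pieces assemble into exactly the stated product over $I\neq J$, and the JK-residue prescription of \cite{Hori:2014tda,Hwang:2014uwa} dictates the contour, giving the claimed formula. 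Summing over $k$ with the topological fugacity $\mathfrak{q}^{k}$ (which weighs the instanton charge, i.e.\ the D0-brane number) produces $\mathcal{Z}^{\D8}_{\text{inst.}}$.

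There is essentially no mathematical obstacle here, since the result is a direct transcription of the general Witten-index formula. The only subtle points are (i) checking that the sign conventions and the identification of the $\U(1)^{3}$ charges in \eqref{eq:flavorquiver-magnificent} produce exactly the shifts $-\epsilon_{a}$ and $-\epsilon_{i}-\epsilon_{4}$ (rather than their negatives), which follows from the conjugate transformation law of $J$ versus $E$ terms discussed in \eqref{eq:2susyflavorsymmetry}, and (ii) verifying that the JK prescription is well-defined for this projective hyperplane arrangement, which is standard and will be treated in the pole-classification analysis of section~\ref{sec:gaugeorigami-multidim-partition}. The nontrivial content of the magnificent four partition function — namely, the classification of poles by solid partitions and the associated sign rules — is deferred to later sections and is not needed to establish Proposition~\ref{prop:M4noantiDbrane} itself.
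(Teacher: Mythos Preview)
Your proposal is correct and follows essentially the same approach as the paper: the paper simply displays the quiver \eqref{eq:flavorquiver-magnificent}, reads off the contributions $V,\Lambda_{i},\mathsf{I},\mathsf{B}_{a}$ according to the rules \eqref{eq:index-vector-mod}--\eqref{eq:index-Fermi-mod}, and states the proposition without further argument. Your write-up is in fact more explicit than the paper's, since you spell out the diagonal/off-diagonal split that produces the prefactor.
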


For convenience, we will include a Fermi superfield whose contribution modifies the numerator contributions with a factor $\prod_{I=1}^{k}\prod_{\alpha=1}^{n}\sh(\phi_{I}-\mathfrak{b}_{\alpha})$, where $\{\mathfrak{b}_{\alpha}\}$ is generic from $\{\mathfrak{a}_{\alpha}\}$. 
\begin{proposition}\label{prop:M4contourJK}
The magnificent four partition function with $\D8\tbar\overline{\D8}$-branes is
\bea\label{eq:M4contourJK}
\mathcal{Z}_{k}^{\D8}&=\frac{1}{k!}\oint_{\text{JK}}\prod_{I=1}^{k}\frac{d\phi_{I}}{2\pi i }\,\,\mu^{\D8\tbar\D0}_{k}(\mathfrak{a}_{\alpha},\mathfrak{b}_{\alpha},\phi_{I}),\\
\mu^{\D8\tbar\D0}_{k}(\mathfrak{a}_{\alpha},\mathfrak{b}_{\alpha},\phi_{I})&=\left(\frac{\sh(-\epsilon_{14,24,34})}{\sh(-\epsilon_{1,2,3,4})}\right)^{k}\prod_{I=1}^{k}\prod_{\alpha=1}^{n}\frac{\sh(\phi_{I}-\mathfrak{b}_{\alpha})}{\sh(\phi_{I}-\mathfrak{a}_{\alpha})}\prod_{I\neq J}^{k}\frac{\sh(\phi_{I}-\phi_{J})\sh(\phi_{I}-\phi_{J}-\epsilon_{14,24,34})}{\sh(\phi_{I}-\phi_{J}-\epsilon_{1,2,3,4})}.
\eea
\end{proposition}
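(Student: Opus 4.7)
The plan is to derive \eqref{eq:M4contourJK} by augmenting the $\D0$--$\D8$ quiver of \eqref{eq:flavorquiver-magnificent} with the matter content coming from the $\overline{\D8}$-branes, and then applying the Witten index rules assembled in section~\ref{sec:susyqm-index}.

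First, I would identify the additional matter content. Running the open-string analysis of section~\ref{sec:M4-openstring} on $\D0$--$\overline{\D8}$ strings — which differ from $\D0$--$\D8$ strings only by the opposite GSO projection — the NS ground state is projected out while the R zero-mode survives, leaving a single massless fermion with no bosonic partner. This is precisely a 1d $\mathcal{N}=2$ Fermi multiplet $\widetilde{\Lambda}$ in the bifundamental $(\overline{n},k)$ of $\U(n)\times\U(k)$ carrying trivial $\U(1)^{3}$ flavor charge, so the extended quiver differs from \eqref{eq:flavorquiver-magnificent} only by one extra red arrow from a new flavor node with fugacities $\{e^{\mathfrak{b}_{\alpha}}\}_{\alpha=1}^{n}$ to the $\U(k)$ gauge node.

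Next, I would apply the Witten index dictionary of section~\ref{sec:susyqm-index} to this extended quiver. Proposition~\ref{prop:M4noantiDbrane} already furnishes the contributions of the vector superfield, the adjoint chiral and Fermi multiplets from the $\D0$--$\D0$ sector, and the chiral multiplet from the $\D0$--$\D8$ sector. By the Fermi multiplet rule, the extra multiplet contributes
\begin{equation*}
Z_{\widetilde{\Lambda}}=\prod_{I=1}^{k}\prod_{\alpha=1}^{n}\sh(\phi_{I}-\mathfrak{b}_{\alpha}),
\end{equation*}
which inserts directly into the numerator of the integrand of Proposition~\ref{prop:M4noantiDbrane} and reproduces \eqref{eq:M4contourJK}. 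The JK prescription for selecting the contour is inherited unchanged: since $Z_{\widetilde{\Lambda}}$ is a polynomial in $e^{\phi_{I}}$, it introduces no new singular hyperplanes, and the set of candidate poles together with the reference vector $\bm{\eta}=(1,\ldots,1)$ coincide with those of the pure $\D0$--$\D8$ analysis.

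The step that most deserves care is the open-string analysis establishing that the extended brane configuration still preserves the two supersymmetries identified in section~\ref{eq:M4-susycount}. Concretely, one must verify that the spinor projection equations admit a common solution once both the $\D8$ and $\overline{\D8}$ boundary conditions with the $B$-field rotation \eqref{eq:Brotsusy} are imposed on the chirality subspace that survives GSO; this amounts to checking that reversing the sign in $\Gamma^{09}R_{\four}(\theta)\xi_{\sR}=\pm\xi_{\sL}$ still leaves a common solution with $\Gamma^{09}\xi_{\sR}=\xi_{\sL}$ on the locus $\sum_{a\in\four}|v_{a}|=1$ identified in section~\ref{sec:M4-openstring}. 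Once this supersymmetry check is granted, the remainder of the argument is a direct application of the Witten index dictionary.
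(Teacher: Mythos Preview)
Your proposal is correct and follows essentially the same route as the paper: add a bifundamental Fermi multiplet with trivial $\U(1)^{3}$ charge connecting a new flavor node (fugacities $\{\mathfrak{b}_{\alpha}\}$) to the $\U(k)$ gauge node, then apply the Witten-index dictionary of section~\ref{sec:susyqm-index} to obtain the extra numerator $\prod_{I,\alpha}\sh(\phi_{I}-\mathfrak{b}_{\alpha})$ on top of Proposition~\ref{prop:M4noantiDbrane}. The paper is in fact more informal than you are---it simply inserts the Fermi multiplet ``for convenience'' and relegates the physical justification (the $\overline{\D8}$ interpretation of \cite{Nekrasov:2017cih,Nekrasov:2018xsb}, or equivalently the $\D8'$-without-$B$-field picture of \cite{Billo:2021xzh} which directly yields a Fermi multiplet via section~\ref{sec:M4-openstring}) to the discussion following the statement, without carrying out the supersymmetry check you flag.
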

One motivation to include this extra Fermi multiplet contribution is to make the degrees of orders of the numerator and denominator to be the same. In the original paper \cite{Nekrasov:2017cih,Nekrasov:2018xsb}, this setup was interpreted as a setup of pairs of $\D8\tbar\overline{\D8}$-branes with D0-branes. The existence of anti D8-branes makes the group of the $\D8\tbar\overline{\D8}$ theory to look like a supergroup $\U(n|n)$. The parameters $\{\mathfrak{a}_{\alpha}\}$ parametrizes the Cartan subalgebra of the $\U(n\,|\,0)$ part while $\{\mathfrak{b}_{\alpha}\}$ parametrizes the Cartan subalgebra of $\U(0\,|\,n)$. Physically, the differences of the parameters $\mathfrak{a}_{\alpha},\mathfrak{b}_{\beta}$ are understood as the distance between the D8 and $\overline{\D8}$-branes. From the string theory viewpoint, the existence of the anti D-branes make the setup nonsupersymmetric. However, after tachyon condensation \cite{Sen:1998sm}, it is expected that the D8 and $\overline{\D8}$-branes annihilate each other eventually leaving a D6-brane extending in one of the three complex three-planes $\mathbb{C}^{3}_{\bar{a}}$. At the partition function level, this can be observed by tuning the parameters as $\mathfrak{b}_{\alpha}=\mathfrak{a}_{\alpha}+\epsilon_{a}$ for some $a\in\four$. Actually, under this restriction, the tetrahedron instanton partition function in Prop.~\ref{prop:tetracontourJK} can be reproduced.

The discussion above can be extended to the case when the appearing theory is a $\U(n|m)$ theory \cite{Billo:2021xzh}. In \cite{Billo:2021xzh}, the authors introduced a stack of $n$ D8-branes and $m$ D8'-branes to the system.\footnote{Strictly speaking, they were considering the D7--D$(-1)$ setup which is the T-duality of our setup here.} For the D8-branes, we introduce a constant background field, whose effect is equivalent to the $B$-field above. On the other hand, for the D8'-branes, we do not turn on any background field. The open string D0-D8 gives a chiral multiplet while the open string D0-D8' gives a Fermi multiplet (see section~\ref{sec:M4-openstring}). Setting $n=m$ actually reproduces the above partition function.


\paragraph{Tetrahedron instanton}Using the quiver diagram of the tetrahedron instanton, we can write down the contour integral formula
\bea
\adjustbox{valign=c}{
\begin{tikzpicture}[decoration={markings,mark=at position \arrowHeadPosition with {\arrow{latex}}}]
    \tikzset{
        cir/.style={circle,fill=black!10!white, draw, minimum size=0.8cm, text centered, thick},
        point/.style={circle, fill, inner sep=1.5pt},
        box/.style={draw, fill=black!10!white,minimum width=0.7cm, minimum height=0.7cm, text centered,thick},
        ->-/.style={decoration={
        markings,mark=at position #1 with {\arrow[scale=1.5]{>}}},postaction={decorate},line width=0.5mm},
        -<-/.style={decoration={
        markings,
        mark=at position #1 with {\arrow[scale=1.5]{<}}},postaction={decorate},line width=0.5mm}    
    }

    \node[cir] (i1) at ($(0,0)$) {$k$};
    \begin{scope}[rotate=0]
    \node[box,rotate=0] (n13) at ($(0,-2)$) {\rotatebox{-0}{$n_{\bar{4}}$}};
    \node[below] at (0.5,-1.) {$1,\textcolor{red}{q_{4}}$};

    \draw[postaction={decorate}, thick] ($(0.06,-1.65)$) -- ($(0.06,-0.4)$);
    \draw[postaction={decorate},red, thick] ($(-0.06,-1.65)$) -- ($(-0.06,-0.4)$);
    \end{scope}

    \begin{scope}[rotate=90]
    \node[box,rotate=90] (n12) at  ($(0,-2)$) {\rotatebox{-90}{$n_{\bar{1}}$}};
    \node[] at (0.3,-1.1) {$1,\textcolor{red}{q_{1}}$};
    \draw[postaction={decorate}, thick] ($(0.06,-1.65)$) -- ($(0.06,-0.4)$);
    \draw[postaction={decorate},red, thick] ($(-0.06,-1.65)$) -- ($(-0.06,-0.4)$);
    \end{scope}
    
    \begin{scope}[rotate=180]
    \node[box,rotate=180] (n12) at  ($(0,-2)$) {\rotatebox{-180}{$n_{\bar{2}}$}};
    \node[] at (0.4,-1.1) {$1,\textcolor{red}{q_{2}}$};
    
    \draw[postaction={decorate}, thick] ($(0.06,-1.65)$) -- ($(0.06,-0.4)$);
    \draw[postaction={decorate},red, thick] ($(-0.06,-1.65)$) -- ($(-0.06,-0.4)$);
    \end{scope}
    
    \begin{scope}[rotate=270]
    \node[box,rotate=270] (n23) at  ($(0,-2)$) {\rotatebox{-270}{$n_{\bar{3}}$}};
    \node[] at (0.3,-1.1) {$1,\textcolor{red}{q_{3}}$};
    
    \draw[postaction={decorate}, thick] ($(0.06,-1.65)$) -- ($(0.06,-0.4)$);
    \draw[postaction={decorate},red, thick] ($(-0.06,-1.65)$) -- ($(-0.06,-0.4)$);
    \end{scope}
    \begin{scope}[rotate=45]
    \chiralarc[postaction={decorate},thick](0,0.55)(-45:225:0.22:0.65)
    \fermiarc[postaction={decorate},thick](0,0.5)(-45:225:0.1:0.5)
    \end{scope}
    \begin{scope}[rotate=135]
    \chiralarc[postaction={decorate},thick](0,0.55)(-45:225:0.22:0.65)
    \fermiarc[postaction={decorate},thick](0,0.5)(-45:225:0.1:0.5)
    \end{scope}
    \begin{scope}[rotate=225]
    \chiralarc[postaction={decorate},thick](0,0.55)(-45:225:0.22:0.65)
    \end{scope}
    \begin{scope}[rotate=315]
    \chiralarc[postaction={decorate},thick](0,0.55)(-45:225:0.22:0.65)
    \fermiarc[postaction={decorate},thick](0,0.5)(-45:225:0.1:0.5)
    \end{scope}    
\end{tikzpicture}}\quad {\rightsquigarrow} \quad \begin{dcases}
    V:\,\prod_{I=1}^{k}\frac{d\phi_{I}}{2\pi i}\prod_{I\neq J}\sh(\phi_{I}-\phi_{J}),\quad \mathsf{B}_{a}:\,\prod_{I,J}\frac{1}{\sh(\phi_{I}-\phi_{J}-\epsilon_{a})}\\
    \Lambda_{i}:\, \prod_{I,J=1}^{k}\sh(\phi_{I}-\phi_{J}-\epsilon_{i}-\epsilon_{4}),\\
\mathsf{I}_{\bar{a}}:\prod_{I=1}^{k}\prod_{\alpha=1}^{n_{\bar{a}}}\frac{1}{\sh(\phi_{I}-\mathfrak{a}_{\bar{a},\alpha})},\quad \Lambda_{\bar{a}}:\,\prod_{I=1}^{k}\prod_{\alpha=1}^{n_{\bar{a}}}\sh(\phi_{I}-\mathfrak{a}_{\bar{a},\alpha}-\epsilon_{4})
\end{dcases}
\eea
\begin{proposition}\label{prop:tetracontourJK}
    The partition function of the tetrahedron instanton setup is
\bea
\mathcal{Z}^{\D6}_{\text{inst.}}=\sum_{k=0}^{\infty}\mathfrak{q}^{k}\mathcal{Z}_{k}^{\D6},\quad \mathcal{Z}_{k}^{\D6}=\frac{1}{k!}\oint_{\text{JK}} \prod_{I=1}^{k}\frac{d\phi_{I}}{2\pi i }\,\mu_{k}^{\D6\tbar\D0}(\mathfrak{a}_{\bar{a},\alpha},\phi_{I})
\eea
where
\bea\label{eq:tetracontourJK}
\mu_{k}^{\D6\tbar\D0}(\mathfrak{a}_{\bar{a},\alpha},\phi_{I})&=\left(\frac{\sh(-\epsilon_{14,24,34})}{\sh(-\epsilon_{1,2,3,4})}\right)^{k}\prod_{I=1}^{k}\prod_{a\in\four}\prod_{\alpha=1}^{n_{\bar{a}}}\frac{\sh(\phi_{I}-\mathfrak{a}_{\bar{a},\alpha}-\epsilon_{a})}{\sh(\phi_{I}-\mathfrak{a}_{\bar{a},\alpha})}\\
&\qquad \times \prod_{I\neq J}^{k}\frac{\sh(\phi_{I}-\phi_{J})\sh(\phi_{I}-\phi_{J}-\epsilon_{14,24,34})}{\sh(\phi_{I}-\phi_{J}-\epsilon_{1,2,3,4})}.
\eea
\end{proposition}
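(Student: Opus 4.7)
The plan is to derive the contour integral formula \eqref{eq:tetracontourJK} directly from the $1$d $\mathcal{N}=2$ supersymmetric quantum mechanics obtained by dimensionally reducing the tetrahedron instanton quiver gauge theory of section~\ref{sec:tetra-2dtheory}, and then applying the generic Witten index localization rules compiled in section~\ref{sec:susyqm-index}. Concretely, I would package the matter content together with its $\U(1)^{3}$ weights from the flavor quiver of section~\ref{sec:2susy2d-flavorsymmetry}, verify that the localization hypotheses hold, and finally assemble the per-node and per-arrow contributions into the claimed integrand.

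First I would list the matter content with its flavor charges as read off from \eqref{eq:2SUSYquiver-tetrahedron} with the charges fixed in section~\ref{sec:2susy2d-flavorsymmetry}: a $\U(k)$ vector multiplet; four adjoint chirals $\mathsf{B}_{a}$ ($a\in\four$) of charge $q_{a}$; three adjoint Fermis $\Lambda_{i}$ ($i=1,2,3$) of charge $q_{i}q_{4}$; for each $a\in\four$ a bifundamental chiral $\mathsf{I}_{\bar{a}}$ of charge $1$; and, crucially, a bifundamental Fermi $\Lambda_{\bar{a}}$ whose charge is $q_{a}$, pinned by the $E$-term $E_{\mathsf{I}_{\bar{a}}}=\mathsf{B}_{a}\mathsf{I}_{\bar{a}}$ together with the $\mathcal{N}=(2,2)\to\mathcal{N}=(0,2)$ dictionary \eqref{eq:4susy2susycorrespondence}. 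Since the $JE$-traceless condition \eqref{eq:JE-condition} has already been verified at the end of section~\ref{sec:tetra-2dtheory}, the formula \eqref{eq:2susylocalization} may be applied.

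Next I would assemble the integrand piece by piece. The vector multiplet contributes the measure and $\prod_{I\neq J}\sh(\phi_{I}-\phi_{J})$; the four adjoint chirals give $\prod_{I,J}\prod_{a\in\four}\sh(\phi_{I}-\phi_{J}-\epsilon_{a})^{-1}$; the three adjoint Fermis give $\prod_{I,J}\prod_{i=1}^{3}\sh(\phi_{I}-\phi_{J}-\epsilon_{i}-\epsilon_{4})$; and the bifundamentals attached to the flavor node $\U(n_{\bar{a}})$ give $\prod_{I,\alpha}\sh(\phi_{I}-\mathfrak{a}_{\bar{a},\alpha}-\epsilon_{a})/\sh(\phi_{I}-\mathfrak{a}_{\bar{a},\alpha})$. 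The key bookkeeping step is to split the adjoint products at $I=J$: the diagonal part is $\phi$-independent and, after combining the four chirals with the three Fermis, collapses to the prefactor $\bigl(\sh(-\epsilon_{14,24,34})/\sh(-\epsilon_{1,2,3,4})\bigr)^{k}$, while the off-diagonal part combines with the vector contribution into $\prod_{I\neq J}\sh(\phi_{I}-\phi_{J})\sh(\phi_{I}-\phi_{J}-\epsilon_{14,24,34})/\sh(\phi_{I}-\phi_{J}-\epsilon_{1,2,3,4})$. Including the Weyl factor $1/k!$ for $G=\U(k)$ and imposing the JK prescription for the multi-variable contour completes the identification with $\mu_{k}^{\D6\tbar\D0}$; summing over $k$ with fugacity $\mathfrak{q}^{k}$ produces the stated instanton partition function.

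The main source of potential error, rather than an obstacle of principle, is the charge assignment for the bifundamental Fermis: one must use $q_{a}$ (not $q_{4}$) for $\Lambda_{\bar{a}}$ so that the numerator shift reads $-\epsilon_{a}$, as required for the eventual match to the D$6$ ADHM conditions \eqref{eq:D6NKcondition}. The intermediate display appearing just above the proposition uses $\epsilon_{4}$ uniformly, which should be understood as a specialization or typo and is superseded by the quadrality-symmetric statement of the proposition. The JK contour itself is not evaluated in this step; its pole classification in terms of plane partitions is deferred to section~\ref{sec:gaugeorigami-multidim-partition}.
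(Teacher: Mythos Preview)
Your proposal is correct and follows essentially the same route as the paper: read off the matter content and $\U(1)^{3}$ charges from the tetrahedron quiver, apply the per-node and per-arrow rules of \eqref{eq:2susylocalization}, and collect the diagonal ($I=J$) pieces into the prefactor while the off-diagonal pieces combine with the vector contribution to give the displayed integrand. Your observation that the intermediate display uses $\epsilon_{4}$ uniformly for $\Lambda_{\bar{a}}$ while the charge table and the proposition require $\epsilon_{a}$ is accurate; the proposition's $\epsilon_{a}$ is the correct, quadrality-symmetric version.
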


\paragraph{Spiked instanton}We can similarly do the same procedure for the spiked instanton case. Since the quiver diagram is complicated, let us first focus on the case for the setup with only one stack of D4$_{12}$-branes:
\bea
\adjustbox{valign=c}{\begin{tikzpicture}[decoration={markings,mark=at position \arrowHeadPosition with {\arrow{latex}}}]
 \tikzset{
        box/.style={draw, minimum width=0.7cm, minimum height=0.7cm, text centered,thick},
        ->-/.style={decoration={
        markings,mark=at position #1 with {\arrow[scale=1.5]{>}}},postaction={decorate},line width=0.5mm},
        -<-/.style={decoration={
        markings,
        mark=at position #1 with {\arrow[scale=1.5]{<}}},postaction={decorate},line width=0.5mm}    
    }
\begin{scope}[xshift=4cm]
    \draw[fill=black!10!white,thick](0,0) circle(0.4cm);
    \node at (0,0) {$k$};
    \node[box,fill=black!10!white] at (0,1.6) {$n_{12}$};
    \draw[postaction={decorate},thick] (-0.2,1.25)--(-0.2,0.3);
    \draw[postaction={decorate},red,thick] (-0.07,1.25)--(-0.07,0.4);
    \draw[postaction={decorate},thick] (0.2,0.3)--(0.2,1.25);
    \draw[postaction={decorate},red,thick] (0.07,0.4)--(0.07,1.25);
    \foreach \ang in {90,145,215,270} {
    \begin{scope}[rotate=\ang]
        \chiralarc[postaction={decorate},thick](0,0.5)(-45:225:0.22:0.65)
    \end{scope}
    }
    \foreach \ang in {90,145,270} {
    \begin{scope}[rotate=\ang]
    \fermiarc[postaction={decorate},thick](0,0.5)(-45:225:0.1:0.5)
    \end{scope}
    \node[left] at (-0.2,0.8) {$1,\textcolor{red}{q_{4}}$};
    \node[right] at (0.2,0.8) {$q_{12},\textcolor{red}{q_{124}=q_{3}^{-1}}$};
    \node[left] at (-1.5,0) {$q_{2},\textcolor{red}{q_{24}}$};
    \node[right] at (1.6,0) {$q_{1},\textcolor{red}{q_{14}}$};
    \node[below left] at (-0.9,-1){$q_{3},\textcolor{red}{q_{34}}$};
    \node[below right] at (0.9,-1){$q_{4}$};
    \draw[fill=black!10!white,thick](0,0) circle(0.4cm);
    \node at (0,0) {$k$};
    
    }
\end{scope}
\end{tikzpicture}}\quad {\rightsquigarrow} \quad \begin{dcases}
    V:\,\prod_{I=1}^{k}\frac{d\phi_{I}}{2\pi i}\prod_{I\neq J}\sh(\phi_{I}-\phi_{J}),\quad \mathsf{B}_{a}:\,\prod_{I,J}\frac{1}{\sh(\phi_{I}-\phi_{J}-\epsilon_{a})}\\
    \Lambda_{i}:\, \prod_{I,J=1}^{k}\sh(\phi_{I}-\phi_{J}-\epsilon_{i}-\epsilon_{4}),\\
\mathsf{I}:\prod_{I=1}^{k}\prod_{\alpha=1}^{n_{\bar{a}}}\frac{1}{\sh(\phi_{I}-\mathfrak{a}_{\bar{a},\alpha})},\quad \Lambda_{\mathsf{I}}:\,\prod_{I=1}^{k}\prod_{\alpha=1}^{n_{12}}\sh(\phi_{I}-\mathfrak{a}_{\alpha}-\epsilon_{4})\\
\mathsf{J}:\prod_{I=1}^{k}\prod_{\alpha=1}^{n_{12}}\frac{1}{\sh(\mathfrak{a}_{\alpha}-\epsilon_{12}-\phi_{I})} ,\quad \Lambda_{\mathsf{J}}:\prod_{I=1}^{k}\prod_{\alpha=1}^{n_{12}}\sh(\mathfrak{a}_{\alpha}-\phi_{I}+\epsilon_{3})
\end{dcases}
\eea
The partition function obtained here is the 5d $\mathcal{N}=1^{\ast}$ gauge theory on $\mathbb{C}^{2}_{12}\times \mathbb{S}^{1}$ with adjoint mass $\epsilon_{3}$ (the $\widehat{A}_{0}$ quiver gauge theory). 

Generally, the partition function of the spiked instanton is given as follows.
\begin{proposition}\label{prop:spikedcontourJK}
    The partition function of the spiked instanton setup is 
    \bea
    \mathcal{Z}_{\text{inst.}}^{\D4}=\sum_{k=0}^{\infty}\mathfrak{q}^{k}\mathcal{Z}_{k}^{\D4},\quad \mathcal{Z}_{k}^{\D4}=\frac{1}{k!}\oint_{\text{JK}} \prod_{I=1}^{k}\frac{d\phi_{I}}{2\pi i }\,\,\mu_{k}^{\D4\tbar\D0}(\mathfrak{a}_{A,\alpha},\phi_{I})
    \eea    
    where
    \bea\label{eq:spikedcontourJK}
    \mu_{k}^{\D4\tbar\D0}(\mathfrak{a}_{A,\alpha},\phi_{I})&=\left(\frac{\sh(-\epsilon_{14,24,34})}{\sh(-\epsilon_{1,2,3,4})}\right)^{k}\prod_{I=1}^{k}\prod_{A=\overbar{(ab)}\in\six}\prod_{\alpha=1}^{n_{A}}\frac{\sh(\phi_{I}-\mathfrak{a}_{A,\alpha}-\epsilon_{b})\sh(\mathfrak{a}_{A,\alpha}-\phi_{I}+\epsilon_{a})}{\sh(\phi_{I}-\mathfrak{a}_{A,\alpha})\sh(\mathfrak{a}_{A,\alpha}-\phi_{I}-\epsilon_{A})}\\
&\qquad \times \prod_{I\neq J}^{k}\frac{\sh(\phi_{I}-\phi_{J})\sh(\phi_{I}-\phi_{J}-\epsilon_{14,24,34})}{\sh(\phi_{I}-\phi_{J}-\epsilon_{1,2,3,4})}.
    \eea
\end{proposition}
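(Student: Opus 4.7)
The strategy is to directly apply the general supersymmetric localization formula \eqref{eq:2susylocalization} for the Witten index of the 1d $\mathcal{N}=2$ supersymmetric quantum mechanics obtained by T-dualizing the 2d $\mathcal{N}=(0,2)$ quiver gauge theory \eqref{eq:2SUSYquiver-spikedinstantonn} along $\mathbb{S}^{1}$. The essential inputs are (i) the quiver itself, which encodes the gauge group $\U(k)$, the six flavor groups $\prod_{A\in\six}\U(n_{A})$, and the chiral/Fermi matter content, together with (ii) the $\U(1)^{3}$ flavor charges listed in section~\ref{sec:2susy2d-flavorsymmetry}, which are fixed by demanding that the $E,J$-terms in \eqref{eq:2SUSYJEterm-spiked} transform in conjugate ways. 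Since we already verified in section~\ref{sec:2susyquivergaugetheory-general} that the $J,E$-terms satisfy the traceless condition \eqref{eq:JE-condition}, the Lagrangian is genuinely $\mathcal{N}=2$ and the localization formula is applicable.

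The plan is to write down the individual building blocks dictated by each node and arrow of the quiver \eqref{eq:2SUSYquiver-spikedinstantonn}: the $\U(k)$ vector multiplet provides $\frac{1}{k!}\prod_{I}\frac{d\phi_{I}}{2\pi i}\prod_{I\neq J}\sh(\phi_{I}-\phi_{J})$; the four adjoint chirals $\mathsf{B}_{a}$ of charge $q_{a}$ give $\prod_{I,J}\sh(\phi_{I}-\phi_{J}-\epsilon_{a})^{-1}$; the three adjoint Fermis $\Lambda_{i}$ of charge $q_{i}q_{4}$ give $\prod_{I,J}\sh(\phi_{I}-\phi_{J}-\epsilon_{i4})$; the bifundamental chirals $\mathsf{I}_{A}$ (charge $1$) and $\mathsf{J}_{A}$ (charge $q_{A}$) contribute the two denominator factors $\sh(\phi_{I}-\mathfrak{a}_{A,\alpha})^{-1}$ and $\sh(\mathfrak{a}_{A,\alpha}-\phi_{I}-\epsilon_{A})^{-1}$; and the bifundamental Fermis $\Lambda_{\mathsf{I}_{A}}$, $\Lambda_{\mathsf{J}_{A}}$ with charges $q_{a},q_{b}^{-1}$ (where $\{a,b\}=\bar{A}$ with $a<b$) yield the numerator factors $\sh(\phi_{I}-\mathfrak{a}_{A,\alpha}-\epsilon_{a})$ and $\sh(\mathfrak{a}_{A,\alpha}-\phi_{I}+\epsilon_{b})$. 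Multiplying all factors and isolating the diagonal $I=J$ part of the adjoint chiral and Fermi contributions produces the overall prefactor $\bigl(\sh(-\epsilon_{14,24,34})/\sh(-\epsilon_{1,2,3,4})\bigr)^{k}$, while the off-diagonal $I\neq J$ part combines with the vector multiplet Vandermonde into the $\mathscr{S}$-factor $\prod_{I\neq J}\sh(\phi_{IJ})\sh(\phi_{IJ}-\epsilon_{14,24,34})/\prod\sh(\phi_{IJ}-\epsilon_{1,2,3,4})$. Taking the contour prescription to be the JK-residue, which is the physically derived prescription of \cite{Benini:2013xpa,Benini:2013nda,Hori:2014tda}, then assembles the claimed formula \eqref{eq:spikedcontourJK}.

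As a consistency check, I would restrict to a single stack of $\D4_{12}$-branes (set $n_{A}=0$ for $A\neq 12$) and verify that the resulting integrand reduces to the one obtained by combining the building blocks drawn in the one-stack quiver displayed just before Prop.~\ref{prop:spikedcontourJK}. In that limit the formula should coincide with the standard 5d $\mathcal{N}=1^{*}$ instanton partition function on $\mathbb{C}_{12}^{2}\times \mathbb{S}^{1}$ with adjoint mass $\epsilon_{3}$, and in particular should match the pure $\mathcal{N}=2^{*}$ computation that is implicit in Prop.~\ref{prop:LMNSformula-mod} after including the adjoint matter. Analogous reductions should recover the known crossed and folded instanton cases from \cite{Nekrasov:2015wsu,Nekrasov:2016qym}.

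The only genuinely subtle point is the bookkeeping of the charges of the bifundamental Fermi fields $\Lambda_{\mathsf{I}_{A}}$, $\Lambda_{\mathsf{J}_{A}}$. Unlike the adjoint sector, there is a choice of which element of $\bar{A}=\{a,b\}$ sits inside the $E$-term for $\Lambda_{\mathsf{I}_{A}}$ versus $\Lambda_{\mathsf{J}_{A}}$; this is the origin of the asymmetric appearance $\sh(\phi-\mathfrak{a}-\epsilon_{b})\sh(\mathfrak{a}-\phi+\epsilon_{a})$ in \eqref{eq:spikedcontourJK} with $A=\overline{(ab)}$. I would fix this by taking the explicit $E,J$-data from \eqref{eq:2SUSYJEterm-spiked}, reading off the $\U(1)^{3}$ charges as done in section~\ref{sec:2susy2d-flavorsymmetry}, and then tracking how the lexicographic order $a<b$ assigns $q_{a}$ to $\Lambda_{\mathsf{I}_{A}}$ and $q_{b}^{-1}$ to $\Lambda_{\mathsf{J}_{A}}$. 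Once this assignment is unambiguously fixed by the conventions already established, the remainder of the proof is purely mechanical substitution into \eqref{eq:2susylocalization}.
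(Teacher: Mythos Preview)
Your proposal is correct and follows exactly the paper's approach: the paper simply reads off each building block from the quiver \eqref{eq:2SUSYquiver-spikedinstantonn} via the localization rules \eqref{eq:2susylocalization} (it does so explicitly for the single $\D4_{12}$ stack just before the proposition) and then assembles the general formula, which is precisely what you outline. One small remark: the ``subtle point'' you flag about which element of $\bar{A}=\{a,b\}$ is assigned to $\Lambda_{\mathsf{I}_{A}}$ versus $\Lambda_{\mathsf{J}_{A}}$ is in fact immaterial for the integrand, since $\sh(\phi-\mathfrak{a}-\epsilon_{a})\,\sh(\mathfrak{a}-\phi+\epsilon_{b})=\sh(\phi-\mathfrak{a}-\epsilon_{b})\,\sh(\mathfrak{a}-\phi+\epsilon_{a})$ by $\sh(-x)=-\sh(x)$; this is why the paper's table lists the pair $\{q_{a},q_{b}^{-1}\}$ without committing to an ordering, and why the specific $E,J$-assignments in \eqref{eq:2SUSYJEterm-spiked} (which are not uniformly lexicographic across the six sectors) all land on the same formula.
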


\section{Gauge origami and multi-dimensional partitions}\label{sec:gaugeorigami-multidim-partition}
We collect basic properties of multi-dimensional partitions in section~\ref{sec:multi-dim-part}. Classifications of the poles are given in the following sections.

\subsection{Multi-dimensional partitions}\label{sec:multi-dim-part}
Before evaluating the contour integral formulas for the gauge origami partition functions, let us summarize some notations regarding multi-dimensional partitions. See \cite{Nekrasov:2017cih,Nekrasov:2018xsb,Nekrasov:2023nai} and also \cite[Sec.~2]{Kimura:2023bxy} for details.

\paragraph{Young diagrams}Generalizing the coordinates given in \eqref{eq:2dpartition-coordinate}, we introduce coordinates and multiplicative coordinates as
\bea
c_{ab}(\Bbox)&=\mathfrak{a}+(i-1)\epsilon_{a}+(j-1)\epsilon_{b},\\
\chi_{ab,u}(\Bbox)&=e^{c_{ab}(\Abox)}=uq_{a}^{i-1}q_{b}^{j-1},
\eea
where $u=e^{\mathfrak{a}}$. We call the multiplicative coordinates as $q$-coordinates.

\paragraph{Plane partitions}
The plane partition is a stack of cubes that obeys a generalization of the condition \eqref{eq:Youngcond}:
\bea\label{eq:planepartitionfigure}
\adjustbox{valign=c}{\includegraphics[width=5cm]{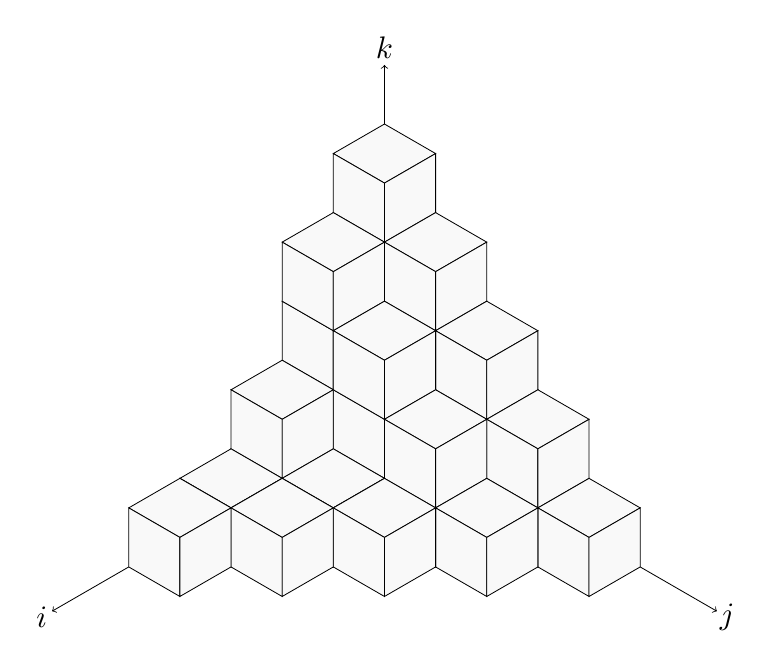}}
\eea
We denote the set of all possible plane partitions as $\mathcal{PP}$. 

There are two ways to describe the plane partition: $(2,1)$-type and $(1,2)$-type. The $(2,1)$-type description is to understand the plane partition $\pi$ as a 2d partition $\lambda_{\pi}$ where there is a map mapping each box $\Bbox=(i,j)\in\lambda_{\pi}$ a number $\pi_{i,j}$ obeying the condition
\bea\label{eq:planepartcond}
\adjustbox{valign=c}{\includegraphics[width=5cm]{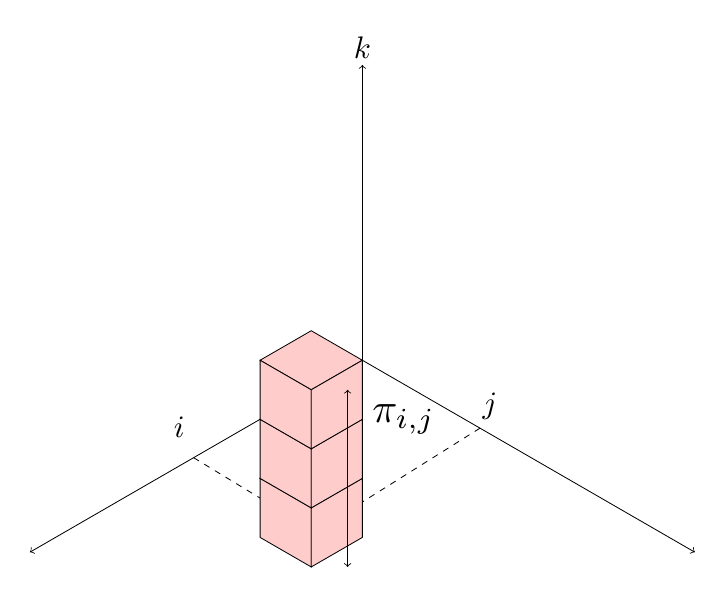}}\qquad \pi_{i,j}\geq \pi_{i+1,j},\quad \pi_{i,j}\geq \pi_{i,j+1}.
\eea
The size of the plane partition is defined as the number of cubes $|\pi|=\sum_{i,j}\pi_{i,j}$. Like the Young diagram case, a cube $\cube$ in the plane partition $\pi$ is assigned a coordinate $(i,j,k)\,(i,j,k\geq 1)$ as \eqref{eq:planepartitionfigure} and it obeys
\begin{equation}
    (i,j,k)\in\pi\,\,\Leftrightarrow \,\, 1\leq k\leq \pi_{i,j}.
\end{equation}

The other $(1,2)$-type description is to understand the plane partition as a non-increasing sequence of Young diagrams $\pi=(\Lambda^{(1)},\Lambda^{(2)},\ldots,\Lambda^{(h(\pi))})$:
\bea
\adjustbox{valign=c}{\includegraphics[width=5cm]{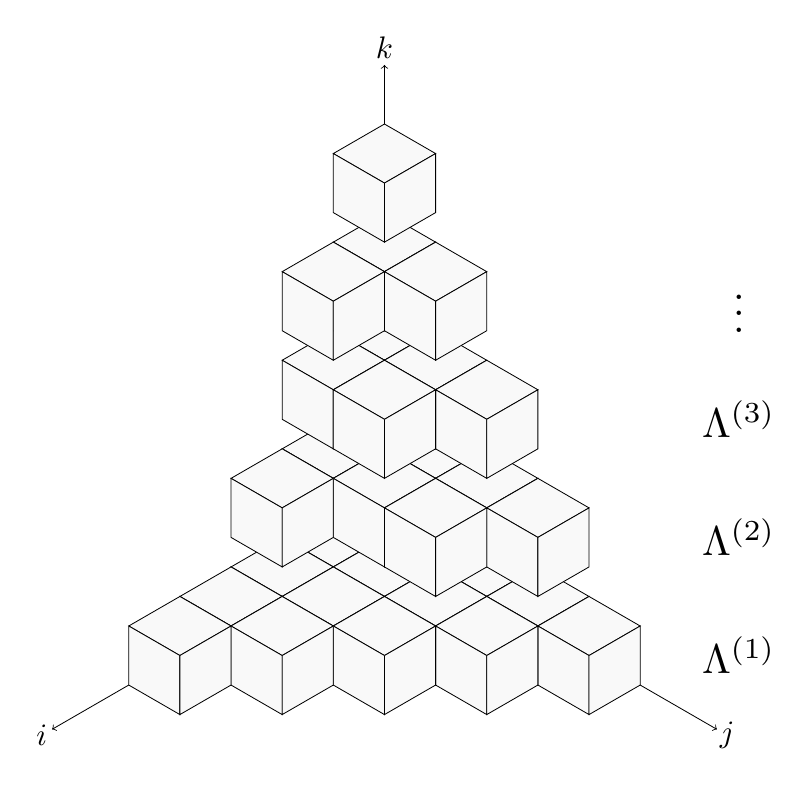}}\quad \quad \begin{array}{l}
\Lambda^{(k)}=(\Lambda_{1}^{(k)},\ldots,\Lambda_{i}^{(k)}\ldots)\\
\Lambda^{(k)}\succeq\Lambda^{(k+1)},\,\,\forall k
\end{array}
\eea
where $\Lambda^{(k)}\succeq \Lambda^{(k+1)}$ means $\forall (i,j)\in\Lambda^{(k+1)}\Rightarrow (i,j)\in\Lambda^{(k)}$ and $h(\pi)$ is the height of the plane partition which is defined as 
\begin{equation}
    h(\pi)=\min\{k\geq 0\,|\,(1,1,k+1)\not\in\pi\}.
\end{equation}
The size in this description is given as $|\pi|=\sum_{k}|\Lambda^{(k)}|$. Similarly, the $\cube=(i,j,k)$ in the plane partition obeys the condition
\begin{equation}
    (i,j,k)\in\pi\,\,\Leftrightarrow \,\, 1\leq j\leq \Lambda_{i}^{(k)}.
\end{equation}
\begin{figure}
    \centering
    \includegraphics[width=0.5\linewidth]{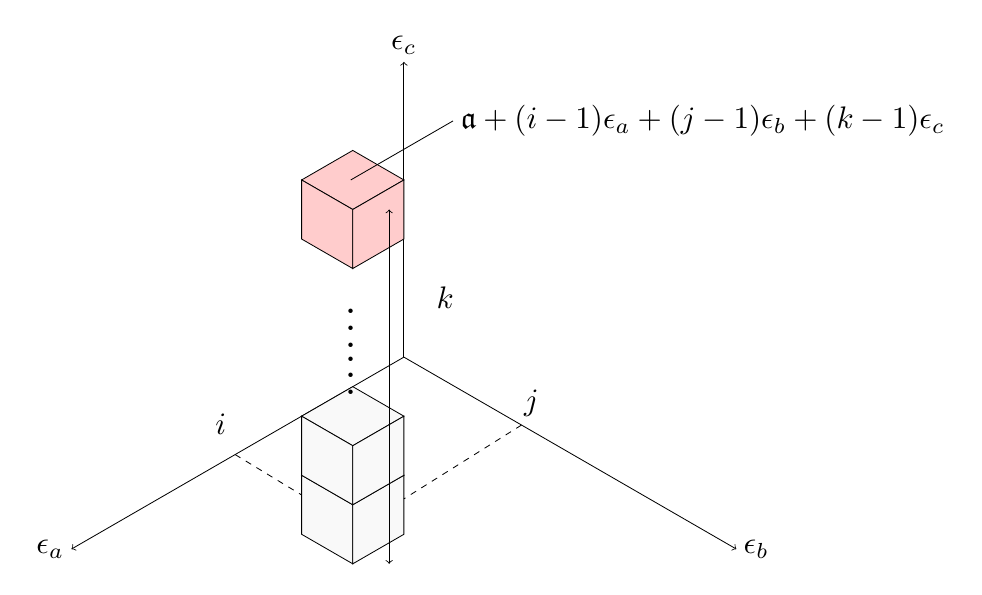}
    \caption{Coordinates of the plane partition}
    \label{fig:q-coord-planepartition}
\end{figure}
For later use, we introduce coordinates to each cubes in the plane partition. Given three parameters $\epsilon_{a,b,c}$, for a box in the plane partition $\cube=(i,j,k)$ we define
\bea
c_{abc}(\cube)=c_{abc}(\mathfrak{a},(i,j,k))=\mathfrak{a}+(i-1)\epsilon_{a}+(j-1)\epsilon_{b}+(k-1)\epsilon_{c}
\eea
where $\mathfrak{a}$ is some complex parameter. We also define the $q$-coordinates as
\bea
\chi_{abc,u}(\cube)=e^{c_{abc}(\scube)}=uq_{a}^{i-1}q_{b}^{j-1}q_{c}^{k-1}.
\eea

\paragraph{Solid partitions}A solid partition is a four-dimensional analog of the Young diagram and plane partition. It is a stack of hyper-cubes obeying similar conditions to \eqref{eq:Youngcond} and \eqref{eq:planepartcond}. We denote the set of all possible solid partitions as $\mathcal{SP}$. We have three ways to describe the solid partition: $(3,1)$, $(2,2)$, and $(1,3)$-types. We only discuss the $(3,1)$ and $(1,3)$-type descriptions.
\begin{enumerate}
    \item $(3,1)$-type:
    This description is similar to the plane partition's $(2,1)$-type description. We project the solid partition to a plane partition $\pi_{\rho}$ and for each cube $(i,j,k)\in\pi_{\rho}$, a height function $\rho_{i,j,k}$ is defined. The height function obeys the condition
    \begin{equation}
        \rho_{i,j,k}\geq \rho_{i+1,j,k},\quad \rho_{i,j,k}\geq \rho_{i,j+1,k},\quad \rho_{i,j,k}\geq \rho_{i,j,k+1}.
    \end{equation}
    The size is defined as
    \begin{equation}
        |\rho|=\sum_{(i,j,k)\in\pi_{\rho}}\rho_{i,j,k}.
    \end{equation}
    4-cubes $\hcube$ in the solid partition are assigned coordinates in a natural way $(i,j,k,l)$ and obey
    \begin{equation}
    (i,j,k,l)\in \rho \,\,\Leftrightarrow\,\, 1\leq l\leq \rho_{i,j,k}.
    \end{equation}
    
    \item $(1,3)$-type: 
    \begin{figure}
        \centering
        \includegraphics[width=13cm]{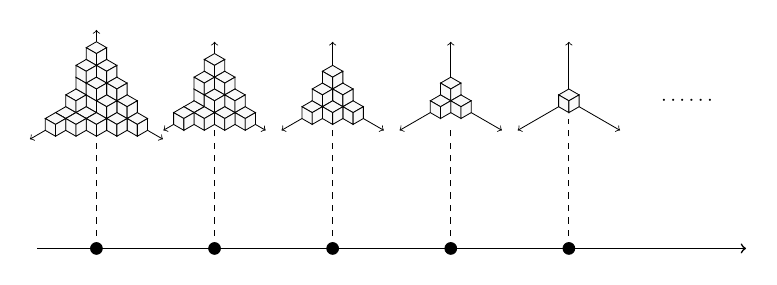}
        \caption{$(1,3)$-type description of the solid partition. The horizontal axis is one of the four axes of the solid partition. The solid partition is decomposed into multiple plane partitions $\Pi^{(1)},\Pi^{(2)},\ldots$.}
        \label{fig:solidplanedecomp}
    \end{figure}
    This description resembles the $(1,2)$-type description of plane partition. The solid partition is understood as non-increasing sequences of plane partitions (see Figure \ref{fig:solidplanedecomp}):
    \begin{equation}
        \rho=(\Pi^{(1)},\Pi^{(2)},\ldots,),\quad \Pi^{(l)}\succeq\Pi^{(l+1)}
    \end{equation}
    where $\Pi^{(l)}\succeq \Pi^{(l+1)}$ means
    \begin{equation}
    (i,j,k)\in \Pi^{(l+1)}\Rightarrow (i,j,k)\in\Pi^{(l)}.
    \end{equation}
    Under this description, the size is defined as
    \begin{equation}
        |\rho|=\sum_{l}|\Pi^{(l)}|
    \end{equation}
    and obviously,
    \begin{equation}
        (i,j,k,l)\in\rho\,\,\Leftrightarrow \,\, (i,j,k)\in\Pi^{(l)}.
    \end{equation}
   
\end{enumerate}

Similar to the previous cases, we define coordinates to the hypercubes in the solid partition as
\bea
c_{\four}(\hcube)&=c_{\four}(\mathfrak{a},(i,j,k,l))=\mathfrak{a}+(i-1)\epsilon_{1}+(j-1)\epsilon_{2}+(k-1)\epsilon_{3}+(l-1)\epsilon_{4},\\
\chi_{\four,u}(\hcube)&=e^{c_{\four}(\shcube)}=uq_{1}^{i-1}q_{2}^{j-1}q_{3}^{k-1}q_{4}^{l-1}.
\eea

\paragraph{Ordering of boxes}
For later use, let us define an ordering in the boxes included in the multi-dimensional partitions. The ordering for boxes in the solid partition is defined as
\bea\label{eq:ordering-partition}
\,&(i,j,k,l)<(i',j',k',l')\\
\Leftrightarrow &(l<l')\vee (l=l', k<k') \vee (l=l', k=k', j<j')\vee (l=l', k=k', j=j', i<i').
\eea
This ordering define a monomial ordering in the $q$-coordinates $q_{1}^{i-1}q_{2}^{j-1}q_{3}^{k-1}q_{4}^{l-1}$.

For the orderings of the boxes in the plane partition, we first embed the plane partition in the solid partition and induce the ordering defined on the solid partitions. There are three ways to embed the plane partition in the solid partition depending on which three axes the plane partition spans. When considering plane partitions, we implicitly assume that this embedding is also specified. We denote the set of plane partitions not extending in the $a\in\four$ direction as $\mathcal{PP}_{a}=\mathcal{PP}_{\bar{a}}$.

Similarly, for Young diagrams, we embed them into the solid partition and induce the ordering. There are six ways to embed the Young diagram to the solid partition depending on which two axes the Young diagram spans. Similarly, when considering the Young diagrams, we implicitly assume that this embedding is also specified. The set of Young diagrams extending in the $ab\in\six$ directions is denoted as $\mathcal{P}_{A}$ $(A\in\six)$.

\subsection{Magnificent four and solid partitions}
Let us evaluate \eqref{eq:M4contourJK} using the JK residue formalism. The discussion here is based on \cite{Nekrasov:2018xsb}. At the end of evaluation, we will see that similar to the pure SYM case, the poles giving non-zero JK-residues are classified by solid partitions. 

The denominator in \eqref{eq:M4contourJK} shows that the poles come from the hyperplanes
\bea
H_{I,\alpha}=\{\phi_{I}-\mathfrak{a}_{\alpha}=0\},\quad H_{IJ,a=1,2,3,4}=\{\phi_{I}-\phi_{J}-\epsilon_{a}=0\},
\eea
where the corresponding charge vectors are now $\{\bfe_{I}\}$ and $\{\bfe_{I}-\bfe_{J}\}$. Compared to the pure super Yang--Mills case in section~\ref{sec:pureSYM_JK}, the difference is that we do not have $\{-\bfe_{I}\}$ and thus Lemma~\ref{lem:JKantifund} is automatically satisfied.\footnote{This also means that if we set the reference vector to be $\eta=(-1,\cdots ,-1)$, the partition function vanishes, which is a property different from the pure SYM case. For the pure SYM case, even if we set the reference vector to be $\eta=(-1,\cdots, -1)$, the partition function does not vanish and actually we still have the same partition function.} Lemma~\ref{lem:JKtree-structure} tells us that the poles come from an oriented tree made by $\{\bfe_{I}\}$ and $\{\bfe_{I}-\bfe_{J}\}$. The main claim of this section is that the poles will be classified by solid partitions.

\begin{theorem}\label{thm:M4JKpoles}
    The poles of \eqref{eq:M4contourJK} are classified by $n$-tuples of solid partitions $\vec{\rho}=\{\rho^{(\alpha)}\}_{\alpha=1}^{n}$:
    \bea
    \{\phi_{I}\}_{I=1}^{k}\rightarrow \{c_{\four}(\hcube)\mid \hcube=(x_{1},x_{2},x_{3},x_{4})\in\rho^{(\alpha)}\}_{\alpha=1}^{n}
    \eea
    where $c_{\four}(\hcube)=\mathfrak{a}_{\alpha}+\sum_{a\in\four}(x_{a}-1)\epsilon_{a}$ and $k$ corresponds to the total number of boxes in the $n$-tuples of solid partitions: $k=\sum_{\alpha=1}^{n}|\rho^{(\alpha)}|$.
\end{theorem}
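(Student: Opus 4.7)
The plan is to proceed by induction on $k$, extending the argument of Theorem~\ref{thm:pureSYMJKpoles} from Young diagrams to solid partitions. The base case $k=1$ is immediate: only the hyperplanes $\phi_1-\mathfrak{a}_\alpha=0$ are admissible (those coming from $\mathfrak{b}_\alpha$ live in the numerator), which identifies $\phi_1$ with the unique hypercube $(1,1,1,1)\in\rho^{(\alpha)}$. Since the charge vectors $\{\bfe_I\}$ and $\{\bfe_I-\bfe_J\}$ are structurally identical to those of pure SYM, Lemmas~\ref{lem:JKantifund} and~\ref{lem:JKtree-structure} apply verbatim; the only new feature is that each branch now carries a label $a\in\four$ rather than $a\in\{1,2\}$. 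For the inductive step I would assume that $\{\phi_I\}_{I=1}^{k-1}$ have already been placed at the $c_{\four}$-coordinates of the hypercubes of an $n$-tuple $\vec\rho$ of total size $k-1$, in the box-ordering~\eqref{eq:ordering-partition}.

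For the $k$-th variable, a terminal branch $\bfQ_k=\bfe_k-\bfe_J$ proposes $\phi_k=\phi_J+\epsilon_a$, i.e.\ a candidate hypercube $\hcube_{\text{new}}$ obtained from $\hcube_J=(z_1,\ldots,z_4)$ by incrementing the $a$-th coordinate. The core claim to verify is that the iterated residue at this pole is non-zero precisely when $\vec\rho\cup\{\hcube_{\text{new}}\}$ remains a solid partition, equivalently when for every $b\neq a$ with $z_b\geq 2$ the hypercube $\hcube^-_{ab}$ obtained from $\hcube_{\text{new}}$ by decrementing the $b$-th coordinate already lies in $\vec\rho$.

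The essential computation is the pole/zero balance of the $\phi_k$-dependent factors of $\mu^{\D8\tbar\D0}_k$. If $\hcube_{\text{new}}$ itself already belongs to $\vec\rho$ then the numerator factor $\sh(\phi_k-\phi_{I''})$ at the existing variable $\phi_{I''}$ kills the residue, exactly as in pure SYM. If instead $\hcube^-_{ab}\notin\vec\rho$ for some $b\neq a$ with $z_b\geq 2$, the Calabi--Yau constraint $\sum_{a\in\four}\epsilon_a=0$ rewrites $\epsilon_{a4}=-\epsilon_{bc}$ for $\{a,b,c\}=\{1,2,3\}$, so after pairing the indices $(I,J)$ and $(J,I)$ the numerator $\prod_{I\neq J}\sh(\phi_I-\phi_J-\epsilon_{14,24,34})$ effectively furnishes, for each unordered pair, a zero at $\phi_I-\phi_J=\pm\epsilon_{bc}$ for every $\{b,c\}\in\six$. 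Applied to $(\phi_k,\phi_{J''})$ with $\phi_{J''}$ at $\hcube_J$ decremented in direction $b$ (which is in $\vec\rho$ since $\hcube_J$ is), this yields $\phi_k-\phi_{J''}=\epsilon_{ab}$ and a numerator zero that cancels the simple pole from the $a$-direction. Conversely, when $\hcube^-_{ab}\in\vec\rho$, the variable at that hypercube contributes an extra denominator pole that collides with this numerator zero, leaving a finite non-vanishing residue---the direct four-dimensional analogue of the hook cancellation in Theorem~\ref{thm:pureSYMJKpoles}. A case-by-case inspection over $b\in\four\setminus\{a\}$, separating $z_b=1$ from $z_b\geq 2$, closes the induction.

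The main obstacle is the combinatorial bookkeeping: up to three transverse `forbidden' directions $b\neq a$ must be monitored simultaneously per proposed $\hcube_{\text{new}}$, and the relevant pole/zero interplay involves triples of variables rather than pairs. The cleanest organization, which I would expect to be deferred to section~\ref{sec:character-index}, is to package the $\phi_k$-dependent factors via the equivariant $K$-theoretic vector character $\mathbf{v}^{\D8}_{\text{inst.}}$, whose $\six$-symmetric structure under the Calabi--Yau constraint renders the cancellations manifest and reduces the residue to a direct application of the index functor $\mathbb{I}$.
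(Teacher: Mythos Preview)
Your proposal is correct and follows essentially the same inductive strategy as the paper: both invoke Lemmas~\ref{lem:JKantifund} and~\ref{lem:JKtree-structure}, reduce to analysing the last variable $\phi_k$ attached via $\phi_k=\phi_J+\epsilon_a$, and identify the solid-partition condition with the balance of numerator zeros (from $\hcube_J-\delta_b$) against denominator poles (from $\hcube_{\text{new}}-\delta_b$).

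The paper's execution differs only in organisation. Rather than iterating over transverse directions $b\neq a$, the paper fixes $a=4$ by quadrality and stratifies by how many of $x_1,x_2,x_3$ exceed $1$, writing out the relevant $\sh$-factors explicitly in each of four cases. This makes the zero/pole count concrete (one, two, or three numerator zeros against the corresponding support poles) and avoids having to argue that no \emph{other} boxes of $\rho$ interfere with the balance---something your sketch leaves implicit. Your closing suggestion to repackage via $\mathbf{v}^{\D8}_{\text{inst.}}$ and the index functor $\mathbb{I}$ is not how the paper handles this theorem; the pole classification is done by the direct combinatorial $\sh$-factor analysis, and the character formalism enters only afterwards (section~\ref{sec:magnificentindex}) to evaluate the residues once the poles are known.
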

\begin{proof}
    We show this by induction. For the $k=1$ case, the pole will simply come from $\phi_{1}-\mathfrak{a}_{\alpha}=0$ for some $\alpha$ and it corresponds to configuration with one nonempty solid partition with only one box. Assume that the $k$ independent hyperplane equations determining the poles are $\{\bfQ_{1},\ldots \bfQ_{k}\}$ and they form a tree structure as Lemma~\ref{lem:JKtree-structure}. By the induction hypothesis, the $k-1$ poles are determined by the solid partition rule. We need to show that under these assumptions, the $k$-th pole will also obey the solid partition rule.

    When the trees have no branches, all the poles will be determined by $\{\phi_{I}-\mathfrak{a}_{\alpha}=0\}$. This is a situation when we have $k$ nonempty solid partitions where each of them has only one box. This case indeed obeys the solid partition rule.
    
    Let us consider the case when $\bfQ_{k}=\bfe_{I}-\bfe_{J}$ for some $I,J$ is a branch at the end of tree, and when the corresponding hyperplane determine the pole of $\phi_{I}$. Using the Weyl invariance, we can assume that $\{\phi_{1},\ldots,\phi_{k-1}\}$ are determined by $\{\bfQ_{1},\ldots, \bfQ_{k-1}\}$ and $\phi_{k}$ is the last pole determined by $\bfQ_{k}=\bfe_{k}-\bfe_{J}$ for some $J$. The induction hypothesis tells us that $\{\phi_{1},\ldots,\phi_{k-1}\}$ are classified by solid partitions $\vec{\rho}$ with $k-1$ boxes. Let us show that the possible choices of $\bfQ_{k}$ correspond to the possible choices to a add box to $\vec{\rho}$ giving ${\vec{\rho}\,'}=\vec{\rho}+\hcube$.

    We assume that $\bfe_{k}$ belongs to a tree whose root vertex corresponds to the pole at $\mathfrak{a}_{\ast}$ and denote the corresponding solid partitions $\rho_{\ast}, \rho'_{\ast}=\rho+\hcube$. When evaluating the pole $\phi_{k}$, poles at $\{\phi_{1},\ldots,\phi_{k-1}\}=\vec{\rho}$ are inserted and the related factor is 
    \bea
    \frac{\sh(\phi_{k}-\mathfrak{b}_{\ast})}{\sh(\phi_{k}-\mathfrak{a}_{\ast})}\prod_{\shcube\in\rho_{\ast}}\frac{\sh(\phi_{k}-c_{\four}(\hcube))\sh(\phi_{k}-c_{\four}(\hcube)-\epsilon_{14,24,34})}{\sh(\phi_{k}-c_{\four}(\hcube)-\epsilon_{1,2,3,4})}\\
    \times \prod_{\shcube\in\rho_{\ast}}\frac{\sh(c_{\four}(\hcube)-\phi_{k})\sh(c_{\four}(\hcube)-\phi_{k}-\epsilon_{14,24,34})}{\sh(c_{\four}(\hcube)-\phi_{k}-\epsilon_{1,2,3,4})},
    \eea
    where $c_{\four}(\hcube)=\mathfrak{a}_{\ast}+\sum_{a\in\four}(x_{a}-1)\epsilon_{a}$. Note that other factors appearing from \eqref{eq:M4contourJK} are regular and will not affect the pole structure and thus we omit them.
 
    Since $\bfQ_{k}=\bfe_{k}-\bfe_{J}$ determines the last pole $\phi_{k}$, the pole we are interested takes the form as $\phi_{k}=\phi_{J}+\epsilon_{1,2,3,4}$. For later use, we denote the position of the boxes corresponding to $\phi_{k},\phi_{J}$ as $x'=(x'_{a})_{a\in\four},\,x=(x_{a})_{a\in\four}$, respectively. We also note that the factor above is invariant under the quadrality of $\epsilon_{a}$ under the condition $\sum_{a\in\four}\epsilon_{a}=0$, because of
    \bea
       \prod_{\shcube\in\rho_{\ast}}\sh(\phi_{k}-c_{\four}(\hcube)-\epsilon_{14,24,34})\sh(c_{\four}(\hcube)-\phi_{k}-\epsilon_{14,24,34}) \propto  \prod_{\shcube\in\rho_{\ast}}\sh(\phi_{k}-c_{\four}(\hcube)-\epsilon_{14,24,34,12,23,13}).
    \eea
    
    Under this situation, we have the following properties.
    \begin{itemize}[topsep=1ex,itemsep=-0.5ex,partopsep=1ex,parsep=1ex]
        \item $\phi_{J}$ needs to be in the boundary\footnote{This means that $\phi_{J}+\epsilon_{a}$ for some $a$ will not be in the solid partition.} and $\phi_{k}$ is not included in the solid partition. If $\phi_{k}$ is already in the solid partition, then $\phi_{k}=\phi_{J}+\epsilon_{c}=\phi_{J'}$ for some $J'<k$. Namely, we have $x+\delta_{c}\in\rho_{\ast}$, where $\delta_{c}$ is a four dimensional vector which has $1$ at its $c$-component and the other components are zero\footnote{For example, $\delta_{4}=(0,0,0,1)$.}. The pole here is canceled by the numerator $\prod_{\shcube\in\rho_{\ast}}\sh(\phi_{k}-c_{\four}(\hcube))^{2}$:
        \bea
        \frac{\sh(\phi_{k}-c_{\four}(\mathfrak{a}_{\ast},x+\delta_{c}))\sh(c_{\four}(\mathfrak{a}_{\ast},x+\delta_{c})-\phi_{k})}{\sh(\phi_{k}-c_{\four}(\mathfrak{a}_{\ast},x)-\epsilon_{c})}\rightarrow 0.
        \eea

        \item Using the quadrality symmetry, let us fix the pole structure as $\phi_{k}=\phi_{J}+\epsilon_{4}$ and $(x'_{1},x'_{2},x'_{3},x'_{4})=(x_{1},x_{2},x_{3},x_{4}+1)$. When $x\in\rho_{\ast}$ belongs to the boundary and $x_{1}=x_{2}=x_{3}=1,\,x_{4}>1$, no pole cancellation occurs and we have a single pole. Namely, we can always add a box to the configuration. This situation is illustrated in the $(1,3)$-type description as
        \bea
        \adjustbox{valign=c}{\includegraphics[width=8cm]{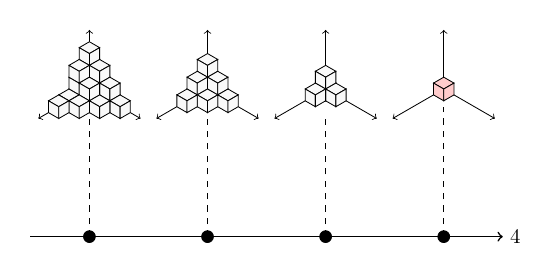}}
        \eea
        where the red box corresponds to the pole $\phi_{k}=c_{\four}(\mathfrak{a},(1,1,1,x_{4}))+\epsilon_{4}$. 
        \item When $x\in\rho_{\ast}$ belongs to the boundary and $x_{1}=x_{2}=1,x_{3}>1,x_{4}>1$ (other situations are obtained by the triality between $x_{1,2,3}$), the induction hypothesis tells us that $\rho_{\ast}$ is a solid partition and thus we have $(1,1,x_{3}-1,x_{4})\in\rho_{\ast}$. Note that this means that
        \bea
    \phi_{k}=c_{\four}(\mathfrak{a}_{\ast},x)+\epsilon_{4}=c_{\four}(\mathfrak{a}_{\ast},(1,1,x_{3}-1,x_{4}))+\epsilon_{3}+\epsilon_{4}.
        \eea
        We have two possibilities of the position of $(x_{3},x_{4}+1)$.
        \begin{itemize}
            \item If $(1,1,x_{3}-1,x_{4}+1)\notin\rho_{\ast}$, i.e. $\rho'_{\ast}$ is not a solid partition, then the pole vanishes and the JK residue is zero.
            \item If $(1,1,x_{3}-1,x_{4}+1)\in\rho_{\ast}$, i.e. $\rho'_{\ast}$ is a solid partition, then there is a single pole and the JK residue is non-zero.
        \end{itemize}
        This comes from
        \bea
        &\frac{\sh(\phi_{k}-c_{\four}(\mathfrak{a}_{\ast},(1,1,x_{3}-1,x_{4}))-\epsilon_{34})}{\sh(\phi_{k}-c_{\four}(\mathfrak{a}_{\ast},x)-\epsilon_{4})}\times \begin{dcases}
            1,\\ 
            \frac{1}{\sh(\phi_{k}-c_{\four}(\mathfrak{a}_{\ast},(1,1,x_{3}-1,x_{4}+1))-\epsilon_{3})} 
        \end{dcases}\\
        =&\begin{dcases}
        1,\quad (1,1,x_{3}-1,x_{4}+1)\notin\rho_{\ast}\\
        \frac{1}{\sh(\phi_{k}-c_{\four}(\mathfrak{a}_{\ast},x)-\epsilon_{4})},\quad (1,1,x_{3}-1,x_{4}+1)\in\rho_{\ast}
        \end{dcases}
        \eea
        where we used
        \bea
        c_{\four}(\mathfrak{a}_{\ast},x)+\epsilon_{4}=c_{\four}(\mathfrak{a}_{\ast},(1,1,x_{3}-1,x_{4}+1))+\epsilon_{3}.
        \eea
        This situation is illustrated as
        \bea
        \adjustbox{valign=c}{\includegraphics[width=8cm]{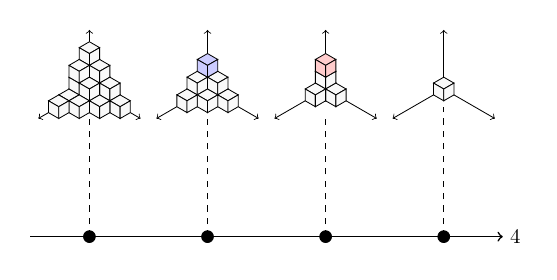}}
        \eea
        where the blue and red boxes correspond to the poles $\phi_{J}=c_{\four}(\mathfrak{a},(1,1,x_{3},x_{4}))$ and $\phi_{k}=c_{\four}(\mathfrak{a},(1,1,x_{3},x_{4}))+\epsilon_{4}$, respectively.

        \item When $x\in\rho_{\ast}$ belongs to the boundary and $x_{1}=1,x_{2,3,4}>1$ (other situations are obtained by the triality between $x_{1,2,3}$), the induction hypothesis gives $(1,x_{2}-1,x_{3},x_{4}), (1,x_{2},x_{3}-1,x_{4})\in\rho_{\ast}$, which means
        \bea
    \phi_{k}=c_{\four}(\mathfrak{a}_{\ast},x)+\epsilon_{4}=c_{\four}(\mathfrak{a}_{\ast},x-\delta_{2})+\epsilon_{24}=c_{\four}(\mathfrak{a},x-\delta_{3})+\epsilon_{34}.
        \eea
        The partition $\rho'_{\ast}$ is a solid partition if and only if
        \bea\label{eq:M4case3cond}
        (1,x_{2}-1,x_{3},x_{4}+1),\quad (1,x_{2},x_{3}-1,x_{4}+1)\in\rho_{\ast}.
        \eea
        Unless this situation, the JK residue will be zero:
        \bea
        &\frac{\sh(\phi_{k}-c_{\four}(\mathfrak{a}_{\ast},x-\delta_{2})-\epsilon_{24})\sh(\phi_{k}-c_{\four}(\mathfrak{a}_{\ast},x-\delta_{3})-\epsilon_{34})}{\sh(\phi_{k}-c_{\four}(\mathfrak{a}_{\ast},x)-\epsilon_{4})}\times \textcolor{blue}{\frac{1}{\sh(\phi_{k}-c_{\four}(\mathfrak{a}_{\ast},x-\delta_{2}+\delta_{4})-\epsilon_{2})}}\\
        &\times\textcolor{blue}{\frac{1}{\sh(\phi_{k}-c_{\four}(\mathfrak{a}_{\ast},x-\delta_{3}+\delta_{4})-\epsilon_{3})}}=\frac{1}{\sh(\phi_{k}-c_{\four}(\mathfrak{a}_{\ast},x)-\epsilon_{4})}
        \eea
        because the two numerators are canceled only if we have the blue terms which come from the condition \eqref{eq:M4case3cond}. This situation is illustrated as
        \bea
        \adjustbox{valign=c}{\includegraphics[width=8cm]{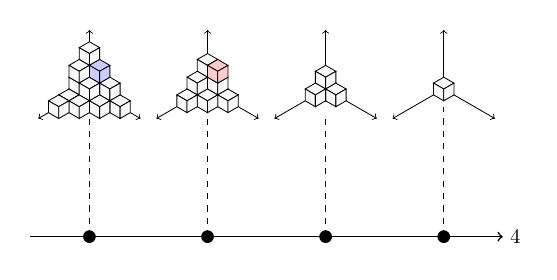}}
        \eea
         where the blue and red boxes correspond to the poles $\phi_{J}=c_{\four}(\mathfrak{a},x)$ and $\phi_{k}=c_{\four}(\mathfrak{a},x+\delta_{4})$, respectively.

        \item When $x\in\rho_{\ast}$ belongs to the boundary and $x_{1,2,3,4}>1$, the induction hypothesis gives
        \bea
        (x_{1}-1,x_{2},x_{3},x_{4}),\,\, (x_{1},x_{2}-1,x_{3},x_{4}),\,\,(x_{1},x_{2},x_{3}-1,x_{4})\in\rho_{\ast}
        \eea
        and we have
        \bea
        \phi_{k}=c_{\four}(\mathfrak{a}_{\ast},x)+\epsilon_{4}=c_{\four}(\mathfrak{a}_{\ast},x-\delta_{1})+\epsilon_{14}=c_{\four}(\mathfrak{a}_{\ast},x-\delta_{2})+\epsilon_{24}=c_{\four}(\mathfrak{a}_{\ast},x-\delta_{3})+\epsilon_{34}.
        \eea
        The partition $\rho'_{\ast}$ is a solid partition if and only if
        \bea\label{eq:M4case4cond}
        (x_{1}-1,x_{2},x_{3},x_{4}+1),\,\,(x_{1},x_{2}-1,x_{3},x_{4}+1),\,\,(x_{1},x_{2},x_{3}-1,x_{4}+1)\in\rho_{\ast}
        \eea
        and unless this situation, the JK residue will be zero:
        \bea
    &\frac{\prod\limits_{a=1}^{3}\sh(\phi_{k}-c_{\four}(\mathfrak{a}_{\ast},x-\delta_{a})-\epsilon_{a4})}{\sh(\phi_{k}-c_{\four}(\mathfrak{a}_{\ast},x)-\epsilon_{4}}\times \textcolor{blue}{\frac{1}{\prod\limits_{a=1}^{3}\sh(\phi_{k}-c_{\four}(\mathfrak{a}_{\ast},x-\delta_{a}+\delta_{4})-\epsilon_{a})}}=\frac{1}{\sh(\phi_{k}-c_{\four}(\mathfrak{a}_{\ast},x)-\epsilon_{4})}
        \eea
        because the blue terms that cancels the numerator only appear under the condition \eqref{eq:M4case4cond}. This situation is illustrated as 
        \bea
        \adjustbox{valign=c}{\includegraphics[width=8cm]{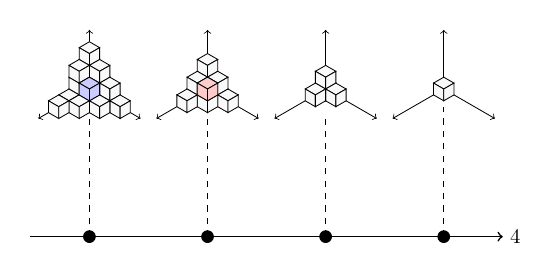}}
        \eea
         where the blue and red boxes correspond to the poles $\phi_{J}=c_{\four}(\mathfrak{a},x)$ and $\phi_{k}=c_{\four}(\mathfrak{a},x+\delta_{4})$, respectively.
    \end{itemize}
This concludes the inductive proof and the non-zero JK residues are indeed classified by $n$-tuples of solid partitions.

\end{proof}

Similar to the pure SYM case, from the proof of the classification, the JK-residue can be written as an iterative residue. For simplicity, let us focus on the $n=1$ case. Thm.~\ref{thm:M4JKpoles} claims that the poles giving the $k$-instanton sector are classified by a solid partition with $k$-boxes. The inductive process in $k-1\rightarrow k$ shows that after evaluating the residue at the level $k-1$, the position of the remaining pole at level $k$ is given by the possible positions to add a box to the solid partition. Moreover, each iterative process will obey the solid partition rule. Concretely, let $\{\phi_{1\ast},\ldots, \phi_{k\ast}\}$ be the sequence of the coordinates of the $k$-boxes in the solid partition ordered in the order they are stacked to the solid partition. Each subset $\{\phi_{1\ast},\ldots ,\phi_{i\ast}\}$ is a solid partition with $i$-boxes and the contour integral can then be evaluated as $\oint_{\phi_{k}=\phi_{k\ast}} d\phi_{k}\cdots \oint_{\phi_{1}=\phi_{1\ast}} d\phi_{1}$. Given a solid partition $\rho$, we may have multiple ways to stack boxes by keeping the solid partition condition for each step. However, the result does not depend on them. Thus, it is useful to fix one ordering to evaluate the contour integral and we choose the ordering given in \eqref{eq:ordering-partition}. Obviously, the ordering given there preserves the solid partition condition for each step. The $\U(1|1)$ magnificent four partition is then given as 
\bea
\mathcal{Z}^{\D8}_{\text{inst.}}=\sum_{\rho\in\mathcal{SP}}\mathfrak{q}^{|\rho|}\mathcal{Z}^{\D8}[\rho],\quad \mathcal{Z}^{\D8}[\rho]=\underset{\phi=\phi_{\rho}}{\Res}\mu^{\D8\tbar\D0}_{k}(\mathfrak{a}_{\alpha},\mathfrak{b}_{\alpha},\phi_{I})
\eea
where the iterated residue is defined as
\bea\label{eq:D8-iteratedresidue}
\underset{\phi=\phi_{\rho}}{\Res}\mu^{\D8\tbar\D0}_{k}(\mathfrak{a},\mathfrak{b},\phi_{I})\coloneqq\underset{\phi_{k}=\phi_{k\ast}}{\Res}\cdots \underset{\phi_{2}=\phi_{2\ast}}{\Res}\underset{\phi_{1}=\phi_{1\ast}}{\Res}\mu^{\D8\tbar\D0}_{k}(\mathfrak{a},\mathfrak{b},\phi_{I})
\eea
and $k=|\rho|$. The sequence $\{\phi_{1\ast},\ldots, \phi_{k\ast}\}$ is the coordinates of the boxes in the solid partition ordered in the ordering \eqref{eq:ordering-partition}. For higher rank cases, the discussion in the pure SYM case is straightforwardly applicable so we omit the discussion.

\subsection{Tetrahedron instantons and plane partitions}
Let us evaluate the contour integral \eqref{eq:tetracontourJK}. As mentioned before, this contour integral formula can be obtained from \eqref{eq:M4contourJK} by tuning the parameter $\mathfrak{b}_{\alpha}$ of the $\U(n|n)$ magnificent four setup. We first consider the rank $n=\sum_{a\in\four}n_{\bar{a}}$ magnificent four setup and then take the specialization 
\bea
\{\mathfrak{a}_{\alpha}\}_{\alpha=1}^{n}\rightarrow \{\mathfrak{a}_{\bar{a},\alpha}\}_{a\in\four,\alpha=1,\ldots,n_{\bar{a}}},\quad \{\mathfrak{b}_{\beta}\}_{\beta=1}^{n}\rightarrow \{\mathfrak{a}_{\bar{a},\alpha}+\epsilon_{a}\}_{a\in\four,\alpha=1,\ldots, n_{\bar{a}}}.
\eea

The classification of the poles for this setup was done in \cite{Pomoni:2021hkn}. Similar to the magnificent four setup, the poles come from the hyperplanes
\bea
H_{I,\bar{a},\alpha}=\{\phi_{I}-\mathfrak{a}_{\bar{a},\alpha}=0\},\quad H_{IJ,a=1,2,3,4}=\{\phi_{I}-\phi_{J}-\epsilon_{1,2,3,4}=0\}
\eea
and the charge vectors $\{\bfe_{I}\}, \{\bfe_{I}-\bfe_{J}\}$. Again, we do not have the anti-fundamental contributions $\{-\bfe_{I}\}$ and Lemma~\ref{lem:JKantifund} is automatically satisfied. The poles will then be classified by the oriented tree structure given in Lemma~\ref{lem:JKtree-structure}. The poles will actually be classified by plane partitions which are truncations of the solid partitions.
\begin{theorem}\label{thm:tetraJKpoles}
    The poles of \eqref{eq:tetracontourJK} are classified by plane partitions $\underline{\vec{\pi}}=(\vec{\pi}_{a})_{a\in\four}=(\pi^{(\alpha)}_{\bar{a}} )^{\alpha=1,\ldots,n_{\bar{a}}}_{a\in\four}$:
    \bea
    \{\phi_{I}\}_{I=1}^{k}\rightarrow \{c_{\bar{a}}(\cube)\mid \cube=(x_{i},x_{j},x_{k})\in \pi^{(\alpha)}_{\bar{a}},\,\,\,(i,j,k)\in\bar{a}\}_{a\in\four}^{\alpha=1,\ldots, n_{\bar{a}}}
    \eea
    where $c_{\bar{a}}(\cube)=\mathfrak{a}_{\bar{a},\alpha}+\sum_{i\in\bar{a}}(x_{i}-1)\epsilon_{i}$. Namely, the $n_{\bar{a}}$-tuples of plane partitions extend in the three-dimensional subspace $(x_{i},x_{j},x_{k})$, where $\bar{a}=ijk$. The total instanton number $k$ is identified with the number of boxes of the plane partitions $k=\sum_{a\in\four}\sum_{\alpha=1}^{n_{\bar{a}}}|\pi^{(\alpha)}_{\bar{a}}|$.
\end{theorem}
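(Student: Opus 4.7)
The plan is to derive this result as a specialization of Theorem~\ref{thm:M4JKpoles}, exploiting the observation made right after Prop.~\ref{prop:M4noantiDbrane} that the tetrahedron contour integral \eqref{eq:tetracontourJK} is obtained from the magnificent four contour integral \eqref{eq:M4contourJK} by setting
\[
\{\mathfrak{b}_{\beta}\}_{\beta=1}^{n}=\{\mathfrak{a}_{\bar{a},\alpha}+\epsilon_{a}\}_{a\in\four,\,\alpha=1,\ldots,n_{\bar{a}}}\,,\qquad n=\sum_{a\in\four}n_{\bar{a}},
\]
with the D8 Coulomb parameters relabeled as $\{\mathfrak{a}_{\bar{a},\alpha}\}$. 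The hyperplane arrangement is unchanged by this specialization, so the JK tree analysis of section~\ref{sec:pureSYM_JK} applies verbatim: poles still arise from trees built out of charges $\{\bfe_{I}\}$ and $\{\bfe_{I}-\bfe_{J}\}$, Lemma~\ref{lem:JKantifund} is again automatic, and Theorem~\ref{thm:M4JKpoles} tells us that candidate poles are parametrized by a collection of solid partitions $\{\rho^{(\bar{a},\alpha)}\}$, one rooted at each $\mathfrak{a}_{\bar{a},\alpha}$.

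The main step is to show that the specialization kills all solid partitions that extend in the ``forbidden'' direction. For the stack labeled by $\bar{a}$, the Fermi superfield $\Lambda_{\bar{a}}$ contributes the numerator factor $\prod_{I=1}^{k}\prod_{\alpha=1}^{n_{\bar{a}}}\sh(\phi_{I}-\mathfrak{a}_{\bar{a},\alpha}-\epsilon_{a})$. I will run the iterative residue \eqref{eq:D8-iteratedresidue} on a would-be solid partition $\rho^{(\bar{a},\alpha)}$ rooted at $\mathfrak{a}_{\bar{a},\alpha}$, stacking boxes in the ordering~\eqref{eq:ordering-partition}. The very first attempt to place a box at position $\hcube=(1,1,1,1)+\delta_{a}$, i.e.\ at $\phi=\mathfrak{a}_{\bar{a},\alpha}+\epsilon_{a}$, collides with the corresponding zero in the numerator, and the induction in the proof of Theorem~\ref{thm:M4JKpoles} shows that the local factor behaves like a simple pole cancelled by a simple zero, hence the residue vanishes. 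Consequently the only surviving configurations are those solid partitions with $x_{a}=1$ for every box, which are exactly the plane partitions $\pi_{\bar{a}}^{(\alpha)}\in\mathcal{PP}_{a}$ extending in the three directions complementary to $a$.

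The converse direction is immediate: for any configuration $\underline{\vec{\pi}}=(\pi^{(\alpha)}_{\bar{a}})$ of such plane partitions, the set of coordinates $\{c_{\bar{a}}(\cube)\}$ is precisely the restriction to $x_{a}=1$ of a solid partition that gives a non-zero residue in the magnificent four computation; the specialization does not introduce any new zero or cancellation at these coordinates because the unwanted boxes $\mathfrak{a}_{\bar{a},\alpha}+\epsilon_{a}$ are not among the poles being evaluated. Therefore the iterated residue reproduces the tetrahedron Nekrasov factor on $\pi^{(\alpha)}_{\bar{a}}$, and the total partition function is the sum over $n_{\bar{a}}$-tuples of plane partitions as asserted.

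The main obstacle I anticipate is bookkeeping at the boundary boxes where two or more stacks could, in principle, produce overlapping contributions through the mixed factors $\mathscr{S}_{12}$-type products in \eqref{eq:M4contourJK}. I will need to verify carefully that the specialization $\mathfrak{b}_{\beta}=\mathfrak{a}_{\bar{a},\alpha}+\epsilon_{a}$ does not accidentally create new poles (because of coincidences between the D6--D6 bifundamental and the tuned anti-D8 parameters) and that the iterative residue is still well-defined in the order~\eqref{eq:ordering-partition}. Once this is checked for a single stack, the multi-stack statement follows immediately because the tuning is performed independently at each pair $(a,\alpha)$, and the residual factors simply assemble into the tetrahedron bifundamental contributions that will be exploited in the subsequent chapter.
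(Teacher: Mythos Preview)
Your approach is correct and genuinely different from the paper's. The paper reruns the induction of Theorem~\ref{thm:M4JKpoles} from scratch in three dimensions: it assumes the first $k-1$ poles form a plane partition in $\bar{a}$, then case-splits on $\phi_k=\phi_J+\epsilon_b$ for $b\in\bar a$ (showing the plane-partition rule) and separately shows the pole $\phi_k=\phi_J+\epsilon_a$ is always cancelled by the numerator factors already present at that stage. You instead take Theorem~\ref{thm:M4JKpoles} as a black box, observe that the only new ingredient after specialization is the Fermi numerator $\prod_I\sh(\phi_I-\mathfrak a_{\bar a,\alpha}-\epsilon_a)$, and use it to kill the single box at $(1,1,1,1)+\delta_a$; since any solid partition with $x_a\ge 2$ must contain that box, the truncation to $\mathcal{PP}_a$ follows immediately. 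Your route is shorter and makes the M4-to-tetrahedron relation transparent at the level of pole classification; the paper's route is self-contained and does not require appealing to the four-dimensional result.

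Two small points. First, the sentence ``Consequently the only surviving configurations are those solid partitions with $x_a=1$ for every box'' is the crux and deserves one explicit line: either invoke the solid-partition condition directly (any box with $x_a\ge 2$ forces $(1,1,1,1)+\delta_a\in\rho$, at which one $\phi_I$ hits the Fermi zero), or note that in the ordering~\eqref{eq:ordering-partition} the box $(1,1,1,1)+\delta_a$ is the first one with $x_a\ge 2$ for every $a$, so the iterated residue dies there. Second, your anticipated obstacle is not one: the parameters $\mathfrak b_\beta$ sit only in numerators of~\eqref{eq:M4contourJK}, so the specialization cannot create new poles, and for generic Coulomb moduli no cross-stack coincidences arise. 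The paper in fact uses exactly your specialization logic later (around Theorem~\ref{thm:D6signruleJK}) to relate the residue formulas, just not for the pole classification itself.
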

\begin{proof}
    The proof is done similarly as the magnificent four setup and we show the claim by induction. For $k=1$, the pole will come from $\phi_{1}-\mathfrak{a}_{\bar{a},\alpha}=0$ and it corresponds to a configuration with one nonempty plane partition with only one box.
    
    For $k>1$, we assume that the $k$ independent hyperplane equations come from $\{\bfQ_{1},\ldots, \bfQ_{k}\}$ and its associated tree structure. From the induction hypothesis, the $k-1$ poles are determined by the plane partition rule. When the trees have no branches, all the poles will come from $\{\phi_{I}-\mathfrak{a}_{\bar{a},\alpha}=0\}$ and we have $k$ nonempty plane partitions with one box. 

   The generic case is when we have a branch at the end of tree. Using the Weyl invariance, we can assume that $\{\phi_{1},\ldots,\phi_{k-1}\}$ are determined by $\{\bfQ_{1},\ldots,\bfQ_{k}\}$ and $\phi_{k}$ is the last pole determined by $\bfQ_{k}=\bfe_{k}-\bfe_{J}$ for some $J$. The induction hypothesis shows that $\{\phi_{1},\ldots,\phi_{k-1}\}$ are classified by plane partitions $\underline{\vec{\pi}}$ with $k-1$ boxes. We show that the possible choices of $\bfQ_{k}$ correspond to the possible choices to add a box to $\underline{\vec{\pi}}$ giving $\underline{\vec{\pi}}'=\underline{\vec{\pi}}+\cube$. 

   We can assume that $\bfe_{k}$ belongs to a tree whose root vertex comes from $\mathfrak{a}_{\ast}=\mathfrak{a}_{\bar{4},\alpha}$ for some $\alpha$ and denote the corresponding plane partitions $\pi_{\ast},\pi_{\ast}'=\pi_{\ast}+\cube$. In this case, the plane partition will span the $123$-subspace and the coordinates will be assigned using $\epsilon_{1,2,3}$. When the root vertex comes from other $\{\mathfrak{a}_{\bar{a},\alpha}\}_{a=1,2,3}$, the plane partition spans different three-dimensional subspace of $1234$ and we can simply use the quadrality symmetry. At the time we evaluate $\phi_{k}$, poles $\{\phi_{1},\ldots,\phi_{k-1}\}=\underline{\vec{\pi}}$ will be inserted and the related factor only comes from contribution of $\pi_{\ast}$ and other contributions will be regular:
    \bea
    \frac{\sh(\phi_{k}-\mathfrak{a}_{\ast}-\epsilon_{4})}{\sh(\phi_{k}-\mathfrak{a}_{\ast})}\prod_{\scube\in\rho_{\ast}}\frac{\sh(\phi_{k}-c_{\bar{4}}(\cube))\sh(\phi_{k}-c_{\bar{4}}(\cube)-\epsilon_{14,24,34})}{\sh(\phi_{k}-c_{\bar{4}}(\cube)-\epsilon_{1,2,3,4})}\\
    \times \prod_{\scube\in\rho_{\ast}}\frac{\sh(c_{\bar{4}}(\cube)-\phi_{k})\sh(c_{\bar{4}}(\cube)-\phi_{k}-\epsilon_{14,24,34})}{\sh(c_{\bar{4}}(\cube)-\phi_{k}-\epsilon_{1,2,3,4})},
    \eea
    where $c_{\bar{4}}(\cube)=\mathfrak{a}_{\ast}+\sum_{i=1}^{3}(x_{i}-1)\epsilon_{i}$. Using $\sum_{a\in\four}\epsilon_{a}=0$, this can be rewritten as
    \bea
    \frac{\sh(\phi_{k}-\mathfrak{a}_{\ast}+\epsilon_{123})}{\sh(\phi_{k}-\mathfrak{a}_{\ast})}\prod_{\scube\in\rho_{\ast}}\frac{\sh(\phi_{k}-c_{\bar{4}}(\cube))\sh(\phi_{k}-c_{\bar{4}}(\cube)+\epsilon_{12,23,31})}{\sh(\phi_{k}-c_{\bar{4}}(\cube)-\epsilon_{1,2,3})\sh(\phi_{k}-c_{\bar{4}}(\cube)+\epsilon_{123})}\\
    \times \prod_{\scube\in\rho_{\ast}}\frac{\sh(c_{\bar{4}}(\cube)-\phi_{k})\sh(c_{\bar{4}}(\cube)-\phi_{k}+\epsilon_{12,23,31})}{\sh(c_{\bar{4}}(\cube)-\phi_{k}-\epsilon_{1,2,3})\sh(c_{\bar{4}}(\cube)-\phi_{k}+\epsilon_{123})}.
    \eea
    The difference with the magnificent four setup is that the numerator $\sh(\phi_{k}-\mathfrak{a}_{\ast}-\epsilon_{4})$ gives an extra zero and some poles in the magnificent four case will be canceled. 

    The charge vector $\bfQ_{k}=\bfe_{k}-\bfe_{J}$ determines the pole as $\phi_{k}-\phi_{J}=+\epsilon_{1,2,3,4}$. For later use, we denote the position of the boxes $\phi_{k},\phi_{J}$ as $x'=(x'_{1},x'_{2},x'_{3}),\,\,x=(x_{1},x_{2},x_{3})$. The potential poles are then $x'=x+(1,0,0),x+(0,1,0),x+(0,0,1),x+(-1,-1,-1)$ where we used $\sum_{a\in\four}\epsilon_{a}=0$. Using the triality symmetry between $\epsilon_{1}\leftrightarrow \epsilon_{2}\leftrightarrow \epsilon_{3}$, it is enough to only consider the potential poles coming from $x'=x+(0,0,1),x+(-1,-1,-1)$. 
    
    We first consider the situation when the poles comes from $x'=x+(0,0,1)$.
    \begin{itemize}[topsep=1ex,itemsep=-0.5ex,partopsep=1ex,parsep=1ex]
        \item $\phi_{k}$ can not already be included in the plane partition. If this is so, then we have $\phi_{k}=\phi_{J}+\epsilon_{3}=\phi_{J'}$ for some $J'<k$, which gives
        \bea
        \frac{\sh(\phi_{k}-c_{\bar{4}}(\mathfrak{a}_{\ast},x+(0,0,1)))\sh(c_{\bar{4}}(\mathfrak{a}_{\ast},x+(0,0,1))-\phi_{k})}{\sh(\phi_{k}-c_{\bar{4}}(\mathfrak{a}_{\ast},x)-\epsilon_{3})}\rightarrow 0.
        \eea
        \item When $x\in\pi_{\ast}$ belongs to the boundary of the plane partition and $x_{1}=x_{2}=1,\,\,x_{3}>1$, no pole cancellation occurs and we have a single pole. This situation is illustrated as
        \bea
    \adjustbox{valign=c}{\includegraphics[width=4cm]{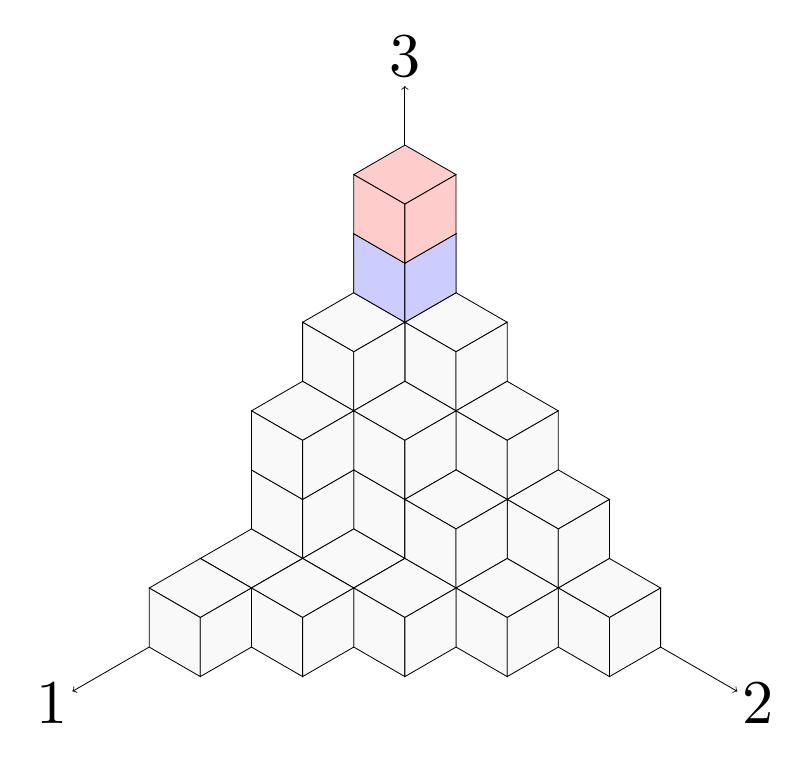}}
        \eea
        where the blue and red boxes correspond to the poles $\phi_{J},\phi_{k}$, respectively.
        \item When $x\in\pi_{\ast}$ belongs to the boundary of the plane partition and $x_{1}=1,\,\,x_{2},x_{3}>1$ (the other situation can be obtained by switching $x_{1}\leftrightarrow x_{2}$), we have $(1,x_{2}-1,x_{3})\in\pi_{\ast}$. The partition $\pi'_{\ast}$ is a plane partition if and only if $(1,x_{2}-1,x_{3}+1)\in\pi_{\ast}$:
        \bea
        \frac{\sh(c_{\bar{4}}(\mathfrak{a}_{\ast},x-(0,1,0))-\phi_{k}+\epsilon_{23})}{\sh(\phi_{k}-c_{\bar{4}}(\mathfrak{a}_{\ast},x)-\epsilon_{3})}\textcolor{blue}{\frac{1}{\sh(\phi_{k}-c_{\bar{4}}(\mathfrak{a}_{\ast},x+(0,-1,1))-\epsilon_{2})}}=-\frac{1}{\sh(\phi_{k}-c_{\bar{4}}(\mathfrak{a}_{\ast},x)-\epsilon_{3})}
        \eea
        where we used
        \bea
        \phi_{k}=c_{\bar{4}}(\mathfrak{a}_{\ast},x)+\epsilon_{3}=c_{\bar{4}}(\mathfrak{a}_{\ast},x-(0,1,0))+\epsilon_{23}=c_{\bar{4}}(\mathfrak{a}_{\ast},x+(0,-1,1))+\epsilon_{2}.
        \eea
        Since the blue term only appears when $\pi'_{\ast}$ is a plane partition, the JK residue is non-zero only when this condition is satisfied. This situation is illustrated as
        \bea
        \adjustbox{valign=c}{\includegraphics[width=4cm]{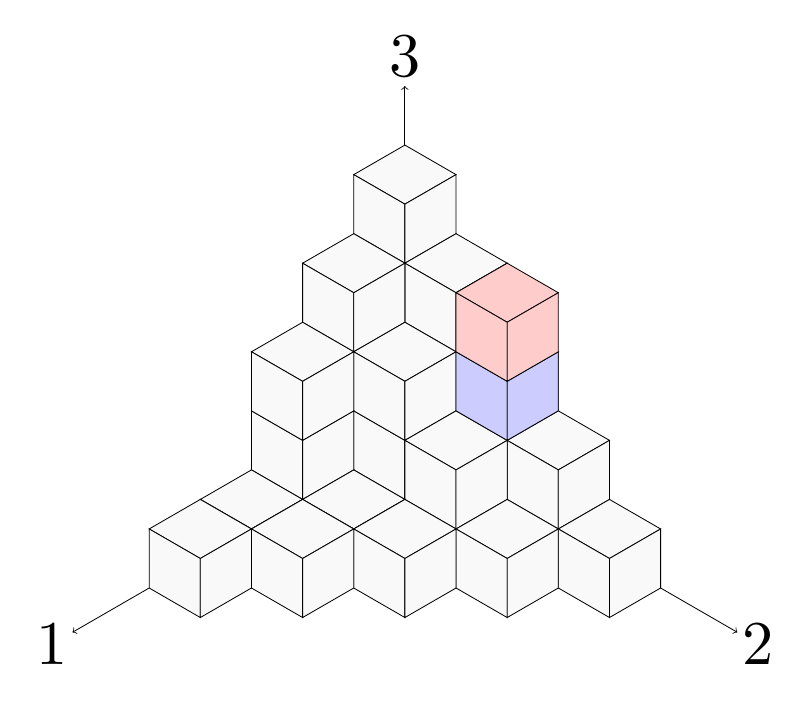}}
        \eea
        where the blue and red boxes correspond to the poles $\phi_{J},\phi_{k}$, respectively.

        \item When $x\in\pi_{\ast}$ belongs to the boundary of the plane partition and $x_{1,2,3}>1$, we have $(x_{1}-1,x_{2},x_{3}),(x_{1},x_{2}-1,x_{3})\in\pi_{\ast}$. The partition $\pi'_{\ast}$ is a plane partition if and only if
        \bea
        (x_{1}-1,x_{2},x_{3}+1),\,\,(x_{1},x_{2}-1,x_{3}+1)\in\pi_{\ast}.
        \eea
        Using 
        \bea
    \phi_{k}&=c_{\bar{4}}(\mathfrak{a}_{\ast},x)+\epsilon_{3}=c_{\bar{4}}(\mathfrak{a},x-(1,0,0))+\epsilon_{13}=c_{\bar{4}}(\mathfrak{a},x-(0,1,0))+\epsilon_{23}\\
    &=c_{\bar{4}}(\mathfrak{a}_{\ast},x+(-1,0,1))+\epsilon_{1}=c_{\bar{4}}(\mathfrak{a}_{\ast},x+(0,-1,1))+\epsilon_{2}
        \eea
        the related factor is
        \bea
        &\frac{\sh(c_{\bar{4}}(\mathfrak{a},x-(1,0,0))+\epsilon_{13}-\phi_{k})\sh(c_{\bar{4}}(\mathfrak{a},x-(0,1,0))+\epsilon_{23}-\phi_{k})}{\sh(\phi_{k}-c_{\bar{4}}(\mathfrak{a}_{\ast},x)-\epsilon_{3})}\\
        \times& \textcolor{blue}{\frac{1}{\sh(\phi_{k}-c_{\bar{4}}(\mathfrak{a}_{\ast},x+(-1,0,1))-\epsilon_{1})\sh(\phi_{k}-c_{\bar{4}}(\mathfrak{a}_{\ast},x+(0,-1,1))-\epsilon_{2})}}=\frac{1}{\sh(\phi_{k}-c_{\bar{4}}(\mathfrak{a}_{\ast},x)-\epsilon_{3})}.
        \eea
        The blue term only appears when $\pi_{\ast}'$ is a plane partition and gives a non-zero JK residue.
        \bea
        \adjustbox{valign=c}{\includegraphics[width=4cm]{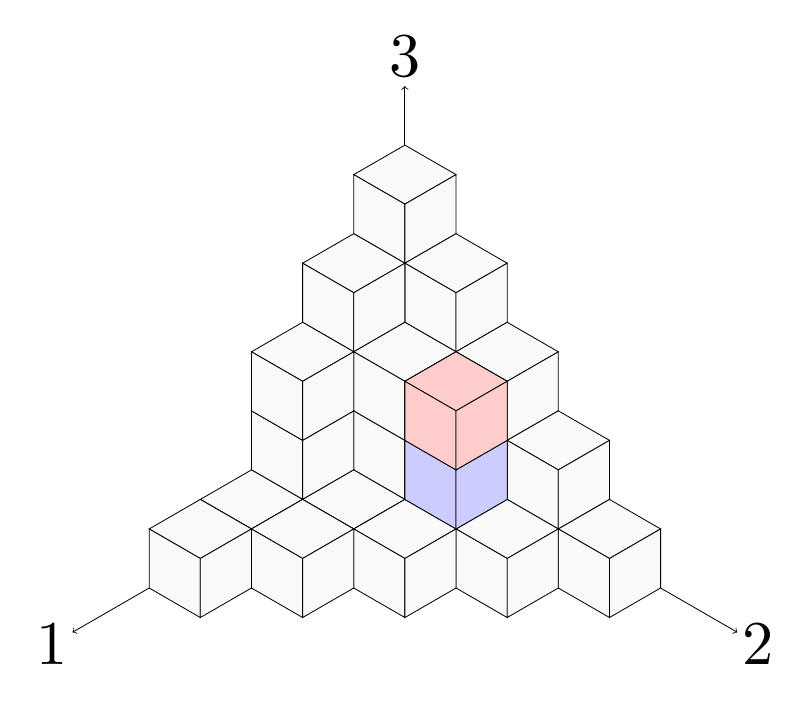}}
        \eea
        where the blue and red boxes correspond to the poles $\phi_{J},\phi_{k}$, respectively.
    \end{itemize}
    
Finally, let us then consider the case when the pole come from $\phi_{k}-\phi_{J}=\epsilon_{4}=-\epsilon_{123}$ $x'=x+(-1,-1,-1)$. The residue can be non-zero only if $(x_{1}-1,x_{2},x_{3})$ is not in $\pi_{\ast}$, since otherwise the numerator contain zeros from $\sh(\phi_{k}-c_{\bar{4}}(\mathfrak{a}_{\ast},x-(1,0,0))+\epsilon_{23})$. This also shows that $(x_{1},x_{2}-1,x_{3}),(x_{1},x_{2},x_{3}-1)$ are not in $\pi_{\ast}$. However, this contradicts with the assumption that $(x_{1},x_{2},x_{3})\in\pi_{\ast}$ and $\pi_{\ast}$ is a plane partition. Therefore, this contribution will always be canceled.

In summary, the poles giving non-zero JK residues are classified by a collection of plane partitions.
    
\end{proof}

Similar to the magnificent four case, the JK-residue can be written as an iterative residue. For simplicity, let us consider the case when $n_{\bar{4}}=1$ and $n_{\bar{a}}=0$ for $a\neq 4$. The discussion runs similarly and the conclusion is that we can choose a particular ordering for taking the residue and represent the JK-residue of a given plane partition by the iterative residue. Let $\{\phi_{1\ast},\cdots ,\phi_{k\ast}\}$ be the sequence of the coordinates of the $k$-boxes in the plane partition spanning the 123-directions. We can assume that the ordering is the one given in \eqref{eq:ordering-partition}. The iterative residue is then given by $\oint_{\phi_{k}=\phi_{k\ast}}\cdots \oint_{\phi_{1}=\phi_{1\ast}}$. The D6$_{123}$ $\U(1)$ partition function is then given by 
\bea
\mathcal{Z}^{\D6}_{\text{inst.}}=\sum_{\pi\in\mathcal{PP}_{4}}\mathfrak{q}^{|\pi|}\mathcal{Z}^{\D6}_{\bar{4}}[\pi],\quad \mathcal{Z}^{\D6}_{\bar{4}}[\pi]=\underset{\phi=\phi_{\pi}}{\Res}\mu_{k}^{\D6\tbar\D0}(\mathfrak{a},\phi_{I}),\quad k=|\pi|
\eea
where the iterated residue is given by 
\bea
\underset{\phi=\phi_{\pi}}{\Res}\mu_{k}^{\D6\tbar\D0}(\mathfrak{a},\phi_{I})=\underset{\phi_{k}=\phi_{k\ast}}{\Res}\cdots \underset{\phi_{1}=\phi_{1\ast}}{\Res}\mu_{k}^{\D6\tbar\D0}(\mathfrak{a},\phi_{I}).
\eea
The discussion for higher rank cases is straightforward so we omit it.

\subsection{Spiked instantons and Young diagrams}
Let us evaluate the contour integral formula given in Prop.~\ref{prop:spikedcontourJK}. If we specify to the case $n_{12}\neq 0,\,\,n_{A}=0\,(A\neq 12)$, then this partition function simply gives the instanton partition function of the 5d $\mathcal{N}=1$ $\U(n_{12})$ affine quiver gauge theory defined on $\mathbb{C}^{2}_{12}\times \mathbb{S}^{1}$ with adjoint mass $\epsilon_{3}$. Generalizing the discussion in section~\ref{sec:pureSYM_JK}, the poles of this contour integral formula actually are classified by Young diagrams. In the pure SYM case, the poles were labeled by Young diagrams spanning the $12$-direction whose coordinates are given by a linear combination of $\epsilon_{1,2}$, but in this case, we will have six types of Young diagrams spanning $A=ab\in\six$ whose coordinates come from $\epsilon_{a,b}$.
\begin{theorem}\label{thm:spikedJKpoles}
    The poles of \eqref{eq:spikedcontourJK} are classified by Young diagrams $\underline{\vec{\lambda}}=(\vec{\lambda}_{A})_{A\in\six}=(\lambda_{A}^{(\alpha)})^{\alpha=1,\ldots,n_{A}}_{A\in\six}$:
    \bea
    \{\phi_{I}\}_{I=1}^{k}\rightarrow \{c_{A}(\Bbox)\mid \Bbox=(x_{a},x_{b})\in \lambda_{A}^{(\alpha)},\quad A=ab\in\six\}
    \eea
    where $c_{ab}(\Bbox)=\mathfrak{a}_{A,\alpha}+(x_{a}-1)\epsilon_{a}+(x_{b}-1)\epsilon_{b}$. Namely, the $n_{A}$-tuples of Young diagrams extend in the two-dimensional space $(x_{a},x_{b})$ for $A=ab$. The total instanton number $k$ is identified with the number of boxes of the Young diagrams $k=\sum_{A\in\six}\sum_{\alpha=1}^{n_{A}}|\lambda_{A}^{(\alpha)}|$.
\end{theorem}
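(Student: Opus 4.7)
The plan is to mimic the inductive strategy used in Theorems~\ref{thm:pureSYMJKpoles}, \ref{thm:M4JKpoles}, and \ref{thm:tetraJKpoles}. Lemmas~\ref{lem:JKantifund} and \ref{lem:JKtree-structure} still apply, since the charge vectors in the denominator of $\mu_k^{\D4\tbar\D0}$ are again drawn from $\{\mathbf{e}_I\}$ (from the flavor factor $\sh(\phi_I-\mathfrak{a}_{A,\alpha})$) and $\{\mathbf{e}_I-\mathbf{e}_J\}$ (from the $\mathscr{S}$-like factors), while the $\sh(\mathfrak{a}_{A,\alpha}-\phi_I-\epsilon_A)$ poles carry charge $-\mathbf{e}_I$ and are killed by the reference vector $\bm{\eta}=(1,\dots,1)$. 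I will induct on $k$.

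Base case ($k=1$): the only admissible hyperplanes are $\phi_1=\mathfrak{a}_{A,\alpha}$ for some $A\in\six$ and $\alpha$, which I will interpret as a single box placed at $\mathfrak{a}_{A,\alpha}$ in a one-box Young diagram spanning $\mathbb{C}^2_A$. Inductive step: assuming that $\{\phi_1,\dots,\phi_{k-1}\}$ has been labelled by a collection of Young diagrams $\underline{\vec{\lambda}}$ satisfying the statement, I write the last branch as $\bfQ_k=\mathbf{e}_k-\mathbf{e}_J$, so that $\phi_k=\phi_J+\epsilon_c$ for some $c\in\four$. Let $\phi_J$ sit at position $(x_a,x_b)$ in some $\lambda_A^{(\alpha)}$ with $A=(ab)$. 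I must show two things: \textbf{(i)} $c\in\{a,b\}$ (the new box stays in $\mathbb{C}^2_A$), and \textbf{(ii)} if $c\in\{a,b\}$, then a non-zero residue occurs iff the extension preserves the Young-diagram condition in the $(a,b)$-plane.

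Claim (ii) is essentially the pure-SYM argument of Theorem~\ref{thm:pureSYMJKpoles} transported into the $(a,b)$-plane: the relevant factors $\sh(\phi_k-\phi_I-\epsilon_{a,b})$ in the denominator and $\sh(\phi_k-\phi_I)$ in the numerator conspire to cancel all candidate poles except those for which the new box is concave-admissible. Claim (i) is the genuinely new ingredient. The idea is that if $c\in\bar A=\{d,e\}$, then the pole $\phi_k=\phi_J+\epsilon_c$ is always cancelled by a zero coming either from the flavor numerator $\sh(\phi_k-\mathfrak{a}_{A,\alpha}-\epsilon_b)\sh(\mathfrak{a}_{A,\alpha}-\phi_k+\epsilon_a)$ attached to the same brane, or from a previously stacked box inside $\lambda_A^{(\alpha)}$ together with the $\sh(\phi_k-\phi_I-\epsilon_{14,24,34})$ numerator. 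For $\phi_J$ at the root $(1,1)$ one checks this directly against the flavor factors; for interior and boundary positions, the induction hypothesis guarantees that the boxes needed to produce the cancelling zeros are present in $\lambda_A^{(\alpha)}$, by exactly the same combinatorial mechanism that killed the transverse extensions in Theorem~\ref{thm:tetraJKpoles}. As a consistency check, the single-brane specialization $n_A\neq 0,\ n_B=0\ (B\neq A)$ reduces to the 5d $\mathcal{N}=1^{\ast}$ $\widehat{A}_0$ quiver whose poles are classically labelled by Young diagrams in $\mathbb{C}^2_A$, reproducing the expected answer.

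The main obstacle I expect is the case analysis for (i): one must enumerate the possible positions of $\phi_J$ on the boundary of $\lambda_A^{(\alpha)}$ (corner, edge in the $a$-direction, edge in the $b$-direction, or interior-adjacent) and, for each of the two choices $c\in\bar A$, identify the correct factor providing the cancelling zero. I will organize this by exploiting the quadrality symmetry among $\epsilon_{1,2,3,4}$ to reduce the number of independent cases, then adapt the box-counting identities used in the proofs of Theorems~\ref{thm:M4JKpoles} and \ref{thm:tetraJKpoles}. Finally, mixing between Young diagrams attached to different branes is ruled out by the same mechanism, since any candidate $\bfQ_k=\mathbf{e}_k-\mathbf{e}_J$ crossing from $\lambda_A^{(\alpha)}$ into a neighborhood of a different brane $\mathfrak{a}_{B,\beta}$ produces a pole $\phi_k=\phi_J+\epsilon_c$ whose residue is cancelled by the flavor numerator attached to either the $A$- or the $B$-brane, unless the two labels coincide.
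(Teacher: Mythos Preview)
Your proposal is correct and follows essentially the same route as the paper. The paper also proceeds by induction, invokes Lemmas~\ref{lem:JKantifund} and~\ref{lem:JKtree-structure}, refers back to the pure SYM argument for the poles $\phi_k=\phi_J+\epsilon_{a,b}$ (your claim~(ii)), and then shows explicitly that the transverse poles $\phi_k=\phi_J+\epsilon_c$ with $c\in\bar A$ are always cancelled (your claim~(i)), treating the origin case via the flavor numerator $\sh(\phi_k-\mathfrak{a}_\ast-\epsilon_4)\sh(\mathfrak{a}_\ast-\phi_k+\epsilon_3)$ and the interior case via the adjoint numerators $\sh(\phi_k-c_{12}(\Bbox)-\epsilon_{14,24,34})$ coming from the neighboring boxes $(x-1,y)$ and $(x,y-1)$ guaranteed by the Young diagram condition.
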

Since the proof is essentially the same with the pure SYM case, we only give a sketch of the induction process and point out different points compared with the pure SYM case. Lemma~\ref{lem:JKantifund} and \ref{lem:JKtree-structure} tell us that the poles picked up come from an oriented tree of charge vectors taking elements from $\{\bfe_{I}\}$ and $\{\bfe_{I}-\bfe_{J}\}$. The situation $k=1$ is trivial and we assume that for rank $k-1$, the Young diagram rule is true. We consider the situation when the root vertex corresponds to the pole at $\mathfrak{a}_{\ast}=\mathfrak{a}_{12,\alpha}$ for some $\alpha$. In this case, the Young diagram will expand in the $12$-direction as the pure SYM case and other situations are obtained by changing the equivariant parameters $\epsilon_{1,2}\rightarrow \epsilon_{a,b}$ for $A=ab\in\six$. The factor determining the pole at $\phi_{k}$ is
\bea
\frac{\sh(\phi_{k}-\mathfrak{a}_{\ast}-\epsilon_{4})\sh(\mathfrak{a}_{\ast}-\phi_{k}+\epsilon_{3})}{\sh(\phi_{k}-\mathfrak{a})\sh(\mathfrak{a}_{\ast}-\phi_{k}-\epsilon_{12})}\prod_{\Abox\in\lambda_{\ast}}\frac{\sh(\phi_{k}-c_{12}(\Bbox))\sh(\phi_{k}-c_{12}(\Bbox)-\epsilon_{14,24,34})}{\sh(\phi_{k}-c_{12}(\Bbox)-\epsilon_{1,2,3,4})}\\
    \times \prod_{\Abox\in\lambda_{\ast}}\frac{\sh(c_{12}(\Bbox)-\phi_{k})\sh(c_{12}(\Bbox)-\phi_{k}-\epsilon_{14,24,34})}{\sh(c_{12}(\Bbox)-\phi_{k}-\epsilon_{1,2,3,4})}.
\eea
One can show that the poles coming from $\phi_{k}=c_{12}(\Bbox)+\epsilon_{1,2}$ will always give non-zero JK residues as the pure SYM. The existence of the poles coming from $\phi_{k}=c_{12}(\Bbox)+\epsilon_{3,4}$ is the different part. Let us show that the poles coming from them will always give zero JK residues. Assume that $\phi_{J}=c_{12}(\mathfrak{a}_{\ast},(x,y))$ and $\phi_{k}=\phi_{J}+\epsilon_{3,4}$. We then have the following two cases.
\begin{itemize}[topsep=1ex,itemsep=-0.5ex,partopsep=1ex,parsep=1ex]
    \item When $\phi_{J}=\mathfrak{a}_{\ast}$, the poles will be canceled by the numerators:
    \bea
    \frac{\sh(\phi_{k}-\mathfrak{a}_{\ast}-\epsilon_{4})\sh(\mathfrak{a}_{\ast}-\phi_{k}+\epsilon_{3})}{\sh(\phi_{k}-\phi_{J}-\epsilon_{3})\sh(\phi_{k}-\phi_{J}-\epsilon_{4})}=1
    \eea
    \item When $\phi_{J}=c_{12}(\mathfrak{a}_{\ast},(x,y))$ for $x,y>1$ and $(x,y)\in\lambda_{\ast}$. This condition means that $(x-1,y),(x,y-1)\in\lambda_{\ast}$, which gives
    \bea
    \frac{\sh(\phi_{k}-c_{12}(\mathfrak{a}_{\ast},(x-1,y))-\epsilon_{14})\sh(c_{12}(\mathfrak{a}_{\ast},(x,y-1))-\phi_{k}-\epsilon_{14})}{\sh(\phi_{k}-\phi_{J}-\epsilon_{3})\sh(\phi_{k}-\phi_{J}-\epsilon_{3})}=-1
    \eea
    where we used
    \bea
    &c_{12}(\mathfrak{a}_{\ast},(x-1,y))+\epsilon_{14}=c_{12}(\mathfrak{a}_{\ast},(x,y))+\epsilon_{4},\\
    &c_{12}(\mathfrak{a}_{\ast},(x-1,y))-\epsilon_{14}=c_{12}(\mathfrak{a}_{\ast},(x,y))-\epsilon_{124}=c_{12}(\mathfrak{a}_{\ast},(x,y))+\epsilon_{3}.
    \eea
\end{itemize}
Therefore, the poles coming from $\phi_{k}=c_{12}(\Bbox)+\epsilon_{3,4}$ always disappear and the poles are classified by Young diagrams.

The iterative residue is given similarly as given in \eqref{eq:pureSYM-iterativeresidue} and in particlar for the $\U(1)$ D4$_{12}$ partition function, we have
\bea
\mathcal{Z}^{\D4}_{\text{inst.}}=\sum_{\lambda}\mathfrak{q}^{|\lambda|}\mathcal{Z}^{\D4}_{12}[\lambda],\quad \mathcal{Z}_{12}^{\D4}[\lambda]=\underset{\phi=\phi_{\lambda}}{\Res}\mu_{k}^{\D4\tbar\D0}(\mathfrak{a},\phi_{I}).
\eea
Explicitly we have
\bea
\mathcal{Z}_{12}^{\D4}[\lambda]&=\prod_{\Abox\in\lambda}\frac{\sh(\epsilon_{3}+(\ell_{\lambda}(\Bbox)+1)\epsilon_{1}-\epsilon_{2}a_{\lambda}(\Bbox))\sh(\epsilon_{3}-\epsilon_{1}\ell_{\lambda}(\Bbox)+\epsilon_{2}(a_{\lambda}(\Bbox)+1))}{\sh((\ell_{\lambda}(\Bbox)+1)\epsilon_{1}-\epsilon_{2}a_{\lambda}(\Bbox))\sh(-\epsilon_{1}\ell_{\lambda}(\Bbox)+\epsilon_{2}(a_{\lambda}(\Bbox)+1))}\\
&=\prod_{\Abox\in\lambda}\frac{\sh(\epsilon_{3}+(\ell_{\lambda}(\Bbox)+1)\epsilon_{1}-\epsilon_{2}a_{\lambda}(\Bbox))\sh(\epsilon_{4}+(\ell_{\lambda}(\Bbox)+1)\epsilon_{1}-\epsilon_{2}a_{\lambda}(\Bbox))}{\sh((\ell_{\lambda}(\Bbox)+1)\epsilon_{1}-\epsilon_{2}a_{\lambda}(\Bbox))\sh(\epsilon_{3}+\epsilon_{4}+(\ell_{\lambda}(\Bbox)+1)\epsilon_{1}-\epsilon_{2}a_{\lambda}(\Bbox))}.
\eea

\section{Description from equivariant index formalism}\label{sec:equiv-index}


The JK-residue for each multi-dimensional partition is explicitly derived in this section. We first introduce rational functions which we call structure functions and then rewrite the contour integral formulas using them. Following what we did in Chap.~\ref{chap:ADHM-localization}, we relate the JK-residues with characters and the index of them. Properties of these partition functions are also discussed.
\subsection{Structure functions}
Before moving on to the evaluation of the gauge origami partition function, let us introduce extra notations and definitions that frequently will be used from this chapter.

\paragraph{Data}
We denote the four complex coordinates of $\mathbb{C}^4$ as $z_{i}\,\,(i=1,2,3,4)$. The $\Omega$-deformation parameters are 
\bea
    q_{a}=e^{\epsilon_{a}},\quad \sum_{a\in\four}\epsilon_{a}=0,
\eea
where we simply omit the $\mathbb{S}^{1}$ radius. This is the Cartan torus of the $\SU(4)$ rotational symmetry, which acts on the four complex coordinates as
\bea
  (z_{1},z_{2},z_{3},z_{4})\mapsto (q_{1}z_{1},q_{2}z_{2},q_{3}z_{3},q_{4}z_{4})  
\eea
with the condition $q_{1}q_{2}q_{3}q_{4}=1$. We also introduce 
\bea
\bfQ_{a}=q_{a},\quad \bfP_{a}=1-q_{a},\quad \bfP_{a}^{\vee}=1-q_{a}^{-1},\quad a\in\four
\eea
and for any subset $S\subseteq\four$
\bea
\bfQ_{S}=\prod_{a\in S}\bfQ_{a},\quad \bfP_{S}=\prod_{a\in S}\bfP_{a}.
\eea
For example, we have $\bfP_{12}=(1-q_{1})(1-q_{2})$ and $\bfP_{123}=(1-q_{1})(1-q_{2})(1-q_{3})$. Some properties of the index are
\beq
q_{\four}=q_{\emptyset}=1,\quad \bfP_{\four}=\bfP_{1}\bfP_{2}\bfP_{3}\bfP_{4},\quad q_{\bar{S}}=q_{S}^{-1},\quad
\bfP^{\vee}_{S}=(-1)^{|S|}q_{S}^{-1}\bfP_{S},\label{eq:dualorigamiprop}
\eeq
for any subset $S\subseteq\four$. Note here, we denote $\bar{S}$ as the complement of the subset $S$. For later use, we also define $\bfP_{\bar{S}}=\prod_{a\in\bar{S}}\bfP_{a}$. This means that we have for example $\bfP_{\bar{4}}=\bfP_{123}$. For the subsets $S\subseteq\four$, we also use the following notation
\beq
S=\begin{dcases}
    a,\quad a=1,2,3,4,\\
    A,\quad A\in\six,\\
    \bar{a},\quad \bar{a}\in\{123,124,134,234\},\\
    \four=1234
\end{dcases}
\eeq
Some properties of the character are
\bea\label{eq:ch-squareroot}
\bfP_{\four}=\bfP_{\bar{a}}+\bfP_{\bar{a}}^{\vee},\quad \bfP_{\four}=\bfP_{\four}^{\vee}
\eea
for $a\in\four$.

\paragraph{Structure functions}
We introduce structure functions\footnote{\label{footnote:structure-function}The terminology \emph{structure functions} comes from an observation that if we take $\bar{a}=\bar{4}=123$, and take the limit $q_{4}\rightarrow 1$, the function $g_{\bar{a}}(x)$ will be \bea
    g_{123}(x)=\frac{(1-q_{1}x)(1-q_{2}x)(1-q_{3}x)(1-q_{4}^{-1}x)}{(1-x)(1-q_{4}^{-1}q_{1}^{-1}x)(q_{4}^{-1}q_{2}^{-1}x)(1-q_{4}^{-1}q_{3}^{-1}x)}\xrightarrow{q_{4}\rightarrow 1} \frac{(1-\sfq_{1}x)(1-\sfq_{2}x)(1-\sfq_{3}x)}{(1-\sfq_{1}^{-1}x)(1-\sfq_{2}^{-1}x)(1-\sfq_{3}^{-1}x)}
\eea where $q_{1,2,3}\rightarrow \sfq_{1,2,3}$ and $\sfq_{1}\sfq_{2}\sfq_{3}=1$. It resembles the structure function of the quantum toroidal $\mathfrak{gl}_{1}$ (see section~\ref{sec:QTgl1}).} as
\bea\label{eq:struct_funct}
\mathscr{V}_{a}(x)&=\mathbb{I}[-\bfP_{a}^{\vee}x^{\vee}]=\frac{1-q_{a}x}{1-x},\quad \mathscr{S}_{ab}(x)=\mathbb{I}[-\bfP_{ab}^{\vee}x^{\vee}]=\frac{(1-q_{a}x)(1-q_{b}x)}{(1-x)(1-q_{a}q_{b}x)},\\
g_{\bar{a}}(x)&=\mathbb{I}[-\bfP_{\bar{a}}^{\vee}x^{\vee}]=\frac{\prod\limits_{i\neq a}(1-q_{i}x)(1-q_{a}^{-1}x)}{(1-x)\prod\limits_{i\neq a}(1-q_{a}^{-1}q_{i}^{-1}x)},\quad \mathcal{A}_{\mathbb{C}^{4}}(x)=\mathbb{I}[-\bfP_{\four}^{\vee}x^{\vee}]=\frac{\prod\limits_{a\in\four}(1-q_{a}x)(1-q_{a}^{-1}x)}{(1-x)^{2}\prod\limits_{A\in\six}(1-q_{A}x)}
\eea
for $a\in\four,\,\,ab\in\six$. We also have the following properties:
\bea
&g_{abc}(x)=\frac{\mathscr{S}_{ab}(x)}{\mathscr{S}_{ab}(q_{c}x)},\quad \mathscr{S}_{ab}(x)=\frac{\mathscr{V}_{a}(x)}{\mathscr{V}_{a}(q_{b}x)},\quad \mathcal{A}_{\mathbb{C}^{4}}(x)=\frac{g_{\bar{a}}(x)}{g_{\bar{a}}(q_{a}x)}.
\eea

For later use, we also define
\bea
\mathscr{Y}^{A}_{\lambda,v}(x)&=\left(1-\frac{v}{x}\right)\prod_{\Abox\in\lambda}\mathscr{S}_{A}\left(\frac{\chi_{A,v}(\Bbox)}{x}\right)=\frac{\prod_{\Abox\in A(\lambda)}\left(1-\chi_{A,v}(\Bbox)/x\right)}{\prod_{\Abox\in R(\lambda)} (1-q_{A}\chi_{A,v}(\Bbox)/x)},\\
\mathscr{W}^{\bar{a}}_{\pi,v}(x)&=\left(1-\frac{v}{x}\right)\prod_{\scube\in \pi}g_{\bar{a}}\left(\frac{\chi_{\bar{a},v}(\cube)}{x}\right)\propto \prod_{\scube\in A(\pi)}\left(1-\frac{\chi_{\bar{a},v}(\cube)}{x}\right)\prod_{\scube\in R(\pi)}\left(1-q_{a}^{-1}\frac{\chi_{\bar{a},v}(\cube)}{x}\right).
\eea
and
\bea\mathscr{Y}^{A\,\vee}_{\lambda,v}(x)&=\left(1-\frac{x}{v}\right)\prod_{\Abox\in \lambda}\mathscr{S}_{A}\left(q_{A}^{-1}\frac{x}{\chi_{A,v}(\Bbox)}\right),\quad \mathscr{W}^{\bar{a}\,\vee}_{\pi,v}(x)&=\left(1-\frac{x}{v}\right)\prod_{\scube\in\pi}g_{\bar{a}}\left(q_{a}\frac{x}{\chi_{\bar{a},v}(\cube)}\right)^{-1}.
\eea

\paragraph{Reflection property}
Using the reflection property of the index in \eqref{eq:index_reflecprop}, we have
\bea
    \mathbb{I}[\bfP_{a}x]=q_{a}^{-1}\mathbb{I}[\bfP_{a}^{\vee}x^{\vee}],\quad \mathbb{I}[\bfP_{ab}x]=\mathbb{I}[\bfP_{ab}^{\vee}x^{\vee}],\quad \mathbb{I}[\bfP_{abc}x]=\mathbb{I}[\bfP_{abc}^{\vee}x^{\vee}],\quad \mathbb{I}[\bfP_{\four}x]=\mathbb{I}[\bfP_{\four}^{\vee}x^{\vee}] \label{eq:reflec_origamiindex}
\eea
for generic $x$. In terms of the structure functions, these are written as
\bea\label{eq:reflec_structfunc}
\mathscr{V}_{a}(x)&=q_{a}\mathscr{V}_{a}(q_{a}^{-1}x^{-1})^{-1},\quad 
\mathscr{S}_{ab}(x)=\mathscr{S}_{ab}(q_{a}^{-1}q_{b}^{-1}x^{-1}) \\
g_{\bar{a}}(x)&=g_{\bar{a}}(q_{a}x^{-1})^{-1},\quad 
\mathcal{A}_{\mathbb{C}^{4}}(x)=\mathcal{A}_{\mathbb{C}^{4}}(x^{-1}).
\eea

The properties \eqref{eq:reflec_origamiindex} can be generalized as follows. Let $\mathbf{X}=\sum_{i\in I}x_{i}$ where $I$ is a finite set ($|I|<\infty$). Then,
\bea\label{eq:reflectionprop2}
\mathbb{I}[\bfP_{a}\mathbf{X}]=q_{a}^{-|I|}\mathbb{I}[\bfP_{a}^{\vee}\mathbf{X}^{\vee}],&\quad \mathbb{I}[\bfP_{ab}\mathbf{X}]=\mathbb{I}[\bfP_{ab}^{\vee}\mathbf{X}^{\vee}],\\
\mathbb{I}[\bfP_{abc}\mathbf{X}]=\mathbb{I}[\bfP_{abc}^{\vee}\mathbf{X}^{\vee}],&\quad \mathbb{I}[\bfP_{\four}\mathbf{X}]=\mathbb{I}[\bfP_{\four}^{\vee}\mathbf{X}^{\vee}].
\eea

\paragraph{Sign rules}
The reflection properties mentioned above are true only when the roots $\{x_{i}\}$ are generic. When $\{x_{i}\}$ is not generic, the reflection property will give give extra sign factors. For example, let us consider the character $\bfP_{123}x$. When $x$ is generic, we simply have
\bea
\mathbb{I}[\bfP_{123}x]=\mathbb{I}[\bfP_{123}^{\vee}x^{-1}]
\eea
However, when $x\rightarrow 1$, the character will contain a $1$ term which give zeros after taking the index. Therefore, the reflection property should be modified as
\bea
\mathbb{I}[\bfP_{123}-1]=(-1)\mathbb{I}[\bfP_{123}^{\vee}-1].
\eea
This is nothing special and actually related with the residue property introduced in Prop.~\ref{prop:residuedual}.

\begin{proposition}\label{prop:reflection_sign}
    Let $\mathbf{X}=\sum_{i\in I}x_{i}$ be a character, where $I$ is a finite set and $\{x_{i}\}_{i\in I}$ are generic. Then, the reflection property is
    \bea
    \mathbb{I}\left[\bfP_{123}\mathbf{X}\right]=\mathbb{I}[\bfP_{123}^{\vee}\mathbf{X}^{\vee}].
    \eea
    Suppose that for example when $\{x_{i}\}$ are specialized and the character $\bfP_{123}\mathbf{X}$ is decomposed as
    \bea
    \bfP_{123}\mathbf{X}=\mathbf{Y}+\sum_{i\in I_{+}} 1-\sum_{i\in I_{-}}1
    \eea
    where $\mathbf{Y}$ does not contain any $\pm 1$ term and $I_{\pm}$ are finite sets giving $\pm1$ terms. Namely, 
    \bea
    \mathbf{Y}=\left[\bfP_{123}\bfX\right]^{(\neq 0)},\quad \sum_{i\in I_{+}} 1-\sum_{i\in I_{-}}1=\left[\bfP_{123}\bfX\right]^{(0)}.
    \eea
    The reflection property is then given as
    \bea
    \mathbb{I}[\bfP_{123}\mathbf{X}-|I_{+}|+|I_{-}|]=(-1)^{|I_{+}|-|I_{-}|}\mathbb{I}[\bfP_{123}^{\vee}\mathbf{X}^{\vee}-|I_{+}|+|I_{-}|].
    \eea
    Namely, the sign factor is determined by the unmovable terms (or the number of unmovable terms). Similar formulas can be obtained for others: $\bfP_{a}\bfX,\bfP_{A}\bfX,\bfP_{\four}\bfX$.
    
\end{proposition}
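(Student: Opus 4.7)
The plan is to reduce the sign rule to the basic reflection property \eqref{eq:index_reflecprop} applied to the movable part. First, observe that $\bfP_{123}=(1-q_{1})(1-q_{2})(1-q_{3})=\sum_{S\subseteq\{1,2,3\}}(-1)^{|S|}q_{S}$ has rank $\sum_{S}(-1)^{|S|}=0$ and determinant $\det\bfP_{123}=\prod_{S}q_{S}^{(-1)^{|S|}}=1$, the latter vanishing because $\sum_{S\ni a}(-1)^{|S|}=0$ for each fixed $a\in\{1,2,3\}$. Hence for any character $\bfX$ the product $\bfP_{123}\bfX$ still has rank $0$ and determinant $1$, and for generic roots the reflection property reduces to $\mathbb{I}[\bfP_{123}\bfX]=\mathbb{I}[\bfP_{123}^{\vee}\bfX^{\vee}]$ with trivial prefactor.

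Under specialization, write $\bfP_{123}\bfX=\bfY+\sum_{i\in I_{+}}1-\sum_{i\in I_{-}}1$ with $\bfY$ containing no $e^{0}$ term by Def.~\ref{app-def:movable}. Since $1^{\vee}=1$, the dual splits identically: $\bfP_{123}^{\vee}\bfX^{\vee}=\bfY^{\vee}+\sum_{I_{+}}1-\sum_{I_{-}}1$. Subtracting the unmovable terms on each side yields
\begin{align*}
\bfY=\bfP_{123}\bfX-|I_{+}|+|I_{-}|,\qquad \bfY^{\vee}=\bfP_{123}^{\vee}\bfX^{\vee}-|I_{+}|+|I_{-}|,
\end{align*}
so it suffices to prove $\mathbb{I}[\bfY]=(-1)^{|I_{+}|-|I_{-}|}\mathbb{I}[\bfY^{\vee}]$.

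The key step is to apply \eqref{eq:index_reflecprop} to $\bfY$ itself; this is legitimate because $\bfY$ has no $e^{0}$ term by construction, so $\mathbb{I}[\bfY]$ is a well-defined nonvanishing rational function (no $(1-1)$ factors occur). Computing the invariants gives $\text{rk}\,\bfY=\text{rk}(\bfP_{123}\bfX)-(|I_{+}|-|I_{-}|)=|I_{-}|-|I_{+}|$ and $\det\bfY=\det(\bfP_{123}\bfX)=1$, since the removed terms $\pm e^{0}$ have $e^{n\cdot 0}=1$ and do not affect $\det$. The reflection property then yields $\mathbb{I}[\bfY^{\vee}]=(-1)^{|I_{-}|-|I_{+}|}\mathbb{I}[\bfY]$, which rearranges to the claim. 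The argument is pure bookkeeping once the movable/unmovable splitting is set up; the only point that needs care is verifying that the $\pm e^{0}$ terms are genuinely self-dual and that their removal is compatible with $\vee$, but both are immediate.

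The generalizations for $\bfP_{a}\bfX$, $\bfP_{A}\bfX$, and $\bfP_{\four}\bfX$ follow the same scheme. A direct computation gives $(\text{rk},\det)=(0,q_{a}^{-1})$ for $\bfP_{a}$ and $(0,1)$ for $\bfP_{A}$ and $\bfP_{\four}$, the latter two vanishings following again from $\sum_{S\ni a}(-1)^{|S|}=0$ over $S\subseteq A$ or $S\subseteq\four$. Repeating the splitting argument produces the same $(-1)^{|I_{+}|-|I_{-}|}$ sign, accompanied by an extra monomial factor in $q_{a}$ only in the $\bfP_{a}$ case, entirely parallel to the generic formulas \eqref{eq:reflec_origamiindex}, which provide a convenient sanity check.
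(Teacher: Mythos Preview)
Your proof is correct and complete. The paper itself does not give a full proof of this proposition; it only offers the illustrative example $\bfP_{123}x$ with $x\to 1$ and remarks that the sign is ``nothing special'' and related to the residue property in Prop.~\ref{prop:residuedual}, so your argument---computing $\text{rk}\,\bfY$ and $\det\bfY$ after stripping the unmovable part and then invoking \eqref{eq:index_reflecprop}---is exactly the natural proof the paper leaves implicit.
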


\begin{proposition}\label{prop:reflection-mod}
    Let $\mathbf{A},\mathbf{B}$ be a character where $\mathbf{A}+\mathbf{B}^{\vee}$ is movable. Then
    \bea
    \mathbb{I}\left[\mathbf{A}+\mathbf{B}^{\vee}\right]=(-1)^{\operatorname{rk}([\mathbf{B}]^{(\neq 0)})}\det [\mathbf{B}]^{(\neq 0)} \mathbb{I}\left[\mathbf{A}+\mathbf{B}\right]
    \eea
\end{proposition}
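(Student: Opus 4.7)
The plan is to reduce Proposition~\ref{prop:reflection-mod} to the basic reflection identity~\eqref{eq:index_reflecprop} applied only to the movable part of $\mathbf{B}$. First I would write $\mathbf{B} = \mathbf{B}_{+} + \mathbf{B}_{0}$ with $\mathbf{B}_{+} \coloneqq [\mathbf{B}]^{(\neq 0)}$ the movable part and $\mathbf{B}_{0} \coloneqq [\mathbf{B}]^{(0)}$ the unmovable (constant) part. Since the dual operation in~\eqref{eq:dualcharacter} fixes every constant term $n\cdot 1$ individually, one has $\mathbf{B}^{\vee} = \mathbf{B}_{+}^{\vee} + \mathbf{B}_{0}$; in particular $\mathbf{B}^{\vee}$ and $\mathbf{B}$ share the same unmovable part $\mathbf{B}_{0}$.

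Next I would use the hypothesis that $\mathbf{A}+\mathbf{B}^{\vee}$ is movable to check that both indices appearing in the statement are well-defined. This hypothesis forces $[\mathbf{A}]^{(0)} = -\mathbf{B}_{0}$, after which $\mathbf{A}+\mathbf{B} = (\mathbf{A}+\mathbf{B}_{0})+\mathbf{B}_{+}$ has vanishing constant term as well and is therefore also movable. Setting $\mathbf{A}' \coloneqq \mathbf{A} + \mathbf{B}_{0}$, which is movable by construction, the two characters appearing in the proposition simplify to
\begin{equation*}
\mathbf{A} + \mathbf{B}^{\vee} = \mathbf{A}' + \mathbf{B}_{+}^{\vee}, \qquad \mathbf{A} + \mathbf{B} = \mathbf{A}' + \mathbf{B}_{+}.
\end{equation*}

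The final step is to invoke the standard reflection property~\eqref{eq:index_reflecprop} on $\mathbf{B}_{+}$. Since $\mathbf{B}_{+}$ contains no constant terms, no spurious factors $\llbracket 0\rrbracket = 0$ arise and~\eqref{eq:index_reflecprop} gives $\mathbb{I}[\mathbf{B}_{+}^{\vee}] = (-1)^{\operatorname{rk}\mathbf{B}_{+}} \det\mathbf{B}_{+}\,\mathbb{I}[\mathbf{B}_{+}]$. Combining this with the multiplicativity $\mathbb{I}[\mathbf{X}+\mathbf{Y}] = \mathbb{I}[\mathbf{X}]\mathbb{I}[\mathbf{Y}]$ applied to the two decompositions above, and substituting back $\mathbf{B}_{+} = [\mathbf{B}]^{(\neq 0)}$, yields the claimed identity.

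The main subtlety, and essentially the only reason this statement is separate from~\eqref{eq:index_reflecprop}, is that the naive reflection identity breaks down when $\mathbf{B}$ carries constant summands, exactly as flagged in Proposition~\ref{prop:reflection_sign}: such terms would individually produce $\mathbb{I}[\pm 1]$, which is ill-defined, so one cannot blindly pull them through the dualization. Isolating $\mathbf{B}_{+}$ before applying~\eqref{eq:index_reflecprop} sidesteps this and explains why the sign and determinant factors involve $[\mathbf{B}]^{(\neq 0)}$ rather than $\mathbf{B}$ itself. Beyond this careful bookkeeping of the unmovable part, I anticipate no further obstacle.
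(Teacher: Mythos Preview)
Your proposal is correct and follows essentially the same route as the paper's proof: both split off the unmovable part, observe that $[\mathbf{A}]^{(0)}=-[\mathbf{B}]^{(0)}$ so that the constant pieces cancel from both $\mathbf{A}+\mathbf{B}^{\vee}$ and $\mathbf{A}+\mathbf{B}$, and then apply the basic reflection identity~\eqref{eq:index_reflecprop} to the remaining movable term $[\mathbf{B}]^{(\neq 0)}$. Your $\mathbf{A}'=\mathbf{A}+\mathbf{B}_{0}$ is exactly the paper's $[\mathbf{A}]^{(\neq 0)}$, so the two arguments coincide up to notation.
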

\begin{proof}
    Let $[\mathbf{B}]^{(0)}=m\in\mathbb{Z}$. Since $\mathbf{A}+\mathbf{B}^{\vee}$ is movable, using $\mathbf{A}=[\mathbf{A}]^{(0)}+[\mathbf{A}]^{(\neq0)}$, we have $[\mathbf{A}]^{(0)}=-m$. We then have
    \bea
    \mathbf{A}+\mathbf{B}^{\vee}=[\mathbf{A}]^{(\neq 0)}+[\mathbf{B}]^{(\neq 0)\vee}.
    \eea
    Since both terms are movable, we can safely use the reflection property and obtain the statement.

\end{proof}


\subsection{Magnificent four}\label{sec:magnificentindex}
Let us rewrite the contour integral formula given in Prop.~\ref{prop:M4contourJK} using the structure functions:
\bea
   \mathcal{Z}^{\D8}_{\text{inst.}}=\sum_{k=0}^{\infty}\mathfrak{q}^{k}\mathcal{Z}^{\D8}_{k},\quad  \mathcal{Z}^{\D8}_{k}=\frac{\mathcal{G}^{k}}{k!}\oint_{\text{JK}} 
\prod_{I=1}^{k}\frac{dx_{I}}{2\pi i x_{I}}\mathcal{Z}_{k}^{\D8\tbar\D0}(v_{\alpha},\bar{v}_{\alpha},x_{I})   \label{eq:D8integralmod}
\eea
where
\bea
&\mathcal{Z}_{k}^{\D8\tbar\D0}(v_{\alpha},\bar{v}_{\alpha},x_{I})=\prod_{I=1}^{k}\frac{\overline{P}(x_{I})}{P(x_{I})}\prod_{I\neq J}g_{\bar{4}}\left(\frac{x_{I}}{x_{J}}\right)^{-1}=\prod_{I=1}^{k}\frac{\overline{P}(x_{I})}{P(x_{I})}\prod_{I< J}\mathcal{A}_{\mathbb{C}^{4}}\left(\frac{x_{I}}{x_{J}}\right)^{-1},\\
&\mathcal{G}=\frac{(1-q_{12})(1-q_{13})(1-q_{23})}{(1-q_{1})(1-q_{2})(1-q_{3})(1-q_{123})},\quad P(x)=\prod_{\alpha=1}^{n}\left(1-\frac{v_{\alpha}}{x}\right),\quad \overline{P}(x)=\prod_{\alpha=1}^{n}\left(1-\frac{\bar{v}_{\alpha}}{x}\right)
\eea
and we redefined the topological terms. Note that the factor $\mathcal{G}$ is invariant under the quadrality symmetry.

Similar to the pure SYM case in section~\ref{sec:pureSYM_JK}, the rational function inside the contour integral can be rewritten using the index
\bea\label{eq:M4inst-character}
\mathcal{Z}_{k}^{\D8}=\frac{1}{k!}\oint_{\text{JK}}\prod_{I=1}^{k}\frac{dx_{I}}{2\pi ix_{I}}\mathbb{I}\left[\mathbf{v}^{\D8}_{\text{inst.}}-k\right],\quad  \mathbf{v}^{\D8}_{\text{inst.}}=-\bfN^{\vee}\bfK+\bfP_{123}^{\vee}\bfK^{\vee}\bfK
\eea
with
\bea
\label{eq:mag4ch}
    \mathbf{N}=\mathbf{n}-\bar{\mathbf{n}}=\sum_{\alpha=1}^{n}(e^{\mathfrak{a}_{\alpha}}-e^{\mathfrak{b}_{\alpha}})=\sum_{\alpha=1}^{n}(v_{\alpha}-\bar{v}_{\alpha}),\quad \mathbf{K}=\sum_{I=1}^{k}e^{\phi_{I}}=\sum_{I=1}^{k}x_{I}.
\eea
As shown in Thm.~\ref{thm:M4JKpoles}, the poles are classified with solid partitions:
\begin{align}
\begin{split}
&\vec{v}=(v_{\alpha})_{\alpha=1,\ldots,n},\quad \vec{\rho}=(\rho^{(\alpha)})_{\alpha=1,\ldots,n},\quad |\rho|=\sum_{\alpha=1}^{n}|\rho^{(\alpha)}|=k,\\
&\{x_{I}\}_{I=1,\ldots,k}\longrightarrow\{\chi_{\four,v_{\alpha}}(\hcube)=v_{\alpha}q_{1}^{i-1}q_{2}^{j-1}q_{3}^{k-1}q_{4}^{l-1}\}_{\alpha=1,\ldots,n,\,\shcube=(i,j,k,l)}
\end{split}
\end{align}
and at the fixed point $\vec{\rho}$, the character $\bfK$ will be
\bea\label{eq:D8Kch-fixedpoint}
\bfK|_{\vec{\rho}}=\sum_{\alpha=1}^{n}\sum_{\shcube\in\rho^{(\alpha)}}\chi_{\four,v_{\alpha}}(\hcube).
\eea
Since in our setup, a D8-brane will appear with a $\overline{\D8}$ as a pair, it is natural to use the following notation
\bea
    \bar{v}_{\alpha}=K_{\alpha}v_{\alpha},\quad \alpha=1,\ldots,n,\quad 
    \bfN=\sum_{\alpha=1}^{n}(1-K_{\alpha})v_{\alpha}
\label{eq:antiD8parameter}
\eea
where $K_{\alpha}$ physically represents the distance between the $\D8_{\alpha}$ and $\overline{\D8}_{\alpha}$ branes. Inserting this to \eqref{eq:mag4ch} and taking the index, we obtain
\bea
    \mathcal{Z}^{\D8}_{\four;4}[\vec{v},\vec{\rho}\,;\vv{K}]&=\prod_{\alpha=1}^{n}\mathcal{Z}_{\four;4}^{\D8}[\rho^{(\alpha)};K_{\alpha}]\prod_{\beta>\alpha}\mathcal{Z}^{\D8\tbar\D8}_{K_{\alpha}|K_{\beta}}(v_{\alpha},\rho^{(\alpha)}\,|\,v_{\beta},\rho^{(\beta)}),\\
    \mathcal{Z}^{\D8}_{\four;4}[\rho\,;K]&=\mathbb{I}\left[-(1-K^{-1})x^{-1}\bfK_{\rho}+\bfP_{123}^{\vee}\bfK_{\rho}^{\vee}\bfK_{\rho}\right],\\
    \mathcal{Z}^{\D8\tbar\D8}_{K_{1}|K_{2}}(x_{1},\rho^{(1)}\,|\,x_{2},\rho^{(2)})&=\mathbb{I}\left[-(1-K_{1}^{-1})x_{1}^{-1}\bfK_{\rho^{(2)}}-(1-K_{2}^{-1})x_{2}^{-1}\bfK_{\rho^{(1)}}+\bfP_{\four}\bfK^{\vee}_{\rho^{(1)}}\bfK_{\rho^{(2)}}\right],
\label{eq:mag4Nekrasovfact}
\eea
where we shortly wrote $\bfK_{\rho}=\sum_{\shcube\in\rho}\chi_{\four,x}(\hcube).$ 

Naively, one will expect that the partition function takes the form as
\bea
\mathcal{Z}_{\text{inst.}}^{\D8}=\sum_{\vec{\rho}}\mathfrak{q}^{|\vec{\rho}|}  \mathcal{Z}^{\D8}_{\four;4}[\vec{v},\vec{\rho}\,;\vv{K}].
\eea
However, actually this is \textit{not} correct. One reason for this is that obviously the character \eqref{eq:M4inst-character} is not quadrality invariant. Namely, we have other definitions such as
\bea
\mathbf{v}^{\D8}_{\text{inst.}}=-\bfN^{\vee}\bfK+\bfP_{\bar{a}}^{\vee}\bfK^{\vee}\bfK
\eea
and
\bea
\mathcal{Z}^{\D8}_{\four;a}[\rho\,;K]&=\mathbb{I}\left[-(1-K^{-1})x^{-1}\bfK_{\rho}+\bfP_{\bar{a}}^{\vee}\bfK_{\rho}^{\vee}\bfK_{\rho}\right],\label{eq:mag4Nekrasovfact2}
\eea
for $a\in\four$. For general $\rho$, one can explicitly check
\bea
\mathcal{Z}^{\D8}_{\four;a}[\rho\,;K]\neq \mathcal{Z}^{\D8}_{\four;b}[\rho\,;K]
\eea
for $a\neq b$ and actually both of them differ by extra \textit{signs} $\pm1$. For example, for lower levels, we have
\begin{subequations}
\begin{align}
    \rho=\{\{\{1\}\}\},&\quad \frac{\mathcal{Z}_{\four;1}^{\D8}[\rho,K]}{\mathcal{Z}_{\four;2}^{\D8}[\rho,K]}=\frac{\mathcal{Z}_{\four;2}^{\D8}[\rho,K]}{\mathcal{Z}_{\four;3}^{\D8}[\rho,K]}=\frac{\mathcal{Z}_{\four;3}^{\D8}[\rho,K]}{\mathcal{Z}_{\four;4}^{\D8}[\rho,K]}=\frac{\mathcal{Z}_{\four;4}^{\D8}[\rho,K]}{\mathcal{Z}_{\four;1}^{\D8}[\rho,K]}=1,\\
    \rho=\{\{\{2\}\}\},&\quad \frac{\mathcal{Z}_{\four;1}^{\D8}[\rho,K]}{\mathcal{Z}_{\four;2}^{\D8}[\rho,K]}=\frac{\mathcal{Z}_{\four;2}^{\D8}[\rho,K]}{\mathcal{Z}_{\four;3}^{\D8}[\rho,K]}=-\frac{\mathcal{Z}_{\four;3}^{\D8}[\rho,K]}{\mathcal{Z}_{\four;4}^{\D8}[\rho,K]}=-\frac{\mathcal{Z}_{\four;4}^{\D8}[\rho,K]}{\mathcal{Z}_{\four;1}^{\D8}[\rho,K]}=1,\\
    \rho=\{\{\{1,1\}\}\},&\quad \frac{\mathcal{Z}_{\four;1}^{\D8}[\rho,K]}{\mathcal{Z}_{\four;2}^{\D8}[\rho,K]}=-\frac{\mathcal{Z}_{\four;2}^{\D8}[\rho,K]}{\mathcal{Z}_{\four;3}^{\D8}[\rho,K]}=-\frac{\mathcal{Z}_{\four;3}^{\D8}[\rho,K]}{\mathcal{Z}_{\four;4}^{\D8}[\rho,K]}=\frac{\mathcal{Z}_{\four;4}^{\D8}[\rho,K]}{\mathcal{Z}_{\four;1}^{\D8}[\rho,K]}=1,\\
    \rho=\{\{\{1\},\{1\}\}\},&\quad -\frac{\mathcal{Z}_{\four;1}^{\D8}[\rho,K]}{\mathcal{Z}_{\four;2}^{\D8}[\rho,K]}=-\frac{\mathcal{Z}_{\four;2}^{\D8}[\rho,K]}{\mathcal{Z}_{\four;3}^{\D8}[\rho,K]}=\frac{\mathcal{Z}_{\four;3}^{\D8}[\rho,K]}{\mathcal{Z}_{\four;4}^{\D8}[\rho,K]}=\frac{\mathcal{Z}_{\four;4}^{\D8}[\rho,K]}{\mathcal{Z}_{\four;1}^{\D8}[\rho,K]}=1,
\end{align}
\end{subequations}
where we described the solid partition as 
\bea
\rho=\{\{\rho_{11},\rho_{12},\ldots\},\ldots,\{\rho_{i1},\rho_{i2}\ldots,\rho_{im_{i}}\},\ldots,\{\rho_{l1},\ldots,\rho_{lm_{l}}\}\},\quad \rho_{ij}=(\rho_{ij1},\ldots,\rho_{ijn_{ij}}).
\eea
These extra sign factors need to be incorporated to get the quadrality invariant partition function. 
\begin{theorem}[\cite{Nekrasov:2017cih,Nekrasov:2018xsb}]
The total instanton partition function of the $\U(n|n)$ magnificent four system is given by summation over $n$-tuple solid partitions, 
\bea
\mathcal{Z}^{\D8}_{\text{inst.}}=\sum_{\vec{\rho}\in\mathcal{SP}}\mathfrak{q}^{|\vec{\rho}|}(-1)^{\sigma_{a}(\vec{\rho})}\mathcal{Z}_{\four;a}^{\D8}[\vec{v},\vec{\rho}\,;\vv{K}],\quad a\in\four\label{eq:mag4inst}
\eea
where $\sigma_{a}(\vec{\rho})$ is a sign factor defined as
\bea\label{eq:signfactor-def}
\sigma_{a}(\vec{\rho})&=\sum_{\alpha=1}^{n}\sigma_{a}(\rho^{(\alpha)}),\quad \sigma_{a}(\rho)=\#\{(x_{1},x_{2},x_{3},x_{4})\in\rho\mid x_{b}=x_{c}=x_{d}<x_{a}\}.
\eea
\end{theorem}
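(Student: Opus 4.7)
The starting point is the contour integral formula of Prop.~\ref{prop:M4contourJK}, where the integrand is built from $\sh$-functions and is manifestly invariant under the quadrality action $1 \leftrightarrow 2 \leftrightarrow 3 \leftrightarrow 4$ (using the Calabi--Yau condition $\sum_{a\in\four}\epsilon_{a}=0$). Thm.~\ref{thm:M4JKpoles} then classifies the non-vanishing JK residues by $n$-tuples of solid partitions $\vec{\rho}$, and the iterated residue \eqref{eq:D8-iteratedresidue} computes the contribution of each fixed point. This gives a well-defined, quadrality-invariant sum
\[
\mathcal{Z}^{\D8}_{\text{inst.}} = \sum_{\vec{\rho}\in\mathcal{SP}} \mathfrak{q}^{|\vec{\rho}|}\, \mathcal{Z}^{\D8}_{\text{JK}}[\vec{v},\vec{\rho}\,;\vv{K}]
\]
with $\mathcal{Z}^{\D8}_{\text{JK}}[\vec{v},\vec{\rho}\,;\vv{K}]$ defined unambiguously as the iterated residue.

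Next, I would recast each residue in the equivariant index language used in Section~\ref{sec:equiv-index}. Following \eqref{eq:M4inst-character} and Prop.~\ref{prop:residuegeneral}, the residue at $\vec{\rho}$ equals $\mathbb{I}[\mathbf{v}^{\D8}_{\text{inst.}}|_{\vec{\rho}}]$ with $\bfK|_{\vec{\rho}}$ given by \eqref{eq:D8Kch-fixedpoint}, but this identification requires a choice: the symmetric $\sh$-based factor $\prod_{I<J}\mathcal{A}_{\mathbb{C}^{4}}(x_{I}/x_{J})^{-1}$ can be written as $\mathbb{I}[\bfP_{\bar{a}}^{\vee}\bfK^{\vee}\bfK]$ for any $a\in\four$ using the decomposition $\bfP_{\four}=\bfP_{\bar{a}}+\bfP_{\bar{a}}^{\vee}$ from \eqref{eq:ch-squareroot}. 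For a generic character the four choices agree, giving $\mathcal{Z}_{\four;a}^{\D8}[\vec{v},\vec{\rho}\,;\vv{K}]$ as defined in \eqref{eq:mag4Nekrasovfact2}. However, at a solid partition fixed point the character $\bfP_{\bar{a}}^{\vee}\bfK_{\rho}^{\vee}\bfK_{\rho}$ generically contains unmovable ($\pm1$) terms, so the naive reflection $\bfP_{\bar{a}}\leftrightarrow \bfP_{\bar{a}}^{\vee}$ needs to be corrected by the signed reflection of Prop.~\ref{prop:reflection_sign}, Prop.~\ref{prop:reflection-mod}.

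The central computation is thus to identify the unmovable part $[\bfP_{\bar{a}}\bfK_{\rho}^{\vee}\bfK_{\rho}]^{(0)}$ and to show that its signed count exactly reproduces $\sigma_{a}(\rho)$. For a single solid partition $\rho$, write
\[
\bfK_{\rho}^{\vee}\bfK_{\rho} = \sum_{\hcube,\hcube'\in\rho}\chi_{\four}(\hcube)^{-1}\chi_{\four}(\hcube'),
\]
so an unmovable monomial in $\bfP_{\bar{a}}\bfK_{\rho}^{\vee}\bfK_{\rho}$ arises from pairs $(\hcube,\hcube')$ whose coordinate difference lies in the monoid generated by $\{\epsilon_{b}\}_{b\in\bar{a}}$ with the right multiplicities to cancel the $\bfP_{\bar{a}}$-factor. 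A combinatorial analysis of these pairs, together with the solid partition constraint, should identify the net unmovable contribution precisely with the set $\{\hcube=(x_{1},\ldots,x_{4})\in\rho\mid x_{b}=x_{c}=x_{d}<x_{a}\}$ counted by $\sigma_{a}(\rho)$; pairs with $\hcube=\hcube'$ and pairs related by translation along the $a$-axis conspire, while those involving the other three axes cancel by the movability of $\bfP_{\bar{a}}$. Propagating this through Prop.~\ref{prop:reflection-mod} yields the sign $(-1)^{\sigma_{a}(\vec{\rho})}$ in \eqref{eq:mag4inst}, and the quadrality invariance of the original contour integral forces the product $(-1)^{\sigma_{a}(\vec{\rho})}\mathcal{Z}_{\four;a}^{\D8}[\vec{v},\vec{\rho}\,;\vv{K}]$ to be independent of $a\in\four$. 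The bifundamental cross terms $\mathcal{Z}^{\D8\tbar\D8}_{K_{\alpha}|K_{\beta}}$ are fully movable in generic position, so no extra sign arises from them.

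The main technical obstacle is the combinatorial/bookkeeping step: precisely matching the unmovable monomials of $\bfP_{\bar{a}}\bfK_{\rho}^{\vee}\bfK_{\rho}$ at a general solid partition with the diagonal boxes counted by $\sigma_{a}(\rho)$. This is purely a character-theoretic lemma about solid partitions, but it requires a careful induction (or a telescoping argument comparing $\rho$ and $\rho\cup\{\hcube\}$) to handle the delicate cancellations among the six quadratic terms produced by $\bfP_{\bar{a}}=(1-q_{b})(1-q_{c})(1-q_{d})$. Once this lemma is established, the theorem follows by combining it with the JK classification (Thm.~\ref{thm:M4JKpoles}) and the reflection property of the index.
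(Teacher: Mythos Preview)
Your plan misidentifies the origin of the sign $(-1)^{\sigma_a(\rho)}$. You claim that the character $\bfP_{\bar{a}}^{\vee}\bfK_{\rho}^{\vee}\bfK_{\rho}$ has unmovable terms and that the sign comes from correcting the reflection $\bfP_{\bar{a}}\leftrightarrow\bfP_{\bar{a}}^{\vee}$ via Prop.~\ref{prop:reflection-mod}. But the full character $\mathbf{v}^{\D8}_{\text{inst.}}|_{\rho}$ is \emph{movable} (this is precisely Thm.~\ref{thm:D8movable}), so taking its index is unambiguous and one computes $[\bfP_{\bar{a}}^{\vee}\bfK_{\rho}^{\vee}\bfK_{\rho}]^{(0)}=[\bfK_{\rho}]^{(0)}=\#\{(i,i,i,i)\in\rho\}$, which is \emph{not} $\sigma_{a}(\rho)$. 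What your unmovable-term computation actually captures is the \emph{relative} sign between two different square-root choices $\mathcal{Z}_{\four;a}^{\D8}$ and $\mathcal{Z}_{\four;b}^{\D8}$ (this is the content of Thm.~\ref{thm:D8signindep}), not the \emph{absolute} sign between the JK residue and any single $\mathcal{Z}_{\four;a}^{\D8}$.

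The sign genuinely arises from the iterated residue process itself, not from the index after the fact. In the paper's approach (Thm.~\ref{thm:D8signruleJK}, proved in Appendix~\ref{app:D8signruleJK-proof}), at each step the residue at $x_{k+1}=\chi$ is taken from a function of the form $\mathbb{I}[\bfX^{\vee}x_{k+1}+\bfY x_{k+1}^{-1}]$, and Prop.~\ref{prop:residuegeneral} produces an extra $-1$ precisely when the pole lies in the $\bfY x_{k+1}^{-1}$ block (equivalently, when it comes from $g_{\bar{4}}(x_{k+1}/x_{J}^{\ast})^{-1}$ rather than $g_{\bar{4}}(x_{J}^{\ast}/x_{k+1})^{-1}$). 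The incremental sign is therefore $[\bfP_{123}^{\vee}\chi^{-1}\bfK_{\rho}]^{(0)}$ with $\rho$ the partition built so far, an \emph{ordered} quantity depending on the stacking sequence. The inductive combinatorics then matches this with the increment $\sigma_{4}(\rho+\chi)-\sigma_{4}(\rho)$. Your symmetric expression $[\bfP_{\bar{a}}\bfK_{\rho}^{\vee}\bfK_{\rho}]^{(0)}$ loses this ordering information and cannot recover $\sigma_{a}(\rho)$; the correct object is $\sum_{I<J}[\bfP_{123}^{\vee}(x_{J}^{\ast})^{-1}x_{I}^{\ast}]^{(0)}$ in the monomial ordering of \eqref{eq:ordering-partition} (compare also Def.~\ref{def:signfactorrgeneral} and Prop.~\ref{prop:sign_proof}, where the same ordered structure appears in the fusion argument).
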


\begin{remark}
Including these sign factors actually give 
\bea
(-1)^{\sigma_{a}(\vec{\rho})}\mathcal{Z}_{\four;a}^{\D8}[\vec{v},\vec{\rho}\,;\vv{K}]=(-1)^{\sigma_{b}(\vec{\rho})}\mathcal{Z}_{\four;b}^{\D8}[\vec{v},\vec{\rho}\,;\vv{K}],\quad a,b\in\four.
\eea
See Thm.~\ref{thm:D8signindep} and Thm.~\ref{thm:D8signruleJK} for discussion on the $n=1$ case.
\end{remark}

\begin{remark}
    The sum of the character \eqref{eq:M4inst-character} and its dual is
    \bea
    \mathbf{V}_{\text{inst.}}^{\D8}=\mathbf{v}^{\D8}_{\text{inst.}}+
\mathbf{v}^{\D8\vee}_{\text{inst.}}=-\bfN^{\vee}\bfK-\bfN\bfK^{\vee}+\bfP_{\four}\bfK^{\vee}\bfK
    \eea
    where we used \eqref{eq:ch-squareroot}. In this sense, we can write $\mathbf{v}_{\text{inst.}}^{\D8}=\sqrt{\mathbf{V}^{\D8}_{\text{inst.}}}$ and $\sqrt{\bfP_{\four}\bfK^{\vee}\bfK}=\bfP_{\bar{4}}^{\vee}\bfK^{\vee}\bfK$. At the level of contour integral, the square root can be also observed from 
    \bea
    \prod_{I<J}\mathcal{A}_{\mathbb{C}^{4}}\left(\frac{x_{I}}{x_{J}}\right)^{-1}.
    \eea

    After evaluating the poles and inserting \eqref{eq:D8Kch-fixedpoint} to the character, the square root part needs to be \textit{movable} (see Def.~\ref{app-def:movable}) so that we can safely take the index. Moreover, for the magnificent four case, the choice of this \textit{square root} part is crucial and the sign rules depend on this choice. For example, we can also use
    \bea
    \sqrt{\bfP_{\four}\bfK^{\vee}\bfK}=\bfP_{123}\bfK^{\vee}\bfK
    \eea
    for the square root part and the sign rule is modified to
    \bea
    \sigma_{4}(\rho)=\#\{(x_{1},x_{2},x_{3},x_{4})\in\rho\mid x_{1}=x_{2}=x_{3}\leq x_{4} \}.
    \eea
    Actually, this is the original choice of the square root part in \cite{Nekrasov:2017cih,Nekrasov:2018xsb}.

\end{remark}

\paragraph{One-loop perturbative factors}For the one-loop perturbative part, we define
\bea\label{eq:D8oneloop}
    \mathcal{Z}_{\text{1-loop}}^{\D8\tbar\D8}(v_{1},K_{1}\,|\,v_{2},K_{2})&=\exp\left(-\sum_{n>0}\frac{1}{n}\frac{(1-K_{2}^{-n})(1-K_{1}^{n})}{\bfP_{\four}^{[n]}}\left(\frac{v_{1}}{v_{2}}\right)^{n}\right).
\eea
In terms of character, it can be written as
\bea
\mathbb{I}[\mathring{\mathbf{v}}]&=\prod_{\alpha<\beta}\mathcal{Z}_{\text{1-loop}}^{\D8\tbar\D8}(v_{\alpha},K_{\alpha}\,|\,v_{\beta},K_{\beta})\eqqcolon\mathcal{Z}_{\text{1-loop}}^{\D8},
\eea
where
\bea
   &\mathring{\mathbf{v}}=\frac{1}{\mathbf{P}_{\four}}\sum_{v_{\beta}>v_{\alpha}}(1-K_{\alpha}^{\vee})(1-K_{\beta})v_{\beta}/v_{\alpha}
\eea
where we specified an order in the pair of indices $(\bar{a},\alpha)_{a\in\four}^{\alpha=1,\ldots,n_{\bar{a}}}$.

\paragraph{Movability}
Let us study the movability of the character $\mathbf{v}_{\text{inst.}}^{\D8}$ explicitly for the $\U(1|1)$ partition function.
\begin{theorem}\label{thm:D8movable}
    The character $\mathbf{v}_{\text{inst.}}^{\D8}$ is movable.
\end{theorem}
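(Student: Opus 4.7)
The plan is to exploit the $\vee$-symmetry of the constant-term functional, which reduces the movability of the ``square root'' $\mathbf{v}_{\text{inst}}^{\D8}$ to that of the full symmetric tangent. Using the identity $\bfP_{123}^{\vee}+\bfP_{123}=\bfP_{\four}$ (cf.~\eqref{eq:ch-squareroot}, valid under the CY condition $q_{\four}=1$), I obtain
$$\mathbf{V}_{\text{inst}}^{\D8}:=\mathbf{v}_{\text{inst}}^{\D8}+\mathbf{v}_{\text{inst}}^{\D8\vee}=-\bfN^{\vee}\bfK-\bfK^{\vee}\bfN+\bfP_{\four}\bfK^{\vee}\bfK.$$
The constant-term functional $[q^{0}v^{0}K^{0}]$ is $\vee$-invariant because $\vee$ fixes the monomial $q^{0}v^{0}K^{0}$, hence $[q^{0}v^{0}K^{0}]\mathbf{v}=[q^{0}v^{0}K^{0}]\mathbf{v}^{\vee}$ and
$[q^{0}v^{0}K^{0}]\mathbf{V}=2[q^{0}v^{0}K^{0}]\mathbf{v}$,
so it suffices to show that $\mathbf{V}_{\text{inst}}^{\D8}|_{\rho}$ has no $q^{0}v^{0}K^{0}$ term.

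Substituting $\bfK_{\rho}=v\sum_{\hcube\in\rho}q^{\hcube-\vec{1}}$ and $\bfN=(1-K)v$, a direct computation gives
$$[q^{0}v^{0}K^{0}]\bfN^{\vee}\bfK_{\rho}=[q^{0}v^{0}K^{0}]\bfK_{\rho}^{\vee}\bfN=\bigl[(1,1,1,1)\in\rho\bigr]=1,$$
since only the primary box has trivial $q$-weight. The potential constant of $\bfP_{\four}\bfK_{\rho}^{\vee}\bfK_{\rho}$ is $v$- and $K$-independent, so movability reduces to the purely combinatorial identity
$$[q^{0}]\bigl(\bfP_{\four}\bfK_{\rho}^{\vee}\bfK_{\rho}\bigr)\bigl|_{q_{\four}=1}=2\qquad(\ast)$$
for every non-empty solid partition $\rho$. (For the full $\U(n|n)$ case the cross-flavour terms $\alpha\neq\beta$ automatically carry non-trivial $v_{\alpha}/v_{\beta},K_{\alpha}/K_{\beta}$ weights and cannot contribute, so the same reduction applies to each diagonal block.)

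To establish $(\ast)$ I would proceed by induction on $|\rho|$. The base $|\rho|=1$ is immediate: $\bfK^{\vee}\bfK=1$ and after imposing $q_{\four}=1$ the Laurent polynomial $\bfP_{\four}$ acquires the extra constant term $q_{\four}=1$ from the monomial $q_{1}q_{2}q_{3}q_{4}$, giving constant $1+1=2$. For the inductive step, pick a removable top corner $\tilde{\hcube}$ of $\rho$ characterised by $\tilde{\hcube}+\bfe_{a}\notin\rho$ for every $a\in\four$, set $\rho'=\rho\setminus\{\tilde{\hcube}\}$, and expand $\bfK_{\rho}^{\vee}\bfK_{\rho}=\bfK_{\rho'}^{\vee}\bfK_{\rho'}+\chi(\tilde{\hcube})^{\vee}\bfK_{\rho'}+\bfK_{\rho'}^{\vee}\chi(\tilde{\hcube})+1$. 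Using the extraction rule $[q^{0}](\bfP_{\four}q^{\eta})|_{q_{\four}=1}=(-1)^{|\eta|}$ if $\eta\in\{-1,0\}^{4}\cup\{0,1\}^{4}$ and $0$ otherwise, the inductive step reduces to showing that the signed count of boxes $\hcube\in\rho'$ lying within componentwise distance~$1$ of $\tilde{\hcube}$ cancels the residual $1+1=2$ from the constant term and from the diagonal $\hcube=\hcube'=\tilde{\hcube}$. The descending-closed property of a solid partition together with the removability of $\tilde{\hcube}$ controls exactly which neighbours of $\tilde{\hcube}$ lie in $\rho'$, and a short case analysis on how many coordinates of $\tilde{\hcube}$ equal $1$ produces the required cancellation.

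The main obstacle is executing this four-dimensional signed corner count uniformly across all cases; the book-keeping is reminiscent of the pure-SYM JK-residue analysis of Thm.~\ref{thm:pureSYMJKpoles}, but with 4-hypercubes replacing 2-boxes. A slicker alternative I would flag as back-up is the geometric observation that the torus-fixed locus of the magnificent-four moduli consists of isolated points labelled by solid partitions, so the virtual tangent $\mathbf{V}$ cannot carry a trivial weight at a fixed point, giving $(\ast)$ at once. I nevertheless prefer the combinatorial route here since it stays within the character-theoretic language of the chapter and naturally generalises to the tetrahedron and spiked-instanton partition-function analyses of the following sections.
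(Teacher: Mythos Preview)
Your initial reduction via $\vee$-symmetry to the symmetrized character $\mathbf{V}=\mathbf{v}+\mathbf{v}^{\vee}$ is valid, but the combinatorial identities you then extract are incorrect because the Calabi--Yau constraint $q_{1}q_{2}q_{3}q_{4}=1$ has not been carried through consistently. Under this constraint a monomial $q_{1}^{n_{1}}q_{2}^{n_{2}}q_{3}^{n_{3}}q_{4}^{n_{4}}$ is trivial precisely when $n_{1}=n_{2}=n_{3}=n_{4}$, so \emph{every} diagonal box $(m,m,m,m)\in\rho$ contributes to $[\bfN^{\vee}\bfK_{\rho}]^{(0)}$, not only the origin. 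Writing $d=\#\{m:(m,m,m,m)\in\rho\}$, one has $[\bfN^{\vee}\bfK_{\rho}]^{(0)}=d$ and the correct target identity is $[\bfP_{\four}\bfK_{\rho}^{\vee}\bfK_{\rho}]^{(0)}=2d$, not $2$; for $\rho$ the $2^{4}$ hypercube a direct expansion gives $4$. Your extraction rule $[q^{0}]\bfP_{\four}q^{\eta}=(-1)^{|\eta|}$ already fails at $\eta=(0,0,0,0)$ and $(1,1,1,1)$ (both give $2$), and it ignores shifted diagonals such as $\eta=(2,1,1,1)\equiv(1,0,0,0)$ (giving $-1$, not $0$). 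Consequently the proposed inductive step cannot close: you must track long-range diagonal pairs $\tilde{\hcube}$ and $\tilde{\hcube}-(c,c,c,c)$ at arbitrary separation, and follow how $d$ jumps when the removable box sits on the diagonal. Repairing this reconstitutes essentially the same work as a direct attack on $\mathbf{v}$, so the symmetrization buys nothing.

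The paper instead inducts directly on $\mathbf{v}_{\rho}=-\bfK_{\rho}+\bfP_{123}^{\vee}\bfK_{\rho}^{\vee}\bfK_{\rho}$, adds a box $\chi^{\mu}$ at $(\mu_{1},\mu_{2},\mu_{3},\mu_{4})$, and splits $\bfK_{\tilde{\rho}}=\bfK_{\rho'}+\bfK_{\rho''}$ with $\rho'$ the full hyperrectangle $[1,\mu_{1}]\times\cdots\times[1,\mu_{4}]$. The cross term against $\bfK_{\rho''}$ is automatically movable by a degree argument, while the rectangular piece sums geometrically to products $(1-q_{1}^{\pm\mu_{1}})(1-q_{2}^{\pm\mu_{2}})(1-q_{3}^{\pm\mu_{3}})$ times a finite $q_{4}$-sum, whose unmovable part is a handful of Kronecker deltas $\delta_{\mu_{1},\mu_{2},\mu_{3},l}$ that cancel exactly against $[-\chi^{\mu}-\bfP_{123}^{\vee}]^{(0)}$. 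This rectangle decomposition is the key device your approach lacks. (Your back-up geometric argument is circular: movability of the tangent character \emph{is} the assertion that the torus-fixed points are nondegenerate.)
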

\begin{proof}
    It is enough to check the movability of $\mathbf{v}^{\D8}_{\text{inst.}}$ with $K\rightarrow \infty$:
    \bea
    \mathbf{v}_{\rho}=-\bfK_{\rho}+\bfP_{123}^{\vee}\bfK_{\rho}^{\vee}\bfK_{\rho},\quad \bfK_{\rho}=\sum_{\shcube\in\rho}\chi_{\four,1}(\hcube).
    \eea
    We prove it by induction of the size of the solid partition $|\rho|=k$. For $k=1$,
    \bea
    \mathbf{v}_{\rho}=-1+\bfP_{123}^{\vee}=-q_{1}^{-1}-q_{2}^{-1}-q_{3}^{-1}+q_{12}^{-1}+q_{13}^{-1}+q_{23}^{-1}-q_{123}^{-1}.
    \eea
    Assume that for $|\rho|\leq k$, $\mathbf{v}_{\rho}$ is movable and $\tilde{\rho}=\rho+\chi^{\mu}$ is a solid partition with an additional box $\chi^{\mu}=q_{1}^{\mu_{1}-1}q_{2}^{\mu_{2}-1}q_{3}^{\mu_{3}-1}q_{4}^{\mu_{4}-1}$. We then have
    \bea
    \mathbf{v}_{\tilde{\rho}}&=\mathbf{v}_{\rho}-\chi^{\mu}+\bfP_{123}^{\vee}\bfK_{\rho}^{\vee}\chi^{\mu}+\bfP_{123}^{\vee}\chi^{\mu\vee}\bfK_{\rho}+\bfP_{123}^{\vee}\\
    &=\mathbf{v}_{\rho}-\chi^{\mu}+\bfP_{123}^{\vee}\bfK_{\tilde{\rho}}^{\vee}\chi^{\mu}+\bfP_{123}^{\vee}\chi^{\mu\vee}\bfK_{\tilde{\rho}}-\bfP_{123}^{\vee}.
    \eea
    Decompose $\bfK_{\tilde{\rho}}$ into parts $\bfK_{\tilde{\rho}}=\bfK_{\rho'}+\bfK_{\rho''}$ and $\tilde{\rho}=\rho'\sqcup\rho''$ where
    \bea
\bfK_{\rho'}=\sum_{i=1}^{\mu_{1}}\sum_{j=1}^{\mu_{2}}\sum_{k=1}^{\mu_{3}}\sum_{l=1}^{\mu_{4}}q_{1}^{i-1}q_{2}^{j-1}q_{3}^{k-1}q_{4}^{l-1},\quad \bfK_{\rho''}=\sum_{\shcube\in\rho''}\chi_{\four,1}(\hcube).
    \eea
Note that $\rho''$ here contains boxes where the coordinate obeys $(i> \mu_{1})\vee (j> \mu_{2})\vee 
    (k> \mu_{3})\vee (l> \mu_{4})$ and $\rho''\subset \rho$. We first have
    \bea
    \left[\bfP_{123}^{\vee}\bfK_{\rho''}^{\vee}\chi^{\mu}\right]^{(0)}=0,\quad \left[\bfP_{123}^{\vee}\chi^{\mu\vee}\bfK_{\rho''}\right]^{(0)}=0.
    \eea
   The remaining terms are computed as
   \bea
    \left[\bfP_{123}^{\vee}\bfK_{\rho'}^{\vee}\chi^{\mu}\right]^{(0)}&=\left[(1-q_{1}^{-\mu_{1}})(1-q_{2}^{-\mu_{2}})(1-q_{3}^{-\mu_{3}})\sum_{l=1}^{\mu_{4}}q_{4}^{-l+1}\chi^{\mu}\right]^{(0)}\\
    &=\left[-(1-q_{1}^{\mu_{1}})(1-q_{2}^{\mu_{2}})(1-q_{3}^{\mu_{3}})\sum_{l=1}^{\mu_{4}}q_{4}^{\mu_{4}-l+1}\right]^{(0)}=\sum_{l=1}^{\mu_{4}}\delta_{\mu_{1},\mu_{2},\mu_{3},l},\\
    \left[\bfP_{123}^{\vee}\chi^{\mu\vee}\bfK_{\rho'}\right]^{(0)}&=\left[-q_{123}^{-1}(1-q_{1}^{\mu_{1}})(1-q_{2}^{\mu_{2}})(1-q_{3}^{\mu_{3}})\sum_{l=1}^{\mu_{4}}q_{4}^{l-1}\chi^{\mu\vee}\right]^{(0)}\\
    &=\left[(1-q_{1}^{-\mu_{1}})(1-q_{2}^{-\mu_{2}})(1-q_{3}^{-\mu_{3}})\sum_{l=1}^{\mu_{4}}q_{4}^{-\mu_{4}+l}\right]^{(0)}=1-\sum_{l=1}^{\mu_{4}-1}\delta_{\mu_{1},\mu_{2},\mu_{3},l},\\
    \left[-\chi^{\mu}-\bfP_{123}^{\vee}\right]^{(0)}&=-1-\delta_{\mu_{1},\mu_{2},\mu_{3},\mu_{4}}.
   \eea
Combining with $[\mathbf{v}_{\rho}]^{(0)}=0$, we can see that $\mathbf{v}_{\tilde{\rho}}$ is movable, which concludes the inductive step.
\end{proof}

\paragraph{JK residue and sign rules}As mentioned before, with proper sign rules, the partition function is unique.
\begin{theorem}[\cite{Nekrasov:2017cih, Nekrasov:2018xsb,Monavari:2022rtf}]\label{thm:D8signindep}
    The magnificent four partition function does not depend on the choice of the square root part:
    \bea
    (-1)^{\sigma_{a}(\rho)}\mathcal{Z}^{\D8}_{\four;a}[\rho;K]=(-1)^{\sigma_{b}(\rho)}\mathcal{Z}^{\D8}_{\four;b}[\rho;K].
    \eea
\end{theorem}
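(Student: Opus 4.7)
\textbf{The plan is} to reduce the claim to the single identity
\begin{equation*}
\mathbb{I}\bigl[\mathbf{v}_a|_\rho-\mathbf{v}_b|_\rho\bigr]\ =\ (-1)^{\sigma_a(\rho)-\sigma_b(\rho)}
\end{equation*}
for any pair $a\neq b\in\four$. By transitivity it suffices to establish this for one pair, and quadrality then yields the remaining ones; concretely I would take $(a,b)=(4,3)$. The left-hand side is well defined as a rational function because both $\mathbf{v}_a|_\rho$ and $\mathbf{v}_b|_\rho$ are movable: Thm.~\ref{thm:D8movable} proves this for $a=4$, and the other three cases follow verbatim after permuting the roles of the $\epsilon_a$, so the difference is movable as well.

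\textbf{The structural step} is the observation that
\begin{equation*}
\Delta_{ab}(\rho)\ \coloneqq\ \mathbf{v}_a|_\rho-\mathbf{v}_b|_\rho\ =\ (\bfP_{\bar a}^\vee-\bfP_{\bar b}^\vee)\,\bfK_\rho^\vee\bfK_\rho
\end{equation*}
is \emph{anti-self-dual}, i.e.\ $\Delta_{ab}(\rho)^\vee=-\Delta_{ab}(\rho)$. Indeed, $\bfP_{\bar c}+\bfP_{\bar c}^\vee=\bfP_{\four}$ from \eqref{eq:ch-squareroot} gives $(\bfP_{\bar a}^\vee-\bfP_{\bar b}^\vee)^\vee=\bfP_{\bar a}-\bfP_{\bar b}=-(\bfP_{\bar a}^\vee-\bfP_{\bar b}^\vee)$, and $\bfK_\rho^\vee\bfK_\rho$ is self-dual. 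Hence the monomials of $\Delta_{ab}(\rho)$ group into $(\chi,\chi^{-1})$-pairs with opposite integer coefficients, and one may write $\Delta_{ab}(\rho)=\mathbf{Y}-\mathbf{Y}^\vee$ for some character $\mathbf{Y}$ (unique up to swapping representatives in each pair). Applying the reflection property \eqref{eq:index_reflecprop} gives
\begin{equation*}
\mathbb{I}\bigl[\Delta_{ab}(\rho)\bigr]\ =\ \frac{\mathbb{I}[\mathbf{Y}]}{\mathbb{I}[\mathbf{Y}^\vee]}\ =\ (-1)^{\operatorname{rk}\mathbf{Y}}\,(\det\mathbf{Y})^{-1},
\end{equation*}
where the parity of $\operatorname{rk}\mathbf{Y}$ is independent of the choice of representatives (a swap changes $\operatorname{rk}\mathbf{Y}$ by an even integer), and $\det\mathbf{Y}$ is a well-defined Laurent monomial modulo the CY relation $q_{\four}=1$.

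\textbf{Two subclaims remain:} (a) $\det\mathbf{Y}\equiv 1$ modulo $q_{\four}=1$; (b) $\operatorname{rk}\mathbf{Y}\equiv\sigma_a(\rho)-\sigma_b(\rho)\pmod 2$. For (a), the character $\bfP_{\bar a}^\vee-\bfP_{\bar b}^\vee$ already pairs canonically as $(\chi,q_\four^{-1}\chi^{-1})$ — a short direct calculation verifies this, e.g.\ $\bfP_{\bar 1}^\vee-\bfP_{\bar 2}^\vee=-(q_1-q_1^{-1})+(q_2-q_2^{-1})+(q_{13}-q_{13}^{-1})-(q_{23}-q_{23}^{-1})$ after the CY substitution — and multiplying by the self-dual $\bfK_\rho^\vee\bfK_\rho$ preserves this pairing, so $\det\mathbf{Y}$ is a power of $q_\four$. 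Claim (b), the genuine content of the theorem, I would prove by induction on $|\rho|$ using the ordering \eqref{eq:ordering-partition}: removing the last box $\square_0$ (with character $\chi_0=\chi_{\four,v}(\square_0)$) produces the increment
\begin{equation*}
\Delta_{ab}(\rho)-\Delta_{ab}(\tilde\rho)\ =\ (\bfP_{\bar a}^\vee-\bfP_{\bar b}^\vee)\bigl(\chi_0^\vee\bfK_{\tilde\rho}+\bfK_{\tilde\rho}^\vee\chi_0+1\bigr),
\end{equation*}
and a case analysis parallel to the four cases in the proof of Thm.~\ref{thm:M4JKpoles} (distinguishing whether $\square_0$ lies on a coordinate axis, a coordinate plane, a three-plane, or the generic locus) shows that the parity of the rank of the positive half of the increment flips by exactly $1$ precisely when $\square_0$ is a new box on the $a$- or the $b$-axis away from the origin, matching the change in $\sigma_a(\rho)-\sigma_b(\rho)$.

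\textbf{The main obstacle} is claim (b): the sign $(-1)^{\sigma_a(\rho)}$ records a very specific orientation datum of the obstruction theory on the magnificent four moduli space, whereas $\operatorname{rk}\mathbf{Y}$ arises from an intrinsically non-canonical splitting of an anti-self-dual character. Identifying their parities requires tracking, box-by-box, which monomials in $(\bfP_{\bar a}^\vee-\bfP_{\bar b}^\vee)\bfK_\rho^\vee\bfK_\rho$ coalesce and cancel. The most delicate point is precisely the axis boxes, where the pairing of $(\bfP_{\bar a}^\vee-\bfP_{\bar b}^\vee)\bfK_\rho^\vee\bfK_\rho$ becomes degenerate; these are exactly the boxes counted by $\sigma_a(\rho)$, and the careful bookkeeping there is what ultimately forces the parity of $\operatorname{rk}\mathbf{Y}$ to agree with $\sigma_a(\rho)-\sigma_b(\rho)$.
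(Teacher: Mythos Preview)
Your overall plan matches the paper's: reduce to $\mathbb{I}[\Delta_{ab}(\rho)]=(-1)^{\sigma_a-\sigma_b}$, exploit anti-self-duality to write $\Delta_{ab}=\mathbf{Y}-\mathbf{Y}^\vee$, then induct on $|\rho|$. The key simplification you miss is the explicit factorization
\begin{equation*}
\bfP_{123}^\vee-\bfP_{124}^\vee=\bfP_{12}^\vee(q_4^{-1}-q_3^{-1})=\bfP_{12}\,q_3-\bfP_{12}^\vee q_3^{-1}
\end{equation*}
(the second equality uses $q_4^{-1}=q_{123}$). This hands you a \emph{canonical} splitting with $\mathbf{Y}=\bfP_{12}q_3\,\bfK_\rho^\vee\bfK_\rho$: your subclaim~(a) is then immediate since $\det(\bfP_{12}q_3)=1$ and $\operatorname{rk}(\bfP_{12}q_3)=0$, and subclaim~(b) reduces via Prop.~\ref{prop:reflection-mod} to computing the constant term $[\bfP_{12}^\vee q_3^{-1}\bfK_\rho^\vee\bfK_\rho]^{(0)}$ modulo~2. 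The paper's induction step then uses the hyperrectangle decomposition $\bfK_{\tilde\rho}=\bfK_{\rho'}+\bfK_{\rho''}$ from the proof of Thm.~\ref{thm:D8movable}, not the JK-pole case analysis of Thm.~\ref{thm:M4JKpoles} that you invoke.

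Your characterization of when the parity flips is also wrong: $\sigma_a(\rho)$ counts boxes with $x_b=x_c=x_d<x_a$ --- the \emph{shifted-diagonal} condition --- not boxes on the $a$-axis (which would be $x_b=x_c=x_d=1$). Adding a box at $(\mu_1,\mu_2,\mu_3,\mu_4)$ increments $\sigma_4$ exactly when $\mu_1=\mu_2=\mu_3<\mu_4$, and this is precisely what the constant-term computation reproduces. Your abstract pairing would ultimately lead to the same answer, but without the factorization the bookkeeping is substantially heavier, and the case structure you sketch does not align with the actual combinatorics.
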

\begin{proof}
    We follow the proof of \cite{Nekrasov:2018xsb, Monavari:2022rtf}. It is enough to focus on $a=4,b=3$ without loss of generality. Define 
    \bea
    \mathbf{v}_{\rho}^{a}=-(1-K^{-1})\bfK_{\rho}+\bfP_{\bar{a}}^{\vee}\bfK_{\rho}^{\vee}\bfK_{\rho},\quad \bfK_{\rho}=\sum_{(i,j,k,l)\in\rho}q_{1}^{i-1}q_{2}^{j-1}q_{3}^{k-1}q_{4}^{l-1}
    \eea
    where we set the Coulomb moduli to be trivial because the partition function does not depend on it. We have
    \bea
    \mathbf{v}_{\rho}^{4}-\mathbf{v}_{\rho}^{3}&=(\bfP_{123}^{\vee}-\bfP_{124}^{\vee})\bfK_{\rho}^{\vee}\bfK_{\rho}=\bfP_{12}^{\vee}(q_{4}^{-1}-q_{3}^{-1})\bfK_{\rho}^{\vee}\bfK_{\rho}\\
    &=-\bfP_{12}^{\vee}q_{3}^{-1}\bfK_{\rho}^{\vee}\bfK_{\rho}+\bfP_{12}q_{3}\bfK_{\rho}^{\vee}\bfK_{\rho}.
    \eea
    Thus, we need to proof
    \bea
    \mathbb{I}[\mathbf{v}^{4}_{\rho}-\mathbf{v}^{3}_{\rho}]&=(-1)^{\sigma_{3}(\rho)-\sigma_{4}(\rho)}
    \eea
    which is equivalent to 
    \bea
    \mathbb{I}[\mathbf{v}^{4}_{\rho}-\mathbf{v}^{3}_{\rho}]&=
    -\sigma_{4}(\rho)+\sigma_{3}(\rho)&=[\bfP_{12}q_{3}\bfK_{\rho}^{\vee}\bfK_{\rho}]^{(0)}=[\bfP_{12}^{\vee}q_{3}^{-1}\bfK_{\rho}^{\vee}\bfK_{\rho}]^{(0)},\quad \mod2
    \eea
    after using Prop.~\ref{prop:reflection-mod}.
    Let us prove this by induction. For $|\rho|=1$, we have
    \bea
   \, [\bfP_{12}q_{3}]^{(0)}=0,\quad \sigma_{3}(\rho)=\sigma_{4}(\rho)=0.
    \eea
    Assume that for $|\rho|\leq k$ the claim is true and consider a solid partition $\tilde{\rho}=\rho+\chi^{\mu}$ where $\rho$ is a solid partition with size $k$ and $\chi^{\mu}=q_{1}^{\mu_{1}-1}q_{2}^{\mu_{2}-1}q_{3}^{\mu_{3}-1}q_{4}^{\mu_{4}-1},\,\,(\mu_{1,2,3,4}\in\mathbb{Z}_{\geq 1})$ is an additional box added to the solid partition. We then have
    \bea
    \mathbf{v}_{\tilde{\rho}}^{4}-\mathbf{v}^{3}_{\tilde{\rho}}&=\mathbf{v}_{\rho}^{4}-\mathbf{v}_{\rho}^{3}+\bfP_{12}^{\vee}(q_{4}^{-1}-q_{3}^{-1})\left(\chi^{\mu\vee}\bfK_{\tilde{\rho}}+\bfK_{\tilde{\rho}}^{\vee}\chi^{\mu}\right)-\bfP_{12}^{\vee}(q_{4}^{-1}-q_{3}^{-1}).
    \eea
    By the induction step, we have
    \bea
    \mathbb{I}[\mathbf{v}_{\rho}^{4}-\mathbf{v}^{3}_{\rho}]=(-1)^{\sigma_{3}(\rho)-\sigma_{4}(\rho)},\quad \mathbb{I}[-\bfP_{12}^{\vee}(q_{4}^{-1}-q_{3}^{-1})]=+1.
    \eea
    Focusing on the remaining terms, we need to study the following unmovable terms
    \bea
\,[\bfP_{12}^{\vee}q_{3}^{-1}\chi^{\mu\vee}\bfK_{\tilde{\rho}}+\bfP_{12}^{\vee}q_{4}^{-1}\bfK_{\tilde{\rho}}\chi^{\mu\vee}]^{(0)}.
    \eea
   Similar to the proof in Thm.~\ref{thm:D8movable}, decompose $\bfK_{\tilde{\rho}}$ into parts $\bfK_{\tilde{\rho}}=\bfK_{\rho'}+\bfK_{\rho''}$ and $\tilde{\rho}=\rho'\sqcup\rho''$. 
    Similar to the proof in Thm.~\ref{thm:D8movable}, we have
    \bea
\,    [\bfP_{12}^{\vee}q_{3}^{-1}\chi^{\mu\vee}\bfK_{\tilde{\rho}}]^{(0)}&=
    [\bfP_{12}^{\vee}q_{3}^{-1}\chi^{\mu\vee}\bfK_{\rho'}]^{(0)}=
    \left[\bfP_{12}^{\vee}q_{3}^{-1}\chi^{\mu\vee}\frac{(1-q_{1}^{\mu_{1}})(1-q_{2}^{\mu_{2}})}{\bfP_{12}}\sum_{k,l}q_{3}^{k-1}q_{4}^{l-1}\right]^{(0)}\\
    &=\left[(1-q_{1}^{-\mu_{1}})(1-q_{2}^{-\mu_{2}})\sum_{k=1}^{\mu_{3}}q_{3}^{-k}\sum_{l=0}^{\mu_{4}-1}q_{4}^{-l}\right]^{(0)}=\sum_{k=1}^{\mu_{3}}\sum_{l=0}^{\mu_{4}-1}\delta_{\mu_{1},\mu_{2},k,l}.
    \eea
    Similarly, we have
    \bea
\,    [\bfP_{12}^{\vee}q_{4}^{-1}\chi^{\mu\vee}\bfK_{\tilde{\rho}}]^{(0)}=\sum_{k=0}^{\mu_{3}-1}\sum_{l=1}^{\mu_{4}}\delta_{\mu_{1},\mu_{2},k,l}
    \eea
    and thus
    \bea
\relax[\bfP_{12}^{\vee}q_{3}^{-1}\chi^{\mu\vee}\bfK_{\tilde{\rho}}+\bfP_{12}^{\vee}q_{4}^{-1}\bfK_{\tilde{\rho}}\chi^{\mu\vee}]^{(0)}=\sum_{k=1}^{\mu_{3}-1}\delta_{\mu_{1},\mu_{2},k,\mu_{4}}+\sum_{l=1}^{\mu_{4}-1}\delta_{\mu_{1},\mu_{2},\mu_{3},l} \quad \mod2.
    \eea
    Therefore, we have $\sigma_{4}(\tilde{\rho})=\sigma_{4}(\rho)+\sum_{l=1}^{\mu_{4}-1}\delta_{\mu_{1},\mu_{2},\mu_{3},l}$ and $\sigma_{3}(\tilde{\rho})=\sigma_{3}(\rho)+\sum_{k=1}^{\mu_{3}-1}\delta_{\mu_{1},\mu_{2},k,\mu_{4}}$ which gives the claim.
\end{proof}

Since the partition function does not depend on the choice $a\in\four$, from now on, we will choose $a=4$ and frequently use the sign factor $\sigma_{4}(\rho)$ when necessary. The sign rule can be derived by comparing the result from the JK-residue and the index computation.
\begin{theorem}\label{thm:D8signruleJK}
The JK-residue of the integrand obeys the identity:
    \bea
\mathcal{G}^{k}\times \underset{x=x_{\rho}}{\Res}x^{-1}\mathcal{Z}_{k}^{\D8\tbar\D0}(v,Kv,x_{I})=(-1)^{\sigma_{a}(\rho)}\mathcal{Z}^{\D8}_{\four;a}[\rho,K].
    \eea
\end{theorem}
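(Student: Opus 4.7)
The strategy is to evaluate the iterated JK residue \eqref{eq:D8-iteratedresidue} using Prop.~\ref{prop:residuegeneral} step by step, tracking the signs that accumulate. By Thm.~\ref{thm:D8signindep}, it suffices to prove the identity for a single choice of $a$, and I would take $a = 4$ so that the character $\mathbf{v}^{\D8}_{\text{inst.}} = -\bfN^{\vee}\bfK + \bfP_{123}^{\vee}\bfK^{\vee}\bfK$ from \eqref{eq:M4inst-character} agrees on the nose with the definition of $\mathcal{Z}^{\D8}_{\four;4}[\rho,K]$ in \eqref{eq:mag4Nekrasovfact} upon the substitution $\bfK \to \bfK|_{\rho} = \sum_{\shcube\in\rho}\chi_{\four,v}(\hcube)$.

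At each step $I$ of the iteration, I would isolate the $x_{I}$-dependent portion of $\mathbf{v}^{\D8}_{\text{inst.}}$ (with $x_{J}$ for $J<I$ already specialized to the pole values $\chi_{\four,v}(\hcube_{J})$) and decompose it as $\bfX^{\vee}x_{I} + \bfY x_{I}^{-1} + (\text{$x_{I}$-independent})$. Explicitly, $\bfX^{\vee} = -\bfN^{\vee} + \bfP_{123}^{\vee}\bfK_{\rho_{<I}}^{\vee}$ and $\bfY = \bfP_{123}^{\vee}\bfK_{\rho_{<I}}$, where $\bfK_{\rho_{<I}}$ collects the boxes already placed. Prop.~\ref{prop:residuegeneral} then yields $\underset{x_{I}=x_{I\ast}}{\Res}x_{I}^{-1}\cdot(\cdots) = \pm\,\mathbb{I}[\text{updated character}]$, with the sign $+1$ for an $\bfX$-type pole ($x_{I\ast}$ arising from the ``negative'' part of $\bfX$, i.e.\ from $-\bfN$ or from the $-q_{a}\bfK_{\rho_{<I}}$ summands with $a\in\{1,2,3\}$) and $-1$ for a $\bfY$-type pole (from the $-q_{123}^{-1}\bfK_{\rho_{<I}} = -q_{4}\bfK_{\rho_{<I}}$ contribution). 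Iterating this through all $k$ boxes converts the JK residue into the index $\mathbb{I}[\mathbf{v}^{\D8}_{\text{inst.}}|_{\rho}] = \mathcal{Z}^{\D8}_{\four;4}[\rho,K]$ multiplied by the product of the accumulated signs, which is well-defined thanks to the movability of $\mathbf{v}^{\D8}_{\text{inst.}}|_{\rho}$ established in Thm.~\ref{thm:D8movable}. The $\mathcal{G}^{k}$ prefactor and the constant $-k$ in the character conspire to regularize the naive zeros coming from unmovable pieces appearing transiently during the iteration.

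The combinatorial heart of the proof is then to show that the total number of $\bfY$-type residues equals $\sigma_{4}(\rho)\bmod 2$. With the ordering \eqref{eq:ordering-partition}, a box $\hcube^{\mu}=(\mu_{1},\mu_{2},\mu_{3},\mu_{4})$ is added only after all boxes with smaller fourth coordinate; the pole value $x_{I\ast} = \chi_{\four,v}(\hcube^{\mu})$ can simultaneously be an $\bfX$-type pole (from the parent $\hcube^{\mu}-\delta_{a}$ for some $a\in\{1,2,3\}$, if that parent lies in $\rho_{<I}$) and a $\bfY$-type pole (from the parent $\hcube^{\mu}-\delta_{4}$). The coefficient of $x_{I\ast}$ in $\bfX$ and in $\bfY$ can be computed as an alternating sum over subsets $S\subseteq\{1,2,3\}$ of indicator functions $[\hcube^{\mu} + \delta_{4} - \sum_{a\in S}\delta_{a}\in\rho_{<I}]$, and these sums collapse drastically once the solid-partition condition of $\rho$ is imposed. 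The outcome—which can be checked by the same case analysis (1D/2D/3D/4D corner) as in the proof of Thm.~\ref{thm:M4JKpoles}—is that the surviving pole is of $\bfY$-type if and only if $\mu_{1}=\mu_{2}=\mu_{3}<\mu_{4}$, precisely the defining condition of $\sigma_{4}$.

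The main obstacle is therefore the intricate bookkeeping of cancellations between $\bfX$- and $\bfY$-contributions at coincident pole locations, which must be carried out consistently with the JK tree structure of Lemma~\ref{lem:JKtree-structure} in all four geometric cases of Thm.~\ref{thm:M4JKpoles}. A cleaner alternative, which I would also attempt, is to dispense with the iterated residue altogether: write both sides as indices of movable characters differing by $[\bfP_{123}^{\vee}\bfK_{\rho}^{\vee}\bfK_{\rho}]^{(0)}$, then apply Prop.~\ref{prop:reflection-mod} to pick up the sign from the unmovable part, and finally identify this unmovable count with $\sigma_{4}(\rho)$ by an induction on $|\rho|$ analogous to the proof of Thm.~\ref{thm:D8signindep}.
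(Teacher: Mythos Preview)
Your proposal is correct and follows essentially the same route as the paper's proof (reproduced in Appendix~\ref{app:D8signruleJK-proof}): fix $a=4$, proceed by induction on $|\rho|$, and at each step invoke Prop.~\ref{prop:residuegeneral} to decide whether the new pole sits on the $\bfX$-side (sign $+1$) or the $\bfY$-side (sign $-1$), the latter being governed by the unmovable part of $\bfP_{123}^{\vee}\chi^{-1}\bfK_{\rho}$. Your target combinatorial statement---that the $\bfY$-type sign appears precisely when $\mu_{1}=\mu_{2}=\mu_{3}<\mu_{4}$---is exactly what the paper establishes.

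One technical caveat: your ``alternating sum over $S\subseteq\{1,2,3\}$ of $[\hcube^{\mu}+\delta_{4}-\sum_{a\in S}\delta_{a}\in\rho_{<I}]$'' is not quite the right bookkeeping, because the constraint $q_{1}q_{2}q_{3}q_{4}=1$ means a single $q$-coordinate corresponds to the whole diagonal family of positions $(\mu_{1}+t,\mu_{2}+t,\mu_{3}+t,\mu_{4}+t)$, several of which can lie in $\rho$ simultaneously. The paper handles this by decomposing $\bfK_{\rho}=\sum_{i}vq_{4}^{i-1}\bm{\pi}_{i}$ into plane-partition layers along the $q_{4}$-axis and computing the unmovable part layer by layer; this is where the condition $\mu_{1}=\mu_{2}=\mu_{3}$ emerges cleanly. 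Your ``cleaner alternative'' does not bypass this step, since the left-hand side is not directly an index---the sign information is precisely what the iterated residue supplies via Prop.~\ref{prop:residuegeneral}.
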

The residue of the left-hand side is understood as \eqref{eq:D8-iteratedresidue}:
\bea
\underset{x=x_{\rho}}{\Res}x^{-1}\mathcal{Z}_{k}^{\D8\tbar\D0}(v,Kv,x_{I})=\underset{x_{k}}{\Res}x_{k}^{-1}\underset{x_{k-1}}{\Res}x_{k-1}^{-1}\cdots \underset{x_{1}=v}{\Res}x_{1}^{-1}\mathcal{Z}_{k}^{\D8\tbar\D0}(v,Kv,x_{I})
\eea
and it is taken around poles in an ordering defined in \eqref{eq:ordering-partition}. The extra sign factors of the right hand side comes from the property discussed in Prop.~\ref{prop:residuegeneral}. Let us check low levels and explicitly see where the extra signs appear after evaluating the residue. 

For $|\rho|=1$, we only have one configuration $\rho=\{v\}$ where we identified the $q$-coordinates and the box position. The residue is given as
\bea\label{eq:D8one-inst}
\mathcal{G}\times \underset{x=v}{\Res}x^{-1}\mathcal{Z}_{1}^{\D8\tbar\D0}(v,Kv,x)&=\mathcal{G}\times \underset{x=v}{\Res}x^{-1}\frac{(1-Kv/x)}{(1-v/x)}=\mathcal{G}(1-K)\\
\mathbb{I}[-(1-K^{-1})+\bfP_{123}^{\vee}]&=\mathcal{G}(1-K),
\eea
and this is compatible with the fact that $\sigma_{4}(\rho)=0$.

A nontrivial example is when $|\rho|=2$ and $\rho=\{v,q_{4}v\}$. The residue is computed as
\bea
&\underset{x_{2}=q_{4}v}{\Res}x_{2}^{-1}\underset{x_{1}=v}{\Res}x_{1}^{-1}\mathcal{Z}_{1}^{\D8\tbar\D0}(v,Kv,x_{1,2})\\
=&\underset{x_{2}=q_{4}v}{\Res}x_{2}^{-1}\frac{(1-Kv/x_{2})}{(1-v/x_{2})}\underset{x_{1}=v}{\Res}x_{1}^{-1}\frac{(1-Kv/x_{1})}{(1-v/x_{1})}\mathcal{A}_{\mathbb{C}^{4}}\left(\frac{x_{1}}{x_{2}}\right)^{-1}\\
=&-\frac{(K-1)(K-q_{4})  (q_{1}-1)  (q_{2}-1)  (q_{3}-1) 
   q_{4}^2(q_{1} q_{2}-q_{4}) (q_{1}
   q_{3}-q_{4}) (q_{2} q_{3}-q_{4})}{(q_{4}+1)
   (q_{1}-q_{4}) (q_{1} q_{4}-1) (q_{4}-q_{2}) (q_{2}
   q_{4}-1) (q_{4}-q_{3}) (q_{3} q_{4}-1)}.
\eea
After computing the character and taking the index, one will obtain
\bea
\mathcal{G}^{2}\times\underset{x_{2}=q_{4}v}{\Res}x_{2}^{-1}\underset{x_{1}=v}{\Res}x_{1}^{-1}\mathcal{Z}_{1}^{\D8\tbar\D0}(v,Kv,x_{1,2})=\textcolor{red}{-}\mathcal{Z}_{\four;4}^{\D8}[\{v,q_{4}v\};K]
\eea
and an extra sign factor appears. This is compatible with 
\bea
\sigma_{4}(\{v,q_{4}v\})=\#\{(i,j,k,l)\in\rho\mid i=j=k<l\}=\#\{(1,1,1,2)\}=1.
\eea

Let us take a look more careful at how the residue is taken to see the origin of the sign factor. The first residue comes from the pole at the $(1-v/x_{1})$ in the denominator and thus $\underset{x_{1}=v}{\Res}x_{1}^{-1}(1-v/x_{1})^{-1}=+1$. After inserting $x_{1}=v$, the integrand is a function of $x_{2}$ which takes the form as
\bea
\underset{x_{1}=v}{\Res}x_{1}^{-1}\frac{(1-Kv/x_{1})}{(1-v/x_{1})}g_{\bar{4}}\left(\frac{x_{1}}{x_{2}}\right)^{-1}g_{\bar{4}}\left(\frac{x_{2}}{x_{1}}\right)^{-1}=\cdots \times \frac{1}{(1-q_{4}^{-1}v/x_{2})(1-q_{4}^{-1}x_{2}/v)}
\eea
and thus when taking the residue at $x_{2}=q_{4}v$, we have $\underset{x_{2}=q_{4}v}{\Res}x_{2}^{-1}(1-x_{2}/q_{4}v)^{-1}=-1$ which gives an extra sign factor. Note that to relate with the index using $\bfP_{123}^{\vee}$, we rewrote $\mathcal{A}_{\mathbb{C}^{4}}(x_{1}/x_{2})^{-1}=g_{\bar{4}}(x_{1}/x_{2})^{-1}g_{\bar{4}}(x_{2}/x_{1})^{-1}$.

For higher instantons, the discussion is generalized as follows. Let $x^{\ast}_{1},\ldots, x^{\ast}_{k}$ be an ordering \eqref{eq:ordering-partition} and each pole correspond to the $q$-coordinates of the boxes included in the solid partition.\footnote{Note that $x_{1}^{\ast}=v$ because it is the origin of the solid partition.} The residue is then computed as
\bea
\underset{x_{k}=x_{k}^{\ast}}{\Res}x_{k}^{-1}\underset{x_{k-1}=x^{\ast}_{k-1}}{\Res}x_{k-1}^{-1}\cdots \underset{x_{1}=x_{1}^{\ast}=v}{\Res}x_{1}^{-1}\mathcal{Z}_{k}^{\D8\tbar\D0}(v,Kv,\{x_{I}\}).
\eea
By definition, when taking the pole at $x_{I}=x_{I}^{\ast}$, the function only depends on the variables $x_{J}\,\,(J\geq I)$:
\bea
\underset{x_{I-1}=x^{\ast}_{I-1}}{\Res}x_{I-1}^{-1}\cdots \underset{x_{1}=x_{1}^{\ast}=v}{\Res}x_{1}^{-1}\mathcal{Z}_{k}^{\D8\tbar\D0}(v,Kv,\{x_{I}\})=\cdots \prod_{J<I}g_{\bar{4}}\left(\frac{x^{\ast}_{J}}{x_{I}}\right)^{-1}g_{\bar{4}}\left(\frac{x_{I}}{x^{\ast}_{J}}\right)^{-1}.
\eea
When taking the residue at $x_{I}=x_{I}^{\ast}$, Prop.~\ref{prop:residuegeneral} tells us that if the pole $x_{I}=x_{I}^{\ast}$ comes from the term $g_{\bar{4}}(x_{J}^{\ast}/x_{I})^{-1}$ then the residue gives a $+1$ sign, while if it comes from $g_{\bar{4}}(x_{I}/x_{J}^{\ast})^{-1}$, we get a $-1$ sign. By doing this procedure recursively, one can obtain the sign factor $\sigma_{4}(\rho)$ for a given solid partition $\rho$ and its corresponding sequence of poles $\{x^{\ast}_{i}\}_{i=1}^{k}$. See \cite[Sec.~2.4.2]{Nekrasov:2018xsb} for the original proof showing that the sign factor is really given by the formula in \eqref{eq:signfactor-def}. For completeness, the proof is reproduced in Appendix~\ref{app:D8signruleJK-proof} in our notation.

\paragraph{Plethystic exponential formula}
Another important property of the rank $n$ magnificent four partition function is that the partition function does not depend on the Coulomb branch parameters\footnote{For the moment, there is no mathematical proof for this statement yet but we still call it a theorem, a \textit{physical} theorem.} $v_{1},\ldots, v_{n}$ but only the product $\prod_{\alpha}K_{\alpha}$ \cite{Awata:2009dd,Nekrasov:2017cih,Nekrasov:2009JJM,Pomoni:2023nlf, Fasola:2023ypx}. 
\begin{theorem}\label{thm:M4PEformula}
The rank $n$ magnificent four partition function does not depend on Coulomb branch parameters:
\bea\label{eq:D8PE-generalrank}
\mathcal{Z}_{\text{inst.}}^{\D8}[\mathfrak{q},\{K_{\alpha}\}_{\alpha=1}^{n};q_{1,2,3,4}]=\PE\left[\frac{-q_{4}\prod_{i=1}^{3}(1-q_{i4}^{-1})}{\prod_{a\in\four}(1-q_{a})}\frac{1-\prod_{\alpha=1}^{n}K_{\alpha}^{-1}}{(1-\mathfrak{q})(1-\prod_{\alpha=1}^{n}K_{\alpha}^{-1}\mathfrak{q}^{-1})}\right]
\eea
where we explicitly wrote the $\mathfrak{q},\{K_{\alpha}\}_{\alpha=1}^{n},q_{1,2,3,4}$ dependence. The plethystic exponential operator here is defined as
\bea
\PE[f(x_{1},\ldots,x_{n})]=\exp\left(\sum_{\ell=1}^{\infty}\frac{1}{\ell}f(x_{1}^{l},\ldots,x_{n}^{\ell})\right).
\eea
\end{theorem}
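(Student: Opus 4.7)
My plan proceeds in three stages: establishing Coulomb branch independence of the partition function, reducing the rank-$n$ computation to the known rank-one formula, and identifying the resulting plethystic form. For the first stage, I would examine the JK-residue formula in Prop.~\ref{prop:M4contourJK}, noting that the integrand depends on the Coulomb parameters only through the ratio $\overline{P}(x)/P(x)$, whose numerator and denominator roots differ precisely by the factors $\bar{v}_{\alpha} = K_{\alpha}v_{\alpha}$. Continuously deforming one $v_{\beta}$ across another $v_{\alpha}$ exchanges which hyperplanes are enclosed by the JK prescription, but every such crossing produces a compensating residue at the ``collision'' pole $v_{\beta} = K_{\alpha}v_{\alpha}$. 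I would show that the bifundamental factor $\mathcal{Z}^{\D8\tbar\D8}_{K_{\alpha}|K_{\beta}}(v_{\alpha},\rho^{(\alpha)}\,|\,v_{\beta},\rho^{(\beta)})$ in \eqref{eq:mag4Nekrasovfact} develops precisely the zeros needed to cancel these rearrangements among the single-factor contributions, rendering the total sum invariant under the deformations.

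Having proved Coulomb independence, I would evaluate the partition function at the telescoping specialization $\bar{v}_{\alpha} = v_{\alpha+1}$, equivalently $v_{\alpha+1} = K_{1}K_{2}\cdots K_{\alpha}v_{1}$ for $\alpha=1,\ldots,n-1$. At the level of characters, this causes $\bfN = \sum_{\alpha}(1-K_{\alpha})v_{\alpha}$ to telescope to $(1-K)v_{1}$ with $K=\prod_{\alpha}K_{\alpha}$, which is formally the rank-one character with effective parameter $K$. Physically, this is the configuration in which consecutive $\overline{\D8}_{\alpha}$ and $\D8_{\alpha+1}$ coincide, so the brane-antibrane pair annihilates and only an effective rank-one $\D8$--$\overline{\D8}$ pair with combined separation $K$ remains. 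On the combinatorial side, I expect the $n$-tuple of solid partitions $\vec{\rho}$ to reorganize into a single effective solid partition and the cross terms $\mathcal{Z}^{\D8\tbar\D8}_{K_{\alpha}|K_{\alpha+1}}$ under this specialization to produce exactly the required identifications of boxes. Invoking the rank-one plethystic formula of Nekrasov--Piazzalunga~\cite{Nekrasov:2017cih,Nekrasov:2018xsb} (which in turn is proved by recognizing the sum over single solid partitions as a K-theoretic four-dimensional MacMahon-type generating function) with parameter $K = \prod_{\alpha}K_{\alpha}$ would then deliver the claimed expression.

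The main obstacle will be controlling the sign factors $(-1)^{\sigma_{a}(\vec{\rho})}$ throughout both the Coulomb deformation and the rank-reduction step, since Thm.~\ref{thm:D8signindep} only guarantees independence of the choice $a\in\four$ for a \emph{fixed} tuple $\vec{\rho}$, not for the reorganization of tuples into single solid partitions under the telescoping limit. A careful combinatorial argument is needed to show that if $\vec{\rho}$ maps to an effective solid partition $\rho_{\text{eff}}$ via the collision $\bar{v}_{\alpha}=v_{\alpha+1}$, then $\sum_{\alpha}\sigma_{a}(\rho^{(\alpha)}) \equiv \sigma_{a}(\rho_{\text{eff}}) \pmod 2$, a nontrivial statement given that the seam between consecutive components adds boxes of mixed provenance. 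A secondary difficulty is verifying that configurations with would-be doubly-counted boxes on this seam are suppressed through the movability of $\mathbf{v}^{\D8}_{\text{inst.}}$ established in Thm.~\ref{thm:D8movable} and through explicit character cancellations $(1-K_{\alpha})v_{\alpha}+(1-K_{\alpha+1})v_{\alpha+1}|_{K_{\alpha}v_{\alpha}=v_{\alpha+1}} = (1-K_{\alpha}K_{\alpha+1})v_{\alpha}$, iterated down the chain; tracking these cancellations against the JK sign prescription of Thm.~\ref{thm:D8signruleJK} is where the real work lies.
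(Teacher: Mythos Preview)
The paper presents this as a ``physical theorem'' (see the footnote immediately preceding the statement) and does not attempt a full proof: it \emph{assumes} both Coulomb-branch independence and the rank-one PE formula~\eqref{eq:D8PE-rank1}, and from these two inputs derives the rank-$n$ statement. Your stage~1---establishing Coulomb independence via compensating residues at collision poles---is precisely the unproven part, and your sketch is far from a proof: the cancellation mechanism you describe would have to be controlled at every instanton level and across every wall in $\{v_\alpha\}$-space, and the bifundamental zeros in~\eqref{eq:mag4Nekrasovfact} are not manifestly arranged to do this.

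For the reduction step the paper takes a different route. It sends the Coulomb moduli to infinity along a ray $v_\alpha = L^{N_\alpha}$ with $N_\alpha$ increasing; in the limit $L\to\infty$ each bifundamental factor collapses to a pure power $K_\beta^{|\rho^{(\alpha)}|}$, the partition function factorizes into a product of rank-one partition functions with shifted couplings $\mathfrak{q}_\alpha = \mathfrak{q}\prod_{\beta>\alpha}K_\beta$, and the claim follows from the elementary identity of Lem.~\ref{app:lemm-PEid}. Your telescoping specialization $\bar v_\alpha = v_{\alpha+1}$ is in fact simpler than you make it out to be: at that point the integrand of Prop.~\ref{prop:M4contourJK} collapses literally to the rank-one integrand with effective $K = \prod_\alpha K_\alpha$, since each numerator root $\bar v_\alpha$ cancels the next denominator root $v_{\alpha+1}$. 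Given Coulomb independence, there is no seam to track, no reorganization of tuples of solid partitions, and no separate sign-factor bookkeeping---Thm.~\ref{thm:D8signruleJK} applied to the collapsed integrand already produces the correct signs. Your anticipated obstacles in stage~2 are artifacts of reasoning at the evaluated-residue level rather than directly at the integrand.
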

For the $n=1$ case, we have
\bea\label{eq:D8PE-rank1}
\mathcal{Z}_{\text{inst.}}^{\D8}[\mathfrak{q},K;q_{1,2,3,4}]=\PE\left[\frac{-q_{4}\prod_{i=1}^{3}(1-q_{i4}^{-1})}{\prod_{a\in\four}(1-q_{a})}\frac{1-K^{-1}}{(1-\mathfrak{q})(1-K^{-1}\mathfrak{q}^{-1})}\right].
\eea
An interesting property of this relation is that the factor inside the PE operator is completely decomposed into parts depending on the equivariant parameters $\{q_{1,2,3,4}\}$ and the part depending on $K,\mathfrak{q}$. Note also that the coefficient factor not depending on $K,\mathfrak{q}$ is actually $\mathcal{G}$.

One can show the above identity \eqref{eq:D8PE-generalrank} for low levels explicitly. Let us focus first on the $n=1$ case and the one-instanton level:
\bea
\mathcal{Z}^{\D8}_{\text{inst.}}[\mathfrak{q},K;q_{1,2,3,4}]&=\PE\left[\frac{-q_{4}\prod_{i=1}^{3}(1-q_{i4}^{-1})}{\prod_{a\in\four}(1-q_{a})}\frac{1-K^{-1}}{(1-\mathfrak{q})(1-K^{-1}\mathfrak{q}^{-1})}\right]\\
&=1+\frac{(1-K)(1-q_{12})(1-q_{13})(1-q_{23})}{(1-q_{1})(1-q_{2})(1-q_{3})(1-q_{123})}\mathfrak{q}+\order{\mathfrak{q}^{2}}
\eea
where in the second line, we expanded the PE formula with respect to the powers of $\mathfrak{q}$. Obviously, this matches with the one-instanton contribution in \eqref{eq:D8one-inst}. After studying the higher instanton contributions, one will see that the extra sign factors given in \eqref{eq:signfactor-def} is crucial when using the equivariant index formalism to get this identity.

Let us move on to the higher rank case. The Coulomb branch parameter independence of the rank two magnificent four partition function can be confirmed at the one-instanton level as follows. The following two terms contribute:
\bea
\mathcal{Z}_{\four;4}^{\D8}[(v_{1},v_{2}),(\hcube,\emptyset);(K_{1},K_{2})]&=\mathcal{G}\times (1-K_{1})\frac{(1-K_{2}v_{2}/v_{1})}{1-v_{2}/v_{1}},\\
\mathcal{Z}_{\four;4}^{\D8}[(v_{1},v_{2}),(\emptyset,\hcube);(K_{1},K_{2})]&=\mathcal{G}\times (1-K_{2})\frac{(1-K_{1}v_{1}/v_{2})}{1-v_{1}/v_{2}}
\eea
and we have
\bea
\mathcal{Z}_{\four;4}^{\D8}[(v_{1},v_{2}),(\hcube,\emptyset);(K_{1},K_{2})]+\mathcal{Z}_{\four;4}^{\D8}[(v_{1},v_{2}),(\emptyset,\hcube);(K_{1},K_{2})]=\mathcal{G}\times (1-K_{1}K_{2}),
\eea
which indeed is independent of the Coulomb branch parameters. Another observation is that, comparing with \eqref{eq:D8one-inst}, the parameter $K$ is effectively modified to $K_{1}K_{2}$.

Assuming that the $\U(1|1)$ partition function obeys \eqref{eq:D8PE-rank1} and the Coulomb branch parameter independence of the partition function, we can derive the rank $n$ PE formula \eqref{eq:D8PE-generalrank}. The expanded version of the rank $n$ partition function is
\bea
\mathcal{Z}^{\D8}_{\text{inst.}}=\sum_{\{\rho^{(\alpha)}\}}\mathfrak{q}^{\sum_{\alpha}|\rho^{(\alpha)}|}\prod_{\alpha=1}^{n}(-1)^{\sigma_{4}(\rho^{(\alpha)})}\prod_{\alpha=1}^{n}\mathcal{Z}_{\four;4}^{\D8}[\rho^{(\alpha)};K_{\alpha}]\prod_{\beta>\alpha}\mathcal{Z}^{\D8\tbar\D8}_{K_{\alpha}|K_{\beta}}(v_{\alpha},\rho^{(\alpha)}\,|\,v_{\beta},\rho^{(\beta)}).
\eea
Using the Coulomb branch independence and the parametrization $v_{\alpha}=L^{N_{\alpha}}$ under the analytic region $|v_{\alpha}|<|v_{\beta}|$ for $\alpha<\beta$, which means $N_{\alpha}<N_{\beta}\,(\alpha<\beta)$, the limit $L\rightarrow \infty$ gives
\bea
\mathcal{Z}^{\D8\tbar\D8}_{K_{\alpha}|K_{\beta}}(v_{\alpha},\rho^{(\alpha)}\,|\,v_{\beta},\rho^{(\beta)})&=\prod_{\shcube'\in\rho^{(\beta)}}\left(\frac{1-K_{\alpha}v_{\alpha}/\chi_{\four,v_{\beta}}(\hcube')}{1-v_{\alpha}/\chi_{\four,v_{\beta}}(\hcube')}\right)\prod_{\shcube\in\rho^{(\alpha)}}\left(K_{\beta}\frac{1-K_{\beta}^{-1}\chi_{\four,v_{\alpha}}(\hcube)/v_{\beta}}{1-\chi_{\four,v_{\alpha}}(\hcube)/v_{\beta}}\right)\\
    &\qquad\qquad\times \prod_{\substack{\shcube\in\rho^{(\alpha)}\\\shcube'\in\rho^{(\beta)}}}\mathcal{A}_{\mathbb{C}^{4}}\left(\frac{\chi_{\four,v_{\alpha}}(\hcube)}{\chi_{\four,v_{\beta}}(\hcube')}\right)^{-1}\\
    &\longrightarrow K_{\beta}^{|\rho^{(\alpha)}|}
\eea
where we used
\bea
\mathcal{A}_{\mathbb{C}^{4}}(x)^{-1}\xrightarrow{x\rightarrow 0}1,\quad \frac{1-Kx}{1-x}\xrightarrow{x\rightarrow 0}1,\quad K\frac{1-K^{-1}x}{1-x}\xrightarrow{x\rightarrow 0}K.
\eea
Therefore, we have
\bea
\mathcal{Z}^{\D8}_{\text{inst.}}&=\lim_{L\rightarrow \infty}\sum_{\vec{\rho}}\mathfrak{q}^{|\vec{\rho}|}\prod_{\alpha=1}^{n}(-1)^{\sigma_{4}(\rho^{(\alpha)})}\prod_{\alpha=1}^{n}\mathcal{Z}_{\four;4}^{\D8}[\rho^{(\alpha)};K_{\alpha}]\prod_{\beta>\alpha}\mathcal{Z}^{\D8\tbar\D8}_{K_{\alpha}|K_{\beta}}(v_{\alpha},\rho^{(\alpha)}\,|\,v_{\beta},\rho^{(\beta)})\\
&=\sum_{\vec{\rho}}\prod_{\alpha}\mathfrak{q}^{|\rho^{(\alpha)}|}\prod_{\alpha=1}^{n}(-1)^{\sigma_{4}(\rho^{(\alpha)})}\prod_{\alpha=1}^{n}\mathcal{Z}_{\four;4}^{\D8}[\rho^{(\alpha)};K_{\alpha}]\prod_{\beta>\alpha}K_{\beta}^{|\rho^{(\alpha)}|}\\
&=\sum_{\vec{\rho}}\prod_{\alpha}\left(\mathfrak{q}\prod_{\beta=\alpha+1}^{n}K_{\beta}\right)^{|\rho^{(\alpha)}|}\prod_{\alpha=1}^{n}(-1)^{\sigma_{4}(\rho^{(\alpha)})}\prod_{\alpha=1}^{n}\mathcal{Z}_{\four;4}^{\D8}[\rho^{(\alpha)};K_{\alpha}]\\
&=\prod_{\alpha=1}^{n}\mathcal{Z}^{\D8}_{\text{inst.}}[\mathfrak{q}_{\alpha},K_{\alpha};q_{1,2,3,4}],
\eea
where we defined $\mathfrak{q}_{\alpha}=\mathfrak{q}\prod_{\beta>\alpha}K_{\beta}$. Using the PE formula \eqref{eq:D8PE-rank1}, we then have
\bea
\mathcal{Z}^{\D8}_{\text{inst.}}=\PE\left[\mathcal{G}\sum_{\alpha=1}^{n}\frac{1-K_{\alpha}^{-1}}{(1-\mathfrak{q}_{\alpha})(1-K_{\alpha}^{-1}\mathfrak{q}_{\alpha}^{-1})}\right]=\PE\left[\mathcal{G}\frac{1-\prod_{\alpha}K_{\alpha}^{-1}}{(1-\mathfrak{q})(1-\prod_{\alpha}K_{\alpha}^{-1}\mathfrak{q}^{-1})}\right]
\eea
where we used the identity (see Lem.~\ref{app:lemm-PEid})
\bea\label{eq:PEidentityproof}
\sum_{\alpha=1}^{n}\frac{1-K_{\alpha}^{-1}}{(1-\mathfrak{q}_{\alpha})(1-K_{\alpha}^{-1}\mathfrak{q}_{\alpha}^{-1})}=\frac{1-\prod_{\alpha=1}^{n}K_{\alpha}^{-1}}{(1-\mathfrak{q})(1-\prod_{\alpha=1}^{n}K_{\alpha}^{-1}\mathfrak{q}^{-1})}
\eea
which gives \eqref{eq:D8PE-generalrank}.

\paragraph{Specialization of PE formulas}
When we tune the parameters $K_{\alpha}$ to special values, the rank $n$ magnificent four partition simplifies. Using Thm.~\ref{thm:M4PEformula}, the rank $n$ partition function with variables $\{K_{\alpha}\}$ is equivalent to the rank one partition function with $K=\prod_{\alpha}K_{\alpha}$. Thus, it is enough to focus on the rank one magnificent four partition function. The PE formula \eqref{eq:D8PE-rank1} shows that when we set $K=1$, the partition function vanishes:
\bea
\mathcal{Z}^{\D8}_{\text{inst.}}[\mathfrak{q},1;q_{1,2,3,4}]=1.
\eea
Physically, this means that when we set the distance between the D8 and $\overline{\text{D8}}$ branes to be zero, which corresponds to $K=1$, the partition function vanishes. This property resembles the fact that when we place a pair of D-brane and ghost (\textit{not anti}) D-brane located at the same position, the system is equivalent to the closed string vacuum \cite{Okuda:2006fb} (see \cite{Vafa:2001qf,Dijkgraaf:2016lym,Kimura:2019msw,Kimura:2023iup} for related references). On the other hand, when we tune the parameter as $K=q_{a}\,(a\in\four)$, the system actually reduces to a D0-D6 system with a D6-brane wrapping $\mathbb{C}^{3}_{\bar{a}}$ (see section~\ref{sec:tetrahedronindex}). Moreover, if we choose $K=q_{1}^{N_{1}}q_{2}^{N_{2}}q_{3}^{N_{3}}q_{4}^{N_{4}}$, we will get a tetrahedron instanton system with $N_{a}$ D6$_{\bar{a}}$-branes. Note that due to the condition $q_{1}q_{2}q_{3}q_{4}=1$, the tetrahedron system with $N_{1}=N_{2}=N_{3}=N_{4}$ is also expected to be equivalent to the string vacuum. This mechanism was conjectured in \cite{Nekrasov:2017cih,Nekrasov:2018xsb} (see also \cite{Pomoni:2023nlf}) to be due to the tachyon condensation of the D8 and $\overline{\text{D8}}$ branes. Combining with the previous result, the magnificent four setup seems to contain both properties of the ghost D-brane and anti D-brane. The complete mechanism happening here is still not so clear and is left for future work.

\paragraph{Removing $\overline{\text{D8}}$-branes} Instead of keeping $K$ to be finite variables, one would like to ask if we can take the limit $K\rightarrow 0,\infty$. Since the parameter $K$ physically is the (exponentiated) distance between the D8 and $\overline{\text{D8}}$ branes, this limit corresponds to removing the $\overline{\text{D8}}$-brane from the system. Generally, such kind of limit might introduce Chern--Simons terms to the system. 

Let us first consider the rank one case and take the limit $K\rightarrow0$. Namely, we are interested in the system coming from the contour integral Prop.~\ref{prop:M4noantiDbrane}. The pole structure in Thm.~\ref{thm:M4JKpoles} does not change and the sign rule \eqref{eq:signfactor-def} is still true. In this case, no additional Chern--Simons term appears and the partition function is given as
\bea
\mathcal{Z}^{\D8}_{\text{inst.}}=\sum_{\rho}\mathfrak{q}^{|\rho|}(-1)^{\sigma_{a}(\rho)}\mathcal{Z}_{\four;a}^{\D8}[\rho],\quad \mathcal{Z}_{\four;a}^{\D8}[\rho]=\mathbb{I}\left[-x^{-1}\bfK_{\rho}+\bfP_{\bar{a}}^{\vee}\bfK_{\rho}^{\vee}\bfK_{\rho}\right]
\eea
where $\bfK_{\rho}=\sum_{\shcube\in\rho}\chi_{\four,x}(\hcube)$. One can show that we also have a PE formula 
\bea
\mathcal{Z}^{\D8}_{\text{inst.}}=\PE\left[\frac{-q_{4}\prod_{i=1}^{3}(1-q_{i4}^{-1})}{\prod_{a\in\four}(1-q_{a})}\frac{\mathfrak{q}}{(1-\mathfrak{q})}\right].
\eea
For example, this can be confirmed for the one-instanton level as
\bea
\mathcal{Z}^{\D8}_{\text{inst.}}&=\PE\left[\frac{-q_{4}\prod_{i=1}^{3}(1-q_{i4}^{-1})}{\prod_{a\in\four}(1-q_{a})}\frac{\mathfrak{q}}{(1-\mathfrak{q})}\right]=1+\frac{(1-q_{12})(1-q_{13})(1-q_{23})}{(1-q_{1})(1-q_{2})(1-q_{3})(1-q_{123})}\mathfrak{q}+\mathcal{O}(\mathfrak{q}^{2}).
\eea
Note that this result can be also obtained by simply setting $K=0$ in \eqref{eq:D8one-inst}.

The partition function of the $n$ D8-branes setup is given by
\bea
&\mathcal{Z}^{\D8}_{\text{inst.}}=\sum_{\vec{\rho}}\mathfrak{q}^{|\vec{\rho}|}(-1)^{\sigma_{4}(\vec{\rho})}\mathcal{Z}^{\D8}_{\four;4}[\vec{v},\vec{\rho}],\quad \mathcal{Z}^{\D8}_{\four;4}[\vec{v},\vec{\rho}]=\prod_{\alpha=1}^{n}\mathcal{Z}_{\four;4}^{\D8}[\rho^{(\alpha)}]\prod_{\beta>\alpha}\mathcal{Z}^{\D8\tbar\D8}(v_{\alpha},\rho^{(\alpha)}\,|\,v_{\beta},\rho^{(\beta)}),\\
    &\mathcal{Z}^{\D8\tbar\D8}(x_{1},\rho^{(1)}\,|\,x_{2},\rho^{(2)})=\mathbb{I}\left[-x_{1}^{-1}\bfK_{\rho^{(2)}}-x_{2}^{-1}\bfK_{\rho^{(1)}}+\bfP_{\four}\bfK^{\vee}_{\rho^{(1)}}\bfK_{\rho^{(2)}}\right].
\label{eq:mag4Nekrasovfact-Kzero}
\eea
Actually, one can show that this partition function also does not depend on the Coulomb branch parameters $\{v_{\alpha}\}_{\alpha=1}^{n}$. For example, the one-instanton level of the rank two case gives
\bea
\,&\mathcal{Z}_{\four;4}^{\D8}[(v_{1},v_{2}),(\hcube,\emptyset)]=\mathcal{G}\times \frac{1}{1-v_{2}/v_{1}},\\
\,&\mathcal{Z}_{\four;4}^{\D8}[(v_{1},v_{2}),(\emptyset,\hcube)]=\mathcal{G}\times \frac{1}{1-v_{1}/v_{2}},\\
\,&\mathcal{Z}_{\four;4}^{\D8}[(v_{1},v_{2}),(\hcube,\emptyset)]+\mathcal{Z}_{\four;4}^{\D8}[(v_{1},v_{2}),(\emptyset,\hcube)]=\mathcal{G},
\eea
which is independent from $v_{1},v_{2}$. Generally, one can show that the partition function of the $n$ D8-branes setup has the following PE formula
\bea
\mathcal{Z}^{\D8}_{\text{inst.}}=\sum_{\vec{\rho}}\mathfrak{q}^{|\vec{\rho}|}(-1)^{\sigma_{4}(\vec{\rho})}\mathcal{Z}^{\D8}_{\four;4}[\vec{v},\vec{\rho}]=\PE\left[\frac{-q_{4}\prod_{i=1}^{3}(1-q_{i4}^{-1})}{\prod_{a\in\four}(1-q_{a})}\frac{\mathfrak{q}}{(1-\mathfrak{q})}\right].
\eea
We have confirmed this equality at low instanton levels for various cases.\footnote{Checking the rank two case is already enough, because one can then recursively use this identity to show that the rank $n$ case gives the same formula.} A different way to obtain this relation is to simply set $K_{\alpha}=0$ in \eqref{eq:D8PE-generalrank} and the proof of Thm.~\ref{thm:M4PEformula}. This formula is surprising because it means that without the $\overline{\text{D8}}$-branes, no matter how many D8-branes we place, we effectively get the contribution of only one D8-brane. We leave the physical explanation of this phenomenon for future work.

\subsection{Tetrahedron instanton}\label{sec:tetrahedronindex}
The contour integral formula of the tetrahedron instanton system in Prop.~\ref{prop:tetracontourJK} is rewritten using the structure functions as
\bea
   \mathcal{Z}^{\D6}_{\text{inst.}}=\sum_{k=0}^{\infty}\mathfrak{q}^{k}\mathcal{Z}^{\D6}_{k},\quad  \mathcal{Z}^{\D6}_{k}=\frac{\mathcal{G}^{k}}{k!}\oint_{\text{JK}} 
\prod_{I=1}^{k}\frac{dx_{I}}{2\pi i x_{I}}\mathcal{Z}_{k}^{\D6\tbar\D0}(v_{\bar{a},\alpha},x_{I})   \label{eq:D6integralmod}
\eea
where
\bea
\mathcal{Z}_{k}^{\D6\tbar\D0}(v_{\bar{a},\alpha},x_{I})=\prod_{a\in\four}\prod_{\alpha=1}^{n_{\bar{a}}}\mathscr{V}_{a}\left(\frac{v_{\bar{a},\alpha}}{x_{I}}\right)\prod_{I<J}\mathcal{A}_{\mathbb{C}^{4}}\left(\frac{x_{I}}{x_{J}}\right)^{-1}.
\eea
In terms of character, we can write it as
\bea
\mathcal{Z}_{k}^{\D6}=\frac{1}{k!}\oint_{\text{JK}}\prod_{I=1}^{k}\frac{dx_{I}}{2\pi i x_{I}}\mathbb{I}\left[\mathbf{v}^{\D6}_{\text{inst.}}-k\right],\quad \mathbf{v}^{\D6}_{\text{inst.}}=-\sum_{a\in\four}\bfP_{a}^{\vee}\bfN_{\bar{a}}^{\vee}\bfK+\sqrt{\bfP_{\four}\bfK^{\vee}\bfK}
\eea
where 
\bea
\bfN_{\bar{a}}=\sum_{\alpha=1}^{n_{\bar{a}}}e^{\mathfrak{a}_{\bar{a},\alpha}}=\sum_{\alpha=1}^{n_{\bar{a}}}v_{\bar{a},\alpha},\quad \bfK=\sum_{I=1}^{k}e^{\phi_{I}}=\sum_{I=1}^{k}x_{I}
\eea
and the square root part is written schematically.

From Thm.~\ref{thm:tetraJKpoles}, the poles are classified by plane partitions:
\bea
&\underline{\vec{v}}=(\vec{v}_{\bar{a}})_{a\in\four}=(v_{\bar{a},\alpha})_{a\in\four}^{\alpha=1,\ldots,n_{\bar{a}}},\quad \underline{\vec{\pi}}=(\vec{\pi}_{\bar{a}})_{a\in\four}=(\pi^{(\alpha)}_{\bar{a}})_{a\in\four}^{\alpha=1,\ldots, n_{\bar{a}}},\quad |\underline{\vec{\pi}}|=\sum_{a\in\four}\sum_{\alpha=1}^{n_{\bar{a}}}|\pi_{\bar{a}}^{(\alpha)}|,\\
&\{x_{I}\}_{I=1,\ldots,k}\longrightarrow \{\chi_{\bar{a},v_{\bar{a},\alpha}}(\cube)\}_{a\in\four,\,\,\scube\,\in \pi_{\bar{a}}^{(\alpha)}}^{\alpha=1,\ldots,n_{\bar{a}}},\quad \chi_{abc,v}(\cube)=vq_{a}^{i-1}q_{b}^{j-1}q_{c}^{k-1}.
\eea
After evaluation of the poles, the character becomes
\bea
\bfK|_{\vec{\underline{\vec{\pi}}}}=\sum_{a\in\four}\bfK_{\bar{a}}|_{\vec{\pi}_{a}},\quad \left.\bfK_{\bar{a}}\right|_{\vec{\pi}_{\bar{a}}}=\sum_{\alpha=1}^{n_{\bar{a}}}\sum_{\scube\in\pi_{\bar{a}}^{(\alpha)}}\chi_{\bar{a},v_{\bar{a},\alpha}}(\cube),\quad a\in\four.
\eea
The correct choice of the square root part that does not give any sign factor (see Thm.~\ref{thm:D6signruleJK}) is
\bea
    \mathbf{v}^{\D6}_{\text{inst.}}&=\sum_{a\in\four}\left(-\mathbf{P}_{a}^{\vee}\bfN_{\bar{a}}^{\vee}\bfK_{\bar{a}}+\frac{\bfP_{\four}}{\bfP^{\vee}_{
a}}\bfK_{\bar{a}}^{\vee}\bfK_{\bar{a}}\right)-\sum_{a\in\four}\sum_{b\neq a}\bfP_{a}^{\vee}\bfN_{\bar{a}}^{\vee}\bfK_{\bar{b}}+\sum_{a<b}\bfP_{\four}\bfK_{\bar{a}}^{\vee}\bfK_{\bar{b}}
\eea
where
\bea
\sqrt{\bfP_{\four}\bfK^{\vee}\bfK}=\sum_{a\in\four}\frac{\bfP_{\four}}{\bfP^{\vee}_{
a}}\bfK_{\bar{a}}^{\vee}\bfK_{\bar{a}}+\sum_{a<b}\bfP_{\four}\bfK_{\bar{a}}^{\vee}\bfK_{\bar{b}}.
\eea
Note that we have
\bea
\left(\sum_{a\in\four}\frac{\bfP_{\four}}{\bfP^{\vee}_{
a}}\bfK_{\bar{a}}^{\vee}\bfK_{\bar{a}}+\sum_{a<b}\bfP_{\four}\bfK_{\bar{a}}^{\vee}\bfK_{\bar{b}}\right)+\left(\sum_{a\in\four}\frac{\bfP_{\four}}{\bfP^{\vee}_{
a}}\bfK_{\bar{a}}^{\vee}\bfK_{\bar{a}}+\sum_{a<b}\bfP_{\four}\bfK_{\bar{a}}^{\vee}\bfK_{\bar{b}}\right)^{\vee}=\bfP_{\four}\bfK^{\vee}\bfK.
\eea
The relationship between the JK residue and the character is given below.

\begin{proposition}
The JK residue of the D6--D0 integrand is related to the index of the character as
\bea
    \mathcal{G}^{k}\times \underset{x=x_{\underline{\vec{\pi}}}}{\Res}x^{-1}\mathcal{Z}_{k}^{\D6\tbar\D0}(v_{A,\alpha},x_{I})=\mathcal{Z}^{\D6}_{\text{tet.inst.}}[\vec{\underline{v}},\vec{\underline{\lambda}}],
\eea    
where
\bea\label{eq:D6tetinst_partfunct}
    \mathcal{Z}_{\text{tet.inst.}}^{\D6}[\vec{\underline{v}},\vec{\underline{\pi}}]&=\prod_{a\in\four}\prod_{\alpha=1}^{n_{\bar{a}}}\widetilde{\mathcal{Z}}^{\D6}_{\bar{a}}[\pi_{\bar{a}}^{(\alpha)}]\prod_{a\in\four}\prod_{1\leq\alpha<\beta\leq n_{\bar{a}}}\mathcal{Z}^{\D6\tbar\D6}_{\bar{a}|\bar{a}}(v_{\bar{a},\alpha},\pi_{\bar{a}}^{(\alpha)}\,|\,v_{\bar{a},\beta},\pi_{\bar{a}}^{(\beta)}) \\
    &\qquad \times \prod_{a<b}\prod_{\alpha=1}^{n_{\bar{a}}}\prod_{\beta=1}^{n_{\bar{b}}}\mathcal{Z}^{\D6\tbar\D6}_{\bar{a}|\bar{b}}(v_{\bar{a},\alpha},\pi_{\bar{a}}^{(\alpha)}\,|\,v_{\bar{b},\beta},\pi_{\bar{b}}^{(\beta)}),
\eea
and
\bea
    \widetilde{\mathcal{Z}}^{\D6}_{\bar{a}}[\pi]&=\mathbb{I}\left[-\bfP_{a}^{\vee}\bfN_{\bar{a}}^{\vee}\bfK_{\bar{a},\pi}+\bfP_{\bar{a}}^{\vee}\bfK_{\bar{a},\pi}^{\vee}\bfK_{\bar{a},\pi}\right],\\
    \mathcal{Z}_{\bar{a}|\bar{b}}^{\D6\tbar\D6}(v_{1},\pi^{(1)}\,|\,v_{2},\pi^{(2)})&=\mathbb{I}\left[-\bfP_{a}^{\vee}\bfN_{\bar{a}}^{\vee}\bfK_{\bar{b},\pi^{(2)}}-\bfP_{b}^{\vee}\bfN_{\bar{b}}^{\vee}\bfK_{\bar{a},\pi^{(1)}}+\bfP_{\four}\bfK_{\bar{a},\pi^{(1)}}^{\vee}\bfK_{\bar{b},\pi^{(2)}}\right].
\eea

\end{proposition}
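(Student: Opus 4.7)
The plan is to parallel the strategy used for the magnificent four case (Thm~\ref{thm:D8signruleJK}), but to establish the sharper statement that no extra sign factors appear. First, I would rewrite the integrand via the character identity
\bea
\mathcal{Z}_{k}^{\D6\tbar\D0}(v_{\bar{a},\alpha},x_{I}) \;=\; \mathbb{I}[\mathbf{v}^{\D6}_{\text{inst.}} - k]
\eea
where the chosen decomposition of $\sqrt{\bfP_{\four}\bfK^{\vee}\bfK}$ is used for the $\bfK^{\vee}\bfK$ contribution. Since the decomposition $\bfK = \sum_{a\in\four}\bfK_{\bar{a}}$ mirrors the partition of the ADHM variables $\{x_{I}\}$ according to which D6-brane the instanton is attached to (by Thm~\ref{thm:tetraJKpoles}), this decomposition is consistent with the block structure of the poles.

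Next I would evaluate the iterated residue box-by-box along the ordering \eqref{eq:ordering-partition} inherited from the embedding of each plane partition $\pi_{\bar{a}}^{(\alpha)}$ into the solid lattice in the subspace $\bar{a}$. Re-examining the case analysis in the proof of Thm~\ref{thm:tetraJKpoles}, the surviving pole for a new box $\cube \in \pi_{\bar{a}}^{(\alpha)}$ is always picked up either from the denominator $(1-v_{\bar{a},\alpha}/x_{I})$ of $\mathscr{V}_{a}(v_{\bar{a},\alpha}/x_{I})$ (when $\cube$ is the origin of the partition), or from a factor $(1-\chi_{\bar{a},v_{\bar{a},\alpha}}(\cube')/x_{I})^{-1}$ arising inside $g_{\bar{a}}(x_{J}^{\ast}/x_{I})^{-1}$ for some already-evaluated $J<I$. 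By Prop~\ref{prop:residuegeneral}, all such poles arise from denominator factors of the form $(1-\text{const}/x_{I})$ and contribute $+\mathbb{I}[\,\cdots+1\,]$ with no sign. This is the key mechanism: the extra numerators $\mathscr{V}_{a}(v_{\bar{a},\alpha}/x_{I})$ present for the D6-branes (but absent in the pure D8 case) kill precisely the potential poles from the ``wrong'' $g_{\bar{a}}(x_{I}/x_{J}^{\ast})^{-1}$ side, which would otherwise have produced a $-$ sign via Prop~\ref{prop:residuegeneral}.

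With the per-box residue signs under control, I would then identify the accumulated character at the fixed point with $\mathbf{v}^{\D6}_{\text{inst.}}|_{\underline{\vec{\pi}}}$ after insertion of $\bfK|_{\underline{\vec{\pi}}}$. The intra-brane piece $\widetilde{\mathcal{Z}}^{\D6}_{\bar{a}}[\pi]$ follows from the diagonal $\bfK_{\bar{a}}^{\vee}\bfK_{\bar{a}}$ block together with the flavor coupling $-\bfP_{a}^{\vee}\bfN_{\bar{a}}^{\vee}\bfK_{\bar{a}}$, while the cross-brane piece $\mathcal{Z}_{\bar{a}|\bar{b}}^{\D6\tbar\D6}$ arises from the off-diagonal $\bfP_{\four}\bfK_{\bar{a}}^{\vee}\bfK_{\bar{b}}$ block together with the two flavor couplings $-\bfP_{a}^{\vee}\bfN_{\bar{a}}^{\vee}\bfK_{\bar{b}} - \bfP_{b}^{\vee}\bfN_{\bar{b}}^{\vee}\bfK_{\bar{a}}$; the latter is the symmetrization arising from summing $\mathbf{v}+\mathbf{v}^{\vee}$ on the cross-brane Coulomb couplings. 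A separate inductive argument on $|\underline{\vec{\pi}}|$ --- modeled on Thm~\ref{thm:D8movable}, with base case giving $-\bfP_{a}^{\vee}+\bfP_{\bar{a}}^{\vee}$ (manifestly movable) --- confirms that $\mathbf{v}^{\D6}_{\text{inst.}}|_{\underline{\vec{\pi}}}$ is movable, so that the index is well-defined and Prop~\ref{prop:reflection-mod} can be applied consistently.

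The main obstacle is the careful bookkeeping in the second step: for cross-brane configurations, one must verify that when evaluating $x_{I}\in\bfK_{\bar{b}}$ after $x_{J}\in\bfK_{\bar{a}}$, the surviving pole inside $\mathcal{A}_{\mathbb{C}^{4}}(x_{J}/x_{I})^{-1}=g_{\bar{b}}(x_{J}/x_{I})^{-1}g_{\bar{a}}(x_{I}/x_{J})^{-1}$ always comes from the $g_{\bar{b}}(x_{J}/x_{I})^{-1}$ side, so that Prop~\ref{prop:residuegeneral} delivers $+1$ and matches the asymmetric choice $\bfP_{\four}\bfK_{\bar{a}}^{\vee}\bfK_{\bar{b}}$ in the square root decomposition. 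This requires a generalization of the single-plane-partition case analysis in Thm~\ref{thm:tetraJKpoles} to configurations with boxes belonging to plane partitions living in different subspaces $\bar{a}\neq \bar{b}$; once this is established, the remaining steps reduce to routine index manipulations.
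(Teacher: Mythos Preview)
Your direct residue-by-residue approach is valid, but it differs from the route the paper takes. For the rank-one piece $\widetilde{\mathcal{Z}}^{\D6}_{\bar{a}}[\pi]$ (Thm.~\ref{thm:D6signruleJK}) the paper does not carry out the direct analysis you outline; instead it takes a shortcut by specializing the $\U(1|1)$ magnificent four integrand to $K=q_{a}$, invoking the already-established D8 sign rule (Thm.~\ref{thm:D8signruleJK}), and then observing that the sign $\sigma_{a}(\pi)=\#\{(i,i,i,j)\in\pi:i<j\}$ vanishes automatically because $\pi\in\mathcal{PP}_{a}$ never extends in the $a$-direction. The paper explicitly remarks that a direct proof ``can be obtained similar to the D8 case but we omit it,'' which is essentially the argument you are sketching. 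The reduction argument is shorter and reuses the D8 machinery; your approach is self-contained and makes the absence of signs manifest at the level of pole-picking.

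Two points to tighten in your sketch. First, the decomposition $\mathcal{A}_{\mathbb{C}^{4}}(x_{J}/x_{I})^{-1}=g_{\bar{b}}(x_{J}/x_{I})^{-1}g_{\bar{a}}(x_{I}/x_{J})^{-1}$ with $a\neq b$ is not correct: the identity $\mathcal{A}_{\mathbb{C}^{4}}(z)^{-1}=g_{\bar{c}}(z)^{-1}g_{\bar{c}}(z^{-1})^{-1}$ holds for a \emph{single} $c\in\four$, not a mixed pair. Second, and more importantly, the cross-brane case you flag as the ``main obstacle'' is in fact not an obstacle at all. For generic Coulomb moduli $v_{\bar{a},\alpha},v_{\bar{b},\beta}$, the pole $x_{I}=\chi_{\bar{b},v_{\bar{b},\beta}}(\cube)$ (which lies on the $v_{\bar{b},\beta}$-lattice) never coincides with any pole of the cross factor $\mathcal{A}_{\mathbb{C}^{4}}(x_{J}^{\ast}/x_{I})^{-1}$ for $x_{J}^{\ast}$ on the $v_{\bar{a},\alpha}$-lattice; that factor is regular there and simply gets specialized. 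Hence the poles are always intra-brane, Prop.~\ref{prop:residuegeneral} is only invoked within a single plane partition, and the sign analysis reduces entirely to the rank-one case. The off-diagonal character $\bfP_{\four}\bfK_{\bar{a}}^{\vee}\bfK_{\bar{b}}$ then needs no splitting: it contributes the full $\mathcal{A}_{\mathbb{C}^{4}}$-factors, regular at the residue locus, which assemble directly into $\mathcal{Z}^{\D6\tbar\D6}_{\bar{a}|\bar{b}}$.
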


\begin{theorem}
The total instanton partition function of the tetrahedron instanton system is
\bea
\mathcal{Z}_{\text{inst.}}^{\D6}=\sum_{\vec{\underline{\pi}}}\mathfrak{q}^{|\vec{\underline{\pi}}|}\mathcal{Z}^{\D6}_{\text{tet.inst.}}[\underline{\vec{v}},\vec{\underline{\pi}}].
\eea 
\end{theorem}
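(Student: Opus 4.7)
The plan is to combine the two preceding results—Thm.~\ref{thm:tetraJKpoles} on the classification of poles and the Proposition just above the theorem identifying each residue with the character expression $\mathcal{Z}^{\D6}_{\text{tet.inst.}}[\underline{\vec{v}},\underline{\vec{\pi}}]$—and conclude by summing over plane partitions at each instanton level. Concretely, starting from the contour integral \eqref{eq:D6integralmod} and applying the JK prescription, Thm.~\ref{thm:tetraJKpoles} asserts that the nonvanishing residues at level $k$ are in bijection with collections $\underline{\vec{\pi}}=(\pi_{\bar{a}}^{(\alpha)})$ of plane partitions of total size $k$, where $\pi_{\bar{a}}^{(\alpha)}$ extends in $\mathbb{C}^{3}_{\bar{a}}$ rooted at $v_{\bar{a},\alpha}$. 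Under the ordering \eqref{eq:ordering-partition}, the multi-contour integral reduces to an iterated residue evaluated at the $q$-coordinates of the boxes.

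The core computational step is to prove the residue-to-character identification in the preceding Proposition. The approach is to write $\mathcal{Z}^{\D6\tbar\D0}_k=\mathbb{I}[\mathbf{v}^{\D6}_{\text{inst.}}-k]$ and substitute $\bfK|_{\underline{\vec{\pi}}}=\sum_{a\in\four}\bfK_{\bar{a}}|_{\vec{\pi}_{\bar{a}}}$ into the character. The prefactor $\mathcal{G}^k$ absorbs the unmovable pieces generated step-by-step by the iterated residue, exactly as in the D8 derivation (Thm.~\ref{thm:D8signruleJK}). Expanding $\sqrt{\bfP_{\four}\bfK^{\vee}\bfK}$ along the decomposition stated just above the theorem, the cross-terms $\bfP_{\four}\bfK_{\bar{a}}^\vee\bfK_{\bar{b}}$ with $a<b$ reproduce the bifundamental factors $\mathcal{Z}^{\D6\tbar\D6}_{\bar{a}|\bar{b}}$, while the diagonal pieces $\frac{\bfP_{\four}}{\bfP_a^\vee}\bfK_{\bar{a}}^\vee\bfK_{\bar{a}}$, combined with the fundamental term $-\bfP_a^\vee\bfN_{\bar{a}}^\vee\bfK_{\bar{a}}$, reproduce $\widetilde{\mathcal{Z}}^{\D6}_{\bar{a}}$ after invoking the reflection identity $\mathbb{I}[\bfP_{\bar{a}}\mathbf{X}]=\mathbb{I}[\bfP_{\bar{a}}^\vee\mathbf{X}^\vee]$ together with $\bfP_{\four}/\bfP_a^\vee=-q_a\bfP_{\bar{a}}$.

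The main technical obstacle is to verify that no nontrivial sign factor appears, in sharp contrast with the magnificent-four case (Thm.~\ref{thm:D8signruleJK} and the sign rule \eqref{eq:signfactor-def}). I expect to address this via two checks: (i) movability of the restricted character $\mathbf{v}^{\D6}_{\text{inst.}}|_{\underline{\vec{\pi}}}$, to be shown by an induction on $|\underline{\vec{\pi}}|$ modeled on Thm.~\ref{thm:D8movable} but with $\bfP_{\bar{a}}^\vee$ in place of $\bfP_{123}^\vee$; and (ii) a careful tally of $\pm$ signs generated by Prop.~\ref{prop:residuegeneral} as each pole is picked up. Heuristically, sign-freeness is natural because the tetrahedron setup is obtained from the magnificent four by the specialization $\mathfrak{b}_{\beta}=\mathfrak{a}_{\bar{a},\alpha}+\epsilon_a$, which suppresses precisely those configurations $(x_b=x_c=x_d<x_a)$ detected by $\sigma_a$ in \eqref{eq:signfactor-def}: a plane partition supported in $\mathbb{C}^3_{\bar{a}}$ has its $a$-th coordinate pinned to $1$, rendering the condition vacuous.

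Once the residue-character identification is in hand, the theorem follows by summing the level-$k$ contribution $\mathcal{Z}^{\D6}_k$ over $\underline{\vec{\pi}}$ with $|\underline{\vec{\pi}}|=k$, and then over $k\geq 0$. The bilinear decomposition of $\bfK|_{\underline{\vec{\pi}}}^\vee\bfK|_{\underline{\vec{\pi}}}$ organizes automatically into the product structure in \eqref{eq:D6tetinst_partfunct}: a single-stack contribution $\widetilde{\mathcal{Z}}^{\D6}_{\bar{a}}[\pi_{\bar{a}}^{(\alpha)}]$ for each $(a,\alpha)$, a same-$\bar{a}$ bifundamental $\mathcal{Z}^{\D6\tbar\D6}_{\bar{a}|\bar{a}}$ for each pair $\alpha<\beta$, and a cross $\bar{a}$--$\bar{b}$ bifundamental $\mathcal{Z}^{\D6\tbar\D6}_{\bar{a}|\bar{b}}$ for each pair $a<b$.
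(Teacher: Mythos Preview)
Your proposal is correct and follows essentially the same logical structure as the paper: the theorem is stated without explicit proof there, being an immediate consequence of the preceding Proposition (residue-to-character identification) combined with Thm.~\ref{thm:tetraJKpoles} (pole classification), which is precisely what you outline. The one noteworthy difference is in how sign-freeness is handled: the paper establishes Thm.~\ref{thm:D6signruleJK} not by a direct induction but by the shortcut of specializing the magnificent-four result (Thm.~\ref{thm:D8signruleJK}) at $K=q_a$, observing that $\sigma_a(\pi)=0$ for plane partitions in $\mathcal{PP}_a$ since the $a$-th coordinate is pinned to $1$; you mention this as a heuristic but propose instead a direct induction modeled on Thm.~\ref{thm:D8movable}. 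Both routes work, but the paper's reduction is shorter and leverages already-proven results, whereas your direct approach would be self-contained and independent of the D8 analysis.
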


\begin{remark}
    Compared with the contour integral formula, the equivariant index formalism is rather complicated. The proper square root depends on which non-compact subspaces the D6-branes are spanning. For example, the $\U(1)$ part of the 7d theory comes from $\widetilde{\mathcal{Z}}_{\bar{4}}^{\D6}[\pi]$ whose collision term comes from $\bfP_{123}^{\vee}$ which corresponds to the fact that it is a theory on the subspace $\mathbb{C}^{3}_{123}\times \mathbb{S}^{1}$. Generally, one may use a different square root choice but then one needs to introduce the correct sign factors. It was mathematically proven that the above choice does not give any sign factor in~\cite{Fasola:2023ypx}. Later we will also see this property from the quantum algebraic view point.
\end{remark}


\paragraph{One-loop perturbative part}
Similar to the magnificent four system, we can define the one-loop perturbative part as
\bea
    \mathring{\mathbf{v}}=\sum_{(b,\beta)>(a,\alpha)}\frac{\bfP_{a}^{\vee}\bfP_{b}}{\bfP_{\four}}v_{\bar{b},\beta}/v_{\bar{a},\alpha}
\eea
where we specified an order in the pair of indices $(\bar{a},\alpha)_{a\in\four}^{\alpha=1,\ldots,n_{\bar{a}}}$. Taking the index gives 
\begin{equation}
\begin{split}
&\mathbb{I}[\mathring{\mathbf{v}}]=\prod_{(b,\beta)>(a,\alpha)}\mathcal{Z}^{\D6\tbar\D6}_{\text{1-loop}}(v_{\bar{a},\alpha},\bar{a}\,|\,v_{\bar{b},\beta},\bar{b})\eqqcolon\mathcal{Z}^{\D6}_{\text{1-loop}},\\
&\mathcal{Z}^{\D6\tbar\D6}_{\text{1-loop}}(x_{1},\bar{a}\,|\,x_{2},\bar{b})=\exp\left(-\sum_{n>0}\frac{1}{n}\frac{\bfP_{a}^{[n]}\bfP_{b}^{[-n]}}{\bfP_{\four}^{[n]}}\left(\frac{x_{1}}{x_{2}}\right)^{n}\right).
\label{eq:D6oneloop}
\end{split}
\end{equation}

\paragraph{JK residue and sign rules}
Similar to the D8 setup, let us relate the JK residue result and the partition function computed by the index formalism. Focusing on the 7d $\U(1)$ theory on $\mathbb{C}^{3}_{\bar{a}}\times \mathbb{S}^{1}$, we have the following identification.
\begin{theorem}\label{thm:D6signruleJK}
    The JK residue of the D6--D0 integrand obeys
    \bea
    \mathcal{G}^{k}\times\underset{x=x_{\pi}}{\Res}x^{-1}\mathcal{Z}_{\bar{a},k}^{\D6\tbar\D0}(v,x_{I})=\widetilde{\mathcal{Z}}^{\D6}_{\bar{a}}[\pi],\quad \pi\in\mathcal{PP}_{a}
    \eea
    where 
    \bea
    \mathcal{Z}_{\bar{a},k}^{\D6\tbar\D0}(v,x_{I})=\mathscr{V}_{a}\left(\frac{v}{x_{I}}\right)\prod_{I<J}\mathcal{A}_{\mathbb{C}^{4}}\left(\frac{x_{I}}{x_{J}}\right)^{-1}.
    \eea
\end{theorem}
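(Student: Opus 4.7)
The plan is to derive Theorem~\ref{thm:D6signruleJK} by specializing the already-established D8 sign rule Theorem~\ref{thm:D8signruleJK}. The key observation is that the D6$_{\bar a}$--D0 integrand arises from the $\U(1|1)$ D8--$\overline{\D8}$--D0 integrand by tuning the anti-D8 parameter to $K=q_a$: indeed $\overline{P}(x)/P(x)=(1-q_a v/x)/(1-v/x)=\mathscr{V}_a(v/x)$, while the collision factor $\prod_{I<J}\mathcal{A}_{\mathbb{C}^4}(x_I/x_J)^{-1}$ is literally identical in the two setups. Under this matching the D6 contour integrand equals the specialization $K=q_a$ of the D8 contour integrand.

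With that identification in hand, I would next analyze how the solid-partition pole classification of Theorem~\ref{thm:M4JKpoles} degenerates. The extra numerator factor $(1-q_a v/x_I)$ vanishes at $x_I=q_a v$, which is exactly the $q$-coordinate of the box $(1,\dots,2,\dots,1)$ (with $2$ in the $a$-slot). By the solid-partition condition any $\rho$ containing a box with $x_a>1$ must contain this box as well, so the iterative residue is forced to evaluate at $x_I=q_a v$ at some step and the JK residue vanishes. Hence only solid partitions with $x_a\equiv 1$ survive, which is exactly the embedding of plane partitions $\pi\in\mathcal{PP}_a$, and under this embedding one has $\bfK_\rho|_{\rho=\pi}=\bfK_{\bar a,\pi}$.

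Having restricted the sum, I then apply Theorem~\ref{thm:D8signruleJK} at $K=q_a$ with the square-root choice labeled by the same $a$, obtaining
\bea
\mathcal{G}^k\times\underset{x=x_\pi}{\Res}x^{-1}\mathcal{Z}_{k}^{\D8\tbar\D0}(v,q_a v,x_I)=(-1)^{\sigma_a(\pi)}\,\mathcal{Z}^{\D8}_{\four;a}[\pi;q_a].
\eea
The exponent $\sigma_a(\pi)=\#\{(x_1,x_2,x_3,x_4)\in\pi\mid x_b=x_c=x_d<x_a\}$ is identically zero for the embedded plane partition, since $x_a=1$ is minimal and $x_b,x_c,x_d\ge 1$ precludes any strict inequality. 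Finally, substituting $\bfN=(1-q_a)v=\bfP_a v$ into $\mathbf{v}^{\D8}_{\four;a}$ gives $-\bfP_a^\vee v^{-1}\bfK_{\bar a,\pi}+\bfP_{\bar a}^\vee \bfK_{\bar a,\pi}^\vee\bfK_{\bar a,\pi}$, whose index is precisely $\widetilde{\mathcal{Z}}^{\D6}_{\bar a}[\pi]$, closing the argument.

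The main obstacle I foresee is making the pole-cancellation argument of step~2 airtight under the ordering prescription \eqref{eq:ordering-partition}: one must verify that whenever the iterative residue encounters the critical pole $x_I=q_a v$, the $(1-q_a v/x_I)$ numerator really does cancel it (rather than being eaten by a coincident zero elsewhere), and that partial resummations preceding this step do not introduce compensating poles. If the specialization route proves delicate, a self-contained alternative is a direct induction on $|\pi|$ paralleling Appendix~\ref{app:D8signruleJK-proof}: one tracks which poles at the added box come from the $\bfP_{\bar a}^\vee \bfK^\vee \bfK$ half of the collision term (yielding $+1$ via Proposition~\ref{prop:residuedef}) versus the dual $\bfP_{\bar a}\bfK\bfK^\vee$ half (yielding $-1$ via Proposition~\ref{prop:residuedual}), and shows that the plane-partition condition guarantees each added box is picked up from the former, so no signs accumulate.
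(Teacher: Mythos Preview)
Your proposal is correct and follows essentially the same route as the paper: specialize the rank-one magnificent four integrand at $K=q_a$, use the reduction of the pole classification from solid partitions to plane partitions, invoke Theorem~\ref{thm:D8signruleJK} with the square-root choice labeled by $a$, observe $\sigma_a(\pi)=0$ for $\pi\in\mathcal{PP}_a$, and identify $\mathcal{Z}^{\D8}_{\four;a}[\pi;q_a]=\widetilde{\mathcal{Z}}^{\D6}_{\bar a}[\pi]$. The paper carries this out for $a=4$ and then appeals to quadrality for the remaining cases, and it also notes (as you do) that a direct inductive proof paralleling the D8 case is available.
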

 A direct proof can be obtained similar to the D8 case but we omit it (see \cite[Lemma B.1]{Kimura:2024osv} for example). Instead of directly proving this, we take a shortcut and obtain the above identity. First of all, let us set $a=4$. The D6$_{\bar{4}}$--D0 integrand is obtained by starting from a rank one magnificent four setup and setting $K=q_{4}$. Under this identification, the solid partition will terminate its growth in the 4th direction and the poles will be classified by plane partitions. With this restriction, from Thm.~\ref{thm:D8signruleJK}, we have
    \bea
     \mathcal{G}^{k}\times\underset{x=x_{\pi}}{\Res}x^{-1}\mathcal{Z}_{\bar{4},k}^{\D6\tbar\D0}(v,x_{I})=(-1)^{\sigma_{4}(\pi)}\mathcal{Z}_{\four;4}^{\D8}[\pi,q_{4}]
    \eea
    where we embedded the plane partition into the solid partition. We then have
    \bea
    \mathcal{Z}^{\D8}_{\four;4}[\pi,q_{4}]=\widetilde{\mathcal{Z}}^{\D6}_{\bar{4}}[\pi].
    \eea
    The sign factor is determined as
    \bea
    \sigma_{4}(\pi)=\#\{(x_{1},x_{2},x_{3},x_{4})\in\pi\mid x_{1}=x_{2}=x_{3}<x_{4}\}=\#\{(i,i,i,1)\in i<1\}=0
    \eea
because the plane partition $\pi$ does not extend into the 4th direction. We can do the same discussion for other $a\in\four$ by using Thm.~\ref{thm:D8signindep} and the quadrality symmetry and obtain
\bea
\mathcal{G}^{k}\times \underset{x=x_{\pi}}{\Res}x^{-1}\mathcal{Z}_{\bar{a},k}^{\D6\tbar\D0}(v,x_{I})=(-1)^{\sigma_{a}(\pi)}\mathcal{Z}_{\four;a}^{\D8}[\pi,q_{a}]=\widetilde{\mathcal{Z}}_{\bar{a}}^{\D6}[\pi],\quad \pi\in\mathcal{PP}_{a}.
\eea

Using the same discussion, we can also get a different relation between the JK residue result and the equivariant index result:
\bea
 \mathcal{G}^{k}\times \underset{x=x_{\pi}}{\Res}x^{-1}\mathcal{Z}_{\bar{4},k}^{\D6\tbar\D0}(v,x_{I})=(-1)^{\sigma_{b}(\pi)}\mathcal{Z}_{\four;b}^{\D8}[\pi,q_{b}],\quad b=1,2,3
\eea
where $\pi\in\mathcal{PP}_{4}$. The right hand side is the partition function obtained by using a different square root part such as $\bfP_{234}^{\vee}$ for the theory on $\mathbb{C}^{3}_{\bar{4}}\times \mathbb{S}^{1}$. However, the above equation says that one needs to introduce a proper sign factor. Note that the sign rule can be rewritten as
\bea
\sigma_{a}(\pi)&=\#\{(x_{1},x_{2},x_{3},x_{4})\in\pi\mid x_{b}=x_{c}=x_{4}<x_{a}\}\\
&=\#\{x_{b}=x_{c}=1<x_{a}\}
\eea
for $a=1,2,3$. Moreover, it is related to the height of the plane partition in the $a$-th direction as
\bea
\sigma_{a}(\pi)=\min\{h_{a}(\pi)-1,0\},\quad h_{a}(\pi)\coloneq\min\{x_{a}\geq 1\mid (x_{a},x_{b},x_{c})=(x_{a},1,1)\notin\pi\}.
\eea
Summarizing the above discussion, we also have the following identification.
\begin{corollary}\label{cor:D8-D6reduce-sign}
After specializing $K=q_{a}\,(a\in\four)$, we have the following identity:
\bea
(-1)^{\sigma_{4}(\pi)}\mathcal{Z}_{\four;4}^{\D8}[\pi,q_{a}]=\widetilde{\mathcal{Z}}^{\D6}_{\bar{a}}[\pi],\quad a\in\four
\eea
where $\pi\in\mathcal{PP}_{a}$. 
\end{corollary}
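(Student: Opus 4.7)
The plan is to prove Cor.~\ref{cor:D8-D6reduce-sign} as a direct consequence of Thm.~\ref{thm:D8signindep} (square-root independence of the magnificent-four index) together with a character-level comparison between $\mathcal{Z}^{\D8}_{\four;a}[\pi,q_a]$ and $\widetilde{\mathcal{Z}}^{\D6}_{\bar a}[\pi]$. No new JK-residue analysis is needed; the work is entirely on the index side.

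First I would verify the identity $\mathcal{Z}^{\D8}_{\four;a}[\pi,q_a] = \widetilde{\mathcal{Z}}^{\D6}_{\bar a}[\pi]$ for $\pi \in \mathcal{PP}_a$ by comparing the characters inside the index functor. Using \eqref{eq:mag4Nekrasovfact2} at $K = q_a$ and identifying the $\U(1)$ D8 Coulomb modulus $x$ with the D6 one $v$ (so that $\bfK_\pi = \bfK_{\bar a,\pi}$ and $\bfN_{\bar a} = v$), the chiral piece of the D8 character reads $-(1 - q_a^{-1})\, v^{-1}\bfK_\pi$, which matches the D6 chiral piece $-\bfP_a^{\vee} \bfN_{\bar a}^{\vee}\bfK_{\bar a,\pi}$ via the elementary identity $\bfP_a^{\vee} = 1 - q_a^{-1}$. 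The collision term $\bfP_{\bar a}^{\vee}\bfK_\pi^{\vee}\bfK_\pi$ is common to both expressions, so the full characters coincide.

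Next, I would observe that $\sigma_a(\pi) = 0$ for $\pi \in \mathcal{PP}_a$: every box $(x_1,x_2,x_3,x_4) \in \pi$ has $x_a = 1$, so the defining condition $x_b = x_c = x_d < x_a$ in \eqref{eq:signfactor-def} (with $\{b,c,d\} = \four \setminus \{a\}$) is never met. Finally, applying Thm.~\ref{thm:D8signindep} with the two choices $a$ and $4$ gives
$$(-1)^{\sigma_a(\pi)} \mathcal{Z}^{\D8}_{\four;a}[\pi, q_a] = (-1)^{\sigma_4(\pi)} \mathcal{Z}^{\D8}_{\four;4}[\pi, q_a].$$
Chaining the three steps yields $\widetilde{\mathcal{Z}}^{\D6}_{\bar a}[\pi] = (-1)^{\sigma_4(\pi)} \mathcal{Z}^{\D8}_{\four;4}[\pi, q_a]$, as claimed.

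There is no serious obstacle; the only line to double-check is the Coulomb-modulus identification in the character comparison, which is a matter of unpacking notation. One could alternatively give a JK-residue proof by specializing $K = q_a$ in Thm.~\ref{thm:D8signruleJK} and matching with Thm.~\ref{thm:D6signruleJK} via the integrand identity $\mathcal{Z}^{\D8\tbar\D0}_k(v,q_a v,x_I) = \mathcal{Z}^{\D6\tbar\D0}_{\bar a,k}(v,x_I)$, but this route requires showing that solid-partition poles not lying in $\mathcal{PP}_a$ are cancelled at the specialization, making it less immediate than the direct character argument above.
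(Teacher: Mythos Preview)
Your proposal is correct and follows essentially the same route as the paper. The paper derives the corollary as a summary of the discussion surrounding Thm.~\ref{thm:D6signruleJK}: it notes the character identity $\mathcal{Z}^{\D8}_{\four;a}[\pi,q_a]=\widetilde{\mathcal{Z}}^{\D6}_{\bar a}[\pi]$, checks $\sigma_a(\pi)=0$ for $\pi\in\mathcal{PP}_a$, and then invokes Thm.~\ref{thm:D8signindep} (together with quadrality) to switch the square-root label from $a$ to $4$---exactly your three steps, with the JK-residue statements serving only as the narrative context rather than as an additional logical ingredient.
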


\paragraph{Plethystic exponential formula}
Similar to the magnificent four partition function, it is known that the tetrahedron instanton partition functions do not depend on the Coulomb branch parameters. Since the D6-brane setup can be obtained by tuning the parameters $K_{\alpha}$ of the magnificent four partition functions, Thm.~\ref{thm:M4PEformula} gives the following.
\begin{theorem}\label{thm:tetrahedronPEformula}
    The tetrahedron instanton partition function of $n_{\bar{a}}$ D6-branes wrapping $\mathbb{C}^{3}_{\bar{a}}\times \mathbb{S}^{1}$ does not depend on the Coulomb branch parameters and we have
    \bea
    \mathcal{Z}_{\text{inst.}}^{\D6}[\mathfrak{q},(n_{1},n_{2},n_{3},n_{4})\,;q_{1,2,3,4}]=\PE\left[\frac{-q_{4}\prod_{i=1}^{3}(1-q_{i4}^{-1})}{\prod_{a\in\four}(1-q_{a})}\frac{1-\prod_{a\in\four}q_{a}^{-n_{\bar{a}}}}{(1-\mathfrak{q})(1-\prod_{a\in\four}q_{a}^{-n_{\bar{a}}}\mathfrak{q}^{-1})}\right].
    \eea
\end{theorem}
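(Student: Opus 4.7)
The strategy is to obtain this as a specialization of the rank-$n$ magnificent four PE formula of Thm.~\ref{thm:M4PEformula}. The bridge is Cor.~\ref{cor:D8-D6reduce-sign}, which tells us that a single D8--$\overline{\D8}$ pair with $K=q_{a}$ reduces, with the proper sign, to a single D6-brane wrapping $\mathbb{C}^{3}_{\bar{a}}\times\mathbb{S}^{1}$. So I would package the tetrahedron system with brane vector $(n_{\bar{1}},n_{\bar{2}},n_{\bar{3}},n_{\bar{4}})$ as a magnificent four system of total rank $n=\sum_{a\in\four}n_{\bar{a}}$, partition the Coulomb pairs into four groups of sizes $n_{\bar{a}}$, and set $K_{\alpha}=q_{a}$ on the group indexed by $a$. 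Under this assignment one has $\prod_{\alpha=1}^{n}K_{\alpha}^{-1}=\prod_{a\in\four}q_{a}^{-n_{\bar{a}}}$, so the RHS of Thm.~\ref{thm:M4PEformula} becomes precisely the RHS of the statement to be proved.

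The concrete steps are as follows. First, I would invoke the proof of Thm.~\ref{thm:M4PEformula} at the intermediate stage where Coulomb independence has already been used to factorize the rank-$n$ M4 partition function into a product of rank-one M4 partition functions,
\begin{equation*}
\mathcal{Z}^{\D8}_{\text{inst.}}\bigl[\mathfrak{q},\{K_{\alpha}\};q_{1,2,3,4}\bigr]=\prod_{\alpha=1}^{n}\mathcal{Z}^{\D8}_{\text{inst.}}\bigl[\mathfrak{q}_{\alpha},K_{\alpha};q_{1,2,3,4}\bigr], \qquad \mathfrak{q}_{\alpha}=\mathfrak{q}\prod_{\beta>\alpha}K_{\beta}.
\end{equation*}
Second, I would apply Cor.~\ref{cor:D8-D6reduce-sign} to each factor with $K_{\alpha}=q_{a_{\alpha}}$: the sum over solid partitions collapses to a sum over plane partitions in $\mathcal{PP}_{a_{\alpha}}$, and the signed M4 summand $(-1)^{\sigma_{a_{\alpha}}(\rho)}\mathcal{Z}^{\D8}_{\four;a_{\alpha}}[\rho,q_{a_{\alpha}}]$ becomes $\widetilde{\mathcal{Z}}^{\D6}_{\bar{a}_{\alpha}}[\pi]$. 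Third, I would invoke Coulomb-branch independence on the tetrahedron side (argued exactly in parallel to the M4 proof via $v_{\bar{a},\alpha}\to\infty$ limits, where the $\mathcal{Z}^{\D6\tbar\D6}_{\bar{a}|\bar{b}}$ factors degenerate to trivial monomials in the $K$-analogues) to identify this product with $\mathcal{Z}^{\D6}_{\text{inst.}}$, and conclude by the PE identity \eqref{eq:PEidentityproof} with $K_{\alpha}=q_{a_{\alpha}}$.

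The main obstacle is the passage from the factorized product of specialized rank-one M4 contributions to the full tetrahedron partition function \eqref{eq:D6tetinst_partfunct}, which carries nontrivial bifundamental factors $\mathcal{Z}^{\D6\tbar\D6}_{\bar{a}|\bar{b}}$ between branes in different directions $a\neq b$. In the $v\to\infty$ factorization limit used to derive the PE formula these cross terms trivialize, so the proof really amounts to asserting Coulomb-branch independence on the tetrahedron side; making this rigorous requires either repeating the argument around \eqref{eq:PEidentityproof} mutatis mutandis for plane-partition sums, or alternatively matching characters directly,
\begin{equation*}
\mathbf{v}^{\D8}_{\text{inst.}}\Bigr|_{\bfK\,=\,\sum_{a}\bfK_{\bar{a}},\ (1-K^{-1})\,\to\,\bfP_{a}^{\vee}\text{ on each block}}\;\longleftrightarrow\;\mathbf{v}^{\D6}_{\text{inst.}},
\end{equation*}
and tracking sign contributions through Prop.~\ref{prop:reflection_sign}. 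A secondary concern is that specializing $K_{\alpha}=q_{a_{\alpha}}$ inside $\prod_{\beta>\alpha}K_{\beta}$ modifies $\mathfrak{q}_{\alpha}$ to depend on the $q_{a}$'s; but the PE identity \eqref{eq:PEidentityproof} is already written in terms of an arbitrary product, so this rearrangement is harmless.
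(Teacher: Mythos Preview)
Your approach---specialize Thm.~\ref{thm:M4PEformula} at $K_\alpha=q_{a_\alpha}$, so that $\prod_\alpha K_\alpha^{-1}=\prod_{a\in\four}q_a^{-n_{\bar a}}$---is exactly the paper's one-sentence argument. The ``main obstacle'' you flag dissolves once you note (as the paper does right after Prop.~\ref{prop:M4contourJK}) that the magnificent four contour integrand with $\mathfrak{b}_\alpha=\mathfrak{a}_\alpha+\epsilon_a$ is literally the tetrahedron integrand of Prop.~\ref{prop:tetracontourJK}, so the two partition functions agree before any localization and there is nothing to reassemble; equivalently, at the character level $(1-q_a^{-1})=\bfP_a^{\vee}$ makes the bifundamental pieces $\mathcal{Z}^{\D8\tbar\D8}_{q_a|q_b}$ and $\mathcal{Z}^{\D6\tbar\D6}_{\bar a|\bar b}$ coincide on the nose, so your detour through factorization and then Coulomb independence on the D6 side is unnecessary.
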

Let us see this explicitly\footnote{See \cite{Nekrasov:2014nea} for a mathematical derivation of this formula.} for the $\U(1)$ theory on $\mathbb{C}^{3}_{\bar{4}}\times \mathbb{S}^{1}$. The PE formula first gives
\bea\label{eq:D6onebrane-PE}
\mathcal{Z}^{\D6}_{\bar{4}}[\mathfrak{q}]\coloneq\mathcal{Z}_{\text{inst.}}^{\D6}[\mathfrak{q},(0,0,0,1)\,;q_{1,2,3,4}]&=\PE\left[\frac{-q_{4}\prod_{i=1}^{3}(1-q_{i4}^{-1})}{\prod_{a\in\four}(1-q_{a})}\frac{1-q_{4}^{-1}}{(1-\mathfrak{q})(1-q_{4}^{-1}\mathfrak{q}^{-1})}\right]\\
&=\PE\left[-q_{4}\frac{(1-q_{12})(1-q_{23})(1-q_{13})}{(1-q_{1})(1-q_{2})(1-q_{3})}\frac{\mathfrak{q}}{(1-\mathfrak{q})(1-q_{4}\mathfrak{q})}\right]\\
&=1-q_{4}\frac{(1-q_{12})(1-q_{13})(1-q_{23})}{(1-q_{1})(1-q_{2})(1-q_{3})}\mathfrak{q}+\order{\mathfrak{q}^{2}}.
\eea
The one-instanton contribution comes from
\bea
\mathbb{I}[-\bfP_{4}^{\vee}+\bfP_{123}^{\vee}]=-q_{4}\frac{(1-q_{12})(1-q_{13})(1-q_{23})}{(1-q_{1})(1-q_{2})(1-q_{3})}
\eea
which matches with the PE formula.

Assuming that the PE formulas for the $\U(1)$ theories are \eqref{eq:D6onebrane-PE} and that the partition function do not depend on the Coulomb branch parameters, one can get the general case (Thm.~\ref{thm:tetrahedronPEformula}) by doing the same procedure done in the previous section. Since it is straightforward, we omit the discussion.

Let us study the case when there is one D6$_{\bar{4}}$-brane and discuss its physical interpretation. The perturbative part can be computed as
\bea
\mathcal{Z}^{\D6}_{\text{inst.}}[(0,0,0,1);q_{1,2,3,4}]=\PE\left[\mathcal{F}^{\text{pert}}_{(0,0,0,1)}(q_{1},q_{2},q_{3},q_{4})\right],\quad \mathcal{F}^{\text{pert}}_{(0,0,0,1)}(q_{1},q_{2},q_{3},q_{4})=-\frac{\bfP_{4}^{\vee}}{\bfP_{123}},
\eea
where $(0,0,0,1)$ denotes the number of D6-branes $\vec{n}=(n_{1},n_{2},n_{3},n_{4})$.
See the physical derivation of this perturbative part in \cite{Nekrasov:2009JJM}. One can also obtain this by setting $a=b=4$ and $x_{1}=x_{2}$ in \eqref{eq:D6oneloop}. Let us define 
\bea
\mathcal{F}_{\vec{n}}^{\text{inst.}}(q_{1},q_{2},q_{3},q_{4})=\frac{-q_{4}\prod_{i=1}^{3}(1-q_{i4}^{-1})}{\prod_{a\in\four}(1-q_{a})}\frac{1-\prod_{a\in\four}q_{a}^{-n_{\bar{a}}}}{(1-\mathfrak{q})(1-\prod_{a\in\four}q_{a}^{-n_{\bar{a}}}\mathfrak{q}^{-1})}
\eea
and then we have $\mathcal{Z}^{\D6}_{\text{inst.}}[\mathfrak{q},\vec{n};q_{1,2,3,4}]=\PE\left[\mathcal{F}_{\vec{n}}^{\text{inst.}}(q_{1,2,3,4})\right]$. It is intuitive to redefine the parameters as
\bea
\widetilde{\mathsf{q}}_{1,2,3}=q_{1,2,3},\quad \widetilde{\mathsf{q}}_{4}=\mathfrak{q}^{-1},\quad \widetilde{\mathsf{q}}_{5}=q_{4}\mathfrak{q}
\eea
which obeys 
\bea\label{eq:CY5cond}
\widetilde{\mathsf{q}}_{1}\widetilde{\mathsf{q}}_{2}\widetilde{\mathsf{q}}_{3}\widetilde{\mathsf{q}}_{4}\widetilde{\mathsf{q}}_{5}=1. 
\eea
The perturbative and instanton parts are then rewritten as
\bea
\mathcal{F}^{\text{pert.}}_{(0,0,0,1)}(\widetilde{\mathsf{q}}_{1,2,3,4,5})=-\frac{(1-\widetilde{\mathsf{q}}_{123})}{(1-\widetilde{\mathsf{q}}_{1})(1-\widetilde{\mathsf{q}}_{2})(1-\widetilde{\mathsf{q}}_{3})},\quad \mathcal{F}^{\text{inst.}}_{(0,0,0,1)}(\widetilde{\mathsf{q}}_{1,2,3,4,5})=\frac{\widetilde{\mathsf{q}}_{45}(1-\widetilde{\mathsf{q}}_{12})(1-\widetilde{\mathsf{q}}_{13})(1-\widetilde{\mathsf{q}}_{23})}{\prod_{i=1}^{5}(1-\widetilde{\mathsf{q}}_{i})}.
\eea
Using \eqref{eq:CY5cond}, the sum is computed as
\bea\label{eq:SUGRA}
\mathcal{F}^{\text{pert.}}_{(0,0,0,1)}+\mathcal{F}^{\text{inst.}}_{(0,0,0,1)}=\frac{\sum_{i=1}^{5}\widetilde{\mathsf{q}}_{i}}{\prod_{i=1}^{5}(1-\widetilde{\mathsf{q}}_{i})}+\frac{\sum_{i=1}^{5}\widetilde{\mathsf{q}}_{i}^{-1}}{\prod_{i=1}^{5}(1-\widetilde{\mathsf{q}}_{i}^{-1})}.
\eea
Actually, the right hand side of this equation coincides with the supergravity index of M-theory \cite{Nekrasov:2009JJM, Nekrasov:2014nea, Pomoni:2023nlf}.

\paragraph{Relation with supergravity index}
Let us introduce the M-theory index on a non-compact toric Calabi--Yau 5-fold $\mathcal{X}$. We follow the discussion of \cite[Sec.~5]{Pomoni:2023nlf} (see also \cite{Haupt:2008nu, Nekrasov:2009JJM, Nekrasov:2014nea,DelZotto:2021gzy}). M-theory preserves an 11-dimensional Majorana spinor with 32 real components. When we take the background space-time to be $\mathbb{S}^{1}\times \mathcal{X}$, two supercharges remain and the effective field theory on $\mathbb{S}^{1}$ is just a 1d $\mathcal{N}=2$ SQM. Note that since $\mathcal{X}$ is noncompact, we neglected the gravitational effects. The $\U(1)^{5}$ isometries of $\mathcal{X}$ play the roles of the global symmetries of this SQM. Among them, four linearly independent combinations of the Cartan generators of this global symmetry commutes with the supercharge.\footnote{This is similar to the story in $\mathbb{C}^{4}$ gauge origami, where we introduced $q_{1,2,3,4}$ with the condition $q_{1}q_{2}q_{3}q_{4}=1$.} Similar to what we did in section~\ref{sec:2susy2d-flavorsymmetry}, we introduce flavor fugacities $\widetilde{\mathsf{q}}_{i}$ and define the index as
\bea
\mathcal{Z}_{\mathcal{X}}(\widetilde{\mathsf{q}}_{1},\widetilde{\mathsf{q}}_{2},\widetilde{\mathsf{q}}_{3},\widetilde{\mathsf{q}}_{4},\widetilde{\mathsf{q}}_{5})=\Tr (-1)^{F}e^{-\beta H}\prod_{i=1}^{5}\widetilde{\mathsf{q}}_{i}^{T_{i}},\quad \prod_{i=1}^{5}\widetilde{\mathsf{q}}_{i}=1,
\eea
where $F$ is the fermion number operator, $\beta$ is the circumference of $\mathbb{S}^{1}$, and $T_{i}$ are the Cartan generators of the global symmetries. Nekrasov--Okounkov gave a simple expression for $\mathcal{Z}_{X}$:
\bea
\mathcal{Z}_{\mathcal{X}}(\widetilde{\mathsf{q}}_{1,2,3,4,5})=\PE\left[\mathcal{F}_{\mathcal{X}}(\widetilde{\mathsf{q}}_{1,2,3,4,5})\right]
\eea
where 
\bea
\mathcal{F}_{\mathcal{X}}(\widetilde{\mathsf{q}}_{1,2,3,4,5})=\int_{\mathcal{X}}\operatorname{ch}(T^{\ast}\mathcal{X}\ominus T\mathcal{X})\wedge \operatorname{td}(T\mathcal{X}).
\eea
For $\mathcal{X}=\mathbb{C}^{5}$, we have
\bea
\operatorname{ch}(T\mathbb{C}^{5})=\sum_{i=1}^{5}\widetilde{\mathsf{q}}_{i}=\sum_{i=1}^{5}e^{\tilde{\epsilon}_{i}},\quad \operatorname{td}(T\mathbb{C}^{5})=\prod_{i=1}^{5}\frac{\tilde{\epsilon}_{i}}{1-e^{-\tilde{\epsilon}_{i}}}
\eea
which reproduces the previous formula \eqref{eq:SUGRA}. A different way to derive this index was also performed in the context of twisted supergravity \cite{Raghavendran:2021qbh}.

The agreement of the two partition functions is understood as follows. Consider $n$ D6$_{\bar{4}}$-branes wrapping $\mathbb{C}^{3}_{\bar{4}}\times \mathbb{S}^{1}$ and D0-branes wrapping $\mathbb{S}^{1}$ in type IIA string theory. This setup is realized in M-theory by considering the geometry to be $\mathbb{S}^{1}\times \mathbb{C}^{3}\times \mathbb{TN}_{n}$, where $\mathbb{TN}_{n}$ is the $n$-centered multi-Taub-NUT space approaching $\mathbb{R}^{3}\times \mathbb{S}^{1}_{R}\simeq \mathbb{C}_{4}\times \mathbb{R}\times \mathbb{S}^{1}_{R}$ at the infinity, where $\mathbb{S}^{1}_{R}$ is the eleventh-dimensional direction and $R$ is the radius of it. In the limit $R\rightarrow \infty$, $\mathbb{TN}_{n}$ becomes $\widetilde{\mathbb{C}^{2}/\mathbb{Z}_{n}}$. Since both setups are related with string duality, it is natural to expect
\bea
\mathcal{F}_{(0,0,0,n)}^{\text{pert.}}+
\mathcal{F}_{(0,0,0,n)}^{\text{inst.}}=\mathcal{F}_{\mathbb{C}^{3}\times \widetilde{\mathbb{C}^{2}/\mathbb{Z}_{n}}}.
\eea
For $n=1$, this is just the equality given above. For discussion on general $n$, see \cite{Nekrasov:2009JJM, Nekrasov:2014nea}. For the case of the tetrahedron instanton setup, namely for general $\vec{n}$, the resulting $\mathcal{X}$ is conjectured to be obtained by superpositions of the Taub-NUT spaces (see \cite{Pomoni:2023nlf}).


\paragraph{Limits of PE formulas and MacMahon functions}
There are various limits when the PE formula in \eqref{eq:D6onebrane-PE} simplifies and connections with generating functions of plane partitions become much clearer. In this section, we will study three different limits and also briefly discuss application to physics. The three limits are the unrefined limit, refined limit, Macdonald refined limit.
\begin{itemize}[topsep=1.5pt,itemsep=0.5ex,partopsep=1ex,parsep=1ex] 
    \item \textbf{Unrefined limit}: The unrefined limit is obtained by taking the limit $q_{4}\rightarrow 1$. Under this limit, the PE formula transforms as
    \bea
    \mathcal{Z}_{\bar{4}}^{\D6}[\mathfrak{q}]\xrightarrow{q_{4}\rightarrow 1}\PE\left[\frac{\mathfrak{q}}{(1-\mathfrak{q})^{2}}\right].
    \eea
    On the other hand, the expanded version of the instanton partition function transforms as
    \bea
    \mathcal{Z}^{\D6}_{\text{inst.}}=\sum_{\pi\in\mathcal{PP}}\mathfrak{q}^{|\pi|}\widetilde{\mathcal{Z}}^{\D6}_{\bar{4}}[\pi]\xrightarrow{q_{4}\rightarrow 1} \sum_{\pi\in\mathcal{PP}}\mathfrak{q}^{|\pi|}
    \eea
    because all of the instanton contribution trivializes under this limit. We then have the following identity
\bea
M_{\text{3d}}(\mathfrak{q})\coloneq\PE\left[\frac{\mathfrak{q}}{(1-\mathfrak{q})^{2}}\right]=\prod_{n=1}^{\infty}\frac{1}{(1-\mathfrak{q}^{n})^{n}}=\sum_{\pi}\mathfrak{q}^{|\pi|}.
\eea
    This is the famous MacMahon formula giving the closed formula for the generating function of plane partitions and $M_{\text{3d}}(\mathfrak{q})$ is usually called the MacMahon function. This formula itself can be proved by the free fermions, vertex operators and the relation with Schur functions \cite{Okounkov:2003sp}. One may generalize the situation and consider the generating function of plane partitions with nontrivial asymptotic Young diagrams. In such cases, the generating function will be related with the so-called \textit{unrefined topological vertex} \cite{Iqbal:2003ds,Aganagic:2003db}. 
    
Compared to the MacMahon's formula, the original D6 partition is a generating function of plane partitions with nontrivial weights determined by $q_{1,2,3,4}$. The unrefined limit is a simplification that makes all this nontrivial weights trivial. Such kind of limit does not exist in the magnificent four case because a naive $q_{a}\rightarrow 1$ limit makes the PE formula diverge.

    \item \textbf{Refined limit}: Another intuitive limit is the refined limit. We first rescale the equivariant parameters as 
    \bea
    (q_{1},q_{2},q_{3},q_{4})\mapsto(z^{r_{1}}q_{1},z^{r_{2}}q_{2},z^{r_{3}}q_{3},q_{4})
    \eea 
    where
    \bea
    r_{1}+r_{2}+r_{3}=0,\quad r_{1}\gg r_{3}>0\gg r_{2}
    \eea
    and then take the limit $z\rightarrow 0$. Under this limit, we have
    \bea
    -q_{4}\frac{(1-z^{-r_{3}}q_{12})(1-z^{-r_{2}}q_{13})(1-z^{-r_{1}}q_{23})}{(1-z^{r_{1}}q_{1})(1-z^{r_{2}}q_{2})(1-z^{r_{3}}q_{3})}\xrightarrow{z\rightarrow 0}-q_{4}\frac{(-z^{-r_{3}}q_{12})(-z^{-r_{1}}q_{23})}{(-z^{r_{2}}q_{2})}=1.
    \eea
    We then obtain
    \bea
\mathcal{Z}_{\bar{4}}^{\D6}[\mathfrak{q}]\rightarrow \PE\left[\frac{\mathfrak{q}}{(1-\mathfrak{q})(1-q_{4}\mathfrak{q})}\right]=\PE\left[\frac{\mathfrak{q}}{(1-\mathfrak{q})(1-\mathfrak{t})}\right]\eqcolon M_{\text{3d}}(\mathfrak{q},\mathfrak{t})
    \eea
where we defined $\mathfrak{t}=q_{4}\mathfrak{q}$. The function $M_{\text{3d}}(\mathfrak{t},\mathfrak{q})$ is called the refined MacMahon function and it has an expanded formula as
\bea
M_{\text{3d}}(\mathfrak{t},\mathfrak{q})=\sum_{\pi}\mathfrak{t}^{\sum_{i=1}^{\infty}|\pi(-i)|}\mathfrak{q}^{\sum_{j=1}^{\infty}|\pi(j-1)|}=\sum_{\pi}\mathfrak{q}^{|\pi|}q_{4}^{\sum_{i=1}^{\infty}|\pi(-i)|}
\eea
where $\pi(i)\,(i\in\mathbb{Z})$ is a diagonal slice of the plane partition $\pi$ at $x_{2}+x_{1}=i$. This formula also can be proved by the transfer matrix method and it is related with Schur functions and Macdonald symmetric functions. Generalization to plane partitions with nontrivial asymptotic Young diagrams can be done and it will be related with the \textit{refined topological vertex} \cite{Iqbal:2007ii,Awata:2008ed,Taki:2007dh,Nekrasov:2014nea}.

Similar to the previous case, one can also show that under this limit, we have
\bea
\mathfrak{q}^{|\pi|}\widetilde{\mathcal{Z}}^{\D6}_{\bar{4}}[\pi]\rightarrow \mathfrak{t}^{\sum_{i=1}^{\infty}|\pi(-i)|}\mathfrak{q}^{\sum_{j=1}^{\infty}|\pi(j-1)|}.
\eea
We will not show this explicitly in this thesis, but one can implement all the formulas in a computer program and check it for lower levels.

    \item \textbf{Macdonald refined limit}: A different limit is the Macdonald refined limit 
    \bea
    (q_{1},q_{2},q_{3},q_{4})\mapsto (z^{r}q_{1},z^{-r}q_{2},q_{3},q_{4})
    \eea
    with $z\rightarrow 0$. Under this limit, we have 
    \bea
-q_{4}\frac{(1-q_{34}^{-1})(1-z^{r}q_{13})(1-z^{-r}q_{23})}{(1-z^{r}q_{1})(1-z^{-r}q_{2})(1-q_{3})}\xrightarrow{z\rightarrow 0}-q_{4}\frac{(1-q_{34}^{-1})(-z^{-r}q_{23})}{(1-q_{3})(-z^{-r}q_{2})}=\frac{(1-q_{34})}{(1-q_{3})}
    \eea
    which gives 
    \bea
    \mathcal{Z}_{\bar{4}}^{\D6}[\mathfrak{q}]\rightarrow \PE\left[\frac{1-q_{34}}{1-q_{3}}\frac{\mathfrak{q}}{(1-\mathfrak{q})(1-q_{4}\mathfrak{q})}\right].
    \eea
    This Macdonald refined limit has not been studied much compared with the previous two limits. Actualy, the PE formula above is related with the one given in \cite[Eq.~(1.7)--(1.8)]{Foda:2017tnv} 
    \bea
    \PE\left[\frac{1-t}{1-q}\frac{x}{(1-x)(1-y)}\right]
    \eea
     after the specialization $x=\mathfrak{q},\,\, y=q_{4}\mathfrak{q},\,\, q=q_{3},\,\, t=q_{34}$.      
     Historically, Vuleti\'c introduced a two parameter generalization of MacMahon's formula taking the form as
     \bea
     \sum_{\pi}F_{\pi}(q,t)x^{|\pi|}=\PE\left[\frac{1-t}{1-q}\frac{x}{(1-x)^{2}}\right]=\prod_{i,n+1=1}^{\infty}\left(\frac{1-x^{i}q^{n}t}{1-x^{i}q^{n}}\right)^{i}
     \eea
where the explicit formula for $F_{\pi}(q,t)$ is in \cite[Eq.~(1.4)]{Vuletic2007AGO}. Since the right hand side is directly related with the Macdonald kernel function, it is called the Macdonald refinement of MacMahon's function. Later, Cai--Wang--Wu--Yang studied the associated vertex operators in \cite{Cai2015TheVO}, and Foda--Wu introduced another refinement $x,y$ to it, taking the form as
\bea
\sum_{\pi}F_{\pi}(q,t)x^{\sum_{j=1}^{\infty}|\pi(j-1)|}y^{\sum_{i=1}^{\infty}|\pi(-i)|}=\PE\left[\frac{1-t}{1-q}\frac{x}{(1-x)(1-y)}\right],
\eea
and discussed applications to supersymmetric gauge theories and topological string theory in \cite{Foda:2017tnv} (see also \cite{Foda:2018jwz}). The difference between the limit of the D6 partition function and Foda--Wu's formula is that while we only have three independent deformation parameters $q_{3},q_{4},\mathfrak{q}$ here, theirs have four independent deformation parameters $x,y,q,t$. In this sense, the PE formula obtained from the D6 partition function is some specialization of their setup. We abuse the terminology and still call this limit the \textit{Macdonald refined limit}. The explicit correspondence
\bea
\mathfrak{q}^{|\pi|}\widetilde{\mathcal{Z}}^{\D6}_{\bar{4}}[\pi]\longrightarrow \left. F_{\pi}(q,t)x^{\sum_{j=1}^{\infty}|\pi(j-1)|}y^{\sum_{i=1}^{\infty}|\pi(-i)|} \right|_{x=\mathfrak{q},\,\, y=q_{4}\mathfrak{q},\,\, q=q_{3},\,\, t=q_{34}}
\eea
is rather difficult to show generally but one can implement everything in a computer program and explicitly check that they indeed correspond with each other.

\end{itemize}

\subsection{Spiked instanton}
The contour integral formula of the spiked instanton system in Prop.~\ref{prop:spikedcontourJK} is rewritten in the structure functions as
\bea\label{eq:D4integralmod}
\mathcal{Z}^{\D4}_{\text{inst.}}=\sum_{k=0}^{\infty}\mathfrak{q}^{k}\mathcal{Z}_{k}^{\D4},\quad \mathcal{Z}^{\D4}_{k}=\frac{\mathcal{G}^{k}}{k!}\oint_{\text{JK}}\prod_{I=1}^{k}\frac{dx_{I}}{2\pi ix_{I}}\mathcal{Z}_{k}^{\D4\tbar\D0}(v_{A,\alpha},x_{I})
\eea
where
\bea
\mathcal{Z}_{k}^{\D4\tbar\D0}(v_{A,\alpha},x_{I})=\prod_{A\in\six}\prod_{\alpha=1}^{n_{A}}\prod_{I=1}^{k}\mathscr{S}_{\bar{A}}\left(\frac{v_{A,\alpha}}{x_{I}}\right)\prod_{I<J}\mathcal{A}_{\mathbb{C}^{4}}\left(\frac{x_{I}}{x_{J}}\right)^{-1}.
\eea
In terms of character, we have
\beq
\mathcal{Z}_{k}^{\D4}=\frac{1}{k!}\oint_{\text{JK}}\prod_{I=1}^{k}\frac{dx_{I}}{2\pi i x_{I}}\mathbb{I}[\mathbf{v}_{\text{inst.}}^{\D4}-k],\quad \mathbf{v}_{\text{inst.}}^{\D4}=-\sum_{A\in\six}\bfP_{\bar{A}}^{\vee}\bfN_{A}^{\vee}\bfK+\sqrt{\bfP_{\four}\bfK^{\vee}\bfK}
\eeq
where
\begin{equation}
\bfN_{A}=\sum_{\alpha=1}^{n_{A}}e^{\mathfrak{a}_{A,\alpha}}=\sum_{\alpha=1}^{n_{A}}v_{A,\alpha},\quad \bfK=\sum_{I=1}^{k}e^{\phi_{I}}=\sum_{I=1}^{k}x_{I}
\end{equation}
and we schematically wrote the square root part.

Using Thm.~\ref{thm:spikedJKpoles}, the poles are classified by 2d partitions as
\bea
    &\vec{\underline{v}}=(v_{A,\alpha})^{\alpha=1,\ldots, n_{A}}_{A\in\six},\quad \vec{\underline{\lambda}}=(\vec{\lambda}_{A})_{A\in\six}=(\lambda_{A}^{(\alpha)})^{\alpha=1,\ldots,n_{A}}_{A\in\six},\quad |\vec{\underline{\lambda}}|=\sum_{A\in\six}\sum_{\alpha=1}^{n_{A}}|\lambda_{A}^{(\alpha)}|,\\
    &\{x_{I}\}_{I=1,\ldots,k}\longrightarrow \{\chi_{A,v_{A,\alpha}}(\Bbox)\}^{\alpha=1,\ldots,n_{A}}_{A\in\six,\,\Abox\in\lambda_{A,\alpha}},\quad \chi_{ab,v}(\Bbox)=vq_{a}^{i-1}q_{b}^{j-1}
\eea
and then the character $\bfK_{A}$ will be 
\bea
\bfK|_{\underline{\vec{\lambda}}}=\sum_{A\in\six}\left.\bfK_{A}\right|_{\vec{\lambda}_{A}},\quad \left.\bfK_{A}\right|_{\vec{\lambda}_{A}}=\sum_{\alpha=1}^{n_{A}}\sum_{\Abox\in\lambda_{A}^{(\alpha)}}\chi_{A,v_{A,\alpha}}(\Bbox),\quad A\in\six.
\eea
A choice for the square root part is
\beq
\sqrt{\bfP_{\four}\bfK^{\vee}\bfK}=\sum_{A\in\six}\bfP^{\vee}_{\text{inf}(\bar{A})}\bfP_{A}^{\vee}\bfK_{A}^{\vee}\bfK_{A}+\sum_{A<B}\bfP_{\four}\bfK_{A}^{\vee}\bfK_{B}
\eeq
and this choice does not have any sign. We define the Nekrasov factors as
\bea
\mathsf{N}_{A}(v_{1},\lambda^{(1)}\,|\,v_{2},\lambda^{(2)})=\prod_{\Abox\in\lambda^{(1)}}\left(1-\frac{q_{A}\chi_{A,v_{1}}(\Bbox)}{v_{2}}\right)\prod_{\Abox\in\lambda^{(2)}}\left(1-\frac{v_{1}}{\chi_{A,v_{2}}(\Bbox)}\right)\prod_{\substack{\Abox\in\lambda^{(1)}\\\AboxF\in\lambda^{(2)}}}\mathscr{S}_{A}\left(\frac{\chi_{A,v_{1}}(\Bbox)}{\chi_{A,v_{2}}(\BboxF)}\right).
\eea
The relation with the JK residue is then given as follows. 
\begin{proposition}
    The JK residue of the D4--D0 integrand is related to the index of the character as
    \bea
    \mathcal{G}^{k}\times \underset{x=x_{\underline{\vec{\lambda}}}}{\Res}x^{-1}\mathcal{Z}_{k}^{\D4\tbar\D0}(v_{A,\alpha},x_{I})=\mathcal{Z}^{\D4}_{\text{spk.inst.}}[\vec{\underline{v}},\vec{\underline{\lambda}}],
    \eea    
    where 
    \bea\label{eq:D4spikedpartition1}
    \mathcal{Z}^{\D4}_{\text{spk.inst.}}[\underline{\vec{v}},\vec{\underline{\lambda}}]&=\prod_{A\in\six}\prod_{\alpha=1}^{n_{A}}\widetilde{\mathcal{Z}}^{\D4}_{A}[\lambda_{A}^{(\alpha)}]\prod_{A\in\six}\prod_{\alpha<\beta}\mathcal{Z}^{\D4\tbar\D4}_{A|A}(v_{A,\alpha},\lambda_{A}^{(\alpha)}\,|\,v_{A,\beta},\lambda_{A}^{(\beta)}) \\
    &\qquad \times \prod_{A<B}\prod_{\alpha=1}^{n_{A}}\prod_{\beta=1}^{n_{B}}\mathcal{Z}_{A|B}^{\D4\tbar\D4}(v_{A,\alpha},\lambda_{A}^{(\alpha)}\,|\,v_{B,\beta},\lambda_{B}^{(\beta)}),
\eea
and
\bea\label{eq:D4spikedpartition2}
    \widetilde{\mathcal{Z}}^{\D4}_{A}[\lambda]&=q_{\text{inf}(\bar{A})}^{-|\lambda|}\frac{\mathsf{N}_{A}(q_{\text{inf}(\bar{A})}v,\lambda\,|\,v,\lambda)}{\mathsf{N}_{A}(v,\lambda\,|\,v,\lambda)},\\
    \mathcal{Z}_{A|B}^{\D4\tbar\D4}(v_{1},\lambda^{(1)}\,|\,v_{2},\lambda^{(2)})&=\prod_{\Abox\in\lambda^{(1)}}\mathscr{S}_{\bar{B}}\left(q_{B}\frac{\chi_{A,v_{1}}(\Bbox)}{v_{2}}\right)\prod_{\AboxF\in\lambda^{(2)}}\mathscr{S}_{\bar{A}}\left(\frac{v_{1}}{\chi_{B,v_{2}}(\BboxF)}\right)\prod_{\substack{\Abox\in\lambda^{(1)}\\\AboxF\in\lambda^{(2)}}}\mathcal{A}_{\mathbb{C}^{4}}\left(\frac{\chi_{A,v_{1}}(\Bbox)}{\chi_{B,v_{2}}(\BboxF)}\right)^{-1}.
\eea

\end{proposition}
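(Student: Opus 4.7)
The plan is to mirror the strategies used in Theorem 3.5.5 (D8) and Corollary 3.5.8 (D6), adapted to the two-dimensional nature of the spiked instanton fixed points. By Theorem 3.4.3, the non-vanishing JK residues are classified by tuples of Young diagrams $\underline{\vec{\lambda}}$ embedded in the six complex two-planes $\mathbb{C}^2_A$, so the multi-variable contour integral reduces to an iterative residue in a fixed ordering compatible with box-stacking. At each step the variable $x_I$ is evaluated at $\chi_{A,v_{A,\alpha}}(\Bbox)$ for one newly-added box in some $\lambda_A^{(\alpha)}$.

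First I would verify the square-root decomposition is consistent: the identity $\mathbf{P}^{\vee}_{\inf(\bar{A})}\mathbf{P}^{\vee}_{A} + \mathbf{P}_{\inf(\bar{A})}\mathbf{P}_{A} = \mathbf{P}_{\four}$ (using \eqref{eq:ch-squareroot} applied to $\bar a = \sup(\bar{A})$) shows the diagonal terms $\mathbf{P}^{\vee}_{\inf(\bar{A})}\mathbf{P}^{\vee}_{A}\mathbf{K}^{\vee}_{A}\mathbf{K}_{A}$ combined with their duals produce $\mathbf{P}_{\four}\mathbf{K}_{A}^{\vee}\mathbf{K}_{A}$; the off-diagonal $\mathbf{P}_{\four}\mathbf{K}_{A}^{\vee}\mathbf{K}_{B}$ for $A<B$ plus their duals complete $\mathbf{P}_{\four}\mathbf{K}^{\vee}\mathbf{K}$. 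Movability of $\mathbf{v}^{\D4}_{\text{inst.}}$ at each fixed point must then be checked by an inductive argument in $|\underline{\vec{\lambda}}|$, similar to Theorem 3.5.4; the cancellations occur precisely where the Young diagram condition enforces boundary constraints.

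Next I would carry out the iterative residue evaluation box-by-box. The key technical input is Proposition 3.5.6: at each pole, the residue carries $+\mathbb{I}[\cdots]$ if the pole comes from $\mathbf{X}^{\vee}y$-type terms and $-\mathbb{I}[\cdots]$ if from $\mathbf{Y}y^{-1}$-type terms. For the chosen square root, I expect to show that poles within the same diagram $\lambda_A^{(\alpha)}$ always arise from the $\mathbf{P}^{\vee}_{\inf(\bar{A})}\mathbf{P}^{\vee}_{A}\mathbf{K}_{A}^{\vee}\mathbf{K}_{A}$ part (the $\mathbf{X}^{\vee}$ direction), yielding no sign, while cross-contributions between $A\neq B$ come from movable bifundamental characters $\mathbf{P}_{\four}\mathbf{K}_{A}^{\vee}\mathbf{K}_{B}$ whose signs cancel in pairs over $A<B$ and $B<A$. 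The final character matching converts $\mathbb{I}[\text{diagonal}] \to \widetilde{\mathcal{Z}}^{\D4}_{A}[\lambda^{(\alpha)}]$ via the standard Nekrasov identities for 5d $\mathcal{N}=1^{*}$ gauge theories with adjoint mass $\epsilon_{\inf(\bar{A})}$, and $\mathbb{I}[\text{off-diagonal}] \to \mathcal{Z}^{\D4\tbar\D4}_{A|B}$.

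The main obstacle is the sign bookkeeping between transverse diagrams in different planes $\mathbb{C}^2_A$ and $\mathbb{C}^2_B$. Unlike the D6 case, where Corollary 3.5.8 let one reduce to D8 by setting $K=q_a$ and the sign factor $\sigma_a(\pi)$ collapsed because plane partitions do not extend into the $a$-direction, a D4-brane wrapping $\mathbb{C}^2_A$ is not a single-parameter specialization of one $\D8\tbar\overline{\D8}$ pair. A cleaner alternative route — which I would pursue as a back-up — is to realize one D4$_A$ as a rank-two magnificent four setup with Coulomb parameters specialized to $(v_1,\bar{v}_1,v_2,\bar{v}_2)=(v,\,q_{\inf(\bar{A})}v,\,q_{\sup(\bar{A})}v,\,q_{\bar{A}}v)$, so that $\mathscr{V}_{\inf(\bar{A})}\mathscr{V}_{\sup(\bar{A})}$ factorize into $\mathscr{S}_{\bar{A}}$, and then show that under this specialization the solid partitions degenerate to 2D Young diagrams in the $A$-plane and the sign $\sigma_a(\vec{\rho})$ vanishes identically. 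Either route should conclude the proof; the direct approach is more transparent but the reduction approach is shorter, leveraging results already established for the magnificent four.
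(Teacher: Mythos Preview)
Your approach is correct but considerably more elaborate than the paper's, and the ``main obstacle'' you flag is not in fact present.

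The paper dispatches this proposition in one sentence by reference to the pure SYM computation of Section~2.4. The observation is that the spiked instanton integrand differs from the pure SYM integrand only through the numerator factors in $\mathscr{S}_{\bar{A}}$ (the adjoint-mass contributions), which do not alter the pole structure. Hence the iterative residue analysis around \eqref{eq:pureSYM-mult-residue} carries over verbatim: in the box ordering on each Young diagram, the pole at $x_I=x_I^{\ast}$ is always picked up from a factor of the form $(1-x_J^{\ast}/x_I)^{-1}$ with $J<I$, i.e.\ from the $\mathbf{K}^{\vee}$-side of the character, and by \eqref{eq:residueindexformula} this yields $+\mathbb{I}[\cdots]$ with no sign.

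Your worry about signs between transverse planes $A\neq B$ is unnecessary. For generic $v_{A,\alpha}$ the bifundamental factors carry no poles at the fixed-point loci: they are merely evaluated there and contribute the $\mathcal{Z}^{\D4\tbar\D4}_{A|B}$ pieces directly. All residues come from within a single Young diagram, so the sign analysis never leaves the pure-SYM setting, and there is nothing to ``cancel in pairs'' (the square root contains only the $A<B$ half to begin with). Your backup route via a rank-two magnificent four specialization is a valid idea --- it is essentially how the paper passes from D8 to D6 to D4 at the vertex-operator level, cf.\ \eqref{eq:oprelation1} --- but it is overkill here.
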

The proof of this is similar to the pure SYM case. Since the adjoint mass contribution does not modify the structure of the poles, the iterative residue does not change, and the discussion in \eqref{eq:pureSYM-mult-residue} is applicable to here too. 

\begin{theorem}
The total instanton partition function of the spiked instanton system is given as
\bea
\mathcal{Z}_{\text{inst.}}^{\D4}=\sum_{\vec{\underline{\lambda}}}\mathfrak{q}^{|\vec{\underline{\lambda}}|}\mathcal{Z}^{\D4}_{\text{spk.inst.}}[\vec{\underline{v}},\vec{\underline{\lambda}}].
\eea
\end{theorem}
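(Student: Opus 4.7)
The plan is to combine the three ingredients already established in the excerpt: the contour integral formula in Prop.~\ref{prop:spikedcontourJK} rewritten as \eqref{eq:D4integralmod} in the index form, the JK pole classification in Thm.~\ref{thm:spikedJKpoles}, and the character-to-Nekrasov-factor rewriting stated in the Proposition that immediately precedes the theorem. The strategy is a direct evaluation of \eqref{eq:D4integralmod}, reducing the multi-contour integral to a sum over $\underline{\vec{\lambda}}=(\lambda_A^{(\alpha)})_{A\in\six}^{\alpha=1,\ldots,n_A}$ by identifying each pole with a box of some Young diagram $\lambda_A^{(\alpha)}$.

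First, I would apply the JK prescription with reference vector $\bm{\eta}=(1,\ldots,1)$ to the integrand $\mathcal{Z}_k^{\D4\tbar\D0}$. By Thm.~\ref{thm:spikedJKpoles}, the non-vanishing residues are in bijection with tuples of Young diagrams $\underline{\vec{\lambda}}$ such that $|\underline{\vec{\lambda}}|=k$; the iterated residue for each $\underline{\vec{\lambda}}$ is taken in the box-ordering analogous to \eqref{eq:pureSYM-higherrank-iterativeresidue}, decomposing the $k$ integration variables into blocks labeled by $(A,\alpha)$ and evaluating them along the stacking order on each $\lambda_A^{(\alpha)}$. At this stage one also needs to absorb the $k!$ factor into a choice of ordering and the combinatorial factor $\mathcal{G}^k$ into the multiplicities of the chosen JK poles, exactly as in the pure SYM calculation of section~\ref{sec:pureSYM-indexformalism}.

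Second, I would convert the residue to the character form $\mathbb{I}[\mathbf{v}^{\D4}_{\mathrm{inst.}}\big|_{\underline{\vec{\lambda}}}]$ using the identification \eqref{eq:residueindexformula} together with Prop.~\ref{prop:residuegeneral}. The key calculation is to check that, once $\bfK$ is specialized to $\bfK|_{\underline{\vec{\lambda}}}=\sum_{A,\alpha,\Abox}\chi_{A,v_{A,\alpha}}(\Bbox)$, the chosen square-root decomposition
\[
\sqrt{\bfP_{\four}\bfK^{\vee}\bfK}=\sum_{A\in\six}\bfP^{\vee}_{\mathrm{inf}(\bar{A})}\bfP_{A}^{\vee}\bfK_{A}^{\vee}\bfK_{A}+\sum_{A<B}\bfP_{\four}\bfK_{A}^{\vee}\bfK_{B}
\]
gives a movable character so that no sign factor arises when applying Prop.~\ref{prop:reflection-mod}. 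Movability for each diagonal block $\bfP^{\vee}_{\mathrm{inf}(\bar{A})}\bfP_A^{\vee}\bfK_A^{\vee}\bfK_A$ reduces, via the two-dimensional analog of the induction used in Thm.~\ref{thm:D8movable}, to the standard fact that the $\widehat{A}_0$ quiver character is movable on Young diagrams. The off-diagonal terms are manifestly movable because the Coulomb parameters $\{v_{A,\alpha}\}$ between distinct stacks are generic. Also, since $\mathbf{v}_{\mathrm{inst.}}^{\D4}$ should agree on the nose with the JK evaluation, one argues as in Thm.~\ref{thm:D6signruleJK}: the D4$_A$ case with $a\in\bar{A}$ is obtained from the magnificent-four setup by specializing $K=q_a q_b$ ($\{a,b\}=\bar{A}$), and the sign $(-1)^{\sigma_c(\lambda)}$ of Thm.~\ref{thm:D8signruleJK} vanishes on two-dimensional partitions, so no sign enters.

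Third, I would split the character $\mathbf{v}^{\D4}_{\mathrm{inst.}}\big|_{\underline{\vec{\lambda}}}$ into single-brane and two-brane contributions by grouping the indices $(A,\alpha)$ pairwise. The diagonal pieces $-\bfP^{\vee}_{\bar A}\bfN_A^{\vee}\bfK_A+\bfP^{\vee}_{\mathrm{inf}(\bar A)}\bfP_A^{\vee}\bfK_A^{\vee}\bfK_A$ rewrite, using $\bfP_{\bar A}^{\vee}=\bfP_{\mathrm{inf}(\bar A)}^{\vee}\bfP_{\mathrm{sup}(\bar A)}^{\vee}$ and the Nekrasov-factor identity of the affine $\widehat{A}_0$ theory, as $\widetilde{\mathcal{Z}}_A^{\D4}[\lambda]$ and the diagonal $\mathcal{Z}_{A|A}^{\D4\tbar\D4}$ factor. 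The off-diagonal pieces straightforwardly give $\mathcal{Z}_{A|B}^{\D4\tbar\D4}$ after applying the structure-function identity $\mathcal{A}_{\mathbb C^4}(x)=\mathbb{I}[-\bfP_{\four}^{\vee}x^{\vee}]$. Summing over $k$ with the weight $\mathfrak{q}^{|\underline{\vec{\lambda}}|}$ yields the claimed formula.

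The main obstacle is the bookkeeping in the second step, namely verifying that the specific square-root choice really produces no sign factor in the general six-stack configuration and that the diagonal character reorganizes into the precise $\widetilde{\mathcal{Z}}_A^{\D4}$ factor in \eqref{eq:D4spikedpartition2} rather than a quadrality-related cousin. This is analogous to, but simpler than, the sign analysis carried out in Thm.~\ref{thm:D8signindep} for the magnificent four; the reduction of the $\D8$ identity via $K=q_aq_b$ should make it automatic, but writing out the reduction for all six choices of $A\in\six$ and checking compatibility with the chosen ordering $A<B$ is the step that requires real care.
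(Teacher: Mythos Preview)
Your approach is essentially the same as the paper's: the theorem is an immediate consequence of the contour integral formula \eqref{eq:D4integralmod}, the JK pole classification of Thm.~\ref{thm:spikedJKpoles}, and the Proposition immediately preceding the theorem (which identifies each residue with $\mathcal{Z}^{\D4}_{\text{spk.inst.}}[\underline{\vec{v}},\underline{\vec{\lambda}}]$). The paper does not prove the Proposition in detail either; it simply remarks that the adjoint mass does not alter the pole structure, so the pure SYM residue computation of section~\ref{sec:pureSYM-indexformalism} carries over verbatim. Your steps one and three track this exactly.

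One technical slip in your step two: the D4$_{A}$ contour integrand is \emph{not} obtained from the magnificent four by setting $K=q_{a}q_{b}$ with $\{a,b\}=\bar{A}$. At the level of highest weights, $\mathsf{Z}(q_{a}q_{b},x)={:\mathsf{Z}(x)/\mathsf{Z}(q_{a}q_{b}x):}$, whereas $\mathsf{X}_{A}(x)={:\mathsf{Z}(x)\mathsf{Z}(q_{a}q_{b}x)/(\mathsf{Z}(q_{a}x)\mathsf{Z}(q_{b}x)):}$; equivalently, the integrand factor $(1-q_{a}q_{b}v/x)/(1-v/x)$ is not $\mathscr{S}_{\bar{A}}(v/x)$. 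So the sign-free property of the spiked instanton cannot be read off from Thm.~\ref{thm:D8signruleJK} via this specialization. The correct route is either the two-step reduction $\D8\to\D6\to\D4$ (using $K=q_{a}$ and then ${:\mathsf{W}_{abc}(x)/\mathsf{W}_{abc}(q_{c}x):}=\mathsf{X}_{ab}(x)$), or simply to observe directly, as in the pure SYM case, that with the chosen square root the iterated residues are always picked from factors of the form $(1-\chi/x)^{-1}$ so that Prop.~\ref{prop:residuegeneral} yields a $+1$ at every step. The paper takes the latter shortcut implicitly. Once this is corrected, your argument is complete.
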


\paragraph{One-loop perturbative part}
The one-loop perturbative part for the spiked instanton setup is given by the character
\bea
    \mathring{\mathbf{v}}=\sum_{(B,\beta)>(A,\alpha)}\frac{\bfP_{\bar{A}}^{\vee}\bfP_{\bar{B}}}{\bfP_{\four}}v_{B,\beta}/v_{A,\alpha},
\eea
where we specified an order in $(A,\alpha)^{\alpha=1\ldots,n_{A}}_{A\in\six}$. Taking the index, we have 
\bea\label{eq:D4oneloop}
&\mathbb{I}[\mathring{\mathbf{v}}]=\prod_{(B,\beta)>(A,\alpha)}\mathcal{Z}^{\D4\tbar\D4}_{\text{1-loop}}(v_{A,\alpha},A\,|\,v_{B,\beta},B)\eqqcolon\mathcal{Z}^{\D4}_{\text{1-loop}},\\
&\mathcal{Z}_{\text{1-loop}}^{\D4\tbar\D4}(x_{1},A\,|\,x_{2},B)=\exp\left(-\sum_{n=1}^{\infty}\frac{1}{n}\frac{\bfP_{\bar{A}}^{[n]}\bfP_{\bar{B}}^{[-n]}}{\bfP_{\four}^{[n]}}\left(\frac{x_{1}}{x_{2}}\right)^{n}\right).
\eea

\paragraph{PE formula and limits}Let us consider the $\U(1)$ affine quiver gauge theory on $\mathbb{C}^{2}_{12}\times \mathbb{S}^{1}$. The PE formula for it is
\bea
\mathcal{Z}^{\D4}_{12}[\mathfrak{q}]=\sum_{\lambda}\mathfrak{q}^{|\lambda|}\widetilde{\mathcal{Z}}_{12}^{\D4}[\lambda]=\PE \left[\frac{\mathfrak{q}}{1-\mathfrak{q}}\mathscr{S}_{34}(q_{2})\right]=\PE \left[\frac{\mathfrak{q}}{1-\mathfrak{q}}\mathscr{S}_{34}(q_{1})\right].
\eea
We have two special limits.
\begin{itemize}[topsep=1.5pt,itemsep=0.5ex,partopsep=1ex,parsep=1ex] 
\item Vafa--Witten limit \cite{Vafa:1994tf}: The limit is $q_{4}\rightarrow 1$ with $q_{123}=1$:
\bea
\frac{\mathfrak{q}}{1-\mathfrak{q}}\mathscr{S}_{34}(q_{1})=\frac{\mathfrak{q}}{1-\mathfrak{q}}q_{4}^{-1}\frac{(1-q_{14})(1-q_{24})}{(1-q_{1})(1-q_{2})}\xrightarrow{q_{4}\rightarrow 1}\frac{\mathfrak{q}}{1-\mathfrak{q}}.
\eea
In this limit, we have $\widetilde{\mathcal{Z}}^{\D4}_{12}[\lambda]\rightarrow 1$ and it coincides with the character of the Young diagrams
\bea
\sum_{\lambda}\mathfrak{q}^{|\lambda|}=\PE\left[\frac{\mathfrak{q}}{1-\mathfrak{q}}\right].
\eea

\item Pure SYM limit: We first rescale $\mathfrak{q}\rightarrow q_{3}\mathfrak{q}$ and take the limit $q_{3},q_{4}\rightarrow 0,\infty$ with $q_{34}$ fixed:
\bea
(q_{3},q_{4})\mapsto (z^{r}q_{3},z^{-r}q_{4}),\quad z\rightarrow 0,\,\,r>0.
\eea
Under this limit, we have
\bea
q_{3}^{|\lambda|}\widetilde{\mathcal{Z}}^{\D4}_{12}[\lambda]\rightarrow \frac{1}{\mathsf{N}_{12}(v,\lambda\,|\,v,\lambda)}.
\eea
The PE formula transforms as
\bea
\frac{\mathfrak{q}q_{3}}{1-\mathfrak{q}q_{3}}\mathscr{S}_{34}(q_{1})\rightarrow \frac{\mathfrak{q}}{(1-q_{1})(1-q_{2})}
\eea
and we have
\bea
\sum_{\lambda}\mathfrak{q}^{|\lambda|}\frac{1}{\mathsf{N}_{12}(v,\lambda\,|\,v,\lambda)}=\PE\left[\frac{\mathfrak{q}}{(1-q_{1})(1-q_{2})}\right]
\eea

\end{itemize}

\chapter{Free field realizations and vertex operators}
\label{chap:freefield-vertexop}


In this chapter, we introduce vertex operators whose vacuum expectation value reproduces the contour integral formula of the Witten index of a low energy field theory. In section~\ref{sec:vertexop-quiver}, we introduce the concept of $q$-deformed quiver Cartan matrices (shortly qquiver Cartan matrix) and give a definition of them. Using this qquiver Cartan matrix, we introduce vertex operators using free bosons and derive the free field realization of the Witten index. In section~\ref{sec:freefieldintegral}, we apply this formalism to each gauge origami setup and show that they indeed reproduce the contour integral formula after taking the vacuum expectation value. These free field realizations imply the existence of an underlying quantum algebraic structure which will be discussed in Chap.~\ref{chap:quantum-algebra-BPSqq}. This chapter is mainly based on \cite{Kimura:2023bxy}.

\section{Vertex operators and quivers}\label{sec:vertexop-quiver}

As discussed in section~\ref{sec:susyqm-index}, the Witten index of an $\mathcal{N}=2$ supersymmetric quiver quantum mechanical system \eqref{eq:2susylocalization} is written as a contour integral over rational functions coming from the vector, chiral, and Fermi superfield contributions. Given a quiver $\overline{Q}=(\overline{Q}_{0},\overline{Q}_{1})$, we define a \textbf{$q$-deformed Cartan matrix} encoding the flavor charges of the component fields and show that the contour integral formula has a free field realization in vertex operators. The basic components are rewritten in the multiplicative variables as follows.
\begin{itemize}[topsep=0pt, partopsep=0pt, itemsep=0pt]
    \item For each gauge group (circle node) of the theory, we have 
    \bea\label{eq:index-vector-mod}
\adjustbox{valign=c}{\begin{tikzpicture}[decoration={markings,mark=at position \arrowHeadPosition with {\arrow{latex}}}]
 \tikzset{
        box/.style={draw, minimum width=0.6cm, minimum height=0.6cm, text centered,thick},
        ->-/.style={decoration={
        markings,mark=at position #1 with {\arrow[scale=1.5]{>}}},postaction={decorate},line width=0.5mm},
        -<-/.style={decoration={
        markings,
        mark=at position #1 with {\arrow[scale=1.5]{<}}},postaction={decorate},line width=0.5mm}    
    }
\begin{scope}{xshift=0cm}
    \draw[fill=black!10!white,thick](0,0) circle(0.4cm);
    \node at (0,0){$N_{a}$};
\end{scope}
\end{tikzpicture}}\quad {\large \rightsquigarrow} \quad \prod_{I=1}^{N_{a}}\frac{dx_{I}^{(a)}}{2\pi i x_{I}^{(a)}} \prod_{I\neq J}\left(\frac{x_{I}^{(a)}}{x_{J}^{(a)}}\right)^{1/2}\left(1-\frac{x_{J}^{(a)}}{x_{I}^{(a)}}\right)
    \eea
\item The chiral superfield corresponding to an arrow from $\U(N_{a})$ to $\U(N_{b})$ gives the contribution
    \bea\label{eq:index-chiral-mod}
   \adjustbox{valign=c}{ \begin{tikzpicture}[decoration={markings,mark=at position \arrowHeadPosition with {\arrow{latex}}}]
 \tikzset{
        box/.style={draw, minimum width=0.6cm, minimum height=0.6cm, text centered,thick},
        ->-/.style={decoration={
        markings,mark=at position #1 with {\arrow[scale=1.5]{>}}},postaction={decorate},line width=0.5mm},
        -<-/.style={decoration={
        markings,
        mark=at position #1 with {\arrow[scale=1.5]{<}}},postaction={decorate},line width=0.5mm}    
    }

\begin{scope}{}
\draw[fill=black!10!white,thick](4.7,-2) circle(0.4cm);
\node at (4.7,-2){$N_{b}$};
\draw[fill=black!10!white,thick](2.1,-2) circle(0.4cm);
\node at (2.1,-2){$N_{a}$};
\draw[postaction={decorate}, thick](2.5,-2)--(4.3,-2);
\node[above] at (3.4,-2){$q(\Phi_{a\rightarrow b})$};
\end{scope}
\end{tikzpicture}}\quad \rightsquigarrow\quad \prod_{I=1}^{N_{a}}\prod_{J=1}^{N_{b}}\left(\frac{q(\Phi_{a\rightarrow b})x_{I}^{(a)}}{x_{J}^{(b)}}\right)^{1/2}\frac{1}{\left(1-q(\Phi_{a\rightarrow b})x_{I}^{(a)}/x_{J}^{(b)}\right)}
    \eea
    \item The Fermi superfield corresponding to an arrow from $\U(N_{a})$ to $\U(N_{b})$ gives the contribution 
    \bea\label{eq:index-Fermi-mod}
   \adjustbox{valign=c}{ \begin{tikzpicture}[decoration={markings,mark=at position \arrowHeadPosition with {\arrow{latex}}}]
 \tikzset{
        box/.style={draw, minimum width=0.6cm, minimum height=0.6cm, text centered,thick},
        ->-/.style={decoration={
        markings,mark=at position #1 with {\arrow[scale=1.5]{>}}},postaction={decorate},line width=0.5mm},
        -<-/.style={decoration={
        markings,
        mark=at position #1 with {\arrow[scale=1.5]{<}}},postaction={decorate},line width=0.5mm}    
    }

\begin{scope}{}
\draw[fill=black!10!white,thick](4.7,-2) circle(0.4cm);
\node at (4.7,-2){$N_{b}$};
\draw[fill=black!10!white,thick](2.1,-2) circle(0.4cm);
\node at (2.1,-2){$N_{a}$};
\draw[red, postaction={decorate}, thick](2.5,-2)--(4.3,-2);
\node[above] at (3.4,-2){\textcolor{red}{$q(\Lambda_{a\rightarrow b})$}};
\end{scope}
\end{tikzpicture}}\quad \rightsquigarrow\quad \prod_{I=1}^{N_{a}}\prod_{J=1}^{N_{b}}\left(\frac{x_{J}^{(b)}}{x_{I}^{(a)}q(\Lambda_{a\textcolor{red}{\rightarrow} b})}\right)^{1/2}\left(1-\frac{q(\Lambda_{a\textcolor{red}{\rightarrow} b})x_{I}^{(a)}}{x_{J}^{(b)}}\right).
    \eea
\end{itemize}
For explicit construction of the free field realization, let us introduce the \textbf{half $q$-deformed quiver Cartan matrix}.\footnote{In the context of quiver W-algebras, a similar $q$-Cartan matrix appears (see for example \cite[Sec.2.2.3]{Kimura:2020jxl}). The quiver considered there is a 4 SUSY quiver while the one here is a 2 SUSY quiver.} We shortly call this the half qquiver\footnote{$q$-deformed + quiver} Cartan matrix.

\begin{definition}\label{def:halfqquiverCartan}
    Let $\overline{Q}=(\overline{Q}_{0},\overline{Q}_{1})$ be a 2 SUSY quiver. We assume that the flavor charges are associated with the chiral and Fermi superfields as mentioned in section~\ref{sec:2susy2d-flavorsymmetry}. Namely, for each chiral superfield $\Phi_{a\rightarrow b}$ and Fermi superfield $\Lambda_{a\textcolor{red}{\rightarrow} b}$, we associate the flavor charges $q(\Phi_{a\rightarrow b})$ and $q(\Lambda_{a\textcolor{red}{\rightarrow} b})$ respectively. The (positive) \textbf{half qquiver Cartan matrix} $c^{+}=(c^{+}_{ab})_{a,b\in\overline{Q}_{0}}$ is defined as follows. For circle nodes $a,b\in\overline{Q}_{0}$, it is defined as
    \bea
    c_{ab}^{+}=\delta_{ab}-\sum_{b\rightarrow a}q(\Phi_{b\rightarrow a})+\sum_{b\textcolor{red}{\rightarrow} a}q(\Lambda_{b\textcolor{red}{\rightarrow} a}),
    \eea
    where the sum is taken over all arrows connecting the two quiver nodes.
    
\end{definition}
The first term is understood as the contribution from the vector superfield. Note that when one of the nodes is a square node, the first term of the half qquiver Cartan matrix disappears. We omit the discussion when both nodes are square nodes. Such contributions are decoupled from the supersymmetric quantum mechanical system at the low energy limit.\footnote{In instanton contributions, they appear as one--loop perturbative contributions.}

\begin{definition}\label{def:qquiver-conjugate}
Given the (positive) half qquiver Cartan matrix $c^{+}=(c^{+}_{ab})_{a,b\in\overline{Q}_{0}}$, we define the \textit{Hermitian conjugate} (conjugate transpose)\footnote{Since we are taking the transpose and the dual operation which resembles the Hermitian conjugate operation $\dagger$, we abuse the terminology and call it in such way. The symbol $\dagger$ also comes from this motivation. } operation of it as
\bea
(c^{+\dagger})_{ab}=c_{ba}^{+\vee} 
\eea
where the dual operation $\vee$ simply changes the flavor charges to the inverse (which follows the convention in \eqref{eq:dualcharacter}). We sometimes call this the negative half qquiver Cartan matrix and denote it as $c^{-}_{ab}=c_{ba}^{+\vee}$.
\end{definition}

\begin{definition}
    The degree $n\in\mathbb{Z}$ half qquiver Cartan matrix is defined by taking the $n$-th Adams operation:
    \bea
    c_{ab}^{+[n]}=\delta_{ab}-\sum_{b\rightarrow a}q(\Phi_{b\rightarrow a})^{n}+\sum_{b\rightarrow a}q(\Lambda_{b\textcolor{red}{\rightarrow} a})^{n}
    \eea
    where the convention $[n]$ comes from \eqref{eq:pdegree-op}.
    
\end{definition}

Given the above ingredients, we define the total qquiver Cartan matrix.
\begin{definition}\label{def:totalqquiverCartan}
    The \textbf{total qquiver Cartan matrix} $c=(c_{ab})_{a,b\in\overline{Q}_{0}}$ is defined as the sum of positive and negative half qquiver Cartan matrices:
    \bea
    c_{ab}&=c_{ab}^{+}+c_{ab}^{-}=2\delta_{ab}-\left(\sum_{b\rightarrow a}q(\Phi_{b\rightarrow a})+\sum_{a\rightarrow b}q(\Phi_{a\rightarrow b})^{-1} \right)\\
    &\qquad +\left(\sum_{b\textcolor{red}{\rightarrow} a}q(\Lambda_{b\textcolor{red}{\rightarrow} a})+\sum_{a\textcolor{red}{\rightarrow} b}q(\Lambda_{a\textcolor{red}{\rightarrow} b})^{-1}\right).
    \eea
    We sometimes omit the \textit{total} and simply call it the qquiver Cartan matrix.
\end{definition}

An important property of this total qquiver Cartan matrix is that it is self-adjoint (\textit{Hermitian}):
\bea\label{eq:qquiverCartan-Hermite}
c_{ab}^{\dagger}=c_{ab}.
\eea
The $n$-th Adams operation is induced similarly as discussed above. The strategy to find free field realizations is then given as follows.
\begin{enumerate}
    \item For each quiver node $a\in\overline{Q}_{0}$, we associate a vertex operator assuming the form
    \bea
    \mathsf{V}_{a}(x)=\mathsf{v}_{a,0}(x)\exp\left(\sum_{\substack{n\neq 0\\n\in\mathbb{Z}}}\mathsf{v_{a,n}}x^{-n}\right),
    \eea
where $\mathsf{v}_{a,0}(x)$ is the zero-mode of the vertex operator.
    \item We impose the commutation relations of the non-zero modes $\{\mathsf{v}_{a,n}\}$ using the total qquiver Cartan matrix:
    \bea\label{eq:non-zero-OPE-qquiver}
    [\mathsf{v}_{a,n},\mathsf{v}_{b,m}]=-\frac{1}{n}\delta_{n+m,0}c_{ab}^{[n]}.
    \eea
    In this thesis, we do not give a general description on how to determine the qquiver Cartan matrix when both the nodes $a,b\in\overline{Q}_{0}$ are square nodes (flavor nodes). 
    \item We further need to impose the commutation relations on the zero-modes $\mathsf{v}_{a,0}(x)$ but in this thesis we do not give a systematic procedure to define it. Examples will be given in section~\ref{sec:zeromodes}.
    
    \item Suppose that the rank of the quiver node $a\in\overline{Q}_{0}$ is $N_{a}$. The free field realization of the Witten index is then given as
    \bea\label{eq:Wittenindex-freefield-general}
    \oint_{\text{JK}}\prod_{a\in\overline{Q}_{0}}\prod_{I=1}^{N_{a}}\frac{dx_{I}^{(a)}}{2\pi i x_{I}^{(a)}} \prod_{a\in\overline{Q}_{0}}\prod_{I=1}^{N_{a}}\mathsf{V}_{a}(x_{I}^{(a)}).
    \eea
   We assumed here for simplicity that all the quiver nodes are circle nodes. If some are square nodes, one will simply discard the contour integral over the flavor fugacities.
\end{enumerate}
Let us give some motivation of the above procedure. The operator product of the vertex operators $\mathsf{V}_{a}(x_{I})$ and $\mathsf{V}_{b}(x_{J})$ is given by 
\bea
\mathsf{V}_{a}(x_{I})\mathsf{V}_{b}(x_{J})&=\wick{\c{\mathsf{v}_{a,0}(x_{I})}\c{\mathsf{v}_{b,0}(x_{J})}}\exp\left(\sum_{n>0}-\frac{1}{n}c_{ab}^{[n]}\left(\frac{x_{J}}{x_{I}}\right)^{n}\right):\mathsf{V}_{a}(x_{I})\mathsf{V}_{b}(x_{J}):\\
&=\wick{\c{\mathsf{v}_{a,0}(x_{I})}\c{\mathsf{v}_{b,0}(x_{J})}}\frac{\left(1-x_{J}/x_{I}\right)^{2\delta_{ab}} \prod\limits_{b\textcolor{red}{\rightarrow} a} \left(1-q(\Lambda_{b\textcolor{red}{\rightarrow }a})x_{J}/x_{I} \right) \prod\limits_{a\textcolor{red}{\rightarrow}b} \left(1-q(\Lambda_{a\textcolor{red}{\rightarrow}b})^{-1}x_{J}/x_{I}\right)  }{\prod\limits_{b\rightarrow a}\left(1-q(\Phi_{b\rightarrow a})x_{J}/x_{I}\right)\prod\limits_{a\rightarrow b}\left(1-q(\Phi_{a\rightarrow b})^{-1}x_{J}/x_{I}\right)}\\
&\qquad \times :\mathsf{V}_{a}(x_{I})\mathsf{V}_{b}(x_{J}):,
\eea
where the operator product of the zero-modes are denoted as
\bea\label{eq:zeromode-Wick}
\mathsf{v}_{a,0}(x_{I})\mathsf{v}_{b,0}(x_{J})=\wick{\c{\mathsf{v}_{a,0}(x_{I})}\c{\mathsf{v}_{b,0}(x_{J})}}:\mathsf{v}_{a,0}(x_{I})\mathsf{v}_{b,0}(x_{J}):.
\eea
After proper definitions of the zero-modes, we can reproduce the contributions \eqref{eq:index-vector-mod}, \eqref{eq:index-chiral-mod}, and \eqref{eq:index-Fermi-mod}. We note that when chiral and Fermi superfields transforming under the adjoint representation of a gauge group exist in the theory considered, overall factors not depending on the fugacities $x_{I}^{(a)}$ appear. Such factors are not reproduced from the operator product of the vertex operators and we need to add them by hand.

Furthermore, when considering operator products, we need to specify the order of the operators because generally they might not commute with each other. In this thesis, we simply assume that we can introduce proper zero-modes so that the rational functions arising in the right hand side are the same after analytic continuation:
\bea
\mathsf{V}_{a}(x_{I})\mathsf{V}_{b}(x_{J})=\mathsf{V}_{b}(x_{J})\mathsf{V}_{a}(x_{I}).
\eea
Whether we can impose this condition and find an explicit expression generally is nontrivial and it will not be discussed in this thesis.

When understanding the operator product in \eqref{eq:Wittenindex-freefield-general}, we understand the products for vertex operators associated with flavor nodes using normal products:
\bea
:\prod_{a}\prod_{I=1}^{N_{a}}\mathsf{V}_{a}(v_{I}^{(a)}):
\eea
where $v_{I}^{(a)}$ are the flavor fugacities. In this way, we do not need the definition of the qquiver Cartan matrix associated with two flavor nodes. Physically, we are simply discarding the contributions which are not needed when evaluating the contour integral. In instanton counting, they will correspond to the one-loop perturbative factors.

\section{Free field realizations of gauge origami}\label{sec:freefieldintegral}
In Chap.~\ref{chap:gaugeorigamipartitionfunction}, we derived the contour integral formulas and non-perturbative partition functions of the magnificent four \eqref{eq:D8integralmod}, tetrahedron instantons \eqref{eq:D6integralmod}, spiked instantons \eqref{eq:D4integralmod}. All of the formulas are derived from an $\mathcal{N}=2$ quiver. In this section, we will explicitly find the free field realizations of the contour integral formula for each setup by following the general discussion of section~\ref{sec:vertexop-quiver}.



The main statements of this section are summarized as follows.
\begin{theorem}\label{thm:freefieldconclusion}
    For each D-brane (D0, D2, D4, D6, D8), we can define the corresponding vertex operators as
    \begin{align}
    \renewcommand\arraystretch{1.2}{
        \begin{tabular}{|c|c|c|c|}\hline
            D-brane & space-time & vertex operator & reference\\
           \hline\hline  D0-brane  & $\mathbb{S}^{1}$ & $\mathsf{A}(x)$ & \eqref{eq:D0op}, \eqref{eq:D0op2}\\
           \hline D2-brane   &  $\mathbb{C}_{a}\times \mathbb{S}^{1}$ ($a\in\four$)  & $\mathsf{S}_{a}(x)$&\eqref{eq:D2op}\\
           \hline D4-brane &  $\mathbb{C}^{2}_{A}\times \mathbb{S}^{1}$ ($A\in\six$) & $\mathsf{X}_{A}(x)$&\eqref{eq:D4op}, \eqref{eq:D4op2}\\
           \hline D6-brane &  $\mathbb{C}^{3}_{\bar{a}}\times \mathbb{S}^{1}$ ($a\in\four$) &  $\mathsf{W}_{\bar{a}}(x)$& \eqref{eq:D6op}, \eqref{eq:D6op2}\\
           \hline D8-brane & $\mathbb{C}^{4}\times \mathbb{S}^{1}$  & $\mathsf{Z}(x)$& \eqref{eq:D8op}, \eqref{eq:D8op2} \\\hline
         \end{tabular}}
    \end{align}
    In the contour integral formula, the D0-branes giving instanton contributions arise from\footnotemark $\mathsf{A}^{-1}(x)$, while other D-branes arise from $\mathsf{S}_{a}(x),\mathsf{X}_{A}(x),\mathsf{W}_{\bar{a}}(x),\mathsf{Z}(x)$. To include anti D-branes/negative D-branes, we need to reverse the power of the operators as $\mathsf{S}_{a}(x)^{-1},\mathsf{X}_{A}(x)^{-1},\mathsf{W}_{\bar{a}}(x)^{-1},\mathsf{Z}(x)^{-1}$. 
\end{theorem}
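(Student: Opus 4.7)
The plan is to proceed constructively, brane by brane and quiver by quiver. First, I would read off the (positive) half qquiver Cartan matrices $c^{+}$ directly from the three gauge-origami $\mathcal{N}=(0,2)$ quivers \eqref{eq:2SUSYquiver-magnificent}, \eqref{eq:2SUSYquiver-tetrahedron}, \eqref{eq:2SUSYquiver-spikedinstantonn} decorated with their flavor charges from section~\ref{sec:2susy2d-flavorsymmetry}. For the D0 node, the adjoint vector contributes $+\delta$, the four adjoint chirals $\mathsf{B}_{a}$ contribute $-\sum_{a\in\four}q_{a}$, and the three adjoint Fermis $\Lambda_{1,2,3}$ contribute $+\sum_{i=1}^{3}q_{i}q_{4}$; for the circle--square entries between the D0 node and each D$(2p)$ flavor node, I extract the bifundamental chiral/Fermi charges listed in \eqref{eq:flavorquiver-magnificent}, \eqref{eq:flavorquiver-tetrahedron}, \eqref{eq:flavorquiver-spiked}. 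Upon symmetrizing according to Def.~\ref{def:totalqquiverCartan} and imposing the Calabi--Yau condition $q_{1}q_{2}q_{3}q_{4}=1$, the D0--D0 Cartan entry reorganizes into $2\delta_{ab}-\bfP_{\bar{4}}-\bfP_{\bar{4}}^{\vee}=-\bfP_{\four}^{\vee}$, which is exactly the character whose index gives $\mathcal{A}_{\mathbb{C}^{4}}^{-1}$. Analogous reorganizations produce $\bfP_{a}^{\vee}$ for D0--D6 entries, $\bfP_{A}^{\vee}\bfP_{\bar{A}}^{\vee}$-type combinations for D0--D4 entries, and $(1-K^{-1})$-type factors for D0--D8 entries.

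Second, I associate to each circle node of $\overline{Q}_{0}$ a Heisenberg oscillator $\{\mathsf{v}_{a,n}\}_{n\neq 0}$ obeying \eqref{eq:non-zero-OPE-qquiver} with the Cartan data just computed, and declare the vertex operators $\mathsf{A}(x)$, $\mathsf{S}_{a}(x)$, $\mathsf{X}_{A}(x)$, $\mathsf{W}_{\bar{a}}(x)$, $\mathsf{Z}(x)$ to be the exponentiated mode currents built from the D0, D2, D4, D6, D8 columns respectively (with $\mathsf{A}^{-1}$ on the D0 node, as anticipated by the fact that $c_{00}$ has the opposite sign convention to the chiral contribution in the integrand). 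The operator product expansion then follows mechanically from the Wick formula in section~\ref{sec:vertexop-quiver}: each pairing $\mathsf{A}(x)\mathsf{V}_{i}(v_{i})$ produces, after taking the index, exactly the $1{\to}(2p{+}1)$ bifundamental factor appearing in the integrands \eqref{eq:D8integralmod}, \eqref{eq:D6integralmod}, \eqref{eq:D4integralmod}, while $\mathsf{A}(x_{I})\mathsf{A}(x_{J})$ reproduces $\mathcal{A}_{\mathbb{C}^{4}}(x_{J}/x_{I})^{-1}$. Putting the $k$ D0 operators under a contour integral over $\{x_{I}\}_{I=1}^{k}$ with JK prescription and matching against each of Prop.~\ref{prop:M4contourJK}, \ref{prop:tetracontourJK}, \ref{prop:spikedcontourJK} then yields the desired free-field form.

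The third, and main, difficulty lies in the zero-mode sector $\mathsf{v}_{a,0}(x)$. Three constraints must be met simultaneously: (i) the prefactors $\mathcal{G}^{k}$ arising from the adjoint Fermi/chiral content (which are \emph{not} produced by the non-zero-mode Wick contractions and must be inserted by hand as overall normalizations in the definition of $\mathsf{A}(x)$), (ii) the weak commutativity $\mathsf{V}_{a}(x_{I})\mathsf{V}_{b}(x_{J})=\mathsf{V}_{b}(x_{J})\mathsf{V}_{a}(x_{I})$ up to analytic continuation, which is needed so that the normal-ordered product $:\!\prod_{i}\mathsf{V}_{i}(v_{i})\!:$ is well defined independently of ordering, and (iii) the proper insertion of the instanton counting parameter $\mathfrak{q}$ and, in the D8 case, of the $K$-parameter tracking the $\overline{\D8}$ displacement. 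I would handle (ii) and (iii) by adjoining to the Heisenberg Fock space a lattice of shift operators whose commutators produce the required cocycle factors, following the standard template recalled in section~\ref{sec:zeromodes}; the delicate point is that the same choice of shift structure must work for all five brane types at once, since the theorem asserts the existence of one universal set of operators reproducing every gauge-origami integrand simultaneously.

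Finally, I would verify the matching case by case: for magnificent four, compare $\langle 0 | \prod_{I}\mathsf{A}(x_{I})^{-1}\,\mathsf{Z}(v) | 0\rangle$ against \eqref{eq:M4contourJK}; for tetrahedron instantons, compare $\langle 0|\prod_{I}\mathsf{A}(x_{I})^{-1}\prod_{a,\alpha}\mathsf{W}_{\bar{a}}(v_{\bar{a},\alpha})|0\rangle$ against \eqref{eq:tetracontourJK}; and for spiked instantons, compare with \eqref{eq:spikedcontourJK} using $\mathsf{X}_{A}$. Each verification is reduced to matching the structure function identities already collected in section~\ref{sec:residue-formula}, so once the zero-modes are pinned down, the theorem follows by direct OPE computation.
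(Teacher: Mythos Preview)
Your proposal is correct and matches the paper's construction, which proceeds case by case through sections~\ref{sec:M4LMNS}--\ref{sec:cplvortLMNS}: define the Heisenberg modes so that the OPEs reproduce the structure functions $\mathcal{A}_{\mathbb{C}^{4}}^{-1},\mathscr{V}_{a},\mathscr{S}_{\bar{A}}$ in the integrands, fix the zero modes as in section~\ref{sec:zeromodes}, and verify against Prop.~\ref{prop:M4contourJK}, \ref{prop:tetracontourJK}, \ref{prop:spikedcontourJK}. The only difference is emphasis: the paper first reads off the commutators directly from the rational functions in the contour integrand and only afterwards cross-checks them against the qquiver Cartan matrix (the ``Quiver interpretation'' paragraphs), whereas you run the logic Cartan-first; both routes are present in the paper and lead to the same operators. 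One small correction: your expression $2\delta_{ab}-\bfP_{\bar{4}}-\bfP_{\bar{4}}^{\vee}$ for the D0--D0 entry is off---the correct reorganization is $c_{\text{D0},\text{D0}}=\bfP_{\bar{4}}+\bfP_{\bar{4}}^{\vee}=\bfP_{\four}$ (the $2$ from the vector is already absorbed), and the $\mathcal{G}^{k}$ prefactor sits outside the correlator rather than in the definition of $\mathsf{A}(x)$, but neither point affects your strategy.
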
\footnotetext{In this paper, when we write $\mathsf{V}(x)^{-1}$ for a vertex operator $\mathsf{V}(x)$, we are meaning $:\mathsf{V}(x)^{-1}:$. The normal ordering of them is implicitly imposed.}

\begin{theorem}
The contour integral formula of the $k$-instanton sector of the partition function of the gauge origami system takes the form as
\bea
    \mathcal{Z}_{k} = \frac{\mathcal{G}^{k}}{k!}\oint_{\text{JK}} \prod_{I=1}^{k}\frac{dx_{I}}{2\pi\iota x_{I}} \bra{0} \prod_{I=1}^{k}\mathsf{A}(x_{I})^{-1} :\prod_{i}\mathsf{V}_{i}(v_{i}): \ket{0},
\eea
where $\mathsf{V}_{i}(x)$ is an operator written from $\{\mathsf{X}_{A}(x),\mathsf{W}_{\bar{a}}(x),\mathsf{Z}(x)\}$.
\end{theorem}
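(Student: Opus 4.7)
The plan is to construct the vertex operators explicitly for each D-brane using the general prescription of Section~\ref{sec:vertexop-quiver}, and then verify the formula case by case for the magnificent four, tetrahedron instanton, and spiked instanton setups, whose contour integral expressions are collected in Propositions~\ref{prop:M4contourJK}, \ref{prop:tetracontourJK}, \ref{prop:spikedcontourJK}. The D0-brane operator $\mathsf{A}(x)$ corresponds to the $\U(k)$ gauge node common to all three setups. The auxiliary D-brane operators $\mathsf{X}_{A}(x)$, $\mathsf{W}_{\bar{a}}(x)$, $\mathsf{Z}(x)$ correspond to the flavor square nodes in the quiver diagrams \eqref{eq:flavorquiver-spiked}, \eqref{eq:flavorquiver-tetrahedron}, \eqref{eq:flavorquiver-magnificent}. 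Thus the first step is, for each setup, to read off the half qquiver Cartan matrix $c^{+}_{ab}$ from Def.~\ref{def:halfqquiverCartan} using the flavor charges worked out in Section~\ref{sec:2susy2d-flavorsymmetry}, form the total qquiver Cartan matrix $c_{ab}$ by Def.~\ref{def:totalqquiverCartan}, and use these entries to prescribe the non-zero mode commutators \eqref{eq:non-zero-OPE-qquiver} of the Heisenberg oscillators entering each vertex operator.

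The second step is the Wick-contraction computation. For two vertex operators $\mathsf{V}_{a}(x)\mathsf{V}_{b}(y)$ defined from the modes above, the exponential of the mode commutator yields (up to zero-mode factors) the rational function
\begin{equation*}
\frac{(1-y/x)^{2\delta_{ab}}\prod_{b\textcolor{red}{\rightarrow}a}(1-q(\Lambda_{b\textcolor{red}{\rightarrow}a})\,y/x)\prod_{a\textcolor{red}{\rightarrow}b}(1-q(\Lambda_{a\textcolor{red}{\rightarrow}b})^{-1}y/x)}{\prod_{b\rightarrow a}(1-q(\Phi_{b\rightarrow a})\,y/x)\prod_{a\rightarrow b}(1-q(\Phi_{a\rightarrow b})^{-1}y/x)},
\end{equation*}
which is exactly what \eqref{eq:index-vector-mod}--\eqref{eq:index-Fermi-mod} demand after identifying $x\leftrightarrow x_{I}^{(a)}$, $y\leftrightarrow x_{J}^{(b)}$. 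Doing this for each pair of D-brane/D0-brane operators, the D0--D0 contractions reproduce the $\mathcal{A}_{\mathbb{C}^{4}}^{-1}$ self-interaction factor and the Vandermonde of the vector multiplet, while the D0--D$(2p)$ contractions reproduce the structure functions $\mathscr{V}_{a}$, $\mathscr{S}_{A}$, $g_{\bar{a}}$ appearing in the integrands \eqref{eq:D4integralmod}, \eqref{eq:D6integralmod}, \eqref{eq:D8integralmod}. The remaining D$(2p)$--D$(2p)$ contractions are absorbed into the one-loop perturbative pieces \eqref{eq:D4oneloop}, \eqref{eq:D6oneloop}, \eqref{eq:D8oneloop} or, since these operators only appear under normal ordering $:\prod_{i}\mathsf{V}_{i}(v_{i}):$, are simply dropped from the bracket in the statement.

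The third step is to match the prefactors. The zero-mode factors $\mathsf{v}_{a,0}(x)$ must be chosen so that contractions \eqref{eq:zeromode-Wick} supply the correct square-root prefactors $(qx/y)^{1/2}$ appearing in \eqref{eq:index-vector-mod}--\eqref{eq:index-Fermi-mod}, and the overall constant $\mathcal{G}^{k}/k!$ arises from the adjoint chiral/Fermi multiplets coupled to the $\U(k)$ node, which do not descend from an OPE and must be inserted by hand; see the remark following \eqref{eq:zeromode-Wick}. The topological fugacity $\mathfrak{q}^{k}$ enters through the zero-mode of $\mathsf{A}(x)^{-1}$. Finally, by surrounding the full product with the vacuum and using that non-zero Heisenberg modes annihilate $\ket{0}$ on the right and $\bra{0}$ on the left, the VEV collapses to the normal-ordered rational function, and one recovers the prefactor $\mathcal{G}^{k}$ times the integrand of the respective gauge origami partition function.

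The main obstacle I anticipate is the consistent choice of the zero-mode pieces $\mathsf{v}_{a,0}(x)$: these must simultaneously (i) reproduce the half-integer power prefactors in \eqref{eq:index-vector-mod}--\eqref{eq:index-Fermi-mod}, (ii) be compatible with the Hermiticity property \eqref{eq:qquiverCartan-Hermite} so that swapping the order of vertex operators only shuffles analytic branches and leaves the integrand invariant, and (iii) supply the correct topological coupling $\mathfrak{q}^{k}$. Without a canonical procedure, this has to be done setup by setup (magnificent four, tetrahedron, spiked), and cross-checked against the explicit contour integrals \eqref{eq:D4integralmod}--\eqref{eq:D8integralmod}. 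Once consistent zero modes are fixed, the rest of the proof is a mechanical Wick computation guided by the qquiver Cartan data, and the statement follows by assembling these pieces for each gauge origami configuration.
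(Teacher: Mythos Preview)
Your proposal is essentially correct and follows the same case-by-case strategy the paper uses in Sections~\ref{sec:M4LMNS}--\ref{sec:spikedLMNS}: build the D0 operator $\mathsf{A}(x)$ so that its self-contraction yields $\mathcal{A}_{\mathbb{C}^4}^{-1}$, then introduce $\mathsf{Z}$, $\mathsf{W}_{\bar a}$, $\mathsf{X}_A$ so that their contractions with $\mathsf{A}^{-1}$ reproduce the structure functions $\mathscr{V}_a$, $\mathscr{S}_{\bar A}$, and match the vacuum expectation value against \eqref{eq:D8integralmod}, \eqref{eq:D6integralmod}, \eqref{eq:D4integralmod}. The paper actually works in the opposite order---it first postulates the required OPEs from the structure functions and only afterward rederives the commutators from the qquiver Cartan matrix (the ``Quiver interpretation'' paragraphs)---but this is a presentational choice, not a different argument.

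One small correction: your anticipated obstacle (i) about zero modes reproducing the half-integer power prefactors in \eqref{eq:index-vector-mod}--\eqref{eq:index-Fermi-mod} is a red herring. The paper works throughout Section~\ref{sec:freefieldintegral} in the $\mathbb{I}[\cdot]$ convention (see the integrands \eqref{eq:D8integralmod}, \eqref{eq:D6integralmod}, \eqref{eq:D4integralmod}), which has already absorbed those square-root factors into redefinitions of the topological term. The actual role of the zero modes (Section~\ref{sec:zeromodes}) is rather to ensure that swapping the order of two vertex operators yields the same rational function after analytic continuation (conditions \eqref{eq:D0D0zero}, \eqref{eq:D0D8zero}, \eqref{eq:D0D6zero}, \eqref{eq:D0D4zero}), so that the product $\prod_I\mathsf{A}(x_I)^{-1}$ is unambiguous. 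Your point (ii) captures this correctly.
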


\subsection{Magnificent four}\label{sec:M4LMNS}
Let us start with the magnificent four system. The vertex operator of this system was already introduced in \cite{Kimura:2022zsm} in the context of double quiver gauge theory (see also \cite{Kimura:2019hnw}). In this section, we derive the vertex operators that reproduce the contour integral formula \eqref{eq:D8integralmod} by performing the procedure discussed in section~\ref{sec:vertexop-quiver} in a more explicit way. We then discuss how to directly derive the vertex operators using the quiver structure \eqref{eq:flavorquiver-magnificent} and the qquiver Cartan matrix in Def.~\ref{def:totalqquiverCartan}.

The goal of this section is to reproduce the rational function $\mathcal{Z}^{\D8\tbar\D0}(v_{\alpha},\bar{v}_{\alpha},x_{I})$ given in \eqref{eq:D8integralmod}. One can observe that it factorizes in a nice way that rational functions only appear and the fundamental objects are
\bea
\frac{\left(1-\bar{v}_{\alpha}/x_{I}\right)}{\left(1-v_{\alpha}/x_{I}\right)},\qquad \mathcal{A}_{\mathbb{C}^{4}}\left(\frac{x_{I}}{x_{J}}\right)^{-1}.
\eea
Let us first start from the $\mathcal{A}_{\mathbb{C}^{4}}(x)^{-1}$ part. Let $\mathsf{A}(x)$ be some operator whose operator product is given as
\begin{equation}
    \mathsf{A}(x)\mathsf{A}(x')=\mathcal{A}_{\mathbb{C}^{4}}\left(\frac{x'}{x}\right)^{-1}:\mathsf{A}(x)\mathsf{A}(x'):,\label{eq:D0ope}
\end{equation}
Under this ansatz, indeed the $\mathcal{A}_{\mathbb{C}^{4}}(x)^{-1}$ part is reproduced as follows:
\bea
\mathsf{A}(x_{k})^{-1}\cdots \mathsf{A}(x_{1})^{-1}=\prod_{I<J}\mathcal{A}_{\mathbb{C}^{4}}\left(\frac{x_{I}}{x_{J}}\right)^{-1}{:\prod_{I=1}^{k}\mathsf{A}(x_{I})^{-1}:}.
\eea
Note here the inverse of the operator is understood as $\mathsf{A}^{-1}(x)={:\mathsf{A}^{-1}(x):}$. Taking the inverse here is just for convention so that the result matches with results in the literature.

We then assume a free field realization which means that the operator $\mathsf{A}(x)$ is a vertex operator written using free bosons:
\bea\label{eq:D0op}
 \mathsf{A}(x)=\mathsf{a}_{0}(x):\exp\left(\sum_{n\neq 0}\mathsf{a}_{n}x^{-n}\right):,
\eea
where $\mathsf{a}_{0}(x)$ is some zero-mode and $\mathsf{a}_{n}$ are the free bosons. So that we can reproduce \eqref{eq:D0ope}, we need the commutation relation
\bea\label{eq:D0op2}
[\mathsf{a}_{n},\mathsf{a}_{m}]=-\frac{1}{n}\bfP_{\four}^{[n]}\delta_{n+m,0},
\eea
and
\bea
\wick{\c{\mathsf{a}_{0}(x)}\c{\mathsf{a}_{0}(x')}}=1
\eea
where we used the convention \eqref{eq:zeromode-Wick}. For the zero modes $\mathsf{a}_{0}(z)$, we further impose the condition that the OPE factor will be the same rational function. Namely, using \eqref{eq:reflec_structfunc}, we impose 
\begin{equation}
    \wick{\c{\mathsf{a}_{0}(x)}\c{\mathsf{a}_{0}(x')}}=\wick{\c{\mathsf{a}_{0}(x')}\c{\mathsf{a}_{0}(x)}}\label{eq:D0D0zero}
\end{equation}
and then we have the OPE factor symmetric in $x$ and $x'$. The derivation and explicit form of the zero modes are discussed in section~\ref{sec:zeromodes}.

From the supersymmetric quiver quantum mechanics view point, the parameters $\{x_{I}\}_{I=1}^{k}$ are the exponent of the Cartan subalgebra of $\U(k)$ which comes from the $k$ D0-branes in the gauge origami setup. In this sense, the vertex operator $\mathsf{A}(x)^{-1}$ introduced here is associated with the D0-brane (see Chap.~\ref{chap:gauge-origami}) and so we call it the \textbf{D0-brane vertex operator}. In the algebraic context, it is called the \emph{root current} because it is associated with the roots of the quiver \cite{Frenkel:1997CMP,Frenkel:1998ojj}. In the context of quiver W-algebra \cite{Kimura:2015rgi} and double quiver gauge theory \cite{Kimura:2022zsm}, this is just the root current of the affine quiver W-algebra which is denoted by the $(\widehat{A}_{0},\widehat{A}_{0})$ theory in the terminology of \cite{Kimura:2022zsm}.

Let us move on to reproducing the rational function part including $v_{\alpha},\bar{v}_{\alpha}$. To do this, we introduce a vertex operator $\mathsf{Z}(x)$
\begin{equation}
    \mathsf{Z}(x)=\mathsf{z}_{0}(x):\exp\left(\sum_{n\neq 0}\mathsf{z}_{n}x^{-n}\right):\label{eq:D8op}
\end{equation}
with the OPE relation
\bea
    \mathsf{Z}(x)\mathsf{A}(x')&=\wick{\c{\mathsf{z}_{0}(x)}\c{\mathsf{a}_{0}(x')}}\left(1-x'/x\right):\mathsf{Z}(x)\mathsf{A}(x'):\\
    \mathsf{A}(x')\mathsf{Z}(x)&=\wick{\c{\mathsf{a}_{0}(x')}\c{\mathsf{z}_{0}(x)}}\left(1-x/x'\right):\mathsf{A}(x')\mathsf{Z}(x):.
\eea
The commutation relations are then determined as
\begin{equation}\label{eq:D8op2}
    [\mathsf{a}_{n},\mathsf{z}_{m}]=-\frac{1}{n}\delta_{n+m,0},\quad \mathsf{z}_{n}=\frac{\mathsf{a}_{n}}{\bfP_{\four}^{[n]}}
\end{equation}
which also gives
\bea\label{eq:D8op3}
[\mathsf{z}_{n},\mathsf{z}_{m}]=-\frac{1}{n}\frac{1}{\bfP_{\four}^{[n]}}\delta_{n+m,0}.
\eea
Using this vertex operators, we then have 
\bea
\mathsf{A}^{-1}(x_{I}){:\prod_{\alpha=1}^{n}\frac{\mathsf{Z}(v_{\alpha})}{\mathsf{Z}(\bar{v}_{\alpha})}:}=\prod_{\alpha=1}^{n}\frac{\wick{\c{\mathsf{a}_{0}(x_{I})}\c{\mathsf{z}_{0}(\bar{v}_{\alpha})}}}{\wick{\c{\mathsf{a}_{0}(x_{I})}\c{\mathsf{z}_{0}(v_{\alpha})}}}\prod_{\alpha=1}^{n}\frac{(1-\bar{v}_{\alpha}/x_{I})}{(1-v_{\alpha}/x_{I})}{:\mathsf{A}^{-1}(x_{I})\prod_{\alpha=1}^{n}\frac{\mathsf{Z}(v_{\alpha})}{\mathsf{Z}(\bar{v}_{\alpha})}:}
\eea
which indeed reproduces the rational functions including parameters $\{v_{\alpha},\bar{v}_{\alpha}\}$. Similar to the previous discussion, since $\mathsf{Z}(x)^{\pm1}$ reproduces the contributions from the flavor nodes with fugacities $\{v_{\alpha},\bar{v}_{\alpha}\}$, we call it the \textbf{D8-brane vertex operator}.

Since the magnificent four system we are interested in includes the same number of D8 and $\overbar{\text{D8}}$-branes \eqref{eq:M4contourJK}, it is convenient to introduce a vertex operator representing the coupled system with one set of D8 and $\overbar{\text{D8}}$ branes as (see also \eqref{eq:antiD8parameter}):
\begin{equation}
    \mathsf{Z}(K,x)\coloneqq{:\frac{\mathsf{Z}(x)}{\mathsf{Z}(K x)}:}=\tilde{\mathsf{z}}^{K}_{0}(x):\exp\left(\sum_{n\neq 0}\tilde{\mathsf{z}}^{K}_{n}x^{-n}\right):,\quad \tilde{\mathsf{z}}^{K}_{n}=(1-K^{-n})\mathsf{z}_{n}=\frac{1-K^{-n}}{\bfP_{\four}^{[n]}}\mathsf{a}_{n},\label{eq:D8D8barop}
\end{equation}
where $K\in\mathbb{C}^{\times}$ is a generic parameter. This parameter is the parameter introduced in \eqref{eq:antiD8parameter} and corresponds to the distance between the D8 and $\overbar{\text{D8}}$ branes physically. This gives 
\bea
    \mathsf{Z}(K,x)\mathsf{A}(x')&=\wick{\c{\widetilde{\mathsf{z}}^{K}_{0}(x)}\c{\mathsf{a}_{0}(x')}}\frac{1-x'/x}{1-K^{-1}x'/x}:\mathsf{Z}(K,x)\mathsf{A}(x'):,\\
    \mathsf{A}(x')\mathsf{Z}(K,x)&=\wick{\c{\mathsf{a}_{0}(x')}\c{\widetilde{\mathsf{z}}^{K}_{0}(x)}}\frac{1-x/x'}{1-Kx/x'}:\mathsf{Z}(K,x)\mathsf{A}(x'):
\eea
For the zero modes, we impose that the contraction with the root current will be the same rational function after analytic continuation:
\begin{equation}
    \wick{\c{\tilde{\mathsf{z}}^{K}_{0}(x)}\c{\mathsf{a}_{0}(x')}}=K^{-1}\wick{\c{\mathsf{a}_{0}(x')}\c{\tilde{\mathsf{z}}^{K}_{0}(x)}}.\label{eq:D0D8zero}
\end{equation}
We impose $\wick{\c{\mathsf{a}_{0}(x')}\c{\tilde{\mathsf{z}}^{K}_{0}(x)}}=1$ (see section \ref{sec:zeromodes}). Note that we are relaxing the conditions and only imposing conditions on the zero modes of the brane anti-brane coupled vertex operator $\widetilde{\mathsf{Z}}^{K}$ but not the $\mathsf{Z}$-operator itself. 

\begin{proposition}
The contour integral formula of the magnificent four system shown in \eqref{eq:D8integralmod} has a free field realization as
\begin{equation}\label{eq:D8op_integral}
    \mathfrak{q}^{k}\mathcal{Z}_{k}^{\D8}=\frac{\mathfrak{q}^{k}\mathcal{G}^{k}}{k!}\oint_{\text{JK}} \prod_{I=1}^{k}\frac{dx_{I}}{2\pi i x_{I}}\langle \mathsf{A}_{k}^{-1}\widetilde{\mathsf{Z}}^{\underline{K}}_{\underline{n}}\rangle
\end{equation}
where we used
\begin{equation}
\begin{split}
    &\mathsf{A}_{k}^{-1}=\prod_{I=1}^{k}\mathsf{A}(x_{I})^{-1},\quad \widetilde{\mathsf{Z}}^{\underline{K}}_{\underline{n}}=:\prod_{\alpha=1}^{n}\mathsf{Z}(K_{\alpha},v_{\alpha}):,\quad \langle \mathsf{A}_{k}^{-1}\widetilde{\mathsf{Z}}^{\underline{K}}_{\underline{n}}\rangle=\prod_{\alpha=1}^{n}\prod_{I=1}^{k}\frac{1-K_{\alpha}v_{\alpha}/x_{I}}{1-v_{\alpha}/x_{I}}\prod_{I<J}\mathcal{A}_{\mathbb{C}^{4}}\left(\frac{x_{I}}{x_{J}}\right)^{-1}
\end{split}
\end{equation}
and $\langle \mathcal{O}\rangle=\bra{0}\mathcal{O}\ket{0}$.     
\end{proposition}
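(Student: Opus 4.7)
The plan is to compute the vacuum expectation value $\langle \mathsf{A}_k^{-1} \widetilde{\mathsf{Z}}^{\underline{K}}_{\underline{n}} \rangle$ directly by moving the operators into a single normal-ordered product and accumulating the OPE factors. Since both $\mathsf{A}(x)^{-1}$ and $\mathsf{Z}(K,x)$ are exponentials of linear combinations of the modes $\{\mathsf{a}_{n}\}_{n\neq 0}$ together with a zero-mode factor, a vacuum expectation value of a normal-ordered exponential of such modes evaluates to the product of the zero-mode contractions.

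First, I would collect the D0--D0 part. Applying the OPE \eqref{eq:D0ope} pairwise and using the zero-mode symmetry \eqref{eq:D0D0zero}, I get
\begin{equation*}
\prod_{I=1}^{k}\mathsf{A}(x_{I})^{-1} = \prod_{I<J}\mathcal{A}_{\mathbb{C}^{4}}\!\left(\frac{x_{I}}{x_{J}}\right)^{-1} :\prod_{I=1}^{k}\mathsf{A}(x_{I})^{-1}:,
\end{equation*}
where the OPE factor is symmetric in $x_{I},x_{J}$ so the ordering of the product on the left-hand side is immaterial. This reproduces exactly the $\mathcal{A}_{\mathbb{C}^{4}}$-part of the integrand.

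Next, I would contract each D0 operator $\mathsf{A}(x_{I})^{-1}$ against each D8/$\overline{\text{D8}}$ pair operator $\mathsf{Z}(K_{\alpha},v_{\alpha})$. By definition \eqref{eq:D8D8barop} and the mode commutator \eqref{eq:D8op2}, the non-zero-mode contraction yields
\begin{equation*}
\mathsf{A}(x_{I})^{-1}\,\mathsf{Z}(K_{\alpha},v_{\alpha}) \;=\; (\text{zero-mode})\cdot \frac{1-K_{\alpha}v_{\alpha}/x_{I}}{1-v_{\alpha}/x_{I}}\;:\!\mathsf{A}(x_{I})^{-1}\mathsf{Z}(K_{\alpha},v_{\alpha})\!:,
\end{equation*}
and the zero-mode condition \eqref{eq:D0D8zero} with the convention $\wick{\c{\mathsf{a}_{0}(x')}\c{\widetilde{\mathsf{z}}^{K}_{0}(x)}}=1$ ensures that the ordering between $\mathsf{A}^{-1}$ and $\widetilde{\mathsf{Z}}^{\underline{K}}_{\underline{n}}$ does not change the rational function (up to trivial factors absorbed in the overall normalization). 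The $\widetilde{\mathsf{Z}}\tbar\widetilde{\mathsf{Z}}$ OPEs among the D8 operators contribute only to the one-loop perturbative part $\mathcal{Z}^{\D8}_{\text{1-loop}}$, which by the prescription stated at the end of Section~\ref{sec:vertexop-quiver} is handled by writing $\widetilde{\mathsf{Z}}^{\underline{K}}_{\underline{n}}$ as a normal-ordered product; thus it gives no $x_{I}$-dependent contribution.

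Finally I would combine these two computations: the vacuum expectation value of a normal-ordered exponential of non-zero modes is $1$, so $\langle \mathsf{A}_{k}^{-1}\widetilde{\mathsf{Z}}^{\underline{K}}_{\underline{n}}\rangle$ equals the product of the OPE factors collected above, namely
\begin{equation*}
\langle \mathsf{A}_{k}^{-1}\widetilde{\mathsf{Z}}^{\underline{K}}_{\underline{n}}\rangle \;=\; \prod_{\alpha=1}^{n}\prod_{I=1}^{k}\frac{1-K_{\alpha}v_{\alpha}/x_{I}}{1-v_{\alpha}/x_{I}}\;\prod_{I<J}\mathcal{A}_{\mathbb{C}^{4}}\!\left(\frac{x_{I}}{x_{J}}\right)^{-1},
\end{equation*}
which matches $\mathcal{Z}_{k}^{\D8\tbar\D0}(v_{\alpha},K_{\alpha}v_{\alpha},x_{I})$ appearing in \eqref{eq:D8integralmod} under the identification $\bar v_{\alpha} = K_{\alpha}v_{\alpha}$ from \eqref{eq:antiD8parameter}. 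Multiplying by the overall factor $\mathfrak{q}^{k}\mathcal{G}^{k}/k!$ and integrating over the JK contour then yields the claimed identity. The main obstacle, and the only substantive step, is checking that the zero-mode conditions \eqref{eq:D0D0zero} and \eqref{eq:D0D8zero} can be realized consistently so that reordering operators does not spoil the rational function; this is deferred to Section~\ref{sec:zeromodes}, and once granted, the computation above is purely mechanical.
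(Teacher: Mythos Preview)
Your proposal is correct and follows essentially the same approach as the paper: the paper builds up the D0--D0 OPE giving the $\mathcal{A}_{\mathbb{C}^{4}}$ factor and the D0--D8 OPE giving the $(1-K_{\alpha}v_{\alpha}/x_{I})/(1-v_{\alpha}/x_{I})$ factor in the passage immediately preceding the proposition, then states the result as a direct consequence, with only a brief remark afterward that the zero-mode conditions \eqref{eq:D0D0zero} and \eqref{eq:D0D8zero} render the operator ordering immaterial after analytic continuation. Your write-up makes these steps explicit and handles the same zero-mode caveat in the same way, deferring to Section~\ref{sec:zeromodes}.
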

Note that we need to determine an order in the operators when doing explicit computations. In most of the cases, we simply assume $\prod_{I=1}^{k}\mathcal{O}(x_{I})=\mathcal{O}(x_{k})\cdots\mathcal{O}(x_{2}) \mathcal{O}(x_{1})$ for an operator $\mathcal{O}(x)$. Actually, because of the zero modes conditions \eqref{eq:D0D0zero} and \eqref{eq:D0D8zero}, the order of the operators is not relevant after analytic continuation.

\paragraph{Quiver interpretation}
Let us rederive the commutation relations \eqref{eq:D0op2}, \eqref{eq:D8op2} directly from the quiver diagram $\eqref{eq:flavorquiver-magnificent}$ by using the general description proposed in section~\ref{sec:vertexop-quiver}. Let us first focus on the $\mathsf{A}\mathsf{A}$ commutation relation \eqref{eq:D0op2} coming from the adjoint superfields of $\U(k)$. Given the following quiver, we can define the positive half qquiver Cartan matrix in Def.~\ref{def:halfqquiverCartan} as
\bea
\adjustbox{valign=c}{\begin{tikzpicture}[decoration={markings,mark=at position 0.7 with {\arrow{latex}}}]
 \tikzset{
        box/.style={draw, minimum width=0.7cm, minimum height=0.7cm, text centered,thick},
        ->-/.style={decoration={
        markings,mark=at position #1 with {\arrow[scale=1.5]{>}}},postaction={decorate},line width=0.5mm},
        -<-/.style={decoration={
        markings,
        mark=at position #1 with {\arrow[scale=1.5]{<}}},postaction={decorate},line width=0.5mm}    
    }
\begin{scope}[xshift=4cm]
    \draw[fill=black!10!white,thick](0,0) circle(0.4cm);
    \node at (0,0) {$k$};
       \foreach \ang in {90,145,215,270} {
    \begin{scope}[rotate=\ang]
        \chiralarc[postaction={decorate},thick](0,0.5)(-45:225:0.22:0.65)
    \end{scope}
    }
    \foreach \ang in {90,145,270} {
    \begin{scope}[rotate=\ang]
    \fermiarc[postaction={decorate},thick](0,0.5)(-45:225:0.1:0.5)
    \end{scope}
    \node[left] at (-1.5,0) {$q_{2},\textcolor{red}{q_{2}q_{4}}$};
    \node[right] at (1.6,0) {$q_{1},\textcolor{red}{q_{1}q_{4}}$};
    \node[below left] at (-0.9,-1){$q_{3},\textcolor{red}{q_{3}q_{4}}$};
    \node[below right] at (0.9,-1){$q_{4}$};
    \draw[fill=black!10!white,thick](0,0) circle(0.4cm);
    \node at (0,0) {$k$};
    }
\end{scope}
\end{tikzpicture}}\quad{\rightsquigarrow}\quad c^{+}_{\text{D0},\text{D0}}=1-(q_{1}+q_{2}+q_{3}+q_{4})+\textcolor{red}{(q_{14}+q_{24}+q_{34})},
\eea
where we represent the quiver node by D0 and only draw the subquiver related. The term $1$ corresponds to the vector superfield, $q_{1,2,3,4}$ correspond to the chiral superfield, and $q_{14,24,34}$ correspond to the Fermi superfield contribution, respectively.

We further define the negative half quiver Cartan matrix as
\bea
c_{\text{D0},\text{D0}}^{-}=c_{\text{D0},\text{D0}}^{+\vee}=1-(q_{1}^{-1}+q_{2}^{-1}+q_{3}^{-1}+q_{4}^{-1})+(q_{14}^{-1}+q_{24}^{-1}+q_{34}^{-1})
\eea
and the total quiver Cartan matrix is given as
\bea
c_{\text{D0},\text{D0}}=c_{\text{D0},\text{D0}}^{+}+c_{\text{D0},\text{D0}}^{-}=(1-q_{1})(1-q_{2})(1-q_{3})(1-q_{4}).
\eea
Using \eqref{eq:non-zero-OPE-qquiver}, we observe that
\bea
\relax[\mathsf{a}_{n},\mathsf{a}_{m}]=-\frac{1}{n}\delta_{n+m,0}c_{\text{D0},\text{D0}}^{[n]}.
\eea

Let us then move on to the $\mathsf{Z}\mathsf{A}$ commutation relation \eqref{eq:D8op2}. The half qquiver Cartan matrices are given as
\bea
\adjustbox{valign=c}{\begin{tikzpicture}[decoration={markings,mark=at position 0.7 with {\arrow{latex}}}]
 \tikzset{
        box/.style={draw, minimum width=0.7cm, minimum height=0.7cm, text centered,thick},
        ->-/.style={decoration={
        markings,mark=at position #1 with {\arrow[scale=1.5]{>}}},postaction={decorate},line width=0.5mm},
        -<-/.style={decoration={
        markings,
        mark=at position #1 with {\arrow[scale=1.5]{<}}},postaction={decorate},line width=0.5mm}    
    }
\begin{scope}[xshift=4cm]
    \draw[fill=black!10!white,thick](0,0) circle(0.4cm);
    \node at (0,0) {$k$};
    \node[box,fill=black!10!white] at (0,1.6) {$n$};
    \draw[postaction={decorate},thick] (0,1.25)--(0,0.4);
    \node[right] at (0,0.8) {$1$};
\end{scope}
\end{tikzpicture}}\quad \rightsquigarrow\quad \begin{dcases}c_{\text{D0},\text{D8}}^{+}=-1,\\c^{+}_{\text{D8},\text{D0}}=0.\end{dcases}
\eea
which gives the qquiver Cartan matrix
\bea
c_{\text{D0},\text{D8}}=c_{\text{D0},\text{D8}}^{+}+c^{+\vee}_{\text{D8},\text{D0}}=-1.
\eea
We represented the quiver node with rank $n$ as D8 because of its relation with the D8-brane. We then have
\bea
\relax[-\mathsf{a}_{n},\mathsf{z}_{m}]=-\frac{1}{n}c_{\text{D0},\text{D8}}^{[n]}\delta_{n+m,0}=\frac{1}{n}\delta_{n+m,0}
\eea
which indeed reproduces \eqref{eq:D8op3}. The extra sign $-\mathsf{a}_{n}$ appears here because we used $\mathsf{A}^{-1}$ in the contour integral to get the free field realization.

\paragraph{One--loop perturbative part}
Note also that the OPE of the $\mathsf{Z}(K,x)$ operators give the perturbative factor introduced in \eqref{eq:D8oneloop}:
\begin{equation}
    \mathsf{Z}(K_{n},v_{n})\cdots \mathsf{Z}(K_{1},v_{1})=\prod_{\beta>\alpha}\mathcal{Z}_{\text{1-loop}}^{\D8\tbar\D8}(v_{\alpha},K_{\alpha}\,|\,v_{\beta},K_{\beta}):\prod_{\alpha=1}^{n}\mathsf{Z}(K_{\alpha},v_{\alpha}):.
\end{equation}

\subsection{Tetrahedron instanton}\label{sec:tetraLMNS}
Let us find the free field realization of the contour integral formula given in \eqref{eq:D6integralmod}. The contributions coming from $\mathcal{A}_{\mathbb{C}^{4}}(x)^{-1}$ are the same with the D8-case and thus they are reproduced by the D0 vertex operator $\mathsf{A}(x)$. The different part is
\bea
\mathscr{V}_{a}\left(\frac{v_{\bar{a},\alpha}}{x_{I}}\right)
\eea
whose physical origin is the D6-branes in the setup. In this case, we have four types of D6-branes depending on which complex three-planes $\mathbb{C}^{3}_{\bar{a}}$ they wrap. We introduce four types of \textbf{D6-brane vertex operators} defined as
\begin{equation}
\mathsf{W}_{\bar{a}}(x)=\mathsf{w}_{\bar{a},0}(x):\exp\left(\sum_{n\neq 0}\mathsf{w}_{\bar{a},n}x^{-n}\right):,\quad a\in\four\label{eq:D6op}
\end{equation}
where $\mathsf{w}_{\bar{a},0}(x)$ are zero modes. The operator product with respect to the D0-brane vertex operators are imposed as
\begin{equation}
\begin{split}
    \mathsf{A}(x)\mathsf{W}_{\bar{a}}(x')&=\wick{\c{\mathsf{a}_{0}(x)}\c{\mathsf{w}_{\bar{a},0}(x')}}\mathscr{V}_{a}(x'/x)^{-1}:\mathsf{A}(x)\mathsf{W}_{\bar{a}}(x'):\\
    \mathsf{W}_{\bar{a}}(x')\mathsf{A}(x)&=\wick{\c{\mathsf{w}_{\bar{a},0}(x')}\c{\mathsf{a}_{0}(x)}}\mathscr{V}_{a}(q_{a}^{-1}x/x'):\mathsf{W}_{\bar{a}}(x')\mathsf{A}(x):.
\end{split}
\end{equation}
The commutation relation is then uniquely determined as
\begin{equation}
[\mathsf{a}_{n},\mathsf{w}_{\bar{a},m}]=-\frac{1}{n}\bfP_{a}^{[n]}\delta_{n+m,0},\quad \mathsf{w}_{\bar{a},n}=\frac{\mathsf{a}_{n}}{\bfP_{\bar{a}}^{[-n]}},\quad [\mathsf{w}_{\bar{a},n},\mathsf{w}_{\bar{b},m}]=-\frac{1}{n}\frac{\bfP_{\four}^{[n]}}{\bfP_{\bar{a}}^{[-n]}\bfP_{\bar{b}}^{[n]}}\delta_{n+m,0},\label{eq:D6op2}
\end{equation}
Using \eqref{eq:reflec_structfunc}, we impose the zero mode conditions as
\begin{equation}
    \wick{\c{\mathsf{a}_{0}(x)}\c{\mathsf{w}_{\bar{a},0}(x')}}=q_{a}\wick{\c{\mathsf{w}_{\bar{a},0}(x')}\c{\mathsf{a}_{0}(x)}}.\label{eq:D0D6zero}
\end{equation}
Explicitly, we impose $\wick{\c{\mathsf{a}_{0}(x)}\c{\mathsf{w}_{\bar{a},0}(x')}}=1$ (see section \ref{sec:zeromodes}). 
The free field realization of the contour integral formula of the tetrahedron instanton system is then given as follows.
\begin{proposition}
The tetrahedron instanton partition function~\eqref{eq:D6integralmod} is equivalent to the following vertex operator correlation function after analytic continuation,
\begin{equation}\label{eq:D6op_integral}
\mathfrak{q}^{k}\mathcal{Z}_{k}^{\D6}=\frac{\mathfrak{q}^{k} \mathcal{G}^{k}}{k!}\oint_{\text{JK}}\prod_{I=1}^{k}\frac{dx_{I}}{2\pi i x_{I}}\langle \mathsf{A}_{k}^{-1}\mathsf{W}_{\underline{n}}\rangle
\end{equation}
where
\begin{subequations}
\begin{align}
&\mathsf{A}_{k}=\prod_{I=1}^{k}\mathsf{A}(x_{I}),\quad \mathsf{W}_{\underline{n}}={:\prod_{a\in\four}\prod_{\alpha=1}^{n_{\bar{a}}}\mathsf{W}_{\bar{a}}(v_{\bar{a},\alpha}):},\quad \langle\mathsf{A}_{k}^{-1}\mathsf{W}_{\underline{n}}\rangle=\prod_{a\in\four}\prod_{\alpha=1}^{n_{\bar{a}}}\prod_{I=1}^{k}\mathscr{V}_{a}\left(\frac{v_{\bar{a},\alpha}}{x_{I}}\right)\prod_{I<J}\mathcal{A}_{\mathbb{C}^{4}}\left(\frac{x_{I}}{x_{J}}\right)^{-1}.
\end{align}
\end{subequations}
\end{proposition}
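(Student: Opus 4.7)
My plan is to verify the claimed identity by direct computation of the vertex operator correlation function, showing that after all Wick contractions are performed we recover exactly the integrand $\mathcal{Z}_{k}^{\D6\tbar\D0}(v_{\bar{a},\alpha},x_{I})$ of the contour integral formula \eqref{eq:D6integralmod}. The strategy mirrors what was done for the magnificent four case in section~\ref{sec:M4LMNS}, but with the D6-brane vertex operators $\mathsf{W}_{\bar{a}}(x)$ replacing the $\mathsf{Z}(K,v)$ operators.

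First, I would compute the operator product $\prod_{I=1}^{k}\mathsf{A}(x_{I})^{-1}$ by iterated application of \eqref{eq:D0ope}. Since each Wick contraction between two D0-brane vertex operators yields a factor $\mathcal{A}_{\mathbb{C}^{4}}(x'/x)^{-1}$, and the zero-mode condition \eqref{eq:D0D0zero} ensures symmetry in $x,x'$ after analytic continuation (so the ordering is immaterial), I obtain
\begin{equation*}
    \mathsf{A}_{k}^{-1} = \prod_{I<J}\mathcal{A}_{\mathbb{C}^{4}}\!\left(\frac{x_{I}}{x_{J}}\right)^{-1} : \prod_{I=1}^{k}\mathsf{A}(x_{I})^{-1}: ,
\end{equation*}
which directly produces the second factor appearing in the integrand.

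Next I would compute the mixed contractions between $\mathsf{A}(x_{I})^{-1}$ and the D6-brane vertex operators inside $\mathsf{W}_{\underline{n}}$. Using the OPE relation for $\mathsf{A}$ and $\mathsf{W}_{\bar{a}}$, together with the inversion (so the structure function flips to $\mathscr{V}_{a}(v_{\bar{a},\alpha}/x_{I})$ rather than its inverse), and invoking the zero-mode condition \eqref{eq:D0D6zero} to eliminate any ordering ambiguity, each pair $(I,(\bar{a},\alpha))$ contributes a factor $\mathscr{V}_{a}(v_{\bar{a},\alpha}/x_{I})$. Multiplying these over all $I$ and $(\bar{a},\alpha)$ reproduces the first factor of the integrand. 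Finally, the remaining normal-ordered operator acts trivially on the vacuum, $\langle 0|\!:\!\cdots\!:\!|0\rangle=1$, so combining the two sets of contraction factors yields precisely $\mathcal{Z}_{k}^{\D6\tbar\D0}(v_{\bar{a},\alpha},x_{I})$ as required.

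The main subtlety, rather than any genuine obstacle, is ensuring that the ordering-independence really holds: the raw radial-ordering of vertex operators produces rational functions with singularities on opposite sides of the unit circle, and we identify them only after analytic continuation. This is guaranteed by the relaxed zero-mode conditions in \eqref{eq:D0D0zero} and \eqref{eq:D0D6zero}, which are consistent with the reflection identities \eqref{eq:reflec_structfunc} for $\mathcal{A}_{\mathbb{C}^{4}}$ and $\mathscr{V}_{a}$. A secondary point to note is that the self-contractions among the $\mathsf{W}_{\bar{a}}(v_{\bar{a},\alpha})$ operators inside $\mathsf{W}_{\underline{n}}$ are absorbed into the normal ordering by definition; they correspond to the one-loop perturbative factors $\mathcal{Z}^{\D6}_{\text{1-loop}}$ from \eqref{eq:D6oneloop} and are deliberately not included in the instanton integrand, consistent with the general convention adopted at the end of section~\ref{sec:vertexop-quiver}.
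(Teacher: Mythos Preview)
Your proposal is correct and takes essentially the same approach as the paper. The paper does not give a formal proof after stating the proposition; instead it sets up the OPE relations for $\mathsf{A}$ with $\mathsf{A}$ (already in section~\ref{sec:M4LMNS}) and for $\mathsf{A}$ with $\mathsf{W}_{\bar{a}}$, together with the zero-mode conditions, and then states the proposition as an immediate consequence---exactly the computation you outline.
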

Note that the order of the operator product in $\mathsf{A}_{k}$ does not change the result.

\paragraph{One--loop perturbativ part}
The OPE of the $\mathsf{W}$-operators give the one loop perturbative part in \eqref{eq:D6oneloop}:
\begin{equation}
    \mathsf{W}_{\bar{b}}(v_{\bar{b},\beta})\mathsf{W}_{\bar{a}}(v_{\bar{a},\alpha})=\mathcal{Z}_{\text{1-loop}}^{\D6\tbar\D6}(v_{\bar{a},\alpha},\bar{a}\,|\,v_{\bar{b},\beta},\bar{b}):\mathsf{W}_{\bar{b}}(v_{\bar{b},\beta})\mathsf{W}_{\bar{a}}(v_{\bar{a},\alpha}):.
\end{equation}

\paragraph{Quiver interpretation}
Let us rederive the commutation relation \eqref{eq:D6op2}. Similar to the magnificent four case, this can be explicitly read from the quiver \eqref{eq:flavorquiver-tetrahedron}. Due to quadrality, it is enough to focus on one of the stacks of D6-branes which we take the D6$_{123}$-branes. The half qquiver Cartan matrices are given as
\bea
\adjustbox{valign=c}{
\begin{tikzpicture}[decoration={markings,mark=at position \arrowHeadPosition with {\arrow{latex}}}]
 \tikzset{
        box/.style={draw, minimum width=0.7cm, minimum height=0.7cm, text centered,thick},
        ->-/.style={decoration={
        markings,mark=at position #1 with {\arrow[scale=1.5]{>}}},postaction={decorate},line width=0.5mm},
        -<-/.style={decoration={
        markings,
        mark=at position #1 with {\arrow[scale=1.5]{<}}},postaction={decorate},line width=0.5mm}    
    }
\begin{scope}[xshift=4cm]
    \draw[fill=black!10!white,thick](0,0) circle(0.4cm);
    \node at (0,0) {$k$};
    \node[box,fill=black!10!white] at (0,1.6) {$n_{\bar{4}}$};
    \draw[postaction={decorate},thick] (-0.1,1.25)--(-0.1,0.4);
    \draw[postaction={decorate},red,thick] (0.1,1.25)--(0.1,0.4);
    \node[left] at (-0.1,0.8) {$1$};
    \node[right] at (0.1,0.8) {$\textcolor{red}{q_{4}}$};
    \draw[fill=black!10!white,thick](0,0) circle(0.4cm);
    \node at (0,0) {$k$};
\end{scope}
\end{tikzpicture}}\quad {\rightsquigarrow} \quad \begin{dcases}c_{\text{D0},\text{D6}}^{+}=-1+\textcolor{red}{q_{4}},\\c^{+}_{\text{D6},\text{D0}}=0.\end{dcases}
\eea
which gives the qquiver Cartan matrix
\bea
c_{\text{D0},\text{D6}}=c_{\text{D0},\text{D6}}^{+}+c_{\text{D6},\text{D0}}^{+\vee}=-1+\textcolor{red}{q_{4}}.
\eea
We then have
\bea
\relax[-\mathsf{a}_{n},\mathsf{w}_{\bar{4},m}]=-\frac{1}{n}c_{\text{D0},\text{D6}}^{[n]}\delta_{n+m,0}=\frac{1}{n}\bfP_{4}^{[n]}\delta_{n+m,0}
\eea
which indeed reproduces \eqref{eq:D6op2}.

\paragraph{Relation with magnificent four}
Let us focus on the 7d $\U(1)$ theory on $\mathbb{C}^{3}_{\bar{a}}\times \mathbb{S}^{1}$. Starting from the 9d $\U(1|1)$ theory of the magnificent four and tuning the parameter $K=q_{a}$, we have 
\begin{equation}\label{eq:D8D6reduction}
    \mathsf{Z}(q_{a},x)={:\frac{\mathsf{Z}(x)}{\mathsf{Z}(q_{a}x)}:}\simeq\mathsf{W}_{\bar{a}}(x),
\end{equation}
where the equality $\simeq$ is up to extra zero modes depending on the explicit form. In our notation in section \ref{sec:zeromodes}, this becomes an exact identity. Generally, starting from a 9d $\U(n|n)$ magnificent four theory with parameters $(K_{\alpha})_{\alpha=1}^{n}$, $n=\sum_{a\in\four}n_{\bar{a}}$ and tuning 
\begin{equation}
   (K_{\alpha})_{\alpha=1}^{n}\longrightarrow (K_{\bar{a},\alpha})_{\alpha=1}^{n_{\bar{a}}},\quad K_{\bar{a},\alpha}=q_{a}
\end{equation}
we have
\begin{equation}
    \langle \mathsf{A}_{k}^{-1}\widetilde{\mathsf{Z}}^{\underline{K}}_{\underline{n}}\rangle=\langle \mathsf{A}_{\underline{k}}^{-1}\mathsf{W}_{\underline{n}}\rangle,
\end{equation}
and thus, we obtain the tetrahedron instanton system. Note also that setting $K=q_{a}^{-1}$ gives 
\bea
\mathsf{Z}(q_{a}^{-1},x)={:\frac{\mathsf{Z}(x)}{\mathsf{Z}(q_{a}^{-1}x)}:}=\mathsf{W}_{\bar{a}}(q_{a}^{-1}x)^{-1}
\eea
which allows $\mathsf{W}_{\bar{a}}(x)$ to appear in the denominator. 

Physically, this property suggests that the D8-branes and anti-D8-branes annihilate in a specific distance under the $\Omega$-background and eventually reproduce the, generally intersecting, D6-branes system \cite{Nekrasov:2017cih,Nekrasov:2018xsb,Pomoni:2021hkn,Pomoni:2023nlf}, which is also interpreted as a tachyon condensation~\cite{Sen:1998sm}.

\paragraph{Supergroup generalization}
Following the construction of the magnificent four system in~\eqref{eq:D8D8barop}, we can write down the contour integral formula with D6 operators appearing in the denominator:
\beq\label{eq:D6supergroupdef}
{:\frac{\mathsf{W}_{\bar{a}}(v_{1})\cdots \mathsf{W}_{\bar{a}}(v_{n})}{\mathsf{W}_{\bar{a}}(u_{1})\cdots \mathsf{W}_{\bar{a}}(u_{m})}:}.
\eeq
The contour integral formula is proportional to
\beq\label{eq:D6supergroupLMNS1}
\oint \prod_{I=1}^{k}\frac{dx_{I}}{2\pi i x_{I}}\langle\mathsf{A}_{k}^{-1}\mathsf{W}_{\underline{n}|\underline{m}}\rangle
\eeq
where 
\begin{subequations}\label{eq:D6supergroupLMNS2}
\begin{align}
\mathsf{A}_{\underline{k}}&=\prod_{I=1}^{k}\mathsf{A}(x_{I}),\quad \mathsf{W}_{\underline{n}|\underline{m}}={:\prod_{a\in\four}\frac{\prod_{\alpha=1}^{n_{\bar{a}}}\mathsf{W}_{\bar{a}}(v_{\bar{a},\alpha})}{\prod_{\beta=1}^{m_{\bar{a}}}\mathsf{W}_{\bar{a}}(u_{\bar{a},\beta})}:},\\
\langle \mathsf{A}_{k}^{-1}\mathsf{W}_{\underline{n}|\underline{m}}\rangle&=\prod_{a\in\four}\prod_{\alpha=1}^{n_{\bar{a}}}\prod_{I=1}^{k}\mathscr{V}_{a}\left(\frac{v_{\bar{a},\alpha}}{x_{I}}\right)\prod_{a\in\four}\prod_{\alpha=1}^{m_{\bar{a}}}\prod_{I=1}^{k}\mathscr{V}_{a}\left(\frac{u_{\bar{a},\alpha}}{x_{I}}\right)^{-1}\prod_{I<J}\mathcal{A}_{\mathbb{C}^{4}}\left(\frac{x_{I}}{x_{J}}\right)^{-1}.
\end{align}
\end{subequations}
We expect that the operators in the denominators of \eqref{eq:D6supergroupdef} correspond to $\overbar{\D6}$-branes similar to the situation of the magnificent four system. We leave a detailed analysis of the evaluation of this contour integral formula and its relation with the 7d supergroup gauge theory for future work \cite{Noshita-Nawata}. We will see in later sections, that after tuning the parameters $\{u_{\bar{a},\beta}\}$ to special values, we can further reduce the system and obtain the contour integral formula of the spiked instanton system.

\subsection{Spiked instanton}\label{sec:spikedLMNS}
Let us next consider the spiked instanton system where D4-branes wrapping $\mathbb{C}^{2}_{A}\times \mathbb{S}^{1}\,(A\in\six)$ appear and find the free field realization of \eqref{eq:D4integralmod}. In this case, we need to deal with the rational function
\bea
\mathscr{S}_{\bar{A}}\left(\frac{v_{A,\alpha}}{x_{I}}\right).
\eea
The derivation is similar to the previous cases and the \textbf{D4-brane vertex operators} are defined as
\begin{equation}
\begin{split}
    &\mathsf{X}_{A}(x)=\mathsf{x}_{A,0}(x):\exp\left(\sum_{n\neq 0}\mathsf{x}_{A,n}x^{-n}\right):,\quad [\mathsf{x}_{A,n},\mathsf{x}_{B,m}]=-\frac{1}{n}\frac{\bfP_{\four}^{[n]}}{\bfP_{A}^{[-n]}\bfP_{B}^{[n]}}\delta_{n+m,0},
\end{split}\label{eq:D4op}
\end{equation}
where $\mathsf{x}_{A,0}(x)$ is the zero mode and the commutation relation with the $\mathsf{A}$-operator is
\begin{equation}\label{eq:D4op2}
    \mathsf{x}_{A,n}=\frac{\mathsf{a}_{n}}{\bfP_{A}^{[-n]}},\qquad [\mathsf{a}_{n},\mathsf{x}_{A,m}]=-\frac{1}{n}\bfP_{\bar{A}}^{[n]}\delta_{n+m,0}.
\end{equation}
Explicitly, the contraction formulas are 
\begin{equation}
\begin{split}
    \mathsf{A}(x)\mathsf{X}_{A}(\nu)&=\mathscr{S}_{\bar{A}}(\nu/x)^{-1}\wick{\c{\mathsf{a}_{0}(x)}\c{\mathsf{x}_{A,0}(x')}}:\mathsf{A}(x)\mathsf{X}_{A}(\nu):,\\
    \mathsf{X}_{A}(\nu)\mathsf{A}(x)&=\mathscr{S}_{\bar{A}}(q_{A}x/\nu)^{-1}\wick{\c{\mathsf{x}_{A,0}(x')}\c{\mathsf{a}_{0}(x)}}:\mathsf{X}_{A}(\nu)\mathsf{A}(x):.
\end{split}
\end{equation}
We impose the following condition on the zero modes so that the operator product of the right-hand side is the same rational function after analytic continuation \eqref{eq:reflec_structfunc}:
\begin{equation}
\begin{split}
    \wick{\c{\mathsf{a}_{0}(x)}\c{\mathsf{x}_{A,0}(x')}}=\wick{\c{\mathsf{x}_{A,0}(x')}\c{\mathsf{a}_{0}(x)}}.
\end{split}\label{eq:D0D4zero}
\end{equation}
We will use $\wick{\c{\mathsf{a}_{0}(x)}\c{\mathsf{x}_{A,0}(x')}}=1$ (see section \ref{sec:zeromodes} for explicit forms).

\begin{proposition}
The free field realization of the contour integral formula~\eqref{eq:D4integralmod} is
\begin{equation}\label{eq:D4op_integral}
\mathfrak{q}^{k}\mathcal{Z}_{k}^{\D4}=\frac{\mathfrak{q}^{k}\mathcal{G}^{k}}{k!}\oint_{\text{JK}}\prod_{I=1}^{k}\frac{dx_{I}}{2\pi i x_{I}}\langle\mathsf{A}^{-1}_{k}\mathsf{X}_{\underline{n}}\rangle
\end{equation}
where
\begin{subequations}
\begin{align}
&\mathsf{A}_{k}=\prod_{I=1}^{k}\mathsf{A}(x_{I}),\quad \mathsf{X}_{\underline{n}}={:\prod_{A\in\six}\prod_{\alpha=1}^{n_{A}}\mathsf{X}_{A}(v_{A,\alpha}):},\quad \langle\mathsf{A}^{-1}_{k}\mathsf{X}_{\underline{n}}\rangle=\prod_{A\in\six}\prod_{I=1}^{k}\prod_{\alpha=1}^{n_{A}}\mathscr{S}_{\bar{A}}\left(\frac{v_{A,\alpha}}{x_{I}}\right)\prod_{I<J}\mathcal{A}_{\mathbb{C}^{4}}\left(\frac{x_{I}}{x_{J}}\right)^{-1}.
\end{align}
\end{subequations}

\end{proposition}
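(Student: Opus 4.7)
The plan is to mimic the strategy already used for the magnificent four (Prop. on p.~\pageref{eq:D8op_integral}) and the tetrahedron instanton (Prop. on p.~\pageref{eq:D6op_integral}): compute each pairwise OPE of the vertex operators appearing under the vacuum expectation value, check that the product of these OPE factors reproduces the integrand $\mathcal{G}^{k}\,\mathcal{Z}_{k}^{\D4\tbar\D0}(v_{A,\alpha},x_{I})$ of~\eqref{eq:D4integralmod}, and verify that the zero-mode contractions conspire so that the answer is independent of operator ordering after analytic continuation.

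First I would compute $\mathsf{A}(x)\mathsf{A}(x')$ from the commutation relation~\eqref{eq:D0op2}; by the same calculation as in section~\ref{sec:M4LMNS} this produces the factor $\mathcal{A}_{\mathbb{C}^{4}}(x'/x)^{-1}$, and iterating gives $\prod_{I<J}\mathcal{A}_{\mathbb{C}^{4}}(x_{I}/x_{J})^{-1}$ from $\mathsf{A}_{k}^{-1}=\prod_{I=1}^{k}\mathsf{A}(x_{I})^{-1}$. Next I would contract each $\mathsf{A}(x_{I})^{-1}$ with each $\mathsf{X}_{A}(v_{A,\alpha})$. Using~\eqref{eq:D4op2}, the exponential contribution is
\begin{equation}
\exp\!\left(\sum_{n>0}\frac{1}{n}\frac{\bfP_{\bar{A}}^{[n]}}{x_{I}^{n}}\,v_{A,\alpha}^{n}\right)=\mathscr{S}_{\bar{A}}\!\left(\frac{v_{A,\alpha}}{x_{I}}\right),
\end{equation}
which is exactly the D4--D0 numerator factor. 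Summing $A\in\six$ and taking the ordered product over $I$ and $\alpha$ will reproduce $\prod_{A,\alpha,I}\mathscr{S}_{\bar{A}}(v_{A,\alpha}/x_{I})$.

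The remaining piece is to handle the ordering ambiguity: $\mathsf{A}(x_{I})\mathsf{X}_{A}(\nu)$ and $\mathsf{X}_{A}(\nu)\mathsf{A}(x_{I})$ give rational functions that agree only after analytic continuation via the reflection identity~\eqref{eq:reflec_structfunc}, $\mathscr{S}_{\bar{A}}(x)=\mathscr{S}_{\bar{A}}(q_{\bar{A}}^{-1}x^{-1})$. This is where the zero-mode hypothesis~\eqref{eq:D0D4zero} is used: I would check that imposing $\wick{\c{\mathsf{a}_{0}(x)}\c{\mathsf{x}_{A,0}(x')}}=\wick{\c{\mathsf{x}_{A,0}(x')}\c{\mathsf{a}_{0}(x)}}=1$ together with~\eqref{eq:D0D0zero} makes the correlator symmetric under permutations of the $\mathsf{A}(x_{I})^{-1}$ and independent of how we interleave them with the normal-ordered product $\mathsf{X}_{\underline{n}}=:\prod_{A,\alpha}\mathsf{X}_{A}(v_{A,\alpha}):$. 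Finally, I would normal-order $\mathsf{X}_{\underline{n}}$ inside the vacuum bracket (its internal D4--D4 OPE produces only the one-loop perturbative factor $\mathcal{Z}_{\text{1-loop}}^{\D4\tbar\D4}$ given in~\eqref{eq:D4oneloop}, which by convention is stripped off from $\mathcal{Z}_{k}^{\D4\tbar\D0}$) and combine everything.

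The main obstacle is bookkeeping rather than conceptual: one must track that the six choices of $A\in\six$ enter symmetrically and that the overall $\mathcal{G}^{k}$ prefactor, which does not come from any OPE but is attached by hand in the definition of $\mathcal{Z}_{k}^{\D4}$, correctly matches the normalization chosen for $\mathsf{A}(x)$. A secondary subtlety is to confirm that the JK contour prescription is preserved when we swap the rational function inside the integral for its operator product form --- but since the two expressions agree as meromorphic functions of $x_{I}$ after analytic continuation (by the reflection property of $\mathscr{S}_{\bar{A}}$ and $\mathcal{A}_{\mathbb{C}^{4}}$), the JK residues of both sides coincide, and the proof is complete in parallel with the D6 and D8 cases.
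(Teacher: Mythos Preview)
Your proposal is correct and follows essentially the same approach as the paper. In fact, the paper does not give an explicit proof of this proposition at all; it derives the D4 vertex operators, records their OPE with $\mathsf{A}(x)$ (which yields $\mathscr{S}_{\bar A}(\nu/x)^{-1}$, hence $\mathscr{S}_{\bar A}(v_{A,\alpha}/x_I)$ for $\mathsf{A}^{-1}$), imposes the zero-mode condition~\eqref{eq:D0D4zero}, and then states the proposition as the direct analog of the D8 and D6 cases~\eqref{eq:D8op_integral},~\eqref{eq:D6op_integral}, which is precisely the computation you outline.
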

The one-loop perturbative part in \eqref{eq:D4oneloop} is obtained by the OPE of the $\mathsf{X}$ operators:
\begin{equation}
    \mathsf{X}_{B}(v_{B,\beta})\mathsf{X}_{A}(v_{A,\alpha})=\mathcal{Z}^{\D4\tbar\D4}_{\text{1-loop}}(v_{A,\alpha},A\,|\,v_{B,\beta},B):\mathsf{X}_{B}(v_{B,\beta})\mathsf{X}_{A}(v_{A,\alpha}):.
\end{equation}

\paragraph{Quiver interpretation}Let us rederive the commutation relation in \eqref{eq:D4op2} directly from the quiver in \eqref{eq:flavorquiver-spiked}. Due to quadrality, it is enough to focus on the D4$_{12}$-branes. The half qquiver Cartan matrices are determined as
\bea
\adjustbox{valign=c}{
\begin{tikzpicture}[decoration={markings,mark=at position \arrowHeadPosition with {\arrow{latex}}}]
 \tikzset{
        box/.style={draw, minimum width=0.7cm, minimum height=0.7cm, text centered,thick},
        ->-/.style={decoration={
        markings,mark=at position #1 with {\arrow[scale=1.5]{>}}},postaction={decorate},line width=0.5mm},
        -<-/.style={decoration={
        markings,
        mark=at position #1 with {\arrow[scale=1.5]{<}}},postaction={decorate},line width=0.5mm}    
    }
\begin{scope}[xshift=4cm]
    \draw[fill=black!10!white,thick](0,0) circle(0.4cm);
    \node at (0,0) {$k$};
    \node[box,fill=black!10!white] at (0,1.6) {$n_{12}$};
    \draw[postaction={decorate},thick] (-0.2,1.25)--(-0.2,0.3);
    \draw[postaction={decorate},red,thick] (-0.07,1.25)--(-0.07,0.4);
    \draw[postaction={decorate},thick] (0.2,0.3)--(0.2,1.25);
    \draw[postaction={decorate},red,thick] (0.07,0.4)--(0.07,1.25);
    
    \node[left] at (-0.2,0.8) {$1,\textcolor{red}{q_{4}}$};
    \node[right] at (0.2,0.8) {$q_{12},\textcolor{red}{q_{124}=q_{3}^{-1}}$};
    
    \draw[fill=black!10!white,thick](0,0) circle(0.4cm);
    \node at (0,0) {$k$};

\end{scope}
\end{tikzpicture}}\quad {\large \rightsquigarrow} \quad \begin{dcases}
    c_{\text{D0},\text{D4}}^{+}=-1+\textcolor{red}{q_{4}},\\
    c_{\text{D4},\text{D0}}^{+}=-q_{12}+\textcolor{red}{q_{3}^{-1}}
\end{dcases}
\eea
which gives the qquiver Cartan matrix
\bea
c_{\text{D0},\text{D4}}=c_{\text{D0},\text{D4}}^{+}+c_{\text{D4},\text{D0}}^{+\vee}=-(1+q_{12}^{-1})+\textcolor{red}{(q_{4}+q_{3})}=-(1-q_{3})(1-q_{4}).
\eea
We then obtain
\bea
\,[-\mathsf{a}_{n},\mathsf{x}_{12,m}]=-\frac{1}{n}\delta_{n+m,0}c_{\text{D0},\text{D4}}^{[n]}=\frac{1}{n}\bfP_{34}^{[n]}\delta_{n+m,0}
\eea
which indeed reproduces the commutation relation \eqref{eq:D4op2}.

\paragraph{Relation with tetrahedron instanton}
Similar to the situation in \eqref{eq:D8D6reduction}, where the D8--$\overbar{\text{D8}}$ coupled system is reduced to the tetrahedron system, using \eqref{eq:D6supergroupdef} and specializing the parameters, we can obtain the contour integral formula of the spiked instanton system. This comes from the following relation:
\begin{equation}
    \mathsf{X}_{ab}(x)\simeq {:\frac{\mathsf{W}_{abc}(x)}{\mathsf{W}_{abc}(q_{c}x)}:}
\end{equation}
where $\simeq $ means they are equivalent up to zero modes. In our notation of the zero modes, it will become an exact identity (see \eqref{eq:oprelation1}). For example, we have 
\bea
{:\prod_{\alpha=1}^{n_{12}}\mathsf{X}_{12}(v_{12,\alpha}):}={:\prod_{\alpha=1}^{n_{12}}\frac{\mathsf{W}_{123}(v_{12,\alpha})}{\mathsf{W}_{123}(q_{3}v_{12,\alpha})}:}
\eea
which means by considering the D6--$\overbar{\text{D6}}$ system spanning $\mathbb{C}^{3}_{123}\times \mathbb{S}^{1}$ and tuning the positions of them with the parameter $q_{3}$, we can obtain the 5d theory on $\mathbb{C}_{12}^{2}\times \mathbb{S}^{1}$.

\subsection{D2-brane vertex operators}\label{sec:cplvortLMNS}
Given the D0-brane vertex operator, the non-zero modes of the D4 \eqref{eq:D4op}, D6 \eqref{eq:D6op}, and D8 \eqref{eq:D8op} vertex operators are written using $\mathsf{a}_{n}$ in a similar way:
\bea\label{eq:relationwithD0}
\mathsf{a}_{n}=\bfP_{\mathcal{S}}^{[-n]}\mathcal{A}_{n},\quad \mathcal{A}_{n}=\mathsf{x}_{A,n},\,\,\mathsf{w}_{\bar{a},n},\,\,\mathsf{z}_{n}
\eea
where $\mathcal{S}$ is a subset of $\four$ depending on the subspace the D-branes are spanning. Explicitly, the D6$_{123}$-brane vertex operator comes from $\bfP_{123}$. Therefore, it is natural to define a \textbf{D2-brane vertex operator} defined as 
\bea
    \mathsf{S}_{a}(x)=\mathsf{s}_{a,0}(x):\exp\left(\sum_{n\neq 0}\mathsf{s}_{a,n}x^{-n}\right):,\quad [\mathsf{s}_{a,n},\mathsf{s}_{b,m}]=-\frac{1}{n}\frac{\bfP_{\four}^{[n]}}{\bfP_{a}^{[-n]}\bfP_{b}^{[n]}}\delta_{n+m,0},\label{eq:D2op}
\eea
where $\mathsf{s}_{a,0}(x)$ is the zero modes and the relation with the $\mathsf{A}$-operator is
\bea\label{eq:D2op2}
\relax[\mathsf{a}_{n},\mathsf{s}_{a,m}]=-\frac{1}{n}\bfP_{\bar{a}}^{[n]}\delta_{n+m,0},\quad \mathsf{s}_{a,n}=\frac{\mathsf{a}_{n}}{\bfP_{a}^{[-n]}}.
\eea
The operator product formulas are
\bea
    \mathsf{A}(x)\mathsf{S}_{a}(x')&=\wick{\c{\mathsf{a}_{0}(x)}\c{\mathsf{s}_{a,0}(x')}}g_{\bar{a}}(x'/x)^{-1}: \mathsf{A}(x)\mathsf{S}_{a}(x'):,\\
    \mathsf{S}_{a}(x')\mathsf{A}(x)&=\wick{\c{\mathsf{s}_{a,0}(x')}\c{\mathsf{a}_{0}(x)}}g_{\bar{a}}(q_{a}x/x'): \mathsf{A}(x)\mathsf{S}_{a}(x'):,
\eea
where we impose the zero modes condition as
\bea
    \wick{\c{\mathsf{a}_{0}(x)}\c{\mathsf{s}_{a,0}(x')}}=\wick{\c{\mathsf{s}_{a,0}(x')}\c{\mathsf{a}_{0}(x)}}.\label{eq:D0D2zero}
\eea
so that the rational function arising on the right-hand side will be the same after using \eqref{eq:reflec_structfunc}. Explicitly, we have $\wick{\c{\mathsf{a}_{0}(x)}\c{\mathsf{s}_{a,0}(x')}}=1$ (see section \ref{sec:zeromodes}).

These D2-brane vertex operators are actually related to the screening currents of quiver W-algebras~\cite{Kimura:2015rgi}. Let us consider the two screening currents $\mathsf{S}_{1}(x)$ and $\mathsf{S}_{2}(x)$. Focusing on $\mathsf{S}_{2}(x)$, we have 
\bea
\relax    [\mathsf{s}_{2,n},\mathsf{s}_{2,m}]=-\frac{1}{n}\frac{1-q_{1}^{n}}{1-q_{2}^{-n}}(1-q_{3}^{n})(1-q_{4}^{n})\delta_{n+m,0}
\eea
which gives one of the screening currents of the affine quiver W-algebra \cite[eq.~(3.33)]{Kimura:2015rgi}. The screening current $\mathsf{S}_{1}(x)$ gives the other screening current. The other two screening currents $\mathsf{S}_{3}(x),\,\mathsf{S}_{4}(x)$ are introduced in a symmetric way using the quadrality. Thus, using two screening currents $\mathsf{S}_{a}(x),\,\mathsf{S}_{b}(x')\,(a\neq b)$ we will obtain six affine quiver W-algebras. 

\subsection{Zero modes conditions}\label{sec:zeromodes}
Let us impose some conditions on the zero modes $\mathsf{a}_{0}(x),\widetilde{\mathsf{z}}_{0}^{K}(x),\mathsf{w}_{\bar{a},0}(x),\mathsf{x}_{A,0}(x),\mathsf{s}_{a,0}(x)$ and determine the free field realizations of them. Using the observation \eqref{eq:relationwithD0}, we can see that the operator product with operators associated with D-branes intersecting only at a point will give rational functions. We impose the zero modes so that the operator product will be the same rational functions after the analytic continuation. For the cases, when the $\mathsf{A}$-operator is involved, the conditions are given in \eqref{eq:D0D0zero}, \eqref{eq:D0D8zero}, \eqref{eq:D0D6zero}, \eqref{eq:D0D4zero}, \eqref{eq:D0D2zero}. For the cases when the $\mathsf{S}$-operators are involved, we impose the following conditions:
\begin{itemize}[topsep=0pt, partopsep=0pt, itemsep=0pt]
    \item D2$_{a}$-D2$_{b}$ $(a\neq b)$: 
    \begin{subequations}
    \begin{align}
        \mathsf{S}_{a}(x)\mathsf{S}_{b}(x')&=\wick{\c{\mathsf{s}_{a,0}(x)}\c{\mathsf{s}_{b,0}(x')}}\mathscr{S}_{\overbar{ab}}(q_{a}x'/x):\mathsf{S}_{a}(x)\mathsf{S}_{b}(x'):,\\
        \mathsf{S}_{b}(x')\mathsf{S}_{a}(x)&=\wick{\c{\mathsf{s}_{b,0}(x')}\c{\mathsf{s}_{a,0}(x)}}\mathscr{S}_{\overbar{ab}}(q_{b}x'/x):\mathsf{S}_{a}(x)\mathsf{S}_{b}(x'):
    \end{align}
    \end{subequations}
    which gives the zero mode condition
    \bea
        \wick{\c{\mathsf{s}_{a,0}(x)}\c{\mathsf{s}_{b,0}(x')}}=\wick{\c{\mathsf{s}_{b,0}(x')}\c{\mathsf{s}_{a,0}(x)}},\quad a\neq b.\label{eq:D2D2zero}
    \eea
    \item D4$_{A}$-D2$_{c}$ $(c,d\in\bar{A})$:
    \begin{subequations}
    \begin{align}
        \mathsf{X}_{A}(x)\mathsf{S}_{c}(x')&=\wick{\c{\mathsf{x}_{A,0}(x)}\c{\mathsf{s}_{c,0}(x')}}\frac{1-q_{A}x'/x}{1-q_{A}q_{d}x'/x}: \mathsf{X}_{A}(x)\mathsf{S}_{c}(x'):,\\
        \mathsf{S}_{c}(x')\mathsf{X}_{A}(x)&=\wick{\c{\mathsf{s}_{c,0}(x')}\c{\mathsf{x}_{A,0}(x)}}\frac{1-q_{A}^{-1}x/x'}{1-q_{A}^{-1}q_{d}^{-1}x/x'}:\mathsf{X}_{A}(x)\mathsf{S}_{c}(x'):
    \end{align}
    \end{subequations}
    which gives the zero mode condition
    \bea
    \wick{\c{\mathsf{x}_{A,0}(x)}\c{\mathsf{s}_{c,0}(x')}}=q_{A}^{-1}q_{c}^{-1}\wick{\c{\mathsf{s}_{c,0}(x')}\c{\mathsf{x}_{A,0}(x)}}.\label{eq:D2D4zero}
    \eea
    \item D2$_{a}$-D6$_{\bar{a}}$:
    \begin{subequations}
    \begin{align}
        \mathsf{W}_{\bar{a}}(x)\mathsf{S}_{a}(x')&=\wick{\c{\mathsf{w}_{\bar{a},0}(x)}\c{\mathsf{s}_{a,0}(x')}}\frac{1}{1-q_{a}^{-1}x'/x}:\mathsf{W}_{\bar{a}}(x)\mathsf{S}_{a}(x'):,\\
        \mathsf{S}_{a}(x')\mathsf{W}_{\bar{a}}(x)&=\wick{\c{\mathsf{s}_{a,0}(x')}\c{\mathsf{w}_{\bar{a},0}(x)}}\frac{1}{1-q_{a}x/x'}:\mathsf{W}_{\bar{a}}(x)\mathsf{S}_{a}(x'):
    \end{align}
    \end{subequations}
    which gives the zero mode condition
    \begin{equation}
        \wick{\c{\mathsf{w}_{\bar{a},0}(x)}\c{\mathsf{s}_{a,0}(x')}}=\left(-\frac{x'}{q_{a}x}\right)\wick{\c{\mathsf{s}_{a,0}(x')}\c{\mathsf{w}_{\bar{a},0}(x)}}.\label{eq:D2D6zero}
    \end{equation}
\end{itemize}

We can do the same analysis for the D4-brane operators as
\begin{subequations}
\begin{align}
    \mathsf{X}_{A}(x)\mathsf{X}_{\bar{A}}(x')=\wick{\c{\mathsf{x}_{A,0}(x)}\c{\mathsf{x}_{\bar{A},0}(x')}}\left(1-q_{A}\frac{x'}{x}\right):\mathsf{X}_{A}(x)\mathsf{X}_{\bar{A}}(x'):,\\
    \mathsf{X}_{\bar{A}}(x')\mathsf{X}_{A}(x')=\wick{\c{\mathsf{x}_{\bar{A},0}(x')}\c{\mathsf{x}_{A,0}(x)}}\left(1-q_{\bar{A}}\frac{x}{x'}\right):\mathsf{X}_{A}(x)\mathsf{X}_{\bar{A}}(x'):
\end{align}
\end{subequations}
and impose the condition
\beq
   \mathsf{x}_{A,0}(x)\mathsf{x}_{\bar{A},0}(x')=\left(-q_{A}^{-1}\frac{x}{x'}\right)\mathsf{x}_{\bar{A},0}(x')\mathsf{x}_{A,0}(x)\label{eq:D4D4zero}
\eeq
but to make the discussion simple, we do not impose this condition.\footnote{\label{note:D4footnote}This condition only affects the zero modes when we are considering the quadratic relations of the $qq$-characters which will be derived in section \ref{sec:D4quadraticrelation}.}

Under these conditions, the free field realizations of the zero modes are given as 
\bea\label{eq:zeromodes1}
    &\mathsf{a}_{0}(x)=e^{\mathsf{a}_{0}},\quad \mathsf{s}_{a,0}(x)=x^{\mathsf{s}_{a,0}}e^{\widetilde{\mathsf{s}}_{a,0}},\quad \mathsf{w}_{\bar{a}}(x)=x^{\mathsf{w}_{\bar{a},0}}e^{\widetilde{\mathsf{w}}_{\bar{a},0}}e^{\widetilde{\widetilde{\mathsf{w}}}_{\bar{a},0}},\\
    &\mathsf{x}_{A,0}(x)=e^{\mathsf{x}_{A,0}}, \quad \widetilde{\mathsf{z}}_{0}^{K}(x)=x^{\mathsf{z}^{K}_{0}}e^{\widetilde{\mathsf{z}}^{K}_{0}}e^{\widetilde{\widetilde{\mathsf{z}}}^{K}_{0}}
\eea
with
\bea\label{eq:zeromodes2}
& \mathsf{a}_{0}=\mathsf{t}_{0},\quad \mathsf{s}_{a,0}=-(\log q_{a})^{-1}\mathsf{t}_{0},\quad \widetilde{\mathsf{s}}_{a,0}=-(\log q_{a})^{-1}\widetilde{\partial}_{\mathsf{t}},\\
    &\mathsf{w}_{\bar{a},0}=-\log q_{a}\,\widetilde{\mathsf{t}}_{0},\quad \widetilde{\mathsf{w}}_{\bar{a},0}=-\log q_{a}\log (-q_{a})\,\widetilde{\mathsf{t}}_{0},\quad \widetilde{\widetilde{\mathsf{w}}}_{\bar{a},0}=-\log q_{a}\partial_{\mathsf{t}},\\
    &\mathsf{x}_{A,0}=\log q_{c}\log q_{d}\,\widetilde{\mathsf{t}}_{0},\,\,\quad (\bar{A}=cd),\\
    &\mathsf{z}^{K}_{0}=-\log K \tilde{\mathsf{t}}_{0},\quad \widetilde{\mathsf{z}}^{K}_{0}=-\log K\log(-K)\widetilde{\mathsf{t}}_{0},\quad \tilde{\tilde{\mathsf{z}}}^{K}_{0}=-\log K\partial_{\mathsf{t}}
\eea
where we introduced two independent sets of zero modes
\beq
    [\partial_{\mathsf{t}},\mathsf{t}_{0}]=[\widetilde{\partial}_{\mathsf{t}},\tilde{\mathsf{t}}_{0}]=1,\quad [\mathsf{t}_{0},\tilde{\mathsf{t}}_{0}]=[\partial_{\mathsf{t}},\widetilde{\partial}]=[\mathsf{t}_{0},\widetilde{\partial}_{\mathsf{t}}]=[\tilde{\mathsf{t}}_{0},\partial_{\mathsf{t}}]=0.\label{eq:zeromodesdef}
\eeq
The normal ordering is defined as 
\beq
{:\partial_{\mathsf{t}}\,\mathsf{t}_{0}:}=\mathsf{t}_{0}\partial_{\mathsf{t}},\quad {:\tilde{\partial}_{\mathsf{t}}\,\tilde{\mathsf{t}}_{0}:}=\tilde{\mathsf{t}}_{0}\tilde{\partial}_{\mathsf{t}}.
\eeq
See \cite{Kimura:2023bxy} for the explicit derivation.

Under the above conditions, we actually have extra relations 
\beq
    \mathsf{a}_{0}(x)={:\frac{\mathsf{s}_{a,0}(x)}{\mathsf{s}_{a,0}(q_{a}x)}:},\quad \mathsf{x}_{ab,0}(x)={:\frac{\mathsf{w}_{abc,0}(x)}{\mathsf{w}_{abc,0}(q_{c}x)}:},\quad \mathsf{w}_{\bar{a},0}(x)=\tilde{\mathsf{z}}^{q_{a}}_{0}(x)\label{eq:zeromodesrelation}
\eeq
which imply
\beq
    \mathsf{A}(x)={:\frac{\mathsf{S}_{a}(x)}{\mathsf{S}_{a}(q_{a}x)}:},\quad \mathsf{X}_{ab}(x)={:\frac{\mathsf{W}_{abc}(x)}{\mathsf{W}_{abc}(q_{c}x)}:},\quad \mathsf{W}_{\bar{a}}(x)={:\frac{\mathsf{Z}(x)}{\mathsf{Z}(q_{a}x)}:}=\widetilde{\mathsf{Z}}^{q_{a}}(x).\label{eq:oprelation1}
\eeq
In our notation, the relation between the D2 and D4 operators are 
\begin{align}
    \mathsf{S}_{a}(x)=\mathsf{s}_{a,0}(x):\frac{\mathsf{X}_{ab}(x)}{\mathsf{X}_{ab}(q_{b}x)}:
\end{align}
where extra zero modes appear in front. Using these relations, we also have 
\begin{subequations}
\begin{align}
\mathsf{A}(x)&=\mathsf{a}_{0}(x):\frac{\mathsf{X}_{ab}(x)\mathsf{X}_{ab}(q_{ab}x)}{\mathsf{X}_{ab}(q_{a}x)\mathsf{X}_{ab}(q_{b}x)}:,\label{eq:oprelationwithD0-D4}\\
&=\mathsf{a}_{0}(x):\frac{\mathsf{W}_{abc}(x)\mathsf{W}_{abc}(q_{ab}x)\mathsf{W}_{abc}(q_{ac}x)\mathsf{W}_{abc}(q_{bc}x)}{\mathsf{W}_{abc}(q_{a}x)\mathsf{W}_{abc}(q_{b}x)\mathsf{W}_{abc}(q_{c}x)\mathsf{W}_{abc}(q_{abc}x)}:\label{eq:oprelationwithD0-D6}\\
&=\mathsf{a}_{0}(x):\frac{\mathsf{Z}(x)^{2}\prod_{a<b}\mathsf{Z}(q_{ab}x)}{\prod_{a\in\four}\mathsf{Z}(q_{a}x)\prod_{a\in\four}\mathsf{Z}(q_{a}^{-1}x)}:\label{eq:oprelationwithD0-D8}
\end{align}\label{eq:oprelationwithD0}
\end{subequations}



\chapter{Quantum algebraic aspects of BPS \texorpdfstring{$qq$}{qq}-characters}\label{chap:quantum-algebra-BPSqq}
In this chapter, we introduce the BPS $qq$-characters and BPS quiver W-algebras\footnote{We do not distinguish them.} of the gauge origami system and study the quantum algebraic properties of them. In section~\ref{sec:BPSqq-intro}, we give some explanations of what a $qq$-character is and the motivation to study them. In section~\ref{sec:D4qqcharacter}, we study basic aspects of the D4 $qq$-character. We study the relation with screening charges and dicuss the BPS/CFT correspondence of the spiked instanton system. We then move on to D6 $qq$-characters in section~\ref{sec:D6qqcharacter}. The $qq$-characters introduced here are novel types of $qq$-characters originally introduced in the author's paper \cite{Kimura:2023bxy}. We study the relation with plane partitions and discuss the BPS/CFT correspondence of the tetrahedron instanton system. We also discuss the relation with the D4 $qq$-characters. Generalizations of the D6 $qq$-characters are discussed too. We then move on to the D8 $qq$-characters in section~\ref{sec:D8qq}. Compared to D4 and D6 $qq$-characters, we do not have a screening charge so we can not derive them by the commutativity. Instead, we show that infinite products of the D6 $qq$-characters produce the D8 $qq$-characters. We also discuss the quadratic relation of the D6 and D8 $qq$-characters and its physical interpretation. In the final section~\ref{sec:toroidal_alg}, we point out observations and discuss the relation with quantum toroidal $\mathfrak{gl}_{1}$. This chapter is mainly based on \cite{Kimura:2023bxy} and some proofs come from \cite{Kimura:2024osv}.

\section{BPS \texorpdfstring{$qq$}{qq}-characters and BPS quiver W-algebras}\label{sec:BPSqq-intro}
The free field realization of the Witten index given in Chap.~\ref{chap:freefield-vertexop} implies the existence of some quantum algebraic structure. Suppose we have the following free field realization for the $k$-instanton sector of a supersymmetric gauge theory:
\bea
\mathsf{T}_{k}(v_{i},q_{a})&\coloneqq \frac{\mathcal{G}^{k}}{k!}\oint _{\text{JK}}\prod_{I=1}^{k}\frac{dx_{I}}{2\pi i x_{I}}\prod_{I=1}^{k}\mathsf{A}(x_{I})^{-1}:\prod_{i}\mathsf{V}(v_{i}):\\
&=\frac{\mathcal{G}^{k}}{k!}\oint _{\text{JK}}\prod_{I=1}^{k}\frac{dx_{I}}{2\pi i x_{I}}\mathcal{Z}(v_{i},x_{I}):\prod_{I=1}^{k}\mathsf{A}(x_{I})^{-1}\prod_{i}\mathsf{V}(v_{i}):,
\eea
where we took the contraction of the vertex operators at the second line and assumed that $\mathcal{Z}(v_{i},x_{I})$ is some rational function. The normal product part is regular and so pole contributions only come from $\mathcal{Z}(v_{i},x_{I})$. Using the JK-residue prescription, we can evaluate this contour integral and take the residues. This time, the existence of the vertex operator part makes $\mathsf{T}_{k}(v_{i},q_{a})$ to be a summation of operators. Moreover, after taking the sum over the instanton number, we obtain
\bea
\mathsf{T}(v_{i})=\sum_{k=0}^{\infty}\mathfrak{q}^{k}\mathsf{T}_{k}(v_{i},q_{a}).
\eea
Taking the vacuum expectation value of $\mathsf{T}(v_{i})$, one will simply get the partition function. This duality relating correlation function of vertex operators and instanton partition functions is usually called the \textbf{BPS/CFT correspondence} \cite{Nekrasov:2015wsu,Nekrasov:2016qym,Nekrasov:2016ydq}.



Given the operator $\mathsf{T}(v_{i})$, the following questions naturally arise.\footnote{The two algebras appearing here generally might not be the same.}
\begin{itemize}
    
    \item Since they are sums of operators, they should be related with the \textit{character} of some algebra. What is the algebra whose character gives the operator $\mathsf{T}(v_{i})$? How is that algebra related with the monomial terms of $\mathsf{T}(v_{i})$?
    \item What is the algebraic structure of these operator valued Witten indices? Do they form a closed algebra? 
\end{itemize}
The operators $\mathsf{T}(v_{i})$ are usually called $qq$-characters \cite{Nekrasov:2015wsu} or generators of deformed W-algebras \cite{Kimura:2015rgi} in the literature. The former terminology comes from the first motivation while the latter terminology comes from the second motivation. In this thesis, we do not distinguish the terminology. See \cite{Kimura:2020jxl} for an excellent review of $qq$-characters and their relation with deformed W-algebras.

Instead of giving a detailed review of the $qq$-characters, let us study the pure SYM explained in Chap.~\ref{chap:ADHM-localization} to be more concrete. Starting from the contour integral formula for the $\U(1)$ partition function and doing the prescription given in the previous chapter, one will find the following vertex operators
\bea
\mathsf{X}(x)=\mathsf{x}_{0}(x):\exp\left(\sum_{n\neq 0}\mathsf{x}_{n}x^{-n}\right):,\quad \tilde{\mathsf{A}}(x)=\tilde{\mathsf{a}}_{0}(x):\exp\left(\sum_{n\neq 0}\widetilde{\mathsf{a}}_{n}x^{-n}\right):
\eea
with
\bea\label{eq:A1-quiver}
[\tilde{\mathsf{a}}_{n},\tilde{\mathsf{a}}_{m}]=-\frac{1}{n}\delta_{n+m,0}(1-q_{1}^{n})(1-q_{2}^{n})(1+q_{12}^{-n}),\quad \mathsf{x}_{n}=\frac{\tilde{\mathsf{a}}_{n}}{(1-q_{1}^{-n})(1-q_{2}^{-n})}
\eea
giving
\bea
\mathsf{T}_{k}(x)=\frac{1}{k!}\left(\frac{(1-q_{12})}{(1-q_{1})(1-q_{2})}\right)^{k}\oint_{\text{JK}}\prod_{I=1}^{k}\frac{dx_{I}}{2\pi i x_{I}}\prod_{I=1}^{k}\tilde{\mathsf{A}}(x_{I})^{-1}\mathsf{X}(x)
\eea
where we omitted the discussion of zero-modes. The VEV of this operator actually reproduces the contour integral formula in Prop.~\ref{prop:LMNSformula-mod} after redefinitions of the topological term. Strictly speaking, the zero-modes are strongly related with the topological term and the Chern--Simons term. This is because, focusing on the zero-modes of $\mathsf{A}^{-1}(x)$, the zero-modes taking the form $e^{\mathsf{d}}$, where $\mathsf{d}$ is some zero-mode operator, give $:e^{k\mathsf{d}}:$ or $:e^{|\lambda|\mathsf{d}}:$ (after localization), while the zero-modes taking the form $x^{\mathsf{d}}$ give $:\prod_{I=1}^{k}x_{I}^{\mathsf{d}}:$ or $:\left(\prod_{\Abox\in\lambda}\chi_{12,x}(\Bbox)\right)^{\mathsf{d}}:$ (after localization). Roughly speaking, they will give operator lift ups of the topological and Chern--Simons terms.

After evaluating the JK residues, the operator will be
\bea
\mathsf{T}_{\text{SYM}}(x)=\sum_{k=0}^{\infty}\mathfrak{q}^{k}\mathcal{Z}_{\text{SYM}}[\lambda]:\mathsf{X}(x)\prod_{\Abox\in\lambda}\tilde{\mathsf{A}}(\chi_{12,x}(\Bbox))^{-1}:
\eea
whose expectation value gives the $\U(1)$ partition function. Actually, after redefinitions of operators, this $qq$-character is related with the AFS intertwiner \cite{Awata:2011ce} (the intertwiner of quantum toroidal $\mathfrak{gl}_{1}$) which is also equivalent to the refined topological vertex \cite{Iqbal:2007ii,Awata:2008ed}. Somehow the deformed W-algebra interpretation of this $qq$-character is still missing in the literature.

Actually, we can introduce other types of vertex operators such as
\bea
\mathsf{Y}(x)=\mathsf{y}_{0}(x):\exp\left(\sum_{n\neq 0}\mathsf{y}_{n}x^{-n}\right),\quad \tilde{\mathsf{S}}_{i}(x)=\tilde{\mathsf{s}}_{i,0}(x):\exp\left(\sum_{n\neq 0}\tilde{\mathsf{s}}_{i,n}x^{-n}\right):\,\,(i=1,2)
\eea
with 
\bea
\mathsf{y}_{n}=\frac{\tilde{\mathsf{a}}_{n}}{(1+q_{12}^{n})},\quad \tilde{\mathsf{s}}_{i,n}=\frac{\tilde{\mathsf{a}}_{n}}{1-q_{i}^{-n}}.
\eea
By defining 
\bea
\tilde{\mathsf{T}}_{k}(x)&=\frac{1}{k!}\left(\frac{1-q_{12}}{(1-q_{1})(1-q_{2})}\right)^{k}\oint_{\text{JK}}\prod_{I=1}^{k}\frac{dx_{I}}{2\pi i x_{I}}\prod_{I=1}^{k}\prod_{I=1}^{k}\tilde{\mathsf{A}}(x_{I})^{-1}\mathsf{Y}(x)\\
&=\frac{1}{k!}\left(\frac{1-q_{12}}{(1-q_{1})(1-q_{2})}\right)^{k}\oint_{\text{JK}}\prod_{I=1}^{k}\frac{dx_{I}}{2\pi i x_{I}}\prod_{I=1}^{k}\mathscr{S}_{12}\left(\frac{v}{x_{I}}\right)\prod_{I\neq J}\mathscr{S}_{12}\left(\frac{x_{I}}{x_{J}}\right)^{-1}:\prod_{I=1}^{k}\prod_{I=1}^{k}\tilde{\mathsf{A}}(x_{I})^{-1}\mathsf{Y}(x):
\eea
and evaluating the poles, one will see that this will be non-zero only when $k=0,1$. We then have
\bea
\widetilde{\mathsf{T}}(x)=\mathsf{Y}(x)+\mathfrak{q}:\mathsf{Y}(x)\mathsf{A}(x)^{-1}:.
\eea
Actually this operator is just the generator of the $q$-Virasoro \cite{Shiraishi:1995rp} obeying the quadratic relation
\bea
\mathsf{f}\left(\frac{x'}{x}\right)\widetilde{\mathsf{T}}(x)\widetilde{\mathsf{T}}(x')-\mathsf{f}\left(\frac{x}{x'}\right)\widetilde{\mathsf{T}}(x')\widetilde{\mathsf{T}}(x)=-\frac{(1-q_{1})(1-q_{2})}{1-q_{12}}\left(\delta\left(q_{12}\frac{x'}{x}\right)-\delta\left(q_{12}^{-1}\frac{x'}{x}\right)\right),
\eea
where
\bea
\mathsf{f}(x)=\exp\left(\sum_{n=1}^{\infty}\frac{1}{n}\frac{(1-q_{1}^{n})(1-q_{2}^{n})}{(1+q_{12}^{n})}x^{n}\right).
\eea
Another interesting property is the commutativity with the screening charges:
\bea
\relax[\widetilde{\mathsf{T}}(x),\widetilde{\mathscr{Q}}_{1,2}(x')]=0
,\quad \widetilde{\mathscr{Q}}_{1,2}(x)=\sum_{k\in\mathbb{Z}}\widetilde{\mathsf{S}}_{1,2}(q_{1,2}^{k}x).
\eea
Moreover, we actually have
\bea
\relax[\widetilde{\mathsf{T}}(x),\mathsf{T}_{\text{SYM}}(x')]=0.
\eea

Let us summarize what we did here. Starting from the same root current $\widetilde{\mathsf{A}}(x)$, we gave multiple vertex operators,\footnote{The vertex operators here were introduced by classifying the possible vertex operators whose contraction with the root current give a rational function. From the physical viewpoint, the root current $\widetilde{\mathsf{A}}(x)$ comes from the gauge nodes of the low energy theory, while the vertex operators $\mathsf{X}(x),\mathsf{Y}(x),\widetilde{\mathsf{S}}(x)$ correspond to the flavor nodes/framing nodes. This phenomenon is general for other cases such as the gauge origami system of $\mathbb{C}^{4}$. In other words, gauge origami is a theory of the classification of the framing nodes.} and then introduced $qq$-characters and screening charges. Given such $qq$-characters and screening charges, we studied the quadratic relations and commutativity of them. In this way, we managed to find the deformed W-algebra structure.

The generator of the $q$-Virasoro $\widetilde{\mathsf{T}}(x)$ is sometimes called the $A_{1}$ $qq$-character. Focusing on the non-zero modes, we actually have the relation
\bea
:\mathsf{Y}(x)\mathsf{Y}(q_{12}^{-1}x):\simeq \tilde{\mathsf{A}}(x)
\eea
up to zero-modes. The $qq$-character is rewritten as
\bea
\widetilde{\mathsf{T}}(x)\simeq\mathsf{Y}(x)+\frac{\mathfrak{q}}{\mathsf{Y}(q_{12}^{-1}x)}
\eea
where we are not being careful with the zero-modes again. Since the operator sum resembles the two-dimensional representation character of $\SL(2)$
\bea
\chi(y)=y+\frac{1}{y}
\eea
it is called the $A_{1}$ $qq$-character. Although we will not discuss in detail, taking the Nekrasov--Shatashvili (NS) limit $q_{2}\rightarrow 1$, the vacuum expectation value $\bra{0}\widetilde{\mathsf{T}}(x)\widetilde{\mathsf{S}}_{2}(x')\ket{0}$ becomes the $q$-character of the two-dimensional representation of $\U_{q}(\widehat{\mathfrak{sl}}_{2})$ and each monomial term corresponds with the basis of the two-dimensional representation
\bea
\mathsf{Y}(x) \,\,\leftrightarrow \,\, \adjustbox{valign=c}{\begin{tikzpicture}
        \draw[->] (0,0)--(0,1);
        \draw[->] (0,0)--(1,0);
        \end{tikzpicture}
        }
\quad \quad :\mathsf{Y}(x)\mathsf{A}^{-1}(x): \,\,\leftrightarrow \,\, \adjustbox{valign=c}{\begin{tikzpicture}
        \draw[->] (0,0)--(1,0);
        \draw (0.5,0)--(0.5,0.5)--(0,0.5);
        \draw[->] (0,0)--(0,1);
        \end{tikzpicture}
        }
\eea

We will apply this procedure to the gauge origami system and study the associated deformed W-algebra structure and the character structure in the following sections. Before doing so, let us summarize what kind of $qq$-characters arise from the free field realization of the gauge origami system given in the previous chapter:
\begin{itemize}
    \item D4 $qq$-characters:
    \bea
    \mathsf{T}_{A}(x)=\mathsf{X}_{A}(x)+\cdots,\quad A\in\six
    \eea
    \item D6 $qq$-characters:
    \bea
    \mathsf{T}_{\bar{a}}(x)=\mathsf{W}_{\bar{a}}(x)+\cdots,\quad a\in\four
    \eea
    \item D8 $qq$-characters:
    \bea
    \mathsf{T}_{\four}^{K}(x)=\mathsf{Z}(K,x)+\cdots
    \eea
\end{itemize}
We only list here the first term, i.e., the zero-instanton sector. This zero-instanton sector term is usually called the \textit{highest weight} in the literature.

\section{D4 \texorpdfstring{$qq$}{qq}-characters}\label{sec:D4qqcharacter}

In this section, we study basic properties of the D4 $qq$-characters. These $qq$-characters were introduced in~\cite{Nekrasov:2015wsu} and the corresponding algebraic structure is known to be the generator of the affine quiver W-algebra~\cite{Kimura:2015rgi}. 

In section~\ref{sec:D2qqcharacter}, we introduce the screening charges and study the commutation relations of them. We also discuss the relation with one-dimensional partitions. In section~\ref{sec:D4qqaffinequiver}, we introduce six types of D4 $qq$-characters corresponding to the six possible configurations of D4-branes in $\mathbb{C}^{4}$ in an equal footing. The commutativity with the screening charges is discussed too. Using them we show that we can establish the BPS/CFT correspondence of the spiked instanton system in section~\ref{sec:D4qqandspikedinst}. We also show that the D4 $qq$-characters can be obtained by infinite products of the screening currents in section~\ref{sec:spikedandscreening}. The quadratic relations of the D4 $qq$-characters are discussed in section~\ref{sec:D4quadraticrelation}.




\subsection{Screening charges}\label{sec:D2qqcharacter}

\begin{definition}
    The \textbf{screening charges} are defined as
    \beq
        \mathscr{Q}_{a}(x)=\sum_{k\in\mathbb{Z}}\mathsf{S}_{a}(q_{a}^{k}x),\quad a\in\four.
    \eeq
  
\end{definition}
Based on the observation that the D2-brane vertex operator plays role here, we also call them the \textbf{D2-brane $qq$-characters}. Shortly, we call them D2 $qq$-characters.

\begin{theorem}\label{thm:D2qq-commute}
    The screening charges with different indices commute with each other: 
    \beq
        \relax[\mathscr{Q}_{a}(x),\mathscr{Q}_{b}(x')]=0,\quad a\neq b \in\four.
    \eeq
\end{theorem}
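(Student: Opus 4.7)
The plan is to reduce the commutator of the screening charges to commutators of individual screening currents, and then to exhibit the latter as a \emph{total lattice difference} whose summations telescope to zero.

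First, I would compute the OPE of $\mathsf{S}_a(y)$ and $\mathsf{S}_b(y')$ for $a\neq b$ using the free field realization of section~\ref{sec:cplvortLMNS}. The zero mode condition~\eqref{eq:D2D2zero} makes the zero mode contractions symmetric under exchange of the two operators, so the commutator of single currents reduces to a multiple of the difference of OPE factors,
\begin{equation*}
[\mathsf{S}_a(y),\mathsf{S}_b(y')] \;=\; \bigl(\mathscr{S}_{\overline{ab}}(q_a y'/y) - \mathscr{S}_{\overline{ab}}(q_b y'/y)\bigr)\,{:\mathsf{S}_a(y)\mathsf{S}_b(y'):},
\end{equation*}
understood as a formal distribution in $y'/y$. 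Setting $\overline{ab}=cd$, the Calabi--Yau constraint $q_1q_2q_3q_4=1$ gives $q_{cd}=q_{ab}^{-1}$, so $\mathscr{S}_{cd}(q_a t)$ and $\mathscr{S}_{cd}(q_b t)$ carry disjoint poles located at $\{q_a^{-1},q_b\}$ and $\{q_b^{-1},q_a\}$ respectively, with $t=y'/y$.

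Next, I would decompose the difference by partial fractions in $t$. The four residues, viewed as delta functions, combine with the normal-ordered product $:\mathsf{S}_a(y)\mathsf{S}_b(y'):$ evaluated at the pole locations. Using the identity $\mathsf{A}(y)={:\mathsf{S}_a(y)/\mathsf{S}_a(q_a y):}$ from~\eqref{eq:oprelation1}, these shifted normal-ordered products can be rewritten as insertions of the root current $\mathsf{A}$ into an unshifted product. The goal of this step is to establish the key decomposition
\begin{equation*}
[\mathsf{S}_a(y),\mathsf{S}_b(y')] \;=\; \Delta_{q_a,y}F_a(y,y') \,+\, \Delta_{q_b,y'}F_b(y,y'),
\end{equation*}
where $\Delta_{q_a,y}f(y)=f(q_a y)-f(y)$ is the $q_a$-difference in $y$, and $F_a,F_b$ are operator-valued functions built from $\mathsf{S}_a$, $\mathsf{S}_b$ and $\mathsf{A}$. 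Substituting $y=q_a^k x$, $y'=q_b^l x'$ and summing $k,l\in\mathbb{Z}$ then produces a double sum in which $\Delta_{q_a,y}F_a$ telescopes in $k$ and $\Delta_{q_b,y'}F_b$ telescopes in $l$; the boundary terms at $k,l\to\pm\infty$ vanish formally, yielding $[\mathscr{Q}_a(x),\mathscr{Q}_b(x')]=0$.

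The main obstacle will be establishing the total-difference decomposition explicitly, in particular tracking the zero mode factors. While the non-zero mode part of $:\mathsf{S}_a(y)\mathsf{S}_b(y'):$ behaves transparently under lattice shifts, the zero modes~\eqref{eq:zeromodes1}--\eqref{eq:zeromodes2} contain logarithmic pieces that pick up nontrivial multipliers under $y\mapsto q_a y$ or $y'\mapsto q_b y'$. Verifying that these multipliers match the partial-fraction residues exactly, and hence do not obstruct the telescoping, is the delicate point. The Calabi--Yau condition $q_1q_2q_3q_4=1$ is used throughout, as it is precisely what aligns the poles of $\mathscr{S}_{cd}$ with the lattice shifts of $\mathsf{S}_a$ and $\mathsf{S}_b$ required for the partial-fraction expansion to close up into $q_a$- and $q_b$-differences.
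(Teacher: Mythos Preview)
Your overall strategy (reduce the screening-charge commutator to a lattice telescoping) is correct, but the intermediate analysis contains an error that derails the proposed decomposition. The second OPE factor is $\mathscr{S}_{\overline{ab}}(q_b\,y/y')$, not $\mathscr{S}_{\overline{ab}}(q_b\,y'/y)$; by the reflection identity $\mathscr{S}_{cd}(z)=\mathscr{S}_{cd}(q_{cd}^{-1}z^{-1})$ together with $q_{cd}^{-1}=q_{ab}$, one has $\mathscr{S}_{cd}(q_b\,y/y')=\mathscr{S}_{cd}(q_a\,y'/y)$ as rational functions. The two OPE coefficients are therefore the \emph{same} rational function expanded in opposite regions, so their difference produces only \emph{two} delta functions, supported at $y'/y=q_a^{-1}$ and $y'/y=q_b$. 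There are no ``four residues'' and the poles are not disjoint. With only two delta terms, the symmetric decomposition $[\mathsf{S}_a(y),\mathsf{S}_b(y')]=\Delta_{q_a,y}F_a+\Delta_{q_b,y'}F_b$ that you are aiming for does not arise naturally: a single delta function in $y'/y$ cannot be written as a pure $q_a$-difference in $y$ (nor as a pure $q_b$-difference in $y'$) in any useful way.

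The paper sidesteps this by working asymmetrically. One first sums over the $q_a$-lattice in $y$ and uses the identity $\mathsf{A}(z)={:\mathsf{S}_a(z)/\mathsf{S}_a(q_a z):}={:\mathsf{S}_b(z)/\mathsf{S}_b(q_b z):}$ to rewrite the operator attached to the $y'/y=q_a^{-1}$ delta so that it matches the operator attached to the $y'/y=q_b$ delta up to a shift of the summation index. After this reindexing, $[\mathscr{Q}_a(x),\mathsf{S}_b(x')]$ is a \emph{single} total $q_b$-difference in $x'$, and the sum over the $q_b$-lattice telescopes. So the $\mathsf{A}$-identity is used once, not to split the commutator into two independent differences, but to align the two delta terms so that they cancel after the double sum. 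Your concerns about zero modes are legitimate in principle but are not the obstruction here; the gap is the pole count and the form of the decomposition.
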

 \begin{proof}
     Let us focus on the commutation relation between $\mathscr{Q}_{1}(x)$ and $\mathscr{Q}_{4}(x)$. Other cases are obtained by using the quadrality. The operator products between $\mathsf{S}_{1}(x)$ and $\mathsf{S}_{4}(x)$ are 
     \begin{align}
     \begin{split}
    \mathsf{S}_{4}(x)\mathsf{S}_{1}(x')&=\mathscr{S}_{23}\left(q_{4}x'/x\right):\mathsf{S}_{4}(x)\mathsf{S}_{1}(x'):,\\
    \mathsf{S}_{1}(x')\mathsf{S}_{4}(x)&=\mathscr{S}_{23}(q_{1}x/x'):\mathsf{S}_{1}(x')\mathsf{S}_{4}(x):.
    \end{split}
\end{align}
We then have 
\begin{align}
\begin{split}
    &[\mathsf{S}_{4}(q_{4}^{k}x),\mathsf{S}_{1}(x')]\\
    =&\frac{(1-q_{2})(1-q_{3})}{(1-q_{2}q_{3})}\left(\delta\left(\frac{x'}{q_{4}^{k-1}x}\right)\textcolor{red}{:\mathsf{S}_{4}(q_{4}^{k}x)\mathsf{S}_{1}(q_{4}^{k-1}x):}-\delta\left(\frac{x'}{q_{1}q_{4}^{k}x}\right):\mathsf{S}_{4}(q_{4}^{k}x)\mathsf{S}_{1}(q_{1}q_{4}^{k}x):\right)\\
    =&\frac{(1-q_{2})(1-q_{3})}{(1-q_{2}q_{3})}\left(\delta\left(\frac{x'}{q_{4}^{k-1}x}\right)\textcolor{red}{:\mathsf{S}_{4}(q_{4}^{k-1}x)\mathsf{A}^{-1}(q_{4}^{k-1}x)\mathsf{S}_{1}(q_{4}^{k-1}x):}-\delta\left(\frac{x'}{q_{1}q_{4}^{k}x}\right):\mathsf{S}_{4}(q_{4}^{k}x)\mathsf{S}_{1}(q_{1}q_{4}^{k}x):\right)\\
    =&\frac{(1-q_{2})(1-q_{3})}{(1-q_{2}q_{3})}\left(\delta\left(\frac{x'}{q_{4}^{k-1}x}\right)\textcolor{red}{:\mathsf{S}_{4}(q_{4}^{k-1}x)\mathsf{S}_{1}(q_{1}q_{4}^{k-1}x):}-\delta\left(\frac{x'}{q_{1}q_{4}^{k}x}\right):\mathsf{S}_{4}(q_{4}^{k}x)\mathsf{S}_{1}(q_{1}q_{4}^{k}x):\right)
\end{split}
\end{align}
where we used 
\beq
    \mathsf{A}(x)={:\frac{\mathsf{S}_{1}(x)}{\mathsf{S}_{1}(q_{1}x)}:}={:\frac{\mathsf{S}_{4}(x)}{\mathsf{S}_{4}(q_{4}x)}:}.
\eeq
The commutation relation between the screening charge $\mathscr{Q}_{4}(x)$ and $\mathsf{S}_{1}(x')$ is 
\beq
    \relax[\mathscr{Q}_{4}(x),\mathsf{S}_{1}(x')]=\frac{(1-q_{2})(1-q_{3})}{(1-q_{2}q_{3})}\sum_{k\in\mathbb{Z}}\left(\delta\left(\frac{x'}{q_{4}^{k}x}\right)-\delta\left(\frac{x'}{q_{1}q_{4}^{k}x}\right)\right):\mathsf{S}_{4}(q_{4}^{k}x)\mathsf{S}_{1}(q_{1}q_{4}^{k}x):
\eeq
which is a total difference. Thus, we finally have 
\beq
   \relax [\mathscr{Q}_{4}(x),\mathscr{Q}_{1}(x')]=0.
\eeq
\end{proof}


Each term of the screening charge has a nice pictorial interpretation using the one-dimensional partition labeled by $k\in\mathbb{Z}$:
\bea
\mathsf{S}_{a}(q_{a}^{k}x),\,\,\,k\in\mathbb{Z} \quad \Longleftrightarrow \quad 
   \adjustbox{valign=c}{ \begin{tikzpicture}
        \draw[->] (-2,0)--(4,0);
        \draw[thick]   (-0.5,-0.5)--(-0.5,1);
        \draw (-2,0.7)--(3,0.7);
        \draw (3,0)--(3,0.7);
        \draw (0.2,0)--(0.2,0.7);
        \draw (0.9,0)--(0.9,0.7);
         \draw (1.6,0)--(1.6,0.7);
          \draw (2.3,0)--(2.3,0.7);
          \draw (-1.2,0)--(-1.2,0.7);
          \node at (-1.6,0.35) {$\cdots$};
          \node at (4,0) [right]{$q_{a}$};
          \node at (-0.15,0)[below]{1};
          \node at (0.55,0)[below]{2};
          \node at (1.25,0)[below]{$\cdots$};
          \node at (1.95,0)[below]{$\cdots$};
          \node at (2.65,0)[below]{$k$};
    \end{tikzpicture}}\label{eq:D2vectorfigure1}
\eea
The one-dimensional partition here is a one-dimensional sequence of boxes extending semi-infinitely from the left of the border. 

The screening charge $\mathscr{Q}_{a}(x)$ is understood as a collection of possible one-dimensional partitions labeled by $k\in\mathbb{Z}$ with the $q$-coordinate of the origin $x$:
\bea
\mathscr{Q}_{a}(x) \quad \Longleftrightarrow \left\{\quad\left. 
   \adjustbox{valign=c}{ \begin{tikzpicture}
        \draw[->] (-2,0)--(4,0);
        \draw[thick]   (-0.5,-0.5)--(-0.5,1);
        \draw (-2,0.7)--(3,0.7);
        \draw (3,0)--(3,0.7);
        \draw (0.2,0)--(0.2,0.7);
        \draw (0.9,0)--(0.9,0.7);
         \draw (1.6,0)--(1.6,0.7);
          \draw (2.3,0)--(2.3,0.7);
          \draw (-1.2,0)--(-1.2,0.7);
          \node at (-0.15,0.35) {$x$};
          \node at (-1.6,0.35) {$\cdots$};
          \node at (4,0) [right]{$q_{a}$};
          \node at (-0.15,0)[below]{1};
          \node at (0.55,0)[below]{2};
          \node at (1.25,0)[below]{$\cdots$};
          \node at (1.95,0)[below]{$\cdots$};
          \node at (2.65,0)[below]{$k$};
    \end{tikzpicture}}\quad \right|\,\, k\in\mathbb{Z} \right\}\label{eq:D2vectorfigure2}
\eea
Using \eqref{eq:oprelation1}, we have 
\bea
    \mathsf{S}_{a}(q_{a}^{k}x)={:\prod_{i=-\infty}^{k}\mathsf{A}^{-1}(q_{a}^{i-1}x):}
\eea
which means we can interpret each box of the one-dimensional partition with the $q$-coordinates $xq_{a}^{i-1}$ as the operator $\mathsf{A}^{-1}(xq_{a}^{i-1})$. The screening charge $\mathscr{Q}_{a}(x)$ is then understood as a collection of the possible one-dimensional partitions labeled by $k\in\mathbb{Z}$.

\subsection{D4 \texorpdfstring{$qq$}{qq}-characters and affine quiver W-algebra}\label{sec:D4qqaffinequiver}
The main claim of this section is that the D4$_{A}$ $qq$-characters commute with screening charges $\mathscr{Q}_{a}(x')$ for $a\in\bar{A}$. Conversely, the commutativity uniquely determines the whole formula of the D4 $qq$-character.

Let us consider the $qq$-character generated by the D4 vertex operator $\mathsf{X}_{A}(x)\,(A\in\six)$. The D4 $qq$-character generated here is identified with the generator of the affine quiver W-algebra of $\Gamma = \widehat{A}_0$ in \cite{Kimura:2015rgi}. Each term of the $qq$-character is decomposed into two parts, the operator part and the coefficient part. The operator part is determined by the iWeyl reflection \cite{Nekrasov:2012xe,Nekrasov:2013xda} which is defined as follows. 
\begin{definition}
    The iWeyl reflection of the D4 vertex operator $\mathsf{X}_{A}(x)\,\,(A\in\six)$ is 
    \begin{equation}
        \mathsf{X}_{A}(x)\longrightarrow {:\mathsf{X}_{A}(x)\mathsf{A}^{-1}(x):},\quad A\in\six.
    \end{equation}
    Using the property in \eqref{eq:oprelationwithD0-D4}, we have 
\beq
    \mathsf{X}_{ab}(x)\longrightarrow {:\mathsf{a}_{0}(x)^{-1}\frac{\mathsf{X}_{ab}(q_{a}x)\mathsf{X}_{ab}(q_{b}x)}{\mathsf{X}_{ab}(q_{ab}x)}:},\quad a\neq b\in\four.
\eeq
\end{definition}
The operator part of the $qq$-character is obtained by changing the root current to the $\mathsf{X}$-operators and applying the iWeyl reflection recursively.\footnote{The iWeyl reflection will act only on the numerators.} The operators will be classified by two-dimensional Young diagrams as
\beq\label{eq:D4monomialterm}
    {:\mathsf{X}_{ab}(x)\prod_{\Abox\in\lambda}\mathsf{A}^{-1}(\chi_{ab,x}(\Bbox)):}={:\mathsf{a}_{0}(x)^{-1}\frac{\prod_{\Abox\in A(\lambda)}\mathsf{X}_{ab}(\chi_{ab,x}(\Bbox))}{\prod_{\Abox\in R(\lambda)}\mathsf{X}_{ab}(q_{ab}\chi_{ab,x}(\Bbox))}:}.
\eeq
Following the correspondence in \eqref{eq:D2vectorfigure1} and \eqref{eq:D2vectorfigure2}, we can visualize each monomial terms of the D4 $qq$-character using the Young diagrams:
\bea\label{eq:D4Youngcorrespondence}
{:\mathsf{X}_{ab}(x)\prod_{\Abox \in\lambda}\mathsf{A}^{-1}(\chi_{ab,x}(\Bbox)):}\quad \longleftrightarrow \quad \adjustbox{valign=c}{\begin{tikzpicture}
 \fill[red!20!white] (0.9,1.4)--(1.6,1.4)--(1.6,2.1)--(0.9,2.1)--(0.9,1.4);
        \draw[->] (-1,0)--(4,0);
        \node[above] at (-0.5,4){$q_{b}$};
        \node [right] at (4,0){$q_{a}$};
        \node[below] at (-0.15,0) {$1$};
        \node [below] at (0.55,0){$\cdots$};
        \node [below] at (1.25,0){$i$};
         \draw[->]   (-0.5,-0.5)--(-0.5,4);
         \draw (0.2,3.5)--(0.2,0.7);
         \draw (0.9,2.8)--(0.9,0.7);
         \draw (1.6,2.1)--(1.6,0.7);
         \draw (2.3,1.4)--(2.3,0.7);
         \draw (2.3,0.7)--(-0.5,0.7);
         \draw (2.3,1.4)--(-0.5,1.4);
         \draw (1.6,2.1)--(-0.5,2.1);
         \draw (0.9,2.8)--(-0.5,2.8);
         \draw (0.2,3.5)--(-0.5,3.5);
        \draw (-0.5,0.7)--(3,0.7);
        \draw (3,0)--(3,0.7);
        \draw (0.2,0)--(0.2,0.7);
        \draw (0.9,0)--(0.9,0.7);
         \draw (1.6,0)--(1.6,0.7);
          \draw (2.3,0)--(2.3,0.7);
          \draw (-0.15,0.35)--++(-0.7,-1);
          \node[left] at (-0.85,-0.65){$x$};
          \node [left] at (-0.5,0.35) {$1$};
          \node [left] at (-0.5,1.05){$\vdots$};
          \node [left] at (-0.5,1.75){$j$};
           \draw  (1.25,1.75)--++(0.9,0.9);
          \node[right] at (2.2,2.65) {$xq_{a}^{i-1}q_{b}^{j-1}$};
        \end{tikzpicture}
        }
\eea
Similar to the D2-case, each $\mathsf{A}^{-1}(\chi_{ab,x}(\Bbox))$ corresponds to the box of the Young diagram. The operator $\mathsf{X}_{ab}(x)$ defines the vacuum and physically gives the one-loop perturbative part (see section \ref{sec:D4qqandspikedinst}). In the algebraic context, it is called the \emph{highest weight} of the $qq$-character. Moreover, as will be shown in \ref{thm:D4qq-commute}, it uniquely determines the $qq$-character, we also call it the \emph{generating current} of the $qq$-character.

The $qq$-character is defined by adding the monomial terms in \eqref{eq:D4monomialterm} for all possible Young diagrams with specific coefficients. The coefficients are determined by the commutativity with the screening currents.
\begin{definition}
    We define the D4 $qq$-character for $A\in\six$ as
    \beq
        \mathsf{T}_{A}(x)=\sum_{\lambda\in\mathcal{P}}\widetilde{\mathcal{Z}}_{A}^{\D4}[\lambda]:\Lambda_{A,\lambda}(x):
    \eeq
    where and 
    \beq
       \Lambda_{A,\lambda}(x)={:\mathsf{X}_{A}(x)\prod_{\Abox\in\lambda}\mathsf{A}^{-1}(\chi_{A,x}(\Bbox)):}.
    \eeq
    Note that the coefficients $\widetilde{\mathcal{Z}}^{\D4}_{A}[\lambda]$ do not depend on the choice of $a\in\bar{A}$. 
    One may redefine the zero-modes of the root currents as $\mathsf{A}(x)\rightarrow \mathfrak{q}^{-1}\mathsf{A}(x)$ and obtain
\beq
    \mathsf{T}_{A}(x)=\sum_{\lambda\in\mathcal{P}}\mathfrak{q}^{|\lambda|}\widetilde{\mathcal{Z}}_{A}^{\D4}[\lambda]:\Lambda_{A,\lambda}(x):,\quad A\in\six.
\eeq
\end{definition}

\begin{theorem}\label{thm:D4qq-commute}
The D4 $qq$-characters commute with the screening charges associated with the transverse directions
\beq
    [\mathsf{T}_{A}(x),\mathscr{Q}_{a}(x')]=0,\quad \forall a\in\bar{A}.
\eeq
\end{theorem}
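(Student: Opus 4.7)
\medskip

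The plan is to mimic the strategy used in the proof of Theorem 3.1 for the mutual commutativity of the screening charges: express the commutator of each monomial $:\Lambda_{A,\lambda}(x):$ with a single screening current $\mathsf{S}_{a}(q_{a}^{k}x')$ as a total difference in $k$, so that summing over $k\in\mathbb{Z}$ kills it. Fix $A\in\six$ and $a\in\bar A$, and write $\bar A=\{a,d\}$. Using the OPE relations from section~\ref{sec:zeromodes}, I would first collect the two elementary contractions
\begin{align*}
\mathsf{X}_{A}(x)\mathsf{S}_{a}(y) &=\wick{\c{\mathsf{x}_{A,0}(x)}\c{\mathsf{s}_{a,0}(y)}}\,\frac{1-q_{A}y/x}{1-q_{A}q_{d}y/x}:\mathsf{X}_{A}(x)\mathsf{S}_{a}(y):,\\
\mathsf{A}^{-1}(y)\mathsf{S}_{a}(z) &=\wick{\c{\mathsf{a}_{0}(y)^{-1}}\c{\mathsf{s}_{a,0}(z)}}\,g_{\bar{a}}(z/y):\mathsf{A}^{-1}(y)\mathsf{S}_{a}(z):,
\end{align*}
together with their $x\leftrightarrow y$, $y\leftrightarrow z$ reversed versions, which by the zero-mode conditions~\eqref{eq:D2D4zero} and \eqref{eq:D0D2zero} differ only by explicit multiplicative constants. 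Assembling these, $\Lambda_{A,\lambda}(x)\mathsf{S}_{a}(y)$ and $\mathsf{S}_{a}(y)\Lambda_{A,\lambda}(x)$ acquire rational prefactors $F_{\lambda}(y/x)$ and $\tilde F_{\lambda}(x/y)$ that, after analytic continuation, have poles precisely at the positions $y=\chi_{A,x}(\Bbox')$ with $\Bbox'$ an addable box of $\lambda$, and at $y=q_{a}^{-1}\chi_{A,x}(\Bbox')$ with $\Bbox'$ a removable box of $\lambda$; the remaining factors cancel between the numerator and denominator by the standard telescoping on the Young diagram (boxes interior to $\lambda$ contribute only regular terms).

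The second step is to convert these poles into delta functions in the commutator. Writing $(y/x)^{\pm 1}$-expansions in the appropriate regions yields
\begin{align*}
[\Lambda_{A,\lambda}(x),\mathsf{S}_{a}(y)]=\sum_{\Bbox'\in A(\lambda)}\alpha_{\Bbox'}\,\delta\!\left(\frac{y}{\chi_{A,x}(\Bbox')}\right):\Lambda_{A,\lambda}(x)\mathsf{S}_{a}(\chi_{A,x}(\Bbox')):\\
-\sum_{\Bbox'\in R(\lambda)}\beta_{\Bbox'}\,\delta\!\left(\frac{q_{a}y}{\chi_{A,x}(\Bbox')}\right):\Lambda_{A,\lambda}(x)\mathsf{S}_{a}(q_{a}^{-1}\chi_{A,x}(\Bbox')):,
\end{align*}
with explicit residues $\alpha_{\Bbox'},\beta_{\Bbox'}$. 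Using the zero-mode identities \eqref{eq:oprelationwithD0} of section~\ref{sec:zeromodes}, namely $\mathsf{A}(x)={:\mathsf{S}_{a}(x)/\mathsf{S}_{a}(q_{a}x):}$, each specialized normal-ordered product may be rewritten: for an addable box $\Bbox'$ of $\lambda$, $:\Lambda_{A,\lambda}(x)\mathsf{S}_{a}(\chi_{A,x}(\Bbox')):$ is proportional to $:\Lambda_{A,\lambda+\Bbox'}(x)\mathsf{S}_{a}(q_{a}^{-1}\chi_{A,x}(\Bbox')):$, and the shift $\chi\mapsto q_{a}^{-1}\chi$ is exactly the shift $k\mapsto k-1$ when $y=q_{a}^{k}x'$. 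Therefore, under the substitution $y=q_{a}^{k}x'$, the addable-box contribution for $(\lambda,k)$ matches the removable-box contribution for $(\lambda+\Bbox',k-1)$ up to the coefficient ratio $\widetilde{\mathcal{Z}}_{A}^{\D4}[\lambda+\Bbox']/\widetilde{\mathcal{Z}}_{A}^{\D4}[\lambda]$.

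The final step is to verify that this coefficient ratio equals $\alpha_{\Bbox'}/\beta_{\Bbox'}$, which is a direct computation using the Nekrasov factor definition of $\widetilde{\mathcal{Z}}_{A}^{\D4}[\lambda]$ recorded in~\eqref{eq:D4spikedpartition2}: the ratio telescopes into a product of $\mathscr{S}_{A}$ factors over the arm and leg of $\Bbox'$, matching exactly the pole residues computed from the OPEs. Once this identity is established, the sum
$$\sum_{\lambda\in\mathcal{P}}\sum_{k\in\mathbb{Z}}\widetilde{\mathcal{Z}}_{A}^{\D4}[\lambda]\,[\,:\Lambda_{A,\lambda}(x):\,,\,\mathsf{S}_{a}(q_{a}^{k}x')\,]$$
reorganizes into a telescoping difference $\sum_{k}(\mathcal{F}_{k}-\mathcal{F}_{k-1})=0$. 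The main obstacle is step three: the explicit matching of $\alpha_{\Bbox'}/\beta_{\Bbox'}$ with the Nekrasov ratio, which requires careful tracking of the contributions from every box of $\lambda$ under the OPE specialization and exploiting the arm/leg combinatorics of the structure function $\mathscr{S}_{A}$. Equivalently, one may reverse the logic and \emph{define} $\widetilde{\mathcal{Z}}_{A}^{\D4}[\lambda]$ recursively by this ratio condition, which then uniquely fixes the $qq$-character and justifies calling $\mathsf{X}_{A}(x)$ its generating current.
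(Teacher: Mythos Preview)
Your strategy matches the paper's exactly: compute $[\Lambda_{A,\lambda}(x),\mathsf{S}_a(x')]$ via the OPEs, collect the delta functions supported on addable and removable boxes, use the identity $\mathsf{A}(x)={:\mathsf{S}_a(x)/\mathsf{S}_a(q_ax):}$ to rewrite the normal-ordered products, and impose the recursion of Thm.~\ref{app:thm-D4recursion} on the coefficients so that the sum over $k$ telescopes. Two details are off by a uniform $q_a$-shift: the actual poles lie at $x'=q_a\chi_{A,x}(\Bbox')$ for $\Bbox'\in A(\lambda)$ and $x'=\chi_{A,x}(\Bbox')$ for $\Bbox'\in R(\lambda)$ (read this off from the clean form $\mathscr{Y}^{A}_{\lambda,x}(q_{A}x')/\mathscr{Y}^{A}_{\lambda,x}(q_{A}q_{d}x')$ rather than the raw $g_{\bar a}$-product), and the operator identity goes the other way, namely $:\Lambda_{A,\lambda+\Bbox'}(x)\mathsf{S}_a(\chi):\,=\,{:\Lambda_{A,\lambda}(x)\mathsf{S}_a(q_a\chi):}$ since $\mathsf{A}^{-1}(\chi)={:\mathsf{S}_a(q_a\chi)/\mathsf{S}_a(\chi):}$. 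With these corrections the telescoping goes through exactly as you describe, and your remark that the commutativity \emph{determines} the coefficients uniquely (so one may define $\widetilde{\mathcal{Z}}^{\D4}_{A}[\lambda]$ by this recursion) is also how the paper phrases it.
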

\begin{proof}
    Let us focus on $A=12$ and $a=4$ and derive the D4 $qq$-character. Other cases are obtained using the quadrality. Using the formulas in \eqref{eq:app-contractions}, we have 
    \bea\label{eq:D4-D2contraction}
        \Lambda_{12,\lambda}(x)\mathsf{S}_{4}(x')=\left[q_{3}^{-1}\frac{\mathscr{Y}^{12}_{\lambda,x}(q_{12}x')}{\mathscr{Y}^{12}_{\lambda,x}(q_{123}x')}\right]^{x'}_{-} :\Lambda_{12,\lambda}(x)\mathsf{S}_{4}(x'):,\\
        \mathsf{S}_{4}(x')\Lambda_{12,\lambda}(x)=\left[q_{3}^{-1}\frac{\mathscr{Y}^{12}_{\lambda,x}(q_{12}x')}{\mathscr{Y}^{12}_{\lambda,x}(q_{123}x')}\right]^{x'}_{+}:\Lambda_{12,\lambda}(x)\mathsf{S}_{4}(x'):
    \eea
    where $\left[\,f(x)\,\right]^{x}_{\pm}$ means expansions of $f(x)$ in ${x}^{\mp}$ respectively. Assume that the $qq$-character takes the form of 
    \beq
        \mathsf{T}_{12}(x)=\sum_{\lambda\in\mathcal{P}}F^{\D4}_{12}[\lambda]\Lambda_{12,\lambda}(x),\quad F_{12}[\emptyset]=1
    \eeq
    where $F^{\D4}_{12}[\lambda]$ are some coefficients, then the commutation relation is 
    \begin{align}
    \begin{split}
    &[\mathsf{T}_{12}(x),\mathsf{S}_{4}(x')]\\
    =&-q_{3}^{-1}\left[\sum_{\lambda}F^{\D4}_{12}[\lambda]\sum_{\Abox\in A(\lambda)}\delta\left(\frac{\chi_{12,x}(\Bbox)}{q_{4}^{-1}x'}\right)\underset{x'=q_{3}^{-1}\chi_{12,x}(\Abox)}{\Res}{x'}^{-1}\frac{\mathscr{Y}_{\lambda,x}^{12}(x')}{\mathscr{Y}_{\lambda,x}^{12}(q_{3}x')}:\Lambda_{12,\lambda}(x)\mathsf{S}_{4}(x'):\right.\\
    &+\left.\sum_{\lambda}F^{\D4}_{12}[\lambda]\sum_{\Abox\in R(\lambda)}\delta\left(\frac{\chi_{12,x}(\Bbox)}{x'}\right)\underset{x'=q_{34}^{-1}\chi_{12,x}(\Abox)}{\Res}{x'}^{-1}\frac{\mathscr{Y}_{\lambda,x}^{12}(x')}{\mathscr{Y}_{\lambda,x}^{12}(q_{3}x')}:\Lambda_{12,\lambda}(x)\mathsf{S}_{4}(x'):\right].
\end{split}
\end{align}
Shifting the second term as $\lambda=\lambda'+\Bbox$, it will be
\beq
    -q_{3}^{-1}\sum_{\lambda'}\sum_{\Abox\in A(\lambda')}F^{\D4}_{12}[\lambda'+\Bbox\,]\delta\left(\frac{\chi_{12,x}(\Bbox)}{x'}\right)\underset{x'=q_{34}^{-1}\chi_{12,x}(\Abox)}{\Res}{x'}^{-1}\frac{\mathscr{Y}_{\lambda'+\Abox,x}^{12}(x')}{\mathscr{Y}_{\lambda'+\Abox,x}^{12}(q_{3}x')}:\Lambda_{12,\lambda'}(x)\mathsf{A}^{-1}(\chi_{12,x}(\Bbox))\mathsf{S}_{4}(x').
\eeq
Using 
\beq
    {:\Lambda_{12,\lambda'}(x)\mathsf{A}^{-1}(\chi_{12,x}(\Bbox))\mathsf{S}_{4}(\chi_{12,x}(\Bbox)):}={:\Lambda_{12,\lambda'}(x)\mathsf{S}_{4}(q_{4}\chi_{12,x}(\Bbox)):}
\eeq
and imposing the condition (see the recursion formula in Thm. \ref{app:thm-D4recursion})
\beq
    \frac{F^{\D4}_{12}[\lambda+\Bbox]}{F^{\D4}_{12}[\lambda]}=-\frac{\underset{x'=q_{3}^{-1}\chi_{12,x}(\Abox)}{\Res}{x'}^{-1}\frac{\mathscr{Y}_{\lambda,x}^{12}(x')}{\mathscr{Y}_{\lambda,x}^{12}(q_{3}x')}}{\underset{x'=q_{34}^{-1}\chi_{12,x}(\Abox)}{\Res}{x'}^{-1}\frac{\mathscr{Y}_{\lambda+\Abox,x}^{12}(x')}{\mathscr{Y}_{\lambda+\Abox,x}^{12}(q_{3}x')}}=\frac{\widetilde{\mathcal{Z}}_{12}^{\D4}[\lambda+\Bbox]}{\widetilde{\mathcal{Z}}_{12}^{\D4}[\lambda]}
\eeq
we then obtain
\bea
\relax[\mathsf{T}_{12}(x),\mathsf{S}_{4}(x')]&=-q_{3}^{-1}\sum_{\lambda\in\mathcal{P}}\sum_{\Abox\in A(\lambda)}F^{\D4}_{12}[\lambda]\underset{x'=q_{3}^{-1}\chi_{12,x}(\Abox)}{\Res}{x'}^{-1}\frac{\mathscr{Y}_{\lambda,x}^{12}(x')}{\mathscr{Y}_{\lambda,x}^{12}(q_{3}x')}:\Lambda_{12,\lambda}(x)\mathsf{S}_{4}(q_{4}\chi_{12,x}(\Bbox)):\\
&\qquad\qquad \times \left(\delta\left(\frac{q_{4}\chi_{12,x}(\Bbox)}{x'}\right)-\delta\left(\frac{\chi_{12,x}(\Bbox)}{x'}\right)\right)
\eea
which is a total difference. Therefore, under the condition $F_{12}[\lambda]=\widetilde{\mathcal{Z}}_{12}^{\D4}[\lambda]$, the $qq$-character commutes with the screening charge $\mathscr{Q}_{4}(x')$:
\beq
    [\mathsf{T}_{12}(x),\mathscr{Q}_{4}(x')]=0.
\eeq
\end{proof}

\subsection{Spiked instantons and D4 \texorpdfstring{$qq$}{qq}-characters}\label{sec:D4qqandspikedinst}
Similar to the integral formula, the expanded version of the spiked instanton partition function can be expressed using the D4 $qq$-characters. 
\begin{lemma}\label{lemm:D4ope}
The operator product of $\{\Lambda_{A,\lambda}(x)\}_{A\in\six}$ is
\beq
    \Lambda_{B,\mu}(x')\Lambda_{A,\lambda}(x)=\mathcal{Z}^{\D4\text{-}\D4}_{\text{1-loop}}(x,A\,|\,x',B)\mathcal{Z}^{\D4\text{-}\D4}_{A|B}(x,\lambda\,|\,x',\mu):\Lambda_{B,\mu}(x')\Lambda_{A,\lambda}(x):.
\eeq
When $A=B$, it gives the vector and adjoint hypermultiplet contributions, while when $A\neq B$, it gives the bifundamental-like contributions connecting gauge theories defined on different subspaces. 
\end{lemma}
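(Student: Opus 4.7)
The plan is to compute the operator product directly by expanding both $\Lambda_{A,\lambda}(x)$ and $\Lambda_{B,\mu}(x')$ in terms of their constituent vertex operators $\mathsf{X}_A$, $\mathsf{X}_B$, $\mathsf{A}^{-1}$, and then collecting the contraction factors. The pairs of contractions that need to be tracked are: (i) $\mathsf{X}_{B}(x')\mathsf{X}_{A}(x)$, which by construction produces the one-loop factor $\mathcal{Z}^{\D4\tbar\D4}_{\text{1-loop}}(x,A\,|\,x',B)$ as recorded at the end of Section~\ref{sec:spikedLMNS}; (ii) $\mathsf{X}_{B}(x')\mathsf{A}^{-1}(\chi_{A,x}(\Bbox))$ for each $\Bbox\in\lambda$, which produces a factor built out of $\mathscr{S}_{\bar{B}}$; (iii) $\mathsf{A}^{-1}(\chi_{B,x'}(\BboxF))\mathsf{X}_{A}(x)$ for each $\BboxF\in\mu$, which produces a factor built out of $\mathscr{S}_{\bar{A}}$; and (iv) $\mathsf{A}^{-1}(\chi_{B,x'}(\BboxF))\mathsf{A}^{-1}(\chi_{A,x}(\Bbox))$ for each pair of boxes, which contributes $\mathcal{A}_{\mathbb{C}^{4}}^{-1}$ by \eqref{eq:D0ope}.

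More concretely, I would first use the contractions derived in Section~\ref{sec:spikedLMNS} to write
\begin{equation*}
\mathsf{X}_B(x')\,\mathsf{A}^{-1}(y) = \mathscr{S}_{\bar{B}}(y/x')\cdot\wick{\c{\mathsf{x}_{B,0}(x')}\c{\mathsf{a}_0(y)}}^{-1}:\mathsf{X}_B(x')\,\mathsf{A}^{-1}(y):,
\end{equation*}
and the analogue for $\mathsf{A}^{-1}(y')\mathsf{X}_A(x)$ and $\mathsf{A}^{-1}(y')\mathsf{A}^{-1}(y)$. After substituting $y=\chi_{A,x}(\Bbox)$ and $y'=\chi_{B,x'}(\BboxF)$ and taking the products over $\Bbox\in\lambda$, $\BboxF\in\mu$, the resulting coefficient is
\begin{equation*}
\prod_{\Abox\in\lambda}\mathscr{S}_{\bar{B}}\!\left(\frac{\chi_{A,x}(\Bbox)}{x'}\right)\prod_{\AboxF\in\mu}\mathscr{S}_{\bar{A}}\!\left(\frac{\chi_{B,x'}(\BboxF)}{x}\right)^{-1}\prod_{\Abox,\AboxF}\mathcal{A}_{\mathbb{C}^4}\!\left(\frac{\chi_{A,x}(\Bbox)}{\chi_{B,x'}(\BboxF)}\right)^{-1},
\end{equation*}
together with the one-loop factor from step (i) and the zero-mode contractions. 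The next step is then to apply the reflection identities \eqref{eq:reflec_structfunc} to the second product, rewriting $\mathscr{S}_{\bar{A}}(\chi_{B,x'}(\BboxF)/x)^{-1}$ as $\mathscr{S}_{\bar{A}}(x/\chi_{B,x'}(\BboxF))$ up to a trivial overall factor. Comparing with the definition of $\mathcal{Z}^{\D4\tbar\D4}_{A|B}$ in \eqref{eq:D4spikedpartition2} (rewritten using $\mathscr{S}_{\bar{B}}(q_B\chi_{A,x}(\Bbox)/x')=\mathscr{S}_{\bar{B}}(\chi_{A,x}(\Bbox)/x')$ after a shift, since $q_B\mathscr{S}_{\bar{B}}(q_B z) = \mathscr{S}_{\bar{B}}(z)^{-1}\cdot\text{reflection}$) yields the desired combinatorial coefficient.

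The bookkeeping of zero modes is where care is required. The zero-mode conditions \eqref{eq:D0D4zero} and the D4--D4 condition \eqref{eq:D4D4zero} need to be checked so that the entire coefficient agrees after analytic continuation, and so that the one-loop factor on the right-hand side is genuinely just $\mathcal{Z}^{\D4\tbar\D4}_{\text{1-loop}}$ without spurious zero-mode residue. For $A=B$, the same derivation goes through with $\mu=\lambda^{(2)}$ and $\lambda=\lambda^{(1)}$ both Young diagrams inside the same $\mathbb{C}^2_A$, and the product collapses to the vector plus adjoint hypermultiplet contribution because $\bar{A}=\bar{B}$ makes $\mathscr{S}_{\bar{A}}$ symmetric and the box-box double product acquires its characteristic $\mathscr{S}_{A}$ form, matching the Nekrasov factor $\mathsf{N}_A$ up to the mass shift by $q_{\inf(\bar{A})}$.

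The main obstacle I expect is the reflection bookkeeping: three separate $\mathscr{S}$ products (for $A$, $B$ and the internal $\mathscr{S}_A$ emerging from the $\mathcal{A}_{\mathbb{C}^4}$ factors) must combine exactly into the prescribed form of $\mathcal{Z}^{\D4\tbar\D4}_{A|B}$, and the overall zero-mode prefactor from \eqref{eq:D4D4zero} must cancel cleanly so that the statement holds as an equality of operators (after the stated analytic continuation) and not merely up to a constant. Once that combinatorial identity is verified for generic $A,B\in\six$, the $A=B$ specialization and the physical interpretation as vector/adjoint versus bifundamental follow at once from the pole structure of $\mathscr{S}_A$ versus $\mathscr{S}_{\bar{A}}$.
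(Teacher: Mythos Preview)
Your strategy is exactly the one the paper (implicitly) uses: the lemma is stated without proof because it is a direct Wick computation from the definition $\Lambda_{A,\lambda}(x)={:\mathsf{X}_{A}(x)\prod_{\Abox\in\lambda}\mathsf{A}^{-1}(\chi_{A,x}(\Bbox)):}$ together with the pairwise contractions collected in Proposition~\ref{app:prop:contraction_formula}.

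That said, your intermediate formulas contain small slips that make the bookkeeping look harder than it is. From the paper's contraction $\mathsf{X}_{B}(x')\mathsf{A}(y)=\mathscr{S}_{\bar{B}}(q_{B}y/x')^{-1}{:}\cdots{:}$ one gets $\mathsf{X}_{B}(x')\mathsf{A}^{-1}(y)=\mathscr{S}_{\bar{B}}(q_{B}y/x'){:}\cdots{:}$, i.e.\ the factor $\mathscr{S}_{\bar{B}}\bigl(q_{B}\chi_{A,x}(\Bbox)/x'\bigr)$ appears directly, already matching the first product in \eqref{eq:D4spikedpartition2} with no reflection needed. Likewise $\mathsf{A}^{-1}(y')\mathsf{X}_{A}(x)=\mathscr{S}_{\bar{A}}(x/y'){:}\cdots{:}$ gives $\mathscr{S}_{\bar{A}}\bigl(x/\chi_{B,x'}(\BboxF)\bigr)$ on the nose. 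Your versions (missing the $q_{B}$ shift in the first and with an extra inverse/argument flip in the second) are what force you into the reflection gymnastics you flag as the main obstacle; with the correct contractions the three products in \eqref{eq:D4spikedpartition2} fall out immediately. The zero-mode contractions $\wick{\c{\mathsf{a}_{0}}\c{\mathsf{x}_{A,0}}}=1$ and $\wick{\c{\mathsf{a}_{0}}\c{\mathsf{a}_{0}}}=1$ are already set to unity in the paper's conventions, so no residual zero-mode factor survives beyond the one-loop piece from $\mathsf{X}_{B}\mathsf{X}_{A}$.
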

\begin{theorem}\label{thm:spiked-qq-BPSCFT}
The gauge origami partition function of the spiked instantons is written using the D4 $qq$-characters:
\bea
    \mathcal{Z}^{\D4}_{\text{1-loop}}\mathcal{Z}^{\D4}_{\text{inst.}}=\sum_{\underline{\vec{\lambda}}}\mathfrak{q}^{|\underline{\vec{\lambda}}|}\mathcal{Z}^{\D4}_{\text{1-loop}}\mathcal{Z}_{\text{spk.inst.}}^{\D4}[\underline{\vec{v}},\underline{\vec{\lambda}}]=\bra{0}\prod_{A\in\six}\prod_{\alpha=1}^{n_{A}}\mathsf{T}_{A}(v_{A,\alpha})\ket{0}
\eea    
\end{theorem}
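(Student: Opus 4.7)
The plan is to start from the right-hand side and expand each $qq$-character as a sum over Young diagrams, then reduce the product of generating currents $\Lambda_{A,\lambda}(x)$ into normal-ordered form using the operator product expansion, and finally take the vacuum expectation value. By definition,
\begin{equation*}
\prod_{A\in\six}\prod_{\alpha=1}^{n_{A}}\mathsf{T}_{A}(v_{A,\alpha})=\sum_{\underline{\vec{\lambda}}}\mathfrak{q}^{|\underline{\vec{\lambda}}|}\prod_{A,\alpha}\widetilde{\mathcal{Z}}^{\D4}_{A}[\lambda_{A}^{(\alpha)}]\cdot\prod_{A\in\six}\prod_{\alpha=1}^{n_{A}}\Lambda_{A,\lambda_{A}^{(\alpha)}}(v_{A,\alpha}),
\end{equation*}
so after choosing any convenient linear ordering on the pairs $(A,\alpha)$, the matrix element reduces to an ordered product of $\Lambda$-operators.

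Next, I would apply Lemma~\ref{lemm:D4ope} repeatedly to move all contraction factors outside of the normal ordering. Each pairwise OPE produces two factors: a scalar $\mathcal{Z}^{\D4\tbar\D4}_{\text{1-loop}}(x,A\,|\,x',B)$ depending only on the highest-weight data $(v,A)$, and a combinatorial factor $\mathcal{Z}^{\D4\tbar\D4}_{A|B}(x,\lambda\,|\,x',\mu)$ depending on the Young diagrams. Collecting the first kind over all unordered pairs reproduces the total one-loop factor $\mathcal{Z}^{\D4}_{\text{1-loop}}$ defined in \eqref{eq:D4oneloop}. Collecting the second kind over all pairs (including the diagonal $A=B$, $\alpha=\beta$ which is accounted for inside $\widetilde{\mathcal{Z}}^{\D4}_{A}[\lambda]$) reproduces exactly the summand $\mathcal{Z}^{\D4}_{\text{spk.inst.}}[\underline{\vec{v}},\underline{\vec{\lambda}}]$ of \eqref{eq:D4spikedpartition1}--\eqref{eq:D4spikedpartition2}. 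Finally, since the remaining normal-ordered product $:\prod_{A,\alpha}\Lambda_{A,\lambda_{A}^{(\alpha)}}(v_{A,\alpha}):$ has vacuum expectation value $1$, we arrive at the desired identity.

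The main obstacle is verifying that the OPE factors produced from the Lemma really combine into the exact form $\mathcal{Z}^{\D4}_{\text{1-loop}}\cdot\mathcal{Z}^{\D4}_{\text{spk.inst.}}[\underline{\vec{v}},\underline{\vec{\lambda}}]$. Concretely, the contractions among the pure $\mathsf{X}_{A}$ pieces and among the $\mathsf{A}^{-1}$ instanton dressings are entangled: one needs to check that the mixed contractions $\mathsf{X}_{B}(v')\mathsf{A}^{-1}(\chi)$ and $\mathsf{A}^{-1}(\chi)\mathsf{A}^{-1}(\chi')$ (which feed into $\mathcal{Z}^{\D4\tbar\D4}_{A|B}$) match the Nekrasov factors appearing in \eqref{eq:D4spikedpartition2}, including the zero-mode conditions from Section~\ref{sec:zeromodes} that ensure order-independence of the ordered product. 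A secondary subtlety is that one must argue that the sum over $\underline{\vec{\lambda}}$ obtained from the expansion is absolutely convergent with respect to $\mathfrak{q}$, so that the normal-ordering manipulation commutes with the infinite sum; this is handled by the same analytic-continuation conventions used throughout Chapter~\ref{chap:freefield-vertexop}. Once this bookkeeping is accomplished, the equality is essentially a tautology: the vertex operator formalism was engineered in Chapter~\ref{chap:freefield-vertexop} precisely so that its OPE reproduces the integrand of the contour integral formula, and the commutativity with screening charges fixes the coefficients $\widetilde{\mathcal{Z}}^{\D4}_{A}[\lambda]$ to match the contributions from each D4 stack.
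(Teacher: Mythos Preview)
Your proposal is correct and follows essentially the same route as the paper: the theorem is stated immediately after Lemma~\ref{lemm:D4ope} and is meant to be a direct consequence of expanding each $\mathsf{T}_{A}(v_{A,\alpha})$ as a sum over Young diagrams, applying the lemma pairwise to extract the one-loop and bifundamental factors, and using $\bra{0}{:}\prod\Lambda{:}\ket{0}=1$. The ``obstacles'' you flag are not genuine gaps here, since the contraction formulas in Appendix~\ref{app-chap:Partition-formulas} and the zero-mode conventions of Section~\ref{sec:zeromodes} are set up precisely so that the OPE of $\Lambda_{B,\mu}(x')\Lambda_{A,\lambda}(x)$ factorizes exactly into $\mathcal{Z}^{\D4\tbar\D4}_{\text{1-loop}}\cdot\mathcal{Z}^{\D4\tbar\D4}_{A|B}$, and the sum over $\underline{\vec{\lambda}}$ is treated as a formal power series in $\mathfrak{q}$ throughout.
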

Depending on the value of $A,B\in\six$, we get different instanton contributions:
\begin{itemize}
    \item $A=B$: instantons in $\widehat{A}_0$ quiver gauge theory
    \item $A\cap B \in\four$: folded instantons
    \item $B=\bar{A}$: crossed instantons
\end{itemize}
To summarize, we have the following table of BPS/CFT correspondence.
\begin{align}
    \renewcommand\arraystretch{1.2}{
    \begin{tabular}{c|c}\toprule
        BPS &  CFT\\
     \hline  5d $\mathcal{N}=1^{\ast}$ U(1) on $\mathbb{C}^{2}_{ab}\times \mathbb{S}^{1}$ ($\D4_{ab}\times 1$) & $\bra{0}\mathsf{T}_{ab}(v)\ket{0}$ \\
      5d $\mathcal{N}=1^{\ast}$ U($n_{ab}$) on $\mathbb{C}^{2}_{ab}\times \mathbb{S}^{1}$ ($\D4_{ab}\times n_{ab}$)   & $\bra{0}\mathsf{T}_{ab}(v_{n_{ab}})\cdots\mathsf{T}_{ab}(v_{2})\mathsf{T}_{ab}(v_{1})\ket{0}$\\
      crossed instanton: D4$_{12}$-D4$_{34}$-D0 & $\bra{0}\mathsf{T}_{12}(v)\mathsf{T}_{34}(v')\ket{0}$\\
      folded instanton: D4$_{12}$-D4$_{13}$-$\D0$ & $\bra{0}\mathsf{T}_{12}(v)\mathsf{T}_{13}(v')\ket{0}$\\
      gauge origami of spiked instantons & $\bra{0}\prod\limits_{A\in\six}\prod\limits_{\alpha=1}^{n_{A}}\mathsf{T}_{A}(v_{A,\alpha})\ket{0}$
      \\ \bottomrule
    \end{tabular}}
\end{align}

\subsection{Fusion of screening charges}\label{sec:spikedandscreening}

A different way to derive the D4 $qq$-characters is to take infinite products of screening charges which we call the \textit{fusion process}.
\begin{theorem}\label{thm:D2toD4fusion}
    The D4 $qq$-characters are infinite products of D2 $qq$-characters (screening charges):
    \beq
        \mathsf{T}_{ab}(x)\simeq  \overrightarrow{\prod_{i=1}^{\infty}}\mathscr{Q}_{b}(xq_{a}^{i-1}),\quad a\neq b\in\four,
    \eeq
    where the product is $\overrightarrow{\prod\limits_{i=1}^{\infty}}f(x_{i})=f(x_{1})f(x_{2})\cdots$.
\end{theorem}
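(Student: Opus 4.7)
The plan is to expand the infinite product of screening charges, identify the combinatorial structure, and match it to the Young-diagram sum defining $\mathsf{T}_{ab}(x)$. Expanding $\mathscr{Q}_b(y) = \sum_{k\in\mathbb{Z}}\mathsf{S}_b(q_b^k y)$ one has
\[
\overrightarrow{\prod_{i=1}^{\infty}} \mathscr{Q}_b(xq_a^{i-1}) = \sum_{\vec k \in \mathbb{Z}^{\mathbb{N}}} \overrightarrow{\prod_{i=1}^{\infty}} \mathsf{S}_b(xq_a^{i-1}q_b^{k_i}),
\]
and the telescoping identity $\mathsf{A}(y) = {:\mathsf{S}_b(y)/\mathsf{S}_b(q_b y):}$ from \eqref{eq:oprelation1} lets one rewrite $\mathsf{S}_b(yq_b^{k_i}) = {:\mathsf{S}_b(y)\prod_{j=0}^{k_i-1}\mathsf{A}^{-1}(q_b^j y):}$ for $k_i \geq 0$ (and analogously for $k_i < 0$). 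This reorganisation puts every term of the expansion into a manifestly $\mathsf{A}$-dressed form.

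The next step is to isolate the vacuum contribution $\vec k \equiv 0$, namely $\overrightarrow{\prod}_{i=1}^{\infty} \mathsf{S}_b(xq_a^{i-1})$, and identify it with $\mathsf{X}_{ab}(x)$ via the non-zero-mode computation
\[
\sum_{i=1}^{\infty} \mathsf{s}_{b,n}(xq_a^{i-1})^{-n} = \frac{\mathsf{s}_{b,n}}{1-q_a^{-n}}\,x^{-n} = \frac{\mathsf{a}_n}{\bfP_a^{[-n]}\bfP_b^{[-n]}}\,x^{-n} = \mathsf{x}_{ab,n}\,x^{-n},
\]
which exactly reproduces the non-zero modes of $\mathsf{X}_{ab}(x)$ as defined in \eqref{eq:D4op}--\eqref{eq:D4op2}. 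Hence, modulo the $\simeq$-equivalence that absorbs zero-mode discrepancies, $\overrightarrow{\prod}_{i=1}^{\infty} \mathsf{S}_b(xq_a^{i-1}) \simeq \mathsf{X}_{ab}(x)$, establishing the highest-weight match. For a sequence $\vec k = (\lambda_i)_{i\geq 1}$ of non-negative integers arranged as a Young shape $\lambda$, i.e.~$\lambda_1 \geq \lambda_2 \geq \cdots$ with almost all $\lambda_i = 0$, the telescoping then produces precisely the monomial $\Lambda_{ab,\lambda}(x) = {:\mathsf{X}_{ab}(x) \prod_{\Abox \in \lambda}\mathsf{A}^{-1}(\chi_{ab,x}(\Bbox)):}$. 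The OPE factors generated by the pairwise contractions of the $\mathsf{S}_b$-operators give a scalar coefficient which, combined with the commutativity of the RHS with $\mathscr{Q}_c(x')$ for every $c \in \bar{ab}$ (immediate from Theorem~\ref{thm:D2qq-commute}, since $c \neq b$) and the uniqueness of $\mathsf{T}_{ab}(x)$ that is implicit in the recursion of the proof of Theorem~\ref{thm:D4qq-commute}, is forced to equal $\widetilde{\mathcal{Z}}_{ab}^{\D4}[\lambda]$.

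The main obstacle is the vanishing of non-Young contributions: sequences $\vec k$ with $k_i < k_{i+1}$ for some $i$ or with some $k_i < 0$ must not contribute to the final sum. A direct verification requires careful tracking of the zeros of $\mathsf{S}_b(y_1)\mathsf{S}_b(y_2)$ at ratios such as $y_2/y_1 = q_a q_b^m$ with $m > 0$, and propagating these cancellations through the whole infinite product. An alternative avenue is to bypass this analysis by invoking the uniqueness argument: any residual non-Young contribution would commute with every $\mathscr{Q}_c(x')$ with $c \in \bar{ab}$ while adding nothing to the highest weight $\mathsf{X}_{ab}(x)$, hence it must vanish in the $\simeq$ sense by the same uniqueness that pins down $\mathsf{T}_{ab}(x)$ from its highest weight and transverse-screening commutativity.
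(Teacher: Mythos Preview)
Your overall strategy---expand the product, identify the monomial terms with $\Lambda_{ab,\lambda}(x)$, and match coefficients---follows the same skeleton as the paper's sketch. The telescoping identification of $:\prod_{i}\mathsf{S}_b(xq_a^{i-1}q_b^{\lambda_i}):$ with $\Lambda_{ab,\lambda}(x)$ and the computation of the highest weight via summing the non-zero modes are both in line with the paper, and your idea of reading off the coefficients via the commutativity-uniqueness recursion (rather than by direct contraction, as the paper does) is a legitimate shortcut once the Young restriction is in place.

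The gap is in the non-Young vanishing. The paper's mechanism is concrete: it states and uses the OPE zero
\[
\mathsf{S}_b(xq_a^{i-1}q_b^{\lambda_i})\,\mathsf{S}_b(xq_a^{j-1}q_b^{\lambda_j}) = 0 \quad \text{for } i\le j,\ \lambda_i<\lambda_j,
\]
which follows from the explicit $\mathsf{S}_b$--$\mathsf{S}_b$ contraction. This is what truncates the sum over $\mathbb{Z}^{\mathbb{N}}$ to non-increasing sequences, and it is the step you allude to but do not carry out. Your proposed alternative, invoking ``the same uniqueness that pins down $\mathsf{T}_{ab}(x)$,'' does not work as stated: the uniqueness established in the proof of Theorem~\ref{thm:D4qq-commute} lives inside the ansatz $\sum_{\lambda\in\mathcal{P}}F[\lambda]\Lambda_{ab,\lambda}(x)$ and determines the $F[\lambda]$ by a recursion; it says nothing about operators outside that class. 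The kernel of the transverse screening charges is in fact large (it contains $\mathscr{Q}_a$, $\mathscr{Q}_b$, other D4 and D6 $qq$-characters, cf.\ Theorem~\ref{thm:tetrascreening}), so an operator that commutes with $\mathscr{Q}_c$ for $c\in\overline{ab}$ and has no highest-weight piece need not vanish. You therefore cannot avoid verifying the OPE zero directly; once that is done, your coefficient argument via uniqueness is fine.
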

\begin{proof}
We only give a sketch of the proof (see \cite{Kimura:2015rgi} and \cite[eq.~(6.2.23)--(6.2.27)]{Kimura:2020jxl} for a similar discussion). The vertex operator $\Lambda_{ab,\lambda}(x)$ of the D4 $qq$-character satisfies
\beq
    \Lambda_{ab,\lambda}(x)\sim {:\prod_{i=1}^{\infty}\mathsf{S}_{b}(xq_{a}^{i-1}q_{b}^{\lambda_{i}}):}
\eeq
where the equality is up to zero modes. By direct computation, one can show that up to one-loop perturbative factors,\footnote{When we say one-loop perturbative factors, we are meaning the operator product coming from the highest weight vertex operators. For this case, the operator product coming from $\mathsf{S}_{b}(xq_{a}^{i-1})$. Strictly speaking, the $\mathsf{S}_{a}(x)$ operators are not highest weights in the usual sense, but we abuse the terminology and still call it the highest weight.} we have\footnote{Strictly speaking, we have to be careful of the zero-modes appearing on both hand sides. Moreover, the infinite product of the screening currents needs to be regularized properly.}
\beq
    \mathsf{T}_{ab}(x)\simeq\sum_{\lambda\in\mathcal{P}}\overrightarrow{\prod_{i=1}^{\infty}}\mathsf{S}_{b}(xq_{a}^{i-1}q_{b}^{\lambda_{i}}).\label{eq:D4screeningrep}
\eeq
Using the property that for $i\leq j$ and $\lambda_{i}<\lambda_{j}$
\beq
    \mathsf{S}_{b}(xq_{a}^{i-1}q_{b}^{\lambda_{i}})\mathsf{S}_{b}(xq_{a}^{j-1}q_{b}^{\lambda_{j}})=0,
\eeq
we obtain
\beq
    \mathscr{Q}_{b}(x)\mathscr{Q}_{b}(q_{a}x)=\sum_{k\in\mathbb{Z}}\mathsf{S}_{b}(xq_{b}^{k})\sum_{l\in\mathbb{Z}}\mathsf{S}_{b}(xq_{a}q_{b}^{l})=\sum_{k\geq l}\mathsf{S}_{b}(xq_{b}^{k})\mathsf{S}_{b}(xq_{a}q_{b}^{l}).
\eeq
By computing the nontrivial coefficients appearing after taking the contractions of the screening currents, one obtains the statement.
\end{proof}
\begin{corollary}
    The total partition function is written using infinite products of screening charges
    \beq
        \mathcal{Z}^{\D4}_{\text{1-loop}}\mathcal{Z}^{\D4}_{\text{inst.}}=\sum_{\underline{\vec{\lambda}}}\mathfrak{q}^{|\underline{\vec{\lambda}}|}\mathcal{Z}^{\D4}_{\text{1-loop}}\mathcal{Z}_{\text{spk.inst.}}^{\D4}[\underline{\vec{v}},\underline{\vec{\lambda}}]\simeq \bra{0}\prod_{A\in\six}\prod_{\alpha=1}^{n_{A}}\overrightarrow{\prod_{i=1}^{\infty}}\mathscr{Q}_{\bar{s}(A)}(v_{A,\alpha}q_{s(A)}^{i-1})\ket{0}.
    \eeq
\end{corollary}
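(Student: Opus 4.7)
The plan is to obtain this corollary as a direct consequence of the two results that immediately precede it. By Theorem~\ref{thm:spiked-qq-BPSCFT} the full partition function is already identified with a vacuum correlator of the D4 $qq$-characters, and by Theorem~\ref{thm:D2toD4fusion} each D4 $qq$-character is an (ordered) infinite product of screening charges up to zero modes. Substituting the latter into the former should produce the claimed identity. Concretely, I would write
\bea
\bra{0}\prod_{A\in\six}\prod_{\alpha=1}^{n_A}\mathsf{T}_A(v_{A,\alpha})\ket{0}
\simeq \bra{0}\prod_{A\in\six}\prod_{\alpha=1}^{n_A}\overrightarrow{\prod_{i=1}^{\infty}}\mathscr{Q}_{\bar s(A)}\!\left(v_{A,\alpha}\,q_{s(A)}^{\,i-1}\right)\ket{0},
\eea
where the factorization $A=s(A)\bar s(A)$ fixes which equivariant parameter labels the ``row'' direction (the $q_{s(A)}$ stack) and which labels the screening charge (the $\mathscr{Q}_{\bar s(A)}$ fusion).

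First I would record that Theorem~\ref{thm:spiked-qq-BPSCFT} gives an ordered product of the $\mathsf{T}_A(v_{A,\alpha})$, and the same ordering is to be inherited by the nested infinite product of screening charges. Next I would apply Theorem~\ref{thm:D2toD4fusion} factor by factor; since the theorem holds at the level of operators, the substitution inside $\bra{0}\cdots \ket{0}$ preserves the same $\simeq$ (equality up to zero-mode prefactors) as in the fusion theorem itself. No further combinatorial identities or $qq$-character manipulations are needed: both the per-node Nekrasov contribution $\widetilde{\mathcal{Z}}_A^{\D4}[\lambda^{(\alpha)}]$ and the bifundamental-like factors $\mathcal{Z}_{A|B}^{\D4\tbar\D4}$ of Lemma~\ref{lemm:D4ope} are already packaged inside the $\mathsf{T}_A$'s, hence inside the fused screening-charge products.

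The only subtle point, and what I would flag as the main obstacle, is bookkeeping of zero modes and of the regularization of the semi-infinite products $\overrightarrow{\prod_{i=1}^\infty}\mathscr{Q}_{\bar s(A)}(v_{A,\alpha}q_{s(A)}^{i-1})$. The zero-mode conventions of Section~\ref{sec:zeromodes} are set up so that \eqref{eq:oprelation1} gives exact relations between $\mathsf{A}$, $\mathsf{S}_a$ and $\mathsf{X}_A$, but the fusion of Theorem~\ref{thm:D2toD4fusion} required an extra regularization of the infinite product of $\mathsf{S}_b$'s attached to the asymptotic column of a Young diagram. Any mismatch is a zero-mode factor independent of the random partition data, and so it affects only the one-loop perturbative prefactor; this is precisely why the statement is written with $\simeq$ rather than $=$. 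I would therefore expect the proof to reduce to (i) a line-by-line substitution yielding the displayed identity up to an $\underline{\vec\lambda}$-independent zero-mode factor, and (ii) a brief remark that this factor matches $\mathcal{Z}^{\D4}_{\text{1-loop}}$ on the left-hand side after the regularization chosen in the proof of Theorem~\ref{thm:D2toD4fusion}.
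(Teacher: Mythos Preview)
Your proposal is correct and follows exactly the approach implied by the paper: the corollary is stated without a separate proof precisely because it is the direct combination of Theorem~\ref{thm:spiked-qq-BPSCFT} (partition function as a correlator of D4 $qq$-characters) and Theorem~\ref{thm:D2toD4fusion} (each $\mathsf{T}_A$ as an infinite product of screening charges), with the $\simeq$ absorbing the zero-mode and regularization ambiguities you identified.
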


\subsection{Quadratic relations of crossed instantons}\label{sec:D4quadraticrelation}
The $qq$-characters $\{\mathsf{T}_{A}(x)\}_{A\in\six}$ are expected to generate a larger algebra than the affine quiver W-algebra. As usual deformed W-algebras, studying the quadratic relations of the generators is another way to understand the algebraic structure. Since deriving the complete quadratic relations of the $\{\mathsf{T}_{A}(x)\}_{A\in\six}$ is beyond the scope of this paper, we just give a part of the quadratic relations which can be derived easily.
\begin{theorem}
    The D4 $qq$-characters $\mathsf{T}_{A}(x)$ and $\mathsf{T}_{\bar{A}}(x)$ for $A\in\six$ (anti-)commute with each other up to trivial zero-modes factors:
    \begin{equation}
       x\mathsf{T}_{A}(x)\mathsf{T}_{\bar{A}}(x')+q_{A}x'\mathsf{T}_{\bar{A}}(x')\mathsf{T}_{A}(x)=0.
    \end{equation}
\end{theorem}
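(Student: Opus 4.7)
The plan is to reduce the quadratic relation to an operator--level identity for each pair of monomial terms $\Lambda_{A,\lambda}(x)$ and $\Lambda_{\bar A,\mu}(x')$, and then to sum over Young diagrams. The crucial geometric input is that $A\in\six$ and $\bar A\in\six$ are transverse: $A\cap\bar A=\emptyset$ and $A\cup\bar A=\four$. Algebraically this translates into a very simple vertex operator algebra between $\mathsf{X}_A$ and $\mathsf{X}_{\bar A}$, because $\bfP_A^{[-n]}\bfP_{\bar A}^{[n]}=q_A^{-n}\bfP_{\four}^{[n]}$ (using $|A|=2$ together with $q_Aq_{\bar A}=1$), so the nonzero modes satisfy $[\mathsf{x}_{A,n},\mathsf{x}_{\bar A,m}]=-\tfrac{q_A^n}{n}\delta_{n+m,0}$.

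First I would compare the two orderings of the highest--weight operators. Using the commutation relation above,
\begin{align*}
\mathsf{X}_A(x)\mathsf{X}_{\bar A}(x')&=\wick{\c{\mathsf{x}_{A,0}(x)}\c{\mathsf{x}_{\bar A,0}(x')}}\,(1-q_A x'/x):\mathsf{X}_A(x)\mathsf{X}_{\bar A}(x'):,\\
\mathsf{X}_{\bar A}(x')\mathsf{X}_A(x)&=\wick{\c{\mathsf{x}_{\bar A,0}(x')}\c{\mathsf{x}_{A,0}(x)}}\,(1-q_A^{-1}x/x'):\mathsf{X}_A(x)\mathsf{X}_{\bar A}(x'):,
\end{align*}
and using the elementary identity $(1-q_A^{-1}x/x')=-q_A^{-1}(x/x')\,(1-q_A x'/x)$, the ratio of the two orderings equals $-q_A(x'/x)$ up to the quotient of zero modes. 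This is precisely the trivial zero--mode factor alluded to in the statement; the zero--mode prescription in footnote~\ref{note:D4footnote} (condition \eqref{eq:D4D4zero}) is designed to absorb it cleanly at this point of the argument.

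Next I would check that all other contractions produced by $\Lambda_{A,\lambda}(x)\Lambda_{\bar A,\mu}(x')$ are symmetric under the swap of the two operators. The only building blocks that appear are $\mathsf{X}_A(x)\mathsf{A}^{-1}(\chi')$ for $\chi'\in\mu$, $\mathsf{A}^{-1}(\chi)\mathsf{X}_{\bar A}(x')$ for $\chi\in\lambda$, and $\mathsf{A}^{-1}(\chi)\mathsf{A}^{-1}(\chi')$. Each of them produces, by construction, an identical rational function in the two orderings after analytic continuation: the first two use the zero--mode conditions \eqref{eq:D0D4zero} applied to $A$ and to $\bar A$, and the last one uses \eqref{eq:D0D0zero}. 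Therefore the full ratio of the two monomial orderings is still exactly $-q_A(x'/x)$, \emph{independent of $\lambda$ and $\mu$}, giving
\[
\Lambda_{A,\lambda}(x)\Lambda_{\bar A,\mu}(x')=-q_A\,(x'/x)\,\Lambda_{\bar A,\mu}(x')\Lambda_{A,\lambda}(x).
\]

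The last step is the summation. Since the proportionality constant $-q_A(x'/x)$ does not depend on the summation indices, multiplying by $\widetilde{\mathcal Z}^{\D4}_A[\lambda]\widetilde{\mathcal Z}^{\D4}_{\bar A}[\mu]$ and summing over $\lambda,\mu\in\mathcal P$ immediately yields $\mathsf{T}_A(x)\mathsf{T}_{\bar A}(x')=-q_A(x'/x)\mathsf{T}_{\bar A}(x')\mathsf{T}_A(x)$, which rearranges to the claimed quadratic relation. The main obstacle, and really the only nontrivial bookkeeping, is the zero--mode handling in the first step: one must pin down the zero--mode assignments of $\mathsf{x}_{A,0}(x)$ and $\mathsf{x}_{\bar A,0}(x')$ so that the ratio $\wick{\c{\mathsf{x}_{A,0}(x)}\c{\mathsf{x}_{\bar A,0}(x')}}/\wick{\c{\mathsf{x}_{\bar A,0}(x')}\c{\mathsf{x}_{A,0}(x)}}$ reduces to a trivial factor compatible with the other zero--mode conventions adopted in section~\ref{sec:zeromodes}. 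Everything past that is a direct Wick calculation.
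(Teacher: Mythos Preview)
Your argument has a genuine gap. You correctly identify that the contraction of the highest weights $\mathsf{X}_A(x)\mathsf{X}_{\bar A}(x')$ is the polynomial $(1-q_A x'/x)$, so that for this piece the two orderings differ only by the trivial factor $-q_A x'/x$. You then claim that the remaining contractions $\mathsf{X}_A(x)\mathsf{A}^{-1}(\chi')$, $\mathsf{A}^{-1}(\chi)\mathsf{X}_{\bar A}(x')$, and $\mathsf{A}^{-1}(\chi)\mathsf{A}^{-1}(\chi')$ are ``symmetric under the swap'' after analytic continuation. As \emph{rational functions} this is true, but as operator products the two orderings are formal expansions in opposite regions ($|x'/x|\ll 1$ vs.\ $|x/x'|\ll 1$), and the difference picks up delta functions at the poles of those rational functions. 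The functions $\mathscr{S}_{\bar A}$, $\mathscr{S}_{A}$, and $\mathcal{A}_{\mathbb C^4}^{-1}$ appearing in $\mathcal{Z}^{\D4\tbar\D4}_{A|\bar A}$ all have poles in $x'/x$, so the identity $\Lambda_{A,\lambda}(x)\Lambda_{\bar A,\mu}(x')=-q_A(x'/x)\,\Lambda_{\bar A,\mu}(x')\Lambda_{A,\lambda}(x)$ is \emph{false} at the level of formal series for general $(\lambda,\mu)$. The zero-mode conditions \eqref{eq:D0D4zero} and \eqref{eq:D0D0zero} only equate the zero-mode Wick factors; they say nothing about the nonzero-mode poles.

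The nontrivial content of the theorem is precisely that these extra delta-function terms cancel after summing over $\lambda,\mu$, and this cancellation uses the specific coefficients $\widetilde{\mathcal Z}^{\D4}_{A}[\lambda]$, $\widetilde{\mathcal Z}^{\D4}_{\bar A}[\mu]$. In the paper's proof the poles are located at $\chi_{A,x}(\Bbox)=\chi_{\bar A,x'}(\BboxF)$ with $\Bbox\in A(\lambda),\,\BboxF\in R(\mu)$ and vice versa; one shifts $\lambda\to\lambda+\Bbox$, $\mu\to\mu-\BboxF$, uses $:\Lambda_{A,\lambda+\Abox}(x)\Lambda_{\bar A,\mu-\AboxF}(x'):=:\Lambda_{A,\lambda}(x)\Lambda_{\bar A,\mu}(x'):$ under the delta function, and then invokes the recursion relation for $\widetilde{\mathcal Z}^{\D4}$ to match the two residues. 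Your writeup skips exactly this step, which is where the theorem is actually proved.
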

\begin{proof}
Let us consider the commutation relations between $\mathsf{T}_{12}(x)$ and $\mathsf{T}_{34}(x')$:
\begin{align}
    \mathsf{T}_{12}(x)&=\sum_{\lambda}F^{\D4}_{12}[\lambda]:\Lambda_{12,\lambda}(x):,\\
    \mathsf{T}_{34}(x')&=\sum_{\mu}F^{\D4}_{34}[\mu]:\Lambda_{34,\mu}(x'):.
\end{align}
The contraction formulas give 
\bea
    \Lambda_{12,\lambda}(x)\Lambda_{34,\mu}(x')&=\left[\mathcal{Z}_{34\,|\,12}^{\text{tot}}(x',\mu\,|\,x,\lambda)\right]_{|x'/x|\ll1}:\Lambda_{12,\lambda}(x)\Lambda_{34,\mu}(x'):,\\
    \Lambda_{34,\mu}(x')\Lambda_{12,\lambda}(x)&=\left[\mathcal{Z}_{12\,|\,34}^{\text{tot}}(x,\lambda\,|\,x',\mu)\right]_{|x/x'|\ll1}:\Lambda_{12,\lambda}(x)\Lambda_{34,\mu}(x'):,
\eea
where
\begin{equation}
\mathcal{Z}_{\bar{A}|A}^{\text{tot.}}(x',\mu|x,\lambda)=\mathcal{Z}_{\text{1-loop}}^{\D4\tbar\D4}(x',\bar{A}|x,A)\mathcal{Z}_{\bar{A}|A}^{\D4\tbar\D4}(x',\mu|x,\lambda).
\end{equation}
Noting that the one-loop factors of the crossed instantons are rational functions
\beq
    \mathcal{Z}_{\text{1-loop}}^{\D4\tbar\D4}(x',\bar{X}|x,X)=\left(1-q_{A}x'/x\right),
\eeq
we then obtain
\begin{equation}
\begin{split}
    &\mathsf{T}_{12}(x)\mathsf{T}_{34}(x')-\left(-q_{12}\frac{x'}{x}\right)\mathsf{T}_{34}(x')\mathsf{T}_{12}(x)\\
    =&\sum_{\lambda,\mu}F^{\D4}_{12}[\lambda]F^{\D4}_{34}[\mu]\left\{\sum_{\substack{\AboxF\in A(\mu)\\\Abox\in R(\lambda)}}\underset{\substack{\chi_{12,x}(\Abox)\\=\chi_{34,x'}(\AboxF)}}{\Res}\left(\frac{x}{x'}\right)^{-1}\mathcal{Z}_{34\,|\,12}^{\text{tot}}(x',\mu\,|\,x,\lambda)\delta\left(\frac{\chi_{12,x}(\Bbox)}{\chi_{34,x'}(\BboxF)}\right):\Lambda_{12,\lambda}(x)\Lambda_{34,\mu}(x'):\right.\\
    &\left.+\sum_{\substack{\AboxF\in R(\mu)\\\Abox\in A(\lambda)}}\underset{\substack{\chi_{12,x}(\Abox)\\=\chi_{34,x'}(\AboxF)}}{\Res}\left(\frac{x}{x'}\right)^{-1}\mathcal{Z}_{34\,|\,12}^{\text{tot}}(x',\mu\,|\,x,\lambda)\delta\left(\frac{\chi_{12,x}(\Bbox)}{\chi_{34,x'}(\BboxF)}\right):\Lambda_{12,\lambda}(x)\Lambda_{34,\mu}(x'):\right\}.
\end{split}
\end{equation}
Similar to the proof in Thm.~\ref{thm:D4qq-commute}, we redefine the sum of the first term as $\lambda\rightarrow \lambda=\lambda'+\Bbox$, $\mu\rightarrow \mu=\mu'+\BboxF$. After this, the first term will be 
\begin{equation}
\begin{split}
    &\sum_{\lambda',\mu'}\sum_{\substack{\AboxF\in R(\mu')\\\Abox\in A(\lambda')}} F^{\D4}_{12}[\lambda'+\Bbox]F^{\D4}_{34}[\mu'-\BboxF]\underset{\substack{\chi_{12,x}(\Abox)\\=\chi_{34,x'}(\AboxF)}}{\Res}\left(\frac{x}{x'}\right)^{-1}\mathcal{Z}_{34\,|\,12}^{\text{tot}}(x',\mu'-\BboxF\,|\,x,\lambda'+\Bbox)\\
    &\qquad\times\delta\left(\frac{\chi_{12,x}(\Bbox)}{\chi_{34,x'}(\BboxF)}\right):\Lambda_{12,\lambda'+\Abox}(x)\Lambda_{34,\mu'-\AboxF}(x'):.
\end{split}
\end{equation}
Using 
\begin{equation}
    \frac{F^{\D4}_{12}[\lambda+\Bbox]F^{\D4}_{34}[\mu-\BboxF]}{F^{\D4}_{12}[\lambda]F^{\D4}_{34}[\mu]}=-\frac{\underset{\substack{\chi_{12,x}(\Abox)\\=\chi_{34,x'}(\AboxF)}}{\Res}\left(\frac{x}{x'}\right)^{-1}\mathcal{Z}_{34\,|\,12}^{\text{tot}}(x',\mu\,|\,x,\lambda)}{\underset{\substack{\chi_{12,x}(\Abox)\\=\chi_{34,x'}(\AboxF)}}{\Res}\left(\frac{x}{x'}\right)^{-1}\mathcal{Z}_{34\,|\,12}^{\text{tot}}(x',\mu-\BboxF\,|\,x,\lambda+\Bbox)}.
\end{equation}
and
\begin{equation}
    \delta\left(\frac{\chi_{12,x}(\Bbox)}{\chi_{34,x'}(\BboxF)}\right):\Lambda_{12,\lambda+\Abox}(x)\Lambda_{34,\mu-\AboxF}(x'):=\delta\left(\frac{\chi_{12,x}(\Bbox)}{\chi_{34,x'}(\BboxF)}\right):\Lambda_{12,\lambda}(x)\Lambda_{34,\mu}(x'):
\end{equation}
we obtain the claim.
\end{proof}

\begin{remark}
    The extra factors in front of the $qq$-characters come from the operator product of the zero modes $\mathsf{x}_{A,0}(x)$. We may modify the zero modes so that they obey \eqref{eq:D4D4zero} (see also footnote \ref{note:D4footnote}) and then get an exact commuting relation $[\mathsf{T}_{A}(x),\mathsf{T}_{\bar{A}}(x')]=0$. Instead of doing that, we simply relax the commutativity condition and say that operators $\mathcal{O}(x)$ and $\mathcal{O}'(x')$ commute when they satisfy 
    \begin{align}
        \mathcal{O}(x)-f(x,x')\mathcal{O'}(x')=0\label{eq:weakcommute}
    \end{align}
    where $f(x,x')$ are zero modes factors.
\end{remark}

\section{D6 \texorpdfstring{$qq$}{qq}-characters}\label{sec:D6qqcharacter}
We introduce the D6 $qq$-characters by using the commutativity with the screening charges in section~\ref{sec:D6qqchdef}. To be concrete, we show that monomial terms of the $qq$-character with the highest weight $\mathsf{W}_{\bar{a}}(x)$ are classified by plane partitions and that it commutes with the screening charge $\mathscr{Q}_{a}(x')$. We then show that the D6 $qq$-characters reproduce the tetrahedron instanton partition function in section~\ref{sec:tetrainstD6qq}. The D6 $qq$-characters can be obtained by fusion of an infinite number of D4 $qq$-characters (see section~\ref{sec:D4fusiontoD6}).



\subsection{D6 \texorpdfstring{$qq$}{qq}-characters and plane partition}\label{sec:D6qqchdef}
Following the construction of the D4 $qq$-characters and the affine quiver W-algebra, let us construct the D6 $qq$-characters. The D6 $qq$-characters are $qq$-characters whose generating currents are the D6 vertex operators $\mathsf{W}_{\bar{a}}(x)\,(a\in\four)$ in \eqref{eq:D6op}. Similar to the D4 case, each term of the D6 $qq$-character is decomposed into the vertex operator and the coefficient parts. The operator part is determined by the iWeyl reflection of the operator part defined as the following.
\begin{definition}
    The iWeyl reflection of the D6 vertex operator $\mathsf{W}_{\bar{a}}(x)\,(a\in\four)$ is
    \beq
        \mathsf{W}_{\bar{a}}(x)\longrightarrow {:\mathsf{W}_{\bar{a}}(x)\mathsf{A}^{-1}(x):},\quad a\in\four.
    \eeq
    Using the property in \eqref{eq:oprelationwithD0-D6}, the root current is rewritten in the D6 vertex operators as
\beq
    \mathsf{W}_{\bar{a}}(x)\longrightarrow {:\mathsf{a}_{0}(x)^{-1}\mathsf{W}_{\bar{a}}(q_{a}^{-1}x)\frac{\prod\limits_{i\in\four\setminus \{a\}}\mathsf{W}_{\bar{a}}(q_{i}x)}{\prod\limits_{i\in\four\setminus \{a\}}\mathsf{W}_{\bar{a}}(q_{i}^{-1}q_{a}^{-1}x)}:}
\eeq
\end{definition}
The operator part is obtained by changing the root currents to the D6-operators and applying the iWeyl reflection to the numerators recursively. After iWeyl reflections, the operators are classified by plane partitions. Let us study some terms after applying the iWeyl reflection. The operators obtained after $n$-times of iWeyl reflections are called operators at level $n$. We focus on the case $\mathsf{W}_{\bar{4}}(x)$ and then have
\beq
    \mathsf{W}_{\bar{4}}(x)\longrightarrow {:\mathsf{a}_{0}(x)^{-1}\frac{\mathsf{W}_{\bar{4}}(q_{123}x)\prod_{i=1}^{3}\mathsf{W}_{\bar{4}}(q_{i}x)}{\prod_{1\leq i<j\leq 3}\mathsf{W}_{\bar{4}}(q_{i}q_{j}x)}:}.
\eeq
\begin{itemize}
    \item Level 0: We only have one operator 
\beq
    \mathsf{W}_{\bar{4}}(x).
\eeq
We can associate this operator with a vacuum configuration of the plane partition in the space (1,2,3) where there is no box:
\bea
    \mathsf{W}_{\bar{4}}(x)\quad\longleftrightarrow\quad  \adjustbox{valign=c}{\begin{tikzpicture}[scale=0.5]
    \draw[->] (0,0)--(-30:2); 
    \draw[->] (0,0)--(210:2);
    \draw[->] (0,0)--(90:2);
    \node at (210:2.4){$1$};
    \node at (-30:2.4){$2$};
    \node at (90:2.4){$3$};
\end{tikzpicture}}
\eea
The spectral parameter $x$ corresponds to the $q$-coordinates in the origin. The operator $\mathsf{W}_{\bar{4}}(x)$ represents the addable box of this plane partition configuration which is the box in the origin.
\item Level 1: After applying the iWeyl reflection once, we have 
\beq
    {:\mathsf{W}_{\bar{4}}(x)\mathsf{A}^{-1}(x):}={:\mathsf{a}_{0}(x)^{-1}\frac{\mathsf{W}_{\bar{4}}(q_{123}x)\prod_{i=1}^{3}\mathsf{W}_{\bar{4}}(q_{i}x)}{\prod_{i<j\leq 3}\mathsf{W}_{\bar{4}}(q_{i}q_{j}x)}:}.
\eeq
The level 1 current can be described as 
\bea
:\mathsf{a}_{0}(x)^{-1}\frac{\textcolor{blue}{\mathsf{W}_{\bar{4}}(q_{4}^{-1}x)}\textcolor{red}{\prod_{i=1}^{3}\mathsf{W}_{\bar{4}}(q_{i}x)}}{\prod_{i<j\leq 3}\mathsf{W}_{\bar{4}}(q_{i}q_{j}x)}:\quad \longleftrightarrow \quad
\adjustbox{valign=c}{
\begin{tikzpicture}[scale=0.5]
    \draw[->] (0,0)--(-30:2); 
    \draw[->] (0,0)--(210:2);
    \draw[->] (0,0)--(90:2);
    \planepartition{{1}} 
    \node at (210:2.4){$1$};
    \node at (-30:2.4){$2$};
    \node at (90:2.4){$3$};
\end{tikzpicture}}
\eea
An observation is that the red terms $\mathsf{W}_{\bar{4}}(q_{i}x)\,(i=1,2,3)$ correspond to the addable boxes of this plane partition configuration. The variables $q_{i}x\,(i=1,2,3)$ correspond to the $q$-coordinates of the addable boxes. The blue term $\mathsf{W}_{\bar{4}}(q_{4}^{-1}x)$ corresponds to the removable box of the configuration, which is the box in the origin with coordinate $x$.

\item Level 2: Since we have three numerators in the level 1 operator, we will have three possible level 2 operators depending on which term we do the iWeyl reflection.

\begin{enumerate}
    \item $\pi_{1,1}=1,\,\,\pi_{2,1}=1$
    \bea
        {:\mathsf{W}_{\bar{4}}(x)\mathsf{A}^{-1}(x)\mathsf{A}^{-1}(q_{1}x):}\propto \frac{\textcolor{red}{(2)(3)(1^{2})}\textcolor{blue}{(14^{-1})}}{(1^{2}2)(1^{2}3)}\quad \longleftrightarrow \quad \adjustbox{valign=c}{
\begin{tikzpicture}[scale=0.5]
    \draw[->] (0,0)--(-30:2); 
    \draw[->] (0,0)--(210:2.8);
    \draw[->] (0,0)--(90:2);
    \planepartition{{1},{1}} 
    \node at (210:3.1){$1$};
    \node at (-30:2.4){$2$};
    \node at (90:2.4){$3$};
\end{tikzpicture}}
     \eea
    where we simply denote $\mathsf{W}_{\bar{4}}(q_{1}^{a}q_{2}^{b}q_{3}^{c}q_{4}^{d}x)$ as $(1^{a}2^{b}3^{c}4^{d})$. The $q$-coordinates of the addable boxes of this configuration are $q_{1}^{2}x, q_{2}x,q_{3}x$ and correspond to the red terms in the numerator. The $q$-coordinate of the removable box of this configuration is $xq_{1}$ and correspond to the blue term with variable $q_{4}^{-1}q_{1}x$.
    
    \item $\pi_{1,1}=1,\pi_{1,2}=1$
    \bea
    {:\mathsf{W}_{\bar{4}}(x)\mathsf{A}^{-1}(x)\mathsf{A}^{-1}(q_{2}x):}\propto \frac{\textcolor{red}{(1)(3)(2^{2})}\textcolor{blue}{(12^{2}3)}}{(12^{2})(2^{2}3)}\quad \longleftrightarrow \quad \adjustbox{valign=c}{
\begin{tikzpicture}[scale=0.5]
    \draw[->] (0,0)--(-30:2.8); 
    \draw[->] (0,0)--(210:2);
    \draw[->] (0,0)--(90:2);
    \planepartition{{1,1}} 
    \node at (210:2.4){$1$};
    \node at (-30:3.1){$2$};
    \node at (90:2.4){$3$};
\end{tikzpicture}}
    \eea
    Similarly, the red terms correspond to the addable boxes and the blue terms correspond to the removable boxes of the plane partition configuration.
    \item $\pi_{1,1}=2$
    \bea
    {:\mathsf{W}_{\bar{4}}(x)\mathsf{A}^{-1}(x)\mathsf{A}^{-1}(q_{3}x):}\propto \frac{\textcolor{red}{(1)(2)(3^{2})}\textcolor{blue}{(123^{2})}}{(13^{2})(23^{2})}\quad \longleftrightarrow \quad \adjustbox{valign=c}{
\begin{tikzpicture}[scale=0.5]
    \draw[->] (0,0)--(-30:2); 
    \draw[->] (0,0)--(210:2);
    \draw[->] (0,0)--(90:2.6);
    \planepartition{{2}} 
    \node at (210:2.4){$1$};
    \node at (-30:2.4){$2$};
    \node at (90:3.1){$3$};
\end{tikzpicture}}
    \eea
    Again, the red terms correspond to the addable boxes and the blue terms correspond to the removable boxes of the plane partition configuration.
\end{enumerate}
\end{itemize}
We can do this procedure recursively and then obtain the following statement.
\begin{lemma}\label{lemma:D6_iWeyl_ref}
The operators generated from the iWeyl reflection starting from $\mathsf{W}_{\bar{a}}(x)$ are classified by plane partitions:
\begin{equation}
    \Lambda_{\bar{a},\pi}(x)\coloneqq {:\mathsf{W}_{\bar{a}}(x)\prod_{\scube\in\pi}\mathsf{A}^{-1}(\chi_{\bar{a},x}(\cube)):},\quad a\in\four.
\end{equation}
Converting the root currents into the D6-operators, we have 
\begin{equation}
    \mathsf{W}_{\bar{a}}(x)\prod\limits_{\scube\in\pi}\mathsf{A}^{-1}(\chi_{\overbar{a},x}(\cube))\propto \frac{\prod\limits_{\scube\in A(\pi)}\mathsf{W}_{\bar{a}}(\chi_{\overbar{a},x}(\cube))\prod\limits_{\scube\in R(\pi)}\mathsf{W}_{\bar{a}}(q_{a}^{-1}\chi_{\overbar{a},x}(\cube))}{\#}
\end{equation}
where we have some extra denominators determined recursively.\footnotemark
\end{lemma}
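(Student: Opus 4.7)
The plan is to proceed by induction on the number of iWeyl reflections applied, which equals the size $|\pi|$ of the associated plane partition. For the base case $|\pi|=0$, the only operator is $\mathsf{W}_{\bar{a}}(x)$, and we identify it with the empty plane partition, which has a single addable box at the origin with $q$-coordinate $x$ and no removable boxes. This matches the stated formula $\Lambda_{\bar{a},\emptyset}(x) = \mathsf{W}_{\bar{a}}(x)$ trivially.

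For the inductive step, assume the claim holds for all plane partitions of size $n$. Fix $\pi$ of size $n$ and an addable box $\cube \in A(\pi)$; applying one iWeyl reflection at $\cube$ means multiplying by $\mathsf{A}^{-1}(\chi_{\bar{a},x}(\cube))$. The key input is the rewriting \eqref{eq:oprelationwithD0-D6}, which expresses $\mathsf{A}(x)^{-1}$ (for the $\bar{a}=123$ case) schematically as
\begin{equation*}
\mathsf{A}(x)^{-1} \propto \mathsf{a}_{0}(x)^{-1}\,\frac{\mathsf{W}_{\bar{4}}(q_{1}x)\mathsf{W}_{\bar{4}}(q_{2}x)\mathsf{W}_{\bar{4}}(q_{3}x)\mathsf{W}_{\bar{4}}(q_{4}^{-1}x)}{\mathsf{W}_{\bar{4}}(x)\mathsf{W}_{\bar{4}}(q_{12}x)\mathsf{W}_{\bar{4}}(q_{13}x)\mathsf{W}_{\bar{4}}(q_{23}x)}.
\end{equation*}
Multiplying into the expression for $\Lambda_{\bar{a},\pi}(x)$ produces four effects that I would verify in turn: (i) the factor $\mathsf{W}_{\bar{a}}(\chi_{\bar{a},x}(\cube))$ in $\text{Num}(\pi)$ (present because $\cube \in A(\pi)$) cancels against the new denominator factor $\mathsf{W}_{\bar{a}}(x)$ at $x=\chi_{\bar{a},x}(\cube)$; (ii) the factor $\mathsf{W}_{\bar{a}}(q_{a}^{-1}\chi_{\bar{a},x}(\cube))$ appears, accounting for $\cube$ having become a removable box of $\pi+\cube$; (iii) the three numerator factors $\mathsf{W}_{\bar{a}}(q_{i}\chi_{\bar{a},x}(\cube))$ for $i \in \bar{a}$ correspond to the positions $\cube+\delta_{i}$, each of which becomes addable in $\pi+\cube$ precisely when its other $\bar{a}$-neighbors are already in $\pi$; (iv) the three denominator factors $\mathsf{W}_{\bar{a}}(q_{ij}\chi_{\bar{a},x}(\cube))$ cancel numerator contributions from $\pi$ at positions that are no longer addable in $\pi+\cube$.

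The combinatorial content to verify is the matching between these four operator-level effects and the corresponding changes in $A$ and $R$: namely that $A(\pi+\cube) = (A(\pi)\setminus\{\cube\}) \cup \{\cube+\delta_{i} \mid i \in \bar{a},\ \cube+\delta_{i}\text{ becomes addable}\}$, and the analogous update for $R$. For a box $\cube+\delta_{i}$ to become newly addable, its other predecessors $\cube+\delta_{i}-\delta_{j}$ for $j \in \bar{a}\setminus\{i\}$ must lie in $\pi$; these predecessors are exactly at positions whose $q$-coordinates are $q_{ij}^{-1}\chi_{\bar{a},x}(\cube+\delta_{i})$, and the presence of each such predecessor in $\pi$ is precisely what makes $\cube+\delta_{i}$ appear in the old $\#_{\pi}$ denominator (via a previous iWeyl step), ensuring the required cancellation. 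Running through the cases of how many predecessors are present yields the explicit form of $\#_{\pi+\cube}$.

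To complete the classification I would then argue: (a) every plane partition is reachable from $\emptyset$ by a sequence of single-box additions that preserve the plane partition condition at each step (a standard fact); (b) conversely, every iWeyl reflection from $\Lambda_{\bar{a},\pi}(x)$ must be applied at a numerator factor, and by the inductive description this numerator is supported on $A(\pi)$, hence the reflection produces $\Lambda_{\bar{a},\pi+\cube}(x)$ for some $\cube \in A(\pi)$, keeping us within plane partitions; (c) the map $\pi \mapsto \Lambda_{\bar{a},\pi}(x)$ is injective because the multiset of $\mathsf{A}^{-1}$ arguments recovers $\pi$. The main obstacle I anticipate is the bookkeeping in step (iv): tracking how the $q_{ij}$-shifted denominators from iWeyl reflections at different boxes collide and cancel through the inductive history. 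This can be organized by reformulating $\#_{\pi}$ as a product over certain ``interior relations'' of $\pi$ (e.g., pairs of boxes differing by $\delta_{i}+\delta_{j}$) and proving the update rule $\#_{\pi+\cube}/\#_{\pi}$ directly from \eqref{eq:oprelationwithD0-D6}, which reduces the whole argument to a finite case-check indexed by the local shape of $\pi$ around $\cube$.
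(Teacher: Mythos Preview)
Your inductive argument is sound and actually more complete than what the paper provides. The paper does not give a formal proof of this lemma: it explicitly computes the operators at levels $0$, $1$, and $2$, observes in each case that the numerator $\mathsf{W}_{\bar{a}}$ factors sit at the addable boxes and (shifted by $q_{a}^{-1}$) at the removable boxes, then simply states ``We can do this procedure recursively and then obtain the following statement'' before asserting the lemma. A footnote remarks that the denominator is related to the shell formula of \cite{Feigin2011plane} and is not needed.

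Your proposal goes further by organizing the recursion into a genuine induction on $|\pi|$, using \eqref{eq:oprelationwithD0-D6} to track the four effects (i)--(iv) of multiplying by $\mathsf{A}^{-1}(\chi_{\bar{a},x}(\cube))$, and matching them against the combinatorics of $A(\pi+\cube)$ and $R(\pi+\cube)$. This is the natural way to make the paper's ``do this recursively'' precise, and your points (a)--(c) are exactly what is needed to close the classification. The bookkeeping you flag in (iv) is real but tractable; your suggestion to encode $\#_{\pi}$ as a product over interior $\delta_i+\delta_j$ relations and reduce to a local case-check around $\cube$ is the right way to handle it.
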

\footnotetext{The explicit form of the right-hand side is related to the shell formula of the plane partition \cite{Feigin2011plane}. In this note, the information of the denominator is not necessary so we will not write the explicit form. }  

The D6 $qq$-character is a sum of the vertex operators $\Lambda_{\bar{a},\pi}(x)$ with some specific coefficients:
\beq
    \mathsf{T}_{\bar{a}}(x)=\sum_{\pi\in\mathcal{PP}}F^{\D6}_{\bar{a}}[\pi]\Lambda_{\bar{a},\pi}(x).
\eeq
We impose the condition that this $qq$-character commutes with the screening charge $\mathscr{Q}_{a}(x)$. After imposing this condition, the coefficient is determined uniquely.
\begin{definition}\label{def:D6_qq-ch}
    We define the D6 $qq$-character for $a\in\four$ as 
    \beq
        \mathsf{T}_{\bar{a}}(x)=\sum_{\pi\in\mathcal{PP}}\widetilde{\mathcal{Z}}^{\D6}_{\bar{a}}[\pi]\Lambda_{\bar{a},\pi}(x),\quad a\in\four,
    \eeq
    where the coefficients $\widetilde{\mathcal{Z}}^{\D6}_{\bar{a}}[\pi]$ are identified with the $\U(1)$ partition function of the 7d gauge theory on $\mathbb{C}^{3}_{\bar{a}}\times \mathbb{S}^{1}$ in \eqref{eq:D6tetinst_partfunct}. Rescaling the zero-modes $\mathsf{A}(x)\rightarrow \mathfrak{q}^{-1}\mathsf{A}(x)$, we have 
    \beq
        \mathsf{T}_{\bar{a}}(x)=\sum_{\pi\in\mathcal{PP}}\mathfrak{q}^{|\pi|}\widetilde{\mathcal{Z}}^{\D6}_{\bar{a}}[\pi]\Lambda_{\bar{a},\pi}(x).
    \eeq
    
\end{definition}

\begin{theorem}\label{thm:D6qq-commute}
    The D6 $qq$-characters $\mathsf{T}_{\bar{a}}(x)\,(a\in\four)$ commutes with the screening charge $\mathscr{Q}_{a}(x)$:
    \beq
       \relax [\mathsf{T}_{\bar{a}}(x),\mathscr{Q}_{a}(x')]=0.
    \eeq
\end{theorem}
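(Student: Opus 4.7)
The plan is to mirror the argument for Theorem~\ref{thm:D4qq-commute}, upgrading Young diagrams to plane partitions and the structure function $\mathscr{Y}^{A}_{\lambda,v}$ to $\mathscr{W}^{\bar{a}}_{\pi,v}$. By quadrality it suffices to treat $a=4$, so I will compute the commutator $[\mathsf{T}_{\bar{4}}(x),\mathsf{S}_{4}(q_{4}^{k}x')]$ and sum over $k\in\mathbb{Z}$. First I would collect the OPEs needed: with $\mathsf{W}_{\bar{4}}(x)\mathsf{S}_{4}(x')$ generating a factor $(1-q_{4}^{-1}x'/x)^{-1}$ and $\mathsf{A}(x)\mathsf{S}_{4}(x')$ generating $g_{\bar{4}}(x'/x)^{-1}$, the contractions of $\Lambda_{\bar{4},\pi}(x)$ with $\mathsf{S}_{4}(x')$ assemble, up to an explicit zero-mode prefactor, into the two orderings
\begin{align*}
\Lambda_{\bar{4},\pi}(x)\mathsf{S}_{4}(x') &= \bigl[\,\mathscr{W}^{\bar{4}}_{\pi,x}(q_{4}x')^{-1}\,\bigr]^{x'}_{-}\,:\Lambda_{\bar{4},\pi}(x)\mathsf{S}_{4}(x'):, \\
\mathsf{S}_{4}(x')\Lambda_{\bar{4},\pi}(x) &= \bigl[\,\mathscr{W}^{\bar{4}}_{\pi,x}(q_{4}x')^{-1}\,\bigr]^{x'}_{+}\,:\Lambda_{\bar{4},\pi}(x)\mathsf{S}_{4}(x'):,
\end{align*}
where the two expansions differ only by the distribution of poles across $|x'/x|\ll 1$ versus $|x/x'|\ll 1$.

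The commutator then collapses into a sum of $\delta$-functions supported precisely at the addable and removable cubes of $\pi$, since these are the poles of $\mathscr{W}^{\bar{4}}_{\pi,x}(q_{4}x')^{-1}$ by the product formula for $\mathscr{W}^{\bar{a}}_{\pi,v}(x)$. Writing $\mathsf{T}_{\bar{4}}(x)=\sum_{\pi}F^{\D6}_{\bar{4}}[\pi]\Lambda_{\bar{4},\pi}(x)$ with undetermined coefficients, the commutator therefore splits as
\begin{equation*}
[\mathsf{T}_{\bar{4}}(x),\mathsf{S}_{4}(x')]=\sum_{\pi}F^{\D6}_{\bar{4}}[\pi]\Bigl(\sum_{\scube\in A(\pi)}+\sum_{\scube\in R(\pi)}\Bigr)\delta(\cdots)\,\mathrm{Res}(\cdots)\,:\Lambda_{\bar{4},\pi}(x)\mathsf{S}_{4}(x'):.
\end{equation*}

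Next, in the $R(\pi)$ sum I would relabel $\pi=\pi'+\cube$, so that removable cubes of $\pi$ become addable cubes of $\pi'$ and $\mathsf{A}^{-1}(\chi_{\bar{4},x}(\cube))\mathsf{S}_{4}(\chi_{\bar{4},x}(\cube))=\mathsf{S}_{4}(q_{4}\chi_{\bar{4},x}(\cube))$ shifts the argument of the screening current. Equating the two terms as a telescoping difference $\delta(q_{4}\chi_{\bar{4},x}(\cube)/x')-\delta(\chi_{\bar{4},x}(\cube)/x')$ forces the recursion
\begin{equation*}
\frac{F^{\D6}_{\bar{4}}[\pi+\cube]}{F^{\D6}_{\bar{4}}[\pi]}=-\frac{\underset{x'=q_{4}^{-1}\chi_{\bar{4},x}(\scube)}{\Res}{x'}^{-1}\mathscr{W}^{\bar{4}}_{\pi,x}(q_{4}x')^{-1}}{\underset{x'=\chi_{\bar{4},x}(\scube)}{\Res}{x'}^{-1}\mathscr{W}^{\bar{4}}_{\pi+\scube,x}(q_{4}x')^{-1}},
\end{equation*}
where the denominator is the residue taken at the shifted pole after relabelling. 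I would then verify that the right-hand side coincides with $\widetilde{\mathcal{Z}}^{\D6}_{\bar{4}}[\pi+\cube]/\widetilde{\mathcal{Z}}^{\D6}_{\bar{4}}[\pi]$ using the explicit product formula for $\widetilde{\mathcal{Z}}^{\D6}_{\bar{a}}[\pi]$ from Chap.~\ref{chap:gaugeorigamipartitionfunction} (equivalently, using $\widetilde{\mathcal{Z}}^{\D6}_{\bar{a}}[\pi]=\mathbb{I}[-\mathbf{P}_{a}^{\vee}x^{-1}\mathbf{K}_{\bar{a},\pi}+\mathbf{P}_{\bar{a}}^{\vee}\mathbf{K}_{\bar{a},\pi}^{\vee}\mathbf{K}_{\bar{a},\pi}]$), so that the commutator reduces to the total difference
\begin{equation*}
-\sum_{\pi}\sum_{\scube\in A(\pi)}\widetilde{\mathcal{Z}}^{\D6}_{\bar{4}}[\pi]\,C_{\pi,\scube}\,\bigl(\delta(q_{4}\chi_{\bar{4},x}(\cube)/x')-\delta(\chi_{\bar{4},x}(\cube)/x')\bigr):\Lambda_{\bar{4},\pi}(x)\mathsf{S}_{4}(q_{4}\chi_{\bar{4},x}(\cube)):
\end{equation*}
for some residue coefficient $C_{\pi,\scube}$. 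Summing the Heisenberg shift $\mathsf{S}_{4}(q_{4}^{k}x')$ over $k\in\mathbb{Z}$ makes the telescope collapse and yields $[\mathsf{T}_{\bar{4}}(x),\mathscr{Q}_{4}(x')]=0$.

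The main obstacle I anticipate is verifying the recursion step cleanly. Unlike the Young diagram case, the denominator in Lemma~\ref{lemma:D6_iWeyl_ref} contributed by the iWeyl reflection at deeper corners is more intricate, and one must check that the residue of $\mathscr{W}^{\bar{4}}_{\pi,x}(q_{4}x')^{-1}$ at an addable cube of $\pi$ equals, up to the expected ratio, the residue at the corresponding removable cube of $\pi+\cube$ — in other words, that the pole/zero cancellations between addable and removable cubes implied by plane partition combinatorics are consistent with the Nekrasov-like product formula for $\widetilde{\mathcal{Z}}^{\D6}_{\bar{a}}[\pi]$. The simplest way to pin this down is to translate everything into the character language of Chap.~\ref{chap:gaugeorigamipartitionfunction} and use $\mathbf{K}_{\bar{a},\pi+\scube}=\mathbf{K}_{\bar{a},\pi}+\chi_{\bar{a},x}(\cube)$, after which the recursion becomes an identity between indices of movable characters that follows from the reflection properties in Prop.~\ref{prop:reflection_sign}.
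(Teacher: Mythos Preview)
Your approach is essentially identical to the paper's: reduce to $a=4$ by quadrality, compute the OPE of $\Lambda_{\bar{4},\pi}(x)$ with $\mathsf{S}_{4}(x')$ in terms of $\mathscr{W}^{\bar{4}}_{\pi,x}$, split the commutator into addable/removable cube contributions, relabel the removable sum via $\pi=\pi'+\cube$, use the shift $:\mathsf{A}^{-1}(\chi)\mathsf{S}_{4}(\chi):={:}\mathsf{S}_{4}(q_{4}\chi){:}$, and identify the resulting recursion with that of $\widetilde{\mathcal{Z}}^{\D6}_{\bar{4}}[\pi]$ (the paper quotes this as Thm.~\ref{eq:app-thm-D6U1recursionformula}) so that the commutator telescopes. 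The only slip is the argument of $\mathscr{W}$: from the contraction $\mathsf{W}_{\bar{4}}(x)\mathsf{S}_{4}(x')\propto(1-q_{4}^{-1}x'/x)^{-1}$ the pole is at $x'=q_{4}x$, so the assembled function is $\mathscr{W}^{\bar{4}}_{\pi,x}(q_{4}^{-1}x')^{-1}$ (not $q_{4}x'$), and the addable/removable residues sit at $x'=q_{4}\chi_{\bar{4},x}(\cube)$ and $x'=\chi_{\bar{4},x}(\cube)$ respectively; once you correct this shift your recursion matches the paper's verbatim.
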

\begin{proof}
Let us focus on $\mathsf{T}_{\bar{4}}(x)$ and see how the commutativity appears. Using the formulas in Thm. \ref{eq:app-contractions} and the property in \eqref{eq:oprelation1}, we have
\begin{align}
\begin{split}
    [\mathsf{W}_{\bar{4}}(x),\mathsf{S}_{4}(x')]&=q_{4}x\delta\left(q_{4}x/x'\right):\mathsf{W}_{\bar{4}}(x)\mathsf{S}_{4}(q_{4}x):\\
    [{:\mathsf{W}_{\bar{4}}(x)\mathsf{A}^{-1}(x):},\mathsf{S}_{4}(x')]&=x\delta\left(x'/x\right)\frac{\prod_{i=1}^{3}(1-q_{i})}{\prod_{1\leq i<j\leq 3}(1-q_{i}q_{j})}:\mathsf{W}_{123}(x)\mathsf{S}_{4}(q_{4}x):+\cdots,
    \end{split}
\end{align}
where for the second term, we only extracted the pole coming from $x'=x$. Thus, for the pole coming from $\mathsf{W}_{\bar{4}}(x)$ to disappear up to a total difference, we need the combination
\beq
    \mathsf{W}_{\bar{4}}(x)-q_{4}\frac{\prod_{1\leq i<j\leq 3}(1-q_{i}q_{j})}{\prod_{i=1}^{3}(1-q_{i})}:\mathsf{W}_{\bar{4}}(x)\mathsf{A}^{-1}(x):
\eeq
where the coefficient is $\widetilde{\mathcal{Z}}_{\bar{4}}^{\D6}[\,\cube\,]$. Generally, using 
\begin{align}
\begin{split}
  \Lambda_{\bar{a},\pi}(x)\mathsf{S}_{4}(x') &=-q_{4}x\left[\mathscr{W}^{\overbar{4}}_{\pi,x}(q_{4}^{-1}x')^{-1}\right]^{x'}_{-}:\Lambda_{\bar{a},\pi}(x)\mathsf{S}_{4}(x'):,\\
  \mathsf{S}_{4}(x')\Lambda_{\bar{a},\pi}(x)&=-q_{4}x\left[\mathscr{W}_{\pi,x}^{\overbar{4}}(q_{4}^{-1}x')^{-1}\right]^{x'}_{+}:\Lambda_{\bar{a},\pi}(x)\mathsf{S}_{4}(x'):
\end{split}
\end{align}
and the property in \eqref{eq:D6Nekrasov-shell}, we obtain
\begin{align}
\begin{split}
    &[\mathsf{T}_{\bar{4}}(x),\mathsf{S}_{4}(x')]\\
=&q_{4}x\sum_{\pi\in\mathcal{PP}}\widetilde{\mathcal{Z}}_{\bar{4}}^{\text{D6}}[\pi]\left(\sum_{\scube\,\in A(\pi)}\underset{x'=q_{4}\chi_{\overbar{4},x}(\scube)}{\Res}{x'}^{-1}\mathscr{W}^{\overbar{4}}_{\pi,x}(q_{4}^{-1}x')^{-1}\delta\left(\frac{x'}{q_{4}\chi_{\overbar{4},x}(\cube)}\right):\Lambda_{\bar{4},\pi}(x)\mathsf{S}_{4}(q_{4}\chi_{\overbar{4},x}(\cube)):\right.\\
    &\qquad\left.+\sum_{\scube\in R(\pi)}\underset{x'=\chi_{\overbar{4},x}(\scube)}{\Res}{x'}^{-1}\mathscr{W}^{\overbar{4}}_{\pi,x}(q_{4}^{-1}x')^{-1}\delta\left(\frac{x'}{\chi_{\overbar{4},x}(\cube)}\right):\Lambda_{\bar{4},\pi}(x)\mathsf{S}_{4}(\chi_{\overbar{4},x}(\cube)):\right).
\end{split}
\end{align}
Shifting the second term as $\pi'=\pi-\scube$, the second term will be rewritten as
\bea
\sum_{\pi'\in\mathcal{PP}}\widetilde{\mathcal{Z}}_{\bar{4}}^{\text{D6}}[\pi'+\cube]\sum_{\scube\in A(\pi')}\underset{x'=\chi_{\overbar{4},x}(\scube)}{\Res}{x'}^{-1}\mathscr{W}^{\overbar{4}}_{\pi'+\scube,x}(q_{4}^{-1}x')^{-1}\delta\left(\frac{x'}{\chi_{\overbar{4},x}(\cube)}\right):\Lambda_{\bar{4},\pi'+\scube}(x)\mathsf{S}_{4}(\chi_{\overbar{4},x}(\cube)):.
\eea
Using Thm.~\ref{eq:app-thm-D6U1recursionformula}, we have
\begin{align}
\begin{split}
    {:\Lambda_{\bar{4},\pi'+\scube}(x)\mathsf{S}_{4}(\chi_{\overbar{4},x}(\cube)):}={:\Lambda_{\bar{4},\pi'}(x)\mathsf{S}_{4}(q_{4}\chi_{\bar{4},x}(\cube)):},\\
    \frac{\widetilde{\mathcal{Z}}^{\D6}_{\bar{4}}[\pi+\cube]}{\widetilde{\mathcal{Z}}^{\D6}_{\bar{4}}[\pi]}=-\frac{\underset{x'=q_{4}\chi_{\overbar{4},x}(\scube)}{\Res}{x'}^{-1}\mathscr{W}^{\overbar{4}}_{\pi,x}(q_{4}^{-1}x')^{-1}}{\underset{x'=\chi_{\overbar{4},x}(\scube)}{\Res}{x'}^{-1}\mathscr{W}^{\overbar{4}}_{\pi+\scube,x}(q_{4}^{-1}x')^{-1}}
\end{split}
\end{align}
and then we obtain
\begin{align}
\begin{split}
    &[\mathsf{T}_{123}(x),\mathsf{S}_{4}(x')]\\
    =&q_{4}x\sum_{\pi\in\mathcal{PP}}\widetilde{\mathcal{Z}}_{\bar{4}}^{\text{D6}}[\pi]\sum_{\scube\,\in A(\pi)}\underset{x'=q_{4}\chi_{\overbar{4},x}(\scube)}{\Res}{x'}^{-1}\mathscr{W}^{\overbar{4}}_{\pi,x}(q_{4}^{-1}x')^{-1}:\mathsf{W}_{\bar{4}}(x)\prod_{\scube\in\pi}\mathsf{A}^{-1}(\chi_{\overbar{4},x}(\cube))\mathsf{S}_{4}(q_{4}\chi_{\overbar{4},x}(\cube))\\
    &\times \left(\delta\left(\frac{x'}{\chi_{\overbar{4},x}(\cube)}\right)-\delta\left(\frac{x'}{q_{4}\chi_{\overbar{4},x}(\cube)}\right)\right)
\end{split}
\end{align}
which gives the claim.
\end{proof}
We have shown here that the D6 $qq$-character $\mathsf{T}_{\bar{a}}(x)$ is uniquely determined by the commutativity with the screening charge $\mathscr{Q}_{a}(x)$ and that the coefficient factor is the $\U(1)$ partition function of the 7d theory on $\mathbb{C}^{3}_{\bar{a}}\times \mathbb{S}^{1}$. Compared with the magnificent four system where we need to take care of the sign problem~\cite{Nekrasov:2017cih,Nekrasov:2018xsb,Nekrasov:2023nai}, our discussion here gives an algebraic proof showing that there will be no sign problem for the 7d case. This is indeed compatible with a mathematical proof given in~\cite{Fasola:2023ypx}.

Thm.~\ref{thm:D2qq-commute}, \ref{thm:D4qq-commute}, \ref{thm:D6qq-commute} are summarized in the following theorem.
\begin{theorem}\label{thm:tetrascreening}
Let $\mathscr{T}$ be the tetrahedron corresponding to the $\mathbb{C}^{4}$ geometry (see Figure \ref{fig:complex}). We denote the set of vertices, edges, and faces of $\mathscr{T}$ as $\mathsf{v}=\{a\mid a\in\four\}$, $\mathsf{e}=\{A\mid A\in\six\}$, $\mathsf{f}=\{\bar{a}\mid a\in\four\}$ respectively. We also introduce the union of them as $\mathscr{S}=\mathsf{v}\cup\mathsf{e}\cup\mathsf{f}$. For each element of $i\in\mathscr{S}$, we can associate a $qq$-character. If $i\in\mathsf{v}$, we associate the D2 $qq$-character (screening charge), if $i\in\mathsf{e}$, we associate the D4 $qq$-character, if $i\in\mathsf{f}$, we associate the D6 $qq$-character. The $qq$-character associated with the elements $i,j\in\mathscr{T}$ commute with each other up to trivial zero modes (see \eqref{eq:weakcommute}) when $i$ and $j$ do not intersect in $\mathscr{T}$:
\beq
    \mathsf{T}_{i}(x)\mathsf{T}_{j}(x')-f_{ij}(x,x')\mathsf{T}_{j}(x')\mathsf{T}_{i}(x)=0 \quad \Longleftrightarrow \quad i \cap j =\emptyset,
\eeq
where $f_{ij}(x,x')$ are zero modes.
\vspace{-0.5cm}\begin{align*}\adjustbox{valign=c}{\includegraphics[width=4.5cm]{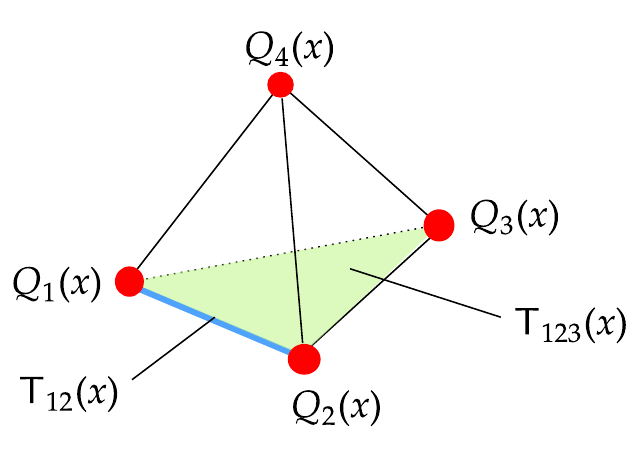}}\end{align*}
\end{theorem}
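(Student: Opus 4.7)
The plan is to reduce the statement to a case-by-case enumeration indexed by the dimensions of $i$ and $j$ as strata of $\mathscr{T}$ (vertex, edge, face), and then to invoke the three commutation results Thm.~\ref{thm:D2qq-commute}, Thm.~\ref{thm:D4qq-commute}, Thm.~\ref{thm:D6qq-commute} together with the quadratic relation of Sec.~\ref{sec:D4quadraticrelation}. Since the relation $i\cap j=\emptyset$ is symmetric, it suffices to examine six ordered types of pairs; within each one I will identify when transversality can actually occur and exhibit the corresponding zero-mode factor $f_{ij}$.

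For the three pairs involving a vertex, the analysis is immediate. Vertex-vertex: $\{a\}\cap\{b\}=\emptyset$ iff $a\neq b$, and the commutation is Thm.~\ref{thm:D2qq-commute} with $f_{ab}=1$. Vertex-edge: $\{a\}\cap A=\emptyset$ iff $a\in\bar{A}$, which is exactly the hypothesis of Thm.~\ref{thm:D4qq-commute}, again with $f=1$. Vertex-face: $\{a\}\cap\bar{b}=\emptyset$ iff $a=b$, which is the content of Thm.~\ref{thm:D6qq-commute}. For the edge-edge case, $A\cap B=\emptyset$ together with $|A|=|B|=2$ and $A,B\subseteq\four$ of cardinality four forces $B=\bar{A}$; the quadratic relation proved in Sec.~\ref{sec:D4quadraticrelation} reads $x\mathsf{T}_A(x)\mathsf{T}_{\bar{A}}(x')+q_Ax'\mathsf{T}_{\bar{A}}(x')\mathsf{T}_A(x)=0$, which is precisely the claim with $f_{A\bar{A}}(x,x')=-q_Ax'/x$.

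The remaining edge-face and face-face pairs are vacuous. For labels taken from $\four$, the inequalities $|A\cap\bar{a}|\geq|A|+|\bar{a}|-|\four|=1$ and $|\bar{a}\cap\bar{b}|\geq 2$ hold automatically, so no transverse pair of either type exists and the theorem asserts nothing new in these cases. Assembling all six cases produces the forward direction of the theorem.

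The main obstacle is the converse implication: when $i$ and $j$ do intersect, I must verify that no zero-mode multiplier $f_{ij}(x,x')$ can clear the combination to zero. To handle this I would revisit, for each intersecting pair, the OPE analysis underlying the proofs of Thms.~\ref{thm:D2qq-commute}--\ref{thm:D6qq-commute}: in each such case the contraction produces a surviving $\delta$-function contribution localized at a spectral-parameter ratio that depends on the partition data labelling the non-vacuum monomial terms, and which therefore cannot be absorbed into any scalar prefactor independent of that data. Making this argument fully rigorous in every intersecting case is essentially bookkeeping with the contraction formulas of Chap.~\ref{chap:freefield-vertexop}, but it is the step that will require the most care.
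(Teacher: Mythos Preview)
Your approach is correct and matches the paper's own treatment: the paper presents this theorem as a summary of Thm.~\ref{thm:D2qq-commute}, \ref{thm:D4qq-commute}, \ref{thm:D6qq-commute}, with no additional proof given. Your enumeration is in fact more complete than the paper's one-sentence pointer, since you correctly identify that the edge--edge case ($B=\bar{A}$) requires the quadratic relation of Sec.~\ref{sec:D4quadraticrelation} rather than one of the three listed theorems, and you explicitly verify that the edge--face and face--face cases are vacuous; the paper leaves these observations implicit.

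On the converse direction, the paper does not supply a proof either---it simply writes the $\Longleftrightarrow$ in the displayed statement. Your instinct that the intersecting cases produce surviving $\delta$-function terms depending on the partition data (hence not removable by a scalar prefactor) is the right mechanism, and it is consistent with how the paper treats non-commutativity elsewhere; but since the paper itself omits this argument, you are not missing anything relative to it.
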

Choosing the screening charge $\mathscr{Q}_{4}(x)$, Thm.~\ref{thm:tetrascreening} claims that the kernel of $\mathscr{Q}_{4}(x)$ is generated by $\mathscr{Q}_{1,2,3}(x)$, $\mathsf{T}_{12,23,13}(x)$, $\mathsf{T}_{123}(x)$. We expect that this will give a larger algebra compared with the affine quiver W-algebra given by the kernel of two screening charges. We leave a detailed analysis of this for future work.

\subsection{Tetrahedron instantons and D6 \texorpdfstring{$qq$}{qq}-characters}\label{sec:tetrainstD6qq}
The expanded version of the tetrahedron instanton partition function can be expressed using the D6 $qq$-characters.
\begin{lemma}\label{lem:D6operatorproduct}
    The operator product of $\{\Lambda_{\bar{a},\pi}(x)\}_{a\in\four}$ is 
    \bea
    \Lambda_{\bar{b},\pi^{(2)}}(x_{2})\Lambda_{\bar{a},\pi^{(1)}}(x_{1})&=\mathcal{Z}^{\D6\tbar\D6}_{1\tbar\text{loop}}(x_{1},\bar{a}\,|\,x_{2},\bar{b})\mathcal{Z}^{\D6\tbar\D6}_{\bar{a}\,|\,\bar{b}}(x_{1},\pi^{(1)}\,|\,x_{2},\pi^{(2)}):\Lambda_{\bar{a},\pi^{(1)}}(x_{1})\Lambda_{\bar{b},\pi^{(2)}}(x_{2}):.
\eea
When $a=b$, it gives the vector multiplet contributions, while when $a\neq b$, it gives the bifundamental contribution connecting gauge theories defined on different $\mathbb{C}^{3}$ subspaces.
\end{lemma}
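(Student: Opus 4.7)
The plan is to compute the operator product $\Lambda_{\bar{b},\pi^{(2)}}(x_{2})\Lambda_{\bar{a},\pi^{(1)}}(x_{1})$ directly from the defining OPEs of the constituent vertex operators, organize the contractions into a ``highest-weight times instanton'' factorization, and then match the resulting rational function to $\mathcal{Z}^{\D6\tbar\D6}_{1\tbar\text{loop}}\,\mathcal{Z}^{\D6\tbar\D6}_{\bar{a}\,|\,\bar{b}}$ by means of the character/index formalism developed in section~\ref{sec:magnificentindex}--\ref{sec:tetrahedronindex}.

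First I would unfold the definition $\Lambda_{\bar{a},\pi}(x)={:\mathsf{W}_{\bar{a}}(x)\prod_{\scube\in\pi}\mathsf{A}(\chi_{\bar{a},x}(\cube))^{-1}:}$ and split the OPE into four families of contractions: (i) $\mathsf{W}_{\bar{b}}(x_{2})\mathsf{W}_{\bar{a}}(x_{1})$, which by construction of the D6 vertex operators produces exactly $\mathcal{Z}^{\D6\tbar\D6}_{1\tbar\text{loop}}(x_{1},\bar{a}\,|\,x_{2},\bar{b})$ as defined in \eqref{eq:D6oneloop}; (ii) $\mathsf{W}_{\bar{b}}(x_{2})\mathsf{A}(\chi_{\bar{a},x_{1}}(\cube))^{-1}$ for each $\cube\in\pi^{(1)}$, giving a factor $\mathscr{V}_{b}(\chi_{\bar{a},x_{1}}(\cube)/x_{2})^{-1}$; (iii) $\mathsf{A}(\chi_{\bar{b},x_{2}}(\cube'))^{-1}\mathsf{W}_{\bar{a}}(x_{1})$ for each $\cube'\in\pi^{(2)}$, giving $\mathscr{V}_{a}(q_{a}^{-1}x_{1}/\chi_{\bar{b},x_{2}}(\cube'))$; (iv) pairwise contractions $\mathsf{A}^{-1}\mathsf{A}^{-1}$, each contributing $\mathcal{A}_{\mathbb{C}^{4}}$ evaluated on a ratio of box-coordinates. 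The order of operators is irrelevant up to the zero-mode conditions imposed in section~\ref{sec:zeromodes}, which guarantee that the two orderings produce the same rational function after analytic continuation.

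Next I would repackage this product of factors using the index functor. The key observation is that the exponent character sourced by $\Lambda_{\bar{a},\pi^{(1)}}(x_{1})$ is, up to unmovable pieces, $\bfK_{\bar{a},\pi^{(1)}}\bfP_{a}^{\vee}/\bfP_{\four}$, and similarly for $\Lambda_{\bar{b},\pi^{(2)}}(x_{2})$. Then Wick-contracting with the commutators \eqref{eq:D0op2}, \eqref{eq:D6op2} produces precisely the bilinear character $-\bfP_{a}^{\vee}\bfN_{\bar{a}}^{\vee}\bfK_{\bar{b},\pi^{(2)}}-\bfP_{b}^{\vee}\bfN_{\bar{b}}^{\vee}\bfK_{\bar{a},\pi^{(1)}}+\bfP_{\four}\bfK^{\vee}_{\bar{a},\pi^{(1)}}\bfK_{\bar{b},\pi^{(2)}}$ appearing in the proposition of section~\ref{sec:tetrahedronindex}. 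Applying $\mathbb{I}[\cdot]$ to this character therefore reproduces $\mathcal{Z}^{\D6\tbar\D6}_{\bar{a}\,|\,\bar{b}}(x_{1},\pi^{(1)}\,|\,x_{2},\pi^{(2)})$. The $\mathsf{W}\mathsf{W}$-contraction supplies the $1$-loop prefactor and the remaining OPE factors combine to give the instanton piece; pulling out the latter and leaving the rest in a normal-ordered product yields the claimed formula.

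The main obstacle I expect is the bookkeeping of movability: when $x_{2}=x_{1}$ or $x_{2}/x_{1}$ coincides with a root of the denominator of $\mathcal{A}_{\mathbb{C}^{4}}$, both the one-loop and the instanton factor can develop $0/0$ ambiguities that must cancel against each other. I would handle this by first working with generic $(x_{1},x_{2})$, applying Prop.~\ref{prop:reflection-mod} together with Thm.~\ref{thm:D8movable} to show that the combined character is movable on the locus of interest, and only then taking the coincident limit. A secondary, essentially notational, difficulty is to verify that the zero-mode factors in \eqref{eq:D0D0zero}, \eqref{eq:D0D6zero} assemble into the same prefactors as on the right-hand side; this is purely mechanical once the free-field realization of the zero modes in \eqref{eq:zeromodes1}--\eqref{eq:zeromodes2} is invoked. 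The case $a=b$ specializes the bifundamental character to the vector-multiplet character $\bfP_{\bar{a}}^{\vee}\bfK^{\vee}\bfK$, as expected.
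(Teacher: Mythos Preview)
Your overall strategy is exactly the routine computation the paper has in mind (the lemma is stated without proof because it is a direct application of the basic OPEs collected in Prop.~\ref{app:prop:contraction_formula}): expand each $\Lambda_{\bar{a},\pi}$ into its constituent vertex operators, contract termwise, and repackage the product of $\mathscr{V}$ and $\mathcal{A}_{\mathbb{C}^{4}}$ factors as the index of the bilinear character in section~\ref{sec:tetrahedronindex}. Two points should be corrected.

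First, the explicit factors you record for contractions (ii) and (iii) are off. From Prop.~\ref{app:prop:contraction_formula} one gets
\[
\mathsf{W}_{\bar{b}}(x_{2})\mathsf{A}(\chi)^{-1}=q_{b}\,\mathscr{V}_{b}\bigl(q_{b}^{-1}\chi/x_{2}\bigr)^{-1}:\cdots:\;=\;\mathscr{V}_{b}(x_{2}/\chi):\cdots:
\]
after using the reflection \eqref{eq:reflec_structfunc}, not $\mathscr{V}_{b}(\chi/x_{2})^{-1}$; and $\mathsf{A}(\chi')^{-1}\mathsf{W}_{\bar{a}}(x_{1})=\mathscr{V}_{a}(x_{1}/\chi'):\cdots:$, without the $q_{a}^{-1}$ shift. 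These are precisely the factors that recombine into $\mathbb{I}[-\bfP_{b}^{\vee}x_{2}^{-1}\bfK_{\bar{a},\pi^{(1)}}]$ and $\mathbb{I}[-\bfP_{a}^{\vee}x_{1}^{-1}\bfK_{\bar{b},\pi^{(2)}}]$, so with the correct formulas the match to $\mathcal{Z}^{\D6\tbar\D6}_{\bar{a}\,|\,\bar{b}}$ becomes a line-by-line identification rather than an appeal to ``the details working out''.

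Second, your anticipated ``main obstacle'' about movability and $0/0$ cancellations is a red herring here. The lemma is a formal OPE identity for generic $x_{1},x_{2}$, and the one-loop prefactor is by definition the $\mathsf{W}_{\bar{b}}\mathsf{W}_{\bar{a}}$ contraction; nothing singular has to cancel between the two pieces. Movability and Prop.~\ref{prop:reflection-mod} only become relevant when one later evaluates residues or compares different square-root choices, as in sections~\ref{sec:magnificentindex}--\ref{sec:tetrahedronindex}; that analysis is not needed to establish the present lemma.
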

\begin{theorem}\label{thm:tetra-origamiBPSCFT}
    The gauge origami partition function of the tetrahedron instanton is written using the D6 $qq$-characters:
    \bea
        \mathcal{Z}_{\text{1-loop}}^{\D6}\mathcal{Z}_{\text{inst.}}^{\D6}=\sum_{\underline{\vec{\pi}}}\mathfrak{q}^{|\underline{\vec{\pi}}|}\mathcal{Z}^{\D6}_{\text{1-loop}}\mathcal{Z}^{\D6}_{\text{tet.inst.}}[\underline{\vec{v}},\underline{\vec{\pi}}]=\bra{0}\prod_{a\in\four}\prod_{\alpha=1}^{n_{\bar{a}}}\mathsf{T}_{\bar{a}}(v_{\bar{a},\alpha})\ket{0}.
    \eea
   \end{theorem}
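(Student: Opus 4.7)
The plan is to mirror the strategy already used in Thm.~\ref{thm:spiked-qq-BPSCFT} for the spiked instanton BPS/CFT correspondence, now in the D6 setting where monomials are indexed by plane partitions instead of Young diagrams. The essential input is Lemma~\ref{lem:D6operatorproduct}, which factorizes every pairwise operator product of the building blocks $\Lambda_{\bar{a},\pi}(x)$ into a 1-loop factor, a bifundamental Nekrasov factor, and a normal-ordered remainder, with respect to the same lexicographic ordering of pairs $(a,\alpha)$ that was used to define the perturbative prefactor in \eqref{eq:D6oneloop} and the bifundamental contributions in \eqref{eq:D6tetinst_partfunct}.

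First I would substitute the plane-partition expansion from Def.~\ref{def:D6_qq-ch},
\begin{equation*}
\mathsf{T}_{\bar{a}}(v_{\bar{a},\alpha}) = \sum_{\pi_{\bar{a}}^{(\alpha)} \in \mathcal{PP}} \mathfrak{q}^{|\pi_{\bar{a}}^{(\alpha)}|}\, \widetilde{\mathcal{Z}}^{\D6}_{\bar{a}}[\pi_{\bar{a}}^{(\alpha)}]\, \Lambda_{\bar{a},\pi_{\bar{a}}^{(\alpha)}}(v_{\bar{a},\alpha}),
\end{equation*}
into the right-hand side, exchange the finite sums over plane partitions with the vacuum expectation value, and extract the diagonal weight $\prod_{a,\alpha} \mathfrak{q}^{|\pi_{\bar{a}}^{(\alpha)}|}\widetilde{\mathcal{Z}}^{\D6}_{\bar{a}}[\pi^{(\alpha)}_{\bar{a}}]$. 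This reduces the problem to evaluating $\bra{0} \prod_{(a,\alpha)} \Lambda_{\bar{a}, \pi^{(\alpha)}_{\bar{a}}}(v_{\bar{a},\alpha}) \ket{0}$, with the operators in the fixed lexicographic order.

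Second, I would apply Lemma~\ref{lem:D6operatorproduct} iteratively, reordering the operators pair-by-pair so that every pair with $(b,\beta)>(a,\alpha)$ contributes exactly one factor $\mathcal{Z}^{\D6\tbar\D6}_{\text{1-loop}}(v_{\bar{a},\alpha},\bar{a}\,|\,v_{\bar{b},\beta},\bar{b})$ and one bifundamental Nekrasov factor $\mathcal{Z}^{\D6\tbar\D6}_{\bar{a}|\bar{b}}(v_{\bar{a},\alpha},\pi^{(\alpha)}_{\bar{a}}\,|\,v_{\bar{b},\beta},\pi^{(\beta)}_{\bar{b}})$. Summing over all such pairs reproduces the full perturbative prefactor $\mathcal{Z}^{\D6}_{\text{1-loop}}$ of \eqref{eq:D6oneloop} as well as the off-diagonal portion of $\mathcal{Z}^{\D6}_{\text{tet.inst.}}[\underline{\vec{v}},\underline{\vec{\pi}}]$ in \eqref{eq:D6tetinst_partfunct}; combined with the diagonal coefficients extracted in the previous step, this assembles into the complete expanded partition function.

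The main obstacle will be the residual normal-ordered matrix element $\bra{0} :\prod_{(a,\alpha)} \Lambda_{\bar{a}, \pi^{(\alpha)}_{\bar{a}}}(v_{\bar{a},\alpha}): \ket{0}$, which must be shown to equal one (or to yield only trivial redefinitions of $\mathfrak{q}$ and Chern--Simons-like prefactors). This is a pure zero-mode computation: from \eqref{eq:zeromodes1}--\eqref{eq:zeromodes2}, $\mathsf{w}_{\bar{a},0}(x)$ contains the factor $e^{\widetilde{\widetilde{\mathsf{w}}}_{\bar{a},0}}$ with $\widetilde{\widetilde{\mathsf{w}}}_{\bar{a},0}\propto\partial_{\mathsf{t}}$, and the root-current zero modes are of the form $e^{\mathsf{t}_0}$, so the derivative pieces annihilate the right vacuum once normal ordering is imposed, while the remaining exponentiated modes, together with the multiplicative power factors $x^{\mathsf{w}_{\bar{a},0}}$ evaluated at the plane-partition coordinates $\chi_{\bar{a},v_{\bar{a},\alpha}}(\cube)$, should resum into harmless overall factors. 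Carefully tracking these zero-modes, rather than any genuinely new algebraic input, is the only remaining technical step; this parallels exactly the role that zero-mode bookkeeping played implicitly in the D4 analogue of Thm.~\ref{thm:spiked-qq-BPSCFT}.
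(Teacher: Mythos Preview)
Your proposal is correct and follows essentially the same approach as the paper: the paper does not spell out a proof of this theorem, but it places Lemma~\ref{lem:D6operatorproduct} immediately before it precisely so that the argument reduces to expanding each $\mathsf{T}_{\bar{a}}$ by Def.~\ref{def:D6_qq-ch} and contracting pairwise, in direct parallel with Lemma~\ref{lemm:D4ope} and Thm.~\ref{thm:spiked-qq-BPSCFT} in the D4 case. Your extra care with the normal-ordered vacuum expectation value and the zero-mode bookkeeping is exactly the residual technicality the paper leaves implicit.
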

 Explicitly, we have the following table of BPS/CFT correspondence.
    \begin{align}
    \renewcommand\arraystretch{1.2}{
    \begin{tabular}{c|c}\toprule
        BPS &  CFT\\
     \hline  7d U(1) theory on $\mathbb{C}^{3}_{abc}\times \mathbb{S}^{1}$ ($\D6_{abc}\times 1$) & $\bra{0}\mathsf{T}_{abc}(v)\ket{0}$ \\
      7d U($n_{ab}$) theory on $\mathbb{C}^{3}_{abc}\times \mathbb{S}^{1}$ ($\D6_{abc}\times n_{abc}$)   & $\bra{0}\mathsf{T}_{abc}(v_{n_{abc}})\cdots\mathsf{T}_{abc}(v_{2})\mathsf{T}_{abc}(v_{1})\ket{0}$\\
      generalized folded instantons: D6$_{123}$-D6$_{234}$-$\D0$ & $\bra{0}\mathsf{T}_{123}(v)\mathsf{T}_{234}(v')\ket{0}$\\
      gauge origami of tetrahedron instantons & $\bra{0}\prod\limits_{a\in\four}\prod\limits_{\alpha=1}^{n_{\bar{a}}}\mathsf{T}_{\bar{a}}(v_{\bar{a},\alpha})\ket{0}$ \\ \toprule
    \end{tabular}}
\end{align}

\subsection{Fusion of D4 \texorpdfstring{$qq$}{qq}-characters to D6 \texorpdfstring{$qq$}{qq}-characters}\label{sec:D4fusiontoD6}
Let us see that the D6 $qq$-characters are obtained by fusion of the D4 $qq$-characters. By studying the zeros and pole structure, one can show the following lemma. 
\begin{lemma}\label{lem:D6planecond}
 When the Young diagrams obey $\lambda^{(2)}\npreceq\lambda^{(1)}$ with the parameters as $x_{2}=q_{a}x_{1}\,(a\in\bar{A})$, we have 
 \beq
     \mathcal{Z}_{A|A}^{\D4\tbar\D4}(x_{1},\lambda^{(1)}\,|\,q_{a}x_{1},\lambda^{(2)})=0,\quad a\in\bar{A}
 \eeq
 which gives
 \beq
     \Lambda_{A,\lambda^{(2)}}(x_{2})\Lambda_{A,\lambda^{(1)}}(x_{1})=0.\label{eq:D4fusionproperty}
 \eeq
 where we used Lemma~\ref{lemm:D4ope}.
\end{lemma}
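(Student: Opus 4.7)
The vanishing of the vertex operator product follows immediately from the vanishing of the scalar OPE factor $\mathcal{Z}^{\D4\tbar\D4}_{A|A}$ via Lemma~\ref{lemm:D4ope}, so the task reduces to proving $\mathcal{Z}_{A|A}^{\D4\tbar\D4}(x_1,\lambda^{(1)}\,|\,q_a x_1,\lambda^{(2)})=0$ when $\lambda^{(2)}\npreceq\lambda^{(1)}$. The plan is a direct zero/pole count inside the explicit rational formula \eqref{eq:D4spikedpartition2}.

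Fix $A=ab$, $\bar A=cd$, and specialise $v_2=q_c v_1$ with $c\in\bar A$; the other element of $\bar A$ is handled by quadrality. Substituting $v_2=q_c v_1$ and simplifying via $q_1q_2q_3q_4=1$ together with the reflection identity $\mathscr{S}_{\bar A}(x)=\mathscr{S}_{\bar A}(q_A x^{-1})$, for generic equivariant parameters the first factor is regular and non-vanishing, the second factor contributes a simple zero precisely when $(1,1)\in\lambda^{(2)}$, and the third factor evaluated at $\chi_{A,v_1}(\Bbox)/\chi_{A,v_2}(\BboxF)=q_c^{-1}q_a^{i_1-i_2}q_b^{j_1-j_2}$ contributes, for each pair $(\Bbox,\BboxF)\in\lambda^{(1)}\times\lambda^{(2)}$, a simple zero when $\BboxF-\Bbox\in\{(1,0),(0,1)\}$, a simple pole when $\BboxF-\Bbox\in\{(0,0),(1,1)\}$, and nothing otherwise. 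Collecting these, the total vanishing order of $\mathcal{Z}_{A|A}^{\D4\tbar\D4}$ equals
\[
 F(\lambda^{(1)},\lambda^{(2)})=\mathbf{1}[(1,1)\in\lambda^{(2)}]+N_{10}+N_{01}-N_{00}-N_{11},\qquad N_{kl}=|\lambda^{(2)}\cap(\lambda^{(1)}+(k,l))|.
\]

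The combinatorial heart of the proof is then to show $F(\lambda^{(1)},\lambda^{(2)})\ge 0$, with equality iff $\lambda^{(2)}\subseteq\lambda^{(1)}$, that is $\lambda^{(2)}\preceq\lambda^{(1)}$. I would argue this by induction on $|\lambda^{(2)}|$, tracking the increment $\Delta F$ on appending an addable box $\BboxF_0=(i_0,j_0)$. Young-diagram convexity of $\lambda^{(1)}$ forces $\Delta F=0$ whenever $\BboxF_0\in\lambda^{(1)}$, and $\Delta F\ge 0$ the first time an added box leaves $\lambda^{(1)}$. A parallel and likely more robust route is to work directly with the character $\mathbf{v}_{A|A}=-\bfP_{\bar A}^\vee(v_1^\vee\bfK_{\lambda^{(2)},v_2}+v_2\bfK_{\lambda^{(1)},v_1}^\vee)+\bfP_{\four}^\vee\bfK_{\lambda^{(1)},v_1}^\vee\bfK_{\lambda^{(2)},v_2}$ and use the boundary identity $v-\bfP_A\bfK_{\lambda,v}=\sum_{\Abox\in A(\lambda)}\chi_{A,v}(\Bbox)-q_A\sum_{\Abox\in R(\lambda)}\chi_{A,v}(\Bbox)$ to expose, at $v_2=q_c v_1$, an unmovable contribution supported on $\lambda^{(2)}\setminus\lambda^{(1)}$, from which $\mathbb{I}[\mathbf{v}_{A|A}]=0$ is immediate via $\llbracket 0\rrbracket=0$.

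The main obstacle is the combinatorial step. Per-box analysis admits local decrements $\Delta F=-1$ at re-entrant corners where $\BboxF_0\notin\lambda^{(1)}$ but $(i_0-1,j_0-1)\in\lambda^{(1)}$ with both cardinal neighbours missing, so the induction must be supplemented by a global argument guaranteeing that these decrements never drive $F$ below zero. I expect the character reformulation to sidestep this bookkeeping cleanly, with the unmovable part of $\mathbf{v}_{A|A}|_{v_2=q_c v_1}$ being manifestly non-negative in a box-by-box fashion.
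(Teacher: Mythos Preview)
The paper does not give a detailed proof of this lemma; it only says ``by studying the zeros and pole structure, one can show the following lemma.'' So you are supplying what the paper omits, and your approach is exactly the one the paper intends. Your zero/pole count is correct, and the vanishing-order formula
\[
F=\mathbf{1}[(1,1)\in\lambda^{(2)}]+N_{10}+N_{01}-N_{00}-N_{11}
\]
is right. Using the boundary identity $v-\bfP_{A}\bfK_{\lambda,v}=\sum_{A(\lambda)}\chi-q_{A}\sum_{R(\lambda)}\chi$ to rewrite the second-difference sum (reading off the $q_a^{i'-1}q_b^{j'-1}$ coefficient), the $(1,1)$ term cancels and one gets
\[
F=\bigl|A(\lambda^{(1)})\cap\lambda^{(2)}\bigr|-\bigl|(R(\lambda^{(1)})+(1,1))\cap\lambda^{(2)}\bigr|.
\]

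The combinatorial step you left open has a clean global argument that avoids the fragile per-box induction. List the addable boxes $\alpha_0,\alpha_1,\ldots,\alpha_s$ and removable boxes $\rho_1,\ldots,\rho_s$ of $\lambda^{(1)}$ by increasing first coordinate. For each $k\geq 1$, the shifted box $\rho_k+(1,1)$ shares its first coordinate with $\alpha_k$ but has strictly larger second coordinate; hence the Young property of $\lambda^{(2)}$ forces $\rho_k+(1,1)\in\lambda^{(2)}\Rightarrow\alpha_k\in\lambda^{(2)}$, giving an injection and $F\geq 0$. For strictness: if $\lambda^{(2)}\not\subseteq\lambda^{(1)}$ then some $\alpha_k\in\lambda^{(2)}$; if $F=0$ the injection is onto, so $k\geq 1$ and $\rho_k+(1,1)\in\lambda^{(2)}$. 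But $\rho_k+(1,1)$ shares its second coordinate with $\alpha_{k-1}$ and has larger first coordinate, so $\alpha_{k-1}\in\lambda^{(2)}$ as well. Descending to $\alpha_0\in\lambda^{(2)}$, which has no preimage, contradicts $F=0$. Finally, if $\lambda^{(2)}\subseteq\lambda^{(1)}$ then both $A(\lambda^{(1)})$ and $R(\lambda^{(1)})+(1,1)$ lie outside $\lambda^{(1)}\supseteq\lambda^{(2)}$, so $F=0$. This interlacing argument sidesteps the re-entrant corners you flagged; your character reformulation would also work but is not needed.
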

Using the above lemma, finite products of the D4 $qq$-characters are given as 
\begin{align}
\begin{split}
     \mathsf{T}_{A}(q_{a}^{N-1}x)\cdots\mathsf{T}_{A}(q_{a}x)\mathsf{T}_{A}(x)&=\sum_{\lambda^{(N)}\preceq\cdots \lambda^{(2)}\preceq\lambda^{(1)}}\mathfrak{q}^{\sum_{i=1}^{N}|\lambda^{(i)}|}\widetilde{\mathcal{Z}}^{\D4}_{A}[\lambda^{(1)}]\cdots\widetilde{\mathcal{Z}}^{\D4}_{A}[\lambda^{(N)}]\\
     &\qquad\times\Lambda_{A,\lambda^{(N)}}(q_{a}^{N-1}x)\cdots\Lambda_{A,\lambda^{(2)}}(q_{a}x)\Lambda_{A,\lambda^{(1)}}(x)
\end{split}
\end{align}
for $\forall a\in\bar{A}$. Using the operator product and extracting the one-loop perturbative factor, we define the renormalized $N$-fusion D4 $qq$-character as 
\beq
    \overbar{\mathsf{T}}^{(N)}_{A;a}(x)=\sum_{\lambda^{(N)}\preceq\cdots \preceq\lambda^{(1)}}\mathfrak{q}^{\sum_{i=1}^{N}|\lambda^{(i)}|}\prod_{i=1}^{N}\widetilde{\mathcal{Z}}^{\D4}_{A}[\lambda^{(i)}]\prod_{i<j}\mathcal{Z}_{A|A}^{\D4\tbar\D4}(q_{a}^{i-1}x,\lambda^{(i)}\,|\,q_{a}^{j-1}x,\lambda^{(j)}):\prod_{i=1}^{N}\Lambda_{A,\lambda^{(i)}}(q_{a}^{i-1}x):
\eeq
Taking the limit $N\rightarrow \infty$ and considering infinite products of the $qq$-characters, we can see that they are related to the D6 $qq$-character.
\begin{theorem}\label{thm:D4toD6fusion}
    Taking the limit $N\rightarrow \infty$ of the renormalized $N$-fusion D4 $qq$-characters $\overbar{\mathsf{T}}^{(N)}_{ab;c}(x)$ give the D6 $qq$-character $\mathsf{T}_{abc}(x)$:
    \beq
        \overbar{\mathsf{T}}^{(N)}_{ab;c}(x)\xrightarrow{N\rightarrow \infty} \mathsf{T}_{abc}(x).
    \eeq
    Equivalently, we have 
    \beq
        \overleftarrow{\prod_{i=1}^{\infty}}\mathsf{T}_{ab}(xq_{c}^{i-1})\simeq \mathsf{T}_{abc}(x)
    \eeq
    where the symbol ``$\simeq$'' means the equality is true up to one-loop perturbative factors.
\end{theorem}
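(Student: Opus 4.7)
\medskip

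The plan is to show the equality in two stages: first demonstrate that the operator part of $\overline{\mathsf{T}}^{(N)}_{ab;c}(x)$ collapses onto the operator part of $\mathsf{T}_{abc}(x)$ in the limit $N\to\infty$, then verify that the coefficients match. The crucial input is Lemma~\ref{lem:D6planecond}, which tells us that the product $\Lambda_{ab,\lambda^{(2)}}(q_c x)\Lambda_{ab,\lambda^{(1)}}(x)$ vanishes unless $\lambda^{(2)}\preceq\lambda^{(1)}$. Iterating this across the $N$-fold product, only the decreasing chains $\lambda^{(N)}\preceq\cdots\preceq\lambda^{(2)}\preceq\lambda^{(1)}$ survive, and such chains are precisely the $(1,2)$-type presentation of plane partitions whose $c$-axis depth is at most $N$ (section~\ref{sec:multi-dim-part}). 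Taking $N\to\infty$ removes the truncation and produces a sum over all plane partitions $\pi\in\mathcal{PP}_{c}$ with asymptote in the $c$-direction, which is exactly the summation range in Def.~\ref{def:D6_qq-ch}.

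For the operator part, I would use relation \eqref{eq:oprelation1}, namely $\mathsf{X}_{ab}(x) = {:\mathsf{W}_{abc}(x)/\mathsf{W}_{abc}(q_c x):}$, so that the telescoping product $\prod_{i=1}^{N}\mathsf{X}_{ab}(q_c^{i-1}x) = {:\mathsf{W}_{abc}(x)/\mathsf{W}_{abc}(q_c^{N}x):}$ becomes $\mathsf{W}_{abc}(x)$ in the limit (up to zero modes that can be absorbed into the regularization, paralleling section~\ref{sec:spikedandscreening}). For the $\mathsf{A}^{-1}$ factors, a plane partition $\pi$ presented as $\pi=(\Lambda^{(1)},\Lambda^{(2)},\ldots)$ with $\Lambda^{(k)}\succeq\Lambda^{(k+1)}$ gives $\chi_{abc,x}(i,j,k)=\chi_{ab,q_{c}^{k-1}x}(i,j)$, so that
\begin{equation*}
\prod_{k}\prod_{(i,j)\in\Lambda^{(k)}}\mathsf{A}^{-1}\bigl(\chi_{ab,q_{c}^{k-1}x}(i,j)\bigr)
= \prod_{\scube\in\pi}\mathsf{A}^{-1}(\chi_{abc,x}(\cube)),
\end{equation*}
matching $\Lambda_{abc,\pi}(x)$ exactly. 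This identification is the content of Lemma~\ref{lemma:D6_iWeyl_ref} read in reverse.

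The coefficient identity is the heart of the argument: one must establish
\begin{equation*}
\prod_{i=1}^{N}\widetilde{\mathcal{Z}}_{ab}^{\D4}[\lambda^{(i)}]\prod_{i<j}\mathcal{Z}_{ab|ab}^{\D4\tbar\D4}\bigl(q_{c}^{i-1}x,\lambda^{(i)}\,|\,q_{c}^{j-1}x,\lambda^{(j)}\bigr)\xrightarrow{N\to\infty}\widetilde{\mathcal{Z}}_{abc}^{\D6}[\pi],
\end{equation*}
where on the left the chain $\lambda^{(N)}\preceq\cdots\preceq\lambda^{(1)}$ is reassembled into $\pi$. I would proceed at the level of characters: the Nekrasov factor $\mathsf{N}_{ab}$ encodes the character $\mathbf{v}^{\D4}$, and the bifundamental factor $\mathcal{Z}^{\D4\tbar\D4}_{ab|ab}$ carries a $\mathcal{A}_{\mathbb{C}^{4}}^{-1}$ structure that, when combined with the vertical shifts $q_c^{i-1}$, assembles into the character $\mathbf{v}^{\D6}_{\bar{c}}$ of section~\ref{sec:tetrahedronindex}. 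Concretely, using the identities $\mathscr{S}_{ab}(x)=\mathscr{V}_{a}(x)/\mathscr{V}_{a}(q_b x)$ and $g_{\bar{c}}(x)=\mathscr{S}_{ab}(x)/\mathscr{S}_{ab}(q_c x)$, one can telescope products of $\mathscr{S}_{ab}$ along the $c$-axis into $g_{\bar{c}}$, converting the collision term $\bfP_{ab}^{\vee}\bfK_{ab}^{\vee}\bfK_{ab}$ for each slice into the three-dimensional collision term $\bfP_{\bar{c}}^{\vee}\bfK_{\bar{c}}^{\vee}\bfK_{\bar{c}}$.

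The main obstacle will be the coefficient identity, not because the ingredients are unfamiliar but because one needs simultaneous control over three types of contributions: the Nekrasov factors within each slice, the cross-slice bifundamental factors, and the one-loop perturbative pieces that are stripped to define $\overline{\mathsf{T}}^{(N)}_{ab;c}$. A clean route is to work at the level of characters from the outset, writing $\bfK_{\bar{c},\pi}=\sum_{k\geq 1}q_{c}^{k-1}\bfK_{ab,\Lambda^{(k)}}$ and verifying that $\mathbf{v}_{\text{inst.}}^{\D6}$ decomposes into the diagonal D4 contributions plus cross-slice D4--D4 bifundamentals after the index reflection identities \eqref{eq:reflectionprop2}; the subtlety is to confirm that no sign factors appear in the limit, which should follow from movability arguments analogous to Thm.~\ref{thm:D8movable} and Cor.~\ref{cor:D8-D6reduce-sign}. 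Finally, the $N\to\infty$ limit itself requires mild care: the perturbative prefactors $\mathcal{Z}_{\text{1-loop}}^{\D4\tbar\D4}$ stripped off in $\overline{\mathsf{T}}^{(N)}_{ab;c}$ must be shown to regularize consistently, which is exactly the reason the theorem is stated modulo one-loop factors ($\simeq$) in its second form.
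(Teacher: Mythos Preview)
Your proposal is correct and follows essentially the same approach as the paper's proof: both use Lemma~\ref{lem:D6planecond} to restrict the sum to non-increasing chains of Young diagrams (plane partitions in the $(1,2)$-type description), then telescope the highest-weight operators via $\mathsf{X}_{ab}(x)={:\mathsf{W}_{abc}(x)/\mathsf{W}_{abc}(q_{c}x):}$ to obtain $\mathsf{W}_{abc}(x)$, and identify the $\mathsf{A}^{-1}$ factors through the coordinate relation $\chi_{abc,x}(i,j,k)=\chi_{ab,q_{c}^{k-1}x}(i,j)$. The paper merely states the coefficient identity $\prod_{i}\widetilde{\mathcal{Z}}^{\D4}_{ab}[\lambda^{(i)}]\prod_{i<j}\mathcal{Z}^{\D4\tbar\D4}_{ab|ab}(\ldots)=\widetilde{\mathcal{Z}}^{\D6}_{\bar{c}}[\pi]$ as part of its sketch (deferring details to an external reference), whereas you outline a character-level telescoping argument for it, which is a reasonable elaboration of the same idea.
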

\begin{proof}
Let us focus on the case $\overbar{\mathsf{T}}^{(N)}_{12;3}(x)$ and $\mathsf{T}_{123}(x)$. Since the infinite products diverge, we need to regularize it properly (see for example \cite{Awata:2018svb}). Moreover, we need to take the inductive limit so that at large $N$, $\overbar{\mathsf{T}}_{12;3}^{(N)}(x)=\overbar{\mathsf{T}}^{(N+1)}_{12;3}(x)$ and we denote this $\overbar{\mathsf{T}}^{(\infty)}_{12;3}(x)$. We only give a sketch of the proof so see \cite{Awata:2018svb} for details. The operator $\overbar{\mathsf{T}}^{(\infty)}_{12;3}(x)$ is expanded as 
\bea
    \overbar{\mathsf{T}}^{(\infty)}_{12;3}(x)=\sum_{\emptyset\preceq\cdots \preceq\lambda^{(2)}\preceq\lambda^{(1)}}\mathfrak{q}^{\sum_{i=1}^{\infty}|\lambda^{(i)}|}\prod_{i=1}^{\infty}\widetilde{\mathcal{Z}}^{\D4}_{12}[\lambda^{(i)}]\prod_{i<j}\mathcal{Z}_{12|12}^{\D4\tbar\D4}(x_{i},\lambda^{(i)}\,|\,x_{j},\lambda^{(j)}):\prod_{i=1}^{\infty}\Lambda_{12,\lambda^{(i)}}(q_{3}^{i-1}x):
\eea
where $x_{i}=q_{3}^{i-1}x$. Since the right-hand side is expanded in all possible Young diagrams $\{\lambda^{(i)}\}$ obeying the condition $\lambda^{(i)}\succeq\lambda^{(i+1)}$, the right-hand side is expanded in all possible plane partitions (see the $(1,2)$ description in section \ref{sec:multi-dim-part}): $\pi=(\lambda^{(1)},\lambda^{(2)},\ldots,\emptyset,\emptyset,\ldots)$. Let us show that the coefficients and the operator part indeed can be written using the plane partition. The coefficient part will be 
\beq
    \prod_{i=1}^{\infty}\widetilde{\mathcal{Z}}^{\D4}_{12}[\lambda^{(i)}]\prod_{i<j}\mathcal{Z}_{12|12}^{\D4\tbar\D4}(x_{i},\lambda^{(i)}\,|\,x_{j},\lambda^{(j)})=\widetilde{\mathcal{Z}}_{\bar{4}}^{\D6}[\pi].
\eeq
The operator part is obtained from
\beq
    {:\prod_{i=1}^{\infty}\Lambda_{12,\lambda^{(i)}}(q_{3}^{i-1}x):}={:\prod_{i=1}^{\infty}\mathsf{X}_{12}(q_{3}^{i-1}x)\prod_{i=1}^{\infty}\prod_{\Abox\in\lambda^{(i)}}\mathsf{A}(\chi_{12,x_{i}}(\Bbox))^{-1}:}={:\mathsf{W}_{\bar{4}}(x)\prod_{\scube\in\pi}\mathsf{A}(\chi_{\bar{4},x}(\cube))^{-1}:}
\eeq
where we used 
\beq
    {:\prod_{i=1}^{\infty}\mathsf{X}_{12}(q_{3}^{i-1}x):}={:\prod_{i=1}^{\infty}\frac{\mathsf{W}_{\bar{4}}(q_{3}^{i-1}x)}{\mathsf{W}_{\bar{4}}(q_{3}^{i}x)}:}=\mathsf{W}_{\bar{4}}(x).
\eeq
We then get the identity $\overbar{\mathsf{T}}^{(\infty)}_{12;3}(x)=\mathsf{T}_{123}(x)$.
\end{proof}

\begin{remark}
The fusion procedure can be visualized as the case of the fusion of D2 to D4. Using the correspondence in \eqref{eq:D4Youngcorrespondence} (see also section \ref{sec:multi-dim-part}), we have 
\bea
\adjustbox{valign=c}{\includegraphics[width=5cm]{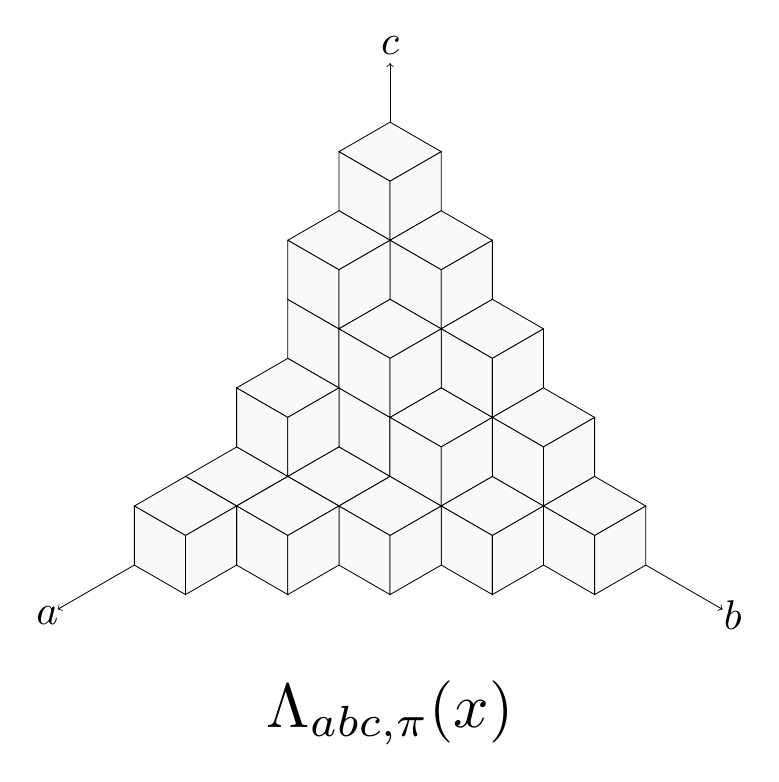}}\quad \longleftrightarrow \quad \adjustbox{valign=c}{\includegraphics[width=7cm]{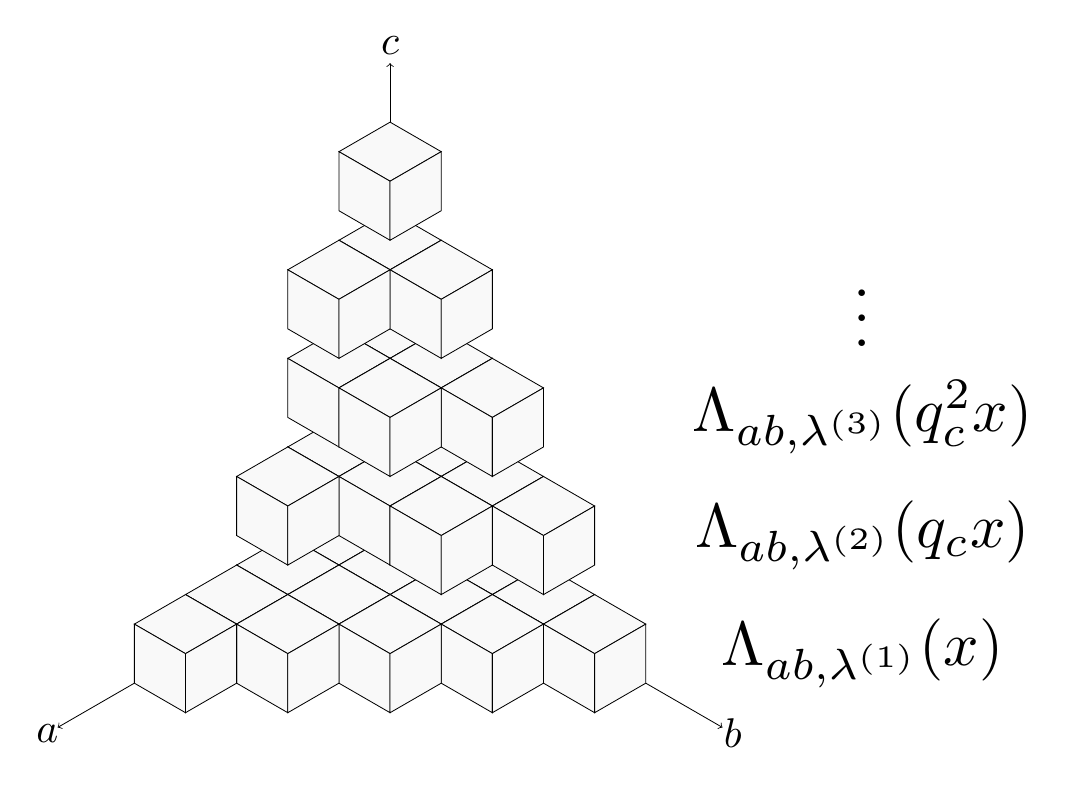}}
\eea

\end{remark}

\subsection{General D6 \texorpdfstring{$qq$}{qq}-characters}\label{sec:generalD6qq}
We can consider higher rank analogs of the D6 $qq$-characters
\beq
    \mathsf{T}^{(n)}_{\bar{a}}(\underline{x})={:\mathsf{W}_{\bar{a}}(x_{1})\cdots \mathsf{W}_{\bar{a}}(x_{n}):}+\cdots.
\eeq
The coefficients appearing in the expansion of the right-hand side correspond to the partition function of the $\U(n)$ gauge theory on the D6-brane on $\mathbb{C}^{3}_{\bar{a}}\times \mathbb{S}^{1}$. 

A further generalization is to add negative weights as the highest weights. General D6 $qq$-characters including negative weights are written as
\beq
    \mathsf{T}^{(n|m)}_{\bar{a}}(\underline{x}\,|\,\underline{y})={:\frac{\mathsf{W}_{\bar{a}}(x_{1})\cdots \mathsf{W}_{\bar{4}}(x_{n})}{\mathsf{W}_{\bar{a}}(y_{1})\cdots \mathsf{W}_{\bar{a}}(y_{m})}:}+\cdots.\label{eq:D6supergroupqqcharacter}
\eeq
The explicit coefficients are then obtained by imposing the commutativity with the screening charge $\mathscr{Q}_{a}(x')$. 

Noticing that 
\beq
    \mathsf{W}_{\bar{a}}(x)^{-1}\mathsf{S}_{a}(x')={x'}^{-1}(1-q_{a}^{-1}x'/x):\mathsf{W}_{\bar{a}}(x)^{-1}\mathsf{S}_{a}(x'):
\eeq
gives no new poles, $\mathsf{W}_{a}(x)^{-1}$ and $\mathsf{S}_{a}(x')$ commute with each other. The right-hand side of \eqref{eq:D6supergroupqqcharacter} is expanded by the plane partitions generated by $\mathsf{W}_{\bar{a}}(x_{i})\,(i=1,\ldots,n)$ and no plane partitions generated by $\mathsf{W}_{\bar{a}}(y_{j})^{-1}\,(j=1,\ldots,m)$. We expect the coefficients appearing on the right-hand side give the instanton partition function of the 7d $\U(n|m)$ theory on $\mathbb{C}^{3}_{\bar{a}}\times \mathbb{S}^{1}$. We leave a detailed analysis of this for future work \cite{Noshita-Nawata}.

For later use, let us consider the case when there is only one positive and one negative weight:
\beq
    \mathsf{T}_{\bar{a}}^{(1|1)}(x\,|\,Kx)={:\frac{\mathsf{W}_{\bar{a}}(x)}{\mathsf{W}_{\bar{a}}(Kx)}:}+\cdots
\eeq
where $K$ is a generic parameter. The contraction $\mathsf{W}_{\bar{4}}(Kx)^{-1}$ with the screening current $\mathsf{S}_{4}(x')$ will give a pole free rational function but when $K$ is generic, no poles will be canceled and the coefficients are only modified:
\begin{align}
\begin{split}
    &\mathsf{S}_{a}(x')\Lambda_{\bar{a},\pi}^{K}(x)=K^{-1}\left[\left(1-\frac{Kq_{a}x}{x'}\right)\mathscr{W}_{\pi,x}^{\bar{a}}(q_{a}^{-1}x')^{-1}\right]_{+}:\mathsf{S}_{a}(x')\Lambda_{\bar{a},\pi}^{K}(x):,\\
    &\mathscr{W}_{\pi,x}^{\bar{a}}(x')\rightarrow \frac{1}{(1-Kx/x')}\mathscr{W}_{\pi,x}^{\bar{a}}(x')\coloneqq \mathscr{W}_{\pi,x}^{\bar{a},K}(x'),\quad \widetilde{\mathcal{Z}}^{\D6}_{\bar{a}}[\pi]\rightarrow \prod_{\scube\in\pi}\frac{1-Kx/\chi_{\bar{a},x}(\cube)}{1-Kq_{a}x/\chi_{\bar{a},x}(\cube)}\widetilde{\mathcal{Z}}^{\D6}_{\bar{a}}[\pi],
\end{split}
\end{align}
where
\bea
    &\Lambda_{\bar{a},\pi}^{K}(x)={:\frac{\mathsf{W}_{\bar{a}}(x)}{\mathsf{W}_{\bar{a}}(Kx)}\prod_{\scube\in\pi}\mathsf{A}^{-1}(\chi_{\bar{a},x}(\cube)):}.
\eea
Thus, we will obtain
\bea
    \mathsf{T}_{\bar{a}}^{(1|1)}(x\,|\,Kx)&=\sum_{\pi\in\mathcal{PP}}\widetilde{\mathcal{Z}}_{\bar{a}}^{\D6}[K,\pi]\Lambda_{\bar{a},\pi}^{K}(x),\\
    \widetilde{\mathcal{Z}}_{\bar{a}}^{\D6}[K,\pi]&=\prod_{\scube\in\pi}\frac{\left(1-Kx/\chi_{\bar{a},x}(\cube)\right)\left(1-q_{a}x/\chi_{\bar{a},x}(\cube)\right)}{\left(1-Kq_{a}x/\chi_{\bar{a},x}(\cube)\right)\left(1-x/\chi_{\bar{a},x}(\cube)\right)}\prod_{\substack{\scube\in\pi\\\scubeF\in\pi}}g_{\bar{a}}\left(\frac{\chi_{\bar{a},x}(\cube)}{\chi_{\bar{a},x},(\cubeF)}\right)^{-1}.
\eea
Note that after rescaling $\mathsf{A}(x)\rightarrow \mathfrak{q}^{-1}\mathsf{A}(x)$ we can change the topological term to $\mathfrak{q}^{|\pi|}$. For later use, let us list some properties of these operators. The operator products of $\{\Lambda_{\bar{a},\pi}^{K}(x)\}$ are 
\begin{align}\label{eq:D6supergroupcontraction1}
\begin{split}
        \Lambda_{\bar{b},\pi^{(2)}}^{K_{2}}(x_{2})\Lambda^{K_{1}}_{\bar{a},\pi^{(1)}}(x_{1})&=\mathcal{Z}^{\D6\tbar\D6}_{\text{1-loop}}(x_{1},\bar{a},K_{1}\,|\,x_{2},\bar{b},K_{2})\mathcal{Z}^{\D6\tbar\D6}_{\bar{a};K_{1}|\bar{b};K_{2}}(x_{1},\pi^{(1)}\,|\,x_{2},\pi^{(2)})\\
        &\qquad\times :\Lambda_{\bar{b},\pi^{(2)}}^{K_{2}}(x_{2})\Lambda^{K_{1}}_{\bar{a},\pi^{(1)}}(x_{1}):
\end{split}
\end{align}
where 
\begin{subequations}\label{eq:D6supergroupcontraction2}
\begin{align}
    \mathcal{Z}^{\D6\tbar\D6}_{\text{1-loop}}(x_{1},\bar{a},K_{1}\,|\,x_{2},\bar{b},K_{2})&=\frac{\mathcal{Z}_{\text{1-loop}}^{\D6\tbar\D6}(x_{1},\bar{a}\,|\,x_{2},\bar{b})\mathcal{Z}_{\text{1-loop}}^{\D6\tbar\D6}(K_{1}x_{1},\bar{a}\,|\,K_{2}x_{2},\bar{b})}{\mathcal{Z}_{\text{1-loop}}^{\D6\tbar\D6}(K_{1}x_{1},\bar{a}\,|\,x_{2},\bar{b})\mathcal{Z}_{\text{1-loop}}^{\D6\tbar\D6}(x_{1},\bar{a}\,|\,K_{2}x_{2},\bar{b})},\\
    \mathcal{Z}^{\D6\tbar\D6}_{\bar{a};K_{1}|\bar{b};K_{2}}(x_{1},\pi^{(1)}\,|\,x_{2},\pi^{(2)})&=\mathcal{Z}_{\bar{a}|\bar{b}}^{\D6\tbar\D6}(x_{1},\pi^{(1)}\,|\,x_{2},\pi^{(2)})\nonumber\\
    &\qquad \times\prod_{\scube\in\pi^{(1)}}\left(q_{b}^{-1}\mathscr{V}_{b}\left(\frac{\chi_{\bar{a},x_{1}}(\cube)}{q_{b}K_{2}x_{2}}\right)\right)\prod_{\scubeF\in\pi^{(2)}}\mathscr{V}_{a}\left(\frac{K_{1}x_{1}}{\chi_{\bar{b},x_{2}}(\cubeF)}\right)^{-1}.
\end{align}
\end{subequations}

\paragraph{Pit reduction of D6 $qq$-characters}
When the parameter $K$ is generic, the coefficients are modified slightly without changing the structure of the $qq$-character. However, when we tune $K$ to specific values, the zeros appearing will cancel the poles, the iWeyl reflection will be restricted, and the right-hand side will not be expanded with arbitrary plane partitions but only by specified plane partitions. This can be understood also from the extra factors in the coefficients. When $K$ is tuned, the coefficient $\mathcal{Z}_{\bar{a}}^{\D6}[K,\pi]$ will be zero for some plane partition configurations and such terms will disappear from the $qq$-character. This procedure is well known in the literature of MacMahon representations and is called the \emph{pit reduction} \cite{Feigin2011plane,bershtein2018plane}. A plane partition with a pit~P is a plane partition that does not contain a box at the position~P.

Let us focus on the case when $a=4$. For example, when we tune the parameter $K$ as $K=q_{3}x$, we have 
\begin{align}
    {:\frac{\mathsf{W}_{\bar{4}}(x)}{\mathsf{W}_{\bar{4}}(q_{3}x)}:}=\mathsf{X}_{12}(x)
\end{align}
and the D6 $qq$-character will reduce to the D4 $qq$-character. This process is just placing a pit in $q_{3}x$ and reducing the plane partition in $(123)$ and restricting it to a Young diagram in the $(12)$-plane:
\begin{align}
    \adjustbox{valign=c}{\includegraphics[width=5cm]{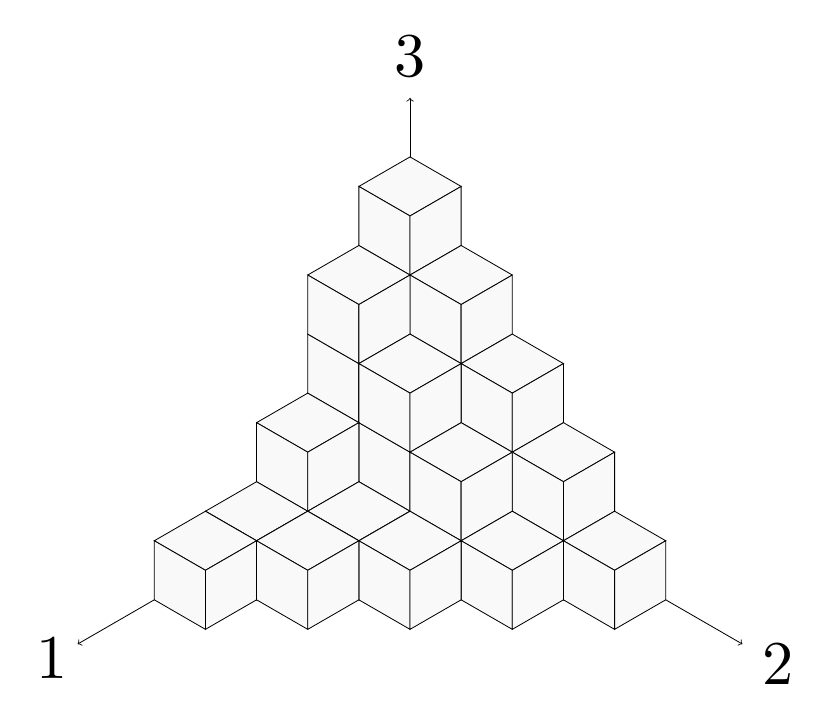}} \quad \Longrightarrow \quad \adjustbox{valign=c}{\includegraphics[width=5cm]{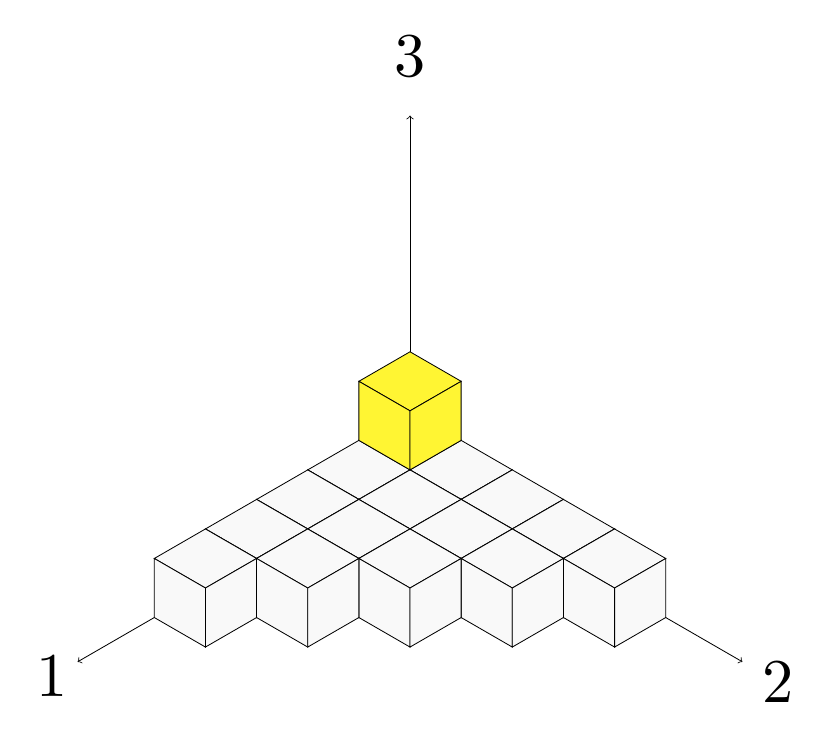}} 
\end{align}
Physically, this is interpreted as the Higgs mechanism and also the tachyon condensation as discussed in section~\ref{sec:tetraLMNS}.

When $K=\chi_{\bar{4},x}(\pitcube)/x=q_{1}^{L-1}q_{2}^{M-1}q_{3}^{N-1}$, we will get a pit reduction of the plane partition:
\begin{align}
\begin{split}
    \mathsf{T}_{\bar{4}}^{\pitcube}(x)&={:\frac{\mathsf{W}_{\bar{4}}(x)}{\mathsf{W}_{\bar{4}}(\chi_{\bar{4},x}(\hspace{-0.5mm}\scalebox{1.2}{\pitcube}))}:}+\cdots=\sum_{\substack{\pi:\text{plane partitions}\\\text{with a pit at $\pitcube$} }}\cdots
\end{split}
\end{align}
The highest weight has the following structure
\begin{align}
\begin{split}
    {:\frac{\mathsf{W}_{\bar{4}}(x)}{\mathsf{W}_{\bar{4}}(q_{1}^{L-1}q_{2}^{M-1}q_{3}^{N-1}x)}:}&={:\prod_{i=1}^{L}\frac{\mathsf{W}_{\bar{4}}(q_{1}^{i-1}q_{2}^{M-1}q_{3}^{N-1}x)}{\mathsf{W}_{\bar{4}}(q_{1}^{i}q_{2}^{M-1}q_{3}^{N-1}x)}\prod_{j=1}^{M}\frac{\mathsf{W}_{\bar{4}}(q_{2}^{j-1}q_{3}^{N-1}x)}{\mathsf{W}_{\bar{4}}(q_{2}^{j}q_{3}^{N-1}x)}\prod_{k=1}^{N}\frac{\mathsf{W}_{\bar{4}}(xq_{3}^{k-1})}{\mathsf{W}_{\bar{4}}(xq_{3}^{k})}:}\\
    &={:\prod_{i=1}^{L}\mathsf{X}_{23}(q_{1}^{i-1}q_{2}^{M-1}q_{3}^{N-1}x)\prod_{j=1}^{M}\mathsf{X}_{13}(q_{2}^{j-1}q_{3}^{N-1}x)\prod_{k=1}^{N}\mathsf{X}_{12}(xq_{3}^{k-1}):}
\end{split}
\end{align}
which implies the pit reduction of the plane partition is related to higher rank generalizations of D4 $qq$-characters. This fact is nothing special from the plane partition viewpoint. This is because we can pile $L$ Young diagrams spanning the $(23)$-plane, $M$ Young diagrams spanning the $(13)$-plane, and $N$ Young diagrams spanning the $(12)$-plane on top of each other to obtain all possible plane partitions with a pit at $(L,M,N)$. Note also that this decomposition in Young diagrams is not unique and so we have multiple descriptions in the D4 $qq$-characters. Physically, the system corresponding to the D6 $qq$-character with a pit-reduced plane partition is just the gauge origami system with folded instantons where the Coulomb branch parameters are tuned in a specific way. 


Generally, we may add another pit to the plane partition and this is called the double-constrained plane partition which was introduced in \cite{Harada:2018bkb} (see also the references there) to discuss minimal models of W-algebras. Let $(L_{1},M_{1},N_{1})$ and $(L_{2},M_{2},N_{2})$ be the coordinates of the two pits. The parameter $K$ needs to obey the conditions of the two pits
\begin{align}
    K=q_{1}^{L_{1}-1}q_{2}^{M_{1}-1}q_{3}^{N_{1}-1}=q_{1}^{L_{2}-1}q_{2}^{M_{2}-1}q_{3}^{N_{2}-1}.
\end{align}
Imposing this condition causes the $q$-parameters to be not generic anymore. The physical meaning of these types of $qq$-characters and their relation with minimal models are still unclear for the moment. We note that the condition above is just the Burge condition~\cite{Belavin:2015ria,Alkalaev:2014sma}, and thus the BPS/CFT correspondence arising should be an analog of the AGT dual of minimal models of W-algebras. See also \cite{Kimura:2022spi} where some examples of these truncations were studied from the $qq$-character viewpoint.

\section{D8 \texorpdfstring{$qq$}{qq}-characters}\label{sec:D8qq}
In this section, we derive the D8 $qq$-characters by infinite products of the D6 $qq$-characters in section~\ref{sec:fusionD6toD8}. In section~\ref{sec:D8qqsignrule}, we then study the commutativity of the D6 and D8 $qq$-characters and show that they commute with each other. We also discuss the physical interpretation of the commutativity. At the end the $qq$-character constructed here reproduce the magnificent four partition function which establishes the BPS/CFT correspondence.

\subsection{Fusion of D6 \texorpdfstring{$qq$}{qq}-characters to D8 \texorpdfstring{$qq$}{qq}-characters}\label{sec:fusionD6toD8}
After constructing $\D2,\D4,\D6$ $qq$-characters, we would like to obtain D8 $qq$-characters that reproduce the magnificent four partition function in \eqref{eq:mag4Nekrasovfact}. In the lower-dimensional cases, thanks to Thm.~\ref{thm:tetrascreening}, we can define the $qq$-characters by choosing the highest weight and imposing the commutativity with the screening charges. The highest weight and the corresponding screening charges were chosen so that the associated subspaces do not intersect in the $\mathbb{C}^{4}$ geometry. However, for the D8 operators $\mathsf{Z}(x),\widetilde{\mathsf{Z}}^{K}(x)$, the only operator that makes the operator products become rational functions is the root current $\mathsf{A}(x)$. In this sense, it is natural to construct a screening charge related to the D0 operator. For the moment, we do not know how to construct such kind of screening currents. Instead, we will use the fusion process discussed in Thm.~\ref{thm:D2toD4fusion}, \ref{thm:D4toD6fusion} to define the D8 $qq$-characters. Since we are interested in studying the relation between the magnificent four system, where D8 and anti D8-branes appear, we use $\widetilde{\mathsf{Z}}^{K}(x)$ as the highest weight.

Let us give the definition of the D8 $qq$-character first.
\begin{definition}[D8 $qq$-character]
The D8 $qq$-character is defined as
\bea
\mathsf{T}^{K}_{\four}(x)=\sum_{\rho\in\mathcal{SP}}\mathfrak{q}^{|\rho|}(-1)^{\sigma_{a}(\rho)}\mathcal{Z}^{\D8}_{\four;a}[\rho,K]\Lambda^{K}_{\four,\rho}(x),\quad a\in\four,
\eea
where $\mathcal{SP}$ denotes the set of arbitrary solid partitions extending in the four directions $1,2,3,4$. The vertex operator part is given as
\bea
\Lambda^{K}_{\four,\rho}(x)={:\mathsf{Z}(K,x)\prod_{\shcube\in\rho}\mathsf{A}^{-1}(\chi_{\four,x}(\hcube)):},
\eea
and the coefficients are the $\U(1)$ partition functions of the magnificent four system obtained. The explicit formula for the sign factor $\sigma_{a}(\rho)$ when $a=4$ is given in \eqref{eq:signfactor-def} and other formulas are obtained by using the quadrality symmetry.
\end{definition}
We note that actually the total coefficient $(-1)^{\sigma_{a}(\rho)}\mathcal{Z}^{\D8}_{\four;a}[\rho,K]$ does not depend on the choice of $a\in\four$ (see Thm.~\ref{thm:D8signindep}) and thus the above D8 $qq$-character is a unique $qq$-character.

The main claim of this section is the following theorem.
\begin{theorem}\label{thm:D6toD8}
The D8 $qq$-character is obtained as
\bea
\mathsf{T}^{K}_{\four}(x)\simeq \overleftarrow{\prod_{i=1}^{\infty}}\mathsf{T}_{\bar{a}}(xq_{a}^{i-1}\mid xq_{a}^{i-1}K),
\eea
where the equality is up to one--loop perturbative factors.
\end{theorem}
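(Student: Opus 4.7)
The approach closely mirrors the proof of Theorem~\ref{thm:D4toD6fusion}: we treat the infinite product $\overleftarrow{\prod_{i=1}^\infty}\mathsf{T}_{\bar{a}}(xq_a^{i-1}\mid xq_a^{i-1}K)$ via the inductive‐limit regularization (truncate to $N$ factors, strip off the one‐loop perturbative prefactor from contractions of the generating currents, and pass to $N\to\infty$), and exploit a vanishing property to reduce the resulting sum to one over solid partitions. By quadrality, it suffices to treat $a=4$. Expanding each $\mathsf{T}_{\bar{4}}(xq_4^{i-1}\mid xq_4^{i-1}K)$ as in Section~\ref{sec:generalD6qq} and applying the contraction formulas \eqref{eq:D6supergroupcontraction1}--\eqref{eq:D6supergroupcontraction2} produces a sum indexed by sequences of plane partitions $(\pi^{(l)})_{l\geq 1}$.

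The first key step is the vanishing analogue of Lemma~\ref{lem:D6planecond}: when $\pi^{(l+1)}\npreceq \pi^{(l)}$, the factor $\mathcal{Z}^{\D6\tbar\D6}_{\bar{4};K\mid\bar{4};K}(xq_4^{l-1},\pi^{(l)}\mid xq_4^{l},\pi^{(l+1)})$ vanishes, since the combination of $\mathscr{V}_4$-pieces from \eqref{eq:D6supergroupcontraction2} with the $\mathcal{Z}^{\D6\tbar\D6}_{\bar{4}\mid\bar{4}}$ piece forces a cancellation whenever $\pi^{(l+1)}$ contains a cube not in $\pi^{(l)}$. Consequently the surviving sequences satisfy $\pi^{(1)}\succeq \pi^{(2)}\succeq\cdots$, which by the $(1,3)$-type description of Section~\ref{sec:multi-dim-part} are in bijection with solid partitions $\rho=(\pi^{(1)},\pi^{(2)},\ldots)\in\mathcal{SP}$, with $|\rho|=\sum_{l}|\pi^{(l)}|$. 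The operator part telescopes cleanly: the relation $\mathsf{W}_{\bar{4}}(x)=\,:\mathsf{Z}(x)/\mathsf{Z}(q_4x):$ from \eqref{eq:oprelation1} gives
\begin{equation*}
:\prod_{i=1}^{\infty}\frac{\mathsf{W}_{\bar{4}}(xq_4^{i-1})}{\mathsf{W}_{\bar{4}}(Kxq_4^{i-1})}:\;=\;:\frac{\mathsf{Z}(x)}{\mathsf{Z}(Kx)}:\;=\;\mathsf{Z}(K,x),
\end{equation*}
and the $\mathsf{A}^{-1}$-factors at positions $\chi_{\bar{4},xq_4^{l-1}}(\cube)$ for $\cube=(i,j,k)\in\pi^{(l)}$ collect into $\prod_{\shcube\in\rho}\mathsf{A}^{-1}(\chi_{\four,x}(\hcube))$ under the identification $\hcube=(i,j,k,l)$, so that $:\prod_i\Lambda^K_{\bar{4},\pi^{(i)}}(xq_4^{i-1}):\,=\,\Lambda^K_{\four,\rho}(x)$.

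The remaining and central step is matching the numerical coefficients, i.e.\ verifying
\begin{equation*}
\prod_{l}\widetilde{\mathcal{Z}}^{\D6}_{\bar{4}}[K,\pi^{(l)}]\prod_{l<m}\mathcal{Z}^{\D6\tbar\D6}_{\bar{4};K\mid\bar{4};K}\!\bigl(xq_4^{l-1},\pi^{(l)}\,\big|\,xq_4^{m-1},\pi^{(m)}\bigr)\;\simeq\;(-1)^{\sigma_4(\rho)}\,\mathcal{Z}^{\D8}_{\four;4}[\rho,K]
\end{equation*}
up to the one‐loop perturbative prefactor. The natural route is to rewrite both sides in the index formalism: using $\bfK|_\rho=\sum_{l}q_4^{l-1}\bfK_{\pi^{(l)}}$ together with $\bfP_{\four}=\bfP_4^\vee\bfP_{\bar{4}}$, the movable parts of the character $\sum_l \mathbf{v}^{\D6}_{\bar{4}}[K,\pi^{(l)}]$ and its off-diagonal pairings match the movable part of $\mathbf{v}^{\D8}_{\text{inst.}}[\rho,K]$.

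The main obstacle is the sign $(-1)^{\sigma_4(\rho)}$. The D6 coefficients are manifestly sign-free (the $\bfP_{\bar{4}}^\vee$ square root is canonical in three dimensions), whereas the D8 side requires both a square‐root choice of $\bfP_\four\bfK^\vee\bfK$ and the combinatorial sign \eqref{eq:signfactor-def}. The unmovable monomials produced in the reflection step correspond precisely to hypercubes $(x_1,x_2,x_3,x_4)\in\rho$ with $x_1=x_2=x_3<x_4$, each contributing $-1$ by Proposition~\ref{prop:reflection_sign}. The matching can be proved by induction on $|\rho|$, paralleling the telescoping mechanism in Theorem~\ref{thm:D8signindep}: at each inductive step one adds a single hypercube, and the parity of the resulting unmovable constant term is tracked exactly by $\sigma_4$. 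The base case $K=q_a$ collapses to Corollary~\ref{cor:D8-D6reduce-sign}, which already exhibits the relation between the D8 coefficient with sign and the corresponding D6 coefficient. The PE identity~\eqref{eq:PEidentityproof} controls the double product $\prod_{l<m}$ in the inductive limit and ensures that the regularized one‐loop prefactor is the one appearing in $\mathcal{Z}^{\D8}_{\text{1-loop}}$.
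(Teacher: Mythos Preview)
Your outline through the operator telescoping is exactly the paper's: the vanishing when $\pi^{(l+1)}\npreceq\pi^{(l)}$, the $(1,3)$-type bijection with $\mathcal{SP}$, and the collapse of $:\prod_i\Lambda^K_{\bar{4},\pi^{(i)}}(xq_4^{i-1}):$ to $\Lambda^K_{\four,\rho}(x)$ all appear as lemmas there. Where you diverge, and where the proposal has real gaps, is the coefficient-and-sign matching.

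The paper does not argue by induction on $|\rho|$. It writes the D6-side character explicitly, checks by a direct power-count that dualizing $-\bfP_4(1-K)x_j\bm{\Pi}^{(i)\vee}$ costs no sign, and regularizes $\sum_{i\geq1}q_4^{-(i-1)}=1/\bfP_4^\vee$ to collapse the framing part to $-(1-K^{-1})\bfK_\rho$. The off-diagonal term $\sum_{i<j}\bfP_\four\bm{\Pi}^{(i)\vee}\bm{\Pi}^{(j)}$ is split via $\bfP_\four=\bfP_{123}^\vee+\bfP_{123}$; the $\bfP_{123}^\vee$-half joins the diagonals to produce $\bfP_{123}^\vee\bfK_\rho^\vee\bfK_\rho$, while reflecting the $\bfP_{123}$-half costs $(-1)^{s(\Pi)}$ with $s(\Pi)=\bigl[\sum_{i<j}\bfP_{123}\bm{\Pi}^{(i)\vee}\bm{\Pi}^{(j)}\bigr]^{(0)}$. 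The equality $s(\Pi)=\sigma_4(\rho)\bmod 2$ is then a direct combinatorial count (Prop.~\ref{prop:sign_proof}): for each $\eta=(A,B,C)\in\Pi^{(j)}$ one splits $\bm{\Pi}^{(i)}$ into the rectangular block below $\eta$ and a remainder, shows the remainder is movable, and finds the block contributes an unmovable term precisely when $A=B=C=j-i$. An inductive argument in the style of Theorems~\ref{thm:D8movable} and~\ref{thm:D8signindep} might also be made to work, but two of your supporting remarks are wrong: ``the base case $K=q_a$'' is a parameter specialization, not a $|\rho|=1$ starting point for induction on the size of $\rho$; and the PE identity~\eqref{eq:PEidentityproof} is irrelevant here --- it relates different-rank magnificent-four partition functions and plays no role in regularizing this operator product.
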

Let us focus on $a=4$ and give a proof of this theorem step by step. 
\begin{lemma}
    The contraction of the operators $\Lambda^{K}_{\bar{a},\Pi}(x)$ are
    \bea
    \Lambda^{K_{2}}_{\bar{b},\Pi^{(2)}}(x_{2})\Lambda^{K_{1}}_{\bar{a},\Pi^{(1)}}(x_{1})&=\mathcal{Z}^{\D6\tbar\D6}_{\text{1-loop}}(x_{1},\bar{a},K_{1}\mid x_{2},\bar{b},K_{2})\mathcal{Z}^{\D6\tbar\D6}_{\bar{a};K_{1}\mid \bar{b};K_{2}}(x_{1},\Pi^{(1)}\mid x_{2},\Pi^{(2)})\\
    &\quad \times :\Lambda^{K_{2}}_{\bar{b},\Pi^{(2)}}(x_{2})\Lambda^{K_{1}}_{\bar{a},\Pi^{(1)}}(x_{1}):
    \eea
    where
    \bea
    \mathcal{Z}^{\D6\tbar\D6}_{\text{1-loop}}(x_{1},\bar{a},K_{1}\mid x_{2},\bar{b},K_{2})&=\exp\left(-\sum_{n>0}\frac{1}{n}\frac{\bfP_{a}^{[n]}\bfP_{b}^{[-n]}}{\bfP_{\four}^{[n]}}(1-K_{1}^{n})(1-K_{2}^{-n})\left(\frac{x_{1}}{x_{2}}\right)^{n}\right),\\
    \mathcal{Z}^{\D6\tbar\D6}_{\bar{a};K_{1}\mid \bar{b};K_{2}}(x_{1},\Pi^{(1)}\mid x_{2},\Pi^{(2)})&=\mathbb{I}\left[-\bfP_{a}^{\vee}(1-K_{1}^{-1})x_{1}^{-1}\bm{\Pi}^{(2)}-\bfP_{4}(1-K_{2})x_{2}\bm{\Pi}^{(i)\vee}+\bfP_{\four}\bm{\Pi}^{(i)\vee}\bm{\Pi}^{(j)}\right].
    \eea
    We introduced $\bm{\Pi}^{(1,2)}=\sum_{\scube\in\Pi^{(i)}}\chi_{\bar{a},\bar{b}}(\cube)$ for plane partitions $\Pi^{(1,2)}$ and for the moment $x_{1,2}$ are generic here.
\end{lemma}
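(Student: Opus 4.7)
The plan is to reduce the statement to a direct free-boson Wick computation, using the fact that after the rescaling \eqref{eq:D6op2} every vertex operator in this paper is a normal-ordered exponential of the single set of modes $\{\mathsf{a}_n\}$. First I would rewrite $\Lambda^{K_i}_{\bar{c_i},\Pi^{(i)}}(x_i) = (\text{zero modes})\cdot :\exp\bigl(\sum_{n\neq 0}\mathbf{V}_i^{[n]}\mathsf{a}_n\bigr):$ with formal mode-generating character $\mathbf{V}_i = (1-K_i^{-1})\,x_i^{-1}/\bfP_{\bar{c_i}}^{\vee} - \bm{\Pi}^{(i)\vee}$, where $c_1=a$, $c_2=b$, $\mathbf{V}_i^{[n]}$ denotes the $n$-th Adams operation reading off the coefficient of $\mathsf{a}_n$, and the $1/\bfP_{\bar{c_i}}^{\vee}$ factor is understood as a formal Laurent series in the $q_c$'s.

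Next I would carry out the Wick contraction for $\Lambda^{K_2}_{\bar{b},\Pi^{(2)}}(x_2)\Lambda^{K_1}_{\bar{a},\Pi^{(1)}}(x_1)$ by commuting the $n>0$ modes of the left operator past the $n<0$ modes of the right one. Using $[\mathsf{a}_n,\mathsf{a}_{-n}] = -\tfrac{1}{n}\bfP_{\four}^{[n]}$ from \eqref{eq:D0op2}, the resulting scalar prefactor equals $\exp\bigl(-\sum_{n>0}\tfrac{1}{n}\bfP_{\four}^{[n]}\mathbf{V}_2^{[n]}\mathbf{V}_1^{[-n]}\bigr)$, which after using $\bfP_{\four}^{\vee}=\bfP_{\four}$ and the index convention of section~\ref{sec:residue-formula} equals $\mathbb{I}\bigl[\bfP_{\four}\mathbf{V}_2^{\vee}\mathbf{V}_1\bigr]$.

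Expanding this character yields four pieces. The pure D6 cross piece is $\bfP_{\four}(1-K_1^{-1})(1-K_2)(x_2/x_1)/(\bfP_{\bar{a}}^{\vee}\bfP_{\bar{b}})$; applying the key algebraic identity $\bfP_{\four}/(\bfP_{\bar{a}}^{\vee}\bfP_{\bar{b}}) = \bfP_a^{\vee}\bfP_b/\bfP_{\four}$ --- which follows from $\bfP_{\four}=\bfP_c\bfP_{\bar{c}}$ for any $c\in\four$ together with $\bfP_{\four}^{\vee}=\bfP_{\four}$ --- and taking the index reproduces $\mathcal{Z}^{\D6\tbar\D6}_{\text{1-loop}}$. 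The three remaining pieces, simplified by the same identity in the specialisations $\bfP_{\four}/\bfP_{\bar{a}}^{\vee} = \bfP_a^{\vee}$ and $\bfP_{\four}/\bfP_{\bar{b}} = \bfP_b$, assemble into exactly the character inside $\mathcal{Z}^{\D6\tbar\D6}_{\bar{a};K_1\mid \bar{b};K_2}$.

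The main obstacle is the bookkeeping of zero modes. I would handle this by feeding in the explicit zero-mode realisations from section~\ref{sec:zeromodes}; the relevant contractions produce at most multiplicative factors of powers of $K_i$ and monomials in $x_1/x_2$, which can be absorbed either into the overall normal-ordered prefactor on the right-hand side or into the normalisation of $\mathcal{Z}^{\D6\tbar\D6}_{\text{1-loop}}$ via \eqref{eq:D0D6zero}. Convergence is the standard VOA one: the equality is an identity of formal power series in $x_1/x_2$, valid in the region $|x_1/x_2|\ll 1$ and extended by analytic continuation to generic $x_1,x_2$ as the lemma explicitly assumes.
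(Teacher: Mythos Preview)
Your approach is correct and is exactly the standard free-boson Wick computation the paper has in mind; the lemma is stated without proof because it is the same calculation already carried out for $\Lambda_{A,\lambda}$ in Lemma~\ref{lemm:D4ope} and for $\Lambda_{\bar a,\pi}$ in Lemma~\ref{lem:D6operatorproduct} (and which is packaged in the contraction tables of Appendix~\ref{app-chap:Partition-formulas}). Your identification of the mode character $\mathbf V_i$ and the algebraic reductions $\bfP_{\four}/\bfP_{\bar a}^{\vee}=\bfP_a^{\vee}$, $\bfP_{\four}/\bfP_{\bar b}=\bfP_b$, $\bfP_{\four}/(\bfP_{\bar a}^{\vee}\bfP_{\bar b})=\bfP_a^{\vee}\bfP_b/\bfP_{\four}$ are exactly right, and your zero-mode caveat is handled by the conventions fixed in section~\ref{sec:zeromodes}. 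One incidental remark: your computation yields $-\bfP_b(1-K_2)x_2\bm\Pi^{(1)\vee}$ for the second piece, which shows that the $\bfP_4$ and the indices $(i),(j)$ in the displayed formula are typos for $\bfP_b$ and $(1),(2)$ in the general-$a,b$ statement; this is consistent with the specialisation $a=b=4$ used immediately afterwards.
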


\begin{lemma}
Given two plane partitions $\Pi^{(1)},\Pi^{(2)}$ and parameters $x_{2}=q_{4}x_{1}$, we have
\bea
\mathcal{Z}^{\D6\tbar\D6}_{\bar{4};K\mid \bar{4};K}(x_{1},\Pi^{(1)}\mid x_{2},\Pi^{(2)})=0
\eea
for $\Pi^{(2)}\npreceq \Pi^{(1)}$.
\end{lemma}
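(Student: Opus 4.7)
The plan is to prove this as the direct plane-partition analog of Lemma~\ref{lem:D6planecond} (the D4/Young diagram case), and it is exactly the vanishing needed to establish Thm.~\ref{thm:D6toD8}. The underlying principle is combinatorial: solid partitions admit the $(1,3)$-type description as sequences of nested plane partitions $\Pi^{(l)}\succeq\Pi^{(l+1)}$ (section~\ref{sec:multi-dim-part}), so for the infinite fusion product $\overleftarrow{\prod_{l\geq 1}}\mathsf{T}_{\bar{4}}(xq_4^{l-1}\mid Kxq_4^{l-1})$ to collapse to a sum over solid partitions, every contribution from a non-nested pair must vanish.

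The first step is to write $\mathcal{Z}^{\D6\tbar\D6}_{\bar{4};K|\bar{4};K}$ as $\mathbb{I}[\mathbf{v}]$ with
\begin{equation*}
\mathbf{v} = -\bfP_{4}^{\vee}(1-K^{-1})x_1^{-1}\bm{\Pi}^{(2)} - \bfP_{4}(1-K)x_2\bm{\Pi}^{(1)\vee} + \bfP_{\four}\bm{\Pi}^{(1)\vee}\bm{\Pi}^{(2)},
\end{equation*}
specialize to $x_2 = q_4 x_1$, and track unmovable monomials of the form $\pm 1$ in $\mathbf{v}$. Since the trigonometric index satisfies $\mathbb{I}[1] = 0$, a single surviving unmovable $+1$ forces $\mathbb{I}[\mathbf{v}] = 0$. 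This is the same philosophy as the movability analysis for the magnificent four in Thm.~\ref{thm:D8movable}, now applied to a two-partition cross term instead of a single partition.

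The combinatorial core is as follows. Since $\Pi^{(2)} \not\preceq \Pi^{(1)}$, I would pick $\cubeF = (i_0,j_0,k_0) \in \Pi^{(2)} \setminus \Pi^{(1)}$ minimal in the ordering \eqref{eq:ordering-partition}. Minimality of $\cubeF$ forces every lattice predecessor $\cubeF - e_a$ with $a \in \{1,2,3\}$ having positive coordinates to lie in $\Pi^{(1)}$, because such a box is already in $\Pi^{(2)}$ by the plane-partition condition and strictly precedes $\cubeF$. Expanding $\bfP_{\four} = \prod_{a\in\four}(1-q_a)$ and using $q_1 q_2 q_3 q_4 = 1$, pairs $(\cube, \cubeF') \in \Pi^{(1)}\times\Pi^{(2)}$ whose relative displacement $\cubeF' - \cube$ lies in $\{0,1\}^3 \cup \{1,2\}^3$ (measured in the $q_1,q_2,q_3$ coordinates) produce unit monomials in $\bfP_{\four}\bm{\Pi}^{(1)\vee}\bm{\Pi}^{(2)}$ at $x_2 = q_4 x_1$, with signs I can tabulate from the choice of subset of $\bfP_{\four}$-factors. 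The two linear terms of $\mathbf{v}$ contribute analogous units indexed by single boxes. The claim is that the minimality obstruction on $\cubeF$ produces a net unmovable $+1$ that cannot be cancelled.

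The main obstacle will be the precise bookkeeping of signs: verifying the candidate $+1$ from $\cubeF$ survives cancellation against the many other unit monomials from unrelated box pairs, and handling the boundary cases where $\cubeF$ lies on a coordinate plane so that some predecessors are absent. A cleaner fallback is an inductive reduction to the D4 case: slice $\Pi^{(1)},\Pi^{(2)}$ along the $q_3$-axis via the $(1,2)$-type description into sequences of Young diagrams in the $(12)$-plane, use the pit-reduction identity $\mathsf{W}_{\bar{4}}(x)/\mathsf{W}_{\bar{4}}(q_3 x) = \mathsf{X}_{12}(x)$ from \eqref{eq:oprelation1} to factor the D6-D6 Nekrasov factor as an ordered product of D4-D4 factors at spectral parameters $xq_3^{l-1}$, and apply Lemma~\ref{lem:D6planecond} to whichever level $l$ witnesses the non-nesting of Young-diagram slices. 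The fusion parameter $q_4$ combines compatibly with the $q_3$-slicing because $A = 12 \in \six$ satisfies $4 \in \bar{A}$, which is exactly the hypothesis of Lemma~\ref{lem:D6planecond} for the shifted spectral parameter $x_2 = q_4 x_1$.
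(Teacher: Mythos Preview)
The paper does not give a proof of this lemma; it is stated as the direct plane-partition analog of Lemma~\ref{lem:D6planecond}, which itself is only accompanied by the remark ``by studying the zeros and pole structure, one can show the following lemma.'' Your first approach --- tracking the unmovable $+1$ produced by a minimal box $\cubeF\in\Pi^{(2)}\setminus\Pi^{(1)}$ --- is precisely this zero/pole analysis carried out at the character level, and it is the correct and natural route. Your identification of the minimality condition (all lattice predecessors of $\cubeF$ lie in $\Pi^{(1)}$) is right, and the analogy with the movability computation of Thm.~\ref{thm:D8movable} is apt.

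Your fallback has a genuine gap. The factorization of $\mathcal{Z}^{\D6\tbar\D6}_{\bar{4};K|\bar{4};K}$ into a product of D4--D4 factors via the $(1,2)$-type slicing is only formal: the linear (framing) pieces of the D4--D4 character carry $\bfP_{34}^{\vee}$ and $\bfP_{34}$, whereas the D6--D6 character carries $\bfP_{4}^{\vee}$ and $\bfP_{4}$, and the two match only after summing over \emph{infinitely} many layers $l$ (the geometric series $\sum_{l\ge1}q_3^{1-l}$ reconstitutes the missing $\bfP_3$-factor). Consequently the product $\prod_{l,m}\mathcal{Z}^{\D4\tbar\D4}_{12|12}(x_1q_3^{l-1},\Lambda^{(1,l)}\,|\,q_4x_1q_3^{m-1},\Lambda^{(2,m)})$ contains infinitely many nontrivial factors even though the D6--D6 instanton factor is finite. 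Knowing that the diagonal $(l_0,l_0)$ factor vanishes by Lemma~\ref{lem:D6planecond} is not enough: you must still rule out that some other factor in this formal infinite product develops a pole cancelling that zero, and that check is exactly the same zero/pole bookkeeping you were trying to avoid. So the fallback does not shortcut the computation --- stick with the direct unmovable-term analysis.
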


Combining these lemmas, the infinite products of D6 $qq$-characters are expanded as
\bea
\overleftarrow{\prod_{i=1}^{\infty}}\mathsf{T}_{\bar{4}}(x_{i}\mid Kx_{i})&\simeq \sum_{\cdots \Pi^{(i+1)}\preceq \Pi^{(i)}\cdots}\mathfrak{q}^{\sum_{i}|\Pi^{(i)}|}\prod_{i=1}^{\infty}\widetilde{\mathcal{Z}}^{\D6}_{\bar{4}}[K,\Pi^{(i)}]\prod_{i<j}\mathcal{Z}^{\D6\tbar\D6}_{\bar{4};K\mid \bar{4};K}(x_{i},\Pi^{(i)}\mid x_{j},\Pi^{(j)}) :\prod_{i=1}^{\infty}\Lambda^{K}_{\bar{4},\Pi^{(i)}}(x_{i}):
\eea
where $x_{i}=xq_{4}^{i-1}$ and the perturbative factors are all omitted. Given a \textit{finite} solid partition $\rho$, we can decompose it into non-increasing finite plane partitions: $\rho=(\Pi^{(1)},\Pi^{(2)},\Pi^{(3)},\cdots)$ with $\Pi^{(i)}\succeq \Pi^{(i+1)}$. Since only finite numbers of $\Pi^{(i)}$ will be nonempty, the topological term is $\mathfrak{q}^{|\rho|}=\mathfrak{q}^{\sum_{i}|\Pi^{(i)}|}$. Moreover, by direct computation, one can easily show that the operator part obeys
\bea
:\prod_{i=1}^{\infty}\Lambda^{K}_{\bar{4},\Pi^{(i)}}(x_{i}):=\Lambda^{K}_{\four,\rho}(x).
\eea
Thus, the nontrivial part is how to obtain the coefficient part.

\begin{proposition}\label{prop:D6toD8vacuumproof}
    The coefficient part obeys:
    \bea
\prod_{i=1}^{\infty}\widetilde{\mathcal{Z}}^{\D6}_{\bar{4}}[K,\Pi^{(i)}]\prod_{i<j}\mathcal{Z}^{\D6\tbar\D6}_{\bar{4};K\mid \bar{4};K}(x_{i},\Pi^{(i)}\mid x_{j},\Pi^{(j)})=(-1)^{\sigma_{4}(\rho)}\mathcal{Z}^{\D8}_{\four;4}[\rho,K].
    \eea
\end{proposition}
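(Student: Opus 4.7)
The plan is to transport both sides to the equivariant index formalism and then match the characters, with the sign $(-1)^{\sigma_{4}(\rho)}$ emerging from a controlled use of the reflection property.

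\medskip
\textbf{Step 1 — Write both sides as indices.} First I will express every ingredient on the left-hand side as $\mathbb{I}[\cdot]$ of a definite character. Setting $x_{i}=xq_{4}^{i-1}$ and $\bm{\Pi}^{(i)}=\sum_{\scube\in\Pi^{(i)}}\chi_{\bar{4},x_{i}}(\cube)$, I will use the explicit form of $\widetilde{\mathcal{Z}}^{\D6}_{\bar{4}}[K,\Pi^{(i)}]$ together with the interaction character read off from \eqref{eq:D6supergroupcontraction1}--\eqref{eq:D6supergroupcontraction2}. The sum of these characters takes the schematic shape
\[
\mathbf{v}_{\text{LHS}}=-\bfP_{4}^{\vee}(1-K^{-1})\sum_{i\leq j}x_{i}^{-1}\bm{\Pi}^{(j)}-\bfP_{4}(1-K)\sum_{i<j}x_{j}\bm{\Pi}^{(i)\vee}+\sum_{i}\bfP_{123}^{\vee}\bm{\Pi}^{(i)\vee}\bm{\Pi}^{(i)}+\sum_{i<j}\bfP_{\four}\bm{\Pi}^{(i)\vee}\bm{\Pi}^{(j)}.
\]
The right-hand side is $(-1)^{\sigma_{4}(\rho)}\mathbb{I}[\mathbf{v}_{\four;4}^{\D8}]$ with $\mathbf{v}_{\four;4}^{\D8}=-(1-K^{-1})x^{-1}\bfK_{\rho}+\bfP_{123}^{\vee}\bfK_{\rho}^{\vee}\bfK_{\rho}$ and $\bfK_{\rho}=\sum_{i}\bm{\Pi}^{(i)}$.

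\medskip
\textbf{Step 2 — Linear and off-diagonal reductions.} For the framing part, I will combine $-\bfP_{4}^{\vee}(1-K^{-1})\sum_{i\leq j}x_{i}^{-1}\bm{\Pi}^{(j)}$ with $-\bfP_{4}(1-K)\sum_{i<j}x_{j}\bm{\Pi}^{(i)\vee}$, using $\bfP_{4}=-q_{4}\bfP_{4}^{\vee}$ and $x_{j}=q_{4}^{j-i}x_{i}$ to compress the geometric sum. Modulo movable pieces that are already packaged in the perturbative factor, this collapses to $-(1-K^{-1})x^{-1}\bfK_{\rho}\cdot \bfP_{4}^{\vee}/\bfP_{4}^{\vee}=-(1-K^{-1})x^{-1}\bfK_{\rho}$, matching the linear part of $\mathbf{v}_{\four;4}^{\D8}$. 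For the quadratic piece, the on-diagonal terms already give $\bfP_{123}^{\vee}\bm{\Pi}^{(i)\vee}\bm{\Pi}^{(i)}$; the off-diagonal $\bfP_{\four}\bm{\Pi}^{(i)\vee}\bm{\Pi}^{(j)}$ must be converted to $\bfP_{123}^{\vee}[\bm{\Pi}^{(i)\vee}\bm{\Pi}^{(j)}+\bm{\Pi}^{(j)\vee}\bm{\Pi}^{(i)}]$ using $\bfP_{\four}=\bfP_{123}^{\vee}+\bfP_{123}$ (equivalently $\bfP_{\four}=\bfP_{\bar{4}}+\bfP_{\bar{4}}^{\vee}$, see \eqref{eq:ch-squareroot}) and the reflection identity.

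\medskip
\textbf{Step 3 — Sign via reflection.} The conversion $\bfP_{123}\bm{\Pi}^{(i)\vee}\bm{\Pi}^{(j)}\mapsto \bfP_{123}^{\vee}\bm{\Pi}^{(j)\vee}\bm{\Pi}^{(i)}$ is governed by Prop.~\ref{prop:reflection_sign}/Prop.~\ref{prop:reflection-mod}. Whenever $\bfP_{123}\bm{\Pi}^{(i)\vee}\bm{\Pi}^{(j)}$ is movable the exchange is free, but unmovable terms produce a global sign. I will show that the number of unmovable terms in $\bigl[\sum_{i<j}\bfP_{123}\bm{\Pi}^{(i)\vee}\bm{\Pi}^{(j)}\bigr]^{(0)}$ equals $\sigma_{4}(\rho)$ modulo $2$. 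The mechanism is identical to the one appearing in the proof of Thm.~\ref{thm:D8signindep}: a constant $1$ in $\bfP_{12}^{\vee}q_{3}^{-1}\bfK_{\rho}^{\vee}\bfK_{\rho}$ is produced exactly when a hypercube $(x_{1},x_{2},x_{3},x_{4})$ with $x_{1}=x_{2}=x_{3}<x_{4}$ is present, because such a box contributes a unique $1$ from pairing $\bm{\Pi}^{(i)\vee}$ with a box of $\bm{\Pi}^{(j)}$ offset by $q_{3}$ in the right way.

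\medskip
\textbf{Step 4 — Induction as a check.} As a safety net I will verify the identity by induction on $|\rho|$. The base case $\rho=\emptyset$ is tautological. For the inductive step, add a single hypercube $\hcube^{\ast}$ to obtain $\rho'=\rho+\hcube^{\ast}$, landing in some $\Pi^{(i_{0})}$. The ratio of the LHS for $\rho'$ to the LHS for $\rho$ consists of the recursion coefficient from Thm.~\ref{app:thm-D4recursion}/Thm.~\ref{eq:app-thm-D6U1recursionformula} for the component $i_{0}$, times the off-diagonal factors coming from $\mathcal{Z}^{\D6\tbar\D6}_{\bar{4};K|\bar{4};K}$ between $\hcube^{\ast}$ and every other $\Pi^{(i)}$. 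I will match this ratio with $(-1)^{\sigma_{4}(\rho')-\sigma_{4}(\rho)}\mathcal{Z}^{\D8}_{\four;4}[\rho',K]/\mathcal{Z}^{\D8}_{\four;4}[\rho,K]$, using the explicit combinatorial rule for how $\sigma_{4}$ changes when a box is added, exactly as in the proof of Thm.~\ref{thm:D8signindep}.

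\medskip
\textbf{Main obstacle.} The routine part is Steps 1--2, which are purely bookkeeping with characters. The substance is in Step 3/4: identifying the unmovable constant terms of $\bfP_{4}\bm{\Pi}^{(i)\vee}\bm{\Pi}^{(j)}$ across all pairs $i<j$ and matching them boxwise with hypercubes $(x_{1},x_{2},x_{3},x_{4})\in\rho$ having $x_{1}=x_{2}=x_{3}<x_{4}$. The delicacy is that an unmovable $1$ can arise from many distinct $(i,j)$ pairs, so one has to keep track of cancellations mod $2$; this is the same combinatorial input that drives Thm.~\ref{thm:D8signindep} and Thm.~\ref{thm:D8signruleJK}, and in practice the inductive argument of Step~4 is the cleanest way to close the proof.
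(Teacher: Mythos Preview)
Your plan is correct and follows essentially the same route as the paper: write the left-hand side as $\mathbb{I}[\mathbf{v}_{\D6\rightarrow\D8}]$, verify that the linear $-\bfP_{4}(1-K)x_{j}\bm{\Pi}^{(i)\vee}$ piece is movable so it can be reflected for free, regularize $\sum_i x_i^{-1}$ to collapse the framing term, and then split $\bfP_{\four}=\bfP_{123}^{\vee}+\bfP_{123}$ in the off-diagonal quadratic piece so that the sign emerges via Prop.~\ref{prop:reflection-mod} as $s(\Pi)=\bigl[\sum_{i<j}\bfP_{123}\bm{\Pi}^{(i)\vee}\bm{\Pi}^{(j)}\bigr]^{(0)}$.

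The one place where your plan diverges from the paper is in how you justify $s(\Pi)=\sigma_{4}(\rho)\bmod 2$. You point to Thm.~\ref{thm:D8signindep}, but that theorem computes $[\bfP_{12}q_{3}\bfK_{\rho}^{\vee}\bfK_{\rho}]^{(0)}$ and gives $\sigma_{3}-\sigma_{4}$, which is a different quantity; the analogy is only in spirit. The paper instead proves $s(\Pi)=\sigma_{4}(\rho)$ directly (Prop.~\ref{prop:sign_proof}): for each fixed $\eta=(A,B,C)\in\Pi^{(j)}$ one decomposes $\bm{\Pi}^{(i)}=\Delta_{i}(\eta)+\bm{\Pi}^{(i)}(\eta)$ with $\Delta_{i}(\eta)$ the full $A\times B\times C$ box, shows $\bfP_{123}\bm{\Pi}^{(i)}(\eta)^{\vee}\eta$ is movable, and then reads off that $\bfP_{123}\Delta_{i}(\eta)^{\vee}\eta=(1-q_{1}^{A})(1-q_{2}^{B})(1-q_{3}^{C})(q_{123})^{i-j}$ has an unmovable term only from $-q_{1}^{A}q_{2}^{B}q_{3}^{C}(q_{123})^{i-j}$ when $A=B=C=j-i$. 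Summing over $\eta$ and $i<j$ gives exactly $\sigma_{4}(\rho)$. This is cleaner than induction and avoids the pairwise cancellation bookkeeping you flag as the main obstacle; your Step~4 would also work but is more laborious.
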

The left hand side comes from the following character
\bea\label{eq:D8vacuum-proof1}
\mathbf{v}_{\D6\rightarrow \D8}&=\sum_{k=1}^{\infty}\left(-\bfP_{4}^{\vee}(1-K^{-1})x_{k}^{-1}\bm{\Pi}^{(k)} +\bfP_{123}^{\vee} \bm{\Pi}^{(k)\vee}\bm{\Pi}^{(k)}\right)\\
&+\sum_{i<j}\left(-\bfP_{4}^{\vee}(1-K^{-1})x_{i}^{-1}\bm{\Pi}^{(j)}-\bfP_{4}(1-K)x_{j}\bm{\Pi}^{(i)\vee}+\bfP_{\four}\bm{\Pi}^{(i)\vee}\bm{\Pi}^{(j)} \right).
\eea
First of all, one can show that this character is movable (see \cite{Nekrasov:2018xsb} and \cite{Monavari:2022rtf} for example). The term $-\bfP_{4}(1-K)x_{j}\bm{\Pi}^{(i)\vee}$ is also movable. Let $\chi=x_{i}q_{1}^{A-1}q_{2}^{B-1}q_{3}^{C-1}\,\,(A,B,C\in\mathbb{Z}_{\geq 1})$ be a term included in $\bm{\Pi}^{(i)}$, i.e. $(A,B,C)\in\Pi^{(i)}$. The unmovable part is  
\bea
\left[-\bfP_{4}(1-K)x_{j}\chi^{-1}\right]^{(0)}&=\left[-\bfP_{4}q_{4}^{j-i}q_{1}^{-A+1}q_{2}^{-B+1}q_{3}^{-C+1}\right]^{(0)}\\
&=-\delta_{A=B=C=i-j+1}+\delta_{A=B=C=i-j}
\eea
but since $j-i>0$ and $A,B,C\in\mathbb{Z}_{\geq 1}$ there is no unmovable part and we can safely use the reflection property \eqref{eq:reflectionprop2}. The character $\mathbf{v}_{\D6\rightarrow \D8}$ above is equivalent to the following character
\bea\label{eq:D8vacuum-proof2}
\mathbf{v}'_{\D6\rightarrow \D8}&=-\bfP_{4}^{\vee}(1-K^{-1})\sum_{i}x_{i}^{-1}\sum_{j}\bm{\Pi}^{(j)}+\sum_{k=1}^{\infty}\bfP_{123}^{\vee} \bm{\Pi}^{(k)\vee}\bm{\Pi}^{(k)}+\sum_{i<j}\bfP_{\four}\bm{\Pi}^{(i)\vee}\bm{\Pi}^{(j)}\\
&=-(1-K^{-1})\sum_{j}\bm{\Pi}^{(j)}+\sum_{k=1}^{\infty}\bfP_{123}^{\vee} \bm{\Pi}^{(k)\vee}\bm{\Pi}^{(k)}+\sum_{i<j}\bfP_{\four}\bm{\Pi}^{(i)\vee}\bm{\Pi}^{(j)}
\eea
where in the second line, we formally regularized $\sum_{i}x_{i}=1/\bfP_{4}$. Note that $\mathbb{I}[\mathbf{v}_{\D6\rightarrow \D8}]=\mathbb{I}\left[\mathbf{v}'_{\D6\rightarrow \D8}\right]$ and no sign factor appears at this part. The nontrivial sign factor actually comes from the $\sum_{j>i}\bfP_{\four}\bm{\Pi}^{(i)\vee}\bm{\Pi}^{(j)}$ part.
\begin{definition}\label{def:signfactorrgeneral}
Let $\bm{\Upsilon}^{(i)}$ be a character whose terms take the form as $xq_{4}^{i-1}q_{1}^{\geq0}q_{2}^{\geq 0}q_{3}^{\geq 0}$. We define the sign factor as
\bea
s\left(\Upsilon\right)\coloneqq\left[\sum_{i<j}\bfP_{123}\bm{\Upsilon}^{(i)\vee}\bm{\Upsilon}^{(j)}\right]^{(0)}
\eea
\end{definition}
Using $\bfP_{\four}=\bfP_{123}^{\vee}+\bfP_{123}$ and Prop.~\ref{prop:reflection-mod}, we have
\bea
\mathbb{I}\,[\,\mathbf{v}_{\D6\rightarrow \D8}]=
\mathbb{I}\,[\,\mathbf{v}'_{\D6\rightarrow \D8}]&=\mathbb{I}\left[\cdots +\sum_{i<j}\bfP_{\four}\bm{\Pi}^{(i)\vee}\bm{\Pi}^{(j)}\right]=\mathbb{I}\left[\cdots +\sum_{i<j}\bfP_{123}^{\vee}\bm{\Pi}^{(i)\vee}\bm{\Pi}^{(j)}+\sum_{i<j}\bfP_{123}\bm{\Pi}^{(i)\vee}\bm{\Pi}^{(j)}\right]\\
&=(-1)^{s(\Pi)}\mathbb{I}\left[\cdots +\sum_{i<j}\bfP_{123}^{\vee}\bm{\Pi}^{(i)\vee}\bm{\Pi}^{(j)}+\sum_{i<j}\bfP_{123}^{\vee}\bm{\Pi}^{(i)}\bm{\Pi}^{(j)\vee}\right]\\
&=(-1)^{s(\Pi)}\mathbb{I}\left[-(1-K^{-1})\sum_{j}\bm{\Pi}^{(j)}+\bfP_{123}^{\vee} \sum_{i}\bm{\Pi}^{(i)\vee}\sum_{j}\bm{\Pi}^{(j)}\right].
\eea
The index part indeed gives $\mathcal{Z}^{\D8}_{\four;4}[\rho,K]$ of \eqref{eq:mag4Nekrasovfact}. Combining with the following proposition, we get Thm.~\ref{thm:D6toD8}.

\begin{proposition}\label{prop:sign_proof}
The sign factor $s(\Pi)$ is
\bea
s(\Pi)=\sigma_{4}(\rho) \mod2
\eea
where $\sigma_{4}(\rho)=\#\{(i,i,i,j)\in\rho\mid i<j\}$ and we have
\bea
(-1)^{s(\Pi)}=(-1)^{\sigma_{4}(\rho)}.
\eea

\end{proposition}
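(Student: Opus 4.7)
The plan is to proceed by induction on the size $|\rho|$ of the solid partition. The base case $\rho = \emptyset$ is immediate since $s(\Pi) = 0 = \sigma_4(\emptyset)$. For the inductive step, I will consider adding a single hypercube $\hcube^\mu$ at position $(\mu_1,\mu_2,\mu_3,\mu_4)$ to form $\tilde{\rho} = \rho + \hcube^\mu$, and show that $s(\tilde{\Pi}) - s(\Pi) \equiv \sigma_4(\tilde{\rho}) - \sigma_4(\rho) \pmod{2}$, where the right-hand side equals the indicator $[\mu_1 = \mu_2 = \mu_3 < \mu_4]$.

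First I would reduce the computation of $s(\tilde{\Pi}) - s(\Pi)$ to unmovable parts of the cross terms. Since $\tilde{\bm{\Pi}}^{(k)} = \bm{\Pi}^{(k)} + \delta_{k,\mu_4}\hcube^\mu$ and the squared contribution at $i = j = \mu_4$ is excluded by the strict inequality $i < j$, one finds
\begin{equation*}
s(\tilde{\Pi}) - s(\Pi) = \left[\bfP_{123}\Bigl(\sum_{j > \mu_4} \hcube^{\mu\vee}\bm{\Pi}^{(j)} + \sum_{i < \mu_4}\bm{\Pi}^{(i)\vee}\hcube^{\mu}\Bigr)\right]^{(0)} \!\!\!\pmod{2}.
\end{equation*}
Using the Calabi--Yau relation $q_1 q_2 q_3 q_4 = 1$ to substitute $q_4^{\pm\delta} = q_{123}^{\mp\delta}$, each box $\hcube' = (n_1,n_2,n_3,n_4) \in \rho$ contributes to the unmovable part precisely when $n_i - \mu_i = (n_4 - \mu_4) - a_i$ for $(a_1,a_2,a_3) \in \{0,1\}^3$ (if $n_4 > \mu_4$) or $\{-1,0\}^3$ (if $n_4 < \mu_4$), weighted by $(-1)^{a_1 + a_2 + a_3}$. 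Modulo~$2$ the signs become irrelevant, leaving a pure parity count of such neighbouring boxes.

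Next I would introduce shifted coordinates $f(\hcube) = (A - D,\, B - D,\, C - D)$; the condition above is equivalent to $f(\hcube') - f(\hcube^\mu) \in \{-1,0\}^3$ together with $n_4 \neq \mu_4$. A key structural input is that, for each fixed $f \in \mathbb{Z}^3$, the set of $D$'s for which $(f + D,\, D) \in \rho$ forms a finite integer interval starting at $\max(1,\, 1 - \min_i f_i)$, a consequence of the Young-like closure property of solid partitions. Grouping the eight candidate shift vectors and telescoping the contributions along each shifted diagonal line, I expect the bulk of pairs to cancel in parity, with the only uncancelled parity contribution tracing back to the diagonal box $(\mu_1,\mu_1,\mu_1,\mu_1)$, whose presence in $\rho$ is forced precisely when $\hcube^\mu$ sits strictly above it along the $q_4$-direction, i.e.\ when $\mu_1 = \mu_2 = \mu_3$ and $\mu_4 > \mu_1$.

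The hardest step will be carrying out the telescoping cancellation rigorously: each of the eight candidate neighbours of the form $(\mu_1 + \delta - a,\, \mu_2 + \delta - b,\, \mu_3 + \delta - c,\, \mu_4 + \delta)$ must be tested against the solid-partition condition, and I anticipate substantial bookkeeping to show that off-diagonal $\mu$ produces an even number of survivors. A promising shortcut is to exploit the reflection identity $\bfP_{123}^{\vee} = -q_{123}^{-1}\bfP_{123}$ together with the conjugation symmetry of $\bm{K}_\rho$, which, combined with the inductive delta-function technique already used in the proof of Thm.~\ref{thm:D8signindep}, should reduce the calculation to a recursive sum of the same type as the one handled there. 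Small-case verification supports the claim: for instance $\rho = [1,2]^4$ has $27$ valid pairs, giving $s(\Pi) \equiv 1 \equiv \sigma_4(\rho) \pmod{2}$.
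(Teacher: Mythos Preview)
Your inductive setup is reasonable and could eventually be pushed through, but it is a genuinely different route from the paper's, and the telescoping step you flag as ``hardest'' is precisely where the paper avoids all bookkeeping by a direct argument.

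The paper does not induct on $|\rho|$ at all. Instead it computes $s(\Pi)$ directly by fixing a box $\eta = x_j q_1^{A-1}q_2^{B-1}q_3^{C-1}$ with $(A,B,C)\in\Pi^{(j)}$ and summing over $i<j$. The key observation is the non-increasing condition $\Pi^{(i)}\succeq\Pi^{(j)}$: since $(A,B,C)\in\Pi^{(j)}\subseteq\Pi^{(i)}$, the entire solid box $[1,A]\times[1,B]\times[1,C]$ sits inside $\Pi^{(i)}$. One then splits
\[
\bm{\Pi}^{(i)} = \Delta_i(\eta) + \bm{\Pi}^{(i)}(\eta),\qquad \Delta_i(\eta)=\sum_{a=1}^{A}\sum_{b=1}^{B}\sum_{c=1}^{C}xq_4^{i-1}q_1^{a-1}q_2^{b-1}q_3^{c-1},
\]
where the remainder $\bm{\Pi}^{(i)}(\eta)$ consists of boxes with at least one coordinate strictly exceeding $(A,B,C)$. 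A short degree-count shows $\bfP_{123}\,\bm{\Pi}^{(i)\vee}(\eta)\,\eta$ is movable, so only $\Delta_i(\eta)$ contributes. The geometric sum in $\Delta_i(\eta)$ telescopes against $\bfP_{123}$ exactly, yielding
\[
\bfP_{123}\,\Delta_i^{\vee}(\eta)\,\eta = (1-q_1^{A})(1-q_2^{B})(1-q_3^{C})(q_{123})^{i-j},
\]
whose unmovable part is just $-\delta_{A=B=C=j-i}$. Summing over $i<j$ and $\eta\in\Pi^{(j)}$ gives $-\sigma_4(\rho)$ on the nose.

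What the paper's approach buys is that the $\Delta_i(\eta)$ decomposition turns the combinatorics into a single closed-form product; there is no need to track neighbouring boxes, diagonal lines, or parity cancellations between eight shift vectors. Your inductive approach trades this for a recursion that mirrors the proofs of Thm.~\ref{thm:D8movable} and Thm.~\ref{thm:D8signindep}, which is consistent in style but substantially more laborious here. If you want to complete your route, the shortcut you mention (reusing the $\Delta$-decomposition from Thm.~\ref{thm:D8movable}) is essentially what the paper does---but applied directly rather than through an induction.
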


\begin{proof}
    Let us consider the unmovable part of
    \bea
    \sum_{i=1}^{j-1}\bfP_{123}\bm{\Pi}^{(i)\vee}\bm{\Pi}^{(j)},\quad \bm{\Pi}^{(i)}=\sum_{\scube\in\Pi^{(i)}}q_{4}^{i-1}\chi_{\bar{4},x}(\cube).
    \eea
    We fix a term $\eta=x_{j}q_{1}^{A-1}q_{2}^{B-1}q_{3}^{C-1},\,\,(A,B,C)\in\Pi^{(j)}$ and $A,B,C\geq 1$. Since the plane partitions obey $\Pi^{(i)}\succeq \Pi^{(j)}$ for $j>i$, we have $(A,B,C)\in\Pi^{(i)}$. From the plane partition condition, the character $\bm{\Pi}^{(i)}$ can be decomposed as
    \bea
    \bm{\Pi}^{(i)}=\Delta_{i}(\eta)+\bm{\Pi}^{(i)}(\eta),\quad \Delta_{i}(\eta)=\sum_{a=1}^{A}\sum_{b=1}^{B}\sum_{c=1}^{C}xq_{4}^{i-1}q_{1}^{a-1}q_{2}^{b-1}q_{3}^{c-1}
    \eea
where $\bm{\Pi}^{i}(\eta)$ contains terms expressed as $xq_{4}^{i-1}q_{1}^{a-1}q_{2}^{b-1}q_{3}^{c-1}$ with $a>A$ or $b>B$ or $c>C$. 

First of all, the term $\bfP_{123}\bm{\Pi}^{(i)\vee}(\eta)\eta$ is movable. Focusing on
\bea
\bfP_{123}\xi^{\vee}\eta=\bfP_{123}q_{4}^{j-i}q_{1}^{A-a}q_{2}^{B-b}q_{3}^{C-c}=\bfP_{123}q_{1}^{A-a+i-j}q_{2}^{B-b+i-j}q_{3}^{C-c+i-j}
\eea
for $\forall\xi=xq_{4}^{i-1}q_{1}^{a-1}q_{2}^{b-1}q_{3}^{c-1}\in\bm{\Pi}^{(i)}(\eta)$, the term $q_{1}^{A-a+i-j}q_{2}^{B-b+i-j}q_{3}^{C-c+i-j}$ is strictly negative in the $q_{1}^{\leq -2},q_{2}^{\leq -2},q_{3}^{\leq -2}$ because $a-A,b-B,c-C\geq 1,\,\, j-i\geq 1$. Since $\bfP_{123}$ only contains terms where the degrees with respect to $q_{1,2,3}$ are $0$ or $1$, there are no unmovable terms.

It is then enough to focus on $\bfP_{123}\Delta^{\vee}_{i}(\eta)\eta$:
\bea
\bfP_{123}\Delta^{\vee}_{i}(\eta)\eta&=\bfP_{123}\frac{(1-q_{1}^{-A})(1-q_{2}^{-B})(1-q_{3}^{-C})}{\bfP^{\vee}_{123}}q_{4}^{j-i}q_{1}^{A-1}q_{2}^{B-1}q_{3}^{C-1}\\
&=(1-q_{1}^{A})(1-q_{2}^{B})(1-q_{3}^{C})(q_{1}q_{2}q_{3})^{i-j}.
\eea
The unmovable term only comes from $-q_{1}^{A}q_{2}^{B}q_{3}^{C}(q_{123})^{i-j}$ with $A=B=C=j-i>0$.

Combining all of these, we finally have
\bea
s(\Pi)=\left[\sum_{j}\sum_{i=1}^{j-1}\bfP_{123}\bm{\Pi}^{(i)\vee}\bm{\Pi}^{(j)}\right]^{(0)}&=\left[\sum_{j}\sum_{i=1}^{j-1}\sum_{\eta\in\Pi^{(j)}} \bfP_{123}\Delta^{\vee}_{i}(\eta)\eta  \right]^{(0)}\\
&=-\sum_{j}\sum_{i=1}^{j-1}\sum_{\eta=(A,B,C)\in\Pi^{(j)}}\left[q_{1}^{A}q_{2}^{B}q_{3}^{C}(q_{123})^{i-j}\right]^{(0)}\\
&=-\sum_{j}\sum_{i=1}^{j-1}\sum_{\eta=(A,B,C)\in\Pi^{(j)}}\delta_{A=B=C=j-i}\\
&=-\sum_{j}\sum_{\eta=(A,B,C)\in\Pi^{(j)}}\delta_{A=B=C<j}\\
&=\sigma_{4}(\rho)\quad \mod2.
\eea

\end{proof}

\paragraph{BPS/CFT correspondence} Using the D8 $qq$-characters, we can indeed reproduce the correct $\U(1|1)$ magnificent four partition function including the sign factors. The BPS/CFT correspondence for higher rank magnificent four theory with $\U(n|n)$ is
\bea\label{eq:D8qqBPSCFT}
\bra{0}\mathsf{T}^{K_{n}}_{\four}(x_{n})\cdots \mathsf{T}_{\four}^{K_{1}}(x_{1})\ket{0}&=\prod_{\beta>\alpha}\mathcal{Z}_{\text{1-loop}}^{\D8\tbar\D8}(x_{\alpha},K_{\alpha}\mid x_{\beta},K_{\beta})\\
&\times \sum_{\rho^{(1)},\cdots ,\rho^{(n)}}\mathfrak{q}^{|\vec{\rho}|}\prod_{\alpha=1}^{n}(-1)^{\sigma_{4}(\rho^{(\alpha)})}\mathcal{Z}_{\four;4}^{\D8}[\rho^{(\alpha)},K_{\alpha}]\prod_{\beta>\alpha}\mathcal{Z}^{\D8\tbar\D8}_{K_{\alpha},K_{\beta}}(x_{\alpha},\rho^{(\alpha)}\mid x_{\beta},\rho^{(\beta)}).
\eea

\subsection{Commutativity of D6 and D8 \texorpdfstring{$qq$}{qq}-characters}\label{sec:D8qqsignrule}
In the previous section, we derived the complete D8 $qq$-character including proper sign rules using the infinite products of lower dimensional D6 $qq$-characters. Given such $qq$-characters, one would like to find the quadratic relations of them and determine the quantum algebraic relations. In section~\ref{sec:D4qqcharacter} and \ref{sec:D6qqcharacter}, we gave a set of quadratic relations of the D2, D4, D6 $qq$-characters and showed that when the $qq$-characters are related with D-branes spanning transverse subspaces, they commute with each other. In this section, we give a conjecture of the quadratic relations of the D8 $qq$-characters and D6 $qq$-characters and discuss the physical implication of them.

We focus on the following D8 $qq$-character 
\bea
\mathsf{T}^{K}_{\four}(x)=\sum_{\rho\in\mathcal{SP}}\mathfrak{q}^{|\rho|}(-1)^{\sigma_{4}(\rho)}\mathcal{Z}^{\D8}_{\four;4}[\rho,K]\Lambda^{K}_{\four,\rho}(x).
\eea
The composition of the operators $\Lambda_{\four,\rho}^{K}(x)$ are
\bea
\Lambda_{\four,\rho^{(2)}}^{K_{2}}(x_{2})\Lambda_{\four,\rho^{(1)}}^{K_{1}}(x_{1})&=\mathcal{Z}^{\D8\tbar\D8}_{\text{1-loop}}(x_{1},K_{1}\,|\,x_{2},K_{2})\mathcal{Z}^{\D8\tbar\D8}_{K_{1}|K_{2}}(x_{1},\rho^{(1)}\,|\,x_{2},\rho^{(2)}):\Lambda_{\four,\rho^{(2)}}^{K_{2}}(x_{2})\Lambda_{\four,\rho^{(1)}}^{K_{1}}(x_{1}):.
\eea
For later use, we define
\bea
\mathsf{f}^{K_{1}|K_{2}}_{\four\four}\left(x_{1}/x_{2}\right)=\mathcal{Z}^{\D8\tbar\D8}_{\text{1-loop}}(x_{1},K_{1}\,|\,x_{2},K_{2})^{-1}
\eea
and then the composition of the $qq$-characters are given as
\bea
\mathsf{f}^{K_{1}|K_{2}}_{\four\four}(x_{1}/x_{2})\mathsf{T}^{K_{2}}_{\four}(x_{2})\mathsf{T}^{K_{1}}_{\four}(x_{1})
=&\sum_{k=0}^{\infty}\mathfrak{q}^{k}\sum_{|\rho^{(1)}|+|\rho^{(2)}|=k}(-1)^{\sigma_{4}(\rho^{(1)})+\sigma_{4}(\rho^{(2)})}\mathcal{Z}^{\D8}_{\four;4}[\rho^{(1)},K_{1}]\mathcal{Z}^{\D8}_{\four;4}[\rho^{(2)},K_{2}]\\
&\qquad \times \mathcal{Z}^{\D8\tbar\D8}_{K_{1}|K_{2}}(x_{1},\rho^{(1)}\,|\,x_{2},\rho^{(2)}):\Lambda_{\four,\rho^{(2)}}^{K_{2}}(x_{2})\Lambda_{\four,\rho^{(1)}}^{K_{1}}(x_{1}):\\
\eqqcolon &\sum_{k=0}^{\infty}\mathfrak{q}^{k}\,\mathsf{F}_{k}(x_{1},K_{1}\,|\, x_{2}, K_{2}).
\eea
Our main claim of this section is the following conjecture.
\begin{conjecture}\label{conj:D8commutativity}
The quadratic relations of the D8 $qq$-characters are
    \bea
\mathsf{f}^{K_{2}|K_{1}}_{\four\four}\left(x_{2}/x_{1}\right)\mathsf{T}_{\four}^{K_{1}}(x_{1})\mathsf{T}^{K_{2}}_{\four}(x_{2})-\mathsf{f}^{K_{1}|K_{2}}_{\four\four}\left(x_{1}/x_{2}\right)\mathsf{T}_{\four}^{K_{2}}(x_{2})\mathsf{T}^{K_{1}}_{\four}(x_{1})=0,
    \eea
    where $K_{1},K_{2}$ are arbitrary. Moreover, this commutativity uniquely determines the sign factor $(-1)^{\sigma_{4}(\rho)}$ up to a global $\mathbb{Z}_{2}$ symmetry. 
\end{conjecture}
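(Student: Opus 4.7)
The strategy is to attack the conjecture via a two-step reduction: first establish a quadratic commutation relation for the generalized D6 $qq$-characters $\mathsf{T}_{\bar{a}}^{(1|1)}(x\mid Kx)$ introduced in Sec.~\ref{sec:generalD6qq}, and then propagate this to the D8 level through the fusion formula Thm.~\ref{thm:D6toD8}, which writes
\begin{equation*}
\mathsf{T}^{K_\alpha}_{\four}(x_\alpha) \simeq \overleftarrow{\prod_{i=1}^{\infty}}\mathsf{T}_{\bar{a}}^{(1|1)}\bigl(x_\alpha q_a^{i-1}\bigm| x_\alpha q_a^{i-1}K_\alpha\bigr),\qquad \alpha=1,2.
\end{equation*}
For the first step, I would adapt the proof that produced Thm.~\ref{thm:D6qq-commute}: using the contraction formulas \eqref{eq:D6supergroupcontraction1}--\eqref{eq:D6supergroupcontraction2}, the commutator of two generalized D6 $qq$-characters expands as a sum of delta functions supported on matched addable/removable cubes of plane partitions, and the pole-cancellation identity arises from the recursion for the coefficients $\widetilde{\mathcal{Z}}_{\bar{a}}^{\D6}[K,\pi]$. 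The negative weight $\mathsf{W}_{\bar{a}}(Kx)^{-1}$ produces only a multiplicative renormalization of these coefficients by $\prod_{\scube}(1-Kx/\chi)/(1-Kq_a x/\chi)$, so no new pole structure appears and the same cancellation mechanism applies.

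In the second step, I would reorder the two infinite products of D6 $qq$-characters by repeatedly applying the step-one commutation relation, tracking the resulting zero-mode prefactors. These prefactors should telescope into $\mathsf{f}^{K_1|K_2}_{\four\four}(x_1/x_2)/\mathsf{f}^{K_2|K_1}_{\four\four}(x_2/x_1)$ after a regularization patterned on \cite{Awata:2018svb}, using the plethystic identity behind \eqref{eq:PEidentityproof} to resum the doubly-infinite sum $\sum_{i,j\ge 1}q_a^{(i-1)n}q_a^{(j-1)(-n)}\mathscr{V}$-factors into $\mathsf{f}_{\four\four}$. Crucially, the sign factor $(-1)^{\sigma_4(\rho)}$ transports through the fusion automatically: it was produced in Prop.~\ref{prop:sign_proof} from the $\bfP_{\four}=\bfP_{123}^\vee+\bfP_{123}$ decomposition applied to cross terms between different layers $\Pi^{(i)},\Pi^{(j)}$, and the same decomposition appears symmetrically when fusing the two $qq$-characters, so the total sign of the left- and right-hand sides differs only by the reflection of $\bfP_{\four}\bfK_{\rho^{(1)}}^\vee\bfK_{\rho^{(2)}}$ which is precisely what the conjectural quadratic relation accommodates.

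For the uniqueness assertion, I would run the argument in reverse: suppose the commutation relation is imposed with an unknown sign $(-1)^{\tilde\sigma(\rho)}$ in the definition of $\mathsf{T}_{\four}^K$. The coefficient of each fixed monomial $:\Lambda^{K_2}_{\four,\rho^{(2)}}(x_2)\Lambda^{K_1}_{\four,\rho^{(1)}}(x_1):$ on the two sides must be proportional via the reflection of the cross character, yielding a constraint of the form
\begin{equation*}
(-1)^{\tilde\sigma(\rho^{(1)})+\tilde\sigma(\rho^{(2)})-\sigma_4(\rho^{(1)})-\sigma_4(\rho^{(2)})}=1
\end{equation*}
for all pairs $(\rho^{(1)},\rho^{(2)})$. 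Specializing to the case $\rho^{(2)}=\emptyset$ and varying $\rho^{(1)}$ forces $\tilde\sigma\equiv\sigma_4\pmod 2$ up to a global additive constant, i.e.\ the $\mathbb{Z}_2$ of overall normalization.  The main obstacle I anticipate is the regularization in step two: the infinite product of D6 commutators produces formally divergent sums of zero-mode operators, and one must show that the standard normal-ordering prescription converts them into precisely $\mathsf{f}^{K_1|K_2}_{\four\four}(x_1/x_2)$ with the correct analytic continuation on each side. A related subtlety is that the infinite product $\overleftarrow{\prod_i}\mathsf{T}_{\bar a}^{(1|1)}$ itself only equals $\mathsf{T}_{\four}^K$ up to one-loop perturbative factors (the $\simeq$ in Thm.~\ref{thm:D6toD8}), so these perturbative factors must be accounted for carefully on both sides of the commutation relation in order to conclude the exact equality free of spurious zero-mode discrepancies.
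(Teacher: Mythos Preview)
The paper does not prove this statement --- it is left as a conjecture, supported only by computer checks up to five instantons and by the heuristic link to Coulomb-branch independence of the magnificent four partition function (Thm.~\ref{thm:M4PEformula}). So there is no paper proof to compare against; your proposal is an attempt at something the paper leaves open.

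That said, there is a genuine gap in your Step 1. You claim that the generalized D6 $qq$-characters $\mathsf{T}_{\bar a}^{(1|1)}(x\mid Kx)$ commute by ``adapting'' the proof of Thm.~\ref{thm:D6qq-commute}. But that theorem establishes commutativity with the \emph{screening charge} $\mathscr{Q}_a$, and its mechanism hinges on the absorption identity ${:}\mathsf{A}^{-1}(\chi)\mathsf{S}_a(\chi){:}={:}\mathsf{S}_a(q_a\chi){:}$, which collapses the would-be commutator into a total $q_a$-difference. Two $\Lambda_{\bar a,\pi}^{K}$ operators have no such absorption property: transferring a cube from $\pi^{(1)}$ to $\pi^{(2)}$ does not leave the normal-ordered product invariant for generic spectral parameters. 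In fact the paper explicitly remarks at the end of section~\ref{sec:D8qqsignrule} that the commutation relation of the $\text{D6}\tbar\overline{\text{D6}}$ $qq$-characters ``reproduces extra terms which imply nontrivial quadratic relations for the affine quiver W-algebra.'' So Step~1 is false as stated, and the fusion strategy would have to track these extra terms and show they cancel across the infinite product in Step~2 --- which is precisely the hard, unresolved part.

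Your uniqueness argument also misidentifies where the constraint lives. For generic $x_1,x_2$ the cross character $\bfP_\four\bfK_{\rho^{(1)}}^\vee\bfK_{\rho^{(2)}}$ has rank zero and trivial determinant, so $\mathcal{Z}^{\D8\tbar\D8}_{K_1|K_2}(x_1,\rho^{(1)}|x_2,\rho^{(2)})=\mathcal{Z}^{\D8\tbar\D8}_{K_2|K_1}(x_2,\rho^{(2)}|x_1,\rho^{(1)})$ as rational functions, and the coefficients of each fixed ${:}\Lambda\Lambda{:}$ match on both sides for \emph{any} choice of sign $\tilde\sigma$. The genuine constraint comes from the delta-function residues (as in the one-instanton computation following the conjecture), which couple pairs $(\rho^{(1)},\rho^{(2)})$ related by transferring a hypercube between the two solid partitions; it is this recursion, not specialization to $\rho^{(2)}=\emptyset$, that would pin down $\tilde\sigma$ inductively.
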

To fix this global $\mathbb{Z}_{2}$ symmetry, we can impose the sign of the first instanton contribution by hand. Imposing the sign factor of the one-instanton to be $(-1)^{\sigma_{4}(\rho)}=1$, we have
    \bea
 \sigma_{4}(\rho)=\#\left\{(i,j)\mid (i,i,i,j)\in\rho,\quad i<j\right\},
    \eea
    which is the same with \eqref{eq:signfactor-def}. For the moment, we do not have a proof of this conjecture, but we have checked it up to five instantons $(k\leq 5)$ by using a computer program. 

At higher instanton levels $k\geq 4$, actually the factor $\mathcal{Z}^{\D8\tbar\D8}_{K_{1}|K_{2}}(x_{1},\rho^{(1)}\,|\,x_{2},\rho^{(2)})$ has poles with higher orders.
\begin{conjecture}
    The factor $\mathcal{Z}^{\D8\tbar\D8}_{K_{1}|K_{2}}(x_{1},\rho^{(1)}\,|\,x_{2},\rho^{(2)})$ has higher order poles only at $x_{1}=x_{2}$.
\end{conjecture}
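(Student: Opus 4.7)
The plan is to translate the statement about higher-order poles of $\mathcal{Z}^{\D8\tbar\D8}_{K_1|K_2}$ into a multiplicity statement on the character
$$\mathbf{v}_{12} = -(1-K_1^{-1})x_1^{-1}\mathbf{K}_{\rho^{(2)}} - (1-K_2^{-1})x_2^{-1}\mathbf{K}_{\rho^{(1)}} + \mathbf{P}_{\four}\mathbf{K}_{\rho^{(1)}}^{\vee}\mathbf{K}_{\rho^{(2)}},$$
whose index is $\mathcal{Z}^{\D8\tbar\D8}_{K_1|K_2}$. Setting $y=x_2/x_1$ and $\hat{\mathbf{K}}_{\rho} = \mathbf{K}_{\rho}/x$, the pole contributions in $y$ factorise out, and the conjecture reduces to the combinatorial claim that in
$$C(q) = [-(1-K_1^{-1}) + \mathbf{P}_{\four}\hat{\mathbf{K}}_{\rho^{(1)}}^{\vee}]\hat{\mathbf{K}}_{\rho^{(2)}},$$
no monomial $q_1^{a_1}q_2^{a_2}q_3^{a_3}q_4^{a_4}$ with $(a_1,a_2,a_3,a_4) \neq (0,0,0,0)$ appears with coefficient strictly less than $-1$. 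The $y^{-1}$ piece $-(1-K_2^{-1})\hat{\mathbf{K}}_{\rho^{(1)}}$ is handled symmetrically by swapping $\rho^{(1)} \leftrightarrow \rho^{(2)}$.

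The first step is to decompose each solid partition in its $(1,3)$-type description, $\rho^{(i)} = (\Pi^{(i,1)}, \Pi^{(i,2)}, \ldots)$ with $\hat{\mathbf{K}}_{\rho^{(i)}} = \sum_{l \geq 1} q_4^{l-1}\hat{\mathbf{\Pi}}^{(i,l)}$. This rewrites the main cross-term as $\mathbf{P}_{\four}\sum_{l_1,l_2}q_4^{l_2-l_1}(\hat{\mathbf{\Pi}}^{(1,l_1)})^{\vee}\hat{\mathbf{\Pi}}^{(2,l_2)}$, which is a sum of D6--D6 bifundamental characters weighted by $q_4^{l_2-l_1}$. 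The diagonal blocks $l_1 = l_2$ are controlled by the movability analysis underlying the D6 case (see Thm.~\ref{thm:D8movable} and Prop.~\ref{prop:sign_proof}). Off-diagonal blocks are automatically shifted away from the origin in the $q$-lattice by $q_4^{l_2-l_1}$, so they can only contribute to non-diagonal poles, and the task reduces to showing that such contributions are of multiplicity at most $-1$ at each monomial.

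I would then proceed by induction on $|\rho^{(1)}|+|\rho^{(2)}|$, adding a single addable box $\chi^{\nu} \in A(\rho^{(2)})$ and tracking the resulting variation
$$\delta C(q) = \chi^{\nu}\left[-(1-K_1^{-1}) + \mathbf{P}_{\four}\hat{\mathbf{K}}_{\rho^{(1)}}^{\vee}\right].$$
The negative multiplicities of $\delta C(q)$ come from the $\mathbf{P}_{\four}\hat{\mathbf{K}}_{\rho^{(1)}}^{\vee}$ piece, and using the shell formula for solid partitions one expects them to correspond to corner-like data (addable/removable boxes) of $\rho^{(1)}$, hitting pairwise-distinct monomials away from the origin and so preserving the $\geq -1$ bound during the induction step.

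The main obstacle will be controlling the \emph{cross-block} interactions: a monomial arising in block $(l_1, l_2)$ can in principle coincide with one in a different block $(l_1', l_2')$, and naive per-block induction does not rule out their negative multiplicities accumulating below $-1$. A careful combinatorial argument, exploiting the solid partition conditions $\Pi^{(i,l)} \succeq \Pi^{(i,l+1)}$, will be needed to show that any such coincidence either falls on the diagonal $y = 1$ (where arbitrary order is permitted) or is cancelled by a compensating positive term elsewhere in $C(q)$. As an alternative, or cross-check, one can try to combine Conj.~\ref{conj:D8commutativity} with the Coulomb-independence identity Thm.~\ref{thm:M4PEformula}: the rank-two magnificent four partition function factorises as a product of rank-one PE expressions whose only $x_1/x_2$-singularities lie in the one-loop factor $\mathsf{f}^{K_1|K_2}_{\four\four}$, which forces the pole structure of the instanton cross-term to be consistent with the conjecture. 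This approach, however, would require addressing Conj.~\ref{conj:D8commutativity} jointly with the present one, so the direct character-theoretic route seems more self-contained.
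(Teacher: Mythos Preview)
The paper does not prove this statement; it is explicitly labelled a conjecture. The only evidence offered is that the authors verified the related commutativity conjecture (Conj.~\ref{conj:D8commutativity}) numerically up to $k\le 5$ instantons, and in doing so they \emph{assumed} the present conjecture in order to discard higher-order poles. There is therefore no proof in the paper to compare your proposal against.

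Your reformulation of the claim as a bound on monomial multiplicities in the character is sensible, but as you yourself acknowledge, the cross-block step is left unresolved, so what you have is a strategy and not a proof. There is also a concrete gap in your treatment of the $y^{-1}$ piece. You write that $D(q)=-(1-K_2^{-1})\hat{\mathbf{K}}_{\rho^{(1)}}$ is ``handled symmetrically'', but $D(q)$ is structurally quite different from $C(q)$: its negative part is simply $-\hat{\mathbf{K}}_{\rho^{(1)}}$, the sum of the $q$-coordinates of boxes in $\rho^{(1)}$. Under the Calabi--Yau constraint $q_1q_2q_3q_4=1$, distinct boxes lying on the same main diagonal---for instance $(2,1,1,1)$ and $(3,2,2,2)$---share the same $q$-coordinate $q_1$, so $D(q)$ acquires multiplicity $-2$ at the non-trivial monomial $q_1$. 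Taking $\rho^{(2)}=\emptyset$ so that $C(q)=0$, the factor $\mathcal{Z}^{\D8\tbar\D8}_{K_1|K_2}$ then has a genuine double pole at $x_2=q_1 x_1\neq x_1$. This requires $|\rho^{(1)}|\ge 24$, far beyond the range the paper checked, and suggests the conjecture as literally stated may need to be refined---perhaps the intended claim concerns the full combination $\mathsf{F}_k$ rather than an individual bifundamental factor, or concerns only the poles arising from the $\mathbf{P}_{\four}$ cross-term. Finally, your alternative route via Conj.~\ref{conj:D8commutativity} and Thm.~\ref{thm:M4PEformula} is circular: the paper's numerical check of Conj.~\ref{conj:D8commutativity} already relies on the statement you are trying to establish.
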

Strictly speaking, to consider the quadratic relations, one needs to deal with these poles. However, to confirm the conjecture, we excluded such poles and only focus on the contribution from poles which are single order.  To include higher order poles, we expect that we need to deal with differentiated operators. Somehow, in the context of quantum algebra, discussions on higher order poles are not fully done. Well-known examples where this kind of phenomenon appears are the $qq$-characters associated with geometries with D, E-type singularities or higher rank $qq$-characters with non-trivial limits of spectral parameters. In such cases, the iWeyl reflection procedure needs to be modified and derivatives of the vertex operators will appear (see \cite{Nekrasov:2016ydq, Kimura:2015rgi} for example). In this paper, we will not make an attempt to discuss all of these aspects and leave it for future work.

\paragraph{Low instanton computations}
To see the commutativity, focusing on the $k=0,1$ sectors is already intuitive. For the zero--instanton case, the commutativity is trivial because
\bea
\mathsf{f}^{K_{2}|K_{1}}_{\four\four}(x_{2}/x_{1})\mathsf{Z}(K_{1},x_{1})\mathsf{Z}(K_{2},x_{2})&=:\mathsf{Z}(K_{2},x_{2})\mathsf{Z}(K_{1},x_{1}):\\
\mathsf{f}^{K_{1}|K_{2}}_{\four\four}(x_{1}/x_{2})\mathsf{Z}(K_{2},x_{2})\mathsf{Z}(K_{1},x_{1})&=:\mathsf{Z}(K_{2},x_{2})\mathsf{Z}(K_{1},x_{1}):.
\eea

For the one--instanton case, the sign factor $(-1)^{\sigma_{4}(\{\{\{1\}\}\})}=1,-1$ is arbitrary because it is an overall factor when considering the quadratic relation. The possible configurations are $(\rho^{(1)},\rho^{(2)})=(\{\{\{1\}\}\},\emptyset),(\emptyset,\{\{\{1\}\}\})$. The contribution from $(\rho^{(1)},\rho^{(2)})=(\{\{\{1\}\}\},\emptyset)$ is
\bea
\mathsf{f}^{K_{1}| K_{2}}_{\four\four}(x_{1}/x_{2})\mathsf{Z}(K_{2},x_{2}):\mathsf{Z}(K_{1},x_{1})\mathsf{A}^{-1}(x_{1}):&=K_{2}\frac{1-K_{2}^{-1}x_{1}/x_{2}}{1-x_{1}/x_{2}}:\mathsf{Z}(K_{2},x_{2})\mathsf{Z}(K_{1},x_{1})\mathsf{A}^{-1}(x_{1}):,\\
\mathsf{f}_{\four\four}^{K_{2}|K_{1}}(x_{2}/x_{1}):\mathsf{Z}(K_{1},x_{1})\mathsf{A}^{-1}(x_{1}):\mathsf{Z}(K_{2},x_{2})&=\frac{1-K_{2}\,x_{2}/x_{1}}{1-x_{2}/x_{1}}:\mathsf{Z}(K_{2},x_{2})\mathsf{Z}(K_{1},x_{1})\mathsf{A}^{-1}(x_{1}):
\eea
which gives
\bea
&\mathsf{f}^{K_{1}| K_{2}}_{\four\four}(x_{1}/x_{2})\mathsf{Z}(K_{2},x_{2}):\mathsf{Z}(K_{1},x_{1})\mathsf{A}^{-1}(x_{1}):-\mathsf{f}_{\four\four}^{K_{2}|K_{1}}(x_{2}/x_{1}):\mathsf{Z}(K_{1},x_{1})\mathsf{A}^{-1}(x_{1}):\mathsf{Z}(K_{2},x_{2})\\
=&-(1-K_{2})\delta\left(x_{1}/x_{2}\right):\mathsf{Z}(K_{1},x_{1})\mathsf{Z}(K_{2},x_{1})\mathsf{A}^{-1}(x_{1}):.
\eea
The contribution from $(\rho^{(1)},\rho^{(2)})=(\emptyset,\{\{\{1\}\}\})$ is obtained by switching the parameters as $K_{1}\leftrightarrow K_{2},\,x_{1}\leftrightarrow x_{2}$:
\bea
\mathsf{f}_{\four\four}^{K_{1}|K_{2}}(x_{1}/x_{2}):\mathsf{Z}(K_{2},x_{2})\mathsf{A}^{-1}(x_{2}):\mathsf{Z}(K_{1},x_{1})&=\frac{1-K_{1}\,x_{1}/x_{2}}{1-x_{1}/x_{2}}:\mathsf{Z}(K_{1},x_{1})\mathsf{Z}(K_{2},x_{2})\mathsf{A}^{-1}(x_{2}):,\\
\mathsf{f}^{K_{2}| K_{1}}_{\four\four}(x_{2}/x_{1})\mathsf{Z}(K_{1},x_{1}):\mathsf{Z}(K_{2},x_{2})\mathsf{A}^{-1}(x_{2}):&=K_{1}\frac{1-K_{1}^{-1}x_{2}/x_{1}}{1-x_{2}/x_{1}}:\mathsf{Z}(K_{1},x_{1})\mathsf{Z}(K_{2},x_{2})\mathsf{A}^{-1}(x_{2}):
\eea
which gives
\bea
&\mathsf{f}_{\four\four}^{K_{1}|K_{2}}(x_{1}/x_{2}):\mathsf{Z}(K_{2},x_{2})\mathsf{A}^{-1}(x_{2}):\mathsf{Z}(K_{1},x_{1})-\mathsf{f}^{K_{2}| K_{1}}_{\four\four}(x_{2}/x_{1})\mathsf{Z}(K_{1},x_{1}):\mathsf{Z}(K_{2},x_{2})\mathsf{A}^{-1}(x_{2}):\\
=&(1-K_{1})\delta(x_{1}/x_{2}):\mathsf{Z}(K_{1},x_{1})\mathsf{Z}(K_{2},x_{1})\mathsf{A}^{-1}(x_{1}):.
\eea
Using
\bea
\mathcal{Z}^{\D8}_{\four;4}[\{\{\{1\}\}\},K]=\frac{(1-K)(1-q_{12})(1-q_{13})(1-q_{23})}{(1-q_{1})(1-q_{2})(1-q_{3})(1-q_{123})}
\eea
we have the following one-instanton contribution
\bea
&\mathsf{F}_{1}(x_{1},K_{1}\,|\,x_{2}, K_{2})-\mathsf{F}_{1}(x_{2},K_{2}\,|\, x_{1}, K_{1})\\
=&-\mathcal{Z}^{\D8}_{\four;4}[{\{\{\{1\}\}\}}\,, K_{1}](1-K_{2})\delta\left(x_{1}/x_{2}\right):\mathsf{Z}(K_{1},x_{1})\mathsf{Z}(K_{2},x_{1})\mathsf{A}^{-1}(x_{1}):\\
&+\mathcal{Z}^{\D8}_{\four;4}[{\{\{\{1\}\}\}},K_{2}](1-K_{1})\delta(x_{1}/x_{2}):\mathsf{Z}(K_{1},x_{1})\mathsf{Z}(K_{2},x_{1})\mathsf{A}^{-1}(x_{1}):\\
=&\frac{(1-K_{1})(1-K_{2})(1-q_{12})(1-q_{13})(1-q_{23})}{(1-q_{1})(1-q_{2})(1-q_{3})(1-q_{123})}\\
&\qquad \times \left(:\mathsf{Z}(K_{1},x_{1})\mathsf{Z}(K_{2},x_{1})\mathsf{A}^{-1}(x_{1}):-:\mathsf{Z}(K_{1},x_{1})\mathsf{Z}(K_{2},x_{1})\mathsf{A}^{-1}(x_{1}):\right)\delta\left(x_{1}/x_{2}\right)\\
=&0.
\eea
Therefore, at the one-instanton level, the D8 $qq$-characters commute with each other.

\paragraph{Reduction to D6 \texorpdfstring{$qq$}{qq}-characters}
The D6 $qq$-characters can be obtained by reductions of the D8 $qq$-characters. The relation \eqref{eq:D8D6reduction} gives $\mathsf{Z}(q_{a},x)=\mathsf{W}_{\bar{a}}(x)$. Under this specialization, we actually can show that the solid partition stops its growth in one of the four directions:
\bea
    \mathcal{Z}^{\D8}_{\four;4}[\rho;q_{a}]=0,\quad \rho\in\mathcal{PP}_{a}
    \eea
    where recall that $\mathcal{PP}_{a}$ is the set of plane partitions not extending in the $q_{a}$ direction. Combining with Cor.~\ref{cor:D8-D6reduce-sign}, we have the following.
\begin{proposition}
The D8 $qq$-character reduces to an expansion of plane partitions of $\mathcal{PP}_{a}$. Using 
    \bea
    \Lambda^{q_{a}}_{\four,\pi}(x)=\Lambda_{\bar{a},\pi}(x),\quad \pi\in\mathcal
{PP}_{a},
    \eea
    we have
    \bea
    \mathsf{T}_{\four}^{q_{a}}(x)&=\sum_{\pi\in\mathcal{PP}_{a}}\mathfrak{q}^{|\pi|}(-1)^{\sigma_{4}(\pi)}\mathcal{Z}^{\D8}_{\four;4}[\pi,q_{a}]\Lambda_{\bar{a},\pi}(x)=\mathsf{T}_{\bar{a}}(x).
    \eea
\end{proposition}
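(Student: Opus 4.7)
The plan is to establish the reduction in two stages: a vanishing property for solid partitions outside $\mathcal{PP}_a$, and an explicit matching of the surviving contributions with their D6 counterparts. The crucial observation that drives everything is that at $K = q_a$ the $\U(1|1)$ magnificent four contour integrand of \eqref{eq:M4contourJK} coincides identically with the $n_{\bar a}=1$ tetrahedron instanton integrand of \eqref{eq:tetracontourJK}. Indeed, with $\bar v = K v$ one has $\overline{P}(x)/P(x) = (1 - q_a v/x)/(1 - v/x) = \mathscr{V}_a(v/x)$, while the $\mathcal{G}^k$ prefactor and the $\mathcal{A}_{\mathbb{C}^4}(x_I/x_J)^{-1}$ interaction are already common to both integrands. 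Combining this identification with the JK-residue classifications, by Thm.~\ref{thm:tetraJKpoles} the D6$_{\bar a}$ integrand has nonvanishing JK residues only at poles labelled by plane partitions $\pi \in \mathcal{PP}_a$, while by Thm.~\ref{thm:D8signruleJK} the JK residue of the D8 integrand at a solid partition $\rho$ equals, up to $\mathcal{G}^k$, the sign-weighted character quantity $(-1)^{\sigma_a(\rho)} \mathcal{Z}^{\D8}_{\four;a}[\rho, q_a]$. Consequently, for $\rho \notin \mathcal{PP}_a$ the right-hand side must vanish, which delivers the desired identity $\mathcal{Z}^{\D8}_{\four;a}[\rho, q_a] = 0$.

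For the surviving contributions indexed by $\pi \in \mathcal{PP}_a$, I would identify the vertex-operator and coefficient parts separately. On the operator side, the second identity of \eqref{eq:oprelation1} gives $\mathsf{Z}(q_a, x) = \mathsf{W}_{\bar a}(x)$; and since every box of an embedded $\pi \in \mathcal{PP}_a$ has its $a$-th coordinate equal to $1$, the evaluation $\chi_{\four, x}(\hcube) = \chi_{\bar a, x}(\cube)$ is immediate, yielding $\Lambda^{q_a}_{\four, \pi}(x) = \Lambda_{\bar a, \pi}(x)$. On the coefficient side, Cor.~\ref{cor:D8-D6reduce-sign} directly supplies $(-1)^{\sigma_4(\pi)} \mathcal{Z}^{\D8}_{\four;4}[\pi, q_a] = \widetilde{\mathcal{Z}}^{\D6}_{\bar a}[\pi]$, and Thm.~\ref{thm:D8signindep} ensures that fixing $a = 4$ on the left does not affect the value. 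Summing over $\pi \in \mathcal{PP}_a$ then reproduces the D6 $qq$-character as defined in Def.~\ref{def:D6_qq-ch}, completing the proof.

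The main technical obstacle is justifying the vanishing step rigorously at the character level rather than relying purely on the residue-to-character dictionary. At generic $K$ the character underlying $\mathcal{Z}^{\D8}_{\four;4}[\rho, K]$ is movable by Thm.~\ref{thm:D8movable}, but the codimension-one specialization $K = q_a$ can produce cancellations of zeros against poles and potentially introduce unmovable terms. The cleanest resolution is to invoke Thm.~\ref{thm:D8signruleJK} by continuity in $K$, since both sides depend analytically on $K$ and agree on a dense open set. Alternatively, one could pursue a direct character-theoretic proof by exhibiting an unmovable zero in the character $\mathbf{v}^{\D8}_{\rho}$ evaluated at $K = q_a$ whenever $\rho$ contains a box whose $a$-th coordinate exceeds one; this reduces to a case analysis on the relative position of such a box and the addable/removable boxes of the induced plane-partition layers, which is combinatorially routine but verbose. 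Either route furnishes the vanishing needed for the first stage.
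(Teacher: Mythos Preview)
Your proposal is correct and follows essentially the same route as the paper. The paper's argument is terse: it states the vanishing $\mathcal{Z}^{\D8}_{\four;4}[\rho;q_a]=0$ for $\rho\notin\mathcal{PP}_a$ (justified by the contour-integrand identification already established around Thm.~\ref{thm:D6signruleJK}), then invokes Cor.~\ref{cor:D8-D6reduce-sign} for the surviving coefficients and the operator relation $\mathsf{Z}(q_a,x)=\mathsf{W}_{\bar a}(x)$ from \eqref{eq:oprelation1} for the vertex-operator part---exactly your two stages. Your caution about movability at the special value $K=q_a$ is well-placed but not a genuine obstacle: both sides of Thm.~\ref{thm:D8signruleJK} are rational in $K$, so continuity suffices, and in fact the mechanism for the vanishing is precisely that the character acquires an unmovable $+1$ term (contributing a factor of zero to the index) whenever $\rho$ contains a box with $a$-th coordinate exceeding one.
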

Combining with Conj.~\ref{conj:D8commutativity}, we obtain the commutativity of the D6 $qq$-characters.

\begin{corollary}\label{cor:D6commutativity}
    The D6 $qq$-characters all commute with each other:
    \bea
\mathsf{f}^{q_{b}|q_{a}}_{\four\four}\left(x_{2}/x_{1}\right)\mathsf{T}_{\bar{a}}(x_{1})\mathsf{T}_{\bar{b}}(x_{2})-\mathsf{f}^{q_{a}|q_{b}}_{\four\four}\left(x_{1}/x_{2}\right)\mathsf{T}_{\bar{b}}(x_{2})\mathsf{T}_{\bar{a}}(x_{1})=0,\quad a,b\in\four.
    \eea
    Moreover, they also commute with the D8 $qq$-characters:
    \bea
    \mathsf{f}^{q_{a}|K}_{\four\four}\left(x_{2}/x_{1}\right)\mathsf{T}_{\four}^{K}(x_{1})\mathsf{T}_{\bar{a}}(x_{2})-\mathsf{f}^{K|q_{a}}_{\four\four}\left(x_{1}/x_{2}\right)\mathsf{T}_{\bar{a}}(x_{2})\mathsf{T}^{K}_{\four}(x_{1})=0,\quad a\in\four.
    \eea
\end{corollary}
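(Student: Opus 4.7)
The plan is to derive Corollary~\ref{cor:D6commutativity} as a direct specialization of Conjecture~\ref{conj:D8commutativity}, using the reduction mechanism established in the proposition immediately preceding the corollary. The key input is the identity $\mathsf{T}^{q_a}_{\four}(x) = \mathsf{T}_{\bar{a}}(x)$ for all $a\in\four$, which holds because $\mathsf{Z}(q_a,x) = \mathsf{W}_{\bar{a}}(x)$ and because the sum over solid partitions collapses to a sum over plane partitions in $\mathcal{PP}_a$ after using $\mathcal{Z}^{\D8}_{\four;4}[\rho;q_a] = 0$ for $\rho\notin\mathcal{PP}_a$ together with Corollary~\ref{cor:D8-D6reduce-sign} to match the coefficients $(-1)^{\sigma_4(\pi)}\mathcal{Z}^{\D8}_{\four;4}[\pi,q_a] = \widetilde{\mathcal{Z}}^{\D6}_{\bar{a}}[\pi]$.

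First, I would specialize the assumed D8 quadratic relation
\begin{equation*}
\mathsf{f}^{K_{2}|K_{1}}_{\four\four}(x_{2}/x_{1})\mathsf{T}_{\four}^{K_{1}}(x_{1})\mathsf{T}^{K_{2}}_{\four}(x_{2})-\mathsf{f}^{K_{1}|K_{2}}_{\four\four}(x_{1}/x_{2})\mathsf{T}_{\four}^{K_{2}}(x_{2})\mathsf{T}^{K_{1}}_{\four}(x_{1})=0
\end{equation*}
at $K_1 = q_a$, $K_2 = q_b$ to get the D6--D6 relation, and at $K_1 = q_a$ (resp.\ $K_2 = q_a$) with $K_2$ (resp.\ $K_1$) left as a free parameter $K$ to get the D6--D8 relation. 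Once the reduction of the $qq$-character is inserted, the zero--mode prefactors $\mathsf{f}^{K_1|K_2}_{\four\four}$ become exactly $\mathsf{f}^{q_b|q_a}_{\four\four}$ and $\mathsf{f}^{q_a|K}_{\four\four}$, giving the stated identities.

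The main step to check carefully is that the specialization $K\to q_a$ is regular, i.e.\ that no poles or zeros appear in $\mathsf{f}^{K_1|K_2}_{\four\four}(x)$ or inside the coefficients $\mathcal{Z}^{\D8\tbar\D8}_{K_1|K_2}(x_1,\rho^{(1)}\,|\,x_2,\rho^{(2)})$ under this limit for generic spectral parameters $x_1,x_2$. Using the product representation of $\mathcal{Z}^{\D8\tbar\D8}_{\text{1-loop}}$ in \eqref{eq:D8oneloop} and the explicit character \eqref{eq:mag4Nekrasovfact}, the $K$--dependence enters only through factors of the form $(1-K^{\pm1}\cdots)$ with nontrivial $q_1,q_2,q_3,q_4$ content, so the substitution $K=q_a$ does not produce any collision with the remaining rational factors, and the limit can be taken term by term inside the formal sum over $\vec{\rho}$.

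The only genuinely delicate point is that the D8 commutativity is still a conjecture, so the corollary is conditional on Conjecture~\ref{conj:D8commutativity}; moreover the higher--order pole contributions at $x_1=x_2$ excluded from the verification of Conj.~\ref{conj:D8commutativity} must be shown to remain harmless after specialization, since the reduced D6 operator has a simpler pole structure and one might worry about unexpected cancellations restoring the excluded singularities. I would address this by verifying that restricting the sum over $\rho^{(1)},\rho^{(2)}$ to elements of $\mathcal{PP}_a\times\mathcal{PP}_b$ leaves only single--order poles in the resulting $\mathcal{Z}^{\D6\tbar\D6}_{\bar{a};K_1\,|\,\bar{b};K_2}$ factor of \eqref{eq:D6supergroupcontraction2}, which is expected since plane partitions do not produce the diagonal stacking configurations responsible for the higher--order singularities in the magnificent four case.
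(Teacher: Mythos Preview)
Your approach is essentially the same as the paper's: the corollary is obtained by specializing Conjecture~\ref{conj:D8commutativity} at $K_{1}=q_{a}$, $K_{2}=q_{b}$ (or leaving one generic) and invoking the reduction $\mathsf{T}^{q_{a}}_{\four}(x)=\mathsf{T}_{\bar{a}}(x)$ established in the preceding proposition. The paper's argument is in fact less detailed than yours—it simply states ``Combining with Conj.~\ref{conj:D8commutativity}, we obtain the commutativity of the D6 $qq$-characters''—so your additional discussion of regularity of the specialization and the fate of higher-order poles goes beyond what the paper provides.
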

Conj.~\ref{conj:D8commutativity} and Cor.~\ref{cor:D6commutativity} show that all the D6 and D8 $qq$-characters commute with each other. This property is surprising from the quantum algebraic viewpoint because usually when the generators of the deformed W-algebra commute with each other, one would expect it to be a trivial algebra.

\paragraph{Relation with the plethystic exponential formulas}
\remred{}
Although, the quantum algebraic meaning of the commutativity of the D6 and D8 $qq$-characters is not so clear for the moment, it has a physical meaning. The commutativity of the D6 and D8 $qq$-characters is actually a consequence of the fact that the partition functions of the tetrahedron instantons and magnificent four do \textit{not} depend on the Coulomb branch parameters (spectral parameters) and have a beautiful plethystic exponential formula as explained in Thm.~\ref{thm:M4PEformula}.

Using \eqref{eq:D8qqBPSCFT}, we have
\bea
\prod_{\beta>\alpha}\mathsf{f}_{\four\four}^{K_{\alpha}|K_{\beta}}(x_{\alpha}/x_{\beta})\bra{0}\mathsf{T}^{K_{n}}_{\four}(x_{n})\cdots \mathsf{T}_{\four}^{K_{1}}(x_{1})\ket{0}=\mathcal{Z}^{\D8}_{\text{inst.}}\,\left[\mathfrak{q},\{K_{\alpha}\}_{\alpha=1}^{n};q_{1,2,3,4}\right],
\eea
where the right hand side given in Thm.~\ref{thm:M4PEformula}. Since the right hand side does not depend on $\{x_{\alpha}\}_{\alpha=1}^{n}$, the left hand side should also not depend on them. Focusing on $n=2$, this gives exactly the commutativity in Conj.~\ref{conj:D8commutativity}. Similarly, this discussion is applicable to the tetrahedron instanton case by tuning the parameters $\{K_{\alpha}\}_{\alpha=1}^{n}$.

We stress that we are \textit{not} saying that the commutativity of the D6, D8 $qq$-characters proves the independence of the Coulomb branch parameters nor the existence of a plethystic formula. We are saying that if such kind of properties exist, then we should have the commutativity of the $qq$-characters and indeed for the D6, D8 $qq$-characters, it is true. 

We also note that this commutativity is \textit{not} satisfied for D4 and D6 $qq$-characters. As mentioned in section~\ref{sec:generalD6qq}, we can introduce a $\text{D6}\tbar\overline{\text{D6}}$ $qq$-character which after tachyon condensation, we can reproduce the spiked instanton D4 $qq$-characters. The commutation relation of such $\text{D6}\tbar\overline{\text{D6}}$ $qq$-characters actually reproduces extra terms which imply nontrivial quadratic relations for the affine quiver W-algebra. We postpone a detailed study of such cases for future work.


\section{Relation with quantum toroidal \texorpdfstring{$\mathfrak{gl}_{1}$}{gl(1)}}
\label{sec:toroidal_alg}
In this section, we review the quantum toroidal $\mathfrak{gl}_{1}$ and point out observations regarding the $qq$-characters we introduced in the previous sections. We will show that the $\D2,\D4,\D6$ $qq$-characters are related with the vertical representations of the quantum toroidal $\mathfrak{gl}_{1}$. At the end, we have the following correspondence:
\begin{equation*}
    \renewcommand\arraystretch{1.2}{
    \begin{tabular}{|c|c|c|}\hline
       section & $qq$-characters &  quantum toroidal $\mathfrak{gl}_{1}$\\
     \hline \ref{sec:vectorrep}  & D2 $qq$-character & vector representation \\
      \ref{sec:verticalFockrep} &D4 $qq$-character & Fock representation \\
      \ref{sec:MacMahonrep} &D6 $qq$-character & MacMahon representation \\\hline
    \end{tabular}}
\end{equation*}


\subsection{Quantum toroidal \texorpdfstring{$\mathfrak{gl}_{1}$}{gl(1)}}\label{sec:QTgl1}
The quantum toroidal $\mathfrak{gl}_{1}$ is an infinite-dimensional quantum algebra with two independent deformation parameters \cite{ding1997generalization,miki2007q,FFJMM1,Feigin2011plane,Feigin2011}. We follow the notations in \cite{DIMreview} (see also \cite[section 5.2, 5.3]{Noshita:2022otp} for a review).

\begin{definition}
Let $\mathsf{q}_{1},\mathsf{q}_{2},\mathsf{q}_{3}$ be the deformation parameters\footnote{In the literature, the deformation parameters of the algebra are denoted $q_{1},q_{2},q_{3}$. We use a different notation to prevent confusion with the parameters $q_{1},q_{2},q_{3},q_{4}$ introduced in this paper. As mentioned in footnote \ref{footnote:structure-function}, the structure function $\sfg(z)$ is related to the structure function $g_{\bar{4}}(z)$ after taking the limit $q_{4}\rightarrow 1$.}  with the condition $\sfq_{1}\sfq_{2}\sfq_{3}=1$. The quantum toroidal $\mathfrak{gl}_{1}$, which is denoted $\mathcal{E}$, is generated by three Drinfeld currents 
\begin{equation}
    E(z)=\sum_{m\in\mathbb{Z}}E_{m}z^{-m},\quad F(z)=\sum_{m\in\mathbb{Z}}F_{m}z^{-m},\quad K^{\pm}(z)=K^{\pm}\exp\left(\sum_{r>0}\mp\frac{\kappa_{r}}{r}H_{\pm r}z^{\mp r}\right)
\end{equation}
and central elements 
\begin{equation}
    C,\quad K^{-}=(K^{+})^{-1}.
\end{equation}
The defining relations are
\begin{align}
\begin{split}
    E(z)E(w)=\sfg(z/w)E(w)E(z),&\quad F(z)F(w)=\sfg(z/w)^{-1}F(w)F(z),\\
    K^{\pm}(z)K^{\pm}(w)=K^{\pm}(w)K^{\pm}(z),&\quad K^{-}(z)K^{+}(w)=\frac{\sfg(C^{-1}z/w)}{\sfg(Cz/w)}K^{+}(w)K^{-}(z),\\
    K^{\pm}(C^{(1\mp 1)/2}z)E(w)&=\sfg(z/w)E(w)K^{\pm}(C^{(1\mp1)/2}z),\\
    K^{\pm}(C^{(1\pm 1)/2}z)F(w)&=\mathsf{g}(z/w)^{-1}F(w)K^{\pm}(C^{(1\pm 1)/2}z),\\
    [E(z),F(w)]=\tilde{g}&\left(\delta\left(\frac{Cw}{z}\right)K^{+}(z)-\delta\left(\frac{Cz}{w}\right)K^{-}(w)\right)
\end{split}
\end{align}
where 
\begin{equation}\label{eq:gl1structurefunction}
    \sfg(z)=\frac{\prod_{i=1}^{3}(1-\sfq_{i}z)}{\prod_{i=1}^{3}(1-\sfq_{i}^{-1}z)},\quad \kappa_{r}=\prod_{i=1}^{3}(\sfq_{i}^{r/2}-\sfq_{i}^{-r/2}),
\end{equation}
and $\tilde{g}=1/\kappa_{1}$.
\end{definition}

Additionally, one needs the so-called Serre relations which are cubic relations of $E(z),F(z)$. The representations used in this thesis automatically satisfy them and so we omit the discussions (see \cite{DIMreview}). The function $\sfg(z)$ is called the \emph{structure function} of the quantum toroidal $\mathfrak{gl}_{1}$.

The quantum toroidal $\mathfrak{gl}_{1}$ has a Hopf algebraic structure. We only list down the coproduct structure:
\begin{align}\label{eq:coproduct}
\begin{split}
\Delta E(z)&=E(z)\otimes 1+K^{-}(C_{1}z)\otimes E(C_{1}z),\\
\Delta F(z)&=F(C_{2}z)\otimes K^{+}(C_{2}z)+1\otimes F(z),\\
\Delta K^{+}(z)&=K^{+}(z)\otimes K^{+}(C_{1}^{-1}z),\\
\Delta K^{-}(z)&=K^{-}(C_{2}^{-1}z)\otimes K^{-}(z),\\
\Delta(X)&=X\otimes X,\quad X=C,K^{-},
\end{split}
\end{align}
where $C_{1}=C\otimes 1$ and $C_{2}=1\otimes C$. Using this coproduct, we can construct tensor product representations.

The representations of the quantum toroidal $\mathfrak{gl}_{1}$ are obtained by determining the values of the central elements $C,K^{-}$. We have two classes of representations called \emph{vertical representations} and \emph{horizontal representations}. Vertical representations are representations when the central element $C$ is trivial: $C=1$. We have three types of them.
\begin{itemize}
    \item Vector representation \cite{FFJMM1}: $(C,K^{-})=(1,1)$
    \item Fock representation \cite{Feigin2011}: $(C,K^{-})=(1,\sfq_{c}^{1/2})\,(c=1,2,3)$
    \item MacMahon representation \cite{Feigin2011plane}: $(C,K^{-})=(1,K^{1/2})\,(K\in\mathbb{C}^{\times})$
\end{itemize}
Multi-dimensional partitions appear as the bases of the representation spaces of these representations. For the vector representation, 1d partitions labeled by integers appear. For the Fock and MacMahon representations, 2d and 3d partitions appear respectively (see \cite{DIMreview} and \cite[section 5.3.1]{Noshita:2022otp} for the derivations). 

On the other hand, horizontal representations are representations where the central charges are $(C,K^{-})=(\sfq_{c}^{1/2},1)\,(c=1,2,3)$ \cite{miki2007q,bershtein2018plane,FHSSY:2010,Kojima2019,Kojima2021,Harada:2021xnm}. Drinfeld currents are represented in vertex operators for these representations. See for example \cite{DIMreview} and \cite[section 5.3.2]{Noshita:2022otp} for the explicit derivation of these representations.
\subsection{Vector representation and D2 \texorpdfstring{$qq$}{qq}-character}\label{sec:vectorrep}
There are three types of vector representations with central charges $(C,K^{-})=(1,1)$ and the actions of the Drinfeld currents are
\begin{align}\label{eq:vectorrep}
  \begin{split}
        K^{\pm}(z)[u]^{(c)}_{j}=&\left[\Psi_{[u]^{(c)}_{j}}(z)\right]^{z}_{\pm}[u]^{(c)}_{j}\eqqcolon[S_{c}\left(u\sfq_{c}^{j}/z\right)]_{\pm}[u]^{(c)}_{j},\\
        E(z)[u]^{(c)}_{j}=&\mathcal{E}\delta\left(u\sfq_{c}^{j}/z\right)[u]^{(c)}_{j+1},\\
        F(z)[u]^{(c)}_{j}=&\mathcal{F}\delta\left(u\sfq_{c}^{j-1}/z\right)[u]^{(c)}_{j-1},\quad c=1,2,3,\quad j\in\mathbb{Z}
   \end{split}
\end{align}
where
\begin{equation}
    \mathcal{E}\mathcal{F}=\Tilde{g}\frac{(1-\sfq_{c+1}^{-1})(1-\sfq_{c-1}^{-1})}{(1-\sfq_{c})},\quad S_{c}(z)=\frac{(1-\sfq_{c-1}z)(1-\sfq_{c+1}z)}{(1-z)(1-\sfq_{c-1}\sfq_{c+1}z)}.
\end{equation}
We denote these representations $\mathcal{V}_{c}(u),\,(c=1,2,3)$. The bases $\{[u]_{j}^{(c)}\}_{j\in\mathbb{Z}}$ are represented by 1d partitions:
\begin{equation}
    \begin{tikzpicture}[scale=1.5]
\node at (-2,0.35) {$[u]_{j}^{(c)}=$};
\draw[thick] (0,1)--(0,-0.2);
    \draw[->] (-1,0)--(4.5,0);
    \node at (4.5,0) [right] {$\sfq_{c}$};
    \draw (-1,0.7)--(3.5,0.7);
    \draw (3.5,0.7)--(3.5,0);
    \draw (-0.7,0.7)--(-0.7,0);
    \draw (0.7,0.7)--(0.7,0);
    \draw (1.4,0.7)--(1.4,0);
    \draw (2.1, 0.7)--(2.1,0);
    \node at (1.8, 0.35) {$\cdots$};
    \node at (2.5, 0.35) {$\cdots$};
    \draw (2.8, 0.7)--(2.8,0);
    \node at (0.35, 0) [below] {$1$};
    \node at (1.05,0)[below] {$2$};
    \node at (1.75,0)[below] {$\cdots$};
    \node at (2.45,0)[below] {$\cdots$};
    \node at (3.15,0)[below] {$j$};
    \draw[->] (3.85,1.05)--(3.15,0.35);
    \node at (3.85,1.05)[right] {$u\sfq_{c}^{j-1}$};
    \draw[->] (0.35, 1.05)--(0.35,0.45);
    \node at (0.35, 1.05)[above] {$u$};
    \node at (0.35, 0.20) [right]{$\longrightarrow$};
    \node at (1.05, 0.20) [right,above] {$\sfq_{c}$};
\end{tikzpicture}
\end{equation}
The operator $K^{\pm}(z)$ acts diagonally, $E(z)/F(z)$ adds/removes boxes to/from the configuration. 

To relate the D2 $qq$-characters with the vector representations, we choose one specific direction, which is $\mathbb{C}_{4}$, in the gauge origami system. Let us study the relation of the $qq$-characters included in the $\mathbb{C}^{3}_{123}\times \mathbb{S}^{1}$. The operator products of $\mathsf{S}_{c}(z)\,(c=1,2,3)$ with $\mathsf{S}_{4}(q_{4}z)$ are
\begin{subequations}
\begin{align}
    \mathsf{S}_{c}(q_{c}^{j}u)\mathsf{S}_{4}(q_{4}z)&=\left[\mathscr{S}_{\overbar{c4}}(uq_{1}^{j}/z)\right]^{z}_{-}:\mathsf{S}_{c}(q_{c}^{j}u)\mathsf{S}_{4}(q_{4}z):\\
    \mathsf{S}_{4}(q_{4}z)\mathsf{S}_{c}(uq_{c}^{j})&=\left[\mathscr{S}_{\overbar{c4}}(uq_{c}^{j}/z)\right]^{z}_{+}:\mathsf{S}_{c}(q_{c}^{j}u)\mathsf{S}_{4}(q_{4}z):.
\end{align}
\end{subequations}
After taking the limit $q_{4}\rightarrow 1$, we can see that we have
\begin{equation}
    \mathscr{S}_{\overbar{c4}}(uq_{c}^{j}/z)\rightarrow S_{c}(u\sfq_{c}^{j}/z),\quad q_{1},q_{2},q_{3}\rightarrow \sfq_{1},\sfq_{2},\sfq_{3}.
\end{equation}
Comparing with \eqref{eq:vectorrep}, after taking the limit $q_{4}\rightarrow 1$, we can relate the monomial terms of the D2 $qq$-characters with the bases of the vector representation of quantum toroidal $\mathfrak{gl}_{1}$ as\footnote{Note that this is not a strict correspondence. In the limit $q_{4}\rightarrow 1$, some of the vertex operators will diverge and they do not obey the defining relations of the quantum toroidal $\mathfrak{gl}_{1}$. Moreover, for the moment, we do not know how to relate the other operators $E(z),\,F(z)$.}
\begin{align}
    \mathsf{S}_{4}(q_{4}z)\rightarrow K^{\pm}(z),\quad \mathsf{S}_{1}(uq_{1}^{j})\rightarrow [u]_{j}^{(1)}.
\end{align}
This correspondence strengthens the interpretation in \eqref{eq:D2vectorfigure1}. Due to this observation, we can call the D2 $qq$-characters the \textbf{vector $qq$-characters}.

\subsection{Fock representation and D4 \texorpdfstring{$qq$}{qq}-character}\label{sec:verticalFockrep}


Fock representations are representations with central charges $(C,K^{-})=(1,\sfq_{c}^{1/2}),\,(c=1,2,3)$. We denote them $\mathcal{F}_{c}(u),\, (c=1,2,3)$, respectively. The actions of the Drinfeld currents are given
\begin{align}\label{eq:Fockrep}
\begin{split}
    K^{\pm}(z)\ket{u,\lambda}^{(c)}=&\left[\Psi_{\lambda,u}^{(c)}(z)\right]^{z}_{\pm}\ket{u,\lambda}^{(c)}=\left[\sfq_{c}^{-1/2}\frac{\mathcal{Y}^{(c)}_{\lambda,u}(\sfq_{c}^{-1}z)}{\mathcal{Y}^{(c)}_{\lambda,u}(z)}\right]^{z}_{\pm}\ket{u,\lambda}^{(c)},\\
E(z)\ket{u,\lambda}^{(c)}=&\frac{1-\sfq_{c}}{\kappa_{1}}\sum_{\Abox\in A(\lambda)}\delta\left(\chi_{u}^{(c)}(\Bbox)/z\right)\underset{z=\chi_{u}^{(c)}(\Abox)}{\Res}z^{-1}\mathcal{Y}^{(c)}_{\lambda,u}(z)^{-1}\ket{u,\lambda+\Bbox}^{(c)},\\
    F(z)\ket{u,\lambda}^{(c)}=&-\frac{1-\sfq_{c}^{-1}}{\kappa_{1}}\sfq_{c}^{-1/2}\sum_{\Abox\in R(\lambda)} \delta\left(\chi_{u}^{(c)}(\Bbox)/z\right)\underset{z=\chi_{u}^{(c)}(\Abox)}{\Res}z^{-1}\mathcal{Y}_{\lambda,u}^{(c)}(\sfq_{c}^{-1}z)\ket{u,\lambda-\Bbox}^{(c)}
\end{split}
\end{align}
where
\begin{equation}
    \mathcal{Y}_{\lambda,u}^{(c)}(z)=(1-u/z)\prod_{\Abox\in\lambda}S_{c}(\chi^{(c)}_{u}(\Bbox)/z),\quad \chi^{(c)}_{u}(\Bbox)=u\sfq_{c+1}^{i-1}\sfq_{c-1}^{j-1}\,\,(i,j\geq1 ).\label{def-Yc}
\end{equation}
Note that the eigenvalue $\Psi_{\lambda,u}^{(c)}(z)$ can be rewritten as 
\begin{equation}\label{eq:FockCartan}
\Psi^{(c)}_{\lambda,u}(z)=\sfq_{c}^{-1/2}\frac{1-\sfq_{c}u/z}{1-u/z}\prod_{\Abox\in\lambda}\sfg\left(\frac{z}{\chi_{u}^{(c)}(\Bbox)}\right)
\end{equation}
The bases are represented by 2d partitions:
\begin{equation}
        \ket{u,\lambda}^{(c)}=\qquad \adjustbox{valign=c}{\begin{tikzpicture}[scale=0.7]
        \draw[->] (-1,0)--(4,0);
        \node[above] at (-0.5,4){$\sfq_{c-1}$};
        \node [right] at (4,0){$\sfq_{c+1}$};
        \node [below] at (1.25,0){$i$};
         \draw[->]   (-0.5,-0.5)--(-0.5,4);
         \draw (0.2,3.5)--(0.2,0.7);
         \draw (0.9,2.8)--(0.9,0.7);
         \draw (1.6,2.1)--(1.6,0.7);
         \draw (2.3,1.4)--(2.3,0.7);
         \draw (2.3,0.7)--(-0.5,0.7);
         \draw (2.3,1.4)--(-0.5,1.4);
         \draw (1.6,2.1)--(-0.5,2.1);
         \draw (0.9,2.8)--(-0.5,2.8);
         \draw (0.2,3.5)--(-0.5,3.5);
        \draw (-0.5,0.7)--(3,0.7);
        \draw (3,0)--(3,0.7);
        \draw (0.2,0)--(0.2,0.7);
        \draw (0.9,0)--(0.9,0.7);
         \draw (1.6,0)--(1.6,0.7);
          \draw (2.3,0)--(2.3,0.7);
          \draw (-0.15,0.35)--++(-0.7,-1);
          \node[left] at (-0.85,-0.65){$u$};
          \node [left] at (-0.5,1.75){$j$};
          \draw  (1.25,1.75)--++(0.9,0.9);
          \node[right] at (2.2,2.65) {$u\sfq_{c+1}^{i-1}\sfq_{c-1}^{j-1}$};
        \end{tikzpicture}
        }
    \end{equation}

Similar to the D2 case, we choose $\mathbb{C}_{4}$ to be a specific direction in the gauge origami system. Focusing on $\mathsf{T}_{12}(x)$ and using \eqref{eq:D4-D2contraction}, we obtain
\begin{subequations}
\begin{align}
     :\mathsf{X}_{12}(u)\prod_{\Abox\in\lambda}\mathsf{A}^{-1}(\chi_{12,u}(\Bbox)):\mathsf{S}_{4}(q_{4}z)&=\left[q_{3}^{-1}\frac{\mathscr{Y}^{12}_{\lambda,u}(q_{3}^{-1}z)}{\mathscr{Y}^{12}_{\lambda,u}(z)}\right]^{z}_{-} :\mathsf{X}_{12}(u)\prod_{\Abox\in\lambda}\mathsf{A}^{-1}(\chi_{12,u}(\Bbox))\mathsf{S}_{4}(q_{4}z):,\\
    \mathsf{S}_{4}(q_{4}z):\mathsf{X}_{12}(u)\prod_{\Abox\in\lambda}\mathsf{A}^{-1}(\chi_{12,u}(\Bbox)):&=\left[q_{3}^{-1}\frac{\mathscr{Y}^{12}_{\lambda,u}(q_{3}^{-1}z)}{\mathscr{Y}^{12}_{\lambda,u}(z)}\right]^{z}_{+} :\mathsf{X}_{12}(u)\prod_{\Abox\in\lambda}\mathsf{A}^{-1}(\chi_{12,u}(\Bbox))\mathsf{S}_{4}(q_{4}z):
\end{align}
\end{subequations}
and at the limit $q_{4}\rightarrow 1$, we have 
\begin{equation}
    \mathscr{Y}_{\lambda,u}^{12}(z)\rightarrow \mathcal{Y}^{(3)}_{\lambda,u}(z),\quad \chi_{12,u}(\Bbox)\rightarrow \chi_{u}^{(3)}(\Bbox).
\end{equation}
Thus, at the limit $q_{4}\rightarrow 1$, we can relate the monomial terms of the D4 $qq$-characters with the bases of the Fock representation of quantum toroidal $\mathfrak{gl}_{1}$ as 
\begin{equation}
    \mathsf{S}_{4}(q_{4}z)\longrightarrow K^{\pm}(z),\quad 
    {:\mathsf{X}_{12}(u)\prod_{\Abox\in\lambda}\mathsf{A}^{-1}(\chi_{12,u}(\Bbox)):}\longrightarrow \ket{u,\lambda}^{(3)}.
\end{equation}
In this sense, we can call the D4 $qq$-characters the \textbf{Fock $qq$-characters}.

\begin{remark}
    Actually, the Fock representation can be derived by an infinite number of tensor products of the vector representation \cite{FFJMM1}. The coproduct structure enables us to consider the action of the Drinfeld currents on tensor product representations $\otimes_{i=1}^{N}\mathcal{V}_{c-1}(u_{i})$. We tune the spectral parameter as $u_{i}=u\sfq_{c+1}^{i-1}$ and take the limit $N\rightarrow \infty$. After proper regularization of the infinite products, we obtain $\otimes_{i=1}^{\infty}\mathcal{V}_{c-1}(u\sfq_{c+1}^{i-1})\simeq \mathcal{F}_{c}(u) $. This property corresponds with the fact that the D4 $qq$-character can be obtained by the fusion process of the D2 $qq$-characters as discussed in Thm.~\ref{thm:D2toD4fusion}.
\end{remark}

\paragraph{Higher rank $qq$-characters}
Let us show that the higher rank D4 $qq$-characters correspond to the tensor product representations of the vertical Fock representations. We only focus on the $qq$-characters with no negative weights. The monomial terms appearing are
\begin{equation}
    :\prod_{\alpha=1}^{n_{12}}\Lambda_{12,\lambda_{\alpha}}(x_{12,\alpha})\prod_{\beta=1}^{n_{13}}\Lambda_{13,\mu_{\beta}}(x_{13,\beta})\prod_{\gamma=1}^{n_{23}}\Lambda_{23,\nu_{\gamma}}(x_{23,\gamma}):
\end{equation}
where $\lambda_{\alpha},\mu_{\beta},\nu_{\gamma}$ are Young diagrams. The coefficient appearing after taking the operator product with $\mathsf{S}_{4}(q_{4}z)$ is 
\begin{equation}
    \prod_{\alpha=1}^{n_{12}}q_{3}^{-1}\frac{\mathscr{Y}^{12}_{\lambda_{\alpha},x_{12,\alpha}}(q_{3}^{-1}z)}{\mathscr{Y}^{12}_{\lambda_{\alpha},x_{12,\alpha}}(z)}\prod_{\beta=1}^{n_{13}}q_{2}^{-1}\frac{\mathscr{Y}^{13}_{\mu_{\beta},x_{13,\beta}}(q_{2}^{-1}z)}{\mathscr{Y}^{13}_{\mu_{\beta},x_{13,\beta}}(z)}\prod_{\gamma=1}^{n_{23}}q_{1}^{-1}\frac{\mathscr{Y}^{23}_{\nu_{\gamma},x_{23,\gamma}}(q_{1}^{-1}z)}{\mathscr{Y}^{23}_{\nu_{\gamma},x_{23,\gamma}}(z)}.
\end{equation}
After taking the limit $q_{4}\rightarrow 1$, it corresponds with the Cartan eigenvalue of the tensor product representation $\bigotimes_{\alpha=1}^{n_{12}}\mathcal{F}_{3}(x_{12,\alpha})\otimes \bigotimes_{\beta=1}^{n_{13}}\mathcal{F}_{2}(x_{13,\beta})\otimes\bigotimes_{\gamma=1}^{n_{23}}\mathcal{F}_{1}(x_{23,\gamma})$. Note that the Cartan eigenvalues of tensor product representations are simply the products of the Cartan eigenvalue of each representation. This is because at $C=1$, using the coproduct structure \eqref{eq:coproduct}, we have $\Delta K^{\pm}(z)=K^{\pm}(z)\otimes K^{\pm}(z)$. Note also that the ordering of the tensor products does not matter because of the existence of the universal R-matrix\footnote{Recently there have been attempts to construct the $qq$-character using the R-matrix of the quantum toroidal $\mathfrak{gl}_1$~\cite{Liu:2022gwf,Bayindirli:2023byn}.
} \cite{miki2007q,Feigin:2015raa,Feigin:2016pld}.


\subsection{MacMahon representation and D6 \texorpdfstring{$qq$}{qq}-character}\label{sec:MacMahonrep}
MacMahon representations are representations with central charges $(C,K^{-})=(1,K^{1/2})$ where $K\in\mathbb{C}^{\times}$ is a generic parameter. The action of the Drinfeld currents is given as
\bea\label{eq:MacMahonrep}
    K^{\pm}(z)|u,\pi\rangle&=\left[\Psi_{\pi,u}(z)\right]^{z}_{\pm}|u,\pi\rangle,\\
    E(z)|u,\pi\rangle=&\sum_{\scube\in A(\pi)}\#\delta\left(\frac{z}{\chi_{u}(\cube)}\right)
    \sqrt{\underset{z=\chi_{u}(\scube)}{\mathrm{Res}}z^{-1}\Psi_{\pi,u}(z)}\,
    |u,\pi+\cube\rangle,\\
    F(z)|u,\pi\rangle&=\sum_{\scube\in R(\pi)}\#\delta\left(\frac{z}{\chi_{u}(\cube)}\right)
    \sqrt{\underset{z=\chi_{u}(\scube)}{\mathrm{Res}} z^{-1}\Psi_{\pi,u}(z)}\,
    |u,\pi-\cube\rangle,
\eea 
where $\#$ is some coefficient factor and 
\begin{equation}\label{eq:CartanMacMahon}
    \chi_{u}(\cube)=u\sfq_{1}^{i-1}\sfq_{2}^{j-1}\sfq_{3}^{k-1},\quad  \Psi_{\pi,u}(z)=K^{-1/2}\frac{1-Ku/z}{1-u/z}\prod_{\scube\in\pi}\sfg\left(\frac{z}{\chi_{u}(\cube)}\right).
\end{equation}
 We denote this representation $\mathcal{M}(u,K)$. The explicit coefficients of the right-hand side of $E(z),F(z)$ are omitted. The $qq$-character $\mathsf{T}_{123}(x)$ and screening current $\mathsf{S}_{4}(x')$ gives 
\begin{subequations}
\begin{align}
  :\frac{\mathsf{W}_{\bar{4}}(u)}{\mathsf{W}_{\bar{4}}(Ku)}\prod_{\scube\in\pi}\mathsf{A}^{-1}(\chi_{\overbar{4},u}(\cube)):\mathsf{S}_{4}(q_{4}z)
  &=-q_{4}u\left[\mathscr{W}^{\bar{4},K}_{\pi,u}(z)^{-1}\right]^{z}_{-}:\frac{\mathsf{W}_{\bar{4}}(u)}{\mathsf{W}_{\bar{4}}(Ku)}\prod_{\scube\in\pi}\mathsf{A}^{-1}(\chi_{\overbar{4},u}(\cube))\mathsf{S}_{4}(q_{4}z):,\\
  \mathsf{S}_{4}(q_{4}z):\frac{\mathsf{W}_{\bar{4}}(u)}{\mathsf{W}_{\bar{4}}(Ku)}\prod_{\scube\in\pi}\mathsf{A}^{-1}(\chi_{\overbar{4},u}(\cube)):&=-q_{4}u\left[\mathscr{W}_{\pi,x}^{\bar{4},K}(z)^{-1}\right]^{z}_{+}:\frac{\mathsf{W}_{\bar{4}}(u)}{\mathsf{W}_{\bar{4}}(Ku)}\prod_{\scube\in\pi}\mathsf{A}^{-1}(\chi_{\overbar{4},u}(\cube))\mathsf{S}_{4}(q_{4}z):.
\end{align}
\end{subequations}
Taking the limit $q_{4}\rightarrow 1$, we have $\chi_{\bar{4},u}(\cube)\rightarrow \chi_{u}(\cube)$ and 
\begin{equation}
g_{\bar{4}}(z)\longrightarrow \sfg(z),\quad 
\mathscr{W}_{\pi,u}^{\bar{4},K}(z)^{-1}\longrightarrow \frac{1-Ku/z}{1-u/z}\prod_{\scube\in\pi}\sfg\left(\frac{z}{\chi_{u}(\cube)}\right)
\end{equation}
which gives the identification
\begin{equation}
\mathsf{S}_{4}(q_{4}z)\longrightarrow K^{\pm}(z),\quad 
:\mathsf{W}_{\bar{4}}(u)\prod_{\scube\in\pi}\mathsf{A}^{-1}(\chi_{\overbar{4},u}(\cube)):\,\longrightarrow \ket{u,\pi}.
\end{equation}
Thus, at the limit $q_{4}\rightarrow 1$, we can relate the monomial terms of the D6 $qq$-characters with the bases of the MacMahon representation of the quantum toroidal 
$\mathfrak{gl}_{1}$. In this sense, we call the D6 $qq$-characters the \textbf{MacMahon $qq$-characters}. Under this identification, we can see that the distance between the D6 and $\overline{\D6}$ branes, denoted as $K$, appear as the central charge of the MacMahon representation.

\begin{remark}
    Instead of considering the $\U(1|1)$ theory of D6-branes, we can consider the $\U(1)$ theory by taking the limit $K\rightarrow 0,\,\infty$. After taking the limit $q_{4}\rightarrow 1$, the eigenvalue $\Psi_{\pi,u}(z)|_{K\rightarrow 0,\infty}$ is no longer a rational function with the same degrees in the numerator and the denominator. Note that such kind of representations are not the representation of quantum toroidal $\mathfrak{gl}_{1}$ but of the \emph{shifted} quantum toroidal $\mathfrak{gl}_{1}$ (see \cite{Bourgine:2022scz,Galakhov:2021xum,Noshita:2021dgj}). 
\end{remark}
\begin{remark}
    Similar to the relation between the vector representations and the Fock representations, the MacMahon representation also can be obtained by infinite tensor products of the Fock representations \cite{Feigin2011plane}. Consider the tensor product $\otimes_{i=1}^{N}\mathcal{F}_{c}(u_{i})$ and tune the spectral parameters as $u_{i}=u\sfq_{c}^{i-1}$. We then take the limit $N\rightarrow \infty$ and regularize it. After this process, we get $\otimes_{i=1}^{\infty}\mathcal{F}_{c}(u\sfq_{c}^{i-1})\simeq \mathcal{M}(K,u)$. Note that the nontrivial parameter $K$ appears from the regularization process. This property corresponds to the fusion process of the D4 $qq$-characters to D6 $qq$-characters discussed in Thm.~\ref{thm:D4toD6fusion}.
\end{remark}

\paragraph{General D6 $qq$-characters}
Similar to the D4 case, a higher rank version of D6 $qq$-characters corresponds to the tensor products of the MacMahon representations:
\begin{align}
\begin{split}
    :\prod_{i=1}^{N}\Lambda^{K_{i}}_{\bar{4},\pi^{(i)}}(x_{i}):\mathsf{S}_{4}(q_{4}x)=\prod_{i=1}^{N}(-q_{4}x_{i})\prod_{i=1}^{N}\left[\mathscr{W}^{\bar{4},K_{i}}_{\pi^{(i)},x_{i}}(z)^{-1}\right]^{z}_{-}:\prod_{i=1}^{N}\Lambda^{K_{i}}_{\bar{4},\pi^{(i)}}(x_{i})\mathsf{S}_{4}(q_{4}x):,\\
    \mathsf{S}_{4}(q_{4}x){:\prod_{i=1}^{N}\Lambda^{K_{i}}_{\bar{4},\pi^{(i)}}(x_{i}):}=\prod_{i=1}^{N}(-q_{4}x_{i})\prod_{i=1}^{N}\left[\mathscr{W}^{\bar{4},K_{i}}_{\pi^{(i)},x_{i}}(z)^{-1}\right]^{z}_{+}:\prod_{i=1}^{N}\Lambda^{K_{i}}_{\bar{4},\pi^{(i)}}(x_{i})\mathsf{S}_{4}(q_{4}x):.
\end{split}
\end{align}
After taking the limit $q_{4}\rightarrow 1$, one can see that it matches with the Cartan eigenvalue of the tensor product $\bigotimes_{i=1}^{N}\mathcal{M}(x_{i},K_{i})$.

As discussed in section \ref{sec:generalD6qq}, after specifying the value $K$, we can obtain the pit reduction of the D6 $qq$-characters, which eventually gives the D4 $qq$-character. This situation is the same in the quantum toroidal $\mathfrak{gl}_{1}$. Setting $K=\sfq_{c}$ in \eqref{eq:CartanMacMahon}, the Cartan eigenvalue will be \eqref{eq:FockCartan}. The residue of $\Psi_{\pi,u}(z)|_{K=\sfq_{c}}$ at $z=\sfq_{c}u$ will vanish and thus the action of $E(z)$ will stop the growth of the plane partition in the direction $\sfq_{c}$. We then obtain the $\mathcal{F}_{c}(u)$ representation.\footnote{The coefficients of the action of $E(z),F(z)$ on the bases in the MacMahon representation after setting $K=\mathsf{q}_{c}$ is different from the coefficients in the Fock representation. This comes from the degree of freedom to rescale the bases.} For a general pit located at $(L,M,N)$ in the MacMahon representation, the central charge is $K=\sfq_{1}^{L-1}\sfq_{2}^{M-1}\sfq_{3}^{N-1}$. A similar analysis can be done and we will see that the MacMahon representation will be reduced to $\mathcal{F}_{1}^{\otimes L}\otimes \mathcal{F}_{2}^{\otimes M}\otimes \mathcal{F}_{3}^{\otimes N}$ with the spectral parameters tuned properly (see for example \cite[section 5.1.4]{DIMreview}).

\chapter{Conclusion}\label{chap:conclusion}
Gauge origami is a generalized supersymmetric gauge theory defined on several intersecting space-time components and it provides a systematic way to consider generalizations of instantons. We focused on the gauge origami system of $\mathbb{C}^{4}$: spiked instanton, tetrahedron instanton, and magnificent four system. From the string theoretic viewpoint, the instantons are understood as D1-branes probing D$(2p+1)$-branes. The low energy field theory of the D1-branes are uniformly given by a 2d $\mathcal{N}=(0,2)$ quiver gauge theory. Using the quiver structure and the associated flavor symmetries, one can compute the Witten index of it and it gives the instanton partition function. By using the same data (quiver and flavor charges), we introduced vertex operators and gave free field realizations of the partition function. Based on this free field realization, we managed to construct D2, D4, D6, and D8 $qq$-characters and show the BPS/CFT correspondence for the spiked instanton, tetrahedron instanton, and magnificent four systems.

The D2 $qq$-characters play the roles of screening charges and the $\D4,\D6$ $qq$-characters are uniquely determined after setting the highest weight and imposing the commutativity condition with the screening charge. An interesting property was that the monomial terms are classified by truncations of plane partitions and we have a one-to-one correspondence with the MacMahon representation and its generalizations (e.g. truncations, boundary conditions, etc.) of the quantum toroidal $\mathfrak{gl}_{1}$.

 For the D8 $qq$-characters, we do not have a screening charge and so we derived it by taking infinite products of the D6 $qq$-characters. At the end, we managed to derive the D8 $qq$-character that produce the magnificent four partition function \textit{including} the nontrivial sign factors. We therefore managed to establish the BPS/CFT correspondence of all the setups in the gauge origami system of $\mathbb{C}^{4}$.

\vspace{1.5cm}

\begin{figure}[h]
    \centering
    \includegraphics[width=18cm]{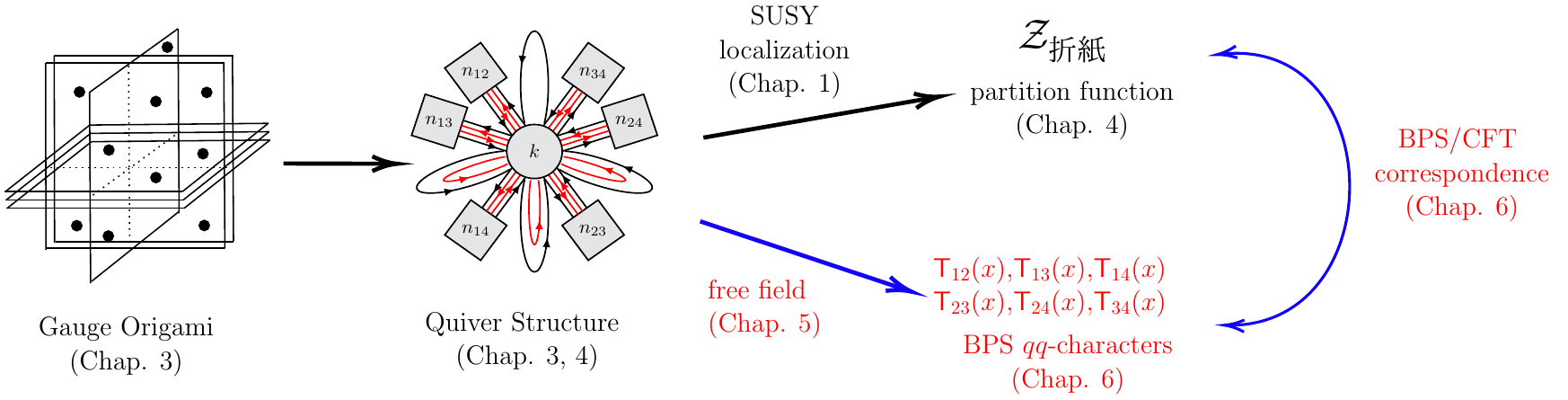}
    \label{fig:chap-struct}
\end{figure}

\renewcommand{\appendixname}{Acknowledgement}
\chapter*{Acknowledgement}



The first and deepest gratitude goes to my supervisor, Yutaka Matsuo, for advising me for five years. Throughout my time in graduate school, I received invaluable and insightful guidance from him in every aspect of my academic journey. As someone who tends to take a contrarian approach, there were many times when I did not follow that advice immediately. However, upon reflection, I now realize that much of that guidance was indeed correct. I am deeply grateful for his patience and unwavering support over the course of these five years.

The author would also like to thank Taro Kimura. The work in this thesis is based on collaborations with him. The author knew that Taro was working on gauge theory and deformed W-algebras before we met in person, but two years ago I had the chance to visit France and meet him in person. At that time I was at a loss as to what to do as a research. The lectures he gave and the discussions during the stay helped me to find out what to do next. We also happened to be from the same hometown, so we had a lot of fun talking about our hometowns. I am deeply grateful for the tremendous support I received from him, which enabled me to get through the three years of my Ph.D. studies.

The author is grateful to Satoshi Nawata and Rui-Dong Zhu too. The author had the opportunity to collaborate with them on a review of quantum toroidal algebras \cite{DIMreview}. During the collaboration, the author managed to broaden his understanding of quantum toroidal algebras. To be honest, the author feels that the journey to complete the paper was not necessarily enjoyable for him, as he is a bored person. On the other hand, looking back, I am sure that the patience I gained along the way will be invaluable to me in my future life. I would like to thank them for their patience with me until the end and for completing the review. I sincerely hope that more readers will become interested in the physical applications of quantum algebras after reading the review.

I would also like to express my deepest gratitude to Koichi Harada and Akimi Watanabe. They helped me a lot with my research right after I entered graduate school. I am sure that the basis of my research activities in the doctoral course was acquired through discussions with them.

I also thank all financial support and grants both from our government and a private company: Forefront Physics and Mathematics Program to Drive Transformation (FoPM), a World-leading Innovative Graduate Study (WINGS) Program, the University of Tokyo, JSPS fellowship for young students, MEXT, and JSR fellowship, the University of Tokyo. All these add up to the foundations of my learning and inspired me to become more innovative in my search for knowledge throughout my Ph.D. journey.

Furthermore, I am grateful to the members of the particle physics groups in Hongo, especially to Hajime Fukuda, Kohki Kawabata, Atsuya Niki, Kantaro Ohmori, and Juntaro Wada. I have learned to see the world from more... unconventional perspectives thanks to all the countless trivial conversations we have had.

The author is also grateful to many Japanese idols, streamers and VTubers. What they do is so interesting that I am afraid that the total time I spent escaping reality would be a frightening amount of time. Even when I was writing this dissertation, I was watching an online streaming for about 12 hours a day for a week or two to escape from reality.....

Finally, I would like to express my deepest gratitude to my family, who supported me with warmth and care until the very end. Without their support, I am certain I would not have been able to experience anything during these five years in graduate school.

\appendix

\renewcommand{\appendixname}{Appendix}

\chapter{Spinor representations and supersymmetry}\label{app:spinor_susy}
\section{Spinor representations}
We summarize the construction of spinor representations following \cite{Polchinski:1998rq,Polchinski:1998rr}. We focus only on the spinor representation of a $D$-dimensional spacetime where $D$ is even $D=2k+2$ and the metric is $\eta^{\mu\nu}=\operatorname{diag}(-1,+1,\ldots,+1)$. The Dirac matrix $\Gamma^{\mu}$ obeys
\bea
\{\Gamma^{\mu},\Gamma^{\nu}\}=2\eta^{\mu\nu}.
\eea
Let us construct the spinor representations explicitly. We first define
\bea
\Gamma^{0\pm}=\frac{1}{2}\left(\pm\Gamma^{0}+\Gamma^{1}\right),\quad \Gamma^{a\pm}=\frac{1}{2}\left(\Gamma^{2a}\pm i\Gamma^{2a+1}\right),\,\,a=1,\ldots, k
\eea
which obeys
\bea
\{\Gamma^{a+},\Gamma^{b-}\}=\delta^{ab},\quad \{\Gamma^{a+},\Gamma^{b+}\}=\{\Gamma^{a-},\Gamma^{b-}\}=0.
\eea
In particular, we have $(\Gamma^{a+})^{2}=(\Gamma^{a-})^{2}=0$. The operators $\Gamma^{a+},\Gamma^{a-}$ play roles of the creation and annihilation operator of fermions, respectively.

To construct the representation, we first define the vacuum $\ket{\Omega}$:
\bea
\Gamma^{a-}\ket{\Omega}=0,\quad \forall a.
\eea
The bases of the representation space are then constructed by applying the creation and annihilation operators:
\bea
\ket{\vec{s}\,}=\ket{s_{1},\ldots,s_{k}}\coloneqq (\Gamma^{k+})^{(s_{k}+1)/2}\cdots (\Gamma^{0+})^{(s_{0}+1)/2}\ket{\Omega},\quad s_{a}=\pm 1.
\eea
We then obtain a $2^{k+1}$ dimensional representation. Shortly, we omit the coefficient and only write the signs such as
\bea
\ket{\pm,\pm,\ldots,\pm}.
\eea
Note also that under this convention, the vacuum is $\Omega=\ket{-,-,\ldots,-}$.

\paragraph{$D=2$ Euclidean metric}Taking $\ket{\vec{s}\,}$ as bases, the Dirac matrices are just the Pauli matrices:
\bea
\Gamma^{2a}=\sigma_{1},\quad \Gamma^{2a+1}=\sigma_{2}
\eea
    where the Pauli matrices are defined as
    \bea
    \sigma_{1}=\begin{pmatrix}
   \,0   & \,\,1\,\\
   \,1  &\,\, 0\,
\end{pmatrix},\quad \sigma_{2}=\begin{pmatrix}
   \,0 & \,\, -i\,\\
   \,i & \,\, 0\,
\end{pmatrix},\quad \sigma_{3}=\begin{pmatrix}
  \, 1 & \,\,0\,\\
  \, 0 &  \,\,-1\,
\end{pmatrix}
    \eea
    obeying
    \bea
    \{\sigma_{i},\sigma_{j}\}=2\delta_{ij},\quad [\sigma_{i},\sigma_{j}]=2\,i\,\varepsilon_{ijk}\sigma_{k}.
    \eea

\paragraph{$D=2$ Minkowski metric} For the Dirac matrices with Minkowski metric, we have the following expression:
\bea
\Gamma^{0}=i\sigma_{2},\quad \Gamma^{1}=\sigma_{1},
\eea
where note that we have $\Gamma^{0}\Gamma^{1}=\sigma_{3}$.

\paragraph{General $D$ with Minkowski metric} The matrix representation for general even $D$ can be derived recursively. We have a $D=2k$ spinor $\xi_{(k-1)}$ and Gamma matrices $\gamma^{\mu}\,\,(\mu=0,1,\ldots,2k-1)$. The $k\rightarrow k+1$ process gives bases $\xi^{(k-1)}\otimes \ket{\pm}$. The $D=2k+2$ Gamma matrices $\Gamma^{\mu}$ can be defined as
\bea
\Gamma^{\mu}=\gamma^{\mu}\otimes \begin{pmatrix}
    1&0\\
    0&-1
\end{pmatrix}
\eea
for $\mu=0,1,2,\ldots, 2k-1$. Obviously, these Gamma matrices act as $\Gamma^{\mu}\xi_{(k-1)}\otimes \ket{\pm}=\pm \gamma^{\mu}\xi_{(k-1)}\otimes \ket{\pm}$.

The other two Gamma matrices $\Gamma^{2k,2k+1}$ then act on the remaining tensor component and reduces to the discussion of the 2d Euclidean case and we have
\bea
\Gamma^{2k}=1\otimes \sigma_{1},\quad \Gamma^{2k+1}=1\otimes \sigma_{2}.
\eea

Following this recursion relation, the Gamma matrices for 10-dimension have the following matrix representation:
\bea
\Gamma^{0}&=(i\sigma_{2})\otimes \sigma_{3}\otimes \sigma_{3}\otimes \sigma_{3}\otimes \sigma_{3},\quad   \Gamma^{5}=1\otimes 1\otimes \sigma_{2}\otimes \sigma_{3}\otimes \sigma_{3}     \\
\Gamma^{1}&=\sigma_{1}\otimes \sigma_{3}\otimes \sigma_{3}\otimes \sigma_{3}\otimes \sigma_{3},\quad \,\,\,\,\, \Gamma^{6}=1\otimes 1\otimes 1\otimes \sigma_{1}\otimes \sigma_{3} \\
\Gamma^{2}&=1\otimes \sigma_{1}\otimes \sigma_{3}\otimes \sigma_{3}\otimes \sigma_{3},\qquad \,\,\Gamma^{7}=1\otimes 1\otimes 1\otimes \sigma_{2}\otimes \sigma_{3} \\
\Gamma^{3}&=1\otimes \sigma_{2}\otimes \sigma_{3}\otimes \sigma_{3}\otimes \sigma_{3},\qquad \,\, \Gamma^{8}=1\otimes 1\otimes 1\otimes 1\otimes \sigma_{1}  \\
\Gamma^{4}&=1\otimes 1\otimes \sigma_{1}\otimes \sigma_{3}\otimes \sigma_{3},\qquad \,\,\,\,\,\Gamma^{9}=1\otimes 1\otimes 1\otimes 1\otimes \sigma_{2}.
\eea

\paragraph{Different notation}
Instead of the above expression, we may reverse the order of all of the tensor components and redefine the labels of the Gamma matrices. The result will be
\bea
\Gamma^{1}&=\sigma_{1}\otimes 1\otimes 1\otimes 1\otimes 1,\quad\qquad  \Gamma^{2}=\sigma_{2}\otimes 1\otimes 1\otimes 1\otimes 1 \\
\Gamma^{3}&=\sigma_{3}\otimes \sigma_{1}\otimes 1\otimes 1\otimes 1,\qquad \,\,\,\Gamma^{4}=\sigma_{3}\otimes \sigma_{2}\otimes 1\otimes 1\otimes 1,\\
\Gamma^{5}&=\sigma_{3}\otimes \sigma_{3}\otimes \sigma_{1}\otimes 1\otimes 1,\qquad \Gamma^{6}=\sigma_{3}\otimes \sigma_{3}\otimes \sigma_{2}\otimes 1\otimes 1,\\
\Gamma^{7}&=\sigma_{3}\otimes \sigma_{3}\otimes \sigma_{3}\otimes \sigma_{1}\otimes 1,\quad\,\,\, \Gamma^{8}=\sigma_{3}\otimes \sigma_{3}\otimes \sigma_{3}\otimes \sigma_{2}\otimes 1\\
\Gamma^{9}&=\sigma_{3}\otimes \sigma_{3}\otimes \sigma_{3}\otimes \sigma_{3}\otimes \sigma_{1},\quad\, \Gamma^{0}=\sigma_{3}\otimes \sigma_{3}\otimes \sigma_{3}\otimes \sigma_{3}\otimes (i\sigma_{2}).
\eea
In the main text, we will use this expression for convenience. The creation and annihilation operators are defined as
\bea
\Gamma^{0\pm}=\frac{1}{2}\left(\pm \Gamma^{0}+\Gamma^{9}\right),\quad \Gamma^{a\pm}=\frac{1}{2}(\Gamma^{2a-1}\pm i\Gamma^{2a}),\quad a=1,2,3,4
\eea
and we denote the bases as
\bea
\ket{s_{0},s_{1},s_{2},s_{3},s_{4}},\quad s_{a}\in\{\pm 1\}. 
\eea

Define 
\bea
\Sigma^{\mu\nu}=-\frac{i}{4}[\Gamma^{\mu},\Gamma^{\nu}]
\eea
and actually it obeys the Lorentz algebra:
\bea
\relax[M^{\mu\nu},M^{\rho\sigma}]=-i(\eta^{\mu\rho}M^{\nu\sigma}+\eta^{\nu\sigma}M^{\mu\rho}-\eta^{\nu\rho}M^{\mu\sigma}-\eta^{\mu\sigma}M^{\nu\rho}).
\eea
The generators $\Sigma^{09},\Sigma^{2a-1,2a}\,(a=1,\ldots, 4)$ commute with each other and are simultaneously diagonalizable. We then define
\bea
S_{0}=2i\,\Sigma^{09},\quad S_{a}=2\Sigma^{2a-1,2a},\quad (a=1,\ldots,4)
\eea
and we have 
\bea
S_{a}\ket{\vec{s}\,}=s_{a}\ket{\vec{s}\,},\quad s_{a}\in\{\pm 1\}.
\eea
The bases $\{\ket{\vec{s}\,}\}$ give the $2^{k}$-dimensional spinor representation of the Lorentz algebra but actually it is not irreducible because we have
\bea
\oGamma=\Gamma^{0}\Gamma^{1}\cdots \Gamma^{9}
\eea
obeys $\oGamma^{2}=1$, $\{\oGamma,\Gamma^{\mu}\}$, and $[\oGamma,\Sigma^{\mu\nu}]$ which means the representation space can be further decomposed into eigenspaces with eigenvalues $\pm1$. 

We further have
\bea
\Gamma=S_{0}S_{1}\cdots S_{4},\quad S_{0}=\Gamma^{0}\Gamma^{1},\quad S_{a}=-i\Gamma^{2a-1}\Gamma^{2a}\,\,(a=1,2,3,4),\quad S_{i}^{2}=1\,\,(i=0,1,2,3,4).
\eea
We also have
\bea
    \left(\prod_{i=1}^{a}\Gamma^{2i-1}\Gamma^{2i}\right)^{2}=(-1)^{a}.
\eea


\paragraph{Lorentz transformation of the spinors}
The Lorentz transformation acts on the spinor as
\bea
\xi \rightarrow \exp\left(\frac{i}{2}\omega_{\mu\nu}\Sigma^{\mu\nu}\right)\xi.
\eea
For example, the rotation in the Euclidean subspace gives
\bea
\exp\left(i\omega\Sigma^{2a-1,2a}\right)=\exp\left(\frac{\omega}{2}\Gamma^{2a-1}\Gamma^{2a}\right)=\exp\left(i\frac{\omega}{2} \sigma_{3}\right)
\eea
where $\omega$ is identified with the rotation angle of the Euclidean plane.

\section{Preserved supersymmetries}
We summarize the preserved supersymmetries of branes in M-theory and string theory. See \cite{Giveon:1998sr} for example.

In $D=11$, the irreducible spinor representation is the $\mathbf{32}$ spinor representation. The 32 supercharges belongs to this spinor representation. Denoting the Gamma matrices as
\bea
\Gamma^{\mu},\mu=0,1,\ldots,9,\quad \Gamma^{\natural}
\eea
where $\Gamma^{\natural}$ is the Gamma matrix of the M-theoretic 11-dimension, they obey
\bea
\Gamma^{0}\Gamma^{1}\cdots \Gamma^{\natural}=1.
\eea
If we write the 32 supercharges as $Q^{\alpha}$, the supercharge generally takes the form as $Q\sim \xi_{\alpha}Q^{\alpha}$ and $\xi_{\alpha}$ will be the 32 dimensional spinor.

Let us consider the supercharges preserved when we have M2 and M5-branes. By the rotational symmetry, the supercharges preserved take the form
\bea
\text{M2}(012):&\quad \Gamma^{0}\Gamma^{1}\Gamma^{2}\xi=\pm\xi\\
\text{M5}(01234):&\quad \Gamma^{0}\Gamma^{1}\Gamma^{2}\Gamma^{3}\Gamma^{4}\xi=\pm \xi
\eea
where the $\pm$ sign corresponds to the M-brane and anti M-brane, namely the orientation of the membranes. Note that $(\Gamma^{012})^{2}=1$ and thus such kind of spinors will always exist. The condition above projects the $\mathbf{32}$ spinor representation to one of the 16 dimensional spinor representation as $\mathbf{32}=\mathbf{16}\oplus\mathbf{16}'$ transforming under $\SO(1,2),\SO(1,4)$ acting on the M-branes. Namely, only 16 supercharges are preserved.

\paragraph{Type IIA string theory} In $D=10$, the Dirac matrix $\overline{\Gamma}\equiv\Gamma^{\natural}$ determines the chirality and the spinor representation is constructed by $\{\Gamma^{\mu}\}_{\mu=0}^{9}$. The 32 supercharges of the string theory will act as the $\mathbf{16}$ spinor representation. Decomposing the 32 supercharges into 16 supercharges as $Q_{\alpha},\tilde{Q}_{\beta}$, each of them needs to transform under the $\mathbf{16}$ representation of $\SO(1,9)$. The supercharges schematically takes the form as
\bea
Q=\xi_{\sL,\alpha}Q^{\alpha}+\xi_{\sR,\beta}\widetilde{Q}^{\beta},
\eea
Note here that the $\xi_{\sL(\sR),\alpha}$ are elements of the 16-dimensional irreducible Weyl representation. 

In type IIA string theory, the two sets of 16 supercharges have opposite chiralities:
\bea
\overline{\Gamma}\xi_{\sL}=\xi_{\sL},\quad \overline{\Gamma}\xi_{\sR}=-\xi_{\sR}.
\eea
The supersymmetries preserved by the branes are summarized as
\bea
\text{F1}(01)&:\quad \Gamma^{0}\Gamma^{1}\xi_{\sL}=\xi_{\sL},\quad \Gamma^{0}\Gamma^{1}\xi_{\sR}=-\xi_{\sR} \\
\text{NS5}(012345)&:\quad \Gamma^{012345}\xi_{\sL}=\xi_{\sL},\quad \Gamma^{012345}\xi_{\sR}=\xi_{\sR}\\
\text{D$p$}(01\cdots p)&: \quad  \Gamma^{01\cdots p}\xi_{\sR}=\xi_{\sL},
\eea
where we shortly wrote $\Gamma^{01\cdots p}=\Gamma^{0}\Gamma^{1}\cdots \Gamma^{p}$.

\paragraph{Type IIB string theory} To obtain the type IIB string theory, we can take one of the space direction and take T-duality. The two sets of 16 supercharges have the same chiralities:
\bea
\oGamma \xi_{\sL}=\xi_{\sL},\quad \oGamma\xi_{\sR}=\xi_{\sR}.
\eea
The preserved supersymmetries for each brane are given by
\bea
\text{F1}(01)&:\quad \Gamma^{01}\xi_{\sL}=\xi_{\sL},\quad \Gamma^{01}\xi_{\sR}=-\xi_{\sR} \\
\text{NS5}(012345)&:\quad \Gamma^{012345}\xi_{\sL}=\xi_{\sL},\quad \Gamma^{012345}\xi_{\sR}=-\xi_{\sR}\\
\text{D$p$}(01\cdots p)&: \quad  \Gamma^{01\cdots p}\xi_{\sR}=\xi_{\sL}.
\eea

\chapter{Open string spectrum}\label{app-chap:openstring}
In this chapter, we briefly review how to analyze the massless degrees of freedom coming from the open string spectrum in superstring theory. See \cite{Green:1987sp, Polchinski:1998rr} for standard references. 

\section{Bosonic Excitation}
Let us focus on the bosonic part of the action of the worldsheet theory:
\bea
S=\frac{1}{8\pi}\int d\tau\int_{0}^{\pi}d\sigma\left((\partial_{\tau}X)^{2}-(\partial_{\sigma}X)^{2}\right)
\eea
where $\tau,\sigma$ are the coordinates of the worldsheet. $\tau$ is the time direction and $\sigma$ is the space direction, and the ends of the string are at $\sigma=0,\pi$. We set the slope parameter $\alpha'=2$ for simplicity. When considering superstring theory, we have 10 bosonic variables denoted as $X^{\mu}$ but we focus on one bosonic component for the moment.

We introduce the  light-cone coordinates as $\sigma^{\pm}=\tau \pm \sigma$ and $\partial_{\pm}=(\partial_{\tau}\pm\partial_{\sigma})/2$. From the variations of the action, we have the  equation of motion
\bea
\partial_{+}\partial_{-}X=0
\eea
and the boundary condition
\bea
\left[\partial_{\sigma}X\delta X\right]_{\sigma=0}^{\sigma=\pi}=0.
\eea
We can impose two possible boundary conditions for each of the ends $\sigma=0,\pi$ and they are called the Neumann (N) and Dirichlet (D) boundary conditions:
\bea\label{eq:Dirichletbc-boson}
\text{N}:&\qquad \partial_{\sigma}X|_{\sigma=0,\pi}=0,\\
\text{D}:&\qquad \delta X|_{\sigma=0,\pi}=0\,\,\Leftrightarrow \,\, \partial_{\tau}X|_{\sigma=0,\pi}=0\,\,\Leftrightarrow \,\, X=\text{const.}
\eea
For the Neumann boundary condition, we have
\bea\label{eq:Neumannbc-boson}
\partial_{+}X|_{\sigma=0,\pi}=\partial_{-}X|_{\sigma=0,\pi}
\eea
because of $\partial_{\sigma}X|_{\sigma=0,\pi}=0$. For the Dirichlet boundary condition, we have
\bea
\partial_{+}X|_{\sigma=0,\pi}=-\partial_{-}X|_{\sigma=0,\pi}
\eea
because of $\partial_{\tau}X|_{\sigma=0,\pi}=0$.

The energy-momentum tensor is defined as
\bea
T_{++}=\frac{1}{2}\partial_{+}X\partial_{+}X,\quad T_{--}=\frac{1}{2}\partial_{-}X\partial_{-}X.
\eea
For the open string case, the boundary conditions relate the left-moving and right-moving modes of the boson and the energy-momentum tensor obeys the condition:
\bea
T_{++}=T_{--}.
\eea
Note that these energy-momentum tensors are defined only at the domain $\sigma=[0,\pi]$. Combining them and extending the domain to $\sigma=[0,2\pi]$ is useful; this method is known as the \textit{doubling trick}:
\bea
T_{++}(\tau,\sigma)\coloneqq \begin{dcases}
    T_{++}(\tau,\sigma),\quad\quad \quad  (0\leq \sigma \leq \pi)\\
    T_{--}(\tau,2\pi-\sigma),\quad (\pi\leq \sigma \leq 2\pi)
\end{dcases}.
\eea
Under this definition, we have $T_{++}(\tau, 2\pi)=T_{++}(\tau,0)$ and the mode expansions are given as
\bea
T_{++}(\tau,\sigma)=\sum_{n\in\mathbb{Z}}L_{n}e^{-in\sigma^{+}}.
\eea

Depending on which boundary condition we impose on the two ends, we have four possible cases.
\begin{itemize}[topsep=2pt, partopsep=0pt, itemsep=0pt]
    \item \textbf{NN boundary condition}: We impose N boundary conditions to both ends $\sigma=0,\pi$. Under this boundary condition, the mode expansions obeying the EOM is
    \bea
    X(\tau,\sigma)=x+4p\tau+2i\sum_{n\neq 0}\frac{\alpha_{n}}{n}e^{-in\tau}\cos n\sigma.
    \eea
    After quantization, we have
    \bea
    \relax[X(\sigma),\dot{X}(\sigma')]=4\pi i \delta(\sigma-\sigma'),\quad [x,p]=i,\quad [\alpha_{n},\alpha_{m}]=n\delta_{n+m,0}.
    \eea
    The current $\partial_{+}X$ is expanded as
    \bea
    \partial_{+}X=2p+\sum_{n\neq 0}\alpha_{n}e^{-in\sigma^{+}}=\sum_{n\in\mathbb{Z}}\alpha_{n}e^{-in\sigma^{+}}
    \eea
    where we defined $\alpha_{0}=2p$. The doubling trick can be performed as
    \bea
    \partial_{+}X(\tau,\sigma)=\begin{dcases}
    \partial_{+}X(\tau,\sigma)\quad \quad (0\leq \sigma\leq \pi)\\
    \partial_{-}X(\tau,2\pi-\sigma)\quad (\pi\leq \sigma\leq 2\pi)
    \end{dcases}.
    \eea
    The modes of the energy-momentum tensor are given as
    \bea
    L_{n}=\frac{1}{2}\sum_{m\in\mathbb{Z}}\alpha_{n-m}\alpha_{m}\,\,\,(n\neq 0),\quad L_{0}=\frac{1}{2}\alpha_{0}^{2}+\sum_{m=1}^{\infty}\alpha_{-m}\alpha_{m}.
    \eea

    \item \textbf{DD boundary condition}: D boundary conditions are imposed on both ends:
    \bea
    X(\tau,0)=v_{0},\quad X(\tau,\pi)=v_{\pi}.
    \eea
    The mode expansion is given as
    \bea
    X(\tau,\sigma)=v_{0}+\frac{v_{\pi}-v_{0}}{\pi}\sigma +2\sum_{n\neq 0}^{\infty}\frac{\alpha_{n}}{n}e^{-in\tau}\sin n\sigma.
    \eea
    After quantization, we have
    \bea
  \relax [\alpha_{n},\alpha_{m}]=n\delta_{n+m,0}.
    \eea
    For convenience, we can introduce $\alpha_{0}=(v_{\pi}-v_{0})/2\pi$ which gives
    \bea
    \partial_{+}X(\tau,\sigma)=\sum_{n\in \mathbb{Z}}\alpha_{n}e^{-in\sigma^{+}}.
    \eea
    The doubling trick is performed as
    \bea
    \partial_{+}X(\tau,\sigma)=\begin{dcases}
        \partial_{+}X(\tau,\sigma)\quad (0\leq \sigma\leq \pi)\\
        -\partial_{-}X(\tau,2\pi-\sigma)\quad (\pi\leq \sigma\leq 2\pi)
    \end{dcases}.
    \eea
    The modes of the energy-momentum tensor are 
    \bea
    L_{n}=\frac{1}{2}\sum_{m\in\mathbb{Z}}\alpha_{n-m}\alpha_{m}\,\, (n\neq 0),\quad L_{0}=\frac{1}{2}\alpha_{0}^{2}+\sum_{m=1}^{\infty}\alpha_{-m}\alpha_{m}.
    \eea
    \item \textbf{DN boundary condition}: The Neumann (Dirichlet) boundary condition is imposed at $\sigma=0\,\,(\pi)$:
    \bea
    X(\tau,0)=v_{0},\quad \partial_{\sigma}X(\tau,\pi)=0.
    \eea
    The mode expansions are given as
    \bea
    X(\tau,\sigma)=v_{0}+2\sum_{r\in\mathbb{Z}+\frac{1}{2}}\frac{\alpha_{r}}{r}e^{-ir\tau}\sin r\sigma
    \eea
    and after quantization, we have
    \bea
   \relax [\alpha_{r},\alpha_{s}]=r\delta_{r+s,0}. 
    \eea
    The doubling trick is performed as
    \bea
    \partial_{+}X(\tau,\sigma)=\begin{dcases}
        \partial_{+}X(\tau,\sigma)\qquad (0\leq \sigma\leq \pi)\\
        \partial_{-}X(\tau,2\pi-\sigma)\qquad (\pi\leq \sigma\leq 2\pi)
    \end{dcases}.
    \eea
    The modes of the energy-momentum tensor are
    \bea
    L_{n}=\frac{1}{2}\sum_{r\in\mathbb{Z}+1/2}\alpha_{n-r}\alpha_{r}\,\,(r\neq 0),\quad L_{0}=\sum_{r\in\mathbb{Z}+1/2,r>0}\alpha_{-r}\alpha_{r}+\frac{1}{16}.
    \eea
    \item \textbf{ND boundary condition}: The Dirichlet (Neumann) boundary condition is imposed at $\sigma=0$ $(\pi)$:
    \bea
    \partial_{\sigma}X(\tau,0)=0,\quad X(\tau,\pi)=v_{\pi}
    \eea
    This gives 
    \bea
    X(\tau,\sigma)=v_{\pi}+2i\sum_{r\in\mathbb{Z}+1/2}\frac{\alpha_{r}}{r}e^{-ir\tau}\cos r\sigma
    \eea
    and after quantization, we have
    \bea
   \relax [\alpha_{r},\alpha_{s}]=r\delta_{r+s,0}.
    \eea
    The doubling trick is performed as
    \bea
    \partial_{+}X(\tau,\sigma)=\begin{dcases}
        \partial_{+}X(\tau,\sigma)\qquad (0\leq \sigma\leq \pi)\\
        -\partial_{-}X(\tau,2\pi-\sigma)\qquad (\pi\leq \sigma\leq 2\pi)
    \end{dcases}.
    \eea
    The modes for the energy-momentum tensor are given as
    \bea
    L_{n}=\frac{1}{2}\sum_{r\in\mathbb{Z}+1/2}\alpha_{n-r}\alpha_{r}\quad (n\neq 0),\qquad L_{0}=\sum_{r\in\mathbb{Z}+1/2>0}\alpha_{-r}\alpha_{r}+\frac{1}{16}
    \eea

\end{itemize}

\paragraph{Zero-point energy}
Given the above mode expansions, an observation is that the current $\partial_{+}X(\tau,\sigma)$ obeys the periodic boundary condition $\partial_{+}X(\tau,\sigma)=\partial_{+}X(\tau,\sigma+2\pi)$ for the NN, DD boundary conditions, while it obeys the anti-periodic boundary condition $\partial_{+}X(\tau,\sigma)=-\partial_{+}X(\tau,\sigma+2\pi)$ for the ND, DN boundary conditions.

Let us compute the zero-point energy for the above cases. For periodic boundary conditions, we have
\bea
\frac{1}{2}\sum_{n\in\mathbb{Z}}\alpha_{-n}\alpha_{n}=L_{0}+\frac{1}{2}\sum_{n>0}n
\eea
and the zero-point energy is defined as
\bea
E_{0}=-\frac{1}{2}\sum_{n>0}n=-\frac{1}{2}\zeta(-1,1)=+\frac{1}{24}
\eea
where $\zeta(s,a)=\sum_{n=0}^{\infty}(n+a)^{-s}$ is the Hurwitz zeta function and we used
\bea
\zeta\left(-1,1\right)=-\frac{1}{12}.
\eea
For the anti-periodic boundary condition, we have
\bea
\frac{1}{2}\sum_{r\in\mathbb{Z}+1/2}\alpha_{-r}\alpha_{r}=\sum_{r>0}\alpha_{-r}\alpha_{r}+\frac{1}{2}\sum_{r\in\mathbb{Z}+1/2,r>0}r
\eea
and the zero-point energy is defined as
\bea
E_{0}=-\frac{1}{2}\sum_{r\in\mathbb{Z}+1/2,r>0}r=-\frac{1}{2}\zeta\left(-1,\frac{1}{2}\right)=-\frac{1}{48}.
\eea

\section{NS/R sector}
Let $\psi_{+}(\tau,\sigma)$ $\sigma\in[0,2\pi]$ be a fermion whose Lagrangian is given as
\bea
\mathcal{L}=\frac{i}{2\pi}\psi_{+}\partial_{-}\psi_{+}.
\eea
The EOM is given by
\bea
\partial_{-}\psi_{+}=0.
\eea
Depending on the boundary conditions, we have two sectors called the \textbf{NS sector} and the \textbf{R sector}:
\bea
\text{NS}:&\qquad \psi_{+}(\tau,\sigma)=-\psi_{+}(\tau,\sigma+2\pi)\\
\text{R}:&\qquad \psi_{+}(\tau,\sigma)=+\psi_{+}(\tau,\sigma+2\pi).
\eea
The mode expansions are given as
\bea\label{eq:NSRmodeexpand}
\psi_{+}(\tau,\sigma)=\sum_{r}\psi_{r}e^{-ir\sigma^{+}}\qquad \begin{dcases}
    r\in\mathbb{Z}+\frac{1}{2},\quad \text{NS sector}\\
    r\in\mathbb{Z},\qquad \text{R sector}
\end{dcases}.
\eea
After quantization, we obtain
\bea
\{\psi_{r},\psi_{s}\}=\delta_{r+s,0}.
\eea
Note that for the R sector, we have zero modes.

Let us then move on to the open superstring spectrum. The worldsheet action is given as
\bea
S=\frac{1}{2\pi}\int d^{2}\sigma (\partial_{+}X\partial_{-}X+i\psi_{+}\partial_{-}\psi_{+}+i\psi_{-}\partial_{+}\psi_{-})
\eea
where the domain of $\sigma$ is $\sigma\in[0,\pi]$. The supercurrents are defined as
\bea
G_{++}=\psi_{+}\partial_{+}X,\quad G_{--}=\psi_{-}\partial_{-}X.
\eea
For supersymmetry to be preserved at the boundary, we need
\bea
G_{++}=\pm G_{--}.
\eea
We can fix the boundary condition at the $\sigma=\pi$ as $G_{++}=G_{--}$ and we have two possible boundary conditions at $\sigma=0$
\bea\label{eq:supercurrent-bc}
\text{NS sector}:&\qquad G_{++}=-G_{--}\\
\text{R sector}:&\qquad G_{++}=G_{--}.
\eea
The doubling trick is then performed as
\bea
G_{++}(\tau,\sigma)=\begin{dcases}
G_{++}(\tau,\sigma)\qquad (0\leq \sigma\leq \pi)\\
G_{--}(\tau,2\pi-\sigma)\qquad (\pi\leq\sigma \leq 2\pi)
\end{dcases}.
\eea
After this doubling trick, the condition of the NS/R sector is simply given as $G_{++}(\tau,2\pi)=-G_{++}(\tau,0)$ for the NS sector and $G_{++}(\tau,2\pi)=+G_{++}(\tau,0)$ for the R sector.

The explicit boundary conditions for the fermions are obtained by using \eqref{eq:Dirichletbc-boson}, \eqref{eq:Neumannbc-boson} and \eqref{eq:supercurrent-bc}. Let us summarize the boundary conditions for the fermions for the NN, DD, DN, ND boundary conditions.
\begin{itemize}[topsep=2pt, partopsep=0pt, itemsep=0pt]
    \item NN boundary condition: We have
    \bea
    \text{NS sector}:&\quad \psi_{+}=-\psi_{-}\\
    \text{R sector}:& \quad \psi_{+}=\psi_{-}
    \eea
    for $\sigma=0$ and 
    \bea
    \psi_{+}=\psi_{-}
    \eea
    for $\sigma=\pi$. The doubling trick is performed as
    \bea
    \psi_{+}(\tau,\sigma)=\begin{dcases}
        \psi_{+}(\tau,\sigma)\qquad (0\leq \sigma\leq \pi)\\
        \psi_{-}(\tau,2\pi-\sigma)\qquad (\pi\leq \sigma\leq 2\pi)
    \end{dcases}
    \eea
    and the boundary conditions are summarized as
    \bea
    \text{NS sector}:&\quad  \psi_{+}(\tau,2\pi)=-\psi_{+}(\tau,0)\\
    \text{R sector}:&\quad \psi_{+}(\tau,2\pi)=+\psi_{+}(\tau,0).
    \eea

    \item DD boundary condition: We have
    \bea
    \text{NS sector}:&\quad \psi_{+}=+\psi_{-}\\
    \text{R sector}:& \quad \psi_{+}=-\psi_{-}
    \eea
    for $\sigma=0$ and 
    \bea
    \psi_{+}=-\psi_{-}
    \eea
    for $\sigma=\pi$. The doubling trick is performed as
    \bea
    \psi_{+}(\tau,\sigma)=\begin{dcases}
        \psi_{+}(\tau,\sigma)\qquad (0\leq \sigma\leq \pi)\\
        -\psi_{-}(\tau,2\pi-\sigma)\qquad (\pi\leq \sigma\leq 2\pi)
    \end{dcases}
    \eea
    and the boundary conditions are summarized as
    \bea
    \text{NS sector}:&\quad  \psi_{+}(\tau,2\pi)=-\psi_{+}(\tau,0)\\
    \text{R sector}:&\quad \psi_{+}(\tau,2\pi)=+\psi_{+}(\tau,0).
    \eea
    \item DN boundary condition: We have
    \bea
    \text{NS sector}:&\quad \psi_{+}=+\psi_{-}\\
    \text{R sector}:& \quad \psi_{+}=-\psi_{-}
    \eea
    for $\sigma=0$ and 
    \bea
    \psi_{+}=+\psi_{-}
    \eea
    for $\sigma=\pi$. The doubling trick is performed as
    \bea
    \psi_{+}(\tau,\sigma)=\begin{dcases}
        \psi_{+}(\tau,\sigma)\qquad (0\leq \sigma\leq \pi)\\
        \psi_{-}(\tau,2\pi-\sigma)\qquad (\pi\leq \sigma\leq 2\pi)
    \end{dcases}
    \eea
    and the boundary conditions are summarized as
    \bea
    \text{NS sector}:&\quad  \psi_{+}(\tau,2\pi)=\psi_{+}(\tau,0)\\
    \text{R sector}:&\quad \psi_{+}(\tau,2\pi)=-\psi_{+}(\tau,0).
    \eea

    \item ND boundary condition: We have
    \bea
    \text{NS sector}:&\quad \psi_{+}=-\psi_{-}\\
    \text{R sector}:& \quad \psi_{+}=+\psi_{-}
    \eea
    for $\sigma=0$ and 
    \bea
    \psi_{+}=-\psi_{-}
    \eea
    for $\sigma=\pi$. The doubling trick is performed as
    \bea
    \psi_{+}(\tau,\sigma)=\begin{dcases}
        \psi_{+}(\tau,\sigma)\qquad (0\leq \sigma\leq \pi)\\
        -\psi_{-}(\tau,2\pi-\sigma)\qquad (\pi\leq \sigma\leq 2\pi)
    \end{dcases}
    \eea
    and the boundary conditions are summarized as
    \bea
    \text{NS sector}:&\quad  \psi_{+}(\tau,2\pi)=\psi_{+}(\tau,0)\\
    \text{R sector}:&\quad \psi_{+}(\tau,2\pi)=-\psi_{+}(\tau,0).
    \eea
\end{itemize}
For each cases, the mode expansions can be obtained by using \eqref{eq:NSRmodeexpand}.

\paragraph{Zero-point energy}Due to the Grassmann property of the fermion, the zero-point energy for the periodic and anti-periodic boundary conditions are computed as follows:
\bea
\text{Periodic b.c.}&\quad E_{0}=+\frac{1}{2}\sum_{n>0}n\{\psi_{n},\psi_{-n}\}=+\frac{1}{2}\zeta(-1,1)=-\frac{1}{24}\\
\text{Anti-periodic b.c}&\quad E_{0}=+\frac{1}{2}\sum_{r\in\mathbb{Z}+1/2,r>0}r\{\psi_{r},\psi_{-r}\}=\frac{1}{2}\zeta\left(-1,\frac{1}{2}\right)=+\frac{1}{48}.
\eea

Combing with the boson case, the zero-point energies are summarized as follows.
\bea\renewcommand{\arraystretch}{1.1}
\begin{tabular}{|c|c|c|}\hline
    \text{boundary conditions} & \text{NS sector} &\text{R sector} \\\hline
    \text{NN, DD} & $+1/24+1/48=1/16$  & $1/24-1/24=0$\\
    \text{ND, DN} & $-1/48-1/24=-1/16$ &  $-1/48+1/48=0$ \\\hline
\end{tabular}
\eea

\section{Light Cone Gauge}\label{app:lightconegauge}
Up to the previous section, we were focusing only when we have one boson and one fermion. In superstring theory, we have 10 copies of them and the worldsheet action for the open string is given as
\bea
\mathcal{L}=\frac{1}{2\pi}\left(\partial_{+}X^{\mu}\partial_{-}X_{\mu}+i\psi^{\mu}_{+}\partial_{-}\psi_{+\mu}+i\psi^{\mu}_{-}\partial_{+}\psi_{-\mu}\right)
\eea
where $\mu=0,1\ldots,9$ and $\sigma=[0,\pi]$. After the doubling trick, we can extend the domain to $\sigma\in[0,2\pi]$ and the energy-momentum tensor and supercurrents are defined as
\bea
T_{++}=\frac{1}{2}\partial_{+}X^{\mu}\partial_{+}X_{\mu}+\frac{i}{2}\psi^{\mu}_{+}\partial_{+}\psi_{\mu},\quad G_{++}=\psi_{+}^{\mu}\partial_{+}X_{\mu}.
\eea
To study the spectrum, we take the light-cone gauge. To make the discussion explicit, let us consider a setup with one D$p$-brane spanning in the $0123\cdots p$ direction. In this setup, the boundary conditions are
\bea
\text{NN}:&\quad X^{0},\ldots, X^{p}\\
\text{DD}:&\quad X^{p+1},\ldots,X^{9}.
\eea
Under these boundary conditions and the doubling trick, the bosons are all periodic. The fermions are all anti-periodic at the NS sector while they are all periodic at the R sector. We define 
\bea
X^{\pm}=\frac{1}{\sqrt{2}}(X^{0}\pm X^{1}),\quad \psi^{\pm}_{+}=\frac{1}{\sqrt{2}}(\psi^{0}_{+}+\psi^{1}_{+})
\eea
We assume $p\geq 1$ to take the light cone gauge 
\bea
X^{+}=4p^{+}\tau,\quad \psi_{+}^{+}=0
\eea
and we can solve $T_{++}=G_{++}=0$. Let us focus on $T_{++}=0$:
\bea
\partial_{+}X^{-}=\frac{1}{2p^{+}}\left(\frac{1}{2}\partial_{+}X^{i}\partial_{+}X^{i}+\frac{i}{2}\psi_{+}^{i}\partial_{+}\psi_{i}\right),\quad i=2,\ldots,9.
\eea
Studying the zero-modes, we have
\bea
\alpha_{0}^{-}=\frac{1}{2p_{+}}\left(\frac{1}{2}\sum_{j=2}^{p}\alpha_{0}^{j}\alpha_{0}^{j}+\hat{N}-E_{0}\right),\quad \hat{N}=\sum_{n=1}^{\infty}\sum_{i=2}^{9
}\alpha_{-n}^{i}\alpha_{n}^{i}+\sum_{r>0}\sum_{i=2}^{9}r\psi^{i}_{-r}\psi^{i}_{r}.
\eea
The zero-point energy is given as
\bea
E_{0}=\begin{dcases}
   \frac{1}{16}\times 8=\frac{1}{2} \qquad \text{NS sector}\\
   0 \qquad \qquad \qquad \text{R sector}.
\end{dcases}
\eea
We finally have the mass of the $(p+1)$-dimensional particles
\bea
m^{2}\coloneqq 2p^{+}p^{-}-\sum_{j=2}^{p}p^{j}p^{j}=\frac{\widehat{N}-E_{0}}{2}.
\eea

Let us study the spectrum explicitly. For the NS spectrum, the Fock vacuum $\ket{k}$ where $k$ here is the momentum eigenvalue gives
\bea
m^{2}=-\frac{E_{0}}{2}=-\frac{1}{4}
\eea
which is tachyonic. This state is removed by GSO projection by imposing $(-1)^{F_{\text{NS}}}\ket{k}=-\ket{k}$ and using the projection $(1+(-1)^{F_{\text{NS}}})/2$. The first excited states given by $N=1/2$ are massless
\bea
m^{2}=0.
\eea
The GSO projection preserves these states and we have $\psi^{i}_{-1/2}\ket{k}$ ($i=2,\ldots,9$). They will decompose into the vector representation of $\SO(p-1)$ (states coming from the longitudinal directions $\psi^{l}_{-1/2}$ with $l=2,\ldots, p$) and the scalar representation of $\SO(p-1)$ (states coming from the transverse directions $\psi^{l}_{-1/2}$ with $l=p+1,\ldots 9$). The particles $\psi_{-1/2}^{l}\,(l=2,\ldots,p)$ give the vector fields of the $\U(1)$ gauge theory on the D-brane while the particles $\psi_{-1/2}^{l}\,(l=p+1,\ldots,9)$ are scalar fields.

Let us move on to the R spectrum. In this case, the level $N=0$ states give
\bea
m^{2}=0.
\eea
The zero-modes $\psi^{i}_{0}\,(i=2,\ldots,p)$ form the spinor representation of $\SO(p-1)$. After GSO projection half of them remains and we have 8 states.

The massless states given above actually comes from the dimensional reduction of the 10 dimensional super Yang--Mills theory with gauge group $\U(1)$. We can generalize the story to parallel D-branes and the result will be a $\U(N)$ gauge theory where $N$ is the number of D-branes.

\chapter{Special functions}\label{app:specialfunct}
\section{\texorpdfstring{$q$}{q}-functions}
The $q$-Pochhammer symbol is defined as
\bea
(x;q)_{n}=\prod_{m=0}^{n-1}(1-xq^{m}).
\eea
We can formally take the limit $n\rightarrow \infty$ for $|q|<1$ as
\bea
(x;q)_{\infty}=\prod_{m=0}^{\infty}(1-xq^{m}),
\eea
which is equivalent to
\bea
(x;q)_{\infty}=\exp\left(-\sum_{m=1}^{\infty}\frac{x^{m}}{m(1-q^{m})}\right).
\eea
For the region $|q|>1$, using the analytic continuation, we have
\bea
(x;q)_{\infty}=(xq^{-1};q^{-1})^{-1}_{\infty}.
\eea
Note that we also have the following relation:
\bea
(x;q)_{n}=\frac{(x;q)_{\infty}}{(xq^{n};q)_{\infty}}=\exp\left(-\sum_{m=1}^{\infty}\frac{x^{m}}{m}\frac{1-q^{mn}}{1-q^{m}}\right).
\eea

We can also define the multiple variable version of the $q$-shifted factorial for the analytic region $|q_{1,\ldots,k}|<1$ as
\bea
(x;q_{1},\ldots,q_{k})_{\infty}&=\prod_{0\leq n_{1},\ldots, n_{k}\leq\infty}(1-xq_{1}^{n_{1}}\cdots q_{k}^{n_{k}}).
\eea
For other analytic regions, we use the formula
\bea
(x;q_{1},\ldots,q_{k})_{\infty}&=\exp\left(-\sum_{m=1}^{\infty}\frac{x^{m}}{m}\frac{1}{(1-q_{1}^{m})\cdots (1-q_{k}^{m})}\right).
\eea

\section{Delta function and commutation relations}
\begin{definition}
    The delta function $\delta(x)$ is a Laurent polynomial defined as
    \bea
    \delta(x)=\sum_{m\in\mathbb{Z}}x^{m}.
    \eea
\end{definition}

We have the following properties.
\begin{proposition}
    Let $f(z)$ be a rational function and then we have
    \bea
    f(z)\delta(z/a)=f(a)\delta(z/a).
    \eea
\end{proposition}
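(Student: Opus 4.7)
The plan is to prove the identity in three steps, starting with monomials, extending to Laurent polynomials by linearity, and then handling general rational functions (regular at $z=a$) by a reduction argument.

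First I would check the claim for the basic building blocks, the integer-power monomials $f(z) = z^n$ with $n \in \mathbb{Z}$. Directly from the definition,
\[
z^n \delta(z/a) = z^n \sum_{m\in\mathbb{Z}} a^{-m}z^m = \sum_{m\in\mathbb{Z}} a^{-m} z^{n+m},
\]
and after the reindexing $k = n+m$ the right-hand side becomes $a^n \sum_{k\in\mathbb{Z}} (z/a)^k = a^n \delta(z/a)$. This already gives the identity $z^n \delta(z/a) = a^n \delta(z/a)$ for every integer $n$, and by linearity one immediately obtains $P(z)\delta(z/a) = P(a)\delta(z/a)$ for every Laurent polynomial $P(z)$.

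The second step is to upgrade from Laurent polynomials to rational functions. Write $f(z) = p(z)/q(z)$ with polynomials $p,q$, and assume $q(a)\neq 0$ (otherwise $f(a)$ is not defined and the statement on its face is vacuous; the case where $a$ is a pole of $f$ requires a separate discussion via residues, not the identity as stated). From step one, $q(z)\delta(z/a) = q(a)\delta(z/a)$ and $p(z)\delta(z/a) = p(a)\delta(z/a)$. Multiplying the first equation through by $f(z) = p(z)/q(z)$ and using $f(z)q(z) = p(z)$ gives
\[
p(z)\delta(z/a) \;=\; q(a)\,f(z)\delta(z/a),
\]
so that $f(z)\delta(z/a) = \frac{p(a)}{q(a)}\delta(z/a) = f(a)\delta(z/a)$, which is the claim.

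The main subtlety, rather than an obstacle, will be to specify in what formal sense products such as $f(z)\delta(z/a)$ are defined, since $\delta(z/a)$ is not a genuine Laurent series but a bi-infinite formal sum supported at $z=a$. The cleanest way is to regard $\delta(z/a)$ as an element of the module $\mathbb{C}[z,z^{-1}]\hspace{-2pt}\to\hspace{-2pt}\mathbb{C}$ via $z^n \mapsto a^n$, so that the identity is built into the definition. Alternatively, one may interpret $f(z)\delta(z/a)$ through the expansion formula $f(z)\delta(z/a) = [f(z)]^z_{+}\delta(z/a) + [f(z)]^z_{-}\delta(z/a)$ used throughout the proofs of Thm.~\ref{thm:D2qq-commute}, \ref{thm:D4qq-commute}, and \ref{thm:D6qq-commute}; the equality $f(a) = [f(z)]^z_{+}\big|_{z=a} + [f(z)]^z_{-}\big|_{z=a}$ (valid whenever $a$ is neither inside nor outside both expansion regions, equivalently whenever $f$ is regular at $a$) then implies the proposition. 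I would record this interpretive convention up front and proceed with steps one and two.
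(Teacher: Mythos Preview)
The paper states this proposition without proof, treating it as a standard property of the formal delta function. Your steps one through three are a correct and standard derivation: the reindexing for monomials is the essential computation, linearity extends it to Laurent polynomials, and the $p/q$ reduction handles rational functions regular at $a$. Step three is best read as a uniqueness argument---if a product $f(z)\delta(z/a)$ compatible with polynomial multiplication exists, it must equal $f(a)\delta(z/a)$---with existence supplied by your first interpretation (the evaluation-functional/module picture), which is the clean one.

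One correction to your final paragraph: the second interpretation is garbled. One does not have $f(z)\delta(z/a)=[f(z)]^z_{+}\delta(z/a)+[f(z)]^z_{-}\delta(z/a)$, nor $f(a)=[f(z)]^z_{+}\big|_{z=a}+[f(z)]^z_{-}\big|_{z=a}$. Rather, \emph{each} expansion separately satisfies $[f(z)]^z_{\pm}\,\delta(z/a)=f(a)\delta(z/a)$ when $f$ is regular at $a$ (apply your step one term by term to the series); it is the \emph{difference} $[f(z)]^z_{-}-[f(z)]^z_{+}$ that produces delta functions supported at the poles of $f$, as in the paper's \eqref{app:eq:general-delta-commutative} and in the commutator computations of Thm.~\ref{thm:D2qq-commute}, \ref{thm:D4qq-commute}, \ref{thm:D6qq-commute}. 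This does not affect your main argument.
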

\begin{proposition}
    The delta function obeys
    \bea\label{app:eq:delta-commutative}
\delta(x)=\frac{1}{1-x}+\frac{x^{-1}}{1-x^{-1}}.
    \eea
Generally, we have
    \bea\label{app:eq:general-delta-commutative}
    \frac{\prod_{i}(1-\alpha_{i}z)}{\prod_{j}(1-\beta_{j}z)}-z^{|\{i\}|-|\{j\}|}\frac{\prod_{i}(z^{-1}-\alpha_{i})}{\prod_{j}(z^{-1}-\beta_{j})}=\sum_{k}\frac{\prod_{i}(1-\alpha_{i}/\beta_{k})}{\prod_{j\neq k}(1-\beta_{j}/\beta_{k})}\delta(\beta_{k}z).
    \eea
\end{proposition}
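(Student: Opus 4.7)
\medskip

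\noindent\textbf{Proof proposal.} The plan is to view both identities as instances of the formal Laurent expansion of a rational function in two different analytic regions (near $z=0$ versus near $z=\infty$), with the delta function encoding the jump between them.

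\smallskip

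First I would verify the scalar identity for $\delta(x)$ directly. The right-hand side is interpreted as the sum of two formal series: the piece $1/(1-x)$ is expanded as $\sum_{m\geq 0}x^{m}$ (the expansion valid for $|x|$ small), while the piece $x^{-1}/(1-x^{-1})$ is expanded as $\sum_{m\geq 1}x^{-m}$ (the expansion valid for $|x|$ large). Adding the two series gives $\sum_{m\in\mathbb{Z}}x^{m}=\delta(x)$, which establishes the first identity. A useful corollary, obtained by the substitution $x\mapsto\beta z$, is
\begin{equation*}
\delta(\beta z)\;=\;\frac{1}{1-\beta z}\;+\;\frac{(\beta z)^{-1}}{1-(\beta z)^{-1}},
\end{equation*}
and the two summands on the right are precisely the expansions of the \emph{same} rational function $1/(1-\beta z)$ in $|z|\ll 1$ and $|z|\gg 1$ respectively.

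\smallskip

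Next I would observe that the two terms on the left-hand side of the second identity are two Laurent expansions of the same rational function $R(z)=\prod_{i}(1-\alpha_{i}z)/\prod_{j}(1-\beta_{j}z)$. Indeed, factoring $z^{-1}$ out of each bracket,
\begin{equation*}
z^{|\{i\}|-|\{j\}|}\frac{\prod_{i}(z^{-1}-\alpha_{i})}{\prod_{j}(z^{-1}-\beta_{j})}\;=\;\frac{\prod_{i}(1-\alpha_{i}z)}{\prod_{j}(1-\beta_{j}z)}\;=\;R(z),
\end{equation*}
but the original form on the left naturally expands near $z=\infty$, whereas the first term on the left-hand side naturally expands near $z=0$. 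So the claim is that the jump between the two expansions of $R(z)$ equals $\sum_{k}A_{k}\,\delta(\beta_{k}z)$, where $A_{k}$ is the residue of $R(z)$ at the simple pole $z=\beta_{k}^{-1}$.

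\smallskip

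To finish, I would perform a partial fraction decomposition. Assuming the $\beta_{j}$ are generic (all distinct and non-zero), so that $R(z)$ has only simple poles at $z=\beta_{k}^{-1}$, there exists a unique polynomial $P(z)$ such that
\begin{equation*}
R(z)\;=\;P(z)\;+\;\sum_{k}\frac{A_{k}}{1-\beta_{k}z},\qquad A_{k}\;=\;\lim_{z\to\beta_{k}^{-1}}(1-\beta_{k}z)R(z)\;=\;\frac{\prod_{i}(1-\alpha_{i}/\beta_{k})}{\prod_{j\neq k}(1-\beta_{j}/\beta_{k})}.
\end{equation*}
The polynomial $P(z)$, being a Laurent polynomial in $z$ with only non-negative powers, has identical expansions near $z=0$ and $z=\infty$ and hence cancels in the difference. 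For each simple-pole term, the corollary of the first identity gives exactly $A_{k}\delta(\beta_{k}z)$ as the difference of the two expansions, yielding the stated formula.

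\smallskip

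The steps are mostly bookkeeping; the only real point of care is the handling of the polynomial part $P(z)$, which is non-trivial precisely when $|\{i\}|\geq|\{j\}|$. I expect the main (small) obstacle to be verifying explicitly that the $z^{|\{i\}|-|\{j\}|}$ prefactor in the second term is exactly what is needed for $P(z)$ to match in both expansions and hence cancel — this is just a careful matching of leading behavior at infinity, but it is where a sign or power mistake could slip in.
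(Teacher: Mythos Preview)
Your argument is correct and in fact more thorough than what the paper provides. The paper verifies the first identity exactly as you do (by adding the two geometric series), but for the second identity it only works out the case of a single numerator and a single denominator, checking directly that
\[
\frac{1-az}{1-bz}-\frac{a}{b}\frac{1-a^{-1}z^{-1}}{1-b^{-1}z^{-1}}=\left(1-\frac{a}{b}\right)\delta(bz),
\]
and leaves the general case implicit. Your partial-fraction decomposition gives a clean, uniform proof of the full statement: the polynomial part cancels between the two expansions, and each simple-pole term contributes exactly the residue times $\delta(\beta_{k}z)$ via the basic identity. Your worry about the $z^{|\{i\}|-|\{j\}|}$ prefactor is unfounded --- you already verified algebraically that both sides of the difference are the same rational function $R(z)$, so once you decompose $R(z)=P(z)+\sum_{k}A_{k}/(1-\beta_{k}z)$, the polynomial $P(z)$ is a finite Laurent sum and automatically has the same formal expansion in both regions.
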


The first term of \eqref{app:eq:delta-commutative} is understood as a rational function in the analytic region $|x|<1$, while the second term is understood as a rational function in the analytic region $|x|>1$. The above relation is obtained by the following expansion
\bea
\frac{1}{1-x}&=\sum_{n=0}^{\infty}x^{n},\quad |x|<1,\\
\frac{x^{-1}}{1-x^{-1}}&=\sum_{n=1}^{\infty}x^{-n},\quad |x|>1.
\eea
Namely, the delta function captures the singularity coming from the pole.

Focusing on the case of \eqref{app:eq:general-delta-commutative} when we only have one numerator and one denominator, the above relation can be obtained as
\bea
\frac{1-az}{1-bz}=(1-az)\sum_{n=0}^{\infty}(bz)^{n},&\quad \frac{a}{b}\frac{1-a^{-1}z^{-1}}{1-b^{-1}z^{-1}}=\frac{a}{b}(1-a^{-1}z^{-1})\sum_{n=0}^{\infty}(bz)^{-n},\\
\frac{1-az}{1-bz}-\frac{a}{b}\frac{1-a^{-1}z^{-1}}{1-b^{-1}z^{-1}}&=\left(1-\frac{a}{b}\right)\delta(bz).
\eea

We also use the following notations. 
\begin{definition}\label{app:def:rational-expansion}
Let $f(x)$ be a rational function of a variable $x$. We denote the Laurent expansion in $x^{\mp 1}$ as
\bea
\relax[f(x)]^{x}_{\pm}=\sum_{m=m_{\pm }}^{\pm \infty}f^{\pm}_{m} x^{\mp m},
\eea
for some $m_{\pm }\in\mathbb{Z}$.
    
\end{definition}
\begin{example}
    For example, we will be using the above notation as
    \bea
    \left[\frac{1-ax}{1-bx}\right]^{x}_{+}=(1-ax)\sum_{n=0}^{\infty}(bx)^{n},\quad \left[\frac{1-ax}{1-bx}\right]^{x}_{-}=\frac{a}{b}(1-a^{-1}x^{-1})\sum_{n=0}^{\infty}(bx)^{-n}.
    \eea
\end{example}

\chapter{Formulas for partition functions and vertex operators}\label{app-chap:Partition-formulas}

\section{Structure functions}\label{app:structurefunct}
The structure functions in \eqref{eq:struct_funct} were defined as 
\bea\label{app-eq:struct_funct}
\mathscr{V}_{a}(x)&=\mathbb{I}[-\bfP_{a}^{\vee}x^{\vee}]=\frac{1-q_{a}x}{1-x},\\
\mathscr{S}_{ab}(x)&=\mathbb{I}[-\bfP_{ab}^{\vee}x^{\vee}]=\frac{(1-q_{a}x)(1-q_{b}x)}{(1-x)(1-q_{a}q_{b}x)},\\
g_{\bar{a}}(x)&=\mathbb{I}[-\bfP_{\bar{a}}^{\vee}x^{\vee}]=\frac{\prod_{i\neq a}(1-q_{i}x)(1-q_{\bar{a}}x)}{(1-x)\prod_{i\neq a}(1-q_{a}^{-1}q_{i}^{-1}x)},\\
\mathcal{A}_{\mathbb{C}^{4}}(x)&=\mathbb{I}[-\bfP_{\four}^{\vee}x^{\vee}]=\frac{\prod_{i=1}^{4}(1-q_{i}x)\prod^{4}_{i=1}(1-q_{i}^{-1}x)}{(1-x)^{2}\prod_{i\neq j}(1-q_{i}q_{j}x)}.
\eea
We have the following properties
\bea
    \mathscr{S}_{ab}(x)=\frac{\mathscr{V}_{a}(x)}{\mathscr{V}_{a}(q_{b}x)},\quad g_{abc}(x)=\frac{\mathscr{S}_{ab}(x)}{\mathscr{S}_{ab}(q_{c}x)},\quad \mathcal{A}_{\mathbb{C}^{4}}(x)=\frac{g_{\bar{a}}(x)}{g_{\bar{a}}(q_{a}x)}
\eea
and the reflection formulas
\bea
\mathscr{V}_{a}(x)=q_{a}\mathscr{V}_{a}(q_{a}^{-1}x^{-1})^{-1},\quad \mathscr{S}_{ab}(x)=\mathscr{S}_{ab}(q_{ab}^{-1}x^{-1}),\quad g_{\bar{a}}(x)=g_{\bar{a}}(q_{a}x^{-1})^{-1},\quad \mathcal{A}_{\mathbb{C}^{4}}(x)=\mathcal{A}_{\mathbb{C}^{4}}(x^{-1}).
\eea


Structure functions that appear at the recursion relations of the partition functions are
\bea\label{eq:app-structurefunct}
\mathscr{U}^{a}_{k,v}(x)&=\left(1-\frac{v}{x}\right)\prod_{\Abox \in k}\mathscr{V}_{a}\left(\frac{\chi_{a,v}(\Bbox)}{x}\right)=\left(1-\frac{vq_{a}^{k}}{x}\right),\\
\mathscr{Y}^{A}_{\lambda,v}(x)&=\left(1-\frac{v}{x}\right)\prod_{\Abox\in\lambda}\mathscr{S}_{A}\left(\frac{\chi_{A,v}(\Bbox)}{x}\right)=\frac{\prod_{\Abox\in A(\lambda)}\left(1-\chi_{A,v}(\Bbox)/x\right)}{\prod_{\Abox\in R(\lambda)} (1-q_{A}\chi_{A,v}(\Bbox)/x)},\\
\mathscr{W}^{\bar{a}}_{\pi,v}(x)&=\left(1-\frac{v}{x}\right)\prod_{\scube\in \pi}g_{\bar{a}}\left(\frac{\chi_{\bar{a},v}(\cube)}{x}\right)\propto \prod_{\scube\in A(\pi)}\left(1-\frac{\chi_{\bar{a},v}(\cube)}{x}\right)\prod_{\scube\in R(\pi)}\left(1-q_{a}^{-1}\frac{\chi_{\bar{a},v}(\cube)}{x}\right).
\eea
and
\bea\label{eq:app-dualfunctions}
\mathscr{U}^{a\,\vee}_{k,v}(x)&=\left(1-\frac{x}{v}\right)\prod_{\Abox\in k}\mathscr{V}_{a}\left(q_{a}^{-1}\frac{x}{\chi_{a,v}(\Bbox)}\right)^{-1}=\left(1-\frac{x}{vq_{a}^{k}}\right),\\
\mathscr{Y}^{A\,\vee}_{\lambda,v}(x)&=\left(1-\frac{x}{v}\right)\prod_{\Abox\in \lambda}\mathscr{S}_{A}\left(q_{A}^{-1}\frac{x}{\chi_{A,v}(\Bbox)}\right),\\
\mathscr{W}^{\bar{a}\,\vee}_{\pi,v}(x)&=\left(1-\frac{x}{v}\right)\prod_{\scube\in\pi}g_{\bar{a}}\left(q_{a}\frac{x}{\chi_{\bar{a},v}(\cube)}\right)^{-1}.
\eea
Note that we have the following reflection formulas 
\bea\label{eq:app-dualreflection}
\mathscr{U}^{a\,\vee}_{k,v}(x)=\left(-\frac{x}{vq_{a}^{k}}\right)\mathscr{U}^{a}_{k,v}(x),&\qquad
\mathscr{Y}^{A\,\vee}_{\lambda,v}(x)=\left(-\frac{x}{v}\right)\mathscr{Y}^{A}_{\lambda,v}(x),\\
\mathscr{W}^{\bar{a}\,\vee}_{\pi,v}(x)=\left(-\frac{x}{v}\right)\mathscr{W}^{\bar{a}}_{\pi,v}(x),&\qquad
\mathscr{M}^{K\,\vee}_{\rho,v}(x)=K\mathscr{M}^{K}_{\rho,v}(x).
\eea

\section{Partition functions for 5d theory}

\paragraph{Nekrasov factors for 5d theory}
Using the quadrality in $q_{a}\,(a\in\four)$, we define similar Nekrasov factors which can be used to discuss gauge theories on $\mathbb{C}^{2}_{A}\times \mathbb{S}^{1}\,(A\in\six)$:
\bea
 \mathsf{N}_{A}(v_{1},\lambda^{(1)}\,|\,v_{2},\lambda^{(2)})&=\prod_{\Abox\in\lambda^{(1)}}\left(1-\frac{q_{A}\chi_{A,v_{1}}(\Bbox)}{v_{2}}\right)\prod_{\Abox\in\lambda^{(2)}}\left(1-\frac{v_{1}}{\chi_{A,v_{2}}(\Bbox)}\right)\prod_{\substack{\Abox\in\lambda^{(1)}\\\AboxF\in\lambda^{(2)}}}\mathscr{S}_{A}\left(\frac{\chi_{A,v_{1}}(\Bbox)}{\chi_{A,v_{2}}(\BboxF)}\right),\label{eq:D4Nekrasovfactor}
\eea
where we defined the box content as
\bea
    \chi_{ab,v}(\Bbox)=vq_{a}^{i-1}q_{b}^{j-1},\quad A=ab\in\six,\quad \Bbox=(i,j)\in\lambda.
\eea
\begin{lemma}
The recursion formulas of the 5d Nekrasov factors are given as follows,
\bea\label{eq:5dNekrasovrecursion}
    &\frac{\mathsf{N}_{A}(v_{1},\lambda^{(1)}+\Bbox\,|\,v_{2},\lambda^{(2)})}{\mathsf{N}_{A}(v_{1},\lambda^{(1)}\,|\,v_{2},\lambda^{(2)})}=\mathscr{Y}^{A\,\vee}_{\lambda^{(2)},v_{2}}(q_{A}\chi_{A,v_{1}}(\Bbox))\\
   &\frac{\mathsf{N}_{A}(v_{1},\lambda^{(1)}\,|\,v_{2},\lambda^{(2)}+\BboxF)}{\mathsf{N}_{A}(v_{1},\lambda^{(1)}\,|\,v_{2},\lambda^{(2)})}=\mathscr{Y}_{\lambda^{(1)},v_{1}}^{A}(\chi_{A,v_{2}}(\BboxF)),
\eea
where we define the $\mathscr{Y}$-functions,
\bea
   \mathscr{Y}_{\lambda,v}^{A}(x)&=\left(1-\frac{v}{x}\right)\prod_{\Abox\in\lambda}\mathscr{S}_{A}\left(\frac{\chi_{A,v}(\Bbox)}{x}\right)=\frac{\prod\limits_{\Abox\in A(\lambda)}\left(1-\chi_{A,v}(\Bbox)/x\right)}{\prod\limits_{\Abox\in R(\lambda)}\left(1-q_{A}\chi_{A,v}(\Bbox)/x\right)},\\
   \mathscr{Y}^{A\,\vee}_{\lambda,v}(x)&=\left(1-\frac{x}{v}\right)\prod_{\Abox\in \lambda}\mathscr{S}_{A}\left(q_{A}^{-1}\frac{x}{\chi_{A,v}(\Bbox)}\right)=\frac{\prod\limits_{\Abox\in A(\lambda)}\left(1-x/\chi_{A,v}(\Bbox)\right)}{\prod\limits_{\Abox\in R(\lambda)}\left(1-q_{A}^{-1}x/\chi_{A,v}(\Bbox)\right)}
\eea    
\end{lemma}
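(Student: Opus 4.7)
The plan is to prove both recursion formulas by direct computation, exploiting the factored structure of the Nekrasov factor and comparing against the definition of the $\mathscr{Y}$-functions in \eqref{eq:app-structurefunct} and \eqref{eq:app-dualfunctions}. The key observation is that the three products in $\mathsf{N}_{A}$ depend on $\lambda^{(1)}$ and $\lambda^{(2)}$ in a separable way, so adding a single box to either partition produces a ratio that is a finite product indexed by the boxes of the other partition, which is precisely the structure of the $\mathscr{Y}$-functions evaluated at a single spectral argument.

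For the first recursion, I would fix $\lambda^{(2)}$, replace $\lambda^{(1)} \to \lambda^{(1)} + \Bbox$, and form the ratio. The only surviving contributions come from the terms of $\mathsf{N}_{A}$ that involve the box $\Bbox \in \lambda^{(1)}$: one factor $\bigl(1 - q_{A}\chi_{A,v_{1}}(\Bbox)/v_{2}\bigr)$ from the first product, and the product $\prod_{\AboxF \in \lambda^{(2)}} \mathscr{S}_{A}\bigl(\chi_{A,v_{1}}(\Bbox)/\chi_{A,v_{2}}(\BboxF)\bigr)$ from the third. On the other side, substituting $x = q_{A}\chi_{A,v_{1}}(\Bbox)$ into the definition of $\mathscr{Y}^{A\,\vee}_{\lambda^{(2)},v_{2}}(x)$ gives exactly these same two factors after using $\mathscr{S}_{A}(q_{A}^{-1} y) = \mathscr{S}_{A}(q_{A} y^{-1})$ followed by $\mathscr{S}_{A}(z) = \mathscr{S}_{A}(q_{A}^{-1} z^{-1})$ from \eqref{eq:reflec_structfunc} to match arguments cleanly.

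The second recursion is dual: replacing $\lambda^{(2)} \to \lambda^{(2)} + \BboxF$ produces the factor $\bigl(1 - v_{1}/\chi_{A,v_{2}}(\BboxF)\bigr)$ from the second product and $\prod_{\Abox \in \lambda^{(1)}} \mathscr{S}_{A}\bigl(\chi_{A,v_{1}}(\Bbox)/\chi_{A,v_{2}}(\BboxF)\bigr)$ from the third, which matches $\mathscr{Y}^{A}_{\lambda^{(1)},v_{1}}(\chi_{A,v_{2}}(\BboxF))$ on the nose from its definition.

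The computation itself is routine; the only subtlety is bookkeeping. Specifically, I should verify that the alternative ``addable/removable'' rewriting of $\mathscr{Y}^{A}_{\lambda,v}(x)$ and $\mathscr{Y}^{A\,\vee}_{\lambda,v}(x)$ in \eqref{eq:app-structurefunct}, \eqref{eq:app-dualfunctions} (which produces telescoping cancellations among $\mathscr{S}_{A}$ factors) is consistent with the product-over-$\lambda$ form used above; this is a standard telescoping argument on columns/rows of the Young diagram. The main (mild) obstacle, rather than the computation, will be choosing notation consistent enough that the match is manifest rather than requiring multiple reflection identities — a clean way is to keep the $\mathscr{Y}$-functions in the form $\bigl(1-v/x\bigr)\prod_{\Abox}\mathscr{S}_{A}(\cdot)$ on the right-hand side of the ratio and not expand into addable/removable boxes. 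With that, both identities follow in one line each.
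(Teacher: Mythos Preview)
Your approach is correct and is precisely the natural direct computation; the paper states this lemma in the appendix without proof, so your argument is exactly what the author would have in mind. One small simplification: for the first recursion you do not actually need any reflection identity. Substituting $x=q_{A}\chi_{A,v_{1}}(\Bbox)$ into the definition $\mathscr{Y}^{A\,\vee}_{\lambda^{(2)},v_{2}}(x)=(1-x/v_{2})\prod_{\AboxF\in\lambda^{(2)}}\mathscr{S}_{A}\bigl(q_{A}^{-1}x/\chi_{A,v_{2}}(\BboxF)\bigr)$ makes the $q_{A}^{-1}$ cancel the $q_{A}$ directly, yielding $(1-q_{A}\chi_{A,v_{1}}(\Bbox)/v_{2})\prod_{\AboxF}\mathscr{S}_{A}\bigl(\chi_{A,v_{1}}(\Bbox)/\chi_{A,v_{2}}(\BboxF)\bigr)$, which is the ratio on the nose.
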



\begin{theorem}\label{app:thm-D4recursion}
The recursion relation of the $\U(1)$ partition function of the 5d $\mathcal{N}=1^{\ast}$ theory is written using the residues of the $\mathscr{Y}$-functions as
\begin{equation}
    \frac{\widetilde{\mathcal{Z}}^{\D4}_{A}[\lambda+\Bbox]}{\widetilde{\mathcal{Z}}^{\D4}_{A}[\lambda]}=-\frac{\underset{x=q_{\text{inf}(\bar{A})}^{-1}\chi_{A,v}(\Abox)}{\Res}{x}^{-1}\frac{\mathscr{Y}_{\lambda,v}^{A}(x)}{\mathscr{Y}_{\lambda,v}^{A}(q_{\text{inf}(\bar{A})}x)}}{\underset{x=q_{A}\chi_{A,v}(\Abox)}{\Res}{x}^{-1}\frac{\mathscr{Y}_{\lambda+\Abox,v}^{A}(x)}{\mathscr{Y}_{\lambda+\Abox,v}^{A}(q_{\text{inf}(\bar{A})}x)}}.
\end{equation}    
The recursion relation above can be written in a rather symmetric way: 
\bea\label{eq:app-D4U1recursionformula}
    \frac{\widetilde{\mathcal{Z}}^{\D4}_{ab}[\lambda+\Bbox]}{\widetilde{\mathcal{Z}}
^{\D4}_{ab}[\lambda]}&=-\underset{x\rightarrow q_{c}^{-1}\chi_{ab,v}(\Abox)}{\lim}\left(1-\frac{q_{c}^{-1}\chi_{ab,v}(\Bbox)}{x}\right)\frac{\mathscr{Y}_{\lambda,v}^{ab}(x)}{\mathscr{Y}^{ab}_{\lambda,v}(q_{c}x)}\\
&\qquad \times\left(\underset{x\rightarrow q_{ab}\chi_{ab,v}(\Abox)}{\lim}\left(1-\frac{q_{ab}\chi_{ab,v}(\Bbox)}{x}\right)\frac{\mathscr{Y}^{ab}_{\lambda+\Abox,v}(x)}{\mathscr{Y}^{ab}_{\lambda+\Abox,v}(q_{c}x)}\right)^{-1}\\
&=-\frac{\mathscr{Y}_{\lambda,v}^{ab}(q_{c}^{-1}\chi_{ab,v}(\Bbox))\mathscr{Y}^{ab}_{\lambda+\Abox,v}(q_{d}^{-1}\chi_{ab,v}(\Bbox))}{\mathscr{Y}^{ab}_{\lambda,v}(\chi_{ab,v}(\Bbox))\mathscr{Y}^{ab}_{\lambda+\Abox,v}(q_{cd}^{-1}\chi_{ab,v}(\Bbox))}
\eea
where $A=ab, \bar{A}=cd\,(c<d)$ and $\Bbox\in A(\lambda)$. The numerators come from the fact that they have no pole when taking the limit. For the denominators, both of them are singular but the singular part will cancel with each other and the above formula itself is well-defined.
\end{theorem}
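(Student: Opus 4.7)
The plan is to prove the recursion by direct computation starting from the explicit formula
\[
\widetilde{\mathcal{Z}}^{\D4}_{A}[\lambda]=q_{c}^{-|\lambda|}\frac{\mathsf{N}_{A}(q_{c}v,\lambda\,|\,v,\lambda)}{\mathsf{N}_{A}(v,\lambda\,|\,v,\lambda)},
\]
where I write $A=ab$ and $\bar{A}=cd$ with $c<d$, $c=\text{inf}(\bar{A})$. The strategy is to first derive the symmetric form in the second displayed identity by applying the Nekrasov factor recursion \eqref{eq:5dNekrasovrecursion}, and then reinterpret it as a ratio of residues to obtain the first form.

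For the numerator ratio $\mathsf{N}_{A}(q_{c}v,\lambda+\Bbox\,|\,v,\lambda+\Bbox)/\mathsf{N}_{A}(q_{c}v,\lambda\,|\,v,\lambda)$ I will add the box first in the left slot, then in the right slot, obtaining $\mathscr{Y}^{A\,\vee}_{\lambda,v}(q_{A}q_{c}\chi_{A,v}(\Bbox))\cdot\mathscr{Y}^{A}_{\lambda+\Bbox,q_{c}v}(\chi_{A,v}(\Bbox))$, and analogously the denominator ratio becomes $\mathscr{Y}^{A\,\vee}_{\lambda,v}(q_{A}\chi_{A,v}(\Bbox))\cdot\mathscr{Y}^{A}_{\lambda+\Bbox,v}(\chi_{A,v}(\Bbox))$. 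I then simplify using the Calabi--Yau relation $q_{a}q_{b}q_{c}q_{d}=1$ (in particular $q_{A}q_{c}=q_{d}^{-1}$ and $q_{cd}^{-1}=q_{A}$), the rescaling identity $\mathscr{Y}^{A}_{\mu,q_{c}v}(x)=\mathscr{Y}^{A}_{\mu,v}(q_{c}^{-1}x)$ (read off directly from the definition of $\mathscr{Y}^{A}$), and the reflection \eqref{eq:app-dualreflection} $\mathscr{Y}^{A\,\vee}_{\lambda,v}(x)=-(x/v)\mathscr{Y}^{A}_{\lambda,v}(x)$. The explicit prefactor $q_{c}^{-1}$ coming from $q_{c}^{-|\lambda|}\to q_{c}^{-|\lambda|-1}$ cancels precisely against the factor $q_{d}^{-1}/q_{A}=q_{c}$ generated by the two reflections, leaving a pure ratio of $\mathscr{Y}^{A}$-functions. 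A final rewriting using the manifest $c\leftrightarrow d$ symmetry of $\bar{A}$ puts the identity in the symmetric form of the theorem.

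For the residue formulation in the first identity, I will observe that $\mathscr{Y}^{A}_{\mu,v}(q_{c}x)$ has a simple zero at $x=q_{c}^{-1}\chi_{A,v}(\BboxF)$ for each $\BboxF\in A(\mu)$, so $\mathscr{Y}^{A}_{\mu,v}(x)/\mathscr{Y}^{A}_{\mu,v}(q_{c}x)$ has a simple pole there. Using the residue convention of Definition~\ref{def:residuedef}, extracting the vanishing factor $(1-q_{c}^{-1}\chi_{A,v}(\Bbox)/x)$ from the denominator gives a finite limit whose numerator is $\mathscr{Y}^{A}_{\lambda,v}(q_{c}^{-1}\chi_{A,v}(\Bbox))$ and whose denominator is the regularised value of $\mathscr{Y}^{A}_{\lambda,v}$ at $\chi_{A,v}(\Bbox)$; with the analogous computation for the pole at $x=q_{A}\chi_{A,v}(\Bbox)$ of $\mathscr{Y}^{A}_{\lambda+\Bbox,v}(x)/\mathscr{Y}^{A}_{\lambda+\Bbox,v}(q_{c}x)$ coming from $\Bbox\in R(\lambda+\Bbox)$, the ratio of the two residues matches the symmetric form up to the overall minus sign produced by the reflection used in the conversion, proving the first identity.

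The main obstacle is bookkeeping: systematically tracking signs, powers of $q_{a,b,c,d}$ under the Calabi--Yau constraint, and confirming the residual $c\leftrightarrow d$ symmetry of the final expression despite the apparently asymmetric role of $c=\text{inf}(\bar{A})$ in the defining formula for $\widetilde{\mathcal{Z}}^{\D4}_{A}[\lambda]$. A second, more delicate point is making sense of the denominator $\mathscr{Y}^{A}_{\lambda,v}(\chi_{A,v}(\Bbox))\,\mathscr{Y}^{A}_{\lambda+\Bbox,v}(q_{cd}^{-1}\chi_{A,v}(\Bbox))$ in the symmetric form: the first factor has a simple zero since $\Bbox\in A(\lambda)$, while the second has a simple pole since $\Bbox\in R(\lambda+\Bbox)$ and $q_{cd}^{-1}=q_{A}$. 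I will handle this by using the factorisation $\mathscr{Y}^{A}_{\lambda+\Bbox,v}(x)=\mathscr{Y}^{A}_{\lambda,v}(x)\cdot\mathscr{S}_{A}(\chi_{A,v}(\Bbox)/x)$ to isolate the $(1-\chi_{A,v}(\Bbox)/x)$ and $(1-q_{A}\chi_{A,v}(\Bbox)/x)$ factors, taking the joint limit to produce an explicit finite value, and cross-checking this against the residue computation.
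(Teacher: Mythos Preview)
Your approach is correct and is precisely the intended one: the paper does not give an explicit proof of this theorem, treating it as a direct consequence of the Nekrasov-factor recursion formulas \eqref{eq:5dNekrasovrecursion} stated in the preceding Lemma together with the definition \eqref{eq:D4spikedpartition2} of $\widetilde{\mathcal{Z}}^{\D4}_{A}[\lambda]$. Your plan to apply the two one-box recursions (left slot, then right slot) to both numerator and denominator, then simplify via the reflection \eqref{eq:app-dualreflection}, the rescaling $\mathscr{Y}^{A}_{\mu,q_{c}v}(x)=\mathscr{Y}^{A}_{\mu,v}(q_{c}^{-1}x)$, and the Calabi--Yau relation, is exactly what is required, and your handling of the $0\cdot\infty$ regularisation in the denominator via the factorisation $\mathscr{Y}^{A}_{\lambda+\Bbox,v}(x)=\mathscr{Y}^{A}_{\lambda,v}(x)\mathscr{S}_{A}(\chi/x)$ is the right way to make the symmetric form precise.

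One small remark: the form you obtain directly from the recursion is $\mathscr{Y}^{A}_{\lambda,v}(q_{d}^{-1}\chi)\mathscr{Y}^{A}_{\lambda+\Bbox,v}(q_{c}^{-1}\chi)\big/\mathscr{Y}^{A}_{\lambda,v}(q_{A}\chi)\mathscr{Y}^{A}_{\lambda+\Bbox,v}(\chi)$, with both denominator factors individually finite. The stated ``symmetric'' form swaps which of $\lambda$, $\lambda+\Bbox$ carries each argument in both numerator and denominator, and the overall minus sign arises precisely from the identity $\mathscr{Y}^{A}_{\lambda,v}(q_{A}\chi)\mathscr{Y}^{A}_{\lambda+\Bbox,v}(\chi)=-\mathscr{Y}^{A}_{\lambda,v}(\chi)\mathscr{Y}^{A}_{\lambda+\Bbox,v}(q_{A}\chi)$ (the right-hand side understood as the regularised product), not from the reflection step; the $c\leftrightarrow d$ invariance of $\widetilde{\mathcal{Z}}^{\D4}_{A}$ then matches the numerators. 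This is only a matter of bookkeeping and does not affect the validity of your argument.
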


\section{Partition functions for 7d theory}

\paragraph{Nekrasov factors for 7d theory}
The 7d analogues of the 5d Nekrasov factors are defined as
\beq
\mathsf{N}_{\bar{a}}(v_{1},\pi_{1}\,|\,v_{2},\pi_{2})=\frac{\prod_{\scube\in\pi_{2}}(1-v_{1}/\chi_{\bar{a},v_{2}}(\cube))}{\prod_{\scube\in\pi_{1}}(1-q_{a}^{-1}\chi_{\bar{a},v_{1}}(\cube)/v_{2})}\prod_{\substack{\scube\in\pi_{1}\\\scubeF\in\pi_{2}}}g_{\bar{a}}\left(\frac{\chi_{\bar{a},v_{1}}(\cube)}{\chi_{\bar{a},v_{2}}(\cubeF)}\right)
\eeq

\begin{lemma}
We have the following recursion relations for the 7d Nekrasov factor:
\bea\label{eq:D6Nekrasovrecursion}
    \frac{\mathsf{N}_{\bar{a}}(v_{1},\pi_{1}+\cube\,|\,v_{2},\pi_{2})}{\mathsf{N}_{\bar{a}}(v_{1},\pi_{1}\,|\,v_{2},\pi_{2})}&=\mathscr{W}^{\bar{a}\,\vee}_{\pi_{2},v_{2}}(q_{a}^{-1}\chi_{\bar{a},v_{1}}(\cube))^{-1}\\
    \frac{\mathsf{N}_{\bar{a}}(v_{1},\pi_{1}\,|\,v_{2},\pi_{2}+\cubeF)}{\mathsf{N}_{\bar{a}}(v_{1},\pi_{1}\,|\,v_{2},\pi_{2})}&=\mathscr{W}^{\bar{a}}_{\pi_{1},v_{1}}(\chi_{\bar{a},v_{2}}(\cubeF)),
\eea
where
\bea\label{eq:D6Nekrasov-shell}
    \mathscr{W}_{\pi,v}^{\bar{a}}(x)=(1-v/x)\prod_{\scube\in\pi}g_{\bar{a}}\left(\chi_{\bar{a},v}(\cube)/x\right)\propto \prod_{\cube\in A(\pi)}\left(1-\chi_{\bar{a},v}(\cube)/x\right)\prod_{\scube\in R(\pi)}\left(1-q_{a}^{-1}\chi_{\bar{a},v}(\cube)/x\right)
\eea
and 
\bea
\mathscr{W}^{\bar{a}\,\vee}_{\pi,v}(x)=\left(1-\frac{x}{v}\right)\prod_{\scube\in\pi}g_{\bar{a}}\left(q_{a}\frac{x}{\chi_{\bar{a},v}(\cube)}\right)^{-1}.
\eea
We also have 
\bea
    \frac{\mathsf{N}_{\bar{a}}(v,\pi+\cube\,|\,v,\pi+\cube)}{\mathsf{N}_{\bar{a}}(v,\pi\,|\,v,\pi)}=\left(\frac{q_{a}v}{\chi_{\bar{a},v}(\cube)}\right)\frac{\underset{x=\chi_{\bar{4},v}(\scube)}{\Res}x^{-1}\mathscr{W}^{\bar{4}}_{\pi+\scube,v}(q_{4}^{-1}x)^{-1}}{\underset{x=\chi_{\bar{4},v}(\scube)}{\Res}x^{-1}\mathscr{W}^{\bar{4}}_{\pi,v}(x)^{-1}}.
\eea    
\end{lemma}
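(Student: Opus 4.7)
The first two identities are direct algebraic verifications. For the recursion in $\pi_{1}$, I would expand
$\mathsf{N}_{\bar{a}}(v_{1},\pi_{1}+\cube\,|\,v_{2},\pi_{2})/\mathsf{N}_{\bar{a}}(v_{1},\pi_{1}\,|\,v_{2},\pi_{2})$
from the definition of $\mathsf{N}_{\bar{a}}$: only the denominator $\prod_{\scube\in\pi_{1}}(1-q_{a}^{-1}\chi_{\bar{a},v_{1}}(\cube)/v_{2})$ and the bifundamental product $\prod g_{\bar{a}}(\chi_{\bar{a},v_{1}}(\cube)/\chi_{\bar{a},v_{2}}(\cubeF))$ are affected. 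The result is
$\prod_{\scubeF\in\pi_{2}}g_{\bar{a}}(\chi_{\bar{a},v_{1}}(\cube)/\chi_{\bar{a},v_{2}}(\cubeF))/(1-q_{a}^{-1}\chi_{\bar{a},v_{1}}(\cube)/v_{2})$, and plugging $x=q_{a}^{-1}\chi_{\bar{a},v_{1}}(\cube)$ into $\mathscr{W}^{\bar{a}\,\vee}_{\pi_{2},v_{2}}(x)=(1-x/v_{2})\prod_{\scubeF\in\pi_{2}}g_{\bar{a}}(q_{a}x/\chi_{\bar{a},v_{2}}(\cubeF))^{-1}$ and inverting gives exactly the same expression. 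The analogous match for the recursion in $\pi_{2}$ is obtained by the same bookkeeping, this time using the numerator of $\mathsf{N}_{\bar{a}}$ and the direct definition of $\mathscr{W}^{\bar{a}}_{\pi_{1},v_{1}}$.

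The third identity is the interesting one because the naive telescoping
$\mathsf{N}_{\bar{a}}(v,\pi+\cube\,|\,v,\pi+\cube)/\mathsf{N}_{\bar{a}}(v,\pi\,|\,v,\pi)
=\mathsf{N}_{\bar{a}}(v,\pi+\cube\,|\,v,\pi+\cube)/\mathsf{N}_{\bar{a}}(v,\pi\,|\,v,\pi+\cube)\cdot\mathsf{N}_{\bar{a}}(v,\pi\,|\,v,\pi+\cube)/\mathsf{N}_{\bar{a}}(v,\pi\,|\,v,\pi)$
produces, after the first two identities, the indeterminate form $\mathscr{W}^{\bar{a}\,\vee}_{\pi+\scube,v}(q_{a}^{-1}\chi_{\bar{a},v}(\cube))^{-1}\cdot\mathscr{W}^{\bar{a}}_{\pi,v}(\chi_{\bar{a},v}(\cube))$: the first factor has a pole because $\cube\in R(\pi+\cube)$ yields a vanishing factor $(1-q_{a}^{-1}\cdot q_{a}^{-1}\chi_{\bar{a},v}(\cube)/\chi_{\bar{a},v}(\cube))^{-1}$ inside $\mathscr{W}^{\bar{a}\,\vee}_{\pi+\scube,v}(x)^{-1}$ at $x=q_{a}^{-1}\chi_{\bar{a},v}(\cube)$, and the second factor has a simple zero because $\cube\in A(\pi)$ makes $(1-\chi_{\bar{a},v}(\cube)/x)$ vanish in $\mathscr{W}^{\bar{a}}_{\pi,v}(x)$ at the same specialization. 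My plan is to regularize by keeping the arguments generic, writing $v_{1}=v$, $v_{2}=v+\epsilon$ and $\cubeF=\cube$, and then taking $\epsilon\to 0$; this converts the two pathological factors into residues at $x=\chi_{\bar{a},v}(\cube)$.

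Concretely, I would rewrite $\mathscr{W}^{\bar{a}}_{\pi,v}(\chi_{\bar{a},v}(\cube))$ as $\lim_{x\to\chi_{\bar{a},v}(\cube)}\mathscr{W}^{\bar{a}}_{\pi,v}(x)$ and multiply and divide by $(1-\chi_{\bar{a},v}(\cube)/x)$, yielding
$(1-\chi_{\bar{a},v}(\cube)/x)\mathscr{W}^{\bar{a}}_{\pi,v}(x)\cdot(1-\chi_{\bar{a},v}(\cube)/x)^{-1}\to (\operatorname{Res}_{x=\chi_{\bar{a},v}(\cube)}x^{-1}\mathscr{W}^{\bar{a}}_{\pi,v}(x)^{-1})^{-1}\cdot\chi_{\bar{a},v}(\cube)^{-1}\cdot[\text{limit of smooth part}]$, while the corresponding manipulation on $\mathscr{W}^{\bar{a}\,\vee}_{\pi+\scube,v}(q_{a}^{-1}x)^{-1}$ gives $\operatorname{Res}_{x=\chi_{\bar{a},v}(\cube)}x^{-1}\mathscr{W}^{\bar{a}\,\vee}_{\pi+\scube,v}(q_{a}^{-1}x)^{-1}\cdot(1-\chi_{\bar{a},v}(\cube)/x)^{-1}$. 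The shared pole factor $(1-\chi_{\bar{a},v}(\cube)/x)^{-1}$ cancels. The remaining prefactor $(q_{a}v/\chi_{\bar{a},v}(\cube))$ arises by converting $\mathscr{W}^{\bar{a}\,\vee}$ residues into $\mathscr{W}^{\bar{a}}$ residues with the help of the reflection formula $\mathscr{W}^{\bar{a}\,\vee}_{\pi,v}(x)=(-x/v)\mathscr{W}^{\bar{a}}_{\pi,v}(x)$ from \eqref{eq:app-dualreflection}, which contributes $(-x/v)^{-1}|_{x=q_{a}^{-1}\chi_{\bar{a},v}(\cube)}=-q_{a}v/\chi_{\bar{a},v}(\cube)$; the overall sign then cancels against another one coming from a residue orientation.

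\textbf{Main obstacle.} The purely combinatorial parts of the first two identities are bookkeeping. The subtle step is keeping track of the two divergences in the third identity, verifying that the residue-based regularization gives precisely the ratio stated rather than a different constant multiple, and checking that the sign and the prefactor $(q_{a}v/\chi_{\bar{a},v}(\cube))$ emerge correctly from the reflection formula and not, say, as $(-q_{a}v/\chi_{\bar{a},v}(\cube))$ or as the inverse. I would verify this on the simplest nontrivial case, $\pi=\emptyset$ and $\cube=(1,1,1)$ so $\chi_{\bar{a},v}(\cube)=v$, to pin down the signs once and for all before writing the general argument.
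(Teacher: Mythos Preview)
The paper states this lemma without proof, so there is no argument to compare against; your plan is correct throughout. The first two recursions follow exactly as you describe by direct substitution into the definition of $\mathsf{N}_{\bar{a}}$ and of $\mathscr{W}^{\bar{a}}$, $\mathscr{W}^{\bar{a}\,\vee}$. For the third identity, telescoping plus residue regularization is the right route. One minor slip: the vanishing factor in $\mathscr{W}^{\bar{a}\,\vee}_{\pi+\scube,v}(x)$ at $x=q_{a}^{-1}\chi_{\bar{a},v}(\cube)$ is not the one you wrote (which would give $1-q_{a}^{-2}\neq 0$); it comes from the term $g_{\bar{a}}(q_{a}x/\chi_{\bar{a},v}(\cube))^{-1}$ for the new box itself, which at that point equals $g_{\bar{a}}(1)^{-1}=0$ since $g_{\bar{a}}$ has its simple pole at $1$. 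This does not affect the strategy.

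Your worry about the sign is justified and resolves cleanly. Because $\mathscr{W}^{\bar{a}}_{\pi,v}(x)$ depends only on the ratio $v/x$, the $v_{2}\to v$ limit can be recast as an $x\to\chi_{\bar{a},v}(\cube)$ limit. Writing $s=v_{2}/v$, the factor $\mathscr{W}^{\bar{a}}_{\pi,v}(s\,\chi_{\bar{a},v}(\cube))$ vanishes as $(1-1/s)=-(1-s)/s$ while $\mathscr{W}^{\bar{a}}_{\pi+\scube,sv}(q_{a}^{-1}\chi_{\bar{a},v}(\cube))$ vanishes as $(1-s)$; the relative $-1$ in the quotient cancels the $-q_{a}v/\chi_{\bar{a},v}(\cube)$ produced by the reflection formula~\eqref{eq:app-dualreflection}, leaving exactly $+q_{a}v/\chi_{\bar{a},v}(\cube)$. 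The $\pi=\emptyset$, $\cube=(1,1,1)$ check you propose confirms this directly: both sides equal $\prod_{i\in\bar{a}}(1-q_{i})\big/\prod_{i<j,\,i,j\in\bar{a}}(1-q_{i}q_{j})$.
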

We can define the vector $\U(1)$ contribution of the $\D6$ theory as 
\bea
    \mathcal{Z}^{\D6}_{\bar{a}}[\pi]=\frac{1}{\mathsf{N}_{\bar{a}}(v,\pi\,|\,v,\pi)}
\eea
which resembles the partition function of the pure SYM in the 5d theory. The two factors $\mathcal{Z}_{\bar{a}}^{\D6}[\pi]$ and $\widetilde{\mathcal{Z}}_{\bar{a}}^{\D6}[\pi]$ differ by extra Chern--Simons like term and topological term
\bea
    \mathcal{Z}_{\bar{a}}^{\D6}[\pi]=\prod_{\scube\in\pi}\left(-\frac{\chi_{\bar{a},v}(\cube)}{q_{a}v}\right)\widetilde{\mathcal{Z}}_{\bar{a}}^{\D6}[\pi].
\eea

\begin{theorem}\label{eq:app-thm-D6U1recursionformula}
    The recursion formula of the $\U(1)$ partition function $\widetilde{\mathcal{Z}}^{\D6}_{\bar{a}}[\pi]$ is related with the functions $\mathscr{W}^{\bar{a}}_{\pi,v}(x)$:
    \bea
        \frac{\widetilde{\mathcal{Z}}^{\D6}_{\bar{a}}[\pi+\cube]}{\widetilde{\mathcal{Z}}^{\D6}_{\bar{a}}[\pi]}=-\frac{\underset{x'=q_{a}\chi_{\bar{a},v}(\scube)}{\Res}x'^{-1}\mathscr{W}^{\bar{a}}_{\pi,v}(q_{a}^{-1}x')^{-1}}{\underset{x'=\chi_{\bar{a},v}(\scube)}{\Res}x'^{-1}\mathscr{W}^{\bar{a}}_{\pi+\scube,v}(q_{a}^{-1}x')^{-1}}
    \eea
\end{theorem}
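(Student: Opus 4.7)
The plan is to derive this recursion by chaining the two identities collected immediately above the theorem in the appendix. First, I would express $\widetilde{\mathcal{Z}}^{\D6}_{\bar{a}}[\pi]$ in terms of the Nekrasov factor using
$\mathcal{Z}^{\D6}_{\bar{a}}[\pi] = 1/\mathsf{N}_{\bar{a}}(v,\pi\,|\,v,\pi)$ together with the Chern--Simons/topological adjustment
$\mathcal{Z}^{\D6}_{\bar{a}}[\pi] = \prod_{\scube\in\pi}\bigl(-\chi_{\bar{a},v}(\cube)/(q_av)\bigr)\,\widetilde{\mathcal{Z}}^{\D6}_{\bar{a}}[\pi]$. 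Forming the ratio for $\pi+\cube$ over $\pi$, the product telescopes and gives
\[
\frac{\widetilde{\mathcal{Z}}^{\D6}_{\bar{a}}[\pi+\cube]}{\widetilde{\mathcal{Z}}^{\D6}_{\bar{a}}[\pi]}
= \left(-\frac{q_av}{\chi_{\bar{a},v}(\cube)}\right)\,\frac{\mathsf{N}_{\bar{a}}(v,\pi\,|\,v,\pi)}{\mathsf{N}_{\bar{a}}(v,\pi+\cube\,|\,v,\pi+\cube)}.
\]

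Second, I would invoke the diagonal Nekrasov recursion from the preceding lemma, which asserts
\[
\frac{\mathsf{N}_{\bar{a}}(v,\pi+\cube\,|\,v,\pi+\cube)}{\mathsf{N}_{\bar{a}}(v,\pi\,|\,v,\pi)}
= \frac{q_av}{\chi_{\bar{a},v}(\cube)}\,\frac{\underset{x=\chi_{\bar{a},v}(\scube)}{\Res}x^{-1}\mathscr{W}^{\bar{a}}_{\pi+\scube,v}(q_a^{-1}x)^{-1}}{\underset{x=\chi_{\bar{a},v}(\scube)}{\Res}x^{-1}\mathscr{W}^{\bar{a}}_{\pi,v}(x)^{-1}}.
\]
Inverting this, the scalar prefactor $q_av/\chi_{\bar{a},v}(\cube)$ exactly cancels the prefactor produced in the first step, leaving an overall $-1$ times the ratio of residues. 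Finally, to match the exact form displayed in the theorem statement I would perform the change of variable $x = q_a^{-1}x'$ in the numerator residue. Since $dx/x = dx'/x'$ and $\mathscr{W}^{\bar{a}}_{\pi,v}(x)^{-1} = \mathscr{W}^{\bar{a}}_{\pi,v}(q_a^{-1}x')^{-1}$, this sends the pole at $x=\chi_{\bar{a},v}(\cube)$ to the pole at $x'=q_a\chi_{\bar{a},v}(\cube)$, producing precisely the numerator residue that appears in the theorem.

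The argument is therefore an algebraic chaining of relations already assembled in the appendix, with the diagonal Nekrasov recursion doing all the real work. The most delicate bookkeeping step, and the one I expect to be the main obstacle if one had to establish everything from scratch, is verifying the per-box sign $-\chi_{\bar{a},v}(\cube)/(q_av)$ that distinguishes $\mathcal{Z}^{\D6}$ from $\widetilde{\mathcal{Z}}^{\D6}$: this factor originates from the reflection relation $\mathscr{W}^{\bar{a}\vee}_{\pi,v}(x) = (-x/v)\mathscr{W}^{\bar{a}}_{\pi,v}(x)$ in \eqref{eq:app-dualreflection} and from the difference between movable parts of $\bfP_{\bar{a}}^{\vee}\bfK^\vee\bfK$ and $\bfP_{\bar{a}}\bfK^\vee\bfK$ at the diagonal locus. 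Once this sign is tracked carefully, the change of variable in the last step is routine and the theorem follows.
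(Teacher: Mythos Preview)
Your proposal is correct and follows exactly the chain of identities the paper lays out immediately before the theorem (the diagonal Nekrasov recursion in the preceding lemma, the definition $\mathcal{Z}^{\D6}_{\bar a}[\pi]=1/\mathsf{N}_{\bar a}(v,\pi\,|\,v,\pi)$, and the Chern--Simons/topological correction relating $\mathcal{Z}^{\D6}_{\bar a}$ to $\widetilde{\mathcal{Z}}^{\D6}_{\bar a}$). The paper does not spell out a separate proof, so your chaining of these relations together with the change of variable $x=q_a^{-1}x'$ is precisely the intended argument.
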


\section{Operator products of vertex operators}
\begin{proposition}[Contraction formulas]\label{app:prop:contraction_formula}
Under the above free field realizations of the zero modes, the contraction formulas are
\begin{align}
\mathsf{A}(x)\mathsf{S}_{a}(x')&=g_{\bar{a}}\left(x'/x\right)^{-1}: \mathsf{A}(x)\mathsf{S}_{a}(x'):,\\
\mathsf{S}_{a}(x')\mathsf{A}(x)&=g_{\bar{a}}(q_{a}x/x'):\mathsf{A}(x)\mathsf{S}_{a}(x'):,\\
\StepSubequations
\mathsf{A}(x)\mathsf{X}_{A}(x')&=\mathscr{S}_{\bar{A}}(x'/x)^{-1}:\mathsf{A}(x)\mathsf{X}_{A}(x'):,\\
    \mathsf{X}_{A}(x')\mathsf{A}(x)&=\mathscr{S}_{\bar{A}}(q_{A}x/x')^{-1}:\mathsf{X}_{A}(x')\mathsf{A}(x):,\\
\StepSubequations
    \mathsf{A}(x)\mathsf{W}_{\bar{a}}(x')&=\mathscr{V}_{a}\left(x'/x\right)^{-1}:\mathsf{A}(x)\mathsf{W}_{\bar{a}}(x'):,\\
    \mathsf{W}_{\bar{a}}(x')\mathsf{A}(x)&=q_{a}^{-1}\mathscr{V}_{a}(q_{a}^{-1}x/x'):\mathsf{W}_{\bar{a}}(x')\mathsf{A}(x):\\
\StepSubequations
    \mathsf{A}(x)\mathsf{Z}(K,x')&=\frac{1-x'/x}{1-Kx'/x}:\mathsf{A}(x)\mathsf{Z}(K,x'):,\\
    \mathsf{Z}(K,x')\mathsf{A}(x)&=K^{-1}\frac{1-x/x'}{1-K^{-1}x/x'}:\mathsf{A}(x)\mathsf{Z}(K,x'):,\\
\StepSubequations
    \mathsf{S}_{a}(x)\mathsf{S}_{b}(x')&=\mathscr{S}_{\overbar{ab}}(q_{a}x'/x):\mathsf{S}_{a}(x)\mathsf{S}_{b}(x'):,\\
        \mathsf{S}_{b}(x')\mathsf{S}_{a}(x)&=\mathscr{S}_{\overbar{ab}}(q_{b}x/x'):\mathsf{S}_{a}(x)\mathsf{S}_{b}(x'):,\\
\StepSubequations
        \mathsf{X}_{A}(x)\mathsf{S}_{c}(x')&=\mathscr{V}_{d}\left(q_{A}x'/x\right)^{-1} : \mathsf{X}_{A}(x)\mathsf{S}_{c}(x'):  ,\\
        \mathsf{S}_{c}(x')\mathsf{X}_{A}(x)&=q_{d}^{-1}\mathscr{V}_{d}\left(q_{d}^{-1}q_{A}^{-1}x/x'\right): \mathsf{X}_{A}(x)\mathsf{S}_{c}(x'):,\\
\StepSubequations
\mathsf{W}_{\bar{a}}(x)\mathsf{S}_{a}(x')&=x'\frac{1}{1-q_{a}^{-1}x'/x}:\mathsf{W}_{\bar{a}}(x)\mathsf{S}_{a}(x'):,\\
\mathsf{S}_{a}(x')\mathsf{W}_{\bar{a}}(x)&=(-q_{a}x)\frac{1}{1-q_{a}x/x'}:\mathsf{W}_{\bar{a}}(x)\mathsf{S}_{a}(x'):
\end{align}\label{eq:app-contractions}
\end{proposition}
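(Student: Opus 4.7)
The plan is to verify each contraction formula in Proposition~\ref{app:prop:contraction_formula} by direct computation, separating the non-zero mode contribution from the zero-mode contribution and combining them. For any pair of vertex operators of the form $\mathsf{V}(x)=\mathsf{v}_{0}(x):\exp(\sum_{n\neq 0}\mathsf{v}_{n}x^{-n}):$ and $\mathsf{V}'(x')$ with analogous structure, the Baker--Campbell--Hausdorff formula gives
\begin{equation*}
\mathsf{V}(x)\mathsf{V}'(x')=\wick{\c{\mathsf{v}_{0}(x)}\c{\mathsf{v}'_{0}(x')}}\cdot\exp\!\left(\sum_{n>0}\frac{1}{n}[\mathsf{v}_{n},\mathsf{v}'_{-n}]\,(x'/x)^{n}\right):\mathsf{V}(x)\mathsf{V}'(x'):.
\end{equation*}
Thus each contraction decomposes into a zero-mode piece (the Wick contraction of $\mathsf{v}_{0}(x)$ and $\mathsf{v}'_{0}(x')$) and a non-zero-mode piece, which is a sum over $n>0$ determined by the mode commutator.

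First, I would compute the non-zero-mode piece for each pair using the commutation relations already established in equations \eqref{eq:D0op2}, \eqref{eq:D8op2}, \eqref{eq:D6op2}, \eqref{eq:D4op2}, and \eqref{eq:D2op2}, plus the additional commutators obtainable by substitution via $\mathsf{s}_{a,n}=\mathsf{a}_{n}/\bfP_{a}^{[-n]}$, $\mathsf{w}_{\bar{a},n}=\mathsf{a}_{n}/\bfP_{\bar{a}}^{[-n]}$, $\mathsf{x}_{A,n}=\mathsf{a}_{n}/\bfP_{A}^{[-n]}$. For instance, to obtain $[\mathsf{s}_{a,n},\mathsf{s}_{b,m}]$ for $a\neq b$, one divides the $\mathsf{a}\mathsf{a}$-commutator $-\tfrac{1}{n}\delta_{n+m,0}\bfP_{\four}^{[n]}$ by $\bfP_{a}^{[-n]}\bfP_{b}^{[n]}$, yielding exactly the exponent whose resummation gives $\mathscr{S}_{\overline{ab}}(q_{a}x'/x)$ via the identity $\exp(-\sum_{n>0}\tfrac{1}{n}y^{n})=1-y$. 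The same mechanism produces $g_{\bar{a}}$, $\mathscr{V}_{a}$, $\mathscr{S}_{\bar{A}}$, and the Dedekind-type rational factors for $\mathsf{Z}(K,x)$.

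Second, I would evaluate each zero-mode Wick contraction using the explicit realizations \eqref{eq:zeromodes1}--\eqref{eq:zeromodes2} together with the canonical relations $[\partial_{\mathsf{t}},\mathsf{t}_{0}]=[\widetilde{\partial}_{\mathsf{t}},\widetilde{\mathsf{t}}_{0}]=1$. The only nontrivial outputs are prefactors of the form $x^{\alpha}$, $q_{a}^{\alpha}$, $K^{\alpha}$, and occasional $(-1)^{\alpha}$. Concretely, factors like $q_{a}^{-1}$ in $\mathsf{W}_{\bar{a}}\mathsf{A}$, $K^{-1}$ in $\mathsf{Z}(K,x')\mathsf{A}(x)$, and the bare $x'$ in $\mathsf{W}_{\bar{a}}\mathsf{S}_{a}$ originate from terms such as $x^{\mathsf{w}_{\bar{a},0}}$, $x^{\mathsf{z}_{0}^{K}}$ and the $\log(-q_{a})$ or $\log(-K)$ pieces in $\widetilde{\mathsf{w}}_{\bar{a},0}$ and $\widetilde{\mathsf{z}}_{0}^{K}$. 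These are matched one by one against the right-hand sides of the proposition. The zero-mode conditions \eqref{eq:D0D0zero}, \eqref{eq:D0D8zero}, \eqref{eq:D0D6zero}, \eqref{eq:D0D4zero}, \eqref{eq:D0D2zero}, \eqref{eq:D2D2zero}, \eqref{eq:D2D4zero}, \eqref{eq:D2D6zero} ensure that the two orderings $\mathsf{V}(x)\mathsf{V}'(x')$ and $\mathsf{V}'(x')\mathsf{V}(x)$ produce the rational functions predicted on both lines of each pair; this is where the reflection identities \eqref{eq:reflec_structfunc} enter.

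The main obstacle will be bookkeeping rather than conceptual difficulty: keeping track of signs, of the distinction between $\log q_{a}$ and $\log(-q_{a})$ contributions, and of the direction of analytic continuation relating the two orderings. In particular, for $\mathsf{W}_{\bar{a}}\mathsf{S}_{a}$ the asymmetric prefactors $x'$ versus $-q_{a}x$ must be traced back to the condition \eqref{eq:D2D6zero}, and for $\mathsf{Z}(K,x)$ the appearance of $K^{-1}$ in front of $\mathsf{Z}(K,x')\mathsf{A}(x)$ must be consistent with \eqref{eq:D0D8zero}; verifying these sign and scaling factors uniformly for all ten contractions is the bulk of the work. Once these are checked, the remaining pair contractions among $\{\mathsf{S},\mathsf{X},\mathsf{W},\mathsf{Z}\}$ follow mechanically by the same method, so I would present the $\mathsf{A}$--$\mathsf{A}$ and $\mathsf{A}$--$\mathsf{V}$ cases in detail and indicate the others as direct computations.
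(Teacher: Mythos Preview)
Your approach is correct and coincides with the paper's: the proposition is stated without proof in the appendix precisely because each line is obtained in Chapter~\ref{chap:freefield-vertexop} by exactly the method you describe, namely applying BCH to the non-zero modes via \eqref{eq:D0op2}, \eqref{eq:D8op2}, \eqref{eq:D6op2}, \eqref{eq:D4op2}, \eqref{eq:D2op2} and then fixing the zero-mode prefactors from \eqref{eq:zeromodes1}--\eqref{eq:zeromodes2}. One small correction: in your displayed BCH identity the exponent should be $\sum_{n>0}[\mathsf{v}_{n},\mathsf{v}'_{-n}](x'/x)^{n}$ without the extra $\tfrac{1}{n}$, since in the paper's convention the $1/n$ is already built into the commutator (cf.~\eqref{eq:non-zero-OPE-qquiver}).
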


\chapter{Technical proofs}\label{app-chap:proofs}

\section{Proof of sign rules of the magnificent four system}\label{app:M4signproof}
In this section, we give the proof of Thm.~\ref{thm:D8signruleJK} following the original paper \cite{Nekrasov:2018xsb}.
\begin{theorem}\label{app:D8signruleJK-proof}
The JK-residue of the integrand obeys the identity:
    \bea
\mathcal{G}^{k}\times \underset{x=x_{\rho}}{\Res}x^{-1}\mathcal{Z}_{k}^{\D8\tbar\D0}(v,Kv,x_{I})=(-1)^{\sigma_{a}(\rho)}\mathcal{Z}^{\D8}_{\four;a}[\rho,K].
    \eea
\end{theorem}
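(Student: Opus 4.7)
The plan is to compute the iterated residue of $\mathcal{Z}_{k}^{\D8\tbar\D0}$ explicitly and track the signs that arise from applying the residue formula of Proposition~\ref{prop:residuegeneral} to the factors $\mathcal{A}_{\mathbb{C}^{4}}(x_{I}/x_{J})^{-1}$. The starting point is the factorization
\[
\mathcal{A}_{\mathbb{C}^{4}}\!\left(\tfrac{x_{I}}{x_{J}}\right)^{-1}=g_{\bar{4}}\!\left(\tfrac{x_{I}}{x_{J}}\right)^{-1}g_{\bar{4}}\!\left(\tfrac{x_{J}}{x_{I}}\right)^{-1},
\]
which follows from the reflection property $g_{\bar{4}}(z)=g_{\bar{4}}(q_{4}z^{-1})^{-1}$. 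When we take the residue at $x_{I}=q_{a}x_{J}^{\ast}$ at step $I$ (with $J<I$ already evaluated to $x_{J}^{\ast}$), Proposition~\ref{prop:residuegeneral} gives a $+1$ sign if the pole is supplied by $g_{\bar{4}}(x_{J}^{\ast}/x_{I})^{-1}$ (the $\mathbf{X}^{\vee}x$ type) and a $-1$ sign if it is supplied by $g_{\bar{4}}(x_{I}/x_{J}^{\ast})^{-1}$ (the $\mathbf{Y}x^{\vee}$ type). A direct inspection of the pole locations of each factor shows that the pole $x_{I}=q_{a}x_{J}^{\ast}$ comes from $g_{\bar{4}}(x_{J}^{\ast}/x_{I})^{-1}$ for $a\in\{1,2,3\}$ and from $g_{\bar{4}}(x_{I}/x_{J}^{\ast})^{-1}$ for $a=4$. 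So each ``parent step'' taken in the $q_{4}$-direction contributes a $-1$, and each step in the $q_{1,2,3}$-directions contributes a $+1$.

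I would proceed by induction on $k=|\rho|$, following the same inductive skeleton as Theorem~\ref{thm:M4JKpoles}. The base case $k=1$ gives the pole $x_{1}=v$ from the flavor factor, producing no sign, and reproduces $\mathcal{Z}^{\D8}_{\four;4}[\{v\},K]=\mathcal{G}(1-K)$ as in \eqref{eq:D8one-inst}. For the inductive step, assume the identity is established for the residue at $\phi_{1}^{\ast},\dots,\phi_{k-1}^{\ast}$ corresponding to $\rho'=\rho-\hcube$, and consider adding the final box $\hcube=(i,j,k,l)\in\rho$ in the ordering \eqref{eq:ordering-partition}. In the generic case when several parent directions are available, one must verify that the pole cancellations identified in the proof of Theorem~\ref{thm:M4JKpoles} leave a single nontrivial simple pole, and then read off its sign via the rule above. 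The rational factor obtained from the residue can then be rewritten, via \eqref{eq:residueindexformula} and \eqref{eq:dualresidueindexformula}, as the index of the incremental character $\mathbf{v}_{\text{inst}}^{\D8}\big|_{\rho}-\mathbf{v}_{\text{inst}}^{\D8}\big|_{\rho'}$ with the choice of square root $\mathbf{P}_{123}^{\vee}\mathbf{K}^{\vee}\mathbf{K}$, matching the right-hand side $\mathcal{Z}^{\D8}_{\four;4}[\rho,K]$ up to the accumulated sign.

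The main obstacle will be to show that the total accumulated sign equals exactly $(-1)^{\sigma_{4}(\rho)}$ with $\sigma_{4}(\rho)=\#\{(i,i,i,l)\in\rho\mid i<l\}$. Naively, every box with $l\geq 2$ looks like it should contribute $-1$ (parent at $\hcube-\delta_{4}$), but this would overcount: for boxes with multiple valid parents, several poles contribute and their combined residue can be reorganized so that effectively only one parent direction ``survives'' the cancellations of the $g_{\bar{4}}$-numerators identified in the proof of Theorem~\ref{thm:M4JKpoles}. The technical heart of the argument is therefore a careful bookkeeping showing that the net sign contributed by a box $\hcube=(i,j,k,l)$ is $-1$ if and only if $i=j=k<l$, i.e.\ the box lies strictly above the origin on the diagonal spine in the $4$-direction. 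Equivalently, following the strategy of Proposition~\ref{prop:sign_proof}, one can recast the sign accumulated by the iterated residue as the unmovable part of a character of the form $[\mathbf{P}_{123}\sum_{i<j}\bm{\Pi}^{(i)\vee}\bm{\Pi}^{(j)}]^{(0)}$ under the $(1,3)$-type slicing of $\rho$, and then invoke Proposition~\ref{prop:sign_proof} verbatim to identify this count with $\sigma_{4}(\rho)\bmod 2$.

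Once the signs are matched, compatibility with the rest of the paper (in particular Theorem~\ref{thm:D8signindep}) guarantees that the statement holds with $\sigma_{a}(\rho)$ for any $a\in\four$, since both the integrand and the partition function respect the quadrality symmetry. Combining these steps yields the desired identity
\[
\mathcal{G}^{k}\times\underset{x=x_{\rho}}{\Res}\,x^{-1}\mathcal{Z}_{k}^{\D8\tbar\D0}(v,Kv,x_{I})=(-1)^{\sigma_{a}(\rho)}\mathcal{Z}^{\D8}_{\four;a}[\rho,K],
\]
for arbitrary $a\in\four$.
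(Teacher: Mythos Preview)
Your high-level plan is correct and matches the paper's: proceed by induction on $k$, identify the sign at each step via Proposition~\ref{prop:residuegeneral}, and use the factorization $\mathcal{A}_{\mathbb{C}^{4}}^{-1}=g_{\bar{4}}(x_I/x_J)^{-1}g_{\bar{4}}(x_J/x_I)^{-1}$ to split into $\mathbf{X}^{\vee}x$ and $\mathbf{Y}x^{-1}$ types. Your identification that a pole supplied by $g_{\bar{4}}(x_I/x_J^\ast)^{-1}$ carries sign $-1$ (and occurs exactly when the parent is in the $q_4$-direction) is also correct.

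The gap is in what you call the ``main obstacle''. Your picture of ``one parent direction surviving the cancellations'' is a heuristic, not the mechanism: when several boxes $x_J^\ast$ contribute poles and zeros at the same location $\chi$, the sign is governed by the \emph{net} unmovable multiplicity in the $\mathbf{Y}x^{-1}$ factor, namely $[\mathbf{P}_{123}^{\vee}\chi^{-1}\mathbf{K}_\rho]^{(0)}$, not by any single surviving parent. This is precisely the quantity the paper computes. Using the monomial ordering~\eqref{eq:ordering-partition}, the last box $\chi$ sits in the top $q_4$-layer of $\rho$, so the paper reduces to two cases: (i) $\chi=(1,1,1,\mu_4{+}1)$ starts a new layer, where a direct evaluation gives unmovable part $-1$; (ii) $\chi=(\mu_1,\mu_2,\mu_3,\mu_4)$ with $\mu_1\mu_2\mu_3\geq 2$ lies within the top layer, where decomposing each layer character as $\bm\pi_i=\Delta_i(\eta)+\bm\pi_i(\eta)$ shows the unmovable part is $-\sum_{i=1}^{\mu_4-1}\delta_{\mu_1=\mu_2=\mu_3=i}$. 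Together these give the recursion $\sigma_4(\tilde\rho)=\sigma_4(\rho)+\delta_{\chi=(i,i,i,l),\,i<l}$, which is the ``careful bookkeeping'' you allude to but do not carry out.

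Your Strategy~2 does not work as stated. The character in Proposition~\ref{prop:sign_proof} is $\sum_{i<j}\mathbf{P}_{123}\bm{\Pi}^{(i)\vee}\bm{\Pi}^{(j)}$, indexed by pairs of \emph{layers}; the accumulated JK sign is (modulo~$2$) $\sum_{J<I}[\mathbf{P}_{123}^{\vee}x_J^\ast(x_I^\ast)^{-1}]^{(0)}$, indexed by pairs of \emph{boxes} in the monomial ordering, including intra-layer pairs. These are different characters, and the claim that they agree modulo~$2$ is itself nontrivial --- you cannot invoke Proposition~\ref{prop:sign_proof} ``verbatim'' without first establishing this identification (which would amount to redoing the paper's case analysis in disguise).
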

\begin{proof}
We show this by induction. For the $|\rho|=1$ case, this is shown in \eqref{eq:D8one-inst}. We assume that the claim is true for $|\rho|=k$ and let us consider the case for $\tilde{\rho}=\rho+\chi$, where $\chi$ is an additional box to the solid partition $\rho$. We denote $\{x_{I}^{\ast}\}_{I=1}^{k}$ the $q$-coordinates of the boxes in the solid partition which is given in the monomial ordering \eqref{eq:ordering-partition}. The box $\chi\in\rho$ is identified with the $q$-coordinate of it. The left hand side is then given as 
\bea
\,&\mathcal{G}^{k+1}\underset{x=x_{\rho}}{\Res}x^{-1}\mathcal{Z}_{k}^{\D8\tbar\D0}(v,Kv,x_{I})\\
=&\underset{x_{k+1}=\chi}{\Res}x_{k+1}^{-1}\underset{x=x_{k}^{\ast}}{\Res}x_{k}^{-1}\cdots \underset{x=x_{1}^{\ast}}{\Res}x_{1}^{-1}\mathbb{I}\left[-(1-K^{-1})v^{-1}\sum_{I=1}^{k}x_{I}+\bfP_{123}^{\vee}\sum_{I,J=1}^{k}x_{I}^{-1}x_{J}^{-1}-k\right.\\
&\left.-(1-K^{-1})v^{-1}x_{k+1}+\bfP_{123}^{\vee}x_{k+1}^{-1}\sum_{I=1}^{k}x_{I}+\bfP_{123}^{\vee}\sum_{I=1}^{k}x_{I}^{-1}x_{k+1}+\bfP_{123}^{\vee}-1 \right]\\
&=(-1)^{\sigma_{4}(\rho)}\mathcal{Z}^{\D8}_{\four;4}[\rho,K]\times \mathbb{I}[\bfP_{123}^{\vee}-1]\\
&\qquad \times\underset{x_{k+1}=\chi}{\Res}x_{k+1}^{-1}\mathbb{I}\left[-(1-K^{-1})v^{-1}x_{k+1}+\bfP_{123}^{\vee}x_{k+1}^{-1}\sum_{I=1}^{k}x^{\ast}_{I}+\bfP_{123}^{\vee}\sum_{I=1}^{k}{x^{\ast}_{I}}^{-1}x_{k+1}\right].
\eea
Using Prop.~\ref{prop:residuegeneral}, the extra sign factor comes from the term $\bfP_{123}^{\vee}\sum_{I=1}^{k}x^{\ast}_{I}x_{k+1}^{-1}$. If this term is unmovable, then it means that it contains the pole $x_{k+1}=\chi$ and thus after taking the residue we obtain an extra sign factor. Let us determine the condition so that $\chi$ gives an extra sign factor.

We introduce
\bea
\bfK_{\rho}=\sum_{I=1}^{k}x_{I}^{\ast}=\sum_{i=1}^{\mu_{4}}vq_{4}^{i-1}\bm{\pi}_{i}(q_{1},q_{2},q_{3})
\eea
where $\bm{\pi}_{i}$ is the character of plane partitions $\{\pi_{i}\}$ obeying the non-increasing condition and $\mu_{4}\in\mathbb{Z}_{\geq 1}$ is the height of the solid partition in the 4th direction. Since the JK residue is taken in a monomial ordering, we have two possibilities on how $\chi$ is added to the solid partition:
\begin{enumerate}[topsep=1.5pt,itemsep=1ex,partopsep=1ex,parsep=1.5ex]
    \item $\chi=q_{4}^{\mu_{4}}v$: A new box is added to the solid partition and it increases the height of the solid partition.
    \item $\chi=vq_{4}^{\mu_{4}-1}\eta$: The box $\eta=q_{1}^{\mu_{1}-1}q_{2}^{\mu_{2}-1}q_{3}^{\mu_{3}-1}\,\,(\mu_{1}\mu_{2}\mu_{3}\geq 2)$ is a new box added to the plane partition $\pi_{h}$. Note that from the assumption of the monomial ordering \eqref{eq:ordering-partition}, it is enough to consider only at this level.
\end{enumerate}
The first situation is computed as
\bea
\bfP_{123}^{\vee}\chi^{-1}\bfK_{\rho}&=\bfP_{123}^{\vee}\sum_{i=1}^{\mu_{4}}q_{4}^{i-1-\mu_{4}}\bm{\pi}_{i}\\
&=-\bfP_{123}\sum_{i=1}^{\mu_{4}}(q_{123})^{\mu_{4}-i}\bm{\pi}_{i}.
\eea
Since $\bm{\pi}_{i}$ takes the form as $q_{1}^{\geq 0}q_{2}^{\geq 0}q_{3}^{\geq 0}$, when $1\leq i\leq \mu_{4}-1$, this term is movable. When $i=\mu_{4}$, since there is a box at $\chi=(1,1,1,\mu_{4}+1)$, we also have a box at $(1,1,1,\mu_{4})$ due to the solid partition condition and thus $\bm{\pi}_{\mu_{4}}=1+\cdots$. The unmovable term is given $[-\bfP_{123}\bm{\pi}_{\mu_{4}}]^{(0)}=-1$ and thus we have an extra sign factor.

For the second situation, it is computed as
\bea
\bfP_{123}^{\vee}\chi^{-1}\bfK_{\rho}&=\bfP_{123}^{\vee}\sum_{i=1}^{\mu_{4}}q_{4}^{i-\mu_{4}}\eta^{-1}\bm{\pi}_{i}.
\eea
Note that from the solid partition condition $\eta\in\bm{\pi}_{i}$ for $i\leq \mu_{4}-1$ and $\eta\notin\bm{\pi}_{\mu_{4}}$. We decompose $\bm{\pi}_{i}$ as the following
\bea
\bm{\pi}_{i}=\begin{dcases}
    \Delta_{i}(\eta)+\bm{\pi}_{i}(\eta)\quad i=1,\ldots, \mu_{4}-1,\\
    \Delta_{\mu_{4}}(\eta)+\bm{\pi}_{\mu_{4}}(\eta)-\eta,\quad i=\mu_{4}
\end{dcases}
\eea
where
\bea
\Delta_{i}(\eta)=\sum_{a=1}^{\mu_{1}}\sum_{b=1}^{\mu_{2}}\sum_{c=1}^{\mu_{3}}q_{1}^{a-1}q_{2}^{b-1}q_{3}^{c-1}
\eea
and $\bm{\pi}(\eta)$ contains terms with $\bm{\pi}_{i}\ni(a,b,c),\,(a>\mu_{1})\vee(b>\mu_{2})\vee(c>\mu_{3})$. First of all, the term coming from $\bm{\pi}_{i}(\eta)$ is movable. We first have
\bea
\bfP_{123}^{\vee}\sum_{i=1}^{\mu_{4}}(q_{123})^{\mu_{4}-i}\bm{\pi}(\eta)q_{1}^{-\mu_{1}+1}q_{2}^{-\mu_{2}+1}q_{1}^{-\mu_{3}+1}=-\bfP_{123}\sum_{i=1}^{\mu_{4}}\sum_{(a,b,c)\in\pi_{i}(\eta)}(q_{123})^{\mu_{4}-i-1}q_{1}^{a-\mu_{1}}q_{2}^{b-\mu_{2}}q_{3}^{c-\mu_{3}}.
\eea
For $i\leq \mu_{4}-1$, the terms have powers with strictly positive power in either $q_{1,2,3}$. For $i=\mu_{4}$:
\bea
\left[-\bfP_{123}\sum_{(a,b,c)\in\pi_{i}(\eta)}(q_{123})^{-1}q_{1}^{a-\mu_{1}}q_{2}^{b-\mu_{2}}q_{3}^{c-\mu_{3}}\right]^{(0)}=0.
\eea
This can be derived as follows. Since the plane partition condition obeys $(a,b,c)<(\mu_{1},\mu_{2},\mu_{3})$, without loss of generality we can assume $c\leq \mu_{3}$. Assuming $a>\mu_{1}$, we have $c=\mu_{3},b<\mu_{2}$ or $c<\mu_{3}$. For the case $c<\mu_{3}$, we have $q_{3}^{c-\mu_{3}-1\leq -2}$ and thus we have no unmovable terms. Similarly, for the case $c=\mu_{3},b<\mu_{2}$, it will be movable because of the term $q_{2}^{b-\mu_{2}-1\leq -2}$.

The remaining term is computed as 
\bea
\left[\bfP_{123}^{\vee}\eta^{-1}\sum_{i=1}^{\mu_{4}}q_{4}^{i-\mu_{4}}\Delta_{i}(\eta)-\bfP_{123}^{\vee}\right]^{(0)}&=\left[\sum_{i=1}^{\mu_{4}}q_{4}^{i-\mu_{4}}(1-q_{1}^{-\mu_{1}})(1-q_{2}^{-\mu_{2}})(1-q_{3}^{-\mu_{3}})-\bfP_{123}^{\vee}\right]^{(0)}\\
&=\left[-\sum_{i=1}^{\mu_{4}-1}q_{4}^{i-\mu_{4}}q_{1}^{-\mu_{1}}q_{2}^{-\mu_{2}}q_{3}^{-\mu_{3}}+1-\bfP_{123}^{\vee}\right]\\
&=-\sum_{i=1}^{\mu_{4}-1}\delta_{\mu_{1}=\mu_{2}=\mu_{3}=i}.
\eea

Combining both cases, the extra sign factor gives
\bea
\sigma_{4}(\tilde{\rho})=\sigma_{4}(\rho)+\sum_{i=1}^{\mu_{4}-1}\delta_{\mu_{1}=\mu_{2}=\mu_{3}=i}
\eea
which concludes the proof.

\end{proof}

\section{Proof of identity \texorpdfstring{\eqref{eq:PEidentityproof}}{(4.4.86)}}
\begin{lemma}\label{app:lemm-PEid}
We have the identity
\bea
\sum_{\alpha=1}^{n}\frac{1-K_{\alpha}^{-1}}{(1-\mathfrak{q}_{\alpha})(1-K_{\alpha}^{-1}\mathfrak{q}_{\alpha}^{-1})}=\frac{1-\prod_{\alpha=1}^{n}K_{\alpha}^{-1}}{(1-\mathfrak{q})(1-\prod_{\alpha=1}^{n}K_{\alpha}^{-1}\mathfrak{q}^{-1})},
\eea
where $\mathfrak{q}_{\alpha}=\mathfrak{q}\prod_{\beta=\alpha+1}^{n}K_{\beta}$.
\end{lemma}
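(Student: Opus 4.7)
The identity is a purely algebraic fact about rational functions, so my plan is to expose a telescoping structure rather than attempt a direct common-denominator manipulation (which becomes unwieldy already for moderate $n$). The key observation is that the summand $f_\alpha$ depends on $\alpha$ only through the two quantities $\mathfrak{q}_\alpha$ and $K_\alpha^{-1}\mathfrak{q}_\alpha^{-1}$, and these are closely related across neighbouring indices.

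Concretely, the first step is to introduce the auxiliary variable $M_\alpha = \prod_{\beta\geq\alpha}K_\beta^{-1}$, so that $M_1 = \prod_{\alpha=1}^{n}K_\alpha^{-1}$, $M_{n+1}=1$, and $K_\alpha^{-1}=M_\alpha/M_{\alpha+1}$. A short computation gives $\mathfrak{q}_\alpha = \mathfrak{q}\,M_{\alpha+1}^{-1}$ and $K_\alpha^{-1}\mathfrak{q}_\alpha^{-1} = \mathfrak{q}^{-1}M_\alpha$, so after clearing factors of $M_{\alpha+1}$ from numerator and denominator the summand takes the compact form
\begin{equation*}
f_\alpha \;=\; \frac{\mathfrak{q}\,(M_{\alpha+1}-M_\alpha)}{(M_{\alpha+1}-\mathfrak{q})(\mathfrak{q}-M_\alpha)}.
\end{equation*}

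The second step is a partial-fraction decomposition in the variable $\mathfrak{q}$, viewing $M_\alpha$ and $M_{\alpha+1}$ as constants. The two simple poles at $\mathfrak{q}=M_\alpha$ and $\mathfrak{q}=M_{\alpha+1}$ have opposite residues, and a direct calculation yields $f_\alpha = g_\alpha - g_{\alpha+1}$ with $g_\alpha = \mathfrak{q}/(\mathfrak{q}-M_\alpha)$. This is the crucial telescoping identity. Summing over $\alpha=1,\ldots,n$ collapses the series to $g_1-g_{n+1} = \mathfrak{q}/(\mathfrak{q}-\mathcal{K}) - \mathfrak{q}/(\mathfrak{q}-1)$, where $\mathcal{K}=\prod_{\alpha=1}^{n}K_\alpha^{-1}$.

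The final step is to combine these two fractions over a common denominator and compare with the right-hand side, which after multiplying numerator and denominator by $\mathfrak{q}$ becomes $\mathfrak{q}(1-\mathcal{K})/[(1-\mathfrak{q})(\mathfrak{q}-\mathcal{K})]$; agreement is immediate. I do not anticipate a genuine obstacle: once the substitution $M_\alpha = \prod_{\beta\geq\alpha}K_\beta^{-1}$ is in place, the telescoping is forced by the partial-fraction structure. A parallel proof by induction on $n$ is also available, reducing the inductive step to the $n=1$ case applied to the pair $(K_1,\prod_{\beta\geq 2}K_\beta)$, but the telescoping route is cleaner and makes manifest why the identity holds.
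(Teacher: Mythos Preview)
Your proof is correct and takes a genuinely different route from the paper. The paper argues by induction on $n$: assuming the identity for $n$ terms, it splits the $(n+1)$-term sum into the first $n$ summands (to which the hypothesis applies with $\mathfrak{q}$ replaced by $\mathfrak{q}K_{n+1}$) plus the last summand, and then verifies by a direct common-denominator computation that the two pieces recombine into the desired closed form. Your approach instead uncovers an explicit telescoping structure: the substitution $M_\alpha=\prod_{\beta\ge\alpha}K_\beta^{-1}$ rewrites each summand as $g_\alpha-g_{\alpha+1}$ with $g_\alpha=\mathfrak{q}/(\mathfrak{q}-M_\alpha)$, so the whole sum collapses at once. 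The telescoping argument is cleaner and more transparent---it explains \emph{why} the identity holds rather than merely verifying it---while the paper's induction is more mechanical but requires less insight to set up. You also correctly note that your telescoping is essentially the induction unrolled; the inductive step in the paper is precisely your identity $f_\alpha=g_\alpha-g_{\alpha+1}$ applied once.
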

\begin{proof}
    We show it by induction. When $n=1$, it is trivial. Assume that for all $\leq n$, the above identity is true. For $n+1$, we have
    \bea
    \sum_{\alpha=1}^{n+1}\frac{1-K_{\alpha}^{-1}}{(1-\tilde{\mathfrak{q}}_{\alpha})(1-K_{\alpha}^{-1}\tilde{\mathfrak{q}}_{\alpha}^{-1})}=\sum_{\alpha=1}^{n}\frac{1-K_{\alpha}^{-1}}{(1-\tilde{\mathfrak{q}}_{\alpha})(1-K_{\alpha}^{-1}\tilde{\mathfrak{q}}_{\alpha}^{-1})}+\frac{1-K_{n+1}^{-1}}{(1-\tilde{\mathfrak{q}}_{n+1})(1-K_{n+1}^{-1}\tilde{\mathfrak{q}}_{n+1}^{-1})}
    \eea
    where $\tilde{\mathfrak{q}}_{\alpha}=\mathfrak{q}\prod_{\beta=\alpha+1}^{n+1}K_{\beta}=\mathfrak{q}_{\alpha}K_{n+1}$ for $\alpha<n+1$ and $\tilde{\mathfrak{q}}_{n+1}=\mathfrak{q}$. Then, by the inductive hypothesis, we get
    \bea
    \,&\frac{1-\prod_{\alpha=1}^{n}K_{\alpha}^{-1}}{(1-\mathfrak{q}K_{n+1})(1-\prod_{\alpha=1}^{n+1}K_{\alpha}^{-1}\mathfrak{q}^{-1})}+\frac{1-K_{n+1}^{-1}}{(1-\mathfrak{q})(1-K_{n+1}^{-1}\mathfrak{q}^{-1})}\\
    =&\frac{1}{1-\mathfrak{q}K_{n+1}}\left(\frac{1-\prod_{\alpha=1}^{n}K_{\alpha}^{-1}}{1-\prod_{\alpha=1}^{n+1}K_{\alpha}^{-1}\mathfrak{q}^{-1}}  -\frac{1-K_{n+1}}{1-\mathfrak{q}^{-1}} \right)\\
    =&\frac{(1-\prod_{\alpha=1}^{n}K_{\alpha}^{-1})(1-\mathfrak{q}^{-1})-(1-K_{n+1})(1-\prod_{\alpha=1}^{n+1}K_{\alpha}^{-1}\mathfrak{q}^{-1})}{(1-\mathfrak{q}K_{n+1})(1-\prod_{\alpha=1}^{n+1}K_{\alpha}^{-1}\mathfrak{q}^{-1})(1-\mathfrak{q}^{-1})}\\
    =&\frac{(1-\prod_{\alpha=1}^{n+1}K_{\alpha}^{-1})}{(1-\mathfrak{q})(1-\prod_{\alpha=1}^{n+1}K_{\alpha}^{-1}\mathfrak{q}^{-1})}
    \eea
    which gives the claim.
\end{proof}

\bibliographystyle{ytamsalpha}
\baselineskip=.95\baselineskip
\bibliography{Worigami}

\end{document}